\begin{document}

\title{On Treewidth and Stable Marriage}

\author{
Sushmita Gupta\thanks{University of Bergen, Bergen, Norway. \texttt{sushmita.gupta@ii.uib.no}}
 \and  Saket Saurabh\thanks{The Institute of Mathematical Sciences, HBNI, Chennai, India. \texttt{saket@imsc.res.in}}
 \and Meirav Zehavi\thanks{University of Bergen, Bergen, Norway. \texttt{meirav.zehavi@ii.uib.no}} 
}

\maketitle

\thispagestyle{empty}

\begin{abstract} 
{\sc Stable Marriage} is a fundamental problem to both computer science and economics. Four well-known \NPH\ optimization versions of this problem are the {\sc Sex-Equal Stable Marriage (SESM)}, {\sc Balanced Stable Marriage (BSM)}, {\sc max-Stable Marriage with Ties (max-SMT)} and {\sc min-Stable Marriage with Ties (min-SMT)} problems.
In this paper, we analyze these problems from the viewpoint of Parameterized Complexity. We conduct the first study of these problems with respect to the parameter treewidth. First, we study the treewidth $\tw$ of the primal graph. We establish that all four problems are \WOH. In particular, while it is easy to show that all four problems admit algorithms that run in time $n^{\OO(\tw)}$, we prove that all of these algorithms are likely to be essentially optimal. Next, we study the treewidth $\tw$ of the rotation digraph. In this context, the {\sc max-SMT} and {\sc min-SMT} are not defined. For both {\sc SESM} and {\sc BSM}, we design (non-trivial) algorithms that run in time $2^{\tw}n^{\OO(1)}$. Then, for both {\sc SESM} and {\sc BSM}, we also prove that unless \SETH\ is false, algorithms that run in time $(2-\epsilon)^{\tw}n^{\OO(1)}$ do not exist for any fixed $\epsilon>0$. We thus present a comprehensive, complete picture of the behavior of central optimization versions of {\sc Stable Marriage} with respect to treewidth.

\end{abstract}

\newpage
\pagestyle{plain}
\setcounter{page}{1}

\newcommand{\il}[1]{\todo[backgroundcolor=Blue!80!white!10, inline]{#1}\xspace}

\newcommand{\mar}{marriage\xspace}
\newcommand{\etal}{{\it et al.}}

\newcommand{\maxsmt}{{\sc max-SMT}\xspace}
\newcommand{\minsmt}{{\sc min-SMT}\xspace}

\section{Introduction}\label{sec:intro}
\vspace{-0.55em}

{\em Matching under preferences} is a rich topic central to both economics and computer science, which has been consistently and intensively studied for over several decades. One of the main reasons for interest in this topic stems from the observation that it is extremely relevant to a wide variety of practical applications modeling situations where the objective is to {\em match} agents to other agents (or to resources). In the most general setting, a matching is defined as an allocation (or assignment) of agents to resources that satisfies some predefined criterion of compatibility/acceptability. Here, the (arguably) best known model is the {\em two-sided model}, where the agents on one side are referred to as {\it men}, and the agents on the other side are referred to as {\it women}. A few illustrative examples of real life situations where this model is employed in practice include matching hospitals to residents, students to colleges, kidney patients to donors and users to servers in a distributed Internet service. At the heart of all of these applications lies the fundamental {\sc Stable Marriage (SM)} problem. In particular, the Nobel Prize in Economics was awarded to Shapley and Roth in 2012 ``for the theory of stable allocations and the practice of market design.'' Moreover, several books have been dedicated to the study of {\sc SM} as well as optimization versions of this classical problem \cite{DBLP:books/daglib/0066875,opac-b1092346,DBLP:books/ws/Manlove13}. 

In this paper, we conduct a comprehensive study of four well-known \NPH\ optimization versions of {\sc SM}, namely the {\sc Sex-Equal SM (SESM)}, {\sc Balanced SM (BSM)}, {\sc max-SM with Ties (max-SMT)} and {\sc min-SM with Ties (min-SMT)} problems, from the viewpoint of Parameterized Complexity. Readers unfamiliar with the definitions of these problems are referred to Section \ref{sec:prelims}. The sizes of the solutions to all of these problems can often be as large as the instances themselves. Furthermore, as these problems are \NPH\ when preference lists are restricted to have a fixed constant length, the maximum length of a preference list is not a sensible parameter (see Appendix \ref{sec:PC}). Thus, we parameterize these problems by the treewidth of the primal graph as well as the treewidth of the rotation digraph. Arguably, the parameter treewidth is the most natural one. Moreover, parameterization by treewidth is a standard practice in Parameterized Complexity. Indeed, from a practical point of view, networks often tend to resembles trees, and from a theoretical point of view, treewidth is often a parameter with respect to which it is possible to derive ``fast'' parameterized algorithms. Accordingly, books on Parameterized Complexity devote several complete chapters solely to the study of treewidth (see \cite{DBLP:books/sp/CyganFKLMPPS15,DBLP:series/txcs/DowneyF13,Niedermeierbook06,DBLP:series/txtcs/FlumG06}). Nevertheless, our work is the first to study the parameterized complexity of optimization versions of {\sc SM} with respect to treewidth, although {\sc SM} is a basic problem to both economics and computer science. In this sense, our work fills a fundamental knowledge gap. We obtain {\em tight} upper and (conditional) lower bounds for the running times of algorithms for all of the problems that we study. Moreover, each set of results (\WO-hardness, \XP-algorithms, \FPT-algorithms and tight conditional lower bounds under \SETH) is derived in a novel systematic way that may apply not only to other optimization versions of {\sc SM}, but also to other parameterization of the problems studied in the paper.

\vspace{-0.7em}
\subsection{Related Work}
\vspace{-0.5em}

For a broad discussion of optimization variants of {\sc SM}, the reader is referred to the books \cite{DBLP:books/daglib/0066875,opac-b1092346,DBLP:books/ws/Manlove13} or surveys such as~\cite{Survey:Iwama-Miyazaki}. Here, we only briefly overview some relevant literature.

\myparagraph{Sex-Equality.} The {\em egalitarian measure}, which minimizes the {\em sum} of the amount of dissatisfaction of men and that amount of dissatisfaction of women, is arguably the simplest notion of the quality of a stable matching. A stable matching minimizing this measure is known as an {\it egalitarian stable matching}, which is notably computable in polynomial time \cite{DBLP:journals/jacm/IrvingLG87}. Gusfield and Irving \cite{DBLP:books/daglib/0066875} noted that in the context of egalitarian stable matchings, it may be the case that members of one sex are considerably better off than the members of the opposite sex. Knuth discussed an example \cite[pg 56]{opac-b1092346} that had 10 stable matchings, each having an egalitarian function value of 20. However, the sex-equality function value (see Section \ref{sec:prelimsOptimization}) ranged from -12 to +12, with the optimal being 0. This motivated the definition of a {\it sex-equal stable matching} as a stable matching that minimizes the sex-equality function value. Gusfield and Irving \cite{DBLP:books/daglib/0066875} then asked whether there is a polynomial-time algorithm for \sesm. Kato \cite{Kato93} was the first to show that {\sc SESM} is \NPH\ by showing a reduction from {\sc Partially Ordered Knapsack}, one of the classical problems listed by Garey and Johnson \cite{GJ79}. McDermid and Irving \cite{DBLP:journals/algorithmica/McDermidI14} proved that given an instance of {\sc SM}, where the length of each preference list is at most 3, deciding whether or not there exists a stable matching whose sex-equal function value is 0, is \NPC. Contrastingly, if the length of each preference list is at most 2 on one side while the length of each preference lists on the other side may be unbounded, then \sesm\ is solvable in time $\mathcal{O}(n^{3})$, where $n$ denotes the number of men/women in the instance. 


McDermid and Irving~\cite{DBLP:journals/algorithmica/McDermidI14} also studied exact exponential-time algorithms for \NPH\ special cases of \sesm. Specifically, if the preference lists on one side have length at most $\ell$ and
 there is no upper bound on the length of the preference lists on the other side, then they showed that given any $\epsilon>0$, \sesm\ can be solved in time  $(2^{\alpha n}+2^{\beta})\cdot n^{\OO(1)}$, where $\alpha =(5-2\sqrt{4})(\ell-2+\epsilon)$ and $\beta=(\ell-1)/2\epsilon$. For small enough values of $\epsilon$, the time complexity is close to $\mathcal{O}(1.0726^{n})$ for $\ell=3$, $\mathcal{O}(1.1504^{n})$ for $\ell=4$ and $\mathcal{O}(1.2339^{n})$ for $\ell=5$. 
Curiously, Romero-Medina \cite{Romero-Medina98} gave an exact algorithm for \sesm\ where there are no restrictions on the preference lists and claimed without proof that the algorithm runs in polynomial time. While Romero-Medina did not cite Kato's work \cite{Kato93}, the claim is an obvious contradiction to Kato's proof of NP-hardness (unless NP=P). Also, the time complexity of Romero-Medina's algorithm is likely to be worse than McDermid and Irving's when the length of the preference lists are bounded. 

In the context of the approximability of \sesm, for a given instance of {\sc SM}, let $\mu_{M}$ and $\mu_{W}$ denote the man-optimal and woman-optimal stable matchings (see Section \ref{sec:prelimsSM}), respectively. We define $\Lambda=\max\{|\delta(\mu_{M})|, |\delta(\mu_{W})|\}$, where $\delta(\mu)$ denotes the sex-equal function value of the matching $\mu$. Iwama \etal~\cite{DBLP:journals/talg/IwamaMY10} gave a polynomial-time algorithm that finds a {\it near-optimal} solution to \sesm. Formally, they showed that given some fixed $\epsilon>0$, there is an $\OO(n^{3+1/\epsilon})$ time algorithm that returns a stable matching $\mu$ such that $-\epsilon \Lambda \leq \delta(\mu)\leq \epsilon \Lambda$, or returns that no such $\mu$ exists. Furthermore, they exhibited an instance with two near-optimal stable matchings that have different egalitarian-cost measure. This prompted the authors to define the {\sc Minimum Egalitarian Sex-Equal Stable Marriage} problem. For this problem, shown to be \NPH, they gave a polynomial time algorithm whose approximation ratio is smaller than 2.

We note that \sesm\ was also studied for genetic and ant colony-based algorithms \cite{Nakamura95,VienViet07}. Quite recently, an empirical study on \sesm\ was undertaken by Giannakopoulos \etal~\cite{Giannakopoulos15}.

\myparagraph{Balance.}
The {\sc BSM} problem was introduced in the influential work of Feder \cite{Feder95} on stable matchings. Intuitively, it is defined to be a stable matching in the instance that is desirable to both the sexes (see Section \ref{sec:prelimsOptimization}), i.e. it simultaneously minimizes the dissatisfaction of both sexes. It has been noted that it is not trivial to construct an instance of {\sc SM} in which no balanced stable matching is a sex-equal matching, and vice versa \cite{DBLP:books/ws/Manlove13}. However, one such example (of infinitely many instances), attributable to Eric McDermid, has been discussed by Manlove in \cite[pg 109]{DBLP:books/ws/Manlove13}. Feder \cite{Feder95} proved that this problem is \NPH\ and that it admits a 2-approximation algorithm. Later, it was shown that this problem also admits a $(2-1/\ell)$-approximation algorithm where $\ell$ is the maximum size of a set of acceptable partners \cite{DBLP:books/ws/Manlove13}. O'Malley \cite{omalleyThesis} phrased the {\sc BSM} problem in terms of constraint programming. McDermid, as reported in \cite{DBLP:books/ws/Manlove13}, proved that the measure of balance of {\sc BSM} is incomparable to the measure of fairness of {\sc SESM}. Finally, in the thesis of McDermid \cite{mcThesis} and the conclusion of McDermid and Irving \cite{DBLP:journals/algorithmica/McDermidI14}, the authors expressed interest in future studies of the {\sc BSM} problem with respect to treewidth.

\myparagraph{Maximum/Minimum Cardinality.} When the preference lists have {\em ties}, there are three different notions of stability: super stability, strong stability and weak stability (see \cite{IRVING1994,Manlove2002}). Our work is centered around weak stability. In the presence of ties, the existence of a (weakly) stable matching is guaranteed; simply break the ties arbitrarily and run the Gale-Shapley algorithm \cite{gale62a} on the resulting instance. A stable matching in the new instance is (weakly) stable in the original instance. However, the breaking of ties affects the size of the stable matching produced. Thus, the size of the stable matchings are no longer exactly the same as in the case where preference lists are strict (see Section \ref{sec:prelimsSM}). This engenders the study of the computation of a {\it maximum (minimum) cardinality stable matching}, known as the \maxsmt (\minsmt)~problem, which capture scenarios where we would like to maximize (minimize) available resources. 

Irving \etal~\cite{IIMMM02j} showed that both \maxsmt and \minsmt are \NPH\ even if the inputs are restricted to have ties for only one sex, preference lists are of bounded length, and there is symmetry in the preference lists. Irving \etal~\cite{IrvingMaloveOMalley09} showed that \maxsmt is solvable in polynomial time if the length of the preference lists of one sex is at most 2, and the length of the preferences of the other sex is unbounded. Furthermore, it is shown that \maxsmt is not $\alpha$-approximable unless $\textsf{P}=\NP$, for some $\alpha>1$, even if each man's preference list is of length at most 3, and each woman's preference list is of length at most 4. Given the large ``gap'' in the computational complexity of these results, perhaps it is not surprising that this has led to the study of \maxsmt from the perspectives of approximation and parameterized~complexity.

In the context of approximation algorithms, note that it is easy to obtain a factor $2$-approximation---break the ties arbitrarily, and return some stable matching in the resulting instance. A breakthrough was achieved by Iwama \etal~\cite{DBLP:journals/talg/IwamaMY10}, who  obtained a factor $1.875$-approximation algorithm for \maxsmt using a local search technique. Kir\'{a}ly~\cite{kiraly2011} improved upon this result, and introduced a simple effective technique of promotion to break ties in a modified Gale-Shapley algorithm. In particular, he improved the approximation ratio to $5/3=1.66$ for  \maxsmt, and to $1.5$ for the one-sided ties version of \maxsmt, that is, the preference lists have to be strict on one side, while ties are permitted in the preference lists of the other side. For the one-sided version of \maxsmt, Huang and Kavitha~\cite{HuangK15} gave an approximation algorithm with factor $1.4667$, and Dean and Jalasutram \cite{MatchUp15} gave an approximation algorithm with factor $1.4615$. The best known approximation algorithm for \maxsmt is a factor $1.5$-approximation algorithm in \cite{Mcdermid09}. References to additional works addressing approximation algorithms and inapproximabilty results for \maxsmt\ can be found in \cite{MatchUp15}.

Marx and  Schlotter~\cite{MarxS10} studied \maxsmt with several parameters: (i) the maximum number of ties in an instance ($\kappa_1$); (ii) the maximum length of ties in an instance  ($\kappa_2$); (iii) the total length of the ties in an instance ($\kappa_3$).  The authors showed that \maxsmt is \WOH\ parameterized by $\kappa_1$, and \FPT parameterized by $\kappa_3$. Furthermore,  since it was shown that  \maxsmt is \NPH\ even when the length of each tie is at most 2, we do not hope to have an algorithm with running time $f(\kappa_2)n^{g(\kappa_2)}$, for any functions $f$ and $g$ that depend only on $k$.


Relatively less work has been done for \minsmt. Beyond what has been mentioned earlier, we would like to mention that lower bounds on the approximability of \minsmt has been discussed in Yanagisawa's Masters thesis \cite{Yanagisawa-Th} and in the paper \cite{Halldorsson03}. 
Finally, we remark that experimental approaches have also been undertaken to study \maxsmt. Munera \etal~\cite{munera2015solving} gave an algorithm based on local search. Moreover, Gent and Prosser~\cite{gent2002empirical} formulated the problem as a constrained optimization problem and gave an algorithm via constrained programming for both decision and optimization version. 

\vspace{-0.7em}
\subsection{Our Contribution}
\vspace{-0.5em}

Our contribution can be summarized in three main theorems, namely Theorems \ref{thm:w1hardIntro}, \ref{thm:fptIntro} and \ref{thm:sethIntro}. The proof of Theorem \ref{thm:xpIntro} is straightforward (unlike the other theorems), yet we present it for the sake of completeness. The approaches employed to establish each of these theorems are discussed in detail in Section \ref{sec:overview}. 
The principles underlying each of these approaches are quite general, and therefore they can be applicable to other parameterizations of these problems as well as to other problems related to {\sc SM}. Here, we only present the statements of our findings. Our first set of results analyzes the parameterized complexity of the {\sc SESM}, {\sc BSM}, {\sc max-SMT} and {\sc min-SMT} problems with respect to the treewidth of the primal graph.

\begin{theorem}\label{thm:w1hardIntro}
The {\sc SESM}, {\sc BSM}, {\sc max-SMT} and {\sc min-SMT} problems are all \WOH\ with respect to $\tw$, the treewidth of the primal graph. Moreover, unless \ETH\ fails, none of these problems can be solved in time $f(\tw)\cdot n^{o(\tw)}$ for any function $f$ that depends only on $\tw$.
\end{theorem}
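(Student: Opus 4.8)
The plan is to give a parameterized reduction from a problem that is simultaneously \WOH\ with respect to a single integer parameter $k$ and, under \ETH, admits no $f(k)\cdot n^{o(k)}$ algorithm; the natural source is {\sc Multicolored Clique} (equivalently {\sc Partitioned Clique}), for which both properties are classical. Given a graph $G$ whose vertex set is partitioned into color classes $V_1,\dots,V_k$, I would construct, for each of the four problems, an instance whose primal graph has treewidth $\OO(k)$. Since the reduction must control treewidth, the central structural invariant is that the interaction pattern among the gadgets is governed by a bounded-size ``interface'': the identity of the vertex chosen in class $V_i$ is passed to the verification gadgets through a small separator, so that a tree decomposition of width $\OO(k)$ can sweep through the gadgets. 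The \ETH\ bound then follows immediately, since a hypothetical $f(\tw)\cdot n^{o(\tw)}$ algorithm would, via $\tw=\OO(k)$, solve {\sc Multicolored Clique} in time $f(k)\cdot n^{o(k)}$, matching the easy $n^{\OO(\tw)}$ upper bound up to the exponent.

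Concretely, I would use one \emph{selection gadget} per color class $V_i$, whose stable matchings are in bijection with the choices $v\in V_i$, and one \emph{verification gadget} per pair $\{i,j\}$ that is ``satisfied'' (contributes the intended value to the objective) exactly when the vertices selected in $V_i$ and $V_j$ are adjacent in $G$. The encoding of a vertex choice would be carried by the \emph{ranks} that a selection gadget forces onto a constant number of agents shared with its verification gadgets; because each verification gadget touches only the two interfaces of its endpoints, the primal graph decomposes into these $\OO(k^2)$ gadgets glued along $\OO(1)$-sized interfaces. To prevent the $\binom{k}{2}$ pairwise interactions from inflating the width, I would route all verifications incident to $V_i$ through a single shared interface for that class, so that the constraint graph is essentially the $k$-vertex clique on these interfaces, which has treewidth $\OO(k)$.

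For \sesm\ and {\sc BSM}, I would calibrate the gadgets so that a globally consistent selection that passes every verification attains sex-equality value exactly $0$ (respectively, balance value exactly a prescribed target $t$), while any inconsistency or any unsatisfied adjacency constraint strictly increases the man-side or woman-side dissatisfaction and hence shifts the objective away from the target; thus the target is attainable iff $G$ has a multicolored clique. For \maxsmt\ and \minsmt, I would instead introduce \emph{ties} in the selection and verification gadgets so that a valid, adjacency-respecting selection yields a weakly stable matching of a prescribed cardinality, with every violation changing the size by a controlled amount; a matching of the target cardinality then exists iff the clique does. I anticipate that a single gadget skeleton serves all four problems, with only the objective-value bookkeeping differing.

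The main obstacle is the tension between \emph{expressiveness} and \emph{treewidth}: every acceptable pair is an edge of the primal graph, so the gadgets that transmit a chosen vertex's identity from a selection gadget to all of its verification gadgets naturally create many edges and threaten to push the treewidth up to $\Theta(k^2)$. Designing interfaces that are both informative enough for stability to enforce the selection and verification semantics and small enough to keep the treewidth $\OO(k)$ is the crux of the argument. A secondary, more delicate difficulty is arranging the objective-value arithmetic so that a single clean threshold separates \yes-instances from \no-instances uniformly, and in particular so that the penalties for inconsistency cannot be mutually ``cancelled'' across the two sides of the market in the sex-equal and balanced settings.
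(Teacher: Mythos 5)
Your architecture coincides with the paper's: it too reduces from {\sc Multicolored Clique}, builds per-class vertex-selection gadgets and per-edge selection gadgets, routes all cross-gadget interaction through a per-class interface of total size $\OO(k)$ (the ``leader'' men $m^i,\widehat{m}^i$, whose preference lists embed the edges incident to each vertex right after that vertex; deleting these $2k$ men leaves connected components of constant treewidth, giving treewidth $2k+\OO(1)$), and it derives the \ETH\ bound exactly as you do. For {\sc max-SMT} and {\sc min-SMT} your outline (ties plus a prescribed target cardinality) is also how the paper proceeds, and there correctness is a stability-plus-counting argument that your skeleton could plausibly carry.

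The genuine gap is in the {\sc SESM} and {\sc BSM} cases, and it sits exactly at the two points you defer. First, your hope that ``any inconsistency or unsatisfied adjacency strictly shifts the objective away from the target'' fails as stated: in the sex-equality objective $|\sat_M(\mu)-\sat_W(\mu)|$, a penalty paid on the men's side by one gadget can be cancelled by a penalty paid on the women's side by another (e.g., one class selecting two vertices while another selects none), so a single threshold does not separate \yesinstance s from \noinstance s without an additional mechanism. The paper's mechanism is twofold: penalties of different kinds are paid in well-separated magnitude scales ($|E(G)|^{10}$, $|E(G)|^{20}$, $|E(G)|^{30}$, $|E(G)|^{40}$, realized by inserting that many dummy ``happy'' agents into preference lists, which keeps treewidth and construction time under control), and within a scale the per-class coefficients are chosen as $2^{i-1}$ on the men's side and $2^{k-i}$ on the women's side (respectively $4^{i-1}$ for {\sc BSM}), so that achieving the target forces the pair of constraints $\sum_{i}b_i2^{i-1}=2^k-1$ and $\sum_i b_i 2^{k-i}=2^k-1$ (with $b_i\geq 0$ bounded), whose unique solution is $b_i=1$ for all $i$; this is what makes ``exactly one selection per class'' enforceable by arithmetic alone. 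Second, and relatedly, a single interface per class is not enough: the paper represents each class by \emph{two} leader men with mirrored preference orders plus a per-vertex Consistency gadget, and the proof that the two copies select the \emph{same} vertex is again arithmetic rather than a stability argument (stability yields $a(\mu,i)+a'(\mu,i)\geq p+1$ for every class $i$, while hitting the target forces $\sum_i\bigl(a(\mu,i)+a'(\mu,i)\bigr)=(p+1)k$, so equality holds in each class). Without these two devices, or substitutes for them, the reverse direction of your {\sc SESM}/{\sc BSM} reductions does not go through.
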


Next, we observe that it is straightforward to derive \XP-algorithms whose running times, in light of Theorem \ref{thm:w1hardIntro}, are {\em essentially tight}.

\begin{theorem}\label{thm:xpIntro}
The {\sc SESM}, {\sc BSM}, {\sc max-SMT} and {\sc min-SMT} problems are all solvable in time $n^{\OO(\tw)}$, where $\tw$ is the treewidth of the primal graph.
\end{theorem}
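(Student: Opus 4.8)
The plan is to solve all four problems by a single dynamic programming scheme over a tree decomposition of the primal graph $G$, whose vertices are the men and women and whose edges are the acceptable pairs. Since a matching is a set of edges of $G$ and, crucially, a blocking pair $(m,w)$ is itself an edge of $G$, the superficially global stability condition is in fact a collection of \emph{local} constraints, one per edge. First I would compute a nice tree decomposition of $G$ of width $\tw$ (with introduce-vertex, introduce-edge, forget and join nodes); this can be done in time $f(\tw)\cdot n^{\OO(1)}$, which is dominated by the bound we are after.

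The heart of the algorithm is the choice of DP state. For each bag $X$, a state records, for every vertex $v\in X$, the rank of the partner that $v$ is \emph{guessed} to receive in the final matching (its position in $v$'s preference list, where ties are grouped into classes for \maxsmt and \minsmt), or a special symbol $\nil$ if $v$ is to remain unmatched; in addition it records a bit indicating whether this guessed partner has already been ``realized'' by an edge placed into the matching. Since each preference list has length at most $n$, there are $n^{\OO(1)}$ choices per vertex and hence $n^{\OO(\tw)}$ states per bag. This guessed-rank information is exactly what makes both stability and the objective local: at an introduce-edge node handling $(m,w)$, the pair is declared a blocking pair (and the state discarded) precisely when $m$ strictly prefers $w$ to his guessed partner \emph{and} $w$ strictly prefers $m$ to hers---a test that uses only the recorded ranks of $m$ and $w$, both of which lie in the bag. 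As every edge of $G$ is introduced exactly once with both endpoints present, every potential blocking pair is inspected exactly once.

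The matching itself is maintained as follows. At an introduce-edge node we may either leave $(m,w)$ out, or place it in the matching provided both endpoints are currently unrealized and the guessed rank of each equals the rank of the other; in the latter case both endpoints are marked realized. A vertex may be forgotten only if it is either guessed to be unmatched or already realized, which guarantees at the root (empty bag) that the accumulated structure is a valid stable matching consistent with all guesses. Join nodes merge states that agree on the guessed ranks, combining the realized-bits so that each vertex is matched along at most one of the two branches. The objective is accumulated along the way and differs only in bookkeeping among the four problems: for \maxsmt and \minsmt we carry the cardinality; for \sesm we carry the signed quantity $\sum_{\text{men}}\mathrm{rank}-\sum_{\text{women}}\mathrm{rank}$, adding a vertex's contribution when it is forgotten and minimizing its absolute value at the root; for {\sc BSM} we instead carry the two sums separately and minimize their maximum. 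As each tracked quantity lies in a range of size $\OO(n^2)$, augmenting the state with it costs only a polynomial factor, preserving the $n^{\OO(\tw)}$ bound.

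The routine verifications---correctness of the introduce/forget/join recurrences and the running-time accounting---are standard once the state is fixed. The one point that genuinely requires care, and which I would treat as the main obstacle, is arguing that the guessed-rank states faithfully capture stability: one must show that a stable matching with a prescribed assignment of partner-ranks exists if and only if there is a consistent family of local states surviving every introduce-edge test, and in particular that recording ranks (rather than partner identities) loses no information for the blocking-pair condition while the realized-bits correctly reconcile the guesses with an actual set of matched edges, even in the presence of ties where a single rank may be shared by several candidates.
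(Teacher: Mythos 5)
Your proposal is correct and follows essentially the same route as the paper: a standard dynamic program over a nice tree decomposition of the primal graph, with $n^{\OO(1)}$ choices of partner information per bag vertex (hence $n^{\OO(\tw)}$ states per bag), stability enforced locally edge by edge since every potential blocking pair is an edge of the primal graph, and each of the four objectives accumulated as a polynomially bounded counter. The differences are only bookkeeping---the paper guesses partner \emph{identities} (for {\sc SESM}/{\sc BSM} restricted to the precomputed set $A^\star$ of agents matched in every stable matching) and checks blocking pairs at vertex-introduce nodes, whereas you guess ranks with a realized bit and check at introduce-edge nodes; both encodings carry the same information (ranks determine all strict-preference comparisons, which is all that weak stability needs) and yield the same $n^{\OO(\tw)}$ bound.
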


Due to the barrier posed by Theorem \ref{thm:w1hardIntro}, we next turn to analyze the treewidth of the rotation digraph. Here, we only study {\sc SESM} and {\sc BSM}, as the rotation digraph is not defined in the context of {\sc max-SMT} and {\sc min-SMT}. On the positive side, we establish the following theorem.

\begin{theorem}\label{thm:fptIntro}
The {\sc SESM} and {\sc BSM} problems are both solvable in time $2^{\tw}\cdot n^6$, where $\tw$ is the treewidth of the rotation digraph.
\end{theorem}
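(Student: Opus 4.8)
The plan is to exploit the classical structure theory of stable matchings via \emph{rotations}. Recall that, for strict preferences, the stable matchings of an instance are in bijection with the \emph{closed sets} (lower sets) of the rotation poset, whose precedence relation is exactly the reachability relation of the rotation digraph $D$ (computable in polynomial time, with $\OO(n^2)$ vertices): a set $S$ of rotations is closed if and only if for every arc $(\rho,\rho')$ of $D$ one has $\rho' \in S \Rightarrow \rho \in S$. Writing $\mu_S$ for the stable matching obtained by eliminating the closed set $S$ starting from the man-optimal matching, the standard weight analysis of rotations yields nonnegative, polynomially bounded integer weights $w_M(\rho),w_W(\rho)$, computable in polynomial time, such that the men's and women's dissatisfaction satisfy
\begin{equation*}
d_M(\mu_S) = d_M^0 + \sum_{\rho \in S} w_M(\rho), \qquad d_W(\mu_S) = d_W^0 - \sum_{\rho \in S} w_W(\rho),
\end{equation*}
where $d_M^0,d_W^0$ are the scores of the man-optimal matching. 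Hence {\sc SESM} asks for a closed set $S$ minimizing $\bigl|(d_M^0-d_W^0)+\sum_{\rho\in S}(w_M(\rho)+w_W(\rho))\bigr|$, while {\sc BSM} asks for a closed set $S$ minimizing $\max\{\,d_M^0+\sum_{\rho\in S}w_M(\rho),\; d_W^0-\sum_{\rho\in S}w_W(\rho)\,\}$. Both are therefore weighted closed-set optimization problems on $D$, which I would solve by dynamic programming over a nice tree decomposition of $D$ of width $\tw$.

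The core DP maintains, at each node $t$ with bag $X_t$, a table indexed by a subset $A\subseteq X_t$ (the rotations of $X_t$ placed into $S$) together with a partial value $p$. Since every arc of $D$ lies inside some bag, the closure constraint is purely local: at each introduce step I forbid any configuration that selects $\rho'$ but not $\rho$ for an arc $(\rho,\rho')$. The leaf, introduce, forget and join transitions are the standard ones for subset-selection problems on tree decompositions; the only point requiring care is bookkeeping the partial value $p$, which ranges over $\{0,1,\dots,\OO(n^2)\}$ because all scores are at most $n^2$, and correcting for the double-counted contribution of $A$ at join nodes.

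For {\sc SESM} the two sums collapse into a single weight $w(\rho)=w_M(\rho)+w_W(\rho)$, so $p$ records the partial value of $\sum_{\rho\in S}w(\rho)$ and the table entry is merely a bit indicating achievability; the answer is $\min_p|(d_M^0-d_W^0)+p|$ over achievable totals $p$. For {\sc BSM} I would let $p$ record the partial men-contribution $\sum_{\rho\in S}w_M(\rho)$ and store, for each $(A,p)$, the \emph{maximum} achievable women-contribution $\sum_{\rho\in S}w_W(\rho)$; this is sound because, once the men-contribution (hence $d_M(\mu_S)$) is fixed, minimizing $\max\{d_M,d_W\}$ is equivalent to minimizing $d_W$, i.e.\ maximizing the women-contribution, so the pair $(p,q_{\max}(p))$ dominates all pairs with the same $p$. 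At a join node the two children must agree on $A$, and I combine them by a convolution over $p$ (boolean ``or'' for {\sc SESM}, $(\max,+)$ for {\sc BSM}). With $\OO(n^2)$ rotations the decomposition has $\OO(n^2)$ (join) nodes and $p$ ranges over $\OO(n^2)$ values, so each join costs $2^{\tw}\cdot\OO(n^2)^2=2^{\tw}\OO(n^4)$ time, for a total of $2^{\tw}n^6$.

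The main obstacle, and the place where the argument must be executed carefully, is twofold. First, I must verify the rotation-weight identities above with the correct signs and confirm that the resulting weights are nonnegative integers bounded by $\OO(n^2)$, so that the value axis of the DP stays polynomial; this rests on the distributive-lattice structure of stable matchings but must be stated precisely for the two objectives at hand. Second, pinning the base of the exponential to exactly $2^{\tw}$ (rather than $2^{\tw+1}$) requires the usual careful handling of join nodes on a nice tree decomposition, and the quadratic convolution over the value axis is what forces the $n^6$ factor; a faster combination would only improve the polynomial, which is not needed here. Everything else is routine tree-decomposition dynamic programming.
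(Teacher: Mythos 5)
Your proposal is correct, but it takes a genuinely different route from the paper. The paper's algorithm never uses rotation weights: its DP states are pairs $(v,R')$ with $R'\subseteq\beta(v)$ interpreted as $R(\mu)\cap\beta(v)$ for the (unknown) stable matchings $\mu$ it represents, and the satisfaction values are recovered by associating with every man $m$ a directed path $P(m)$ in $D_\Pi$ covering $R(m)$, classifying men as settled or unsettled relative to each state, assigning tentative partners (possibly several men to one woman, the ``illegal scores''), and correcting the tentative totals at introduce nodes via difference terms $\diff^\alpha(M_\rho),\diff^\lambda(M_\rho)$; the bulk of the paper's Section 5 is the induction proving this bookkeeping consistent across leaf/forget/introduce/join nodes. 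You instead invoke the classical Irving--Leather--Gusfield weight decomposition, $\sat_M(\mu_S)=\sat_M(\mu_\emptyset)+\sum_{\rho\in S}w_M(\rho)$ and $\sat_W(\mu_S)=\sat_W(\mu_\emptyset)-\sum_{\rho\in S}w_W(\rho)$, which is valid because each elimination re-matches exactly the pairs listed in the rotation, so the change in both scores depends only on the rotation and telescopes over any $\prec$-compatible elimination order (Proposition on compatible orders in the paper); the weights are strictly positive integers with total at most $n^2$. This collapses both problems to weighted lower-set selection on the DAG $D_\Pi$, where closure is a per-arc constraint checkable at introduce nodes (for every arc of $G_\Pi$ some introduce node has both endpoints in its bag), and the remaining DP, including your dominance argument for {\sc BSM} (for fixed men-weight $p$, keep only the maximum women-weight), is routine and matches the paper's $\OO(2^{\tw}n^6)$ accounting: $\OO(n^2)$ values on the value axis, $\OO(n^4)$ work per join-node bag subset. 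What each approach buys: yours is shorter, rests on citable classical structure theory, and extends immediately to the paper's generic problem {\sc GSM} by tracking the two weight coordinates separately (giving the same $\OO(2^{\tw}n^{10})$ bound); the paper's heavier machinery of paths, entry/exit vertices and settled/unsettled men is not needed for the theorem itself, though the authors present it as structural insight of independent interest.
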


Finally, we prove that unless \SETH\ fails, Theorem \ref{thm:fptIntro} {\em pinpoints precisely} the running times of \FPT-algorithms for both {\sc SESM} and {\sc BSM}.

\begin{theorem}\label{thm:sethIntro}
Unless \SETH\ fails, neither {\sc SESM} nor {\sc BSM} is solvable in time $(2-\epsilon)^{\tw}\cdot n^{\OO(1)}$ for any fixed $\epsilon>0$, where $\tw$ is the treewidth of the rotation digraph.
\end{theorem}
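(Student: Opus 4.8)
The plan is to establish a matching conditional lower bound by giving a polynomial-time reduction from CNF-SAT to \sesm\ (and, with minor modifications, to {\sc BSM}) that maps a formula on $n$ variables to an instance whose rotation digraph has treewidth $n + \OO(1)$. If either problem admitted an algorithm running in time $(2-\epsilon)^{\tw}\cdot n^{\OO(1)}$, then composing it with this reduction would decide satisfiability of an $n$-variable formula in time $(2-\epsilon)^{n+\OO(1)}\cdot n^{\OO(1)}$, which is $(2-\epsilon)^{n}\cdot n^{\OO(1)}$ after absorbing the constant into the polynomial factor, contradicting \SETH. Hence the whole difficulty lies in the reduction, and more precisely in controlling the treewidth of the rotation digraph so that the base of the exponent is \emph{exactly} $2$. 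I would carry it out inside the framework of Lokshtanov, Marx and Saurabh for \SETH-based lower bounds parameterized by treewidth/pathwidth, in which each variable contributes a single unit of width and each clause is verified locally.

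I would first recall the algebraic picture underlying Theorem~\ref{thm:fptIntro}: the stable matchings of an instance $I$ are in bijection with the closed (down-closed) vertex sets of the rotation digraph $D(I)$, and the sex-equal value of the matching corresponding to a closed set $S$ equals $\delta_0 + \sum_{\rho \in S} w(\rho)$, where $\delta_0$ is the value of the man-optimal matching and every weight $w(\rho)$ is a \emph{positive}, polynomially bounded rank-difference. Thus \sesm\ is equivalent to selecting a closed set whose total weight is as close as possible to the target $T = -\delta_0$, with the value $0$ attainable iff some closed set has weight exactly $T$. Consequently the reduction need only engineer a poset with positive, polynomially bounded weights, rather than a number-theoretic subset-sum instance; this is precisely what keeps the construction compatible with the $2^{\tw}\cdot n^{\OO(1)}$ upper bound.

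The construction proceeds in two layers. For each variable $x_i$ I would build a \emph{track}: a block of men and women whose preference lists force a single rotation (or a constant-size bundle of rotations acting as one bit), replicated once per clause and chained by precedence arcs so that the closure constraint propagates one consistent truth value along the whole track. The closed sets that can be optimal are exactly those that are ``all-in'' or ``all-out'' on each track, encoding a truth assignment, while every inconsistent or intermediate closed set is rendered suboptimal by its weight. For each clause $C_j$ I would attach a \emph{clause gadget} that scans the tracks of the variables occurring in $C_j$ (reading them one at a time so as not to inflate the width), contributing $0$ to the running sum when $C_j$ is satisfied by the encoded assignment and a strictly positive surplus otherwise. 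Calibrating the baseline weights so that a satisfying assignment hits $\delta = 0$ then makes the instance a \yesinstance\ iff $\phi$ is satisfiable. Since a column boundary of $D(I)$ meets exactly the $n$ track-rotations of that column together with the $\OO(1)$ rotations of the adjacent clause gadget, a path decomposition of width $n + \OO(1)$ exists, as required.

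The main obstacle, I expect, is the realization step: turning this abstract poset-with-weights into explicit {\sc SM} preference lists whose rotation digraph is \emph{precisely} the intended structure, while simultaneously (i) keeping the width at $n + \OO(1)$ so that the exponential base is exactly $2$ rather than some $2^{c}$, and (ii) hitting the exact target $T$ using only attainable, polynomially bounded rank-differences as rotation weights. Generic realizations of a poset as a rotation digraph introduce auxiliary rotations that can inflate the treewidth, so the gadgets must be designed by hand and their rotation structure verified directly from the preference lists. Once the variable and clause gadgets are certified to have the claimed rotation digraph, weights, and width, the \SETH\ contradiction above is immediate; the same gadgets, re-weighted to equalize the two sexes' contributions so that the objective of {\sc BSM} (the maximum of the two sides' costs rather than their difference) is optimized exactly for satisfying assignments, yield the statement for {\sc BSM}.
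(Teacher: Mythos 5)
Your high-level frame---stable matchings as closed (downward-closed) sets of the rotation poset, the sex-equal value as a base value plus a sum of positive, additive rotation weights, variable/clause gadgets in the poset, treewidth about $n$, and then composition with the hypothesized fast algorithm to refute \SETH---is the same coarse skeleton as the paper's proof. But your proposal commits to a \emph{polynomial-time} reduction with \emph{polynomially bounded} weights, and that commitment is where the plan breaks. Closed sets are monotone: the union of closed sets is closed, and membership constraints propagate only downward. Hence the poset structure can never express mutual exclusion (``not both $\rho$ and $\rho'$'') nor a negative condition (``$\rho$ is out''), which your clause gadgets need for negative literals and which your tracks need to forbid the ``both true and false'' cheat (a closed set that puts both complementary options in can falsely witness every clause, destroying soundness). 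These constraints can therefore only be enforced through the single linear weight functional, and preventing compensation across $\Theta(n)$ independent complementary choices (both-in on one variable cancelling both-out on another) forces the per-variable weights to be dissociated, i.e., of magnitude $2^{\Theta(n)}$. Since a rotation's weight is a difference of ranks, bounded by the number of agents, exponential weights force exponentially many agents---contradicting a polynomial-time reduction. (A smaller issue: your precedence-chained tracks do not propagate ``one consistent truth value''; downward closure only makes the in-set a prefix, so a track may flip once, which needs a separate repetition argument.)

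The paper's proof resolves exactly this tension by abandoning polynomiality of the reduction. It starts from {\sc $s$-Sparse $p$-CNF-SAT} (via the sparsification form of \SETH), partitions the $r\leq sn$ clauses into $q=r/d$ groups of constant size $d$, and enumerates all satisfying partial assignments of each group, so that clause verification becomes \emph{positive selection} (choose one satisfying assignment per group) and all implications can be expressed by closure alone. The constraint ``exactly one assignment per group'' is then enforced by weights of magnitude $n^{20}2^{i-1}$ and $n^{20}2^{2q-i}$ (for {\sc SESM}; $n^{20}4^{i-1}$ for {\sc BSM}), realized by inserting exponentially many dummy ``happy'' agents, with an equation whose only admissible solution is all-ones. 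The resulting instance has $2^{\Theta(n)}$ agents, yet its rotation digraph---which the paper pins down explicitly after characterizing \emph{all} stable matchings---has treewidth at most $n+2\cdot 2^{pd}=n+\OO(1)$: deleting the $n$ variable rotations leaves constant-size components, one per group. Composing a $2^{\epsilon\tw}\cdot|A|^{c}$ algorithm with this reduction solves sparse SAT in time $2^{(\epsilon+3cs/d)n+\OO(1)}$, which contradicts \SETH\ once $d$ is chosen large relative to $c$, $s$, and $\epsilon$. This size-versus-width tradeoff, together with the sparse starting point that keeps the number of groups linear in $n$, is the essential mechanism your proposal is missing; without it, the exact-target calibration and locally checking clause gadgets you describe cannot be realized as preference lists.
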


Thus, we present a comprehensive, complete picture of the behavior of central optimization versions of the {\sc SM} problem with respect to the parameter treewidth of both the primal graph and the rotation digraph. (We remark that along the way, we thus also resolve open problems posed by  McDermid \cite{mcThesis} and McDermid and Irving \cite{DBLP:journals/algorithmica/McDermidI14}.)


\vspace{-0.7em}
\section{Preliminaries}\label{sec:prelims}
\vspace{-0.5em}

Standard graph-theoretic terms not explicitly defined here can be found in \cite{DBLP:books/daglib/0030488}, and for standard notions in Parameterized Complexity, refer to Appendices \ref{sec:PC} and \ref{sec:tw}. Given a non-negative integer $n$, we use $[n]_0$ and $[n]$ to denote the sets $\{0,1,\ldots,n\}$ and $\{1,2,\ldots,n\}$, respectively. Given a function $f: A\rightarrow B$, $\domain(f)$ and $\image(f)$ denote the domain and the image of $f$, respectively.

\vspace{-0.7em}
\subsection{Stable Marriage}\label{sec:prelimsSM}
\vspace{-0.5em}

In the classic {\sc Stable Marriage (SM)} problem, the input consists of a set of men, $M$, and a set of women, $W$. The set of agents (men and women) is denoted by $A=M\cup W$. The total number of agents, $|A|$, is denoted by $n$. Each man (woman) has a {\em preference list}, which is a list ranking a subset of $W$ ($M$). More precisely, each man $m\in M$ is assigned a subset $W'\subseteq W$ and an {\em injective} function $\pos_m: W'\rightarrow [|W'|]$. Symmetrically, each woman $w\in W$ is assigned a subset $M'\subseteq M$ and an {\em injective} function $\pos_w: M'\rightarrow [|M'|]$.\footnote{Throughout our paper, we do not assume that each person must rank {\em all} people of the opposite sex. That is, we deal with the general case where preference lists may be incomplete. For emphasis, some papers add the letter ``I'' to the abbreviation {\sc SM}, but for the sake of brevity, we avoid this addition.} For all $m\in M$ and $w\in W$, it holds that $w\in\domain(\pos_m)$ if and only if $m\in\domain(\pos_w)$. 
The case where for every agent $a\in M\cup W$, the function $\pos_a$ may {\em not} be injective, is known as the {\sc SM with Ties (SMT)} problem. In this generalization of {\sc SM}, for every agent $a\in M\cup W$, the image of $\pos_a$ is restricted to be of the form $[t]$ for $t\in\mathbb{N}$.
The formulation of the objectives of {\sc SM} and {\sc SMT} relies on the notion of {\em stability}.

\begin{definition}
Given $M'\subseteq M$ and an injective function $\mu : M'\rightarrow W$, we say that a pair $(m,w)$ of a man $m\in M$ and a woman $w\in W$ such that $w\in\domain(\pos_m)$ is a {\em blocking pair of $\mu$} if {\bf (i)} $m\notin\domain(\mu)$ and $w\notin\image(\mu)$, or {\bf (ii)} $m\notin\domain(\mu)$ and $\pos_w(m) < \pos_w(\mu^{-1}(w))$, or {\bf (iii)} $\pos_m(w) < \pos_m(\mu(m))$ and $w\notin\image(\mu)$, or {\bf (iv)} $\pos_m(w) < \pos_m(\mu(m))$ and $\pos_w(m) < \pos_w(\mu^{-1}(w))$.
\end{definition} 

\begin{definition}
Given $M'\subseteq M$ and an injective function $\mu : M'\rightarrow W$, we say that $\mu$ is a {\em stable matching} if for every $m'\in M'$, $\mu(m')\in\domain(\pos_{m'})$, and $\mu$ has no blocking pair.
\end{definition} 

Roughly speaking, a stable matching is a matching between a subset of men and a subset of women such that there does not exist a pair of a man and a woman who prefer each other to their matched partners (if at all such partners exist). To simplify our presentation, we use the notation $(m,w)\in\mu$ to indicate that $m\in\domain(\mu)$ and it holds that $\mu(m)=w$. Moreover, we let $\cal S$ denote the set of all stable matchings. In the seminal paper \cite{gale62a}, Gale and Shapley showed that there always exists at least one stable matching.

\begin{proposition}[\cite{gale62a}]\label{lem:nonEmptyS}
The set ${\cal S}$ is non-empty.
\end{proposition}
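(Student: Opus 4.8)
The plan is to prove non-emptiness constructively, by exhibiting the classical men-proposing deferred-acceptance procedure and arguing that its output is a stable matching. First I would set up the algorithm: maintain a tentative assignment, initially empty. While some man $m$ is unmatched and has not yet proposed to every woman on his list, let $m$ propose to the woman $w$ ranked highest (smallest $\pos_m$-value) among those in $\domain(\pos_m)$ to whom he has not yet proposed. The woman $w$ then compares $m$ against her current tentative partner, if any: she retains whichever of the two she ranks better according to $\pos_w$ and rejects the other, who reverts to being unmatched. The procedure halts once every unmatched man has exhausted his list.

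Next I would argue termination and well-definedness. Each man proposes to each woman on his list at most once, so the total number of proposals is at most $\sum_{m\in M}|\domain(\pos_m)| \le |M|\cdot|W|$, yielding a polynomial bound on the number of iterations. At every stage the tentative assignment is an injective partial function from a subset of $M$ into $W$ in which every matched pair lies mutually on both preference lists; hence the output $\mu$ is a matching in the sense required by the definition of stability, and in particular $\mu(m')\in\domain(\pos_{m'})$ for every matched man $m'$.

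The heart of the argument is stability, which I would derive from a monotonicity invariant: once a woman $w$ has received a proposal she remains matched for the rest of the run, and the $\pos_w$-rank of her tentative partner only improves (i.e. only decreases), since she trades up whenever she switches. Suppose for contradiction that $(m,w)$ is a blocking pair of $\mu$. In each of the four cases of the definition, $m$ is either unmatched in $\mu$ or strictly prefers $w$ to $\mu(m)$; in both situations $m$ must have proposed to $w$ at some point (if permanently unmatched he exhausts his entire list, and if matched he proposes in decreasing order of preference and only descends past $w$ after $w$ rejects him). By the invariant, $w$ therefore finishes with a partner she ranks at least as well as $m$; that is, $w\in\image(\mu)$ and $\pos_w(\mu^{-1}(w))\le \pos_w(m)$. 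This contradicts every one of the four blocking conditions, each of which requires either $w\notin\image(\mu)$ or $\pos_w(m)<\pos_w(\mu^{-1}(w))$. Hence $\mu$ has no blocking pair and is stable, so $\cal S$ is non-empty.

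The step I would treat most carefully, and the only genuinely delicate point, is the handling of incomplete preference lists: I must ensure that a man never proposes to a woman outside $\domain(\pos_m)$ and that a woman never holds a man outside her own domain, so that the four-case blocking definition applies verbatim with no hidden assumption of complete lists. This is guaranteed by restricting every proposal to $\domain(\pos_m)$ together with the mutual-membership condition $w\in\domain(\pos_m)\iff m\in\domain(\pos_w)$ built into the model; with this in place, the case analysis above is routine.
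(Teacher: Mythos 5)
Your proof is correct: it is precisely the classical men-proposing deferred-acceptance argument of Gale and Shapley, which is exactly the result the paper invokes here by citation (the paper states this proposition without reproving it). Your careful adaptation to incomplete preference lists---restricting proposals to $\domain(\pos_m)$ and using the mutual-membership condition $w\in\domain(\pos_m)\iff m\in\domain(\pos_w)$---is just what is needed for the paper's four-case definition of a blocking pair, so nothing is missing.
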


Thus, the objective of {\sc SM} and {\sc SMT} is to {\em find} a stable matching. However, there can be an exponential number of stable matchings \cite{DBLP:books/daglib/0066875}. Notably, Gale and Sotomayor \cite{GaleMarilda85} showed that in the absence of ties, stable matchings do not differ in which men and women they match.

\begin{proposition}[\cite{GaleMarilda85}]\label{lem:matchSame}
In {\sc SM}, for all $\mu,\mu'\in{\cal S}$, $\domain(\mu)=\domain(\mu')$ and $\image(\mu)=\image(\mu')$.
\end{proposition}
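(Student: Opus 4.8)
The plan is to argue by contradiction using the symmetric difference of the two matchings, a self-contained technique that crucially exploits the fact that in {\sc SM} (unlike {\sc SMT}) every $\pos_a$ is injective, so all preference comparisons are strict. Suppose $\mu,\mu'\in{\cal S}$ do not match the same set of agents, i.e. some agent is matched in exactly one of them. Viewing $\mu,\mu'$ as sets of man--woman pairs, I would consider their symmetric difference $D=(\mu\setminus\mu')\cup(\mu'\setminus\mu)$. In the graph on $A$ whose edge set is $D$, every agent has degree at most two (at most one incident $\mu$-edge and at most one incident $\mu'$-edge), so each connected component is an alternating cycle or an alternating path whose edges switch between $\mu$ and $\mu'$. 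An agent matched in exactly one of $\mu,\mu'$ has degree one in $D$ and is therefore an endpoint of an alternating path $P$ of positive length. By the symmetry between the two sexes and between $\mu$ and $\mu'$, I may assume this endpoint is a man $m_0$ with $m_0\in\domain(\mu)\setminus\domain(\mu')$, so that $P$ reads $m_0\to w_1\to m_1\to w_2\to\cdots$, where $(m_{i-1},w_i)\in\mu$ and $(m_i,w_i)\in\mu'$.

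Next I would establish, by induction along $P$, the preference pattern that every woman on $P$ strictly prefers her $\mu'$-partner to her $\mu$-partner, while every man other than $m_0$ strictly prefers his $\mu$-partner to his $\mu'$-partner. The base case uses $m_0\notin\domain(\mu')$: were $w_1$ to prefer $m_0$ to her $\mu'$-partner $m_1=(\mu')^{-1}(w_1)$, then $(m_0,w_1)$ would block $\mu'$ by clause (ii) of the blocking-pair definition, contradicting stability of $\mu'$; hence $\pos_{w_1}(m_1)<\pos_{w_1}(m_0)$. The inductive step alternates between the two matchings: knowing that $w_i$ prefers her $\mu'$-partner $m_i$ forces, via non-blocking of $\mu$ by $(m_i,w_i)$ (clause (iv)), that $m_i$ prefers his $\mu$-partner $w_{i+1}$; and knowing this forces, via non-blocking of $\mu'$ by $(m_i,w_{i+1})$ (again clause (iv)), that $w_{i+1}$ prefers her $\mu'$-partner $m_{i+1}$. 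Injectivity of the $\pos$ functions is exactly what turns each ``does not block, hence weakly prefers'' into the strict inequality needed to continue the induction.

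Finally, since $P$ is finite it has a second endpoint, which is again an agent matched in exactly one of the matchings, giving two cases. If $P$ ends at a man $m_k\in\domain(\mu')\setminus\domain(\mu)$, the last edge is $(m_k,w_k)\in\mu'$ and the propagation yields that $w_k$ prefers $m_k$ to her $\mu$-partner $m_{k-1}$; as $m_k$ is unmatched in $\mu$, the pair $(m_k,w_k)$ then blocks $\mu$ by clause (ii), contradicting stability of $\mu$. If instead $P$ ends at a woman $w_j\in\image(\mu)\setminus\image(\mu')$, the last edge is $(m_{j-1},w_j)\in\mu$, the propagation yields that $m_{j-1}$ prefers $w_j$ to his $\mu'$-partner $w_{j-1}$, and as $w_j$ is unmatched in $\mu'$ the pair $(m_{j-1},w_j)$ blocks $\mu'$ by clause (iii), contradicting stability of $\mu'$. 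Either way we reach a contradiction, so $\mu$ and $\mu'$ match exactly the same agents, yielding $\domain(\mu)=\domain(\mu')$ and $\image(\mu)=\image(\mu')$ simultaneously. I expect the main obstacle to be purely organizational: reducing the four a priori endpoint configurations to a single one via the sex/matching symmetry, and invoking precisely the correct clause among (i)--(iv) at each step, particularly the unmatched-agent clauses (ii) and (iii) at the two ends of $P$.
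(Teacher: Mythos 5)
There is no in-paper proof to compare against here: the paper imports this proposition as a known result of Gale and Sotomayor, with a citation and no argument. So your proposal must be judged on its own, and it is essentially sound. You give the standard symmetric-difference proof of the ``lone wolf'' phenomenon: components of the symmetric difference of $\mu$ and $\mu'$ are alternating paths and cycles, an agent matched in exactly one of the two matchings is the endpoint of an alternating path of positive length, and strict preferences (injectivity of $\pos_a$, which you correctly flag as the place where {\sc SM} differs from {\sc SMT}) propagate along that path until the far endpoint yields a blocking pair for one of the two matchings. The symmetry reduction to a man $m_0\in\domain(\mu)\setminus\domain(\mu')$ is legitimate, the induction alternating clauses (ii) and (iv) is correct, and the two terminal cases you analyze do produce blocking pairs via clauses (ii) and (iii). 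This self-contained route even has an advantage over a bare citation in this paper's setting: since the paper's {\sc SM} allows incomplete preference lists, your argument checks the claim directly against the paper's four-clause definition of a blocking pair (and all pairs you invoke are edges of $\mu$ or $\mu'$, so the acceptability requirement $w\in\domain(\pos_m)$ holds automatically).

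One small hole remains: the path $P$ may have length one, i.e.\ consist of the single $\mu$-edge $(m_0,w_1)$ with $w_1\notin\image(\mu')$. Your base case presumes $w_1$ has a $\mu'$-partner $m_1$ (silently assuming length at least two), and your woman-terminal case with $j=1$ refers to ``the $\mu'$-partner of $m_{j-1}=m_0$,'' which does not exist. The fix is immediate: in that configuration $m_0\notin\domain(\mu')$ and $w_1\notin\image(\mu')$, so $(m_0,w_1)$ blocks $\mu'$ by clause (i). But as written this configuration is covered by neither of your two terminal cases, so you should add it explicitly; it is also the only place clause (i) is ever needed.
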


In the absence of ties, we denote $M^\star=\{m\in M: m\in\bigcap_{\mu\in{\cal S}}\domain(\mu)\}$ and $W^\star=\{w\in W: w\in\bigcap_{\mu\in{\cal S}}\image(\mu)\}$. Note that by Proposition \ref{lem:matchSame}, we have that $M^\star=\{m\in M: m\in\bigcup_{\mu\in{\cal S}}\domain(\mu)\}$ and $W^\star=\{w\in W: w\in\bigcup_{\mu\in{\cal S}}\image(\mu)\}$. Finally, we denote $A^\star=M^\star\cup W^\star$.

We also recall the notions of man- and woman-optimal stable matchings.

\begin{proposition}[\cite{gale62a}]\label{lem:menOptimal}
In {\sc SM}, there is exactly one stable matching $\mu$, denoted by $\mu_M=\mu_\emptyset$, that minimizes $\sum_{(m,w)\in \mu}\pos_m(w)$. Symmetrically, there is exactly one stable matching $\mu$, denoted by $\mu_W$, that maximizes $\sum_{(m,w)\in \mu}\pos_m(w)$. Both $\mu_M$ and $\mu_W$ can be found in time $\OO(n^2)$.
\end{proposition}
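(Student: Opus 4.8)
The plan is to analyze the men-proposing version of the Gale--Shapley deferred acceptance procedure and to extract both extremal statements, together with their uniqueness, from a single structural lemma about \emph{valid partners}. Call a woman $w$ a valid partner of a man $m$ if some stable matching pairs them, i.e.\ $(m,w)\in\mu$ for some $\mu\in{\cal S}$; by Proposition~\ref{lem:nonEmptyS} this notion is nonvacuous. First I would recall the procedure: each man proposes to the women on his list in decreasing order of preference, each woman tentatively holds the best proposal she has seen and rejects the rest, and a rejected man continues down his list. A direct argument shows that the process halts after finitely many proposals and that its output $\mu$ is a stable matching: any candidate blocking pair $(m,w)$ with $m$ preferring $w$ to $\mu(m)$ would require $m$ to have proposed to and been rejected by $w$, so $w$ holds someone she ranks above $m$, and no blocking pair survives.

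The heart of the argument is the claim that \emph{no man is ever rejected by a valid partner}. I would prove it by examining the first rejection of some man $m$ by a valid partner $w$, in favour of a competitor $m'$ (so $w$ prefers $m'$ to $m$). Fixing a stable matching $\mu$ with $(m,w)\in\mu$ and setting $w'=\mu(m')$, note that because this is the \emph{first} valid-partner rejection, $m'$ has not previously been rejected by any valid partner, and in particular not by $w'$. Since men propose in preference order, the fact that $m'$ ever proposed to $w$ forces $m'$ to prefer $w$ to $w'$ (otherwise he would have proposed to, and been held or rejected by, his valid partner $w'$ first, contradicting the minimality). Then $m'$ prefers $w$ to $\mu(m')$ while $w$ prefers $m'$ to $\mu^{-1}(w)=m$, so $(m',w)$ is a blocking pair of $\mu$, contradicting its stability. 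This inductive bookkeeping over ``the first bad rejection'' is the step I expect to be the main obstacle, since it is where stability of $\mu$ must be converted into a contradiction and where one must argue carefully that the competitor genuinely prefers $w$ to his $\mu$-partner.

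From the claim the extremal statements follow quickly. Each man is rejected only by non-valid partners, so at termination he is held by the most preferred woman who did not reject him, namely his best valid partner; hence $\mu_M$ matches every man to his best valid partner and therefore simultaneously minimizes each summand $\pos_m(\mu(m))$, and thus the total $\sum_{(m,w)\in\mu}\pos_m(w)$. Uniqueness is immediate: by Proposition~\ref{lem:matchSame} every stable matching matches the same set of men, so the sum ranges over a fixed index set, and any minimizer must attain, for each $m$, the best-valid-partner value of $\pos_m(\mu(m))$, forcing it to coincide with $\mu_M$. For $\mu_W$ I would establish the man-optimal/woman-pessimal duality: if $(m,w)\in\mu_M$ then $m$ is $w$'s \emph{worst} valid partner, for a stable $\mu'$ giving $w$ a partner she ranks below $m$ would yield $\mu'^{-1}(w)\neq m$, hence $\mu'(m)\neq w$, and since $w=\mu_M(m)$ is $m$'s best valid partner $m$ strictly prefers $w$ to $\mu'(m)$, making $(m,w)$ a blocking pair of $\mu'$. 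Consequently the woman-optimal matching $\mu_W$ (obtained from the symmetric women-proposing run) matches every man to his worst valid partner and hence maximizes $\sum_{(m,w)\in\mu}\pos_m(w)$, with uniqueness by the same summand-by-summand argument.

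Finally, the $\OO(n^2)$ bound follows from counting proposals: each man proposes to each woman at most once, so there are at most $|M|\cdot|W|=\OO(n^2)$ proposals in total, and with precomputed arrays storing each woman's ranking of the men every proposal is processed in $\OO(1)$ time; the symmetric run producing $\mu_W$ has the same cost.
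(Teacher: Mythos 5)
The paper never proves this proposition---it is imported verbatim from Gale--Shapley \cite{gale62a} with a citation---so there is no internal proof to compare against; your proposal has to be judged on its own merits, and it is essentially the standard, correct deferred-acceptance argument. The ``first rejection by a valid partner'' induction, the conclusion that the men-proposing run outputs the man-optimal matching, the summand-by-summand uniqueness (valid because Proposition~\ref{lem:matchSame} fixes the set of matched men and strictness of preferences turns $\mu(m)\neq\mu_M(m)$ into a strict inequality of positions), the optimal/pessimal duality used to get $\mu_W$ as the maximizer, and the $\OO(n^2)$ proposal count are all sound.

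One gap is worth flagging, because it is specific to this paper's setting of \emph{incomplete} preference lists: in your key claim you set $w'=\mu(m')$ for the competitor $m'$, silently assuming $m'$ is matched in $\mu$. Nothing guarantees this---$m'$ is just some man who proposed to $w$, and he may be matched in no stable matching at all. The fix is one line: if $m'\notin\domain(\mu)$, then since $w$ prefers $m'$ to $\mu^{-1}(w)=m$, the pair $(m',w)$ is already a blocking pair of $\mu$ by condition \textbf{(ii)} of the paper's definition of a blocking pair, contradicting stability. (Your duality argument has the same shape of assumption, but there it is covered by Proposition~\ref{lem:matchSame}, since the man in question is matched in $\mu_M$ and hence in every stable matching.) With that case added, the proof is complete.
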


We remark that  $\mu_M$ is also the unique stable matching that maximizes $\sum_{(m,w)\in \mu}\pos_w(m)$, and $\mu_W$ is also the unique stable matching that minimizes $\sum_{(m,w)\in \mu}\pos_w(m)$ \cite{gale62a}. From now onwards, by Proposition \ref{lem:menOptimal}, we assume that we have $\mu_M$ and $\mu_W$ at hand. 

Finally, the {\em primal graph} $H$ of an instance of {\sc SM} or {\sc SMI} is the bipartite graph whose vertex set is $\{v_a: a\in A\}$ and whose edge set is $\{\{v_m,v_w\}: m\in\domain(\pos_w)\}$.

\vspace{-0.7em}
\subsection{SM and SMT: Optimization}\label{sec:prelimsOptimization}
\vspace{-0.5em}

\myparagraph{Sex-Equality and Balance.} We first discuss the scenario where ties are forbidden, that is, we first present two optimization versions of {\sc SM}. Here, the set of stable matchings, $\cal S$, can be viewed as a {\em spectrum} where the two extremes are the man-optimal stable matching and the woman-optimal stable matching. Naturally, it is desirable to analyze matchings that lie somewhere in the middle, being {\em fair towards both sides} or {\em desirable by both sides}. Deciding which notion best describes an appropriate outcome depends on the specific situation at hand. Here, the quantity $\pos_{a}(\mu(a))$ is viewed as the ``satisfaction'' of $a$ in $\mu$, where a smaller value signifies a greater amount of satisfaction. For a stable matching $\mu\in{\cal S}$, the total satisfaction of men from $\mu$ is $\sat_M(\mu)=\sum_{(m,w)\in \mu}\pos_m(w)$, and the total satisfaction of women from $\mu$ is $\sat_W(\mu)=\sum_{(m,w)\in \mu}\pos_w(m)$.

In the {\sc Sex-Equal Stable Marriage (SESM)} problem, we seek a stable matching that is {\em fair towards both sides} by minimizing the difference between their individual amounts of satisfaction. The formulation of this problem relies on the notion of the sex-equality measure:

\begin{definition}\label{def:sexEqMeasure}
The {\em sex-equality measure} is the function $\delta:{\cal S}\rightarrow\mathbb{Z}$ such that for all $\mu\in{\cal S}$,
\vspace{-0.5em}
\[\delta(\mu) = \sat_M(\mu) - \sat_W(\mu).\]
\end{definition}

The best value that this measure attains is $\displaystyle{\Delta = \min_{\mu\in{\cal S}}|\delta(\mu)|}$. The objective of the {\sc SESM} problem is to find a stable matching $\mu$ such that $\delta(\mu)=\Delta$.\footnote{We only compute $\Delta$. By backtracking our computations, it is possible to construct a stable matching $\mu$ such that $\delta(\mu)=\Delta$. This remark is also relevant to our algorithms for {\sc BSM}, {\sc max-SMT} and {\sc min-SMT}.}

In {\sc Balanced Stable Marriage}, the objective is to find a stable matching that is {\em desirable by both sides}. Here, we rely on the notion of the {\em balance measure}, which is defined as follows.

\begin{definition}\label{def:balMeasure}
The {\em balance measure} is the function $\bal:{\cal S}\rightarrow\mathbb{Z}$ such that for all $\mu\in{\cal S}$,
\vspace{-0.5em}
\[\bal(\mu) = \max\{\sat_M(\mu), \sat_W(\mu)\}.\]
\end{definition}

The best value that this measure attains is $\displaystyle{\Bal = \min_{\mu\in{\cal S}}|\bal(\mu)|}$. The task of {\sc BSM} is to find a stable matching $\mu$ such that $\bal(\mu)=\Bal$. At first sight, this measure might seem conceptually similar to the previous one, but in fact, the two measures are quite different. Indeed, {\sc BSM} examines the amount of dissatisfaction of each party {\em individually}, and attempts to minimize the worse one among the two. This problem fits the scenario where each party is selfish in the sense that it wishes to minimize its own dissatisfaction irrespective of the dissatisfaction of the other party. Here, our goal is to find a matching desirable by both parties by ensuring that each individual amount of dissatisfaction does not exceed some threshold. In some situations, the minimization of $\bal(\mu)$ may indirectly also minimize $\delta(\mu)$, and vice versa, yet in general, this is not true. Indeed, it is known how to construct a family of instances where there does {\em not} exist any matching that is both a sex-equal stable matching and a balanced stable~matching \cite{DBLP:books/ws/Manlove13}.

\smallskip
\myparagraph{Maximum/Minimum Size.} We now present two optimization versions of {\sc SMT}. Here, the two (arguably) most natural objectives are to maximize or minimize the size of the outputted stable matching as it might be desirable to maintain stability while either maximizing or minimizing the use of available ``resources''. 
These objectives define the well-known {\sc max-SMT} and {\sc min-SMT} problems. Formally, given an instance of {\sc SMT}, the task of {\sc max-SMT} is to find a stable matching of maximum size, and the task of {\sc min-SMT} is to find a stable matching of minimum size. Here, the size of a matching is simply its number of matched pairs. We remark that due to Proposition \ref{lem:matchSame}, the study of both of these problems only makes sense in the presence~of~ties.

\vspace{-0.7em}
\subsection{Rotation Digraph}\label{sec:prelimsRotDig}
\vspace{-0.5em}

First, we stress that the rotation digraph is a notion defined only in the context of {\sc SM}, that is, in the absence of ties. Let us start by defining a rotation, which is an operation that transforms one stable matching to another. For this purpose, given $\mu\in{\cal S}$ and $m\in M^\star$, we let $s_\mu(m)$ denote the first woman $w$ succeeding $\mu(m)$ in $m$'s preference list, such that $w\notin\image(\mu)$ or $w$ prefers $m$ over $\mu^{-1}(w)$ (if such a woman exists). Now, a rotation is defined as follows.

\begin{definition}\label{def:rotation}
Let $\mu\in{\cal S}$. A {\em $\mu$-rotation} is an ordered sequence of pairs $\rho=((m_0,w_0),(m_1,w_1),$ $\ldots,(m_{r-1},w_{r-1})$ for some $r\in\{2,3,\ldots,n\}$, such that for all $i\in[r-1]_0$, $(m_i,w_i)\in\mu$ and $w_{(i+1)\mathrm{mod}\ r}=s_\mu(m_i)$.
For all $i\in[r-1]_0$, we say that $\rho$ {\em involves} $m_i$ and $w_i$.
\end{definition}

When $\mu$ is immaterial or clear from context, the term {\em rotation} replaces the term {\em $\mu$-rotation}. Given a $\mu$-rotation $\rho$, the {\em elimination of $\rho$} is the operation that modifies $\mu$ by matching each $m_i$ with $w_{(i+1)\mathrm{mod}\ r}$ rather than $w_i$. This operation results in a stable matching \cite{DBLP:journals/siamcomp/IrvingL86}. Let $R$ denote the set of all sequences of pairs $\rho$ for which there exists $\mu\in{\cal S}$ such that $\rho$ is a $\mu$-rotation. It is known that $|R|\leq n^2$. Moreover, for all $\rho\in R$, the agents involved in $\rho$ belong to $A^\star$ \cite{DBLP:books/daglib/0066875}.

\begin{proposition}[\cite{DBLP:journals/siamcomp/IrvingL86}]
Let $\mu\in{\cal S}$. There is a unique subset of $R$, denoted by $R(\mu)$, such that starting with $\mu_\emptyset$, there is an order in which the rotations in $R(\mu)$ can be eliminated to obtain $\mu$.
\end{proposition}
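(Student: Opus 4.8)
The plan is to establish existence and uniqueness of $R(\mu)$ separately. For existence I will build the elimination sequence iteratively starting from $\mu_\emptyset=\mu_M$, at each step eliminating a carefully chosen exposed rotation that makes progress toward the target $\mu$ without overshooting it. To formalize ``progress,'' define the men-preference quasi-order: for $\nu,\mu\in{\cal S}$ write $\nu\succeq_M\mu$ if $\pos_m(\nu(m))\le \pos_m(\mu(m))$ for every $m\in M^\star$. By man-optimality (Proposition~\ref{lem:menOptimal}) we have $\mu_\emptyset\succeq_M\mu$ for every $\mu\in{\cal S}$, so the process starts ``above'' $\mu$, and each rotation elimination will be shown to keep us above $\mu$ while strictly increasing $\sat_M$, which is bounded; hence the process must terminate, and it can only terminate at $\mu$ itself.

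The crux of the existence argument is the following claim: if $\nu\in{\cal S}$ satisfies $\nu\succeq_M\mu$ and $\nu\neq\mu$, then $\nu$ exposes a rotation whose elimination yields some $\nu'\in{\cal S}$ with $\nu'\succeq_M\mu$ and $\sat_M(\nu')>\sat_M(\nu)$. I would pick a man $m_0$ with $\pos_{m_0}(\nu(m_0))<\pos_{m_0}(\mu(m_0))$ and invoke the standard lattice duality that $\nu\succeq_M\mu$ forces women to weakly prefer $\mu$, i.e. $\pos_w(\mu^{-1}(w))\le \pos_w(\nu^{-1}(w))$ for all matched $w$. Applying this to $w^\star=\mu(m_0)$ shows $w^\star$ either lies outside $\image(\nu)$ or prefers $m_0$ to $\nu^{-1}(w^\star)$; since $w^\star$ also succeeds $\nu(m_0)$ in $m_0$'s list, $s_\nu(m_0)$ exists and $\pos_{m_0}(s_\nu(m_0))\le \pos_{m_0}(\mu(m_0))$. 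Tracing $m_{i+1}=\nu^{-1}(s_\nu(m_i))$ produces a finite cyclic sequence that is a $\nu$-rotation $\rho$, and I would verify that every man it involves satisfies the same target bound, so eliminating $\rho$ (which yields a stable matching by the cited fact of Irving and Leather) gives $\nu'\succeq_M\mu$ that moves at least $m_0$ strictly down his list. Taking the collection of rotations eliminated over the whole run yields a candidate set; one then checks these rotations are pairwise distinct so that the candidate is genuinely a subset of $R$.

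For uniqueness I would prove that each rotation is determined by any single pair it contains: for a man $m$ and woman $w$, there is at most one $\rho\in R$ having $(m,w)$ as one of its pairs, i.e. at most one rotation that moves $m$ away from $w$. Granting this, observe that along any elimination sequence from $\mu_\emptyset$ to $\mu$, each $m\in M^\star$ traverses the uniquely determined strictly descending (in his preference list) sequence of women from $\mu_\emptyset(m)$ down to $\mu(m)$, and each consecutive transition is effected by exactly the unique rotation that moves $m$ from that woman. Therefore the set of rotations appearing in any such sequence is precisely the set dictated by these per-man descent sequences, and this is independent of the chosen order of elimination and of the chosen path; this forced set is $R(\mu)$, giving uniqueness.

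The main obstacle is the existence claim, specifically verifying that the cycle traced by $m_{i+1}=\nu^{-1}(s_\nu(m_i))$ is a bona fide $\nu$-rotation all of whose moves respect $\pos_{m_i}(s_\nu(m_i))\le \pos_{m_i}(\mu(m_i))$, so that its elimination neither overshoots $\mu$ nor leaves ${\cal S}$; this rests entirely on the men/women duality of the stable-matching lattice and on the definition of $s_\nu$. The uniqueness step has a parallel subtlety: establishing that a single pair $(m,w)$ pins down its rotation, which uses that rotation elimination is the only operation changing partners and that the woman $s_\nu(m)$ to which a rotation moves $m$ is forced by $\nu$. Once these two structural facts are in place, both halves of the proposition follow by the bounded, monotone termination argument and the order-independence argument sketched above.
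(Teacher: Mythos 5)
The paper itself offers no proof of this proposition: it is imported wholesale from Irving and Leather \cite{DBLP:journals/siamcomp/IrvingL86}, so your sketch has to be judged against the classical argument rather than anything in the text. Your existence half essentially reproduces that classical argument and is sound: the descent claim (from any stable $\nu\succeq_M\mu$ with $\nu\neq\mu$ one can eliminate a rotation exposed in $\nu$ all of whose men land weakly above their $\mu$-partners) follows from the men/women duality exactly as you describe, the traced cycle is a bona fide $\nu$-rotation, and boundedness of $\sat_M$ forces termination at $\mu$. One detail you should add: for the trace $m_{i+1}=\nu^{-1}(s_\nu(m_i))$ to be well defined you need $s_\nu(m_i)\in\image(\nu)$; this holds because a woman unmatched in $\nu$ is unmatched in $\mu$ as well (Proposition \ref{lem:matchSame}), and an unmatched woman lying strictly above $\mu(m_i)$ in $m_i$'s list would then block $\mu$, while $\mu(m_i)$ itself is matched in $\nu$. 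Minor and fixable.

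The genuine gap is in the uniqueness half. Your reduction of the uniqueness of $R(\mu)$ to the lemma ``each pair $(m,w)$ lies in at most one rotation of $R$'' is valid---granting that lemma, each man's chain of partners is forced and order-independence of the eliminated set follows. But that lemma is itself a substantive structural theorem (it appears as a separate lemma in Gusfield and Irving \cite{DBLP:books/daglib/0066875}), and the justification you offer---that ``the woman $s_\nu(m)$ to which a rotation moves $m$ is forced by $\nu$''---does not prove it. The successor is indeed forced \emph{by $\nu$}, but the entire difficulty is that two distinct stable matchings $\nu_1\neq\nu_2$, both containing the pair $(m,w)$, could a priori expose two \emph{different} rotations through $(m,w)$ with different successors $s_{\nu_1}(m)\neq s_{\nu_2}(m)$; the candidate set defining $s_\nu(m)$ genuinely varies with $\nu$, since which women prefer $m$ to their current partners depends on the matching. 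Ruling this out needs machinery absent from your sketch, e.g.\ the meet/join (lattice) operations on $\cal S$, or the fact that a rotation moving $m$ from $w$ to $w'$ excludes any stable pair $(m,w'')$ with $w''$ strictly between $w$ and $w'$ in $m$'s list. You must also prove this pair-uniqueness lemma \emph{without} appealing to the uniqueness of $R(\mu)$---the statement you are trying to establish---since the two are easily conflated and a circular argument is a real risk in the order you have chosen.
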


Irving and Leather \cite{DBLP:journals/siamcomp/IrvingL86} studied the {\em rotation poset} $\Pi=(R,\prec)$. Here, $\prec$ is a partial order on $R$ such that $\rho\prec\rho'$ if and only if for every $\mu\in{\cal S}$ such that $\rho'\in R(\mu)$, $\rho\in R(\mu)$ as well. We say that $R'\subseteq R$ is a {\em closed set} if there does not exist $\rho\in R\setminus R'$ and $\rho'\in R'$ such that $\rho\prec\rho'$. Moreover, given $R'\subseteq R$, we let $\cl(R')$ denote the smallest closed set that contains $R'$. We also say that an order in which the rotations in $R'$ are eliminated is {\em $\prec$-compatible} if for all $\rho,\rho'\in R$ such that $\rho\prec\rho'$, $\rho$ is eliminated before $\rho'$.
Roughly speaking, the rotation poset describes how every stable matching can be derived from the man-optimal stable matching. More precisely,

\begin{proposition}[\cite{DBLP:journals/siamcomp/IrvingL86}]\label{lem:smCorrRots}
Let $R'\subseteq R$ be a closed set. Starting with $\mu_\emptyset$, eliminating the rotations in $R'$ in any  $\prec$-compatible order  is valid---at each step, where our current stable matching is some $\mu$, the rotation that we eliminate next is a $\mu$-rotation. Moreover, all $\prec$-compatible orders in which one eliminates the rotations in $R'$ result in the same stable matching.
\end{proposition}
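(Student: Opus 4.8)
The plan is to prove the two assertions together by induction along the elimination sequence, using as the central invariant the unique closed set $R(\mu)$ associated with a stable matching $\mu$ by the preceding proposition. Fix a $\prec$-compatible order $\rho_1,\rho_2,\ldots,\rho_k$ of the rotations in $R'$, let $\mu_0=\mu_\emptyset$, and let $\mu_j$ denote the matching obtained after eliminating $\rho_1,\ldots,\rho_j$ (once each such elimination is known to be legitimate). I would maintain the invariant that the prefix $R_j:=\{\rho_1,\ldots,\rho_j\}$ is a closed set and that it equals $R(\mu_j)$.

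First I would record the elementary combinatorial fact that every prefix $R_j$ is closed: if $\rho\prec\rho_i$ for some $i\le j$, then $\rho\in R'$ because $R'$ is closed, and $\rho$ precedes $\rho_i$ in the order because the order is $\prec$-compatible, so $\rho\in R_j$. The heart of the validity argument is then the following exposure lemma, which I would establish separately: for a stable matching $\mu$ and a rotation $\rho$, the sequence $\rho$ is a $\mu$-rotation if and only if $\rho\notin R(\mu)$ while $R(\mu)\cup\{\rho\}$ is closed, and in that case eliminating $\rho$ yields the matching $\mu'$ with $R(\mu')=R(\mu)\cup\{\rho\}$. Granting this, the induction is immediate: by the invariant $R(\mu_{j-1})=R_{j-1}$, and since $R_j=R_{j-1}\cup\{\rho_j\}$ is closed with $\rho_j\notin R_{j-1}$, the lemma certifies both that $\rho_j$ is a $\mu_{j-1}$-rotation and that $R(\mu_j)=R_j$, re-establishing the invariant.

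For the confluence claim, observe that the induction shows that \emph{any} $\prec$-compatible order of $R'$ terminates at a stable matching $\mu$ with $R(\mu)=R'$. It therefore suffices to argue that the map $\mu\mapsto R(\mu)$ is injective, i.e.\ that a stable matching is determined by its closed set. For this I would use the structural fact that the rotations involving a fixed man $m\in M^\star$ form a chain in $(R,\prec)$ along which $m$'s partner strictly descends his preference list; consequently $\mu(m)$ is pinned down by how many of these chain-rotations lie in $R(\mu)$, so $R(\mu)=R(\mu')$ forces $\mu=\mu'$. Combining injectivity with the fact that $R(\mu)=R'$ for every $\prec$-compatible terminal matching yields that all such orders produce the very same matching. (Alternatively, confluence can be obtained by a diamond-lemma argument: any two linear extensions of the restriction of $\prec$ to $R'$ differ by a sequence of transpositions of adjacent incomparable rotations, so it would suffice to show that two rotations simultaneously exposed at $\mu$ and incomparable under $\prec$ involve disjoint agents and hence commute.)

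I expect the main obstacle to be the exposure lemma together with the chain structure of the rotations involving a single agent, since these encapsulate essentially all of the rotation-poset theory of Irving and Leather and are not supplied by the statements assumed above; once they are in hand, both the inductive validity argument and the injectivity step become routine. Establishing the exposure lemma in particular requires analysing, for a $\mu$-rotation $\rho$, exactly which rotations become newly exposed and which cease to be exposed after the elimination, and then verifying that this bookkeeping is recorded precisely by passing from the closed set $R(\mu)$ to $R(\mu)\cup\{\rho\}$.
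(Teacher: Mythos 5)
First, a point of reference: the paper does not prove this proposition at all. It is imported verbatim from Irving and Leather \cite{DBLP:journals/siamcomp/IrvingL86}, so there is no in-paper argument to compare yours against; the review below judges your proposal on its own terms. Your scaffolding is sound and matches the classical development: the observation that every prefix of a $\prec$-compatible order of a closed set is itself closed is correct; organizing the validity claim as an induction with invariant $R(\mu_j)=R_j$ (leaning on the uniqueness proposition stated just before this one) is the right structure; and recovering $\mu$ from $R(\mu)$ via the per-man chains is a legitimate route to confluence, although it quietly invokes Proposition \ref{lem:totalOrder}, which the paper also only cites and which in fact appears \emph{after} the present statement, so it too would need an independent proof in a fully self-contained treatment.

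The genuine gap is the one you flag yourself: the entire mathematical content sits in the ``exposure lemma,'' which is stated but never proved. Its forward direction is manageable from the uniqueness proposition (eliminate $R(\mu)$, then $\rho$, and invoke uniqueness to get $R(\mu')=R(\mu)\cup\{\rho\}$, plus monotonicity of men's partners to see $\rho\notin R(\mu)$). But the direction your induction actually needs is the converse: if $\rho\notin R(\mu)$ and every $\prec$-predecessor of $\rho$ lies in $R(\mu)$, then $\rho$ is exposed at $\mu$. With the paper's purely semantic definition of $\prec$ ($\rho\prec\rho'$ iff every stable matching whose closed set contains $\rho'$ also contains $\rho$), under which $R(\mu)$ is closed tautologically, this converse is essentially the assertion that every closed set is realized by a stable matching reachable by eliminations --- which is the proposition being proved, in only slightly different clothing. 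It is the heart of Irving and Leather's poset/lattice theory, not a routine bookkeeping step. So the proposal is a faithful reduction of the statement to an equally deep unproven lemma: a correct outline of how the source literature proceeds, but not a proof.
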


Given $R'\subseteq R$, let $\mu_{R'}$ denote the stable matching $\mu$ such that $R(\mu)=\cl(R')$. By Proposition \ref{lem:smCorrRots}, this notation is well defined. Moreover, given $a\in A$, let $R(a)$ denote the set of all rotations that involve $a$. Our \FPT\ algorithms will crucially rely on the following proposition.

\begin{proposition}[\cite{DBLP:books/daglib/0066875}]\label{lem:totalOrder}
For any $m\in M$, $\prec$ is a total order on $R(m)$.
\end{proposition}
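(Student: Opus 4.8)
The plan is to show that the rotations involving a fixed man $m$ are forced to be eliminated in one rigid order, dictated by how far down $m$'s preference list they push him. If $m\notin M^\star$ then $m$ is unmatched in every stable matching, hence involved in no rotation, so $R(m)=\emptyset$ and the claim is vacuous; thus I assume $m\in M^\star$, so that $s_\mu(m)$ is defined.

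First I would record the monotonicity built into Definition \ref{def:rotation}: if a rotation $\rho$ involves $m$, then it contains a pair $(m,w)$ with $w=\mu(m)$ for the stable matching $\mu$ from which $\rho$ is eliminated, and it reassigns $m$ to $s_\mu(m)$, which by definition strictly succeeds $w$ in $m$'s list; hence every elimination of an $R(m)$-rotation strictly increases $\pos_m(\mu(m))$. Next I would pin down two structural facts. \textbf{(A)} Whenever an $R(m)$-rotation is eliminated from a stable matching $\mu$ with $\mu(m)=w$, the new partner $s_\mu(m)$ is exactly the next stable partner of $m$ after $w$ in $m$'s list: since the elimination yields a stable matching, $s_\mu(m)$ is itself a stable partner of $m$; and as $s_\mu(m)$ is by definition the \emph{first} woman after $w$ meeting the $s_\mu$ condition, no non-stable-partner strictly between $w$ and $m$'s next stable partner can intervene (else $m$ would be matched to a non-stable partner). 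Thus the ``before'' and ``after'' women of an $R(m)$-rotation are intrinsic to it, and $m$'s partner sweeps through his stable partners consecutively, never skipping one. \textbf{(B)} Each pair $(m,w)$ is broken by at most one rotation of $R$; this is part of the standard rotation-poset theory \cite{DBLP:journals/siamcomp/IrvingL86,DBLP:books/daglib/0066875}.

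Combining (A) and (B): enumerating the stable partners of $m$ as $w_0,w_1,\ldots,w_k$ with $\pos_m(w_0)<\pos_m(w_1)<\cdots<\pos_m(w_k)$, where $w_0=\mu_\emptyset(m)=\mu_M(m)$ and $w_k=\mu_W(m)$, there is for each $i\in[k]$ exactly one rotation $\rho_i\in R(m)$ moving $m$ from $w_{i-1}$ to $w_i$, and these exhaust $R(m)$. I would then prove $\rho_i\prec\rho_{i+1}$ directly: take any $\mu\in{\cal S}$ with $\rho_{i+1}\in R(\mu)$ and fix a $\prec$-compatible elimination order from $\mu_\emptyset$ to $\mu$, which exists by Proposition \ref{lem:smCorrRots}. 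At the step eliminating $\rho_{i+1}$, man $m$ is matched to $w_i$; since $m$ starts at $w_0$ and, by (A), descends through his stable partners consecutively, he must earlier have departed from $w_{i-1}$, and by (B) the unique rotation effecting that departure is $\rho_i$. Hence $\rho_i\in R(\mu)$, giving $\rho_i\prec\rho_{i+1}$. Therefore $\rho_1\prec\rho_2\prec\cdots\prec\rho_k$, and since $\prec$ is a partial order, any two distinct elements of $R(m)$ are comparable, i.e.\ $\prec$ restricts to a total order on $R(m)$.

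The main obstacle is fact (B) together with the global consistency it encodes: a single rotation is an entire cyclic object, so a priori the same man could leave the same woman via two different rotations arising from two different stable matchings. Ruling this out---equivalently, showing that the ``next stable partner'' transition of $m$ is realized by one and the same rotation wherever it occurs---is exactly the nontrivial content supplied by the rotation-poset machinery. Everything else (the monotonicity above, the no-skipping argument of (A), and the forcing argument for $\rho_i\prec\rho_{i+1}$) is comparatively routine once (B) is in hand.
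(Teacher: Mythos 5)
First, note that the paper does not prove this proposition at all: it is imported verbatim from Gusfield and Irving \cite{DBLP:books/daglib/0066875}, so there is no in-paper argument to compare against. Your outline is in fact the standard textbook route: enumerate the stable partners $w_0,\ldots,w_k$ of $m$ in decreasing order of preference, show that each rotation in $R(m)$ realizes one consecutive transition $w_{i-1}\to w_i$, invoke uniqueness of the rotation containing a given pair $(m,w_{i-1})$, and then force $\rho_i\prec\rho_{i+1}$ by walking through a $\prec$-compatible elimination order (Proposition \ref{lem:smCorrRots}). The skeleton, the handling of $m\notin M^\star$, and the final forcing argument are all sound.

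The genuine gap is in your justification of (A), specifically its ``never skipping'' half. What must be ruled out is a \emph{stable} partner $w''$ of $m$ with $\pos_m(w)<\pos_m(w'')<\pos_m(s_\mu(m))$; your parenthetical (``no non-stable-partner \ldots\ can intervene, else $m$ would be matched to a non-stable partner'') does not address this, and the definition of $s_\mu$ alone cannot: it only tells you that every intervening woman is matched in $\mu$ to a man she strictly prefers to $m$. To conclude that such a woman is not a stable partner of $m$ you need the opposite-preferences (polarity) property of stable matchings: if $(m,w'')\in\nu$ for some stable $\nu$, then $m$ and $w''$ rank $\mu$ and $\nu$ oppositely, so $w''$ preferring $\mu^{-1}(w'')$ to $m$ forces $m$ to prefer $w''$ to $\mu(m)=w$, contradicting that $w''$ lies below $w$ on $m$'s list. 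That lemma is standard (it is in the very book the paper cites) but it is neither among the propositions recorded in the paper nor derivable from what you state, so it must be invoked or proved; without it, the enumeration of $R(m)$ as $\rho_1,\ldots,\rho_k$ and the claim that $m$'s trajectory visits every ``from''-woman both collapse --- indeed, a hypothetical skipping rotation is exactly what an incomparable pair in $R(m)$ would look like, so this is not a cosmetic omission but the crux. Fact (B) you import openly, which is acceptable for a result the paper itself only cites, but your proposal is then a reduction of the proposition to two other Gusfield--Irving lemmas rather than a self-contained proof, and only one of those two reductions (the one resting on (B)) is currently carried out correctly.
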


Irving, Leather and Gusfield \cite{DBLP:journals/jacm/IrvingLG87} studied digraphs that are a compact representation of $\Pi$. Specifically, we say that a digraph is the {\em rotation digraph of $\Pi$}, denoted by $D_\Pi$, if it is the directed acyclic graph (DAG) of minimum size whose transitive closure is isomorphic to $\Pi$.

\begin{proposition}[\cite{DBLP:journals/jacm/IrvingLG87}]\label{prop:rotDig}
The rotation digraph $D_\Pi$ can be computed in time $\OO(n^2)$.
\end{proposition}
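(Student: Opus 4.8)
The plan is to give a constructive algorithm that builds a sparse directed acyclic graph on the vertex set $R$ and runs in time $\OO(n^2)$, matching the claimed bound. I would split the work into three phases: enumerating the rotations, reading off the precedence edges, and verifying correctness together with minimality.

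First I would enumerate all of $R$ by the elimination procedure underlying Proposition \ref{lem:smCorrRots}. Starting from $\mu_\emptyset=\mu_M$, I repeatedly locate an exposed rotation by following the successor map $m\mapsto\mu^{-1}(s_\mu(m))$ on the currently matched men, which traces out a directed cycle that is precisely a $\mu$-rotation; eliminating it and continuing until no woman can be improved reaches $\mu_W$ and exposes every rotation exactly once. The key to the time bound is an amortized argument: over all eliminations each man $m$ is only ever moved monotonically down his own preference list, so $m$ participates in at most $|\domain(\pos_m)|\le n$ rotations. Hence $\sum_{\rho\in R}|\rho|=\sum_{m\in M^\star}|R(m)|=\OO(n^2)$, and this sum dominates the total cost of finding and eliminating all rotations. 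While doing this I would label each pair $(m,w)$ that ever occurs with the rotation that creates it (moves $m$ onto $w$) and the rotation that destroys it (moves $m$ off $w$); these labels are recorded in $\OO(1)$ per pair.

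Next I would read off the two families of precedence edges from the labels. The \emph{same-agent} edges realize the total order on each $R(m)$ guaranteed by Proposition \ref{lem:totalOrder}: if $\rho$ moves $m$ onto $w$ and $\sigma$ subsequently moves $m$ off $w$, then $\rho\prec\sigma$, giving one edge that is found in $\OO(1)$ from the create/destroy labels of $(m,w)$. The \emph{cross-agent} edges encode the dependencies forced by the successor relation $s_\mu$: a rotation that uses a woman $w$ as the improved partner $s_\mu(m)$ of some man can be eliminated only after the rotation that last moved $w$ into the requisite position, and this predecessor is again identified in $\OO(1)$ from $w$'s labels. Since the number of incidences across all rotations is $\sum_{\rho}|\rho|=\OO(n^2)$, both families together contain $\OO(n^2)$ edges and are produced in $\OO(n^2)$ time. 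A final cleanup removes the transitively redundant edges of this $\OO(n^2)$-edge DAG so that what remains is the unique transitive reduction, i.e.\ the minimum digraph $D_\Pi$; exploiting the layered structure of the edges, this cleanup can also be arranged to run in $\OO(n^2)$.

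The part I expect to be delicate is the correctness argument rather than the bookkeeping: I would have to show that the transitive closure of the two edge families is exactly the partial order $\prec$ of Irving and Leather --- both that no required precedence is missed (completeness, which rests on analysing why $w=s_\mu(m)$ can hold only after certain earlier eliminations) and that every edge is genuinely needed, so the construction is minimal. This is precisely the structural content established in \cite{DBLP:journals/jacm/IrvingLG87}, and I would lean on Propositions \ref{lem:smCorrRots} and \ref{lem:totalOrder} to organise it. The remaining risk is the amortized analysis itself: I would need to argue carefully that the monotone downward movement of each man bounds not only the number of rotations he joins but also the cumulative cost of the successor-pointer traversals, so that the whole enumeration, and not merely its output, stays within $\OO(n^2)$.
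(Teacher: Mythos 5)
First, note that the paper itself offers no proof of Proposition \ref{prop:rotDig}: it is quoted as a black box from Irving, Leather and Gusfield, so your attempt has to be judged against the known construction rather than against anything in this text. Your first two phases do reconstruct that construction correctly: enumerating $R$ by repeated elimination starting from $\mu_\emptyset$, with the amortized bound $\sum_{\rho\in R}|\rho|=\sum_{m}|R(m)|=\OO(n^2)$ coming from the monotone downward movement of each man in his own list, and then extracting the two edge families (an edge from the rotation that creates a pair $(m,w)$ to the rotation that destroys it, and an edge forced by the successor relation $s_\mu$) is exactly how the sparse digraph with $\OO(n^2)$ edges whose transitive closure is $\Pi$ is obtained in the literature. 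Deferring the completeness of these two rules to the structural results of the cited paper, organized around Propositions \ref{lem:smCorrRots} and \ref{lem:totalOrder}, is a legitimate move at this level of detail.

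The genuine gap is your third phase. The paper defines $D_\Pi$ as the DAG of \emph{minimum} size whose transitive closure is isomorphic to $\Pi$, i.e., the transitive reduction (Hasse diagram) of $\Pi$, and you assert that the transitively redundant edges of your $\OO(n^2)$-edge DAG can be stripped in $\OO(n^2)$ time by ``exploiting the layered structure.'' This is not justified and is not a routine step: deciding whether an edge $(u,v)$ is redundant requires knowing whether $v$ is reachable from $u$ by some other path, and transitive reduction of a DAG is, in general, as hard as transitive closure (essentially Boolean matrix multiplication). Here the DAG has up to $\Theta(n^2)$ vertices and $\Theta(n^2)$ edges, so the generic approach of one reachability search per edge costs $\Theta(n^4)$, and neither of your edge families provides a layering from which redundancy can be read off locally --- a same-agent edge between two consecutive rotations of $R(m)$ can perfectly well be implied by a long path through rotations involving other agents. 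As written, your argument establishes only that \emph{some} DAG with transitive closure $\Pi$ and $\OO(n^2)$ edges is computable in $\OO(n^2)$ time (which, incidentally, is all this paper ever uses --- in Section \ref{sec:seth} it explicitly works with a supergraph of the rotation digraph); the minimality demanded by the paper's definition of $D_\Pi$ is precisely the part your proof leaves open.
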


We let $G_\Pi$ denote the underlying undirected graph of $D_\Pi$. In light of Proposition \ref{prop:rotDig}, when we design our algorithms, we may assume that we have $D_\Pi$ and $G_\Pi$ at hand.

\vspace{-0.7em}
\section{Overview}\label{sec:overview}
\vspace{-0.5em}

In this section, we explain the main ingredients underlying our results. 

\smallskip
\myparagraph{W[1]-Hardness Results.} Our first set of results, given in Section \ref{sec:w1hardness}, establishes Theorem \ref{thm:w1hardIntro}. The source problem of the reductions developed in this section is the {\sc Multicolored Clique} problem. Our four constructions share common features, which we believe to be relevant to other reductions meant to prove that optimization versions of {\sc SM} and {\sc SMT} are \WOH\ with respect to various structural parameters. First, all of our reductions introduce the same sets of ``basic agents''. Roughly speaking, we introduce {\em two} basic men to represent each color class, {\em two} basic women to represent each vertex, and {\em one} basic woman to represent each edge. Second, the preference lists of the basic men representing color classes are set in a special form, to which we refer as the form of a {\em leader}. Informally, the preference lists of the two men representing a color class are {\em distorted} mirror images of one another. More precisely, each man among these two men ranks his own set of women representing vertices of the appropriate color class, where if we view the two women representing the same vertex as the same woman, then it is seen that the order in which one ranks these women is opposite to the order in which the other one ranks them. Both of these men ``embed'' in their preference lists women representing edges between women representing vertices, and both of them prefer a woman representing vertex $v$ over all women representing edges incident to $v$. The third common feature is that all of our four reductions then proceed to introduce {\em similar} sets of agents that are meant to construct vertex and edge selection gadgets. In particular, in all of our reductions, the only interaction between vertex selector gadgets and edge selector gadgets is via the special men whose preference lists are of the form of a leader. Moreover, all of our reductions introduce quite similar definitions of ``enriched'' sets of agents who {\em locally} interact with basic agents. Hence, in all of our reductions, we ensure in a somewhat similar manner that basic agents, excluding those special men whose preference lists are of the form of a leader, can replace partners in a manner that does not enforce too many other ``close-by'' men to change partners as well in order to maintain stability.

The principles described above are quite general as we prove that they are useful for two natural ``types'' of optimization problems that may a priori seem different. The first type is the one where the challenge lies in the  {\em output} that is enforced to comply with a {\em satisfaction target value}, and the second type is the one where the challenge lies in the {\em input} that is generalized to include ties while the output only needs to be either large enough or small enough. When we examine each type of problems separately, it is revealed that our reductions to {\sc SESM} and {\sc BSM} share many other similar ideas, and the same holds true when we compare our reductions to {\sc max-SMT} and {\sc min-SMT}. In fact, our reduction to {\sc BSM} is a modification of our reduction to {\sc SESM} where we carefully plug-in different numbers of ``dummy agents'' into the preference lists of some ``central'' agents to manipulate the different men to choose partners in a coordinate manner without relying on stability, but merely on subtle analysis of the satisfaction value.

For a concrete illustrative example, let us give a high-level overview of other elements incorporated in one of our reductions, namely, the reduction to {\sc SESM}. Here, we begin by introducing the so called Original Vertex Selector gadget, which is a structure where one man representing a color class selects one woman representing a vertex of that color class as his partner. The selection of the partners of the women representing all other vertices of that color class is done locally by introducing an enriched set of men. Then, we introduce the Mirror Vertex Selector gadget, which handles the other man representing the same color class in an {\em almost} symmetric manner. By embedding dummy agents into the preference lists of the basic agents involved in Original and Mirror Vertex Selector gadgets, we can ensure that in all of these gadgets together, a predetermined number of men would be matched to their most preferred women. We remark that the ability to set such a predetermined value is crucially dependent on the fact that each color class is represented by {\em two} gadgets. Next, to ensure that the two gadgets representing a color class are consistent (in the sense that the two men representing the same color class are matched to women representing the same vertex), we introduce a new special gadget per vertex $v$, called the Consistency gadget. This gadget consists of only four agents, where one of the men in this gadget ranks both women (outside the gadget) that represent the vertex $v$. Here, we need to ensure that among all of the gadgets representing the same color class, the ``configuration'' of exactly one gadget will be such that its men will not attain their best partner. Here, the gadgets cannot interact directly by introducing, for example, new common agents that agents of different Consistency gadgets would rank, as we need to ensure that the treewidth of the primal graph of the output instance is small. Hence, to coordinate between these gadgets, we rely on carefully chosen numbers of dummy agents that are added to the preference lists of their agents. The numbers involved in this gadget are of a different magnitude than those involved in the Original and Mirror Vertex Selector gadgets. However, we cannot also assign each color class a number of a different magnitude, as then we would end up assigning numbers of magnitudes such as $n^k$, which means that we would need to insert $n^k$ dummy agents and hence the construction would not be done in ``FPT time''. Nevertheless, we are able to overcome this difficulty by using a simple equation that has a unique solution of the form that we want, and using the appropriate coefficients as a guide for the number of dummy agents to be inserted.\footnote{By using a different equation, we are able to reuse our construction in the context of {\sc BSM}, and hence it seems like the applicability of our construction is quite broad (where one only needs to be able tune the equation according to the target measure at hand).}

Afterwards, we introduce the Edge Selector gadget. Here, we define one gadget per edge, which involves the woman representing that edge. Such a gadget indicates that an edge has been selected by being in the configuration where the woman representing the edge has {\em not} attained her best partner. Notice that unlike the previously discussed gadgets, which are analyzed from the perspective of the men, these gadgets are analyzed from the perspective of the women. In particular, while in the Consistency gadgets the total satisfaction of women is forced to be low, here the total satisfaction of the women would be forced to be high. However, we stress that this difference is not employed to attain a certain target value, but it is used to control which matchings are stable and to avoid introducing any form of direct interaction between Edge Selector gadgets and Consistency gadgets (which is necessary to ensure that the treewidth of the output is small). In particular, having set up all of the previously mentioned gadgets, we still encounter a significant imbalance between the satisfaction of men and women in the stable matchings of the form that we would like to represent solutions. However, this issue is easily handled by introducing ``garbage collector'' agents which counterweight this imbalance properly.

\smallskip
\myparagraph{\XP-Algorithms.} Our second set of results, given in Section \ref{sec:xp}, establishes Theorem \ref{thm:xpIntro}. The proof of this theorem is based on a standard application of the method of dynamic programming over nice tree decompositions, and it is sketched in this paper only for the sake of completeness.

\smallskip
\myparagraph{\FPT-Algorithms.} Our third set of results, given in Section \ref{sec:fpt}, leads to the establishment of Theorem \ref{thm:fptIntro}. For this purpose, we present an approach that deviates from standard applications of DP over tree decompositions. First, the proof of its correctness integrates new insights into the structure of rotation digraphs that might be of independent interest. To formulate these insights, we introduce new notions that may be adapted to tackle other optimization versions of {\sc SM}.
Second, while in standard DP elements that have not yet been examined determine only how to extend/modify partial solutions, in our DP such elements (rotations) are part of the partial solutions themselves. Thus, we need to design a delicate mechanism that maintains consistency between the manner in which we handled rotations in the bags below the current one, and the manner in which we anticipate handling rotations in other bags.
Third, we face difficulties stemming from the fact that to update partial solutions, we need to change the assignments of women to men that correspond to these partial solutions, yet the information we have at hand does not directly reveal the assignments. By associating a directed path with each man, and tracing the manner in which the path ``enters and leaves'' every bag of the tree decomposition, we are able to deduce sufficient information on the assignments. We remark that this solution introduces yet another difficulty, namely, the need to store and maintain ``illegal scores'' that are associated with assignments of several men to the same woman.

We proceed with a more detailed description of some ingredients of our approach. First, to unify the principles underlying our approach, we introduce a problem more generic than {\sc SESM} and {\sc BSM},\footnote{By analyzing {\sc SESM} and {\sc BSM} separately, we also show how to speed-up the generic algorithm.} where the objective is to determine, for all $t_M,t_W\in\mathbb{N}$, whether there exists $\mu\in{\cal S}$ such that both $\sat_M(\mu)=t_M$ and $\sat_W(\mu)=t_W$. 
To describe our approach compactly, we note that a {\em state} is a pair $(v,R')$ where $v\in V(T)$ and $R'\subseteq \beta(v)$. Moreover, $\mu\in{\cal S}$ is said to be {\em compatible with} a state $(v,R')$ if $R(\mu)\cap\beta(v)=R'$. Now, to design our algorithm, we first observe that each man can be associated with a directed path of $D_\Pi$ whose vertex-set is a superset of $R(m)$. With respect to a given state, we then proceed to introduce special ``entry'' and ``exit'' points for each man, based on the directed path associated with him. These special points allow us to further identify a subpath of the path of each man that captures the current most updated information that we have about his partner, where the internal vertices (which are rotations) of the paths encompass all changes that might occur when we analyze future states. We are then able to define exactly which men have been already settled with a partner. Next, again with respect to a given state, but also with respect to a stable matching $\mu$, we assign a woman to each man. This woman may be either the same one that $\mu$ assigns to $m$ (if $m$ is settled) or some specific woman whose choice is based on the most updated information we could extract. We remark that such an assignment, due to inherent uncertainties at intermediate steps of the computation, may assign several men to the same woman. Having assigned tentative partners to men, we are able to introduce definitions related to tentative amounts of satisfaction.

The description above sets the background for the study of our algorithm. The algorithm itself is very short and easily implementable. The computation simply fills a table that consists of Boolean entries of the form {\sf N}$[v,R',t_M,t_W]$, where $v\in V(T)$, $R'\subseteq\beta(v)$ and $t_M,t_W\in [n^2]$. Here, {\sf N}$[v,R',t_M,t_W]=1$ if and only if there exists $\mu\in{\cal S}$ that is compatible with the state $(v,R')$ and where the ``tentative'' amounts of satisfaction of men and women are $t_M$ and $t_W$, respectively. We stress that these amounts of satisfaction are {\em not} $\sat_M(\mu)$ and $\sat_W(\mu)$ (such amounts simply {\em cannot} be computed when we handle the entry {\sf N}$[v,R',t_M,t_W]$ since we do not have enough information at hand at that point to extract them). In the computation of an introduce node, in particular, we need to correct our tentative amounts of satisfaction. Having defined the algorithm, the technical part of the analysis begins. Here, we need to carefully analyze each type of node of the nice tree decomposition, and prove that the definitions we have set up as background indeed allow us to trace the paths of the men correctly, and to obtain precise amounts of satisfaction at the end. In this context, we present a sequence of lemmata (for each type of node) that verify consistencies between types of men, partners and tentative amounts of satisfaction deduced for the current node and for the child(ren) of the current node.

\smallskip
\myparagraph{Lower Bounds Based on \SETH.} Our last set of results, given in Section \ref{sec:seth}, establishes of Theorem \ref{thm:sethIntro}. Here, by plugging in coefficients associated with two different equations to (essentially) the same construction, we are able to handle both {\sc SESM} and {\sc BSM}. Let us now give a high-level overview of the reduction to {\sc SESM}. We remark that some ideas relevant to our \WO-hardness results also underlie the proofs of our \SETH-based results, yet here we also introduce several new ideas on top of them. The source problem of our reduction is the {\sc $s$-Sparse $p$-CNF-SAT} problem (for some appropriate choice of $s$ and $p$), which is the special case of {\sc CNF-SAT} where the size of each clause is at most $p$ and there are at most $sn$ clauses in total.

We begin by partitioning the set of all clauses into a ``large'' number of (pairwise-disjoint) small sets of clauses, where the size of each small set of clauses is fixed according to $s$ and $p$. The necessity of having this partition stems from our need to ensure that the number of dummy agents that we need to insert into preference lists of other agents would not be too large (yet we would still need an exponential number of dummy agents). For the sake of clarity of explanation, let us think of each small set of clauses as a color class. For each color class $t$, we enumerate all truth assignments that satisfy all of the clauses of that color $t$. Each such truth assignment is represented by two sets of variables, the {\em true set} and the {\em false set}, where the first contains all those variables that the assignment sets to true {\em and} which appear in clauses of color $t$, and the second contains all those other variables appearing in clauses of color $t$. Each variable $x_t$ is represented by two basic agents called $m_t$ and $w_t$, each true set associated with color class $i$, indexed by $j$ in that color class, is represented by two basic agents called $m^i_j$ and $w^i_j$, and each false assignment associated with color class $i$, indexed by $j$ in that color class (where the two sets indexed $j$ in the same color class correspond to the same truth assignment), is represented by two basic agents called $\overline{m}^i_j$ and $\overline{w}^i_j$. In addition to these agents, we also employ two ``garbage collector'' agents who are meant to counterweight imbalance in satisfaction of men and women that is present in ``desirable configurations'' of the gadgets described below.

For each variable $x_t$, the Variable Selector gadget consists of four agents, the basic agents $m_t$ and $w_t$ as well as two ``enriched'' agents, $\widehat{m}_t$ and $\widehat{w}_t$. The enriched agents do not rank any agent outside the gadget, and their sole purpose is to allow the gadget to encode two internal configurations, one where $m_t$ is matched to $w_t$ (while $\widehat{m}_t$ is matched to $\widehat{w}_t$), and the other where $m_t$ is matched to  $\widehat{w}_t$ while $w_t$ is matched to $\widehat{m}_t$. Having two such local agents also enables us to allow a basic agent $a$ to rank some other basic agent $b$ and yet $a$ and $b$ would never be matched to one another in any stable matching.
The first configuration indicates that $x_t$ should be assigned false, while the second one indicates that $x_t$ should be assigned true. The man $m_t$ is defined to prefer all women representing false sets that contain $x_t$ to the woman $\widehat{w}_t$, while the woman $w_t$ is defined to prefer all men representing true sets that contain $x_t$ to the man $m_t$.

Next, for every true set, we introduce the Truth Selector gadget, which in addition to $m^i_j$ and $w^i_j$, consists of the two enriched agents $\widehat{m}^i_j$ and $\widehat{w}^i_j$. Here, the configuration where $m^i_j$ is matched to $w^i_j$ indicates that the truth assignment is selected, and the configuration where $m^i_j$ is not matched to $w^i_j$ indicates that the truth assignment is not selected. To ensure that for each color class, {\em exactly one} truth assignment would be selected, we insert dummy agents into the preference lists of the basic agents of Truth Selector gadgets whose numbers correspond to the coefficients of a certain equation. While the number color classes is large, it is still significantly smaller than $n$, and thus the magnitute of the numbers involved is small enough for our purpose. 
The man $m^i_j$ is defined to prefer $w_t$  to $\widehat{w}^i_j$ for all women $w_t$ that rank $m^i_j$, thus ensuring that if the truth assignment in which he is involved is selected, all of the appropriate Variable Selector gadgets would have to be selected as well to maintain stability. Moreover, $m^i_j$ is also defined to prefer all women $\overline{w}^i_k$ for $k\neq j$ over $\widehat{w}^i_j$ (the purpose of this setting would be clarified below).

Finally, for every false set, we introduce the False Selector gadget, which in addition to $\overline{m}^i_j$ and $\overline{w}^i_j$, consists of the two enriched agents $\widehat{\overline{m}}^i_j$ and $\widehat{\overline{w}}^i_j$. Here, unlike the previous gadgets, the configuration where $m^i_j$ is matched to $w^i_j$ indicates that the truth assignment is {\em not} selected, and the configuration where $m^i_j$ is not matched to $w^i_j$ indicates that the truth assignment is selected. The woman $\overline{w}^i_j$ is defined to prefer $m_t$ to $\widehat{\overline{m}}^i_j$ for all men $m_t$ that rank $\overline{w}^i_j$, thus ensuring that if the truth assignment in which she is involved is selected, all of the appropriate Variable Selector gadgets would have to {\em not} be selected to maintain stability. Moreover, $\overline{w}^i_j$ is also defined to prefer all men $m^i_k$ where $k\neq j$ to $\widehat{\overline{m}}^i_j$. This ensure that if the current false set is selected, the only true set of the same color class that can be selected is the one that is complementary to this false set. We remark that to further conveniently control the selection of false sets, we also insert dummy agents into the preference lists of their basic agents.

Having defined the reduction, we proceed to precisely characterize {\em every} stable matching of the output. Then, we are also able to {\em precisely identify which set of rotations gives rise to which stable matching}. Overall, we can then explicitly construct the rotation digraph of the output. In our proof, for the sake of simplicity, we actually construct a supergraph of the rotation digraph. We are thus able to show that the rotation digraph of our instance is simply a DAG with three layers such that the middle layer contains exactly $n$ vertices, and that if we remove the middle layer from that graph, the subgraph that remains is just a collection of ``small'' connected components---roughly speaking, each such connected component would be a representation of one color class. Hence, we are able to conclude that the treewidth (of the underlying undirected graph) of the rotation digraph is not much larger than $n$.

\vspace{-0.7em}
\section{Primal Graph: W[1]-Hardness}\label{sec:w1hardness}
\vspace{-0.5em}

In this section, we prove Theorem \ref{thm:w1hardIntro}, based on the approach described in Section~\ref{sec:overview}. Throughout this section, the notation $\tw$ refers to the treewidth of the primal graph. The source of our reduction is {\sc Multicolored Clique}, which is defined as follows. The input of {\sc Mulicolored Clique} consists of a graph $G$, a positive integer $k$, and a partition $(V^1,V^2,\ldots,V^k)$ of $V(G)$, where for all $i,j\in[k]$, $|V^i|=|V^j|$. Here, the parameter is $k$. For every index $i\in[k]$, the set $V^i$ is called {\em color class $i$}. The task is to decide whether $G$ contains a clique that consists of exactly one vertex from each color class. We denote $n=|V(G)|$ and $p=n/k$. Moreover, for every color class $i\in[k]$, we denote $V^i=\{v^i_1,v^i_2,\ldots,v^i_p\}$, and for every two color classes $i,j\in[k]$ where $i<j$, we denote $E^{i,j}=\{\{u,v\}\in E(G): u\in V^i, v\in V^j\}$ and $q^{i,j}=|E^{i,j}|$. Accordingly, we denote $E^{i,j}=\{e^{i,j}_1,e^{i,j}_2,\ldots,e^{i,j}_{q^{i,j}}\}$. For every vertex $v\in V(G)$, we denote the set of edges incident to $v$ in $G$ by $E(v)$. We assume w.l.o.g.~that $|E(G)|\geq n$. Furthermore, we implicitly assume that $|E(G)|$ is significantly larger than $k$ (else the problem is solvable by a parameterized algorithm). For our purpose, it is sufficient to assume that $|E(G)|>10^k$.

\begin{proposition}[\cite{DBLP:journals/tcs/FellowsHRV09,DBLP:books/sp/CyganFKLMPPS15}]\label{prop:multiClique}
The {\sc Multicolored Clique} problem is \WOH\ with respect to $k$. Moreover, unless \ETH\ fails, {\sc Multicolored Clique} cannot be solved in time $f(k)\cdot n^{o(k)}$ for any function $f$ that depends only on $k$.
\end{proposition}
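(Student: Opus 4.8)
The plan is to give a standard parameterized reduction from the ordinary \textsc{Clique} problem, which is the canonical \WOH\ problem and whose ETH-based lower bound is already known. Given a \textsc{Clique} instance $(G,k)$ with $|V(G)|=n$, I would build a \textsc{Multicolored Clique} instance $(G',k,(V^1,\ldots,V^k))$ as follows. Each color class $V^i$ is a fresh copy of $V(G)$, so $V^i=\{v^i: v\in V(G)\}$ and $|V(G')|=kn$. For every pair of distinct colors $i\neq j$ and every pair of vertices $u,v\in V(G)$, I would add the edge $\{u^i,v^j\}$ to $G'$ precisely when $\{u,v\}\in E(G)$; in particular, no edge is ever placed between two copies $u^i,u^j$ of the same original vertex, and no edge is placed inside a single color class. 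This construction is computable in polynomial time and leaves the parameter $k$ unchanged.

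For correctness I would argue both directions. If $\{w_1,\ldots,w_k\}$ is a clique of $G$, then $\{w_1^1,w_2^2,\ldots,w_k^k\}$ picks one vertex per color class and every cross-color pair is an edge of $G'$ by construction, so it is a multicolored clique. Conversely, suppose $G'$ has a multicolored clique $\{v^1_{a_1},\ldots,v^k_{a_k}\}$. The key observation is that the chosen original vertices $v_{a_1},\ldots,v_{a_k}$ must be pairwise distinct: if $v_{a_i}=v_{a_j}$ for some $i\neq j$, then $v^i_{a_i}$ and $v^j_{a_j}$ are copies of the same vertex and hence non-adjacent in $G'$, contradicting that they lie in a common clique. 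Since every cross-color pair in the multicolored clique is an edge, the corresponding original pairs are edges of $G$, so $\{v_{a_1},\ldots,v_{a_k}\}$ is a $k$-clique of $G$. This equivalence, together with the fact that the reduction preserves $k$ and runs in polynomial time, establishes that \textsc{Multicolored Clique} is \WOH.

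For the refined lower bound I would invoke the known result that, unless \ETH\ fails, \textsc{Clique} admits no algorithm running in time $f(k)\cdot n^{o(k)}$. The point I would emphasize is that the reduction above inflates the vertex count only by a factor of $k$, producing an instance on $N=kn$ vertices with the same parameter. Hence a hypothetical $f(k)\cdot N^{o(k)}$ algorithm for \textsc{Multicolored Clique} would solve \textsc{Clique} in time $f(k)\cdot (kn)^{o(k)}$; using $k\le n$ (otherwise the \textsc{Clique} instance is trivially negative) this collapses to $f(k)\cdot n^{o(k)}$, contradicting the ETH lower bound for \textsc{Clique}.

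The routine parts are the two directions of correctness; the only place demanding care — and hence the main obstacle — is the forward transfer of the ETH bound, where one must verify that the $kn$-to-$n$ vertex blow-up is absorbed inside the $n^{o(k)}$ term rather than degrading the exponent. Ensuring the ``distinct original vertices'' property of a multicolored clique (which is exactly what justifies omitting the same-vertex cross-color edges) is the one structural subtlety in the construction itself.
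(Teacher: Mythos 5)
Your proof is correct: the reduction from {\sc Clique} (color classes are copies of $V(G)$, cross-color edges mirror $E(G)$, copies of the same vertex are non-adjacent), together with the transfer of the known $f(k)\cdot n^{o(k)}$ \ETH\ lower bound for {\sc Clique}, is exactly the standard argument. Note that the paper itself offers no proof of this proposition --- it imports it as a black box from the cited references --- and your argument is essentially the one given in those sources, including the key check that the blow-up from $n$ to $kn$ vertices is absorbed into the $n^{o(k)}$ term (via $k\le n$), so there is nothing to fault and nothing that diverges from the intended proof.
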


Each section below is devoted to one reduction.

\subsection{Sex Equal Stable Marriage}\label{sec:w1SESM}

First, we prove that {\sc SESM} is \WOH, and that unless \ETH\ fails, {\sc SESM} cannot be solved in time $f(\tw)\cdot n^{o(\tw)}$ for any function $f$ that depends only on $\tw$.

\subsubsection{Reduction}

Let $I=(G,(V^1,V^2,\ldots,V^k))$ be an instance of {\sc Multicolored Clique}. We now describe how to construct an instance $\red_{SESM}(I)=(M,W,\{\pos_m\}|_{m\in M},\{\pos_w\}|_{w\in W})$ of {\sc SESM}.

\medskip
\myparagraph{Basic Agents.} We introduce the following sets of basic agents, which would be part of all of our reductions.
\begin{itemize}
\itemsep0em 
\item $M_{\bas}=\{m^1,m^2,\ldots,m^k\}$. Each man $m^i$ would be the basic vertex that represents color class $i$.
\item $\widehat{M}_{\bas}=\{\widehat{m}^1,\widehat{m}^2,\ldots,\widehat{m}^k\}$. Each man $\widehat{m}^i$ would be the basic vertex that is the mirror of the vertex $m^i$.
\item For every color class $i\in[k]$, $W^i_{\bas}=\{w^i_1,w^i_2,\ldots,w^i_p\}$. Each woman $w^i_j$ would be the basic vertex that represents the selection of $v^i_j$.
\item For every color class $i\in[k]$, $\widehat{W}^i_{\bas}=\{\widehat{w}^i_1,\widehat{w}^i_2,\ldots,\widehat{w}^i_p\}$. Each woman $\widehat{w}^i_j$ would be the basic vertex that is the mirror of the vertex $w^i_j$.
\item For every two color class $i,j\in[k]$ where $i<j$, $W^{i,j}_{\bas}=\{w^{i,j}_1,w^{i,j}_2,\ldots,w^{i,j}_{q^{i,j}}\}$. Each woman $w^{i,j}_t$ would be the basic vertex that represents the selection of $e^{i,j}_t$.
\end{itemize}

Moreover, in all of our reductions, the preference lists of the men in $M_{\bas}\cup \widehat{M}_{\bas}$ would be of the following form, which we call the form of a {\bf leader} since, roughly speaking, women representing edges incident to some vertex $v$ would follow the woman representing $v$ with respect to their positions in preference lists. Formally, for every color class $i\in[k]$, the preference lists of $m^i$ and $\widehat{m}^i$ satisfy the four following conditions.
	\begin{enumerate}
	\item $\domain(\pos_{m^i})\cap \left(\bigcup_{j=1}^k(W^j_{\bas}\cup \widehat{W}^j_{\bas}\cup (\bigcup_{\ell=1}^k W^{\ell,j}_{\bas}))\right) = W^i_{\bas}\cup W^{i,j}_{\bas}$;\\ $\domain(\pos_{\widehat{m}^i})\cap \left(\bigcup_{j=1}^k(W^j_{\bas}\cup \widehat{W}^j_{\bas}\cup (\bigcup_{\ell=1}^k W^{\ell,j}_{\bas}))\right) = \widehat{W}^i_{\bas}\cup W^{i,j}_{\bas}$.
	\item $\pos_{m^i}(w^i_1) < \pos_{m^i}(w^i_2) < \cdots < \pos_{m^i}(w^i_t)$; $\pos_{\widehat{m}^i}(w^i_t) < \pos_{\widehat{m}^i}(w^i_{t-1}) < \cdots < \pos_{\widehat{m}^i}(w^i_1)$.
	\item For every index $t\in[j]$ where $i<t$ and edge $e^{i,t}_{\ell}\in E(v^i_j)$, $\pos_{m^i}(w^i_j) < \pos_{m^i}(w^{i,t}_{\ell})$, and if $j\leq t-1$, then $\pos_{m^i}(w^{i,t}_{\ell}) < \pos_{m^i}(w^i_{j+1})$. The internal ordering of all women representing edges in $E(v^i_j)$ is arbitrary.
	\item For every index $t\in[j]$ where $i<t$ and edge $e^{i,t}_{\ell}\in E(v^i_j)$, $\pos_{\widehat{m}^i}(w^i_j) < \pos_{\widehat{m}^i}(w^{i,t}_{\ell})$, and if $j\geq 2$, then $ \pos_{\widehat{m}^i}(w^{i,t}_{\ell}) < \pos_{\widehat{m}^i}(w^i_{j-1})$. The internal ordering of all women representing edges in $E(v^i_j)$ is arbitrary.
	\end{enumerate}

We are now ready to present the gadgets employed by our reduction.

\medskip
\myparagraph{Original Vertex Selector.} For every color class $i\in[k]$, our first set of gadgets introduces the following set of new men: $M^i_{\enr}=\{m^i_2,m^i_3,\ldots,m^i_p\}$. Here, the abbreviation ``enr'' stands for the word ``enriched''. Each man $m^i_j$ would be the basic vertex most preferred by the woman $w^i_j$. The preference lists of the new men are defined as follows. For every index $j\in\{2,3,\ldots,p\}$, set $\domain(m^i_j)=\{w^i_j,w^i_{j-1}\}$, $\pos_{m^i_j}(w^i_j)=1$ and $\pos_{m^i_j}(w^i_{j-1})=1$. Moreover, for every index $j\in\{2,3,\ldots,p\}$, set $\pos_{w^i_j}(m^i) < \pos_{w^i_j}(m^i_j)$. Finally, for every index $j\in[p-1]$, set $\pos_{w^i_j}(m^i_{j+1}) < \pos_{w^i_j}(m^i)$.

Notice that so far, we have finished defining exactly which agents in $M_{\bas}\cup (\bigcup_{i=1}^k W^i_{\bas}\cup M^i_{\enr})$ rank which other agents in this set as well as what is the order in which they rank them (although we have not yet finished defining the preference lists of some of the agents in this set). For an illustration of the Original Vertex Selector gadget, the reader is referred to Fig.~\ref{fig:sesmW1Fig1}.

\begin{figure}[t!]\centering
\fbox{\includegraphics[scale=0.9]{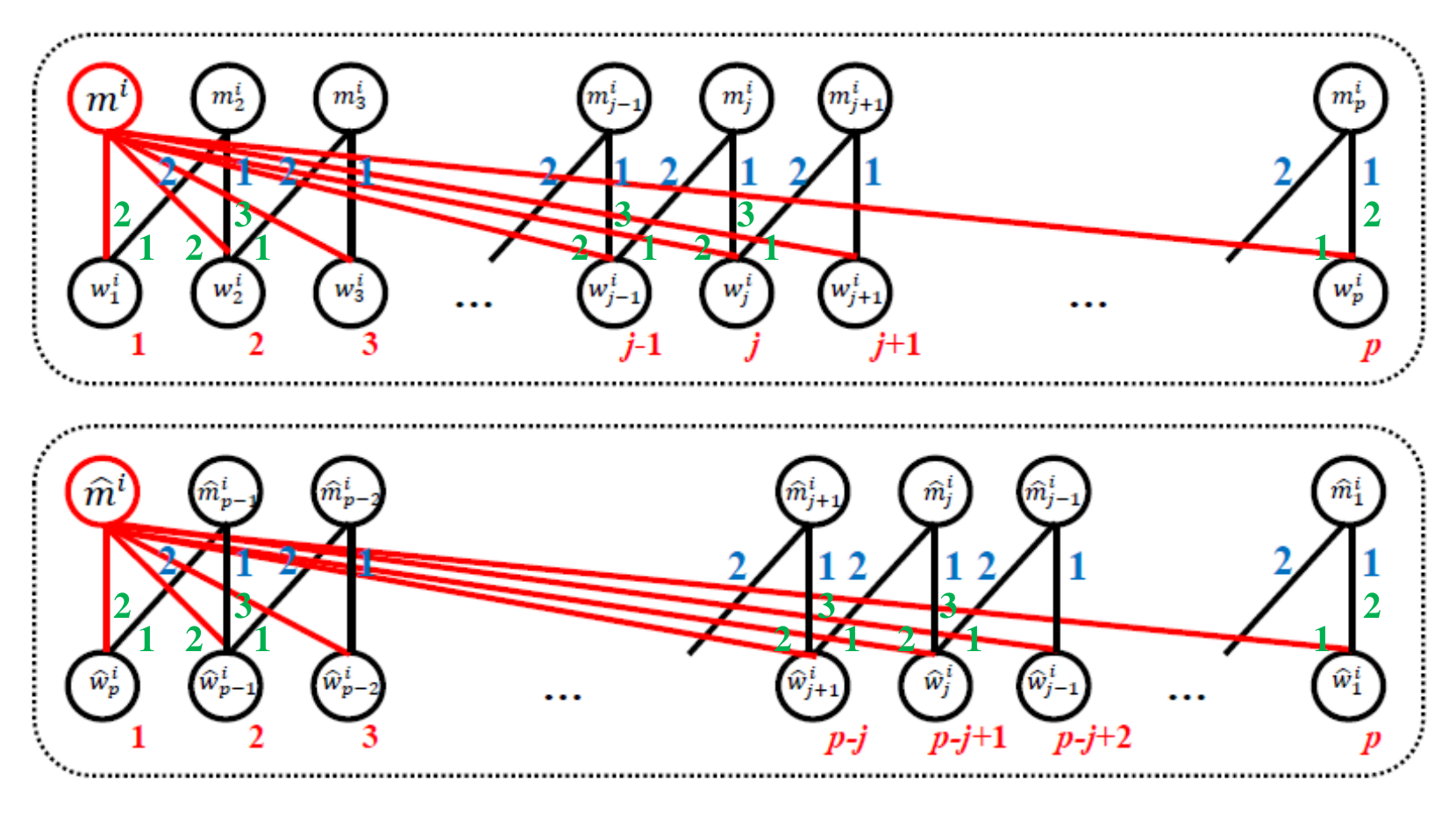}}
\caption{The Original and Mirror Vertex Selector gadgets. Colorful numbers indicate the {\em relative} order in which agents rank only those other agents shown in the figure.}\label{fig:sesmW1Fig1}
\end{figure}

\medskip
\myparagraph{Mirror Vertex Selector.} For every color class $i\in[k]$, our second set of gadgets introduces the following set of new men: $\widehat{M}^i_{\enr}=\{\widehat{m}^i_{1},\widehat{m}^i_{2},\ldots,\widehat{m}^i_{p-1}\}$. Each man $\widehat{m}^i_j$ would be the basic vertex most preferred by the woman $\widehat{w}^i_j$. The preference lists of the new men are defined as follows. For every index $j\in[p-1]$, set $\domain(\widehat{m}^i_j)=\{\widehat{w}^i_j,\widehat{w}^i_{j+1}\}$, $\pos_{\widehat{m}^i_j}(\widehat{w}^i_j)=1$ and $\pos_{\widehat{m}^i_j}(\widehat{w}^i_{j+1})=1$. Moreover, for every index $j\in[p-1]$, set $\pos_{\widehat{w}^i_j}(\widehat{m}^i) < \pos_{\widehat{w}^i_j}(\widehat{m}^i_j)$. Finally, for every index $j\in\{2,3,\ldots,p\}$, set $\pos_{\widehat{w}^i_j}(\widehat{m}^i_{j-1}) < \pos_{\widehat{w}^i_j}(\widehat{m}^i)$.

Thus, we have so far finished defining exactly which agents in $M_{\bas}\cup (\bigcup_{i=1}^k W^i_{\bas}\cup M^i_{\enr})\cup \widehat{M}_{\bas}\cup (\bigcup_{i=1}^k \widehat{W}^i_{\bas}\cup \widehat{M}^i_{\enr})$ rank which other agents in this set as well as what is the order in which they rank them. For an illustration of the Mirror Vertex Selector gadget, the reader is referred to Fig.~\ref{fig:sesmW1Fig1}.

\medskip
\myparagraph{Consistency.} We would next like to ensure that for all $i,j\in[k]$, $m^i$ would matched to $w^i_j$ if and only if $\widehat{m}^i$ would be matched to $\widehat{w}^i_j$. For this purpose, we insert the Consistency gadgets. Here, for every color class $i\in[k]$, we introduce two sets of new men, $\widetilde{M}^i=\{\widetilde{m}^i_1,\widetilde{m}^i_2,\ldots,\widetilde{m}^i_p\}$ and $\overline{M}^i=\{\overline{m}^i_1,\overline{m}^i_2,\ldots,\overline{m}^i_p\}$, and two sets of new women, $\widetilde{W}^i=\{\widetilde{w}^i_1,\widetilde{w}^i_2,\ldots,\widetilde{w}^i_p\}$ and $\overline{W}^i=\{\overline{w}^i_1,\overline{w}^i_2,\ldots,\overline{w}^i_p\}$. For all $i\in[k]$ and $j\in[p]$, we set the preference lists of $\overline{m}^i_j$, $\overline{m}^i_j$, $\widetilde{w}^i_j$ and $\overline{w}^i_j$ as~follows. First, let us set the preference list of $\widetilde{m}^i_j$.
\begin{itemize}
\item The intersection of $\domain(\pos_{\widetilde{m}^i_1})$ with the set of all agents, excluding the happy agents defined later, is exactly $\{\widetilde{w}^i_1, \widehat{w}^i_1,\overline{w}^i_1\}$, and $\pos_{\widetilde{m}^i_1}(\widetilde{w}^i_1) < \pos_{\widetilde{m}^i_1}(\widehat{w}^i_1) < \pos_{\widetilde{m}^i_1}(\overline{w}^i_1)$. 

\item If $2\leq j\leq p-2$, then excluding happy agents, the intersection of $\domain(\pos_{\widetilde{m}^i_j})$ with the set of all agents, excluding the happy agents defined later, is exactly $\{\widetilde{w}^i_j, w^i_j, \widehat{w}^i_j,\overline{w}^i_j\}$, and $\pos_{\widetilde{m}^i_j}(\widetilde{w}^i_j) < \pos_{\widetilde{m}^i_j}(w^i_j) < \pos_{\widetilde{m}^i_j}(\widehat{w}^i_j) < \pos_{\widetilde{m}^i_j}(\overline{w}^i_j)$. 

\item The intersection of $\domain(\pos_{\widetilde{m}^i_p})$ with the set of all agents, excluding the happy agents defined later, is exactly $\{\widetilde{w}^i_p, w^i_p, \overline{w}^i_p\}$, and $\pos_{\widetilde{m}^i_p}(\widetilde{w}^i_p) < \pos_{\widetilde{m}^i_t}(w^i_p) < \pos_{\widetilde{m}^i_p}(\overline{w}^i_p)$.
\end{itemize}
Second, let us set the preference lists of $\overline{m}^i_j$, $\widetilde{w}^i_j$ and $\overline{w}^i_j$.

\begin{itemize}
\item $\domain(\pos_{\overline{m}^i_j}) = \{\overline{w}^i_j,\widetilde{w}^i_j\}$, $\pos_{\overline{m}^i_j}(\overline{w}^i_j)=1$ and $\pos_{\overline{m}^i_j}(\widetilde{w}^i_j)=2$.

\item The intersection of $\domain(\pos_{\widetilde{w}^i_j})$ with the set of all agents, excluding the happy agents defined later, is exactly $\{\overline{m}^i_j,\widetilde{m}^i_j\}$, and $\pos_{\overline{w}^i_j}(\overline{m}^i_j)<\pos_{\overline{w}^i_j}(\widetilde{m}^i_j)$.

\item $\domain(\pos_{\overline{w}^i_j}) = \{\widetilde{m}^i_j,\overline{m}^i_j\}$, $\pos_{\overline{w}^i_j}(\widetilde{m}^i_j)=1$ and $\pos_{\overline{w}^i_j}(\overline{m}^i_j)=2$.
\end{itemize}

For all $i\in[k]$ and $j\in[p]$, we are now ready to explicitly define the preference lists of $w^i_j$ and $\widehat{w}^i_j$ as follows. First, let us define the preference list of $w^i_j$.
\begin{itemize}
\item  $\domain(\pos_{w^i_1})=\{m^i_2,m^i\}$, $\pos_{w^i_1}(m^i_{2})=1$ and $\pos_{w^i_1}(m^i)=2$.
\item If $2\leq j\leq p-1$, then excluding happy agents, the intersection of $\domain(\pos_{w^i_j})$ with the set of all agents is exactly $\{m^i_{j+1}, m^i, \widetilde{m}^i_j, m^i_j\}$, and $\pos_{w^i_j}(m^i_{j+1})<\pos_{w^i_j}(m^i)<\pos_{w^i_j}(\widetilde{m}^i_j)<\pos_{w^i_j}(m^i_j)$.
\item The intersection of $\domain(\pos_{w^i_p})$ with the set of all agents, excluding the happy agents defined later, $\{m^i, \widetilde{m}^i_p, m^i_p\}$, and $\pos_{w^i_p}(m^i)<\pos_{w^i_p}(\widetilde{m}^i_p)<\pos_{w^i_p}(m^i_p)$.
\end{itemize}
Second, let us define the preference list of $\widehat{w}^i_j$.

\begin{itemize}
\item $\domain(\pos_{\widehat{w}^i_p})=\{\widehat{m}^i_{p-1}, \widehat{m}^i\}$, $\pos_{\widehat{w}^i_p}(\widehat{m}^i_{p-1})=1$ and $\pos_{\widehat{w}^i_p}(\widehat{m}^i)=2$.
\item Else if $2\leq j\leq p-1$, then excluding happy agents, the intersection of $\domain(\pos_{\widehat{w}^i_j})$ with the set of all agents is exactly $\{\widehat{m}^i_{j-1}, \widehat{m}^i, \widetilde{m}^i_j, \widehat{m}^i_j\}$, and $\pos_{\widehat{w}^i_j}(\widehat{m}^i_{j-1})<\pos_{\widehat{w}^i_j}(\widehat{m}^i)<\pos_{\widehat{w}^i_j}(\widetilde{m}^i_j)<\pos_{\widehat{w}^i_j}(\widehat{m}^i_j)$.
\item The intersection of $\domain(\pos_{\widehat{w}^i_1})$ with the set of all agents, excluding the happy agents defined later, is exactly $\{\widehat{m}^i, \widetilde{m}^i_1, \widehat{m}^i_1\}$, and $\pos_{\widehat{w}^i_1}(\widehat{m}^i)<\pos_{\widehat{w}^i_1}(\widetilde{m}^i_1)<\pos_{\widehat{w}^i_1}(\widehat{m}^i_1)$.
\end{itemize}

For an illustration of a Consistency gadget, the reader is referred to Fig.~\ref{fig:sesmW1Fig2}.

\begin{figure}[t!]\centering
\fbox{\includegraphics[scale=0.9]{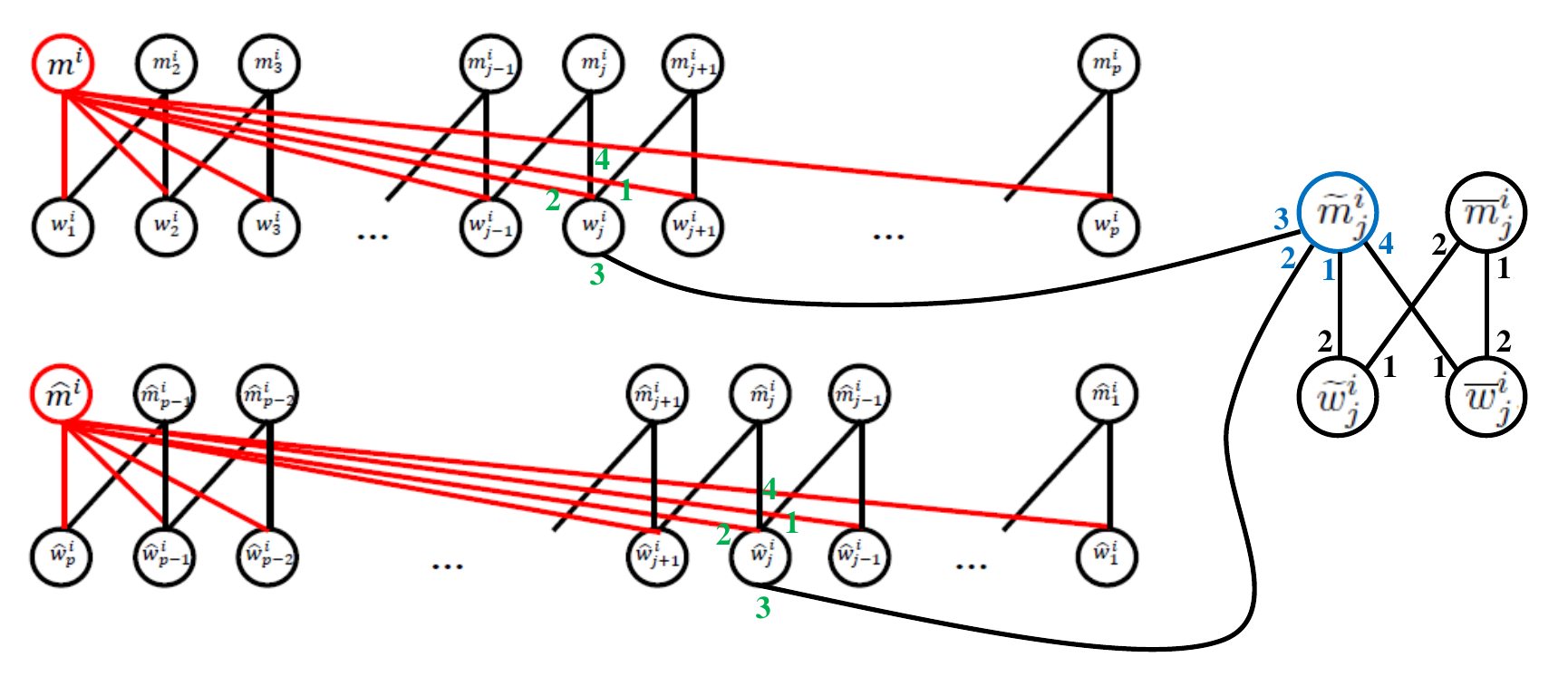}}
\caption{The Consistency gadget. Colorful numbers indicate the {\em relative} order in which agents rank only those other agents shown in the figure.}\label{fig:sesmW1Fig2}
\end{figure}

\medskip
\myparagraph{Edge Selector.} For every two color classes $i,j\in[k]$ where $i<j$, our last set of gadgets introduces the following three sets of new agents: $M^{i,j}_{\enr}=\{m^{i,j}_1,m^{i,j}_2,\ldots,m^{i,j}_{q^{i,j}}\}$, $\overline{M}^{i,j}=\{\overline{m}^{i,j}_1,\overline{m}^{i,j}_2,\ldots,\overline{m}^{i,j}_{q^{i,j}}\}$ and $\overline{W}^{i,j}=\{\overline{w}^{i,j}_1,\overline{w}^{i,j}_2,\ldots,\overline{w}^{i,j}_{q^{i,j}}\}$. For every $t\in [q^{i,j}]$, the preference lists of the new agents, $m^{i,j}_t$, $\overline{m}^{i,j}_t$ and $\overline{w}^{i,j}_t$, are defined as follows.
\begin{itemize}
\item The intersection of $\domain(\pos_{m^{i,j}_t})$ with the set of all agents, excluding the happy agents defined later, is exactly
$\{w^{i,j}_t,\overline{w}^{i,j}_t\}$, and $\pos_{m^{i,j}_t}(w^{i,j}_t)<\pos_{m^{i,j}_t}(\overline{w}^{i,j}_t)$.
\item $\domain(\pos_{\overline{m}^{i,j}_t})=\{\overline{w}^{i,j}_t,w^{i,j}_t\}$, $\pos_{\overline{m}^{i,j}_t}(\overline{w}^{i,j}_t)=1$ and $\pos_{\overline{m}^{i,j}_t}(w^{i,j}_t)=2$.
\item $\domain(\pos_{\overline{w}^{i,j}_t})=\{m^{i,j}_t,\overline{m}^{i,j}_t\}$, $\pos_{\overline{w}^{i,j}_t}(m^{i,j}_t)=1$ and $\pos_{\overline{w}^{i,j}_t}(\overline{m}^{i,j}_t)=2$.
\end{itemize}

For all $i,j\in[k]$ where $i<j$ and $t\in[q^{i,j}]$, we are now ready to define also the preference list of $w^{i,j}_t$ up to the insertion of happy agents that are defined later. The intersection of $\domain(\pos_{w^{i,j}_t})$ with the set of all agents, excluding the happy agents defined later, is exactly $\{\overline{m}^{i,j}_t,m^i,\widehat{m}^i,m^j,\widehat{m}^j,m^{i,j}_t\}$, and $\pos_{w^{i,j}_t}(\overline{m}^{i,j}_t) < \pos_{w^{i,j}_t}(m^i) < \pos_{w^{i,j}_t}(\widehat{m}^i) < \pos_{w^{i,j}_t}(m^j) < \pos_{w^{i,j}_t}(\widehat{m}^j) < \pos_{w^{i,j}_t}(m^{i,j}_t)$.

For an illustration of an Edge Selector gadget, the reader is referred to Fig.~\ref{fig:sesmW1Fig3}.

\begin{figure}[t!]\centering
\fbox{\includegraphics[scale=0.81]{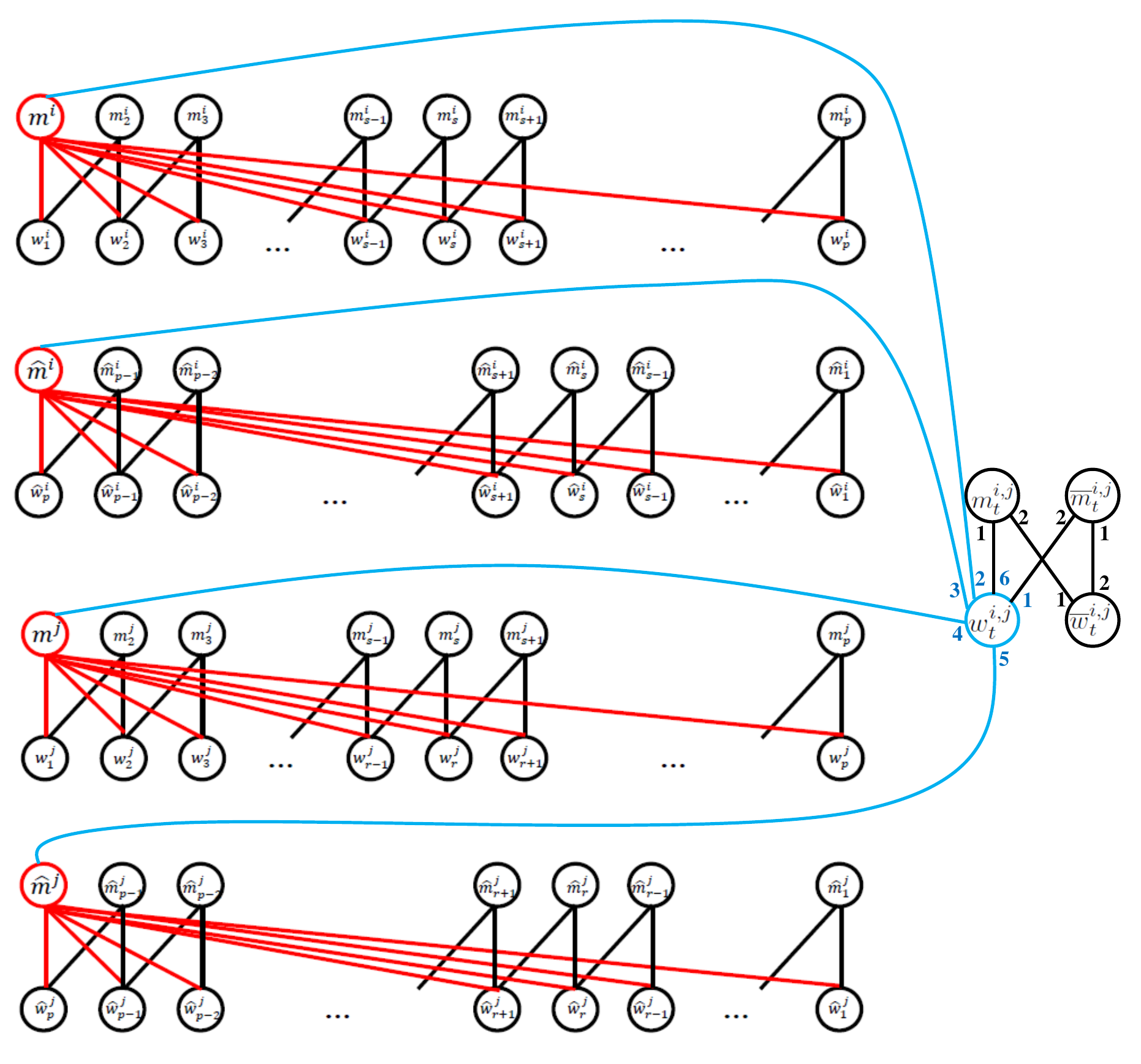}}
\caption{The Edge Selector gadget. Colorful numbers indicate the {\em relative} order in which agents rank only those other agents shown in the figure.}\label{fig:sesmW1Fig3}
\end{figure}

\myparagraph{Happy Pairs.} To be able to control the measure of sex-equality, we introduce agents whose sole purpose is to serve as ``fillers'' of preference lists of other agents. For this purpose, we rely on the following notion of a happy pair.

\begin{definition}\label{def:happy}
A {\em happy pair} is a pair $(m,w)$ of a man $m\in M$ and a woman $w\in W$ such that $\pos_m(w)=\pos_w(m)=1$. A {\em happy agent} is an agent that belongs to a happy pair.
\end{definition}

We introduce $\alpha'=\alpha 4^k|E(G)|^{10}$ new happy pairs, denoted by $(m^1_{\hap},w^1_{\hap}),(m^2_{\hap},w^2_{\hap}),\ldots,$ $(m^\alpha_{\hap},w^{\alpha'}_{\hap})$, where
\[\begin{array}{ll}
\alpha = & -9k +3pk +4k^2 - (pk-k+2)|E(G)|\\
& - (|E(G)|-2{k\choose 2})|E(G)|^{10} - (2^k-1)|E(G)|^{30} + (p-1)(2^k-1)|E(G)|^{40}.
\end{array}\]
Initially, the preference list of every such happy agent $a$ contains only one agent, the one that belongs to the same happy pair as $a$. In what follows, we insert happy agents into the preference lists of previously defined agents. We implicitly assume that when we insert a happy agents $a$ into the preference of an agent $b$, the agent $b$ is appended to the end of the preference list of $s$. Note that in this manner, the agent $a$ together with the agents at the top of the preference list of $a$ remain a happy pair.

Now, we insert happy women into the preference lists of men of the forms $m^i$, $\widehat{m}^i$, $\widetilde{m}^i_j$ and $m^{i,j}_t$. Whenever we state below that we insert some set of arbitrarily chosen happy agents to the preference list of some agent, we suppose that these happy agents are chosen from the set of happy agents such that  neither them nor their most preferred partners have already been inserted to the preference list of any other agent. It would be clear that the number $\alpha'$ is sufficiently large to allow such selection.

First, for every $i\in[k]$, we explicitly define the preference list of $m^i$ as follows. We set $\domain(\pos_{m^i})$ to consist of the union of $W^i_{\bas}\cup W^{i,j}_{\bas}$ and a set of arbitrarily chosen $|E(G)|^{20}(p-1)+\displaystyle{\sum_{j=1}^p}(|E(G)|-|E^{i,j}|)$ happy women. Now, we set the preference list to satisfy the following conditions.
\begin{enumerate}
\item For all $j\in[p]$, $\pos_{m^i}(w^i_j)=1+(1+|E(G)|+|E(G)|^{20})(j-1)$.
\item For all $j\in[p]$, we insert all of the women $w^{i,j}_t$ such that $e^{i,j}_t\in E(v^i_j)$ into positions $2+(1+|E(G)|+|E(G)|^{20})(j-1),3+(1+|E(G)|+|E(G)|^{10})(j-1),\ldots,1+|E(v^i_j)|+(1+|E(G)|+|E(G)|^{20})(j-1)$ (the choice of which of these women occupies which of these positions is arbitrary).
\item All of the positions that have not been occupied by the conditions above are occupied by the happy women (the choice of which happy woman occupies which vacant position is arbitrary).
\end{enumerate}

Second, for every $i\in[k]$, we explicitly define the preference list of $\widehat{m}^i$ as follows. We set $\domain(\pos_{\widehat{m}^i})$ to consist of the union of $\widehat{W}^i_{\bas}\cup \widehat{W}^{i,j}_{\bas}$ and a set of arbitrarily chosen $|E(G)|^{20}(p-1)+\displaystyle{\sum_{j=1}^p}(|E(G)|-|E^{i,j}|)$ happy women. Now, we set the preference list to satisfy the following conditions.
\begin{enumerate}
\item For all $j\in[p]$, $\pos_{\widehat{m}^i}(w^i_j)=1+(1+|E(G)|+|E(G)|^{20})(p-j)$.
\item For all $j\in[p]$, we insert all of the women $w^{i,j}_t$ such that $e^{i,j}_t\in E(v^i_j)$ into positions $2+(1+|E(G)|+|E(G)|^{20})(p-j),3+(1+|E(G)|+|E(G)|^{20})(p-j),\ldots,1+|E(v^i_j)|+(1+|E(G)|+|E(G)|^{20})(j-1)$ (the choice of which of these women occupies which of these positions is arbitrary).
\item All of the positions that have not been occupied by the conditions above are occupied by the happy women (the choice of which happy woman occupies which vacant position is arbitrary).
\end{enumerate}

Third, for every $i\in[k]$, we explicitly define the preference list of $\widetilde{m}^i_1$ as follows. We set $\domain(\pos_{\widetilde{m}^i_1})$ to consist of the union of $\{\widetilde{w}^i_1,\widehat{w}^i_1,\overline{w}^i_1\}$ and a set of arbitrarily chosen $|E(G)|^{30}+1$ happy women. Define $\pos_{\widetilde{m}^i_1}(\widetilde{w}^i_1)=1$, $\pos_{\widetilde{m}^i_1}(\widehat{w}^i_1)=2$ and $\pos_{\widetilde{m}^i_1}(\overline{w}^i_1)=4+|E(G)|^{30}$. All of the positions that have not been occupied above are occupied by the happy women (the choice of which happy woman occupies which vacant position is arbitrary).
For every $i\in[k]$ and $j\in[p]$ where $2\leq j\leq p-1$, we explicitly define the preference list of $\widetilde{m}^i_1$ as follows. We set $\domain(\pos_{\widetilde{m}^i_j})$ to consist of the union of $\{\widetilde{w}^i_j,w^i_j,\widehat{w}^i_j,\overline{w}^i_j\}$ and a set of arbitrarily chosen $2^{i-1}|E(G)|^{30}$ happy women. Define $\pos_{\widetilde{m}^i_j}(\widetilde{w}^i_j)=1$, $\pos_{\widetilde{m}^i_j}(w^i_j)=2$, $\pos_{\widetilde{m}^i_j}(\widehat{w}^i_j)=3$ and $\pos_{\widetilde{m}^i_j}(\overline{w}^i_j)=4+2^{i-1}|E(G)|^{30}$. All of the positions that have not been occupied above are occupied by the happy women (the choice of which happy woman occupies which vacant position is arbitrary).
For every $i\in[k]$, we explicitly define the preference list of $\widetilde{m}^i_p$ as follows. We set $\domain(\pos_{\widetilde{m}^i_p})$ to consist of the union of $\{\widetilde{w}^i_p,w^i_p,\overline{w}^i_p\}$ and a set of arbitrarily chosen $2^{k-1}|E(G)|^{30}+1$ happy women. Define $\pos_{\widetilde{m}^i_p}(\widetilde{w}^i_p)=1$, $\pos_{\widetilde{m}^i_p}(w^i_p)=2$ and $\pos_{\widetilde{m}^i_p}(\overline{w}^i_p)=4+2^{k-1}|E(G)|^{30}$. All of the positions that have not been occupied above are occupied by the happy women (the choice of which happy woman occupies which vacant position is arbitrary).

Fourth, for every $i,j\in[k]$ where $i<j$ and $t\in[q^{i,j}]$, we explicitly define the preference list of $m^{i,j}_t$ as follows. We set $\domain(\pos_{m^{i,j}_t})$ to consist of the union of $\{w^{i,j}_t,\overline{w}^{i,j}_t\}$ and a set of arbitrarily chosen $|E(G)|^{10}$ happy women. Define $\pos_{m^{i,j}_t}(w^{i,j}_t)=1$ and $\pos_{m^{i,j}_t}(\overline{w}^{i,j}_t)=2+|E(G)|^{10}$. All of the positions that have not been occupied above are occupied by the happy women (the choice of which happy woman occupies which vacant position is arbitrary).

We next insert happy men into the preference lists of women of the forms $w^i_j$ where $j\neq 1$, $\widehat{w}^i_j$ where $j\neq p$, $\widetilde{w}^i_j$ and $w^{i,j}_t$. First, for every $i\in[k]$ and $j\in\{2,3,\ldots,p-1\}$, we explicitly define the preference list of $w^i_j$ as follows. We set $\domain(\pos_{w^i_j})$ to consist of the union of $\{m^i_{j+1},m^i,\widetilde{m}^i_j,m^i_j\}$ and a set of arbitrarily chosen $|E(G)|^{20}$ happy men. Define $\pos_{w^i_j}(m^i_{j+1})=1$, $\pos_{w^i_j}(m^i)=2$, $\pos_{w^i_j}(\widetilde{m}^i_j)=3$ and $\pos_{w^i_j}(m^i_j)=4+|E(G)|^{20}$. All of the positions that have not been occupied above are occupied by the happy men (the choice of which happy man occupies which vacant position is arbitrary). Moreover, for every $i\in[k]$, we explicitly define the preference list of $w^i_p$ as follows. We set $\domain(\pos_{w^i_p})$ to consist of the union of $\{m^i,\widetilde{m}^i_p,m^i_p\}$ and a set of arbitrarily chosen $|E(G)|^{20}+1$ happy men. Define $\pos_{w^i_p}(m^i)=2$, $\pos_{w^i_p}(\widetilde{m}^i_p)=3$ and $\pos_{w^i_p}(m^i_p)=4+|E(G)|^{20}$. All of the positions that have not been occupied above are occupied by the happy men (the choice of which happy man occupies which vacant position is arbitrary).

Second, for every $i\in[k]$ and $j\in\{2,3,\ldots,p-1\}$, we explicitly define the preference list of $\widehat{w}^i_j$ as follows. We set $\domain(\pos_{\widehat{w}^i_j})$ to consist of the union of $\{\widehat{m}^i_{j-1},\widehat{m}^i,\widetilde{m}^i_j,\widehat{m}^i_j\}$ and a set of arbitrarily chosen $|E(G)|^{20}$ happy men. Define $\pos_{\widehat{w}^i_j}(\widehat{m}^i_{j-1})=1$, $\pos_{\widehat{w}^i_j}(\widehat{m}^i)=2$, $\pos_{\widehat{w}^i_j}(\widetilde{m}^i_j)=3$ and $\pos_{\widehat{w}^i_j}(\widehat{m}^i_j)=4+|E(G)|^{20}$. All of the positions that have not been occupied above are occupied by the happy men (the choice of which happy man occupies which vacant position is arbitrary). Moreover, for every $i\in[k]$, we explicitly define the preference list of $\widehat{w}^i_1$ as follows. We set $\domain(\pos_{\widehat{w}^i_p})$ to consist of the union of $\{\widehat{m}^i,\widetilde{m}^i_1,\widehat{m}^i_1\}$ and a set of arbitrarily chosen $|E(G)|^{20}+1$ happy men. Define $\pos_{\widehat{w}^i_1}(m^i)=2$, $\pos_{\widehat{w}^i_1}(\widetilde{m}^i_1)=3$ and $\pos_{\widehat{w}^i_1}(m^i_1)=4+|E(G)|^{20}$. All of the positions that have not been occupied above are occupied by the happy men (the choice of which happy man occupies which vacant position is arbitrary).

Third, for every $i\in[k]$ and $j\in[p]$, we explicitly define the preference list of $\widetilde{w}^i_j$ as follows. We set $\domain(\pos_{\widetilde{w}^i_j})$ to consist of the union of $\{\overline{m}^i_j,\widetilde{m}^i_j\}$ and a set of arbitrarily chosen $2^{k-i}|E(G)|^{40}$ happy men. Define $\pos_{\widetilde{w}^i_j}(\overline{m}^i_j)=1$ and $\pos_{\widetilde{w}^i_j}(\widetilde{m}^i_j)=2+2^{k-i}|E(G)|^{40}$. All of the positions that have not been occupied above are occupied by the happy men (the choice of which happy man occupies which vacant position is arbitrary).

Fourth, for every $i,j\in[k]$ where $i<j$ and $t\in[q^{i,j}]$, we explicitly define the preference list of $w^{i,j}_t$ as follows. We set $\domain(\pos_{w^{i,j}_t})$ to consist of the union of $\{\overline{m}^{i,j}_t,m^i,\widehat{m}^i,m^j,\widehat{m}^j,m^{i,j}_t\}$ and a set of arbitrarily chosen $|E(G)|^{10}$ happy men. Define $\pos_{w^{i,j}_t}(\overline{m}^{i,j}_t)=1$, $\pos_{w^{i,j}_t}(m^i)=2$, $\pos_{w^{i,j}_t}(\widehat{m}^i)=3$, $\pos_{w^{i,j}_t}(m^j)=4$, $\pos_{w^{i,j}_t}(\widehat{m}^j)=5$ and $\pos_{w^{i,j}_t}(m^{i,j}_t)=6+|E(G)|^{10}$. All of the positions that have not been occupied above are occupied by the happy men (the choice of which happy man occupies which vacant position is arbitrary).

\medskip
\myparagraph{Garbage Collector.} Finally, we introduce one new man, $m^\star$, and one new woman, $w^\star$. The preference list of $\mu^\star$ first contains all happy women, $w^1_{\hap},w^2_{\hap},\ldots,w^\alpha_{\hap}$, in some arbitrary order, and afterwards it contains the woman $w^\star$. The preference list of $w^\star$ is simply defined to contain only the man $m^\star$.

\subsubsection{Treewidth}

We begin the analysis of the reduction by bounding the treewidth of the resulting primal graph.

\begin{lemma}\label{lem:twSESM}
Let $I$ be an instance of {\sc Multicolored Clique}. Then, the treewidth of $\red_{SESM}(I)$ is bounded by $2k+\OO(1)$.
\end{lemma}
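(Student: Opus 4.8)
The plan is to exhibit an explicit tree decomposition of the primal graph $H$ of $\red_{SESM}(I)$ whose bags have width $2k+\OO(1)$. The guiding principle is that the construction is deliberately ``local'': the only agents that are shared across different gadgets are the $2k$ leader men $\{m^1,\ldots,m^k\}\cup\{\widehat{m}^1,\ldots,\widehat{m}^k\}$ (the men of the form of a leader), while everything else --- the Original/Mirror Vertex Selector gadgets, the Consistency gadgets, the Edge Selector gadgets, the happy pairs, and the garbage collector --- interacts with the rest of the instance only through these leaders. So my first step is to verify this separation claim directly from the preference lists: I would check that in $H$, after deleting the $2k$ leader vertices, every remaining connected component is contained in a single gadget and hence has $\OO(1)$ vertices (each Vertex Selector, Consistency, and Edge Selector gadget involves only a constant number of non-happy agents, each happy pair is an isolated edge once its leader attachment is cut, and $m^\star,w^\star$ form their own component together with the happy women they rank).

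With that separation in hand, I would build the decomposition as follows. Take a ``spine'' of bags, each of which contains the full leader set $L=\{m^i,\widehat{m}^i : i\in[k]\}$, so that $|L|=2k$. For each connected component $C$ of $H-L$, I attach a small subtree of bags that all contain $L$ together with the vertices of $C$ and their neighbors among $L$. Since $C$ has $\OO(1)$ vertices, each such bag has size $2k+\OO(1)$. The key point enabling the width bound is that any single non-leader agent is adjacent to only a bounded number of leaders: a woman $w^{i,j}_t$ representing an edge ranks exactly $m^i,\widehat{m}^i,m^j,\widehat{m}^j$ (four leaders), women $w^i_j$ and $\widehat{w}^i_j$ rank only their own leader $m^i$ (resp.\ $\widehat{m}^i$), and all enriched men, Consistency agents, Edge-Selector auxiliary agents, and happy agents rank no leader at all. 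Thus putting $L$ into every bag suffices to cover all gadget-to-leader edges, and the gadget-internal edges are covered inside the $\OO(1)$-sized component attached in each subtree.

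To finish I would verify the three tree-decomposition axioms. Edge coverage: every edge of $H$ is either internal to some gadget/component $C$ (covered in the subtree for $C$) or joins a gadget vertex to a leader (covered because that bag contains all of $L$ plus $C$). Vertex coverage: every leader appears in all bags, and every non-leader appears in the subtree of its own component. Connectivity: the leaders occupy every bag so their induced subtrees are the whole tree and hence connected; each non-leader occupies a contiguous $\OO(1)$-sized subtree attached at one point of the spine, which is connected. Hence the width is $2k+\OO(1)$ as claimed.

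The main obstacle I anticipate is purely bookkeeping rather than conceptual: one must read off from the (rather long) explicit preference lists exactly which non-leader agents rank which leaders, and confirm that no ``hidden'' long-range edge was introduced --- in particular that the happy women appended to the leaders' lists do not create a large component, and that the garbage collector $m^\star$ (which ranks all $\alpha'$ happy women) does not blow up the treewidth. For $m^\star$ I would note that the happy women it ranks are each otherwise attached to at most one other agent, so the component containing $m^\star$ is a star-like structure that can be decomposed with bags of size $\OO(1)$ by introducing the happy women one at a time alongside $m^\star$; this adds only a constant to the width and does not require $L$ at all.
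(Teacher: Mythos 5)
There is a genuine gap, and it sits exactly at the load-bearing claim of your argument. You assert that after deleting the $2k$ leaders $L=\{m^i,\widehat{m}^i : i\in[k]\}$, every connected component of $H-L$ has $\OO(1)$ vertices. This is false for the components arising from the Vertex Selector and Consistency gadgets. For a fixed color class $i$, the enriched man $m^i_j$ ranks \emph{both} $w^i_j$ and $w^i_{j-1}$, so consecutive basic women are chained together; likewise $\widehat{m}^i_j$ chains $\widehat{w}^i_j$ to $\widehat{w}^i_{j+1}$, and the Consistency man $\widetilde{m}^i_j$ ties $w^i_j$, $\widehat{w}^i_j$, $\widetilde{w}^i_j$ and $\overline{w}^i_j$ together. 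Hence the set $M^i_{\enr}\cup\widehat{M}^i_{\enr}\cup W^i_{\bas}\cup\widehat{W}^i_{\bas}\cup\widetilde{M}^i\cup\overline{M}^i\cup\widetilde{W}^i\cup\overline{W}^i$ is a \emph{single} connected component of $H-L$ with $\Theta(p)=\Theta(n/k)$ vertices, not a collection of constant-size pieces. Your decomposition places an entire component $C$ together with $L$ inside bags, so for these components your bags have size $2k+\Theta(n/k)$, which does not prove the lemma.

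The fix is to weaken the claim from ``constant size'' to ``constant treewidth'': it suffices that every component of $H-L$ (after also peeling off the pendant happy agents, which you handle correctly) admits a tree decomposition of width $\OO(1)$, since adding $L$ to every bag of such decompositions and linking them arbitrarily yields width $2k+\OO(1)$. For the large color-class component this requires an explicit construction exploiting its linear structure: a path decomposition whose $j$-th bag is (essentially) $\{w^i_j,m^i_j,m^i_{j+1},\widehat{w}^i_j,\widehat{m}^i_j,\widehat{m}^i_{j-1},\widetilde{m}^i_j,\overline{m}^i_j,\widetilde{w}^i_j,\overline{w}^i_j\}$, sliding over $j=1,\ldots,p$ with bags of size at most $10$. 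This is precisely what the paper's proof does; the rest of your argument (covering gadget-to-leader edges by keeping $L$ in all bags, the Edge Selector components of genuinely constant size, the star-like component of $m^\star$ and the happy pairs) is sound and matches the paper's treatment.
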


\begin{proof}
Let $P$ be the primal graph of $\red_{SESM}(I)$, and let $\widehat{P}$ denote the graph obtained from $P$ by the removal of all of (the vertices that represent) men in $M_{\bas}\cup\widehat{M}_{\bas}$. Note that since $|M_{\bas}\cup\widehat{M}_{\bas}|=2k$, to prove that the treewidth of $P$ is bounded by $2k+\OO(1)$, it is sufficient to prove that the treewidth of every connected component of $\widehat{P}$ is bounded by $\OO(1)$. Indeed, given a tree decomposition of every connected component of $\widehat{P}$ of with $\OO(1)$, we can construct a tree decomposition of $P$ of width $2k+\OO(1)$ by simply inserting all of the men in $M_{\bas}\cup\widehat{M}_{\bas}$ into every bag of each of the tree decompositions and then arbitrarily connecting the tree decompositions to obtain a single tree rather than a forest. Let $P'$ denote the graph obtained from $P'$ be removing all of the happy agents. Note that all of the happy agents are either leaves themselves in $P'$ or vertices of degree 2 that are incident to happy agents that are leaves in $P'$. Given a tree decomposition $(T,\beta)$ of $P'$, we can construct a tree decomposition of $\widehat{P}$ of either the same width or width $\OO(1)$ as follows. We assign to each vertex $v$ of $P'$, which is adjacent to some $x$ happy agents, an arbitrarily chosen node $u$ whose bag contains $v$, insert $x$ new leaves to $T$ which are each adjacent to $u$, and defining the bag of each of these leaves to contain $v$, a distinct happy agent adjacent to $v$ and the agent most preferred by this happy agent. For each pair of happy agents not yet inserted, we create a new node whose bag contains only these two agents, and attach this node as a leaf to some arbitrarily chosen node of $T$.

By the arguments above, it is sufficient to show that the treewidth of $P'$ is upper bounded by $\OO(1)$. First, for all $i,j\in[k]$ where $i<j$ and $t\in[q^{i,j}]$, we have that $\{m^{i,j}_t,w^{i,j}_t,\widetilde{m}^{i,j}_t\}$ is the entire vertex set of a connected component of $P'$. Since this connected component is simply a cycle, its treewidth is 2. 

We next note that for all $i\in[k]$, we have that $X^i=M^i_{\enr}\cup\widehat{M}^i_{\enr}\cup W^i_{\bas}\cup \widehat{W}^i_{\bas}\cup \widetilde{M}^i\cup \overline{M}^i\cup \widetilde{W}^i\cup \overline{W}^i$ is the entire vertex set of a connected component of $P'$. For this connected component, which we denote by $C^i$. we explicitly define a tree decomposition $(T^i,\beta^i)$ as follows. The tree $T^i$ is simply a path on $p$ vertices, denoted by $T=u_1-u_2-\cdots-u_p$. We define $\beta^i(u_1)=\{w^i_1,m^i_2,\widehat{w}^i_1,\widehat{m}^i_1,\widetilde{m}^i_1,\overline{m}^i_1,\widetilde{w}^i_1,\overline{w}^i_1\}$ and $\beta^i(u_p)=\{w^i_p,m^i_p,\widehat{w}^i_p,\widehat{m}^i_{p-1},\widetilde{m}^i_p,\overline{m}^i_p,\widetilde{w}^i_p,\overline{w}^i_p\}$. For all $j\in\{2,\ldots,p-1\}$, we define $\beta^i(u_j)=\{w^i_j,m^i_j,m^i_{j+1},\widehat{w}^i_j,\widehat{m}^i_j,\widehat{m}^i_{j-1},\widetilde{m}^i_j,\overline{m}^i_j,\widetilde{w}^i_j,\overline{w}^i_j\}$. Note that the size of each bag of is upper bounded by $10=\OO(1)$. Moreover, each agent in $X^i$ belongs to the bags of at most two nodes of $T^i$, and these two nodes are adjacent. Lastly, for all $j\in[p]$, all endpoints of edges incident to either $w^i_j$ or $\widehat{w}^i_j$ belong to the bag $\beta^i(u_j)$, and every edge of $C^i$ has an endpoint in $W^i_{\bas}\cup\widehat{W}_{\bas}$. Thus, $(T^i,\beta^i)$ is indeed a tree decomposition of $C^i$ of width $\OO(1)$. 

Finally, the treewidth of the connected component consisting only of $\{m^\star,w^\star\}$ is clearly 1. Observe that we have indeed considered every connected component of $P'$, and thus we conclude the proof of the lemma.
\end{proof}

\subsubsection{Correctness}

\myparagraph{Forward Direction.} We first show how given a solution of an instance $I$ of {\sc Multicolored Clique}, we can construct a stable matching $\mu$ of $\red_{SESM}(I)$ whose sex-equality measure is exactly 0. For this purpose, we introduce the following definition. Here, when we denote a vertex-set by $U=\{v^1_{\ell_1},v^2_{\ell_2},\ldots,v^t_{\ell_k}\}$, we implicitly assume that for every $i\in[k]$, $v^i_{\ell_i}\in V^i$. Moreover, when we denote an edge-set by $W=\{e^{i,j}_{\ell_{i,j}}: i,j\in[k], i<j\}$, we implicitly assume that for every $i,j\in[k]$ where $i<j$, $e^{i,j}_{\ell_{i,j}}$ is an edge whose endpoints are a vertex in $V^i$ and a vertex in $V^j$.

\begin{definition}\label{def:cliquetoSESM}
Let $I=(G,(V^1,V^2,\ldots,V^k))$ be a \yesinstance\ of {\sc Multicolored Clique}, and let $U=\{v^1_{\ell_1},v^2_{\ell_2},\ldots,v^t_{\ell_k}\}$ and $W=\{e^{i,j}_{\ell_{i,j}}: i,j\in[k], i<j\}$ denote the vertex and edge sets, respectively, of a multicolored clique $C$ of $G$. 
Then, the matching $\mu^C_{SESM}$ of $\red_{SESM}(I)$ is defined as follows.
\begin{itemize}
\item For all $i\in[k]$: $\mu^C_{SESM}(m^i)=w^i_{\ell_i}$ and $\mu^C_{SESM}(\widehat{m}^i)=\widehat{w}^i_{\ell_i}$.
\item For all $i\in[k]$ and $j\in\{2,3,\ldots,\ell_i\}$: $\mu^C_{SESM}(m^i_j)=w^i_{j-1}$.
\item For all $i\in[k]$ and $j\in\{\ell_i+1,\ell_i+2,\ldots,p\}$: $\mu^C_{SESM}(m^i_j)=w^i_j$.
\item For all $i\in[k]$ and $j\in\{\ell_i,\ell_i+1,\ldots,p-1\}$: $\mu^C_{SESM}(\widehat{m}^i_j)=\widehat{w}^i_{j+1}$.
\item For all $i\in[k]$ and $j\in[\ell_i-1]$: $\mu^C_{SESM}(\widehat{m}^i_j)=\widehat{w}^i_j$.
\item For all $i\in[k]$: $\mu^C_{SESM}(\widetilde{m}^i_{\ell_i})=\overline{w}^i_{\ell_i}$ and $\mu^C_{SESM}(\overline{m}^i_{\ell_i})=\widetilde{w}^i_{\ell_i}$.
\item For all $i\in[k]$ and $j\in[p]$ such that $j\neq\ell_i$: $\mu^C_{SESM}(\widetilde{m}^i_j)=\widetilde{w}^i_j$ and $\mu^C_{SESM}(\overline{m}^i_j)=\overline{w}^i_j$.
\item For all $i,j\in[k]$ where $i<j$: $\mu^C_{SESM}(m^{i,j}_{\ell_{i,j}})=w^{i,j}_{\ell_{i,j}}$ and $\mu^C_{SESM}(\overline{m}^{i,j}_{\ell_{i,j}})=\overline{w}^{i,j}_{\ell_{i,j}}$.
\item For all $i,j\in[k]$ where $i<j$ and $t\in[q^{i,j}]$ such that $t\neq\ell_{i,j}$: $\mu^C_{SESM}(m^{i,j}_t)=\overline{w}^{i,j}_t$ and $\mu^C_{SESM}(\overline{m}^{i,j}_t)=w^{i,j}_t$.
\item For all $i\in[\alpha']$: $\mu^C_{SESM}(m^i_{\hap})=w^i_{\hap}$.
\item $\mu^C_{SESM}(m^\star)=w^\star$.
\end{itemize}
\end{definition}

Observe that $\mu$ matches all agents of $\red_{SESM}(I)$. Let us first argue that $\mu^C_{SESM}$ is a stable matching.

\begin{lemma}\label{lem:w1SESMforwardSM}
Let $I=(G,(V^1,V^2,\ldots,V^k))$ be a \yesinstance\ of {\sc Multicolored Clique}. Let $C$ be a multicolored clique of $G$. Then, $\mu^C_{SESM}$ is a stable matching of $\red_{SESM}(I)$.
\end{lemma}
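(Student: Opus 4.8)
The plan is to exploit the fact, recorded immediately after Definition~\ref{def:cliquetoSESM}, that $\mu^C_{SESM}$ matches \emph{every} agent of $\red_{SESM}(I)$. Since no agent is left unmatched, cases (i)--(iii) of the blocking-pair definition are vacuous, so it suffices to rule out type-(iv) blocking pairs. Equivalently, for every edge $\{v_m,v_w\}$ of the primal graph I must show that at least one endpoint weakly prefers its $\mu^C_{SESM}$-partner. I would organize the whole argument as a scan over the men $m$ that are \emph{not} matched to a most-preferred woman: for each such $m$ and each woman $w$ with $\pos_m(w) < \pos_m(\mu^C_{SESM}(m))$, I verify that the partner $m'$ of $w$ under $\mu^C_{SESM}$ satisfies $\pos_w(m') < \pos_w(m)$.

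First I would dispose of the easy agents. Every happy man is matched to his happy partner, whom he ranks first, every happy woman ranks her happy partner first, and $m^\star$ is matched to $w^\star$ whose only ranked man is $m^\star$; hence no happy agent and neither garbage collector can be the strictly-preferring side of a blocking pair. Next, each enriched man $m^i_j$ and $\widehat{m}^i_j$ has a two-woman list whose entries both sit at the top, and $\mu^C_{SESM}$ matches him to one of them, so he never strictly prefers the other. The same top-match property holds for $\overline{m}^i_j$ and $\overline{m}^{i,j}_t$ in the configurations $j\neq\ell_i$ and $t\neq\ell_{i,j}$; and in the two remaining configurations the only woman they would block with ($\widetilde{w}^i_{\ell_i}$ for $\overline{m}^i_{\ell_i}$, resp. $w^{i,j}_{\ell_{i,j}}$ for $\overline{m}^{i,j}_{\ell_{i,j}}$) is matched to \emph{her} first choice. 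Thus the only men that can initiate a block are the leaders $m^i,\widehat{m}^i$, the consistency men $\widetilde{m}^i_{\ell_i}$, and the non-selected edge men $m^{i,j}_t$ with $t\neq\ell_{i,j}$; for each I list the women it prefers to its partner and check each of them is matched to a man it ranks higher.

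The crux --- and the only place the clique property enters --- is the interaction between a leader and the edge-women. Here I would invoke the ``leader form'' of the preference lists of $m^i$ and $\widehat{m}^i$ (conditions 1--4 together with the explicit positions assigned when happy women are inserted): each woman $w^{i,j}_t$ representing an edge incident to $v^i_s$ is ranked \emph{after} $w^i_s$, within the block of $w^i_s$. Since $m^i$ is matched to $w^i_{\ell_i}$, the women $m^i$ strictly prefers are exactly those in blocks $s<\ell_i$: the vertex-women $w^i_s$ (matched in $\mu^C_{SESM}$ to the top-choice enriched man $m^i_{s+1}$), the edge-women for edges incident to some $v^i_s$ with $s\neq\ell_i$, and happy women. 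Because $C$ is a clique, every selected edge $e^{i,j}_{\ell_{i,j}}$ is incident to $v^i_{\ell_i}$, so $w^{i,j}_{\ell_{i,j}}$ lies in block $\ell_i$ \emph{after} $w^i_{\ell_i}$ and is not among the women $m^i$ prefers; conversely every edge-woman $m^i$ does prefer corresponds to a non-selected edge ($t\neq\ell_{i,j}$) and is therefore matched to her first choice $\overline{m}^{i,j}_t$. The argument is symmetric for $\widehat{m}^i$ (whose blocks run in reverse), and, from the perspective of the selected edge-woman $w^{i,j}_{\ell_{i,j}}$, for all four leaders $m^i,\widehat{m}^i,m^j,\widehat{m}^j$, each of which ranks her after its own selected-vertex woman. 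I expect this to be the most delicate step, since it is precisely where choosing a non-incident edge would produce a blocking pair.

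Finally I would verify the consistency gadget locally. For $j=\ell_i$ the man $\widetilde{m}^i_{\ell_i}$ is matched to $\overline{w}^i_{\ell_i}$ and strictly prefers $\widetilde{w}^i_{\ell_i}$, $w^i_{\ell_i}$, $\widehat{w}^i_{\ell_i}$ and some happy women; but $\widetilde{w}^i_{\ell_i}$ is matched to her top choice $\overline{m}^i_{\ell_i}$, while $w^i_{\ell_i}$ and $\widehat{w}^i_{\ell_i}$ are matched to $m^i$ and $\widehat{m}^i$, both of which they rank ahead of $\widetilde{m}^i_{\ell_i}$ --- exactly because the Original and Mirror selectors both chose index $\ell_i$, which is how $\mu^C_{SESM}$ encodes consistency. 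The boundary cases $\widetilde{m}^i_1,\widetilde{m}^i_p$, and the women $\widetilde{w}^i_j,\overline{w}^i_j,\widehat{w}^i_j,w^i_j$ that are not matched to their absolute top, are all handled by the same reasoning: each man such a woman prefers is either indifferent, matched to his own top woman, or ranks her below his partner. Assembling these checks rules out every type-(iv) blocking pair and establishes that $\mu^C_{SESM}$ is stable.
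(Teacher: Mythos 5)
Your overall plan---every agent is matched, so only type-(iv) pairs can block, then scan the men who are not matched to their first choice---is exactly the paper's, and you correctly isolate and argue the one place the clique hypothesis is used: since every selected edge $e^{i,j}_{\ell_{i,j}}$ is incident to $v^i_{\ell_i}$, the woman $w^{i,j}_{\ell_{i,j}}$ sits in block $\ell_i$ \emph{after} $w^i_{\ell_i}$ in the leaders' lists, while every edge-woman a leader does prefer corresponds to a non-selected edge and is matched to her first choice $\overline{m}^{i,j}_t$. The Consistency-gadget case is also handled correctly. However, your dismissal of the ``easy'' men contains concrete errors. For the Edge Selector: by Definition~\ref{def:cliquetoSESM}, when $t\neq\ell_{i,j}$ the man $\overline{m}^{i,j}_t$ is matched to $w^{i,j}_t$, his \emph{second} choice, not his top---indeed your own crux paragraph states that $w^{i,j}_t$ is matched to $\overline{m}^{i,j}_t$, so your ``top-match property'' claim contradicts it. The missing check is that $\overline{w}^{i,j}_t$, whom he strictly prefers, is matched to $m^{i,j}_t$, her first choice (this holds, so the step is repairable, but it is absent). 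Conversely, for $t=\ell_{i,j}$ your justification is wrong in the opposite direction: $\overline{m}^{i,j}_{\ell_{i,j}}$ is matched to $\overline{w}^{i,j}_{\ell_{i,j}}$, his first choice, so no check is needed, whereas $w^{i,j}_{\ell_{i,j}}$ is matched to $m^{i,j}_{\ell_{i,j}}$, her \emph{last} choice (that is precisely how the gadget marks an edge as selected), so your claim that she ``is matched to her first choice'' is false. Similarly, the woman with whom $\overline{m}^i_{\ell_i}$ could block is $\overline{w}^i_{\ell_i}$, not his own partner $\widetilde{w}^i_{\ell_i}$; the check succeeds because $\overline{w}^i_{\ell_i}$ is matched to her first choice $\widetilde{m}^i_{\ell_i}$.

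The second gap is your treatment of the enriched men $m^i_j$ and $\widehat{m}^i_j$, dismissed on the ground that their two women ``both sit at the top,'' i.e., that the men are indifferent. {\sc SESM} has strict preferences; the construction's ``$\pos_{m^i_j}(w^i_j)=\pos_{m^i_j}(w^i_{j-1})=1$'' is evidently a typo, since the paper's satisfaction computation (Lemma~\ref{lem:w1SESMforwardMeasure}) uses $\pos_{m^i_j}(w^i_{j-1})=2$ and its stability argument asserts that $m^i_j$ prefers $w^i_j$ to $w^i_{j-1}$. Under the intended strict reading, every $m^i_j$ with $2\le j\le \ell_i$ is matched to his second choice $w^i_{j-1}$ and strictly prefers $w^i_j$, so one must verify that $w^i_j$ ranks her partner ($m^i$ if $j=\ell_i$, else $m^i_{j+1}$) above $m^i_j$---true, because $m^i_j$ occupies the last position of her list---and symmetrically for $\widehat{m}^i_j$ with $\ell_i\le j\le p-1$. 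Your write-up (and the ``indifferent'' disjunct in your closing paragraph) skips these checks entirely. All of the gaps are fixable, and the repaired argument coincides with the paper's proof; but as written the case analysis is internally inconsistent and incomplete.
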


\begin{proof}
First, notice that for every $i\in[\alpha']$, $m^i_{\hap}$ is matched to the woman he prefers the most, and therefore this man cannot belong to any blocking pair. Thus, since the preference list of $\mu^\star$ only contains happy women in addition to $w^\star$, we also have that $\mu^\star$ cannot belong to any blocking pair. For all $i\in[k]$ and $j\in\{2,3,\ldots,p\}$ such that $j>\ell_i$, $m^i_j$ is matched to the woman he prefers the most, and for all $i\in[k]$ and $j\in[p-1]$ such that $j<\ell_i$, $\widehat{m}^i_j$ is matched to the woman he prefers the most, thus these men cannot belong to any blocking pair. Moreover, for all $i\in[k]$ and $j\in[p]$ such that $j\neq\ell_i$: $\widetilde{m}^i_j$ and $\overline{m}^i_j$ are matched to the women they prefer the most, and thus these men cannot belong to any blocking pair as well. We also note that for all $i,j\in[k]$ where $i<j$, $m^{i,j}_{\ell_{i,j}}$ and $\overline{m}^{i,j}_{\ell_{i,j}}$ are matched to the women they prefer the most, and thus these men cannot belong to any blocking pair as well.

Next, we analyze each of the remaining men in $M$ separately.
\begin{itemize}
\item For all $i\in[k]$, recall that $\mu^C_{SESM}(m^i)=w^i_{\ell_i}$. The only women who $m^i$ prefers over $w^i_{\ell_i}$, who are not happy women, are those that belong to the sets $X=\{w^i_j: j\in[\ell_i-1]\}$ and $Y=\{w^{i,j}_t: j\in[k], j>i$, there exists $s\in[\ell_i-1]$ such that $e^{i,j}_t$ is incident to $v^i_s$ in $G\}$. However, for all $w^i_j\in X$, $\pos_{w^i_j}(\mu^C_{SESM}(w^i_j))=\pos_{w^i_j}(m^i_{j+1})<\pos_{w^i_j}(m^i)$, and for all $w^{i,j}_t\in Y$, $\pos_{w^{i,j}_t}(\mu^C_{SESM}(w^{i,j}_t))=\pos_{w^{i,j}_t}(m^{i,j}_t)<\pos_{w^{i,j}_t}(m^i)$. Thus, $m^i$ cannot belong to any blocking pair. Symmetrically, we derive that for all $i\in[k]$, $\widehat{m}^i$ also cannot belong to any blocking pair. Note that to ensure that both $m^i$ and $\widehat{m}^i$ do not belong to any blocking pair, we crucially rely on the definition of their preference lists to be of the form of a leader.

\item For all $i\in[k]$ and $j\in\{2,3,\ldots,p\}$ such that $j\leq \ell_i$, recall that $\mu^C_{SESM}(m^i_j)=w^i_{j-1}$. The only woman who $m^i_j$ prefers over $w^i_{j-1}$ is $w^i_j$. However, $w^i_j$ is matched to either $m^i$ or $m^i_{j+1}$, who she prefers over $m^i_j$. Thus, $m^i_j$ also cannot belong to a blocking pair. Symmetrically, we derive that for all $i\in[k]$ and $j\in[p-1]$ such that $j\geq \ell_i$, $\widehat{m}^i_j$ also cannot belong to a blocking pair.

\item For all $i\in[k]$, recall that $\mu^C_{SESM}(\widetilde{m}^i_{\ell_i})=\overline{w}^i_{\ell_i}$. The only women who $\widetilde{m}^i_{\ell_i}$ prefers over $\overline{w}^i_{\ell_i}$, who are not happy women, are $w^i_{\ell_i}$ (if $\ell_i\neq 1$),$\widehat{w}^i_{\ell_i}$ (if $\ell_i\neq p$) and $\widetilde{w}^i_{\ell_i}$. However, $w^i_{\ell_i}$ and $\widehat{w}^i_{\ell_i}$ are matched to $m^i$ and $\widehat{m}^i$, respectively, who they prefer over $\widetilde{m}^i_{\ell_i}$. Moreover, $\widetilde{w}^i_{\ell_i}$ is matched to $\overline{m}^i_{\ell_i}$, who she prefers over $\widetilde{m}^i_{\ell_i}$.
Thus, $\widetilde{m}^i_{\ell_i}$ cannot belong to any blocking pair.

\item For all $i\in[k]$, recall that $\mu^C_{SESM}(\overline{m}^i_{\ell_i})=\widetilde{w}^i_{\ell_i}$. The only woman who $\overline{m}^i_{\ell_i}$ prefers over $\widetilde{w}^i_{\ell_i}$ is $\overline{w}^i_{\ell_i}$. However, $\overline{w}^i_{\ell_i}$ is matched to $\widetilde{m}^i_{\ell_i}$, who she prefers over $\overline{m}^i_{\ell_i}$. Thus, $\overline{m}^i_{\ell_i}$ cannot belong to a blocking pair.

\item For all $i,j\in[k]$ where $i<j$ and $t\in[q^{i,j}]$ such that $t\neq\ell_{i,j}$, recall that $\mu^C_{SESM}(m^{i,j}_t)=\overline{w}^{i,j}_t$. The only woman who $m^{i,j}_t$ prefers over $\overline{w}^{i,j}_t$, who is not a happy woman, is $w^{i,j}_t$. However, $w^{i,j}_t$ is matched to $\overline{m}^{i,j}_t$, who she prefers over $m^{i,j}_t$. Thus, $m^{i,j}_t$ cannot belong to a blocking pair.

\item For all $i,j\in[k]$ where $i<j$ and $t\in[q^{i,j}]$ such that $t\neq\ell_{i,j}$, recall that $\mu^C_{SESM}(\overline{m}^{i,j}_t)=w^{i,j}_t$. The only woman who $\overline{m}^{i,j}_t$ prefers over $w^{i,j}_t$ is $\overline{w}^{i,j}_t$. However, $\overline{w}^{i,j}_t$ is matched to $m^{i,j}_t$, who she prefers over $\overline{m}^{i,j}_t$. Thus, $\overline{m}^{i,j}_t$ cannot belong to a blocking pair.
\end{itemize}

This concludes the proof of the lemma.
\end{proof}

In light of Lemma \ref{lem:w1SESMforwardSM}, the measure $\delta(\mu^C_{SESM})$ is well defined. We proceed to analyze this measure with the following lemma.

\begin{lemma}\label{lem:w1SESMforwardMeasure}
Let $I=(G,(V^1,V^2,\ldots,V^k))$ be a \yesinstance\ of {\sc Multicolored Clique}. Let $C$ be a multicolored clique of $G$. Then, $\delta(\mu^C_{SESM})=0$.
\end{lemma}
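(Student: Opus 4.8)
The plan is to use the observation, recorded just before the statement, that $\mu:=\mu^C_{SESM}$ matches \emph{every} agent of $\red_{SESM}(I)$, so that the sex-equality measure can be computed pair-by-pair:
\[\delta(\mu)=\sat_M(\mu)-\sat_W(\mu)=\sum_{(m,w)\in\mu}\bigl(\pos_m(w)-\pos_w(m)\bigr).\]
I would evaluate this sum by partitioning the matched pairs according to the gadget that produced them, and show the total vanishes. The easiest group to discard is that of the happy pairs: each $(m^t_{\hap},w^t_{\hap})$ is matched within itself, and since inserting a happy agent into another list only appends it to the end, the internal rankings are untouched, giving $\pos_{m^t_{\hap}}(w^t_{\hap})=\pos_{w^t_{\hap}}(m^t_{\hap})=1$. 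Hence every happy pair contributes $0$ to $\delta(\mu)$ and can be dropped.

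Next I would treat the Original and Mirror Vertex Selector gadgets of each color class $i$ \emph{together}, which is where the ``leader'' form of $m^i$ and $\widehat m^i$ is exploited. The enriched men $m^i_j$ and $\widehat m^i_j$ are all matched at position $1$, contributing a constant independent of $\ell_i$; the decisive point is the mirror identity $\pos_{m^i}(w^i_{\ell_i})+\pos_{\widehat m^i}(\widehat w^i_{\ell_i})=2+(1+|E(G)|+|E(G)|^{20})(p-1)$, which is independent of the selected vertex precisely because the two leaders rank their vertex-women in opposite orders and embed incident edge-women accordingly. On the women's side, for each index the roles of ``matched at position $1$'' and ``matched at position $4+|E(G)|^{20}$'' are swapped between $w^i_j$ and $\widehat w^i_j$, so the combined contribution of $W^i_{\bas}\cup\widehat W^i_{\bas}$ is again independent of $\ell_i$. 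The Edge Selector gadgets are handled by the analogous \emph{complementary} observation: for a color pair $(i,j)$, the selected edge contributes $-6-|E(G)|^{10}$ to $\delta(\mu)$ while each of the $q^{i,j}-1$ unselected edges contributes $2+|E(G)|^{10}$, a total depending only on $q^{i,j}$ and not on which edge $\ell_{i,j}$ is chosen.

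It remains to sum the Consistency-gadget contributions and the single Garbage Collector pair. The Consistency gadgets supply the large surplus in $\sat_W$, coming from each woman $\widetilde w^i_j$ ($j\neq\ell_i$) being matched at position $2+2^{k-i}|E(G)|^{40}$, whose aggregate produces the dominant term $(p-1)(2^k-1)|E(G)|^{40}$, while the offsets $2^{i-1}|E(G)|^{30}$ planted in the lists of the $\widetilde m^i_j$ produce the $(2^k-1)|E(G)|^{30}$-scale term. The whole point of the Garbage Collector pair $(m^\star,w^\star)$ is to counterweight this imbalance: since $m^\star$ ranks $w^\star$ only after a calibrated block of $\alpha$ happy women, it contributes exactly $\pos_{m^\star}(w^\star)-\pos_{w^\star}(m^\star)=\alpha$ to $\delta(\mu)$. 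The clean way to finish is therefore to prove that the combined contribution of all vertex-, edge- and consistency-gadget pairs equals $-\alpha$ — this is exactly the quantity $\alpha$ was reverse-engineered to equal, term by term across the four magnitudes $|E(G)|^{10},|E(G)|^{20},|E(G)|^{30},|E(G)|^{40}$ and the low-order $-9k+3pk+4k^2$ corrections — so that $\delta(\mu)=(-\alpha)+\alpha+0=0$.

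The main obstacle is purely the bookkeeping, in two respects. First, one must verify that every clique-dependent quantity — the indices $\ell_i$ in the vertex and consistency gadgets and $\ell_{i,j}$ in the edge gadgets — genuinely cancels, so that $\delta(\mu)$ is the \emph{same} constant for every clique; the mirror and complementary identities above are what guarantee this, with the boundary indices $j\in\{1,p\}$ (whose gadget lists are one agent shorter and carry the ``$+1$'' corrections to the happy-agent counts) requiring separate, careful verification. Second, one must confirm that the four scales of powers of $|E(G)|$ do not interfere, so that each can be matched independently against the corresponding summand of $\alpha$, leaving a total of exactly $0$.
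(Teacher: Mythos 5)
Your proposal is, in substance, the same proof as the paper's: a direct position-by-position computation over the matched pairs of $\mu^C_{SESM}$, gadget by gadget, with the final cancellation coming from the fact that $\alpha$ was defined to be exactly the negative of the combined gadget contribution. Organizing it as the single sum $\sum_{(m,w)\in\mu}(\pos_m(w)-\pos_w(m))$ rather than computing $\sat_M$ and $\sat_W$ separately is only a cosmetic reorganization (note, though, that the paper's two-sum form is what gets reused later: the forward lemma for {\sc BSM} needs the actual values of $\sat_M(\mu^C_{SESM})$ and $\sat_W(\mu^C_{SESM})$, not merely their difference). Your mirror identity for the leaders, the role-swap on the women's side of the vertex selectors, the clique-independence of the Edge Selector totals, and the identification of the Garbage Collector's contribution as exactly $\alpha$ are all correct and agree with the paper's computation.

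One slip to fix in the bookkeeping: the enriched men are \emph{not} all matched at position $1$. Under $\mu^C_{SESM}$, for $2\le j\le \ell_i$ the man $m^i_j$ is matched to $w^i_{j-1}$, his \emph{second} choice, and for $j\ge \ell_i$ the man $\widehat{m}^i_j$ is matched to $\widehat{w}^i_{j+1}$, again his second choice (the construction text lists both women in $m^i_j$'s domain at position $1$, but that is a typo---positions are injective in {\sc SM}, and the paper's own computation uses $\pos_{m^i_j}(w^i_{j-1})=2$). What is true, and what your argument actually needs, is that the counts of first- and second-choice matches swap between the Original and Mirror gadgets: $(\ell_i-1)$ second-choice plus $(p-\ell_i)$ first-choice matches on one side, and $(p-\ell_i)$ second-choice plus $(\ell_i-1)$ first-choice matches on the other, giving a combined contribution of $3(p-1)$ per color class, independent of $\ell_i$. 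If you carried $2(p-1)$ instead, your grand total would come out to $-k(p-1)$ rather than $0$, so this constant matters for the term-by-term match against $\alpha$ that your last step relies on.
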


\begin{proof}
Let $U=\{v^1_{\ell_1},v^2_{\ell_2},\ldots,v^t_{\ell_k}\}$ and $W=\{e^{i,j}_{\ell_{i,j}}: i,j\in[k], i<j\}$ denote the vertex and edge sets, respectively, of the multicolored clique $C$. We first analyze the positions of women in the preference lists of their matched partners.
\begin{itemize}
\item For all $i\in[k]$, $\pos_{m^i}(w^i_{\ell_i})=1+(1+|E(G)|+|E(G)|^{20})(\ell_i-1)$ and $\pos_{\widehat{m}^i}(\widehat{w}^i_{\ell_i})=1+(1+|E(G)|+|E(G)|^{20})(p-\ell_i)$.
\item For all $i\in[k]$ and $j\in\{2,3,\ldots,p\}$ such that $j\leq \ell_i$: $\pos_{m^i_j}(w^i_{j-1})=2$.
\item For all $i\in[k]$ and $j\in\{2,3,\ldots,p\}$ such that $j>\ell_i$: $\pos_{m^i_j}(w^i_j)=1$.
\item For all $i\in[k]$ and $j\in[p-1]$ such that $j\geq \ell_i$: $\pos_{\widehat{m}^i_j}(\widehat{w}^i_{j+1})=2$.
\item For all $i\in[k]$ and $j\in[p-1]$ such that $j<\ell_i$: $\pos_{\widehat{m}^i_j}(\widehat{w}^i_j)=1$.
\item For all $i\in[k]$: $\pos_{\widetilde{m}^i_{\ell_i}}(\overline{w}^i_{\ell_i})=4+2^{i-1}|E(G)|^{30}$ and $\pos_{\overline{m}^i_j}(\widetilde{w}^i_j)=2$.
\item For all $i\in[k]$ and $j\in[p]$ such that $j\neq\ell_i$: $\pos_{\widetilde{m}^i_j}(\widetilde{w}^i_j)=1$ and $\pos_{\overline{m}^i_j}(\overline{w}^i_j)=1$.
\item For all $i\in[k]$, $j\in[p]$: $\pos_{m^{i,j}_{\ell_{i,j}}}(w^{i,j}_{\ell_{i,j}})=1$ and $\pos_{\overline{m}^{i,j}_{\ell_{i,j}}}(\overline{w}^{i,j}_{\ell_{i,j}})=1$.
\item For all $i\in[k]$, $j\in[p]$ and $t\in[q^{i,j}]$ such that $t\neq\ell_{i,j}$: $\pos_{m^{i,j}_t}(\overline{w}^{i,j}_t)=2+|E(G)|^{10}$ and $\pos_{\overline{m}^{i,j}_t}(w^{i,j}_t)=2$.
\item For all $i\in[\alpha']$: $\pos_{m^i_{\hap}}(w^i_{\hap})=1$.
\item $\pos_{m^\star}(w^\star)=\alpha+1$.
\end{itemize}

Thus, we have that the following equality holds.
\[\begin{array}{lll}
\sat_M(\mu^C_{SESM}) & = & \displaystyle{\sum_{i=1}^k\left(2 + (1+|E(G)|+|E(G)|^{20})(\ell_i-1) + (1+|E(G)|+|E(G)|^{20})(p-\ell_i)\right)}\\

&& + \displaystyle{\sum_{i=1}^k\sum_{j=2}^{\ell_i}2 + \sum_{i=1}^k\sum_{j=\ell_i+1}^{p}1 + \sum_{i=1}^k\sum_{j=\ell_i}^{p-1}2 + \sum_{i=1}^k\sum_{j=1}^{\ell_i-1}1}\\

&& + \displaystyle{\sum_{i=1}^k\left(4+2^{i-1}|E(G)|^{30} + 2\right) + \sum_{i=1}^k\sum_{j\in[p],j\neq\ell_i}\left(1 +1\right)}\\

&& + \displaystyle{\sum_{i=1}^{k-1}\sum_{j=i+1}^k\left(1+1\right) + \sum_{i=1}^{k-1}\sum_{j=i+1}^k\sum_{t\in[q^{i,j}],t\neq\ell_{i,j}}\left(2+|E(G)|^{10} + 2\right)}\\

\medskip
&& + \alpha' + (\alpha+1)\\

& = & 2k + (1+|E(G)|+|E(G)|^{20})(p-1)k + 3kp - 3k + 6k + (2^k-1)|E(G)|^{30}\\

\medskip
&&  + 2(p-1)k + k(k-1) + (4+|E(G)|^{10})(|E(G)|-{k \choose 2}) + \alpha' + \alpha + 1\\

& = & 1 + 3k + 6pk -k^2 + (pk-k+4)|E(G)|\\
&& + (p-1)k|E(G)|^{20} + (|E(G)|-{k \choose 2})|E(G)|^{10} + (2^k-1)|E(G)|^{30} + \alpha' + \alpha.
\end{array}\]

Second, we analyze the positions of men in the preference lists of their matched partners.
\begin{itemize}
\item For all $i\in[k]$: $\pos_{w^i_{\ell_i}}(m^i)=2$ and $\pos_{\widehat{w}^i_{\ell_i}}(\widehat{m}^i)=2$.
\item For all $i\in[k]$ and $j\in[\ell_i-1]$: $\pos_{w^i_j}(m^i_{j+1})=1$.
\item For all $i\in[k]$ and $j\in[p]$ such that $j>\ell_i$: $\pos_{w^i_j}(m^i_{j+1})=4+|E(G)|^{20}$.
\item For all $i\in[k]$ and $j\in[p]$ such that $j>\ell_i$: $\pos_{\widehat{w}^i_j}(\widehat{m}^i_{j-1})=1$.
\item For all $i\in[k]$ and $j\in[\ell_i-1]$: $\pos_{w^i_j}(m^i_{j+1})=4+|E(G)|^{20}$.
\item For all $i\in[k]$: $\pos_{\widetilde{w}^i_{\ell_i}}(\overline{m}^i_{\ell_i})=1$ and $\pos_{\overline{w}^i_{\ell_i}}(\widetilde{m}^i_{\ell_i})=1$.
\item For all $i\in[k]$ and $j\in[p]$ such that $j\neq\ell_i$: $\pos_{\widetilde{w}^i_j}(\widetilde{m}^i_j)=2+2^{k-i}|E(G)|^{40}$ and $\pos_{\overline{w}^i_j}(\overline{m}^i_j)=2$.
\item For all $i\in[k]$, $j\in[p]$: $\pos_{w^{i,j}_{\ell_{i,j}}}(m^{i,j}_{\ell_{i,j}})=6+|E(G)|^{10}$ and $\pos_{\overline{w}^{i,j}_{\ell_{i,j}}}(\overline{m}^{i,j}_{\ell_{i,j}})=2$.
\item For all $i\in[k]$, $j\in[p]$ and $t\in[q^{i,j}]$ such that $t\neq\ell_{i,j}$: $\pos_{w^{i,j}_t}(\overline{m}^{i,j}_t)=1$ and $\pos_{\overline{w}^{i,j}_t}(m^{i,j}_t)=1$.
\item For all $i\in[\alpha']$: $\pos_{w^i_{\hap}}(m^i_{\hap})=1$.
\item $\pos_{w^\star}(m^\star)=1$.
\end{itemize}

Thus, we have that the following equality holds.
\[\begin{array}{lll}
\sat_W(\mu^C_{SESM}) & = & \displaystyle{\sum_{i=1}^k\left(2+2\right) + \sum_{i=1}^k\sum_{j=1}^{\ell_i-1}1 + \sum_{i=1}^k\sum_{j=\ell_i+1}^p(4+|E(G)|^{20}) + \sum_{i=1}^k\sum_{j=\ell_i+1}^p1}\\

&& + \displaystyle{\sum_{i=1}^k\sum_{j=1}^{\ell_i-1}(4+|E(G)|^{20}) + \sum_{i=1}^k\left(1+1\right) + \sum_{i=1}^k\sum_{j\in[p],j\neq\ell_i}\left(2+2^{k-i}|E(G)|^{40} + 2\right)}\\

\medskip
&& + \displaystyle{\sum_{i=1}^{k-1}\sum_{j=i+1}^k\left(6+|E(G)|^{10}+2\right) + \sum_{i=1}^{k-1}\sum_{j=i+1}^k\sum_{t\in[q^{i,j}],t\neq\ell_{i,j}}\left(1+1\right) + \alpha' + 1}\\

& = & 4k + (p-1)k + (p-1)k(4+|E(G)|^{20}) + 2k + 4(p-1)k\\

\medskip
&& + (p-1)(2^k-1)|E(G)|^{40} + {k\choose 2}(8+|E(G)|^{10}) + 2(|E(G)|-{k\choose 2}) + \alpha' + 1\\

& = & 1 - 6k + 9pk + 3k^2 + 2|E(G)|\\

&& + (p-1)k|E(G)|^{20} + {k\choose 2}|E(G)|^{10} + (p-1)(2^k-1)|E(G)|^{40} + \alpha'.
\end{array}\]

Thus, to derive that $\sat_M(\mu^C_{SESM})=\sat_W(\mu^C_{SESM})$, which would imply that $\delta(\mu^C_{SESM})=0$, we need to show that the following equality holds.
\[\begin{array}{ll}
& 3k + 6pk -k^2 + (pk-k+4)|E(G)| + (|E(G)|-{k \choose 2})|E(G)|^{10} + (2^k-1)|E(G)|^{30} + \alpha\\

= & - 6k + 9pk + 3k^2 + 2|E(G)| + {k\choose 2}|E(G)|^{10} + (p-1)(2^k-1)|E(G)|^{40}
\end{array}\]

However, recall that
\[\begin{array}{ll}
\alpha = & -9k +3pk +4k^2 - (pk-k+2)|E(G)|\\
& - (|E(G)|-2{k\choose 2})|E(G)|^{10} - (2^k-1)|E(G)|^{30} + (p-1)(2^k-1)|E(G)|^{40}.
\end{array}\]

This concludes the proof of the lemma.
\end{proof}

Combining Lemmata \ref{lem:w1SESMforwardSM} and \ref{lem:w1SESMforwardMeasure}, we derive the following corollary.

\begin{corollary}\label{cor:w1SESMforward}
Let $I$ be a \yesinstance\ of {\sc Multicolored Clique}. Then, for the instance $\red_{SESM}(I)$ of {\sc SESM}, $\Delta=0$.
\end{corollary}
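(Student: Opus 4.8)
The plan is to combine the two immediately preceding lemmata to establish the forward direction of the reduction's correctness. Specifically, Corollary \ref{cor:w1SESMforward} asks us to show that whenever $I$ is a \yesinstance\ of {\sc Multicolored Clique}, the optimal sex-equality value $\Delta$ of the constructed instance $\red_{SESM}(I)$ equals $0$. Recall that $\Delta = \min_{\mu\in{\cal S}}|\delta(\mu)|$ by definition, so $\Delta$ is a nonnegative integer, and therefore $\Delta = 0$ is equivalent to the existence of some stable matching $\mu$ with $\delta(\mu) = 0$.

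First I would invoke the hypothesis that $I$ is a \yesinstance: by definition of {\sc Multicolored Clique}, $G$ contains a multicolored clique $C$, i.e.\ a clique consisting of exactly one vertex from each color class, whose vertex set $U=\{v^1_{\ell_1},\ldots,v^k_{\ell_k}\}$ and edge set $W=\{e^{i,j}_{\ell_{i,j}}: i,j\in[k], i<j\}$ are as specified. This clique $C$ lets me apply Definition \ref{def:cliquetoSESM} to form the concrete matching $\mu^C_{SESM}$ of $\red_{SESM}(I)$. Next I would apply Lemma \ref{lem:w1SESMforwardSM}, which guarantees that $\mu^C_{SESM}\in{\cal S}$, i.e.\ it is genuinely a stable matching; this is what makes $\delta(\mu^C_{SESM})$ well defined. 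Then I would apply Lemma \ref{lem:w1SESMforwardMeasure}, which states precisely that $\delta(\mu^C_{SESM})=0$.

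From these two facts the conclusion is immediate: since $\mu^C_{SESM}\in{\cal S}$ and $|\delta(\mu^C_{SESM})|=0$, we obtain $\Delta=\min_{\mu\in{\cal S}}|\delta(\mu)|\leq 0$, and since $|\delta(\mu)|\geq 0$ for every $\mu$, in fact $\Delta=0$. No calculation is required here, as all the arithmetic establishing $\sat_M(\mu^C_{SESM})=\sat_W(\mu^C_{SESM})$ has already been discharged inside Lemma \ref{lem:w1SESMforwardMeasure}.

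In truth there is no substantive obstacle in this corollary itself: it is a purely logical assembly of the two lemmata, and the only subtlety worth flagging is the observation that $\Delta$ is defined as a minimum of absolute values, so a single witness achieving value $0$ certifies optimality (the minimum cannot drop below zero). The genuine difficulty lies upstream, in the two lemmata being combined---particularly the exact bookkeeping of positions and the cancellation engineered by the definition of $\alpha$ in Lemma \ref{lem:w1SESMforwardMeasure}---but those are assumed as given for the purpose of proving this corollary.
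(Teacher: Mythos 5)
Your proposal is correct and matches the paper's approach exactly: the paper derives the corollary by "Combining Lemmata \ref{lem:w1SESMforwardSM} and \ref{lem:w1SESMforwardMeasure}," precisely the two facts you invoke. Your additional remark that $\Delta$ is a minimum of nonnegative values, so a single witness with $|\delta(\mu)|=0$ certifies $\Delta=0$, merely spells out the step the paper leaves implicit.
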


This concludes the proof of the forward direction.

\medskip
\myparagraph{Reverse Direction.} Second, we prove that given an instance $I$ of {\sc Multicolored Clique}, if for the instance $\red_{SESM}(I)$ of {\sc SESM}, $\Delta=0$, then we can construct a solution for $I$. To this end, we first need to analyze the structure of stable matchings of $\red_{SESM}(I)$ whose sex-equality measure is $0$. Let us begin by proving the following two lemmata.

\begin{lemma}\label{lem:w1SESMAll}
Let $I$ be an of {\sc Multicolored Clique}. Every stable matching of $\red_{SESM}(I)$ matches all agents.
\end{lemma}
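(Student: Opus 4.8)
The plan is to sidestep any reasoning about an arbitrary stable matching and instead invoke Proposition~\ref{lem:matchSame}: since $\red_{SESM}(I)$ is an instance of {\sc SM} (its preference lists contain no ties), all stable matchings share the same domain and the same image, hence match exactly the same set of agents. Therefore it suffices to produce a \emph{single} stable matching of $\red_{SESM}(I)$ that matches all agents. The important point is that the lemma quantifies over an \emph{arbitrary} instance $I$, so I may not reuse $\mu^C_{SESM}$ from Definition~\ref{def:cliquetoSESM}: that matching is defined only when $G$ contains a multicolored clique, whereas I need a perfect stable matching that exists uniformly, in particular on every \noinstance.

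Accordingly, the first step is to write down an explicit perfect matching $\mu^{\mathrm{def}}$ available for every $I$. I obtain it from Definition~\ref{def:cliquetoSESM} by formally setting $\ell_i:=1$ for every color class $i\in[k]$ and by placing \emph{every} Edge Selector gadget in its ``unselected'' configuration (that is, treating every index $t$ as though $t\neq\ell_{i,j}$). Explicitly: each leader $m^i$ is matched to $w^i_1$ and each mirror leader $\widehat m^i$ to $\widehat w^i_1$; the enriched men of the Vertex Selectors, the Consistency agents and the Edge Selector agents are matched exactly as in that $\ell_i=1$, no-edge case; every happy pair is matched internally; and $m^\star$ is matched to $w^\star$. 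Inspecting the agent lists shows directly that $\mu^{\mathrm{def}}$ leaves no agent unmatched.

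The second step is to verify that $\mu^{\mathrm{def}}$ is stable. For the happy agents, for $m^\star$, and for all enriched, Consistency and Edge Selector agents, the verification is the same as in the proof of Lemma~\ref{lem:w1SESMforwardSM}, since those arguments use only the local structure of the gadgets and never the clique. The only place where Lemma~\ref{lem:w1SESMforwardSM} genuinely relies on the clique (the remark that one ``crucially relies on the form of a leader'') concerns the two leaders $m^i,\widehat m^i$, so that is the case I would redo. Here it is in fact easier than in general: as $\ell_i=1$, the woman $w^i_1$ is the top of $\pos_{m^i}$, so $m^i$ is matched to his most-preferred woman and lies in no blocking pair whatsoever. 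For $\widehat m^i$, matched to $\widehat w^i_1$, every woman he strictly prefers to $\widehat w^i_1$ is a mirror woman $\widehat w^i_j$ with $j\ge 2$, an edge woman $w^{i,j}_t$, or a happy woman; in $\mu^{\mathrm{def}}$ each of these is matched to the man she ranks first (respectively $\widehat m^i_{j-1}$, $\overline m^{i,j}_t$, or her happy partner), so none can block with $\widehat m^i$. Hence $\mu^{\mathrm{def}}$ is stable.

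I expect the main difficulty to be conceptual bookkeeping rather than a hard estimate: one must notice that the statement is about \emph{all} instances and therefore that the clique-based matching is unavailable, and then exhibit a uniform perfect stable matching. Once the ``$\ell_i=1$, no edge selected'' matching is fixed, the only genuinely new verification is that the leaders remain non-blocking \emph{without} the clique, and this goes through precisely because $m^i$ is pinned to his top choice while every woman $\widehat m^i$ prefers is already matched to her own favourite. The lemma then follows at once from Proposition~\ref{lem:matchSame}.
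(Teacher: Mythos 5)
Your proof is correct, and its strategy coincides with the paper's: invoke Proposition~\ref{lem:matchSame} to reduce the claim to exhibiting a single stable matching of $\red_{SESM}(I)$ that matches all agents, taking care that the witness must exist for \emph{every} instance $I$ and not only when $G$ contains a multicolored clique. The difference lies in the witness itself. You instantiate Definition~\ref{def:cliquetoSESM} with $\ell_i=1$ and every Edge Selector gadget unselected, and consequently you must re-check stability for each man who is not matched to the top of his preference list ($\widehat{m}^i$, the mirror enriched men $\widehat{m}^i_j$, the pair $\widetilde{m}^i_1,\overline{m}^i_1$, and the edge men) --- which you do correctly, since those verifications in Lemma~\ref{lem:w1SESMforwardSM} are indeed local and clique-free. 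The paper's witness is simpler: match every man except $m^\star$ to the woman he prefers most, and match $m^\star$ to $w^\star$ (this is a valid perfect matching because, by construction, no two men share a top choice and every woman other than $w^\star$ is some man's top choice). Stability is then immediate: a man holding his top choice can never block, and $m^\star$ cannot block since each woman he prefers to $w^\star$ is a happy woman matched to her own most preferred man; the conclusion follows from Proposition~\ref{lem:matchSame} together with the fact that the number of men equals the number of women. Both witnesses are valid; the paper's choice buys a one-line stability check, while yours costs a gadget-by-gadget verification but avoids having to observe that the men's top choices are pairwise distinct.
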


\begin{proof}
Observe that the matching $\mu$ that matches every man to the woman he prefers the most, except for $m^\star$ who is matched to $w^\star$, is a stable matching. Indeed, in this matching, it is clear that no man but $m^\star$ can belong to a blocking pair simply because there is no woman who such a man prefers over his current partner, and the man $m^\star$ cannot belong to a blocking pair since all of the women who he prefers over $w^\star$ are matched to their most preferred men. Thus, by Proposition \ref{lem:matchSame}, we deduce that every stable matching of $\red_{SESM}(I)$ matches all men. Since the number of men is equal to the number of women, we conclude the correctness of the lemma.
\end{proof}

\begin{lemma}\label{lem:w1SESMReverseHap}
Let $I$ be an of {\sc Multicolored Clique}. Every stable matching $\mu$ of $\red_{SESM}(I)$ satisfies the following conditions.
\begin{enumerate}
\item\label{item:w1SESMReverseHap1} For all $i\in[k]$, $\mu(m^i)\in W^i_{\bas}$ and $\mu(\widehat{m}^i)\in \widehat{W}^i_{\bas}$.
\item\label{item:w1SESMReverseHap2} For all $i\in[k]$ and $j\in[p]$, $\mu(\widetilde{m}^i_j)\in\{\widetilde{w}^i_j,\overline{w}^i_j\}$.
\item\label{item:w1SESMReverseHap3} For all $i\in[\alpha']$, $\mu(m^i_{\hap})=w^i_{\hap}$.
\item\label{item:w1SESMReverseHap4} $\mu(m^\star)=w^\star$.
\end{enumerate}
\end{lemma}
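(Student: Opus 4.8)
The plan is to establish the four conditions in the order \ref{item:w1SESMReverseHap3}, \ref{item:w1SESMReverseHap4}, \ref{item:w1SESMReverseHap2}, \ref{item:w1SESMReverseHap1}, since each later condition leans on the earlier ones. Throughout, I would use that every stable matching $\mu$ matches all agents (Lemma \ref{lem:w1SESMAll}), so that whenever an agent's list of available candidates shrinks to a single element, its partner is forced.

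First, for condition \ref{item:w1SESMReverseHap3}, I would fix a happy pair $(m^i_{\hap},w^i_{\hap})$ and recall that, by the way happy agents are appended to other agents' lists, each of $m^i_{\hap}$ and $w^i_{\hap}$ keeps the other at position $1$. If $\mu(m^i_{\hap})\neq w^i_{\hap}$, then since both agents are matched to other partners we have $\pos_{m^i_{\hap}}(w^i_{\hap})=1<\pos_{m^i_{\hap}}(\mu(m^i_{\hap}))$ and $\pos_{w^i_{\hap}}(m^i_{\hap})=1<\pos_{w^i_{\hap}}(\mu^{-1}(w^i_{\hap}))$, so $(m^i_{\hap},w^i_{\hap})$ is a blocking pair, contradicting stability. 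Hence $\mu(m^i_{\hap})=w^i_{\hap}$; in particular no non-happy agent is ever matched to a happy agent, a fact I will reuse repeatedly. Condition \ref{item:w1SESMReverseHap4} is then immediate: the preference list of $w^\star$ contains only $m^\star$, so by Lemma \ref{lem:w1SESMAll} the woman $w^\star$ is matched to $m^\star$.

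For condition \ref{item:w1SESMReverseHap2}, I would analyze the four-agent core $\{\widetilde{m}^i_j,\overline{m}^i_j,\widetilde{w}^i_j,\overline{w}^i_j\}$ of the Consistency gadget. After discarding happy men by condition \ref{item:w1SESMReverseHap3}, the list of $\widetilde{w}^i_j$ is contained in $\{\overline{m}^i_j,\widetilde{m}^i_j\}$ and the list of $\overline{w}^i_j$ is exactly $\{\widetilde{m}^i_j,\overline{m}^i_j\}$. Thus the two distinct women $\widetilde{w}^i_j,\overline{w}^i_j$ are each matched (Lemma \ref{lem:w1SESMAll}) to a man of the two-element set $\{\widetilde{m}^i_j,\overline{m}^i_j\}$; by injectivity they are matched to \emph{distinct} such men, so $\{\mu^{-1}(\widetilde{w}^i_j),\mu^{-1}(\overline{w}^i_j)\}=\{\widetilde{m}^i_j,\overline{m}^i_j\}$. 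In particular $\widetilde{m}^i_j$ is matched to one of $\widetilde{w}^i_j,\overline{w}^i_j$, yielding condition \ref{item:w1SESMReverseHap2}. Note this step uses only condition \ref{item:w1SESMReverseHap3}, not stability beyond it.

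The main obstacle is condition \ref{item:w1SESMReverseHap1}, which I would prove by a global counting argument inside the Original Vertex Selector gadget (the Mirror case being symmetric). The $p-1$ enriched men $m^i_2,\ldots,m^i_p$ each have preference list $\{w^i_{j-1},w^i_j\}\subseteq W^i_{\bas}$, so by Lemma \ref{lem:w1SESMAll} they are matched injectively into $W^i_{\bas}$, covering exactly $p-1$ of its $p$ women; let $w^i_{j^\star}$ be the uncovered one. Inspecting the list of $w^i_{j^\star}$, its non-happy candidates lie in $\{m^i_{j^\star},m^i_{j^\star+1},m^i,\widetilde{m}^i_{j^\star}\}$: condition \ref{item:w1SESMReverseHap3} removes the happy men, condition \ref{item:w1SESMReverseHap2} removes $\widetilde{m}^i_{j^\star}$, and the adjacent enriched men $m^i_{j^\star},m^i_{j^\star+1}$ are excluded precisely because $w^i_{j^\star}$ is uncovered. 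Since $w^i_{j^\star}$ must be matched, its only remaining candidate is $m^i$, forcing $\mu(m^i)=w^i_{j^\star}\in W^i_{\bas}$; in particular $m^i$ cannot be matched to an edge-woman or a happy woman. The delicate point, and where I would be most careful, is checking that after these exclusions $w^i_{j^\star}$ indeed has no candidate other than $m^i$, treating the boundary indices $j^\star\in\{1,p\}$ separately (where the list of $w^i_{j^\star}$ is shorter). The identical argument applied to $\widehat{m}^i$ in the Mirror Vertex Selector gadget gives $\mu(\widehat{m}^i)\in\widehat{W}^i_{\bas}$, completing the proof.
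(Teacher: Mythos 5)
Your proposal is correct, but for Conditions \ref{item:w1SESMReverseHap2} and \ref{item:w1SESMReverseHap1} it takes a genuinely different route from the paper. The paper proves both of these by blocking-pair contradictions: for Condition \ref{item:w1SESMReverseHap2} it assumes $\widetilde{m}^i_j$ is matched into $\{w^i_j,\widehat{w}^i_j\}$ (or unmatched), deduces via stability that $\widetilde{w}^i_j$ must then be matched to $\overline{m}^i_j$, and exhibits the blocking pair $(\overline{m}^i_j,\overline{w}^i_j)$; for Condition \ref{item:w1SESMReverseHap1} it assumes $m^i$ is matched to an edge-woman $w^{i,j}_t$ and derives a blocking pair inside the \emph{Edge Selector} gadget (forcing $\overline{m}^{i,j}_t$ onto $\overline{w}^{i,j}_t$ and leaving $m^{i,j}_t$ stranded). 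You instead argue almost entirely combinatorially: after Condition \ref{item:w1SESMReverseHap3}, you use Lemma \ref{lem:w1SESMAll} (everyone is matched) plus injectivity and the domains of the preference lists to \emph{force} partners---for Condition \ref{item:w1SESMReverseHap2} by observing that $\widetilde{w}^i_j$ and $\overline{w}^i_j$ must absorb both of $\widetilde{m}^i_j,\overline{m}^i_j$, and for Condition \ref{item:w1SESMReverseHap1} by a counting argument inside the \emph{Vertex Selector} gadget (the $p-1$ enriched men cover all but one woman of $W^i_{\bas}$, whose only surviving candidate is $m^i$), never touching the Edge Selector agents at all. Both arguments are sound (I checked the preference-list domains, including the boundary cases $j^\star\in\{1,p\}$ where $w^i_1$ has no happy men or $\widetilde{m}^i_1$ in her list); your version is more elementary in that stability is invoked only for the happy pairs, and it positively identifies $\mu(m^i)$ rather than merely excluding the bad cases, whereas the paper's blocking-pair analysis stays local to each gadget and makes explicit the mechanism (edge-women changing partners) that the rest of the reduction's correctness proof later exploits.
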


\begin{proof}
Let $\mu$ be a stable matching of $\red_{SESM}(I)$. Since for all $i\in[\alpha']$, $m^i_{\hap}$ and $w^i_{\hap}$ prefer each other over all other agents, they must be matched to one another (by $\mu$), else they would have formed a blocking pair. Then, since apart from happy women, the preference lists of $m^\star$ and $w^\star$ only contain each other, they also must be matched to one another, else they would have formed a blocking pair. We have thus proved the satisfaction of Conditions \ref{item:w1SESMReverseHap3} and \ref{item:w1SESMReverseHap4}.

Next, to prove the satisfaction of Condition \ref{item:w1SESMReverseHap2}, consider some man $\widetilde{m}^i_j\in \widetilde{M}^i_j$. Since Condition \ref{item:w1SESMReverseHap3} is satisfied, $\widetilde{m}^i_j$ is either unmatched or matched to a woman in $\{\widetilde{w}^i_j,w^i_j,\widehat{w}^i_j,\overline{w}^i_j\}$. Suppose, by way of contradiction, that $\widetilde{m}^i_j$ is either unmatched or matched to a woman in $\{w^i_j,\widehat{w}^i_j\}$. Then, since $\widetilde{m}^i_j$ prefers being matched to $\widetilde{w}^i_j$ over his current status, yet $\mu$ does not have any blocking pair, we have that $\widetilde{w}^i_j$ must be matched to (since she is the only unhappy man that she prefers over $\widetilde{m}^i_j$). Then, however $\overline{w}^i_j$ is necessarily left unmatched, and thus she forms a blocking pair together with $\overline{m}^i_j$, reaching a contradiction. 

Finally, to prove the satisfaction of Condition \ref{item:w1SESMReverseHap1}, consider some index $i\in[k]$. Since Condition \ref{item:w1SESMReverseHap3} is satisfied, $m^i$ is either unmatched or matched to a woman in $W^i_{\bas}\cup (\bigcup_{j=i+1}^kW^{i,j}_{\bas})$. Suppose, by way of contradiction, that $m^i$ is either unmatched or matched to a woman in $(\bigcup_{j=i+1}^kW^{i,j}_{\bas})$. By Lemma~\ref{lem:w1SESMAll}, the first possibility is not feasible, and therefore there exists $j\in[k]$ such that $m^i$ is matched to some woman $w^{i,j}_t\in W^{i,j}_t$. Since $w^{i,j}_t$ prefers $\overline{m}^{i,j}_t$ over $m^i$, yet $\mu$ does not have any blocking pair, we have that $\overline{m}^{i,j}_t$ must be matched to $\overline{w}^{i,j}_t$ (since she is the only unhappy man that she prefers over $w^{i,j}_t$). Then, however $m^{i,j}_t$ is necessarily left unmatched, and thus he forms a blocking pair together with $\overline{w}^{i,j}_t$, reaching a contradiction. Symmetrically, we derive that $\mu(\widehat{m}^i)\in \widehat{W}^i_{\bas}$. This concludes the proof of the lemma.
\end{proof}

To proceed with our proof of correctness of our reduction, we need to analyze the sex-equality measure. For this purpose, we use the notation defined below, which is well-defined due to Lemmata \ref{lem:w1SESMAll} and \ref{lem:w1SESMReverseHap}.

\begin{definition}\label{def:w1hardSESMABC}
Let $I$ be an instance of {\sc Multicolored Clique}, and let $\mu$ be a stable matching of $\red_{SESM}(I)$. For all $i\in[k]$, let $a(\mu,i)$ and $\widehat{a}(\mu,i)$ denote the indices $j\in[p]$ and $\widehat{j}\in[p]$, respectively, such that $\mu(m^i)=w^i_j$ and $\mu(\widehat{m}^i)=\widehat{w}^i_j$. Moreover, for all $i\in[k]$, denote $b(\mu,i)=|\{j\in[p]: \mu(\widetilde{m}^i_j)=\overline{w}^i_j\}|$. Finally, denote $c(\mu)=|\{(i,j,t)\in[k]\times[k]\times[q^{i,j}]: i<j, \mu(m^{i,j}_t)=\overline{w}^i_j\}|$.
\end{definition}

Next, we analyze the sex-equality measure.

\begin{lemma}\label{lem:w1SESMReverseSpecificMeasure}
Let $I$ be an instance of {\sc Multicolored Clique}, and let $\mu$ be a stable matching of $\red_{SESM}(I)$. Then, for some $-100|E(G)|^2\leq x\leq 100|E(G)|^2$, it holds that
\[\begin{array}{ll}
\sat_M(\mu)-\alpha' = & \displaystyle{\sum_{i=1}^k(p+a(\mu,i)-\widehat{a}(\mu,i)-1)|E(G)|^{20} + (c(\mu)-|E(G)|+2{k\choose 2})|E(G)|^{10}}\\
&+ \displaystyle{\left((\sum_{i=1}^kb(\mu,i)2^{i-1})-2^k+1\right)|E(G)|^{30} + (p-1)(2^k-1)|E(G)|^{40} + x}.
\end{array}\]

Moreover, for some $-10|E(G)|^2\leq y\leq 10|E(G)|^2$, it holds that
\[\begin{array}{l}
\sat_W(\mu)-\alpha'=\\
\displaystyle{\sum_{i=1}^k(p+\widehat{a}(\mu,i)-a(\mu,i)-1)|E(G)|^{20} + (|E(G)|-c(\mu))|E(G)|^{10} + \sum_{i=1}^k(p-b(\mu,i))2^{k-i}|E(G)|^{40} + y}.
\end{array}\]
\end{lemma}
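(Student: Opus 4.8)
The plan is to compute $\sat_M(\mu)$ and $\sat_W(\mu)$ exactly, by summing over all agents the position of each agent's matched partner, and then to separate the dominant contributions (the powers of $|E(G)|$ weighted by $a(\mu,i),\widehat a(\mu,i),b(\mu,i),c(\mu)$) from a bounded remainder. The first step is to pin down, for an \emph{arbitrary} stable matching $\mu$, the partner of each agent up to the few possible internal configurations of each gadget. By Lemma~\ref{lem:w1SESMAll} every agent is matched, and Lemma~\ref{lem:w1SESMReverseHap} already forces $\mu(m^i)=w^i_{a(\mu,i)}$, $\mu(\widehat m^i)=\widehat w^i_{\widehat a(\mu,i)}$, each $\widetilde m^i_j$ into $\{\widetilde w^i_j,\overline w^i_j\}$, and fixes the happy pairs and $(m^\star,w^\star)$. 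I would first upgrade this to the partners of \emph{all} women via short stability arguments internal to the gadgets: in each Consistency gadget the partner of $\widetilde m^i_j$ forces those of $\overline m^i_j,\widetilde w^i_j,\overline w^i_j$ (the only two configurations being $\widetilde m^i_j\widetilde w^i_j,\overline m^i_j\overline w^i_j$ versus $\widetilde m^i_j\overline w^i_j,\overline m^i_j\widetilde w^i_j$); and since Lemma~\ref{lem:w1SESMReverseHap} forbids $m^i$ from taking any edge-woman, every $w^{i,j}_t$ is matched to $\overline m^{i,j}_t$ or $m^{i,j}_t$ in lockstep with $m^{i,j}_t$'s partner. The parameters of Definition~\ref{def:w1hardSESMABC} record exactly which configuration each gadget sits in, and in particular a chain/stability argument using the \emph{leader} form of $m^i,\widehat m^i$ shows that precisely $p-a(\mu,i)$ of the women $w^i_j$ (and $\widehat a(\mu,i)-1$ of the women $\widehat w^i_j$) are pushed into the ``high'' state $\mu(w^i_j)=m^i_j$.

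Next I would evaluate $\sat_M(\mu)$ by grouping men by the magnitude of their partner's position. The $\alpha'$ happy men each sit at position $1$ (hence the subtraction of $\alpha'$). The leaders $m^i,\widehat m^i$ contribute $(1+|E(G)|+|E(G)|^{20})$ times $(a(\mu,i)-1)$ and $(p-\widehat a(\mu,i))$, whose $|E(G)|^{20}$ part yields $\sum_i(p+a(\mu,i)-\widehat a(\mu,i)-1)|E(G)|^{20}$; each $\widetilde m^i_j$ counted by $b(\mu,i)$ sits at $\overline w^i_j$ with the coefficient $2^{i-1}$, giving $(\sum_i b(\mu,i)2^{i-1})|E(G)|^{30}$; and each $m^{i,j}_t$ counted by $c(\mu)$ sits at $\overline w^{i,j}_t$, giving $c(\mu)|E(G)|^{10}$. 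The decisive bookkeeping is that $m^\star$ sits at position $\alpha+1$: substituting the defining expression for $\alpha$, its high-order parts are exactly the offsets $-(2^k-1)|E(G)|^{30}$, $(-|E(G)|+2{k\choose 2})|E(G)|^{10}$ and $(p-1)(2^k-1)|E(G)|^{40}$, so combining $m^\star$ with the three families above reproduces the stated coefficients (and the $|E(G)|^{40}$ term on the men's side arises \emph{only} from $m^\star$). Everything left over—the additive constants, the $(1+|E(G)|)(a(\mu,i)-1)$ and $(1+|E(G)|)(p-\widehat a(\mu,i))$ remnants, the $O(1)$ contributions of the enriched men $m^i_j,\widehat m^i_j,\overline m^i_j,\overline m^{i,j}_t$, and the linear-in-$|E(G)|$ part of $\alpha$—is collected into $x$, which I would bound by $100|E(G)|^2$ using $kp=n\le|E(G)|$ and the standing assumption $|E(G)|>10^k$ (making $k$ negligible against $|E(G)|$; the term $\sum_i(a(\mu,i)-1)|E(G)|$ is what makes the bound quadratic rather than linear).

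The computation of $\sat_W(\mu)$ is analogous but genuinely easier, since no large garbage term must be split: $w^\star$ together with the happy women contributes only $\alpha'+1$. The dominant terms now come from the opposite sides of the same gadgets: the $p-b(\mu,i)$ women $\widetilde w^i_j$ left matched to $\widetilde m^i_j$ sit at position $2+2^{k-i}|E(G)|^{40}$, producing $\sum_i(p-b(\mu,i))2^{k-i}|E(G)|^{40}$; the vertex-selector women pushed into the high state sit at position $4+|E(G)|^{20}$ and yield $\sum_i(p+\widehat a(\mu,i)-a(\mu,i)-1)|E(G)|^{20}$; and the $|E(G)|-c(\mu)$ edge-women $w^{i,j}_t$ matched to $m^{i,j}_t$ sit at position $6+|E(G)|^{10}$, giving $(|E(G)|-c(\mu))|E(G)|^{10}$. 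Since all the high positions here carry only a small constant offset (no $|E(G)|^1$ multiplier), the remaining additive constants and the $O(1)$ short-list positions absorb into a much smaller remainder $y$ with $|y|\le 10|E(G)|^2$.

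The main obstacle is the first step carried out for \emph{every} stable matching rather than for the specific $\mu^C_{SESM}$: I must convert the partial information of Lemma~\ref{lem:w1SESMReverseHap} into the exact partner of each woman, and in particular certify that the vertex-selector ``high'' counts are governed precisely by $a(\mu,i)$ and $\widehat a(\mu,i)$, which is exactly where the leader structure is exploited. The secondary, arithmetic difficulty is the decomposition of $\alpha$: one must verify that its tailored high-order coefficients line up coordinate-by-coordinate with the three gadget families and that the residue merges cleanly into an $O(|E(G)|^2)$ remainder. This is precisely the place where the defining formula for $\alpha$ and the wide magnitude separation among $|E(G)|^{10},|E(G)|^{20},|E(G)|^{30},|E(G)|^{40}$ are indispensable.
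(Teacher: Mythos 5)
Your proposal is correct and follows essentially the same route as the paper's proof: both pin down every gadget's internal configuration from Lemmata \ref{lem:w1SESMAll} and \ref{lem:w1SESMReverseHap}, evaluate $\sat_M(\mu)$ and $\sat_W(\mu)$ term-by-term grouped by the magnitude of the partners' positions, absorb all low-order contributions into the remainders $x$ and $y$, and recover the stated coefficients on the men's side by splitting $\alpha$ (the position of $w^\star$ in $m^\star$'s preference list) into its high-order offsets and its $\OO(|E(G)|^2)$ residue. The only cosmetic difference is that you spell out the forced internal matching of each gadget for an arbitrary stable matching, a step the paper leaves implicit in its displayed sums.
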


\begin{proof}
On the one hand, by Lemmata \ref{lem:w1SESMAll} and \ref{lem:w1SESMReverseHap} and the definition of preference lists of the men of $\red_{SESM}(I)$, we have that
\[\begin{array}{lll}
\sat_M(\mu) & = & \displaystyle{\sum_{i=1}^k\pos_{m^i}(\mu(m^i)) + \sum_{i=1}^k\pos_{\widehat{m}^i}(\mu(\widehat{m}^i)) + \sum_{i=1}^k\sum_{j=1}^p\pos_{m^i_j}(\mu(m^i_j)) + \sum_{i=1}^k\sum_{j=1}^p\pos_{\widehat{m}^i_j}(\mu(\widehat{m}^i_j))}\\

&& + \displaystyle{\sum_{i=1}^k\sum_{j=1}^p\pos_{\widetilde{m}^i_j}(\mu(\widetilde{m}^i_j)) + \sum_{i=1}^k\sum_{j=1}^p\pos_{\overline{m}^i_j}(\mu(\overline{m}^i_j))}\\

&& + \displaystyle{\sum_{i=1}^{k-1}\sum_{j=i+1}^k\sum_{t=1}^{q^{i,j}}\pos_{m^{i,j}_t}(\mu(m^{i,j}_t)) + \sum_{i=1}^{k-1}\sum_{j=i+1}^k\sum_{t=1}^{q^{i,j}}\pos_{\overline{m}^{i,j}_t}(\mu(\overline{m}^{i,j}_t))} + \alpha'+\alpha+1\\

& = & \displaystyle{\sum_{i=1}^k\left(1+(1+|E(G)|+|E(G)|^{20})(a(\mu,i)-1)\right)}\\

&& + \displaystyle{\sum_{i=1}^k\left(1+(1+|E(G)|+|E(G)|^{20})(p-\widehat{a}(\mu,i))\right)}\\

&& + \displaystyle{\sum_{i=1}^k\sum_{j=2}^{a(\mu,i)}2 + \sum_{i=1}^k\sum_{j=a(\mu,i)+1}^{p}1} + \displaystyle{\sum_{i=1}^k\sum_{j=1}^{\widehat{a}(\mu,i)-1}1 + \sum_{i=1}^k\sum_{j=\widehat{a}(\mu,i)}^{p-1}2}\\

&& + \displaystyle{\sum_{i=1}^k\left(b(\mu,i)(6 + 2^{i-1}|E(G)|^{30}) + 2(p-b(\mu,i))\right)}\\

&& + \displaystyle{c(\mu)(4+|E(G)|^{10}) + 2(|E(G)|-c(\mu)) + \alpha'+\alpha + 1}.
\end{array}\]

Thus, we have that
\[\begin{array}{ll}
& \displaystyle{\alpha + \sum_{i=1}^k(p+a(\mu,i)-\widehat{a}(\mu,i)-1)|E(G)|^{20} + c(\mu)|E(G)|^{10} + \sum_{i=1}^kb(\mu,i)2^{i-1}|E(G)|^{30}} -10|E(G)|^2\\
\leq & \sat_M(\mu) -\alpha' \\
\leq & \displaystyle{\alpha + \sum_{i=1}^k(p+a(\mu,i)-\widehat{a}(\mu,i)-1)|E(G)|^{20} + c(\mu)|E(G)|^{10} + \sum_{i=1}^kb(\mu,i)2^{i-1}|E(G)|^{30}} + 10|E(G)|^2.
\end{array}\]

Substituting $\alpha$, we have that 
\[\begin{array}{ll}
& \displaystyle{\sum_{i=1}^k(p+a(\mu,i)-\widehat{a}(\mu,i)-1)|E(G)|^{20} + (c(\mu)-|E(G)|+2{k\choose 2})|E(G)|^{10}}\\
& + \displaystyle{\left((\sum_{i=1}^kb(\mu,i)2^{i-1})-2^k+1\right)|E(G)|^{30} + (p-1)(2^k-1)|E(G)|^{40} - 100|E(G)|^2}\\

\leq & \sat_M(\mu)-\alpha' \\

\leq & \displaystyle{\sum_{i=1}^k(p+a(\mu,i)-\widehat{a}(\mu,i)-1)|E(G)|^{20} + (c(\mu)-|E(G)|+2{k\choose 2})|E(G)|^{10}}\\
&+ \displaystyle{\left((\sum_{i=1}^kb(\mu,i)2^{i-1})-2^k+1\right)|E(G)|^{30} + (p-1)(2^k-1)|E(G)|^{40} + 100|E(G)|^2}.
\end{array}\]

On the other hand, by Lemmata \ref{lem:w1SESMAll} and \ref{lem:w1SESMReverseHap} and the definition of preference lists of the women of $\red_{SESM}(I)$, we have that
\[\begin{array}{lll}
\sat_W(\mu) & = & \displaystyle{\sum_{i=1}^k\pos_{\mu(m^i)}(m^i) + \sum_{i=1}^k\pos_{\mu(\widehat{m}^i)}(\widehat{m}^i) + \sum_{i=1}^k\sum_{j=1}^p\pos_{\mu(m^i_j)}(m^i_j) + \sum_{i=1}^k\sum_{j=1}^p\pos_{\mu(\widehat{m}^i_j)}(\widehat{m}^i_j)}\\

&& + \displaystyle{\sum_{i=1}^k\sum_{j=1}^p\pos_{\mu(\widetilde{m}^i_j)}(\widetilde{m}^i_j) + \sum_{i=1}^k\sum_{j=1}^p\pos_{\mu(\overline{m}^i_j)}(\overline{m}^i_j)}\\

&& + \displaystyle{\sum_{i=1}^{k-1}\sum_{j=i+1}^k\sum_{t=1}^{q^{i,j}}\pos_{\mu(m^{i,j}_t)}(m^{i,j}_t) + \sum_{i=1}^{k-1}\sum_{j=i+1}^k\sum_{t=1}^{q^{i,j}}\pos_{\mu(\overline{m}^{i,j}_t)}(\overline{m}^{i,j}_t)} + \alpha' + 1\\

& = & \displaystyle{4k + \sum_{i=1}^k(a(\mu,i)-1) + \sum_{i=1}^k(p-a(\mu,i))(4 + |E(G)|^{20}))}\\

&& + \displaystyle{\sum_{i=1}^k(p-\widehat{a}(\mu,i)) + \sum_{i=1}^k(\widehat{a}(\mu,i)-1)(4 + |E(G)|^{20}))}\\

&& + \displaystyle{\sum_{i=1}^k\left(2b(\mu,i) + (p-b(\mu,i))(4+2^{k-i}|E(G)|^{40})\right)}\\

&& + 2c(\mu) + (|E(G)|-c(\mu))(8+|E(G)|^{10}) + \alpha' + 1.
\end{array}\]

Thus, we have that
\[\begin{array}{ll}
& \displaystyle{(p+\widehat{a}(\mu,i)-a(\mu,i)-1)|E(G)|^{20} + (|E(G)|-c(\mu))|E(G)|^{10} + \sum_{i=1}^k(p-b(\mu,i))2^{k-i}|E(G)|^{40}}\\
& -10|E(G)|^2\\
\leq & \sat_W(\mu)-\alpha' \\
\leq & \displaystyle{(p+\widehat{a}(\mu,i)-a(\mu,i)-1)|E(G)|^{20} + (|E(G)|-c(\mu))|E(G)|^{10} + \sum_{i=1}^k(p-b(\mu,i))2^{k-i}|E(G)|^{40}}\\
& + 10|E(G)|^2.
\end{array}\]

This concludes the proof of the lemma.
\end{proof}

The following two lemmata prove further useful properties of the matching partners of agents in the context of a stable matching $\mu$ that satisfies $\delta(\mu)=0$.

\begin{lemma}\label{lem:w1SESMReverseTilde}
Let $I$ be an instance of {\sc Multicolored Clique}. Let $\mu$ be a stable matching of $\red_{SESM}(I)$ such that $\delta(\mu)=0$. Then, for all $i\in[k]$, there exists $j\in[p]$ such that $\mu(\widetilde{m}^i_j)=\overline{w}^i_j$ and for all $t\neq j$, $\mu(\widetilde{m}^i_j)=\widetilde{w}^i_j$.
\end{lemma}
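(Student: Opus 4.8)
The plan is to translate the hypothesis $\delta(\mu)=0$ into two numerical identities on the quantities $b(\mu,1),\dots,b(\mu,k)$ of Definition~\ref{def:w1hardSESMABC}, and then show that these identities admit a unique non-negative integer solution. Recall from Lemma~\ref{lem:w1SESMReverseHap}(\ref{item:w1SESMReverseHap2}) that $\mu(\widetilde m^i_j)\in\{\widetilde w^i_j,\overline w^i_j\}$ for all $i,j$, so $b(\mu,i)\in\{0,1,\dots,p\}$ counts exactly the indices $j$ with $\mu(\widetilde m^i_j)=\overline w^i_j$; hence the lemma is equivalent to proving $b(\mu,i)=1$ for every $i\in[k]$.

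First I would set $N=|E(G)|$ and use $\delta(\mu)=\sat_M(\mu)-\sat_W(\mu)=0$, i.e.\ $\sat_M(\mu)-\alpha'=\sat_W(\mu)-\alpha'$, substituting the two expressions of Lemma~\ref{lem:w1SESMReverseSpecificMeasure}. Subtracting them and collecting powers of $N$ rewrites the equation as
\[
A\,N^{40}+B\,N^{30}+C\,N^{20}+D\,N^{10}+E=0,
\]
where, abbreviating $b_i=b(\mu,i)$, $a_i=a(\mu,i)$, $\widehat a_i=\widehat a(\mu,i)$, $c=c(\mu)$,
\[
A=\sum_{i=1}^k b_i 2^{k-i}-(2^k-1),\qquad B=\sum_{i=1}^k b_i 2^{i-1}-(2^k-1),
\]
and $C=2\sum_i(a_i-\widehat a_i)$, $D=2c-2N+2{k\choose 2}$, $E=x-y$. (I have used $\sum_i(p-b_i)2^{k-i}=p(2^k-1)-\sum_i b_i2^{k-i}$ to simplify the $N^{40}$-coefficient, and that $\sat_W$ contributes no $N^{30}$-term.)

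Next I would run a separation-of-scales argument. Since $0\le b_i\le p\le N$, $|a_i-\widehat a_i|\le p\le N$, $0\le c\le N$, ${k\choose 2}<N$ and $2^k<10^k<N$, all of $A,B,C,D$ are integers of absolute value at most $N^2$, while $|E|\le|x|+|y|\le 110N^2$. Hence if $A\neq 0$ then $|A\,N^{40}|\ge N^{40}$, whereas $|B\,N^{30}+C\,N^{20}+D\,N^{10}+E|< 115\,N^{32}<N^{40}$ (as $N>10$ gives $N^8>115$), a contradiction; so $A=0$. Feeding $A=0$ back in and repeating the estimate one scale lower forces $B=0$. This yields the two identities
\[
\sum_{i=1}^k b_i 2^{k-i}=2^k-1\qquad\text{and}\qquad\sum_{i=1}^k b_i 2^{i-1}=2^k-1.
\]

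The heart of the argument, and the step I expect to be the main obstacle, is to deduce $b_i=1$ for all $i$ from these two identities using only $b_i\ge 0$. The subtlety is that either identity alone has many non-negative integer solutions once $p\ge 2$ (for instance $b_1=3,b_2=0$ and $b_i=1$ otherwise satisfies the second), so both must be used together. I would argue by induction on $k$: the $i=k$ term of the second identity gives $b_k2^{k-1}\le 2^k-1<2\cdot 2^{k-1}$, so $b_k\le 1$, while reducing the first identity modulo $2$ gives $b_k\equiv 2^k-1\equiv 1\pmod 2$, so $b_k$ is odd and therefore $b_k=1$. Substituting $b_k=1$ into both identities reproduces exactly the same pair of identities for $b_1,\dots,b_{k-1}$ with $k$ replaced by $k-1$ (the first halves to $\sum_{i=1}^{k-1}b_i2^{(k-1)-i}=2^{k-1}-1$, the second drops to $\sum_{i=1}^{k-1}b_i2^{i-1}=2^{k-1}-1$), and the base case $k=1$ reads $b_1=1$. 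The induction hypothesis then gives $b_1=\dots=b_{k-1}=1$, establishing $b(\mu,i)=1$ for all $i\in[k]$, which is the assertion of the lemma.
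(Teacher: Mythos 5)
Your proof is correct and follows essentially the same route as the paper: it invokes Lemma~\ref{lem:w1SESMReverseSpecificMeasure} to extract the two identities $\sum_i b(\mu,i)2^{i-1}=2^k-1$ and $\sum_i b(\mu,i)2^{k-i}=2^k-1$, and then runs the same induction on $k$ that pins $b(\mu,k)=1$ and recovers the identical pair of identities for $k-1$. The only difference is that you spell out two steps the paper asserts tersely — the separation-of-scales argument showing the coefficients of $|E(G)|^{40}$ and $|E(G)|^{30}$ must vanish individually, and the parity argument giving $b(\mu,k)\geq 1$ — which is a welcome addition of rigor, not a different approach.
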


\begin{proof}
The statement of the lemma is equivalent to the statement that for all $i\in[k]$, $b(\mu,i)=1$. Since $\delta(\mu)=0$, it holds that $\sat_M(\mu)-\alpha'=\sat_W(\mu)-\alpha'$. Hence, by Lemma \ref{lem:w1SESMReverseSpecificMeasure}, the two following equalities are satisfied.
\begin{itemize}
\item $\displaystyle{\left((\sum_{i=1}^kb(\mu,i)2^{i-1})-2^k+1\right)|E(G)|^{30} = 0}$.
\item $\displaystyle{(p-1)(2^k-1)|E(G)|^{40} = \sum_{i=1}^k(p-b(\mu,i))2^{k-i}|E(G)|^{40}}$.
\end{itemize}

Simplifying the equalities above, we derive that the two following equalities are satisfied.
\begin{itemize}
\item $\displaystyle{\sum_{i=1}^kb(\mu,i)2^{i-1} = 2^k-1}$.
\item $\displaystyle{\sum_{i=1}^kb(\mu,i)2^{k-i}= 2^k-1}$.
\end{itemize}

Note that for all $i\in[k]$, $b(\mu,i)\in[p]$. Let $\varphi$ be an assignment to the variables $b(\mu,i)$ that satisfies this condition as well as the two equalities above. We claim that $\varphi$ necessarily assigns 1 to all of these variables. This claim can be easily proven by induction on $k$. In the base case, where $k=1$, the first equality directly implies that $b(\mu,1)=1$. Now, suppose that $k\geq 2$ and that the claim holds for $k-1$. Then, first note that to satisfy the first equality, it must hold that $b(\mu,k)\leq 1$, while to satisfy the second one, it must holds that $b(\mu,k)\geq 1$. Thus, $b(\mu,k)=1$, which implies that the two following equalities are then satisfied.
\begin{itemize}
\item $\displaystyle{\sum_{i=1}^{k-1}b(\mu,i)2^{i-1} + 2^{k-1} = 2^k-1}$. That is, $\displaystyle{\sum_{i=1}^{k-1}b(\mu,i)2^{i-1} = 2^{k-1}-1}$
\item $\displaystyle{\sum_{i=1}^{k-1}b(\mu,i)2^{k-i} + 1 = 2^k-1}$. That is, $\displaystyle{\sum_{i=1}^{k-1}b(\mu,i)2^{(k-1)-i} = 2^{k-1}-1}$.
\end{itemize}

By the inductive hypothesis, we derive that for all $i\in[k-1]$, it also holds that $b(\mu,i)=1$. This concludes the proof of the lemma.
\end{proof}

\begin{lemma}\label{lem:w1SESMReverseColClass}
Let $I$ be an instance of {\sc Multicolored Clique}. Let $\mu$ be a stable matching of $\red_{SESM}(I)$ such that $\delta(\mu)=0$. Then, for all $i\in[k]$, there exists $j\in[p]$ such that $\mu(m^i)=w^i_j$ and $\mu(\widehat{m}^i)=\widehat{m}^i_j$.
\end{lemma}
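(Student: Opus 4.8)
The plan is to prove the equivalent statement that $a(\mu,i)=\widehat{a}(\mu,i)$ for all $i\in[k]$, where $a(\mu,i)$ and $\widehat{a}(\mu,i)$ are the indices from Definition \ref{def:w1hardSESMABC}; these are well defined since Lemma \ref{lem:w1SESMReverseHap} gives $\mu(m^i)\in W^i_{\bas}$ and $\mu(\widehat{m}^i)\in\widehat{W}^i_{\bas}$. I would derive this from two ingredients that pull in opposite directions: a \emph{per-color} inequality $\widehat{a}(\mu,i)\le a(\mu,i)$ coming purely from stability, together with the \emph{global} identity $\sum_{i=1}^k(a(\mu,i)-\widehat{a}(\mu,i))=0$ coming from the hypothesis $\delta(\mu)=0$. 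Since each summand of the latter is nonnegative by the former while the total vanishes, every summand is zero, which is exactly the claim.

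For the stability inequality, fix $i$ and let $j^\star$ be the unique index with $\mu(\widetilde{m}^i_{j^\star})=\overline{w}^i_{j^\star}$, which exists and is unique by Lemma \ref{lem:w1SESMReverseTilde} (that is, $b(\mu,i)=1$). First I would pin down the matching inside the Original Vertex Selector: using Lemma \ref{lem:w1SESMReverseHap}, conditions (2) and (3), to forbid each $w^i_j$ from being matched to any $\widetilde{m}^i_j$ or to a happy man, a short reverse induction starting at $w^i_p$ shows that for every $j>a(\mu,i)$ the woman $w^i_j$ is forced to be matched to $m^i_j$ (the chain propagates downward, with the leader $m^i$ occupying $w^i_{a(\mu,i)}$). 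An entirely symmetric forward induction in the Mirror Vertex Selector shows that for every $j<\widehat{a}(\mu,i)$ the woman $\widehat{w}^i_j$ is forced to be matched to $\widehat{m}^i_j$. Now I invoke stability at $\widetilde{m}^i_{j^\star}$: since his partner $\overline{w}^i_{j^\star}$ is his least preferred non-happy woman, he prefers both $w^i_{j^\star}$ and $\widehat{w}^i_{j^\star}$ to her, so neither may block. If $j^\star>a(\mu,i)$, then $w^i_{j^\star}$ is matched to $m^i_{j^\star}$, whom she ranks below $\widetilde{m}^i_{j^\star}$, so $(\widetilde{m}^i_{j^\star},w^i_{j^\star})$ blocks; hence $j^\star\le a(\mu,i)$. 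Symmetrically, if $j^\star<\widehat{a}(\mu,i)$, then $\widehat{w}^i_{j^\star}$ is matched to $\widehat{m}^i_{j^\star}$, whom she ranks below $\widetilde{m}^i_{j^\star}$, yielding a blocking pair; hence $\widehat{a}(\mu,i)\le j^\star$. Chaining these gives $\widehat{a}(\mu,i)\le j^\star\le a(\mu,i)$. The boundary cases $j^\star=1$ and $j^\star=p$, where $\widetilde{m}^i_1$ omits $w^i_1$ and $\widetilde{m}^i_p$ omits $\widehat{w}^i_p$, are checked directly and give the same conclusion.

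For the global identity, I would use that $\delta(\mu)=0$ forces $\sat_M(\mu)-\alpha'=\sat_W(\mu)-\alpha'$, and plug in the two expressions of Lemma \ref{lem:w1SESMReverseSpecificMeasure}. By Lemma \ref{lem:w1SESMReverseTilde} we have $b(\mu,i)=1$, so the $|E(G)|^{30}$ coefficient vanishes and the two $|E(G)|^{40}$ coefficients coincide; cancelling these equal contributions leaves an equality whose surviving terms are the $|E(G)|^{20}$ coefficients, the $|E(G)|^{10}$ coefficients, and the error terms $x,y$. Since $a(\mu,i),\widehat{a}(\mu,i)\in[p]$ and $c(\mu)\le|E(G)|$, all these coefficients are $\OO(|E(G)|^2)$ in absolute value, which is dwarfed by the separation factor $|E(G)|^{10}$ between consecutive powers. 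Hence, exactly as in the proof of Lemma \ref{lem:w1SESMReverseTilde}, I can read off the coefficients and conclude that the $|E(G)|^{20}$ coefficients are equal, i.e. $\sum_{i=1}^k(p+a(\mu,i)-\widehat{a}(\mu,i)-1)=\sum_{i=1}^k(p+\widehat{a}(\mu,i)-a(\mu,i)-1)$, which simplifies to $\sum_{i=1}^k(a(\mu,i)-\widehat{a}(\mu,i))=0$.

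The main obstacle is the stability step: one must precisely determine the partner of $w^i_{j^\star}$ (respectively $\widehat{w}^i_{j^\star}$) as a function of $a(\mu,i)$ (respectively $\widehat{a}(\mu,i)$), which requires carefully establishing the forced chain structure of the Original and Mirror Vertex Selectors, correctly tracking the leader's position, and handling the two boundary indices. The final combination is then immediate, but it is worth stressing that stability alone yields only the single inequality $\widehat{a}(\mu,i)\le a(\mu,i)$; the reverse inequality genuinely requires the sex-equality measure, reflecting the way the gadgets were designed to coordinate only through the satisfaction value.
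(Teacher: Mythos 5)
Your proof is correct and takes essentially the same route as the paper: the paper likewise extracts the $|E(G)|^{20}$-coefficient identity $\sum_{i=1}^k(a(\mu,i)-\widehat{a}(\mu,i))=0$ from $\delta(\mu)=0$ via Lemma \ref{lem:w1SESMReverseSpecificMeasure}, and combines it with the per-color inequality $\widehat{a}(\mu,i)\le j^\star\le a(\mu,i)$ obtained from stability at $\widetilde{m}^i_{j^\star}$ (the paper propagates the chain of partners upward from $j^\star$ to bound $a(\mu,i)$ and $\widehat{a}(\mu,i)$, whereas you pin down the chain from $a(\mu,i)$ and $\widehat{a}(\mu,i)$ and derive a blocking pair, which is the same argument run in the opposite direction). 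The only cosmetic difference is that you justify the digit-extraction step more explicitly by invoking Lemma \ref{lem:w1SESMReverseTilde} to cancel the $|E(G)|^{30}$ and $|E(G)|^{40}$ terms, which the paper leaves implicit.
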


\begin{proof}
Since $\delta(\mu)=0$, it holds that $\sat_M(\mu)-\alpha'=\sat_W(\mu)-\alpha'$. Hence, by Lemma \ref{lem:w1SESMReverseSpecificMeasure}, the following equality is satisfied.
\[\displaystyle{\sum_{i=1}^k(p+a(\mu,i)-\widehat{a}(\mu,i)-1)|E(G)|^{20}} = \displaystyle{\sum_{i=1}^k(p+\widehat{a}(\mu,i)-a(\mu,i)-1)|E(G)|^{20}}.\]

For all $i\in[k]$, denote $a'(\mu,i)=(p+1)-\widehat{a}(\mu,i)$. Then, by simplifying the equality above, we derive that the following equality is satisfied.
\[\displaystyle{\sum_{i=1}^k(a(\mu,i)+a'(\mu,i)) = (p+1)k}.\]

Fix some $i'\in[k]$. By Lemma \ref{lem:w1SESMReverseTilde}, there exists $j'\in[p]$ such that $\mu(\widetilde{m}^{i'}_{j'})=\overline{w}^{i'}_{j'}$. Observe that $\widetilde{m}^{i'}_{j'}$ prefers both $w^{i'}_{j'}$ and $\widehat{w}^{i'}_{j'}$ over $\overline{w}^{i'}_{j'}$, but since $\mu\in{\cal S}$, he forms a blocking pair with neither of them. Thus, we deduce that $w^{i'}_{j'}$ is matched to either $m^{i'}$ or $m^{i'}_{j'+1}$, and that $\widehat{w}^{i'}_{j'}$ is matched to either $\widehat{m}^{i'}$ or $\widehat{m}^{i'}_{j'-1}$. Hence, by Lemmata \ref{lem:w1SESMAll} and \ref{lem:w1SESMReverseHap}, we deduce that all of the women in $\{w^{i'}_j\in W^{i'}_{\bas}: j\geq j'\}$ are matched to men in $\{m^{i'}_j\in M^{i'}_{\enr}: j>j'\}\cup\{m^{i'}\}$, and all of the women in $\{\widehat{w}^{i'}_j\in \widehat{W}^{i'}_{\bas}: j\leq j'\}$ are matched to men in $\{\widehat{m}^{i'}_j\in \widehat{M}^{i'}_{\enr}: j<j'\}\cup\{\widehat{m}^{i'}\}$. In particular, this implies that $a(\mu,i')\geq j'$ and $\widehat{a}(\mu,i')\leq j$. The latter inequality is equivalent to $a'(\mu,i')\geq (p+1)-j$. We thus get that $a(\mu,i')+a'(\mu,i')\geq p+1$.

Since the choice of $i'$ above was arbitrary, we derive that for all $i\in[k]$, $a(\mu,i)+a'(\mu,i)\geq p+1$. However, since $\displaystyle{\sum_{i=1}^k(a(\mu,i)+a'(\mu,i)) = (p+1)k}$, we have that for all $i\in[k]$, $a(\mu,i)+a'(\mu,i)=p+1$. By substituting $a'(i,\cdot)$, we have that for all $i\in[k]$, $a(\mu,i)=\widehat{a}(\mu,i)$. This claim is equivalent to the statemtn of the lemma, and thus we conclude its proof.
\end{proof}

We are now ready to prove the correctness of the reverse direction.

\begin{lemma}\label{lem:w1SESMReverse}
Let $I=(G,(V^1,V^2,\ldots,V^k))$ be an instance of {\sc Multicolored Clique}. If for the instance $\red_{SESM}(I)$ of {\sc SESM}, $\Delta=0$, then $I$ is a \yesinstance\ of {\sc Multicolored Clique}.
\end{lemma}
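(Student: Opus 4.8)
The plan is to start from a stable matching $\mu$ witnessing $\Delta=0$, that is, $\mu\in{\cal S}$ with $\delta(\mu)=0$ (such $\mu$ exists by Proposition~\ref{lem:nonEmptyS} together with the definition of $\Delta$), and to read off a multicolored clique directly from the vertex and edge selections that $\mu$ encodes. First I would invoke Lemma~\ref{lem:w1SESMReverseColClass} to obtain, for every color class $i\in[k]$, a single index $j_i\in[p]$ with $\mu(m^i)=w^i_{j_i}$ and $\mu(\widehat{m}^i)=\widehat{w}^i_{j_i}$, and I would set $U=\{v^i_{j_i}:i\in[k]\}$ as the candidate clique; it then remains only to prove that $U$ is a clique.

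Next I would pin down the number of \emph{selected} edge gadgets. By Lemma~\ref{lem:w1SESMReverseHap} every happy woman is taken by her happy partner, so each $m^{i,j}_t$ is matched either to $w^{i,j}_t$ (in which case I call the gadget selected) or to $\overline{w}^{i,j}_t$; hence $c(\mu)$ counts exactly the non-selected gadgets. Substituting $b(\mu,i)=1$ (Lemma~\ref{lem:w1SESMReverseTilde}) and $a(\mu,i)=\widehat{a}(\mu,i)$ (Lemma~\ref{lem:w1SESMReverseColClass}) into the two expansions of Lemma~\ref{lem:w1SESMReverseSpecificMeasure} makes the $|E(G)|^{20}$, $|E(G)|^{30}$, and $|E(G)|^{40}$ terms of $\sat_M(\mu)$ and $\sat_W(\mu)$ coincide. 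Thus $\delta(\mu)=0$ collapses to $(2c(\mu)-2|E(G)|+2{k\choose 2})|E(G)|^{10}=y-x$ with $|y-x|\le 110|E(G)|^2<|E(G)|^{10}$; the integer coefficient must therefore vanish, yielding $c(\mu)=|E(G)|-{k\choose 2}$ and hence exactly ${k\choose 2}$ selected gadgets.

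The heart of the argument, and the step I expect to be the main obstacle, is to show that every selected gadget $(i,j,t)$ satisfies $e^{i,j}_t=\{v^i_{j_i},v^j_{j_j}\}$; this is precisely where the leader form and its mirror are used in tandem. In a selected gadget $w^{i,j}_t$ is matched to her worst non-happy man $m^{i,j}_t$, yet she prefers $m^i,\widehat{m}^i,m^j,\widehat{m}^j$ to him, so stability forces each of these four men to prefer his own partner to $w^{i,j}_t$. Writing $v^i_s$ for the $i$-endpoint of $e^{i,j}_t$, the leader form of $m^i$ places $w^{i,j}_t$ just after $w^i_s$, so $\pos_{m^i}(w^i_{j_i})<\pos_{m^i}(w^{i,j}_t)$ gives $j_i\le s$, whereas the mirror leader form of $\widehat{m}^i$ places $w^{i,j}_t$ just after $\widehat{w}^i_s$ in the reversed order, so $\pos_{\widehat{m}^i}(\widehat{w}^i_{j_i})<\pos_{\widehat{m}^i}(w^{i,j}_t)$ gives $j_i\ge s$; together these force $j_i=s$. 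The symmetric computation for $m^j$ and $\widehat{m}^j$ yields $j_j=s'$ for the $j$-endpoint $v^j_{s'}$, so $e^{i,j}_t=\{v^i_{j_i},v^j_{j_j}\}$ exactly.

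Finally I would combine this with counting and the simplicity of $G$. Since $G$ has at most one edge between $v^i_{j_i}$ and $v^j_{j_j}$, at most one gadget per pair $(i,j)$ can be selected; as there are ${k\choose 2}$ selected gadgets and ${k\choose 2}$ pairs, exactly one gadget is selected for each pair, and its edge is $\{v^i_{j_i},v^j_{j_j}\}$. Hence $v^i_{j_i}$ and $v^j_{j_j}$ are adjacent for all $i<j$, so $U$ is a clique containing one vertex of each color class, and therefore $I$ is a \yesinstance\ of {\sc Multicolored Clique}.
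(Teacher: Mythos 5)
Your proof is correct and follows essentially the same route as the paper's: it invokes the same chain of lemmas (Lemmata \ref{lem:w1SESMReverseHap}, \ref{lem:w1SESMReverseTilde}, \ref{lem:w1SESMReverseColClass} and \ref{lem:w1SESMReverseSpecificMeasure}) to pin down $c(\mu)=|E(G)|-{k\choose 2}$, and then uses stability together with the leader/mirror-leader preference structure to force each of the ${k\choose 2}$ selected edge gadgets to represent an edge incident to both selected vertices, concluding by a counting argument. If anything, your write-up is slightly more explicit at two points the paper compresses: the order-of-magnitude argument showing the $|E(G)|^{10}$ coefficient must vanish, and the two-sided pinning $j_i\le s$ and $j_i\ge s$ extracted from the leader and mirror-leader forms.
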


\begin{proof}
Suppose that for the instance $\red_{SESM}(I)$ of {\sc SESM}, $\Delta=0$. Then, there exists a stable matching $\mu$ such that $\delta(\mu)=0$. Hence, by Lemma \ref{lem:w1SESMReverseSpecificMeasure}, the following equality is satisfied.
\[\displaystyle{(c(\mu)-|E(G)|+2{k\choose 2})|E(G)|^{10} = (|E(G)|-c(\mu))|E(G)|^{10}}.\]

Simplifying the equality above, we have that the following equality is satisfied.
\[\displaystyle{c(\mu) = |E(G)|-{k\choose 2}}.\]

Denote $W'=\{w^{i,j}_t\in \bigcup_{i,j\in[k], i<j}W^{i,j}_{\bas}: \mu(w^{i,j}_t)=m^{i,j}_t\}$. By the equality above and Lemmata \ref{lem:w1SESMAll} and \ref{lem:w1SESMReverseHap}, we have that $|W'|={k\choose 2}$. By Lemma \ref{lem:w1SESMReverseColClass}, for all $i\in[k]$, it is well defined to let $\ell_i$ denote the index $j\in[p]$ such that $\mu(m^i)=w^i_{\ell_i}$ and $\mu(\widehat{m}^i)=\widehat{w}^i_{\ell_i}$. Since every woman $w^{i,j}_t\in W'$ prefers both $m^i$ and $\widehat{m}^i$ over her matched partner, we have that $w^{i,j}_t$ is located before $w^i_{\ell_i}$ in the preference list of $m^i$ as well as before $\widehat{w}^i_{\ell_i}$ in the preference list of $\widehat{m}^i$. However, by the definition of the preference lists of $m^i$ and $\widehat{m}^i$, it must then hold that $e^{i,j}_t$ is an edge incident to $v^i_{\ell_i}$ in $G$. Hence, we have that $X=\{v^i_{\ell_i}: i\in[k]\}$ is a subset of $V(G)$ size $k$ such that every edge in $Y=\{e^{i,j}_t: w^{i,j}_t\}$ is incident to two vertices in $X$. Since $|Y|=|W'|={k\choose 2}$, we deduce that $X$ is the vertex set of a colorful clique of $G$. We thus conclude that $I$ is a \yesinstance\ of {\sc Multicolored Clique}.
\end{proof}

\medskip
\myparagraph{Summary.} Finally, we note that the reduction can be performed in time that is polynomial in the size of the output. That is, we have the following observation.

\begin{observation}\label{obs:w1SESMTime}
Let $I=(G,(V^1,V^2,\ldots,V^k))$ be an instance of {\sc Multicolored Clique}. Then, the instance $\red_{SESM}(I)$ of {\sc SESM} can be constructed in time $2^{\OO(k)}\cdot n^{\OO(1)}$. Here, $n=|V(G)|$.
\end{observation}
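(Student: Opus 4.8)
The plan is to bound the total bit-size of the output instance $\red_{SESM}(I)$, since every step of the construction is a straightforward assignment of positions in preference lists that can be carried out in time polynomial in the length of the list being produced. Thus it suffices to bound both the number of agents and the total length of all preference lists by $2^{\OO(k)}\cdot n^{\OO(1)}$, where throughout we freely use that $k\leq n$, $p=n/k\leq n$, and $|E(G)|\leq n^2$.

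First I would count the \emph{structural} agents, i.e. all agents introduced before the Happy Pairs step (the basic agents, the enriched agents of the four gadget families, and the two garbage-collector agents). Each such family has polynomial size: e.g. $|M_{\bas}\cup\widehat{M}_{\bas}|=2k$, $|\bigcup_i(W^i_{\bas}\cup\widehat{W}^i_{\bas})|=2pk=2n$, the Consistency families $\widetilde{M}^i,\overline{M}^i,\widetilde{W}^i,\overline{W}^i$ contribute $\OO(pk)=\OO(n)$, and the edge families $W^{i,j}_{\bas},M^{i,j}_{\enr},\overline{M}^{i,j},\overline{W}^{i,j}$ contribute $\OO(|E(G)|)$. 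Hence the structural agents number $n^{\OO(1)}$ in total, and the only potentially super-polynomial contribution comes from the $2\alpha'$ happy agents. So the whole count reduces to estimating $\alpha'=\alpha\cdot 4^k|E(G)|^{10}$.

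Next I would show $\alpha'=2^{\OO(k)}\cdot n^{\OO(1)}$ and, crucially, that $\alpha'$ is a positive integer so that the number of happy pairs is well defined. For the magnitude, every summand of $\alpha$ has absolute value at most $(p-1)(2^k-1)|E(G)|^{40}=2^{\OO(k)}n^{\OO(1)}$, whence $|\alpha|=2^{\OO(k)}n^{\OO(1)}$, and multiplying by $4^k|E(G)|^{10}$ preserves this bound. For positivity, the dominant term $(p-1)(2^k-1)|E(G)|^{40}$ (note $p\geq 2$, else {\sc Multicolored Clique} is solvable in polynomial time by checking whether the unique vertices of the color classes form a clique) strictly exceeds the sum of the absolute values of all negative terms, the largest of which is $(2^k-1)|E(G)|^{30}$, simply because $|E(G)|^{40}$ dominates $|E(G)|^{30}$ under the standing assumption $|E(G)|>10^k$; hence $\alpha>0$. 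This comparison of the competing powers of $|E(G)|$ against the factors $2^k$ and $4^k$ is the one step requiring care, and it is the main (though still routine) obstacle.

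Finally I would bound the total preference-list length. The longest single list is that of $m^\star$, of length $\alpha'+1$; all other long lists (those of $m^i$, $\widehat{m}^i$, $\widetilde{m}^i_j$, $\widetilde{w}^i_j$, and so on) have length $2^{\OO(k)}n^{\OO(1)}$, the heaviest being the $\widetilde{w}^i_j$ with $2^{k-i}|E(G)|^{40}+\OO(1)$ entries. Summing over the $2^{\OO(k)}n^{\OO(1)}$ agents yields a total length of $2^{\OO(k)}n^{\OO(1)}$; in particular, the aggregate demand for happy agents, dominated by $\sum_{i,j}2^{k-i}|E(G)|^{40}=\OO\!\left(p(2^k-1)|E(G)|^{40}\right)$, lies far below $\alpha'$ (which carries an extra factor $4^k|E(G)|^{10}$), so the required pairwise-disjoint happy agents can always be selected, as anticipated in the construction. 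Since the reduction is computable in time polynomial in this total output size, it runs in time $2^{\OO(k)}\cdot n^{\OO(1)}$, as claimed.
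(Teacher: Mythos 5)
Your proposal is correct and follows the same route the paper takes implicitly: the paper states this as an observation without proof, remarking only that the reduction can be performed in time polynomial in the size of the output. Your accounting---polynomially many structural agents, $\alpha'=\alpha\cdot 4^k|E(G)|^{10}=2^{\OO(k)}\cdot n^{\OO(1)}$ happy pairs (shown to be positive and plentiful enough for all insertions), and total preference-list length of the same order---is precisely the verification the paper leaves to the reader.
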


By Proposition \ref{prop:multiClique}, Lemma \ref{lem:twSESM}, Corollary \ref {cor:w1SESMforward}, Lemma \ref{lem:w1SESMReverse} and Observation \ref{obs:w1SESMTime}, we conclude that {\sc SESM} is \WOH. Moreover, unless \ETH\ fails, {\sc SESM} cannot be solved in time $f(\tw)\cdot n^{o(\tw)}$ for any function $f$ that depends only on $\tw$. Here, $n$ is the number of agents.

\subsection{Balanced Stable Marriage}\label{sec:w1BSM}

Second, we prove that {\sc BSM} is \WOH, and that unless \ETH\ fails, {\sc BSM} cannot be solved in time $f(\tw)\cdot n^{o(\tw)}$ for any function $f$ that depends only on $\tw$. 

\subsubsection{Reduction}

Let $I=(G,(V^1,V^2,\ldots,V^k))$ be an instance of {\sc Multicolored Clique}. We now describe how to construct an instance $\red_{BSM}(I)=(M,W,\{\pos_m\}|_{m\in M},\{\pos_w\}|_{w\in W})$ of {\sc BSM}. The construction of $\red_{BSM}(I)$ is identical to $\red_{SESM}(I)$ up until the part where we introduce happy pairs. The preference lists the agents of the form $m^i$, $\widehat{m}^i$, $m^{i,j}_t$, $w^i_j$, $\widetilde{m}^i_j$, $m^{i,j}_t$, $w^{i,j}_t$ and $w^\star$ are defined exactly as in the case of {\sc SESM}. The preference list of $m^\star$ is defined as before with the exception that now rather than $\alpha$ happy women, it contains the following number $\widehat{\alpha}$ of happy women.
\[\begin{array}{lll}
\widehat{\alpha} &=& \alpha - (p-1)(2^k-1)|E(G)|^{40} + \frac{1}{3}(p-1)(4^k-1)|E(G)|^{40}\\
&= & -9k +3pk +4k^2 - (pk-k+2)|E(G)|\\
&& - (|E(G)|-2{k\choose 2})|E(G)|^{10} - (2^k-1)|E(G)|^{30} + \frac{1}{3}(p-1)(4^k-1)|E(G)|^{40}.
\end{array}\]

Thus, it only remains to define the preference lists of the agents of the form $\widetilde{w}^i_j$.
For every $i\in[k]$ and $j\in[p]$, we explicitly define the preference list of $\widetilde{w}^i_j$ as follows. We set $\domain(\pos_{\widetilde{w}^i_j})$ to consist of the union of $\{\overline{m}^i_j,\widetilde{m}^i_j\}$ and a set of arbitrarily chosen $4^{i-1}|E(G)|^{40}$ happy men. Define $\pos_{\widetilde{w}^i_j}(\overline{m}^i_j)=1$ and $\pos_{\widetilde{w}^i_j}(\widetilde{m}^i_j)=2+4^{i-1}|E(G)|^{40}$. All of the positions that have not been occupied above are occupied by the happy men (the choice of which happy man occupies which vacant position is arbitrary).

Finally, let us define
\[\begin{array}{ll}
\eta = & 1 + 3k + 6pk -k^2 + (pk-k+4)|E(G)|\\
& + (p-1)k|E(G)|^{20} + (|E(G)|-{k \choose 2})|E(G)|^{10} + (2^k-1)|E(G)|^{30} + \alpha'+\widehat{\alpha}\\
\end{array}\]

Note that
\[\begin{array}{ll}
\eta = & 1 - 6k + 9pk + 3k^2 + 2|E(G)|\\
& + (p-1)k|E(G)|^{20} + {k\choose 2}|E(G)|^{10} + (p-1)\frac{1}{3}(4^k-1)|E(G)|^{40} + \alpha'.
\end{array}\]

\subsubsection{Treewidth}

Let $I$ be an instance of {\sc Multicolored Clique}. Notice that the primal graph $H$ of $\red_{BSM}(I)=(M,W,\{\pos_m\}|_{m\in M},\{\pos_w\}|_{w\in W})$ is the same as the primal graph $H'$ of the instance of {\sc SESM} constructed in Section \ref{sec:w1BSM} with the exception that a different number of pendent paths on two vertices that are attached to each vertex of the form $\widetilde{w}^i_j$. Clearly, this observation implies that the treewidth of $H$ is the same as the treewidth of $H'$. We thus have the following version of Lemma \ref{lem:twSESM}.

\begin{lemma}\label{lem:twBSM}
Let $I$ be an instance of {\sc Multicolored Clique}. Then, the treewidth of $\red_{BSM}(I)$ is bounded by $2k+\OO(1)$.
\end{lemma}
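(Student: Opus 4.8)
The plan is to reduce the claim directly to Lemma \ref{lem:twSESM} by observing that the primal graph of $\red_{BSM}(I)$ and that of $\red_{SESM}(I)$ coincide once we disregard the precise multiplicities of the happy agents. First I would record exactly what changed between the two constructions. Inspecting the definition of $\red_{BSM}(I)$, the only modifications relative to $\red_{SESM}(I)$ are that the man $m^\star$ is given $\widehat{\alpha}$ happy women instead of $\alpha$, and that each woman $\widetilde{w}^i_j$ is given $4^{i-1}|E(G)|^{40}$ happy men instead of $2^{k-i}|E(G)|^{40}$. Crucially, the set of \emph{non-happy} agents appearing in each preference list is unchanged; in particular $\domain(\pos_{\widetilde{w}^i_j})$ still meets the non-happy agents in exactly $\{\overline{m}^i_j,\widetilde{m}^i_j\}$. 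Hence the subgraph of the primal graph induced by the non-happy agents is literally the same graph in both instances, and the two primal graphs differ only in the number of pendant two-vertex paths (each such path being a happy pair, one of whose endpoints is appended to a non-happy agent) hanging off the non-happy agents.

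Next I would explain why this difference is invisible to treewidth. The proof of Lemma \ref{lem:twSESM} first deletes the $2k$ men of $M_{\bas}\cup\widehat{M}_{\bas}$, then deletes all happy agents to obtain $P'$, bounds the treewidth of each connected component of $P'$ by $\OO(1)$, and finally reattaches the happy agents by grafting, for each non-happy vertex $v$, one fresh leaf bag per happy agent adjacent to $v$ (each such bag holding $v$, the happy agent, and its happy partner). The key point is that this reattachment step is completely agnostic to how many happy agents are incident to a given vertex: it simply adds one leaf per incidence and never increases the maximum bag size beyond the $\OO(1)$ bound. Since the BSM instance only alters these multiplicities, the very same construction of the tree decomposition---with the $2k$ basic men inserted into every bag---yields a decomposition of width $2k+\OO(1)$.

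I do not anticipate a genuine obstacle here; the one thing that must be checked with care is that the BSM construction really does leave the adjacency among non-happy agents untouched. This amounts to verifying, for every agent type whose preference list is re-specified (here only $m^\star$ and $\widetilde{w}^i_j$), that the non-happy part of its domain coincides with the SESM case, which is immediate from the definitions. Once this is confirmed, the observation preceding the lemma---that the two primal graphs are identical up to the number of pendant happy-pair paths---is justified, and the bound $2k+\OO(1)$ follows verbatim from Lemma \ref{lem:twSESM}.
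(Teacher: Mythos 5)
Your proposal is correct and follows essentially the same route as the paper: the paper likewise observes that the primal graph of $\red_{BSM}(I)$ differs from that of $\red_{SESM}(I)$ only in the number of pendant two-vertex (happy-pair) paths attached to the vertices whose preference lists were modified, and then concludes the bound directly from Lemma \ref{lem:twSESM}. Your write-up is in fact somewhat more careful than the paper's, since you explicitly justify why the reattachment of happy agents in the proof of Lemma \ref{lem:twSESM} is insensitive to these multiplicities, whereas the paper dismisses this with a ``clearly.''
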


\subsubsection{Correctness}

\myparagraph{Forward Direction.} Due to the manner in which we define $\red_{BSM}$ in the context of {\sc BSM}, we can again employ Definition \ref{def:cliquetoSESM}. More precisely, given a multicolored clique $C$ of an instance $I$ of {\sc Multicolored Clique}, we define $\mu^C_{BSM}$ exactly as $\mu^C_{SESM}$ (with the modification that we now match a different number of happy agents to one another). Consequently, we derive the appropriate version of Lemma \ref{lem:w1SESMforwardSM}.

\begin{lemma}\label{lem:w1BSMforwardSM}
Let $I=(G,(V^1,V^2,\ldots,V^k))$ be a \yesinstance\ of {\sc Multicolored Clique}. Let $C$ be a multicolored clique of $G$. Then, $\mu^C_{BSM}$ is a stable matching of $\red_{BSM}(I)$.
\end{lemma}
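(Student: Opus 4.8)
The plan is to exploit the fact that stability is a purely \emph{ordinal} property: whether a pair $(m,w)$ blocks a matching depends only on the relative order in which $m$ ranks his assigned partner against $w$, and in which $w$ ranks her assigned partner against $m$, and never on the absolute values taken by $\pos_m$ and $\pos_w$. Since $\mu^C_{BSM}$ is defined (via Definition \ref{def:cliquetoSESM}) exactly as $\mu^C_{SESM}$ up to the number of happy pairs it matches, it therefore suffices to check that the preference lists of $\red_{BSM}(I)$ induce the same relative orderings among the non-happy agents as those of $\red_{SESM}(I)$, and then invoke the analysis already carried out in the proof of Lemma \ref{lem:w1SESMforwardSM}.

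First I would recall that $\red_{BSM}(I)$ differs from $\red_{SESM}(I)$ only in the number of happy agents inserted into two families of lists: the list of $m^\star$, which now ends with $\widehat{\alpha}$ happy women instead of $\alpha$, and the lists of the women $\widetilde{w}^i_j$, into which $4^{i-1}|E(G)|^{40}$ happy men are now inserted between $\overline{m}^i_j$ and $\widetilde{m}^i_j$ in place of $2^{k-i}|E(G)|^{40}$. In both cases the inserted happy agents occupy a contiguous block of low-priority positions while every real agent is kept in its original slot relative to the other real agents. In particular, $\widetilde{w}^i_j$ still ranks $\overline{m}^i_j$ first and $\widetilde{m}^i_j$ last among her non-happy candidates, and $m^\star$ still prefers every woman on his list to $w^\star$, every such preferred woman being a happy woman. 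Thus the ordinal structure of the non-happy agents is identical to that of the {\sc SESM} instance.

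Next I would observe that $\mu^C_{BSM}$ matches every non-happy agent to precisely the partner assigned by $\mu^C_{SESM}$, and matches every happy agent to the member of its happy pair (its top choice). Consequently, every clause of the blocking-pair analysis in the proof of Lemma \ref{lem:w1SESMforwardSM} transfers directly. Each happy man, as well as $m^\star$, is matched to the woman he most prefers among his non-happy candidates, and hence cannot lie in a blocking pair; and for each of the remaining men, the set of women he strictly prefers to his partner, together with the partners assigned to those women, coincides with the {\sc SESM} situation. Since in that situation none of these pairs blocked, none blocks here either, so $\mu^C_{BSM}$ is stable.

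The only real obstacle is bookkeeping: one must confirm that the two count changes genuinely leave untouched all the relative orders on which the {\sc SESM} stability argument relied, most delicately at the women $\widetilde{w}^i_j$, where the exponent driving the number of inserted happy men is reversed (from $k-i$ to $i-1$) yet the first-ranked real man is still $\overline{m}^i_j$. Because the reversal affects only how many happy men separate $\overline{m}^i_j$ from $\widetilde{m}^i_j$—and happy men are always matched to their top choices—it has no bearing on stability. With the ordinal structure preserved and $\mu^C_{BSM}$ agreeing with $\mu^C_{SESM}$ on all non-happy agents, no new blocking pair can arise, and the lemma follows.
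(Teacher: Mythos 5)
Your proof is correct and takes essentially the same route as the paper: the paper gives no separate argument for this lemma, observing only that $\red_{BSM}(I)$ and $\mu^C_{BSM}$ differ from their {\sc SESM} counterparts solely in the number of happy fillers, so the blocking-pair analysis of Lemma \ref{lem:w1SESMforwardSM} carries over. Your write-up simply makes explicit the ordinal-invariance reasoning (relative orders of non-happy agents unchanged, happy agents matched to their top choices) that the paper leaves implicit.
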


In light of Lemma \ref{lem:w1BSMforwardSM}, the measure $\bal(\mu^C_{BSM})$ is well defined. We proceed to analyze this measure with the following lemma.

\begin{lemma}\label{lem:w1BSMforwardMeasure}
Let $I=(G,(V^1,V^2,\ldots,V^k))$ be a \yesinstance\ of {\sc Multicolored Clique}. Let $C$ be a multicolored clique of $G$. Then, $\delta(\mu^C_{BSM})\leq\eta$.
\end{lemma}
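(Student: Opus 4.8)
The plan is to reuse the satisfaction computation of Lemma~\ref{lem:w1SESMforwardMeasure} almost verbatim, since $\mu^C_{BSM}$ is defined through Definition~\ref{def:cliquetoSESM} to pair agents in exactly the same way as $\mu^C_{SESM}$ (the constructions $\red_{BSM}(I)$ and $\red_{SESM}(I)$ differ only in the number of happy women ranked by $m^\star$ and in the preference lists of the women $\widetilde{w}^i_j$). Thus the positional contribution of every matched pair to $\sat_M$ and to $\sat_W$ coincides between the two instances except at two places, and the first task is to isolate exactly those two contributions.

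For $\sat_M$, the only man whose matched position differs is $m^\star$: he now ranks $\widehat{\alpha}$ happy women before $w^\star$, so $\pos_{m^\star}(w^\star)=\widehat{\alpha}+1$ instead of $\alpha+1$. Hence $\sat_M(\mu^C_{BSM})=\sat_M(\mu^C_{SESM})+(\widehat{\alpha}-\alpha)$. For $\sat_W$, I would invoke Definition~\ref{def:cliquetoSESM} to record that for each color class $i$ the women $\widetilde{w}^i_j$ matched to $\widetilde{m}^i_j$ are precisely those with $j\ne\ell_i$ (the single woman $\widetilde{w}^i_{\ell_i}$ is matched to $\overline{m}^i_{\ell_i}$ at position $1$, which is insensitive to the happy-agent count). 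Since each such position is now $2+4^{i-1}|E(G)|^{40}$ rather than $2+2^{k-i}|E(G)|^{40}$, the total change in $\sat_W$ is $\sum_{i=1}^k (p-1)\big(4^{i-1}-2^{k-i}\big)|E(G)|^{40}$, and the geometric identities $\sum_{i=1}^k 2^{k-i}=2^k-1$ and $\sum_{i=1}^k 4^{i-1}=\tfrac13(4^k-1)$ evaluate this to $\tfrac13(p-1)(4^k-1)|E(G)|^{40}-(p-1)(2^k-1)|E(G)|^{40}$.

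The key observation is that this is exactly $\widehat{\alpha}-\alpha$: the number $\widehat{\alpha}$ was defined precisely so that the increase in $\sat_M$ coming from $m^\star$ matches the increase in $\sat_W$ coming from reweighting the $\widetilde{w}^i_j$. Therefore both satisfaction values shift by the same amount relative to {\sc SESM}, and since $\sat_M(\mu^C_{SESM})=\sat_W(\mu^C_{SESM})$ by Lemma~\ref{lem:w1SESMforwardMeasure}, we obtain $\sat_M(\mu^C_{BSM})=\sat_W(\mu^C_{BSM})$. Substituting the {\sc SESM} value together with $\widehat{\alpha}$ identifies this common value with $\eta$ (both displayed expressions for $\eta$ agree, which is the same identity), so $\bal(\mu^C_{BSM})=\max\{\sat_M(\mu^C_{BSM}),\sat_W(\mu^C_{BSM})\}=\eta\le\eta$.

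The step I expect to require the most care is the bookkeeping in the previous paragraphs: I must be sure that the two localized modifications truly disturb no other summand --- in particular, that every pair counted through $a(\mu,i)$, $\widehat{a}(\mu,i)$, $b(\mu,i)$ and $c(\mu)$ retains its {\sc SESM} position, and that $\widetilde{w}^i_{\ell_i}$ contributes its position-$1$ value in both instances --- so that the difference of the two satisfaction profiles is genuinely supported on $m^\star$ and on the women $\widetilde{w}^i_j$ with $j\ne\ell_i$. Once that support claim is verified, the remainder is just the geometric-sum evaluation and the defining equation of $\widehat{\alpha}$.
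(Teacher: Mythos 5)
Your proposal is correct and follows essentially the same route as the paper's own proof: isolate the two localized changes (the position of $w^\star$ in $m^\star$'s list, giving $\sat_M(\mu^C_{BSM})=\sat_M(\mu^C_{SESM})+\widehat{\alpha}-\alpha$, and the reweighted positions $2+4^{i-1}|E(G)|^{40}$ of the women $\widetilde{w}^i_j$ with $j\neq\ell_i$, giving the identical shift in $\sat_W$), then invoke $\sat_M(\mu^C_{SESM})=\sat_W(\mu^C_{SESM})$ from Lemma~\ref{lem:w1SESMforwardMeasure} and the defining equation of $\widehat{\alpha}$ to conclude both values equal $\eta$.
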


\begin{proof}
First, since among the men, only the preference list of $m^\star$ has changed, where now it contains $\widehat{\alpha}$ happy women rather than $\alpha$ happy women, we have that $\sat_M(\mu^C_{BSM}) = \sat_M(\mu^C_{SESM})-\alpha+\widehat{\alpha}$. By the definition of $\eta$ and the arguments given in the proof of Lemma \ref{lem:w1SESMforwardMeasure}, we have that $\sat_M(\mu^C_{SESM})=\eta-(p-1)(2^k-1)|E(G)|^{40}+(p-1)\frac{1}{3}(4^k-1)|E(G)|^{40}=\eta-\alpha+\widehat{\alpha}$, which implies that $\sat_M(\mu^C_{BSM})=\eta$.

Second, note that since the preference lists of women of the form $\widetilde{w}^i_j$ have changed, the term 
\[\begin{array}{lll}
\displaystyle{\sum_{i=1}^k\sum_{j=1}^p\pos_{\widetilde{w}^i_j}(\mu^C_{SESM}(\widetilde{w}^i_j))} &=& k + \sum_{i=1}^k(p-1)\left(2+2^{i-1}|E(G)|^{40}\right)\\
&=& k + \displaystyle{\sum_{i=1}^k(p-1)\left(2+2^{k-i}|E(G)|^{40}\right)}\\
& = & k + 2(p-1)k + (p-1)(2^k-1)|E(G)|^{40}.
\end{array}\]
which is part of the analysis of $\sat_W(\mu^C_{SESM})$, is replaced by the term 

\[\begin{array}{lll}
\displaystyle{\sum_{i=1}^k\sum_{j=1}^p\pos_{\widetilde{w}^i_j}(\mu^C_{BSM}(\widetilde{w}^i_j))} &=& k + \sum_{i=1}^k(p-1)\left(2+2^{i-1}|E(G)|^{40}\right)\\
&=& k + \sum_{i=1}^k(p-1)\left(2+4^{i-1}|E(G)|^{40}\right)\\
& = & k + 2(p-1)k + (p-1)\frac{1}{3}(4^k-1)|E(G)|^{40}.
\end{array}\]
Since the preference lists of all other women remained the same, this is the only term that is changed. Thus, we have that $\sat_W(\mu^C_{BSM}) = \sat_W(\mu^C_{SESM}) - (p-1)(2^k-1)|E(G)|^{40} + (p-1)\frac{1}{3}(4^k-1)|E(G)|^{40}$. By Lemma \ref{lem:w1SESMforwardMeasure}, $\sat_M(\mu^C_{SESM})=\sat_W(\mu^C_{SESM})$, and therefore $\sat_W(\mu^C_{BSM})=\eta$. We thus conclude that $\delta(\mu^C_{BSM})\leq\eta$.
\end{proof}

Combining Corollary \ref{lem:w1BSMforwardSM} and Lemma \ref{lem:w1BSMforwardMeasure}, we derive the following corollary.

\begin{corollary}\label{cor:w1BSMforward}
Let $I$ be a \yesinstance\ of {\sc Multicolored Clique}. Then, for the instance $\red_{BSM}(I)$ of {\sc BSM}, $\Bal\leq\eta$.
\end{corollary}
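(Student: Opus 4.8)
The plan is to derive the corollary as an immediate consequence of the two preceding lemmas, with essentially no new technical content. First I would observe that since $I$ is a \yesinstance\ of {\sc Multicolored Clique}, there exists a multicolored clique $C$ of $G$; I fix such a clique. This gives a concrete candidate matching, namely $\mu^C_{BSM}$, which (as noted in the text) is defined exactly as $\mu^C_{SESM}$ via Definition \ref{def:cliquetoSESM}, up to matching a different number of happy agents to one another.

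Next I would invoke Lemma \ref{lem:w1BSMforwardSM} to certify that $\mu^C_{BSM}$ is a stable matching of $\red_{BSM}(I)$, so that $\mu^C_{BSM}\in{\cal S}$; in particular the balance measure $\bal(\mu^C_{BSM})$ is well defined and is a legitimate candidate in the minimization defining $\Bal$. Then I would apply Lemma \ref{lem:w1BSMforwardMeasure}, whose proof establishes the two equalities $\sat_M(\mu^C_{BSM})=\eta$ and $\sat_W(\mu^C_{BSM})=\eta$. These immediately give
\[\bal(\mu^C_{BSM})=\max\{\sat_M(\mu^C_{BSM}),\sat_W(\mu^C_{BSM})\}=\eta.\]
Since $\bal$ is a maximum of sums of (positive) preference positions, it is non-negative, so $|\bal(\mu^C_{BSM})|=\bal(\mu^C_{BSM})$. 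Because $\Bal=\min_{\mu\in{\cal S}}|\bal(\mu)|$ is a minimum over all stable matchings, plugging in the particular stable matching $\mu^C_{BSM}$ yields $\Bal\le|\bal(\mu^C_{BSM})|=\eta$, which is precisely the claimed bound.

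There is no genuine obstacle at the level of the corollary itself: the entire substance resides in the two cited lemmas, which are the hard part. Lemma \ref{lem:w1BSMforwardSM} requires a case analysis verifying that no agent lies in a blocking pair, and Lemma \ref{lem:w1BSMforwardMeasure} requires the careful positional bookkeeping together with the tailored choice of $\widehat{\alpha}$ (replacing $\alpha$ in the preference list of $m^\star$) so that both satisfaction sums collapse to exactly $\eta$. The only point one must be attentive to while assembling the corollary is to read off $\bal$ correctly from those satisfaction values and to match it against the definition of $\Bal$ as a minimum; once that is done, the proof is a one-line combination of Lemmata \ref{lem:w1BSMforwardSM} and \ref{lem:w1BSMforwardMeasure}.
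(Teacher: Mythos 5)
Your proposal is correct and takes exactly the paper's route: the paper derives the corollary in one line by combining Lemma \ref{lem:w1BSMforwardSM} (stability of $\mu^C_{BSM}$) with Lemma \ref{lem:w1BSMforwardMeasure} (whose proof establishes $\sat_M(\mu^C_{BSM})=\sat_W(\mu^C_{BSM})=\eta$), and your explicit unpacking of $\Bal$ as a minimum over stable matchings is the intended assembly. The only discrepancy is notational: the paper's statement of Lemma \ref{lem:w1BSMforwardMeasure} writes $\delta(\mu^C_{BSM})\leq\eta$ where it plainly means $\bal(\mu^C_{BSM})\leq\eta$, a slip that your reading of the lemma's proof correctly bypasses.
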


This concludes the proof of the forward direction.

\medskip
\myparagraph{Reverse Direction.} Second, we prove that given an instance $I$ of {\sc Multicolored Clique}, if for the instance $\red_{BSM}(I)$ of {\sc BSM}, $\Bal=0$, then we can construct a solution for $I$. First, exactly as in the case of {\sc SESM}, the following two lemmata are true.

\begin{lemma}\label{lem:w1BSMAll}
Let $I$ be an of {\sc Multicolored Clique}. Every stable matching of $\red_{BSM}(I)$ matches all agents.
\end{lemma}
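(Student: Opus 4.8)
The plan is to observe that $\red_{BSM}(I)$ is obtained from $\red_{SESM}(I)$ by only two changes: the padding of $m^\star$'s preference list with $\widehat{\alpha}$ happy women instead of $\alpha$ happy women, and the modified preference lists of the women $\widetilde{w}^i_j$ (now padded with $4^{i-1}|E(G)|^{40}$ happy men). Neither change touches the reasoning of Lemma~\ref{lem:w1SESMAll}, so I would transfer that proof essentially verbatim, and indeed the statement here is the exact analogue of Lemma~\ref{lem:w1SESMAll}.

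The core of the argument is to exhibit one concrete stable matching of $\red_{BSM}(I)$ that saturates every agent. I would take the matching $\mu$ that assigns to each man other than $m^\star$ the woman at the top of his preference list, and sets $\mu(m^\star)=w^\star$. The first thing to check is that $\mu$ is a well-defined (injective) matching, which I would verify block by block: within each color class, $m^i$ together with $M^i_{\enr}$ claim the women of $W^i_{\bas}$ bijectively, $\widehat{m}^i$ together with $\widehat{M}^i_{\enr}$ claim $\widehat{W}^i_{\bas}$ bijectively, $\widetilde{M}^i$ claim $\widetilde{W}^i$, $\overline{M}^i$ claim $\overline{W}^i$; for each pair of color classes, $M^{i,j}_{\enr}$ claim $W^{i,j}_{\bas}$ and $\overline{M}^{i,j}$ claim $\overline{W}^{i,j}$; each happy man claims his own happy woman; and $m^\star$ is matched to $w^\star$. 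Since the top choices within each block are pairwise distinct, $\mu$ is a bijection between $M$ and $W$.

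Next I would verify stability exactly as in the {\sc SESM} case. Every man except $m^\star$ is matched to the woman he prefers most, so he can belong to no blocking pair. For $m^\star$, every woman he prefers over $w^\star$ is a happy woman, and each such happy woman is matched to her happy partner at her most preferred position; hence she does not prefer $m^\star$ to her partner, and $m^\star$ belongs to no blocking pair either. Therefore $\mu$ is stable. Finally, I would invoke Proposition~\ref{lem:matchSame}: since $\mu$ matches all men, every stable matching of $\red_{BSM}(I)$ matches all men, and because $|M|=|W|$ and stable matchings are injective, every stable matching is in fact a perfect matching that saturates all agents.

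The only point requiring genuine care is the injectivity check of the top-choice assignment, i.e.\ confirming that no two men share their most preferred woman; I expect this to be the main (though entirely routine) verification. Crucially, the two modifications distinguishing $\red_{BSM}(I)$ from $\red_{SESM}(I)$ affect only positions strictly below the top of the relevant lists (the extra happy agents are appended at the end, and $w^\star$ stays last on $m^\star$'s list), so they do not alter which agent sits atop any man's preference list. Consequently the same matching $\mu$ and the same stability argument go through unchanged.
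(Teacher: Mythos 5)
Your proposal is correct and follows essentially the same route as the paper: the paper proves this lemma by noting it holds ``exactly as in the case of {\sc SESM}'', i.e.\ by exhibiting the top-choice matching (with $m^\star$ matched to $w^\star$), checking it is stable, and invoking Proposition~\ref{lem:matchSame} together with $|M|=|W|$. Your additional observations---the explicit injectivity check and the remark that the {\sc BSM} modifications only append happy agents below the top of the relevant lists---are exactly the details the paper leaves implicit.
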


\begin{lemma}\label{lem:w1BSMReverseHap}
Let $I$ be an of {\sc Multicolored Clique}. Every stable matching $\mu$ of $\red_{BSM}(I)$ satisfies the following conditions.
\begin{enumerate}
\item For all $i\in[k]$, $\mu(m^i)\in W^i_{\bas}$ and $\mu(\widehat{m}^i)\in \widehat{W}^i_{\bas}$.
\item For all $i\in[k]$ and $j\in[p]$, $\mu(\widetilde{m}^i_j)\in\{\widetilde{w}^i_j,\overline{w}^i_j\}$.
\item For all $i\in[\alpha']$, $\mu(m^i_{\hap})=w^i_{\hap}$.
\item $\mu(m^\star)=w^\star$.
\end{enumerate}
\end{lemma}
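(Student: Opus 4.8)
The plan is to observe that Lemma \ref{lem:w1BSMReverseHap} is the verbatim analogue of Lemma \ref{lem:w1SESMReverseHap}, and that the proof of the latter transfers essentially unchanged. Recall that $\red_{BSM}(I)$ differs from $\red_{SESM}(I)$ only in two places: the number $\widehat{\alpha}$ of happy women appended to the preference list of $m^\star$ (in place of $\alpha$), and the number $4^{i-1}|E(G)|^{40}$ of happy men inserted into each list $\pos_{\widetilde{w}^i_j}$ (in place of $2^{k-i}|E(G)|^{40}$). In both cases only the quantity of ``filler'' happy agents changes, while the relative order of the named non-happy agents is untouched. In particular, $m^\star$ still ranks only happy women followed by $w^\star$, and each $\widetilde{w}^i_j$ still ranks $\overline{m}^i_j$ first and $\widetilde{m}^i_j$ second among its non-happy men. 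Since the structural arguments in the proof of Lemma \ref{lem:w1SESMReverseHap} depend only on such orderings and never on the number of fillers, I expect them to remain valid, and I would carry out the same four steps.

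First, for the two conditions on happy agents and the garbage collector, I would argue that for every $i\in[\alpha']$ the pair $(m^i_{\hap},w^i_{\hap})$ is a happy pair, so the two prefer each other to everyone else and must be matched, else they block. Consequently $m^\star$ and $w^\star$, whose lists (apart from the now-exhausted happy women) contain only each other, must also be matched to one another. Here I would invoke Lemma \ref{lem:w1BSMAll} exactly as its {\sc SESM} counterpart is used, to rule out the possibility that any of these agents is left unmatched.

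Second, for the condition on $\widetilde{m}^i_j$, I would fix such a man; by the happy-agent condition he is either unmatched or matched inside $\{\widetilde{w}^i_j,w^i_j,\widehat{w}^i_j,\overline{w}^i_j\}$. Assuming for contradiction that he is unmatched or matched to $w^i_j$ or $\widehat{w}^i_j$, stability forces $\widetilde{w}^i_j$ to be matched to someone she prefers to him, which can only be $\overline{m}^i_j$ (still her unique such non-happy man). This leaves $\overline{w}^i_j$ unmatched, so $(\overline{m}^i_j,\overline{w}^i_j)$ blocks, a contradiction. Third, for the condition on $m^i$ and $\widehat{m}^i$, I would run the identical edge-selector argument: if $m^i$ were matched to some $w^{i,j}_t$, then since she prefers $\overline{m}^{i,j}_t$, stability forces $\overline{m}^{i,j}_t$ onto $\overline{w}^{i,j}_t$, leaving $m^{i,j}_t$ unmatched and $(m^{i,j}_t,\overline{w}^{i,j}_t)$ blocking; the case for $\widehat{m}^i$ is symmetric. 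None of these arguments touches the lists of $m^\star$ or $\widetilde{w}^i_j$ in a way that the filler counts could affect.

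The only (mild) obstacle is to confirm that the two modified lists genuinely preserve all the orderings the blocking-pair arguments rely on. This is immediate from the definitions, since inserting or appending happy agents merely lengthens a list and never reorders the finitely many named agents whose relative ranks drive each argument; hence I do not anticipate any genuine difficulty, and a short remark to this effect should suffice in place of repeating the full {\sc SESM} derivation.
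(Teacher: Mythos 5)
Your proposal is correct and takes essentially the same route as the paper: the paper states Lemma \ref{lem:w1BSMReverseHap} without a separate proof, noting only that it holds ``exactly as in the case of {\sc SESM},'' which is precisely your observation that the two reductions differ only in the number of filler happy agents (in $m^\star$'s list and in the lists of the $\widetilde{w}^i_j$) while all relative rankings of named agents driving the blocking-pair arguments are unchanged. Your reconstruction of the three steps (happy pairs and the garbage collector, then the $\widetilde{m}^i_j$ condition, then the $m^i$/$\widehat{m}^i$ condition via the Edge Selector gadget) matches the proof of Lemma \ref{lem:w1SESMReverseHap} that is being transferred.
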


Now, we reuse Definition \ref{def:w1hardSESMABC} in the context of {\sc BSM}, and turn to prove the appropriate adaptation of Lemma \ref{lem:w1SESMReverseSpecificMeasure} (specifically, we need to update the coefficient of $|E(G)|^{40}$ in the equality involving $\sat_W(\mu)$). 

\begin{lemma}\label{lem:w1BSMReverseSpecificMeasure}
Let $I$ be an instance of {\sc Multicolored Clique}, and let $\mu$ be a stable matching of $\red_{BSM}(I)$. Then, for some $-100|E(G)|^2\leq x\leq 100|E(G)|^2$, it holds that
\[\begin{array}{ll}
\sat_M(\mu)-\alpha' = & \displaystyle{\sum_{i=1}^k(p+a(\mu,i)-\widehat{a}(\mu,i)-1)|E(G)|^{20} + (c(\mu)-|E(G)|+2{k\choose 2})|E(G)|^{10}}\\
&+ \displaystyle{\left((\sum_{i=1}^kb(\mu,i)2^{i-1})-2^k+1\right)|E(G)|^{30} + \frac{1}{3}(p-1)(4^k-1)|E(G)|^{40} + x}.
\end{array}\]

Moreover, for some $-10|E(G)|^2\leq y\leq 10|E(G)|^2$, it holds that
\[\begin{array}{l}
\sat_W(\mu)-\alpha'=\\
\displaystyle{\sum_{i=1}^k(p+\widehat{a}(\mu,i)-a(\mu,i)-1)|E(G)|^{20} + (|E(G)|-c(\mu))|E(G)|^{10} + \sum_{i=1}^k(p-b(\mu,i))4^{i-1}|E(G)|^{40} + y}.
\end{array}\]
\end{lemma}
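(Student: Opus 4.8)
The plan is to exploit the fact that $\red_{BSM}(I)$ coincides with $\red_{SESM}(I)$ everywhere except at two families of agents, and to re-run the bookkeeping of the proof of Lemma \ref{lem:w1SESMReverseSpecificMeasure} while tracking only the effect of these two modifications. First I would invoke Lemmata \ref{lem:w1BSMAll} and \ref{lem:w1BSMReverseHap}: they guarantee that every stable matching of $\red_{BSM}(I)$ matches all agents and assigns $m^i$, $\widehat{m}^i$, $\widetilde{m}^i_j$, the happy men and $m^\star$ into exactly the same families of partners as in the {\sc SESM} case. Hence the quantities $a(\mu,i)$, $\widehat{a}(\mu,i)$, $b(\mu,i)$, $c(\mu)$ of Definition \ref{def:w1hardSESMABC} are well defined, and every position computation performed in the {\sc SESM} proof remains valid verbatim except at the two agents whose lists were altered.

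For the men's side, the only man whose preference list differs from the {\sc SESM} instance is $m^\star$, which now carries $\widehat{\alpha}$ happy women instead of $\alpha$. Since $\mu(m^\star)=w^\star$, the sole summand that changes is $\pos_{m^\star}(w^\star)$, which becomes $\widehat{\alpha}+1$ in place of $\alpha+1$; all remaining terms in the expansion of $\sat_M(\mu)$ are identical. Thus the intermediate estimate of the {\sc SESM} proof holds verbatim with $\alpha$ replaced by $\widehat{\alpha}$, giving that $\sat_M(\mu)-\alpha'$ equals
\[\widehat{\alpha} + \sum_{i=1}^k(p+a(\mu,i)-\widehat{a}(\mu,i)-1)|E(G)|^{20} + c(\mu)|E(G)|^{10} + \sum_{i=1}^kb(\mu,i)2^{i-1}|E(G)|^{30}\]
up to an additive error of absolute value at most $10|E(G)|^2$. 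I would then substitute the second displayed form of $\widehat{\alpha}$. Since $\widehat{\alpha}$ and $\alpha$ share identical $|E(G)|^{10}$ and $|E(G)|^{30}$ coefficients, the same cancellations as in the {\sc SESM} proof produce the terms $(c(\mu)-|E(G)|+2{k\choose 2})|E(G)|^{10}$ and $((\sum_{i=1}^kb(\mu,i)2^{i-1})-2^k+1)|E(G)|^{30}$, while the $|E(G)|^{40}$ contribution of $\widehat{\alpha}$ is now $\frac{1}{3}(p-1)(4^k-1)|E(G)|^{40}$; absorbing the bounded lower-order discrepancy into $x$ yields the first claimed equality.

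For the women's side, the only agents whose lists differ are the $\widetilde{w}^i_j$: the position of $\overline{m}^i_j$ is still $1$, but the position of $\widetilde{m}^i_j$ is now $2+4^{i-1}|E(G)|^{40}$ rather than $2+2^{k-i}|E(G)|^{40}$. By Lemma \ref{lem:w1BSMReverseHap}, $\widetilde{w}^i_j$ is matched to $\widetilde{m}^i_j$ for exactly the $p-b(\mu,i)$ indices $j$ with $\mu(\widetilde{m}^i_j)=\widetilde{w}^i_j$ and to $\overline{m}^i_j$ otherwise. Consequently, in the expansion of $\sat_W(\mu)$ precisely the summand $\sum_{i=1}^k(p-b(\mu,i))2^{k-i}|E(G)|^{40}$ of the {\sc SESM} proof is replaced by $\sum_{i=1}^k(p-b(\mu,i))4^{i-1}|E(G)|^{40}$, and every other contribution is unchanged; collecting the bounded remainders into $y$ gives the second claimed equality. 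The computation is otherwise routine bookkeeping, and the single step that demands care is the men's substitution: one must check that $\widehat{\alpha}$ agrees with $\alpha$ in its $|E(G)|^{10}$ and $|E(G)|^{30}$ coefficients, so that the two cancellations above go through unchanged and only the $|E(G)|^{40}$ term is affected. This is immediate by comparing the two displayed expressions for $\widehat{\alpha}$, which differ from $\alpha$ solely in the final $|E(G)|^{40}$ summand.
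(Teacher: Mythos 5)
Your proposal is correct and follows essentially the same route as the paper: the paper likewise observes that only $m^\star$'s list changed on the men's side (so the {\sc SESM} computation carries over with $\alpha$ replaced by $\widehat{\alpha}$, turning the $(p-1)(2^k-1)|E(G)|^{40}$ term into $\frac{1}{3}(p-1)(4^k-1)|E(G)|^{40}$) and re-derives the women's side with the coefficient $2^{k-i}$ replaced by $4^{i-1}$ in the $\widetilde{w}^i_j$ summands. Your explicit check that $\widehat{\alpha}$ and $\alpha$ share the same $|E(G)|^{10}$ and $|E(G)|^{30}$ coefficients is left implicit in the paper but is exactly the point that makes the carried-over cancellations valid.
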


\begin{proof}
On the one hand, note that except for $m^\star$, the preference lists of the men of $\red_{BSM}(I)$ are the same as their preference lists in $\red_{SESM}(I)$. Hence, the first part of the lemma (that is, the equality concerning $\sat_M(\mu)$) is proven exactly as in the proof of Lemma \ref{lem:w1SESMReverseSpecificMeasure}, where due to the man $m^\star$, the term $(p-1)(2^k-1)|E(G)|^{40}$ is replaced by the term $\frac{1}{3}(p-1)(4^k-1)|E(G)|^{40}$.

On the other hand, by Lemmata \ref{lem:w1BSMAll} and \ref{lem:w1BSMReverseHap} and the definition of preference lists of the agents of $\red_{BSM}(I)$, we have that
\[\begin{array}{lll}
\sat_W(\mu) & = & \displaystyle{\sum_{i=1}^k\pos_{\mu(m^i)}(m^i) + \sum_{i=1}^k\pos_{\mu(\widehat{m}^i)}(\widehat{m}^i) + \sum_{i=1}^k\sum_{j=1}^p\pos_{\mu(m^i_j)}(m^i_j) + \sum_{i=1}^k\sum_{j=1}^p\pos_{\mu(\widehat{m}^i_j)}(\widehat{m}^i_j)}\\

&& + \displaystyle{\sum_{i=1}^k\sum_{j=1}^p\pos_{\mu(\widetilde{m}^i_j)}(\widetilde{m}^i_j) + \sum_{i=1}^k\sum_{j=1}^p\pos_{\mu(\overline{m}^i_j)}(\overline{m}^i_j)}\\

&& + \displaystyle{\sum_{i=1}^{k-1}\sum_{j=i+1}^k\sum_{t=1}^{q^{i,j}}\pos_{\mu(m^{i,j}_t)}(m^{i,j}_t) + \sum_{i=1}^{k-1}\sum_{j=i+1}^k\sum_{t=1}^{q^{i,j}}\pos_{\mu(\overline{m}^{i,j}_t)}(\overline{m}^{i,j}_t)} + \alpha' + 1\\

& = & \displaystyle{4k + \sum_{i=1}^k(a(\mu,i)-1) + \sum_{i=1}^k(p-a(\mu,i))(4 + |E(G)|^{20}))}\\

&& + \displaystyle{\sum_{i=1}^k(p-\widehat{a}(\mu,i)) + \sum_{i=1}^k(\widehat{a}(\mu,i)-1)(4 + |E(G)|^{20}))}\\

&& + \displaystyle{\sum_{i=1}^k\left(2b(\mu,i) + (p-b(\mu,i))(4+4^{i-1}|E(G)|^{40})\right)}\\

&& + 2c(\mu) + (|E(G)|-c(\mu))(8+|E(G)|^{10}) + \alpha' + 1.
\end{array}\]

Thus, we have that
\[\begin{array}{ll}
& \displaystyle{(p+\widehat{a}(\mu,i)-a(\mu,i)-1)|E(G)|^{20} + (|E(G)|-c(\mu))|E(G)|^{10} + \sum_{i=1}^k(p-b(\mu,i))4^{i-1}|E(G)|^{40}}\\
& -10|E(G)|^2\\
\leq & \sat_W(\mu)-\alpha' \\
\leq & \displaystyle{(p+\widehat{a}(\mu,i)-a(\mu,i)-1)|E(G)|^{20} + (|E(G)|-c(\mu))|E(G)|^{10} + \sum_{i=1}^k(p-b(\mu,i))4^{i-1}|E(G)|^{40}}\\
& + 10|E(G)|^2.
\end{array}\]

This concludes the proof of the lemma.
\end{proof}

Since the coefficient of $|E(G)|^{40}$ above has changed, we need to explicitly prove the following version of Lemma \ref{lem:w1SESMReverseTilde}. We remark that if this coefficient were to remain the same, the lemma would not have been correct, and therefore we had to perform the presented modification of the preference lists of $m^\star$ and women of the form $\widetilde{w}^i_j$.

\begin{lemma}\label{lem:w1BSMReverseTilde}
Let $I$ be an instance of {\sc Multicolored Clique}. Let $\mu$ be a stable matching of $\red_{BSM}(I)$ such that $\bal(\mu)\leq\eta$. Then, for all $i\in[k]$, there exists $j\in[p]$ such that $\mu(\widetilde{m}^i_j)=\overline{w}^i_j$ and for all $t\neq j$, $\mu(\widetilde{m}^i_j)=\widetilde{w}^i_j$.
\end{lemma}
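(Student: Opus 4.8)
The plan is to convert the single hypothesis $\bal(\mu)\le\eta$ into two separate arithmetic inequalities on the counts $b(\mu,i)$, and then show that these two inequalities, together with the trivial bound $0\le b(\mu,i)\le p$, force $b(\mu,i)=1$ for every $i$. By Lemma~\ref{lem:w1BSMReverseHap}(2) each $\widetilde m^i_j$ is matched either to $\widetilde w^i_j$ or to $\overline w^i_j$, so the assertion of the lemma is equivalent to $b(\mu,i)=1$ for all $i\in[k]$. Since $\bal(\mu)=\max\{\sat_M(\mu),\sat_W(\mu)\}\le\eta$, I have both $\sat_M(\mu)\le\eta$ and $\sat_W(\mu)\le\eta$, and these are the two inputs I will exploit.

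First I would subtract $\eta$ from each of the two expressions furnished by Lemma~\ref{lem:w1BSMReverseSpecificMeasure}, using the two equivalent forms of $\eta$ recorded just before that lemma together with the identities $2^k-1=\sum_{i=1}^k2^{i-1}$ and $\tfrac13(4^k-1)=\sum_{i=1}^k4^{i-1}$. In $\sat_M(\mu)-\eta$ the $|E(G)|^{40}$ contributions cancel and the dominant surviving term is $\bigl(\sum_{i=1}^kb(\mu,i)2^{i-1}-2^k+1\bigr)|E(G)|^{30}$, whose coefficient is an integer; every remaining contribution has absolute value $O(|E(G)|^{21})$ (one uses $pk=n\le|E(G)|$ to bound the $|E(G)|^{20}$-coefficient $\sum_i(a(\mu,i)-\widehat a(\mu,i))$, and $|c(\mu)|\le|E(G)|$ for the $|E(G)|^{10}$-coefficient), hence cannot offset a positive $|E(G)|^{30}$-coefficient. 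Thus $\sat_M(\mu)\le\eta$ yields the constraint (A): $\sum_{i=1}^kb(\mu,i)2^{i-1}\le 2^k-1$. Symmetrically, in $\sat_W(\mu)-\eta$ the dominant term is $\bigl(\sum_{i=1}^k(1-b(\mu,i))4^{i-1}\bigr)|E(G)|^{40}$ while all else is $O(|E(G)|^{21})$, so $\sat_W(\mu)\le\eta$ yields the constraint (B): $\sum_{i=1}^kb(\mu,i)4^{i-1}\ge \tfrac13(4^k-1)$.

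It then remains to prove the arithmetic core: integers $0\le b_i\le p$ satisfying (A) and (B) must all equal $1$. Writing $d_i=b_i-1\ge-1$, the constraints read $\sum_{i=1}^kd_i2^{i-1}\le0$ and $\sum_{i=1}^kd_i4^{i-1}\ge0$. I would examine the largest index $j$ with $d_j\ne0$. If $d_j\ge1$, then since $d_i\ge-1$ we get $\sum_{i\le j}d_i2^{i-1}\ge 2^{j-1}-(2^{j-1}-1)=1>0$, contradicting the first inequality; hence $d_j=-1$. The first inequality now gives $\sum_{i<j}d_i2^{i-1}\le 2^{j-1}$, and as each negative $d_i$ equals $-1$ its positive part satisfies $\sum_{i<j,\,d_i>0}d_i2^{i-1}\le 2^{j-1}+(2^{j-1}-1)=2^{j}-1$. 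Discarding the negative terms and using $4^{i-1}\le 2^{j-2}2^{i-1}$ for $i<j$, I obtain $\sum_{i<j}d_i4^{i-1}\le 2^{j-2}(2^{j}-1)<4^{j-1}$, so $\sum_{i\le j}d_i4^{i-1}=-4^{j-1}+\sum_{i<j}d_i4^{i-1}<0$, contradicting the second inequality. Hence no nonzero $d_j$ exists, i.e.\ $b(\mu,i)=1$ for all $i$.

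I expect the arithmetic core to be the delicate part. Its correctness rests precisely on the mismatch between the base-$2$ weighting in (A) and the base-$4$ weighting in (B): the estimate $4^{i-1}\le 2^{j-2}2^{i-1}$ for $i<j$ is what allows a bound on the base-$2$ sum to control the base-$4$ sum of the lower-indexed terms. This is exactly why the coefficient governing the $|E(G)|^{40}$ term of the $\widetilde w^i_j$ was changed from $2^{k-i}$ (as in {\sc SESM}) to $4^{i-1}$; with the old coefficient both constraints would share the same base and admit spurious solutions such as $b_1=3,\,b_2=0$, so the statement would fail, matching the remark preceding it. The polynomial-domination bookkeeping in the second paragraph is routine given $|E(G)|>10^k$ and $|E(G)|\ge n$, but it must be carried out to justify reading off the integer coefficients of the two leading monomials.
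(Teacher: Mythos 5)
Your proof is correct. It follows the paper's own skeleton up to the two key constraints: the paper likewise extracts (A) $\sum_{i=1}^k b(\mu,i)2^{i-1}\le 2^k-1$ and (B) $\frac{1}{3}(4^k-1)\le\sum_{i=1}^k b(\mu,i)4^{i-1}$ from $\sat_M(\mu)\le\eta$, $\sat_W(\mu)\le\eta$ and Lemma~\ref{lem:w1BSMReverseSpecificMeasure}, although it asserts this extraction in one line, whereas you spell out the order-of-magnitude bookkeeping (cancellation of the $|E(G)|^{40}$ terms in $\sat_M(\mu)-\eta$, integrality of the leading coefficients, and the $O(|E(G)|^{21})$ bound on everything else); that bookkeeping is exactly what makes the step rigorous, and it mirrors what the paper writes out only in its \SETH-section analogues (Lemmata~\ref{lem:sethSESMEasy} and~\ref{lem:sethBSMEasy}). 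Where you genuinely depart is the arithmetic core. The paper proves that (A) and (B) force $b(\mu,i)=1$ by induction on $k$: it deduces $b(\mu,k)\le 1$ from (A), assumes $b(\mu,k)=0$, runs a four-way case analysis on $b(\mu,k-1)\in\{0,1,2,3\}$ --- each case bounding $\sum_i b(\mu,i)4^{i-1}\le\frac{1}{4}4^k-1<\frac{1}{3}(4^k-1)$ to contradict (B) --- and then recurses on $k-1$. You instead set $d_i=b(\mu,i)-1$, take the largest $j$ with $d_j\neq 0$, rule out $d_j\ge 1$ via (A), and convert the resulting bound on $\sum_{i<j,\,d_i>0}d_i2^{i-1}$ into $\sum_{i<j}d_i4^{i-1}<4^{j-1}$ via $4^{i-1}\le 2^{j-2}2^{i-1}$ for $i<j$ (vacuous, hence harmless, when $j=1$), contradicting (B). This extremal argument is shorter, avoids both the induction and the case analysis, and isolates exactly the base-$2$/base-$4$ mismatch the construction was engineered to create; your closing observation that keeping the \textsc{SESM} coefficient $2^{k-i}$ would admit spurious solutions such as $b_1=3$, $b_2=0$ is precisely the content of the paper's remark preceding the lemma. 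Nothing is lost by the substitution: the same arithmetic fact is what the paper later reuses in Lemma~\ref{lem:sethBSM1PerTruth}, and your proof of it would serve there equally well.
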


\begin{proof}
The statement of the lemma is equivalent to the statement that for all $i\in[k]$, $b(\mu,i)=1$. Since $\Bal(\mu)=0$, it holds that $\sat_M(\mu)\leq\eta$ and $\sat_W(\mu)\leq\eta$. Hence, by Lemma \ref{lem:w1BSMReverseSpecificMeasure}, the two following inequalities are satisfied.
\begin{itemize}
\item $\displaystyle{\left((\sum_{i=1}^kb(\mu,i)2^{i-1})-2^k+1\right)|E(G)|^{30} \leq 0}$.
\item $\displaystyle{\sum_{i=1}^k(p-b(\mu,i))4^{i-1}|E(G)|^{40} \leq \frac{1}{3}(p-1)(4^k-1)|E(G)|^{40}}$.
\end{itemize}

Simplifying the inequalities above, we derive that the two following inequalities are satisfied.
\begin{enumerate}
\item\label{item:eq1} $\displaystyle{\sum_{i=1}^kb(\mu,i)2^{i-1} \leq 2^k-1}$.
\item\label{item:eq2} $\displaystyle{\frac{1}{3}(4^k-1) \leq \sum_{i=1}^kb(\mu,i)4^{i-1}}$.
\end{enumerate}

Note that for all $i\in[k]$, $b(\mu,i)\in[p]$. Let $\varphi$ be an assignment to the variables $b(\mu,i)$ that satisfies this condition as well as the two inequalities above. We claim that $\varphi$ necessarily assigns 1 to all of these variables. This claim can be proven by induction on $k$. In the base case, where $k=1$, the Equation \ref{item:eq1} directly implies that $b(\mu,1)\leq 1$, while Equation \ref{item:eq2} implies that $1=\frac{1}{3}(4^1-1)\leq b(\mu,1)$, and therefore $b(\mu,1)=1$.

We next suppose that $k\geq 2$ and that the claim holds for $k-1$. Then, first note that to satisfy Equality \ref{item:eq1}, it must hold that $b(\mu,k)\leq 1$. Suppose, by way of contradiction, that $b(\mu,k)=0$. By Equation \ref{item:eq1}, we have that $b(\mu,k-1)\leq 3$. Accordingly, we consider the following cases.
\begin{itemize}
\item Assume that $b(\mu,k-1)=3$. By Equation \ref{item:eq1}, $\displaystyle{\sum_{i=1}^{k-2}b(\mu,i)2^{i-1} \leq 2^k-1-3\cdot 2^{k-2} = \frac{1}{4}2^k-1}$. Hence, we have that
\[\begin{array}{ll}
\displaystyle{\sum_{i=1}^kb(\mu,i)4^{i-1}} &= \displaystyle{\frac{3}{16}2^{2k} +\sum_{i=1}^{k-2}b(\mu,i)2^{i-1}2^i}\\
& \leq \displaystyle{\frac{3}{16}2^{2k} + \left(\frac{1}{4}2^k-1\right)2^{k-2}}\\
& = \displaystyle{\frac{3}{16}2^{2k} + \frac{1}{16}2^{2k} - 2^{k-2}} \leq \displaystyle{\frac{1}{4}4^k-1}.
\end{array}\]
However, $\displaystyle{\frac{1}{4}4^k-1} < \frac{1}{3}(4^k-1)$, which contradicts the satisfaction of Equality \ref{item:eq2}.
\item Assume that $b(\mu,k-1)=2$. By Equation \ref{item:eq1}, $\displaystyle{\sum_{i=1}^{k-2}b(\mu,i)2^{i-1} \leq 2^k-1-2\cdot 2^{k-2} = \frac{1}{2}2^k-1}$. Hence, we have that
\[\begin{array}{ll}
\displaystyle{\sum_{i=1}^kb(\mu,i)4^{i-1}} &= \displaystyle{\frac{1}{8}2^{2k} +\sum_{i=1}^{k-2}b(\mu,i)2^{i-1}2^i}\\
& \leq \displaystyle{\frac{1}{8}2^{2k} + \left(\frac{1}{2}2^k-1\right)2^{k-2}}\\
& = \displaystyle{\frac{1}{8}2^{2k} + \frac{1}{8}2^{2k} - 2^{k-2}} \leq \displaystyle{\frac{1}{4}4^k-1}.
\end{array}\]
Again, this contradicts the satisfaction of Equality \ref{item:eq2}.
\item Assume that $b(\mu,k-1)=1$. By Equation \ref{item:eq1}, $\displaystyle{\sum_{i=1}^{k-2}b(\mu,i)2^{i-1} \leq 2^k-1-2^{k-2} = \frac{3}{4}2^k-1}$. Hence, we have that
\[\begin{array}{ll}
\displaystyle{\sum_{i=1}^kb(\mu,i)4^{i-1}} &= \displaystyle{\frac{1}{16}2^{2k} +\sum_{i=1}^{k-2}b(\mu,i)2^{i-1}2^i}\\
& \leq \displaystyle{\frac{1}{16}2^{2k} + \left(\frac{3}{4}2^k-1\right)2^{k-2}}\\
& = \displaystyle{\frac{1}{16}2^{2k} + \frac{3}{16}2^{2k} - 2^{k-2}} \leq \displaystyle{\frac{1}{4}4^k-1}.
\end{array}\]
Again, this contradicts the satisfaction of Equality \ref{item:eq2}.
\item Assume that $b(\mu,k-1)=0$. Then, by Equality \ref{item:eq1}, we have that
\[\begin{array}{ll}
\displaystyle{\sum_{i=1}^kb(\mu,i)4^{i-1}} &= \displaystyle{\sum_{i=1}^{k-2}b(\mu,i)2^{i-1}2^i}\leq \displaystyle{(2^k-1)2^{k-2}}\leq \displaystyle{\frac{1}{4}4^k-1}.
\end{array}\]
Again, this contradicts the satisfaction of Equality \ref{item:eq2}.
\end{itemize}

Thus, we derive that $b(\mu,k)=1$, which implies that the two following inequalities are then satisfied.
\begin{itemize}
\item $\displaystyle{\sum_{i=1}^{k-1}b(\mu,i)2^{i-1} + 2^{k-1} \leq 2^k-1}$. That is, $\displaystyle{\sum_{i=1}^{k-1}b(\mu,i)2^{i-1} \leq 2^{k-1}-1}$
\item $\displaystyle{\frac{1}{3}(4^k-1) \leq \sum_{i=1}^{k-1}b(\mu,i)4^{i-1} + 4^{k-1}}$. That is, $\displaystyle{\frac{1}{3}(4^{k-1}-1) \leq \sum_{i=1}^{k-1}b(\mu,i)4^{i-1}}$.
\end{itemize}

By the inductive hypothesis, we derive that for all $i\in[k-1]$, it also holds that $b(\mu,i)=1$. This concludes the proof of the lemma.
\end{proof}

Having proved Lemma \ref{lem:w1BSMReverseTilde}, we now turn the prove the following version of Lemma \ref{lem:w1SESMReverseColClass}.

\begin{lemma}\label{lem:w1BSMReverseColClass}
Let $I$ be an instance of {\sc Multicolored Clique}. Let $\mu$ be a stable matching of $\red_{BSM}(I)$ such that $\bal(\mu)\leq\eta$. Then, for all $i\in[k]$, there exists $j\in[p]$ such that $\mu(m^i)=w^i_j$ and $\mu(\widehat{m}^i)=\widehat{m}^i_j$.
\end{lemma}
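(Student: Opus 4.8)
The plan is to follow the proof of Lemma~\ref{lem:w1SESMReverseColClass} almost verbatim, the only genuine difference being that we no longer have an exact equality between $\sat_M(\mu)$ and $\sat_W(\mu)$ at our disposal, but merely the two inequalities $\sat_M(\mu)\leq\eta$ and $\sat_W(\mu)\leq\eta$ that together constitute the hypothesis $\bal(\mu)\leq\eta$. I will extract from these two inequalities exactly the information that the equality $\delta(\mu)=0$ provided in the sex-equal case.

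First I would invoke Lemma~\ref{lem:w1BSMReverseTilde} to conclude that $b(\mu,i)=1$ for every $i\in[k]$, and substitute this into Lemma~\ref{lem:w1BSMReverseSpecificMeasure}. This has two effects: the $|E(G)|^{30}$-term of $\sat_M(\mu)-\alpha'$ vanishes (since $\sum_i b(\mu,i)2^{i-1}=2^k-1$), and the $|E(G)|^{40}$-coefficients of both $\sat_M(\mu)-\alpha'$ and $\sat_W(\mu)-\alpha'$ become exactly $\frac13(p-1)(4^k-1)$, which coincides with the $|E(G)|^{40}$-coefficient of $\eta-\alpha'$ in its second displayed form. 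Writing $S=\sum_{i=1}^k\big(a(\mu,i)-\widehat{a}(\mu,i)\big)$, the $|E(G)|^{20}$-coefficient of $\sat_M(\mu)-\alpha'$ is then $(p-1)k+S$, that of $\sat_W(\mu)-\alpha'$ is $(p-1)k-S$, and that of $\eta-\alpha'$ is $(p-1)k$.

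The heart of the argument is then a magnitude comparison. After the identical $|E(G)|^{40}$-contributions cancel, every surviving term other than the $|E(G)|^{20}$-term is $\OO(|E(G)|^{11})$: indeed $c(\mu)\leq|E(G)|$, ${k\choose 2}\leq|E(G)|$ and $pk\leq|E(G)|$ bound the $|E(G)|^{10}$-coefficients, while $|x|,|y|\leq 100|E(G)|^2$ and the remaining additive constants are $\OO(|E(G)|)$. Since $S$ is an integer and $|E(G)|^{20}$ dominates $|E(G)|^{11}$, the inequality $\sat_M(\mu)\leq\eta$ is incompatible with $S\geq 1$ and thus forces $S\leq 0$; symmetrically $\sat_W(\mu)\leq\eta$ forces $-S\leq 0$, i.e.\ $S\geq 0$. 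Hence $S=0$, that is $\sum_{i=1}^k a(\mu,i)=\sum_{i=1}^k\widehat{a}(\mu,i)$, which is precisely the equality $\sum_i(a(\mu,i)+a'(\mu,i))=(p+1)k$ of the sex-equal proof once we set $a'(\mu,i)=(p+1)-\widehat{a}(\mu,i)$.

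From this point I would reuse the stability argument of Lemma~\ref{lem:w1SESMReverseColClass} verbatim, with Lemma~\ref{lem:w1BSMReverseTilde} playing the role of Lemma~\ref{lem:w1SESMReverseTilde}: for each $i$ we pick $j$ with $\mu(\widetilde{m}^i_j)=\overline{w}^i_j$, and since $\widetilde{m}^i_j$ prefers both $w^i_j$ and $\widehat{w}^i_j$ to $\overline{w}^i_j$ yet forms no blocking pair, Lemmata~\ref{lem:w1BSMAll} and~\ref{lem:w1BSMReverseHap} force the Original and Mirror Vertex Selector gadgets into the configuration giving $a(\mu,i)\geq j\geq\widehat{a}(\mu,i)$. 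Thus each difference $a(\mu,i)-\widehat{a}(\mu,i)$ is nonnegative, and since they sum to $S=0$ each is zero, i.e.\ $a(\mu,i)=\widehat{a}(\mu,i)$ for all $i$, which is the statement of the lemma. The only place demanding real care is the magnitude bookkeeping of the third paragraph: one must verify that the $|E(G)|^{40}$-coefficients match \emph{exactly} (this is exactly what the replacement of $2^{i-1}$ by $4^{i-1}$ in the preference lists of the $\widetilde{w}^i_j$, together with $b(\mu,i)=1$, buys us), so that no residual $|E(G)|^{40}$-slack can mask an integer change in $S$ at the $|E(G)|^{20}$ level.
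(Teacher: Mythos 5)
Your proof is correct and follows essentially the same route as the paper's: invoke Lemma~\ref{lem:w1BSMReverseTilde} to fix $b(\mu,i)=1$, use Lemma~\ref{lem:w1BSMReverseSpecificMeasure} together with the two inequalities $\sat_M(\mu)\leq\eta$ and $\sat_W(\mu)\leq\eta$ to force $\sum_{i=1}^k a(\mu,i)=\sum_{i=1}^k\widehat{a}(\mu,i)$, and then finish with the stability argument of Lemma~\ref{lem:w1SESMReverseColClass}. Your explicit magnitude bookkeeping at the $|E(G)|^{20}$ level (and the check that the $|E(G)|^{40}$-coefficients cancel exactly) simply spells out what the paper leaves implicit when it asserts the coefficient inequalities.
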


\begin{proof}
Since $\bal(\mu)=0$, it holds that $\sat_M(\mu)\leq\eta$ and $\sat_W(\mu)\leq\eta$. By Lemma \ref{lem:w1BSMReverseTilde}, for all $i\in[k]$, $b(\mu,i)=1$.
Hence, by Lemma \ref{lem:w1SESMReverseSpecificMeasure}, to satisfy $\sat_M(\mu)\leq\eta$, the following inequality is satisfied.
\[\displaystyle{\sum_{i=1}^k(p+a(\mu,i)-\widehat{a}(\mu,i)-1)|E(G)|^{20} \leq (p-1)k|E(G)|^{20}}.\]
Moreover, by Lemma \ref{lem:w1SESMReverseSpecificMeasure}, to satisfy $\sat_W(\mu)\leq\eta$, the following inequality is satisfied.
\[\displaystyle{\sum_{i=1}^k(p+\widehat{a}(\mu,i)-a(\mu,i)-1)|E(G)|^{20} \leq (p-1)k|E(G)|^{20}}.\]
That is, the two following inequalities are satisfied.
\begin{itemize}
\item $\displaystyle{\sum_{i=1}^k(a(\mu,i)-\widehat{a}(\mu,i))\leq 0}$.
\item $\displaystyle{\sum_{i=1}^k(\widehat{a}(\mu,i)-a(\mu,i))\leq 0}$.
\end{itemize}

These two inequalities imply that $\displaystyle{\sum_{i=1}^ka(\mu,i)=\sum_{i=1}^k\widehat{a}(\mu,i)}$. For all $i\in[k]$, denote $a'(\mu,i)=(p+1)-\widehat{a}(\mu,i)$. Then,
\[\displaystyle{\sum_{i=1}^k(a(\mu,i)+a'(\mu,i)) = (p+1)k}.\]

Having established the last equality above, the proof proceeds exactly as the proof of Lemma \ref{lem:w1SESMReverseColClass}.
\end{proof}

We are now ready to prove the correctness of the reverse direction.

\begin{lemma}\label{lem:w1BSMReverse}
Let $I=(G,(V^1,V^2,\ldots,V^k))$ be an instance of {\sc Multicolored Clique}. If for the instance $\red_{BSM}(I)$ of {\sc BSM}, $\Bal=0$, then $I$ is a \yesinstance\ of {\sc Multicolored Clique}.
\end{lemma}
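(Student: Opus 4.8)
The plan is to follow the template of the {\sc SESM} reverse direction (Lemma~\ref{lem:w1SESMReverse}) almost verbatim, substituting the {\sc BSM} analogues of the supporting lemmas. Since both the forward bound (Corollary~\ref{cor:w1BSMforward}) and the hypotheses of Lemmata~\ref{lem:w1BSMReverseTilde} and~\ref{lem:w1BSMReverseColClass} are phrased with the threshold $\eta$, the statement to prove is really the one with hypothesis $\Bal\leq\eta$, and I would begin by fixing a stable matching $\mu$ of $\red_{BSM}(I)$ with $\bal(\mu)\leq\eta$; by Definition~\ref{def:balMeasure} this means \emph{both} $\sat_M(\mu)\leq\eta$ and $\sat_W(\mu)\leq\eta$. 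By Lemmata~\ref{lem:w1BSMAll} and~\ref{lem:w1BSMReverseHap} the quantities $a(\mu,i),\widehat{a}(\mu,i),b(\mu,i),c(\mu)$ of Definition~\ref{def:w1hardSESMABC} are well defined, and Lemmata~\ref{lem:w1BSMReverseTilde} and~\ref{lem:w1BSMReverseColClass} already hand me $b(\mu,i)=1$ and $a(\mu,i)=\widehat{a}(\mu,i)$ for every $i\in[k]$.

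The one genuinely new step is to pin down $c(\mu)=|E(G)|-{k\choose 2}$. In the {\sc SESM} argument this was immediate from the equality $\sat_M(\mu)=\sat_W(\mu)$; here I have only two one-sided inequalities, so I would squeeze $c(\mu)$ from both sides. Substituting $b(\mu,i)=1$ and $a(\mu,i)=\widehat{a}(\mu,i)$ into Lemma~\ref{lem:w1BSMReverseSpecificMeasure}, the $|E(G)|^{20}$, $|E(G)|^{30}$ and $|E(G)|^{40}$ coefficients of both $\sat_M(\mu)-\alpha'$ and $\sat_W(\mu)-\alpha'$ collapse to exactly the coefficients appearing in the two displayed forms of $\eta$ (in particular the $|E(G)|^{30}$ term vanishes and the $|E(G)|^{40}$ term becomes $\frac13(p-1)(4^k-1)|E(G)|^{40}$). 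After cancelling those matching high-order terms, the inequality $\sat_M(\mu)\leq\eta$ reduces to a comparison at the $|E(G)|^{10}$ level, namely $(c(\mu)-|E(G)|+{k\choose 2})|E(G)|^{10}\leq \OO(|E(G)|^2)$; since $c(\mu)-|E(G)|+{k\choose 2}$ is an integer and $|E(G)|>10^k$, this forces $c(\mu)\leq|E(G)|-{k\choose 2}$. Symmetrically, $\sat_W(\mu)\leq\eta$ reduces to $(|E(G)|-c(\mu)-{k\choose 2})|E(G)|^{10}\leq \OO(|E(G)|^2)$ and hence $c(\mu)\geq|E(G)|-{k\choose 2}$. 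Combining the two bounds yields $c(\mu)=|E(G)|-{k\choose 2}$.

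From here the clique is extracted exactly as in Lemma~\ref{lem:w1SESMReverse}. I would set $W'=\{w^{i,j}_t:\mu(w^{i,j}_t)=m^{i,j}_t\}$; since each $m^{i,j}_t$ is matched (Lemma~\ref{lem:w1BSMAll}) to a woman in $\{w^{i,j}_t,\overline{w}^{i,j}_t\}$ (happy women being matched among themselves by Lemma~\ref{lem:w1BSMReverseHap}), the value $c(\mu)=|E(G)|-{k\choose 2}$ gives $|W'|={k\choose 2}$. Letting $\ell_i$ be the index with $\mu(m^i)=w^i_{\ell_i}$ and $\mu(\widehat{m}^i)=\widehat{w}^i_{\ell_i}$ (well defined by Lemma~\ref{lem:w1BSMReverseColClass}), stability together with the leader form of the preference lists of $m^i$ and $\widehat{m}^i$ forces every $e^{i,j}_t$ with $w^{i,j}_t\in W'$ to be incident to $v^i_{\ell_i}$. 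Hence $X=\{v^i_{\ell_i}:i\in[k]\}$ is a $k$-set meeting all ${k\choose 2}$ edges represented in $W'$, i.e.\ a multicolored clique, so $I$ is a \yesinstance.

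The hard part is the two-sided squeeze for $c(\mu)$: I must verify that after plugging in $b(\mu,i)=1$ and $a(\mu,i)=\widehat{a}(\mu,i)$ the $|E(G)|^{30}$ and $|E(G)|^{40}$ contributions cancel \emph{exactly} against $\eta$ (this is precisely why the preference lists of $m^\star$ and of the women $\widetilde{w}^i_j$ were rebalanced with the $\frac13(4^k-1)$ coefficients, and why Lemma~\ref{lem:w1BSMReverseTilde} had to be re-proved for {\sc BSM}), so that the surviving discrepancy lives at the $|E(G)|^{10}$ scale with only an $\OO(|E(G)|^2)$ error. Only then does integrality together with $|E(G)|>10^k$ round each inequality to the single value $c(\mu)=|E(G)|-{k\choose 2}$ needed for the clique extraction.
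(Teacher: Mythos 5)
Your proposal is correct and follows essentially the same route as the paper: fix $\mu$ with $\sat_M(\mu),\sat_W(\mu)\leq\eta$, invoke Lemmata~\ref{lem:w1BSMReverseTilde} and~\ref{lem:w1BSMReverseColClass} to get $b(\mu,i)=1$ and $a(\mu,i)=\widehat{a}(\mu,i)$, then use Lemma~\ref{lem:w1BSMReverseSpecificMeasure} to squeeze $c(\mu)$ from both sides (the paper states the two resulting inequalities $c(\mu)\leq|E(G)|-{k\choose 2}$ and $c(\mu)\geq|E(G)|-{k\choose 2}$ exactly as you derive them, with your integrality-plus-magnitude step left implicit), and finally extract the clique exactly as in Lemma~\ref{lem:w1SESMReverse}. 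Your observation that the working hypothesis is really $\Bal\leq\eta$ rather than $\Bal=0$ also matches how the paper actually uses the hypothesis.
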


\begin{proof}
Suppose that for the instance $\red_{BSM}(I)$ of {\sc BSM}, $\Bal=0$. Then, there exists a stable matching $\mu$ such that $\bal(\mu)=0$. Hence, $\sat_M(\mu)\leq\eta$ and $\sat_W(\mu)\leq\eta$. By Lemmata \ref{lem:w1BSMReverseSpecificMeasure}, \ref{lem:w1BSMReverseTilde} and \ref{lem:w1BSMReverseColClass}, to satisfy $\sat_M(\mu)\leq\eta$, the following inequality is satisfied.
\[(c(\mu)-|E(G)|+2{k\choose 2})\leq {k\choose 2}.\]
Moreover, by Lemmata \ref{lem:w1BSMReverseSpecificMeasure}, \ref{lem:w1BSMReverseTilde} and \ref{lem:w1BSMReverseColClass}, to satisfy $\sat_W(\mu)\leq\eta$, the following inequality is satisfied.
\[(|E(G)|-c(\mu))\leq {k\choose 2}.\]
The first inequality is equivalent to $\displaystyle{c(\mu)\leq |E(G)|-{k\choose 2}}$, while the second inequality is equivalent to $\displaystyle{c(\mu)\geq |E(G)|-{k\choose 2}}$. Thus, we have that $\displaystyle{c(\mu)=|E(G)|-{k\choose 2}}$. Having established this last equality, the proof proceeds exactly as the proof of Lemma \ref{lem:w1SESMReverse}.
\end{proof}

\medskip
\myparagraph{Summary.} Finally, we note that the reduction can be performed in time that is polynomial in the size of the output. That is, we have the following observation.

\begin{observation}\label{obs:w1BSMTime}
Let $I=(G,(V^1,V^2,\ldots,V^k))$ be an instance of {\sc Multicolored Clique}. Then, the instance $\red_{BSM}(I)$ of {\sc BSM} can be constructed in time $2^{\OO(k)}\cdot n^{\OO(1)}$. Here, $n=|V(G)|$.
\end{observation}

By Proposition \ref{prop:multiClique}, Lemma \ref{lem:twBSM}, Corollary \ref{cor:w1BSMforward}, Lemma \ref{lem:w1BSMReverse} and Observation \ref{obs:w1BSMTime}, we conclude that {\sc BSM} is \WOH. Moreover, unless \ETH\ fails, {\sc BSM} cannot be solved in time $f(\tw)\cdot n^{o(\tw)}$ for any function $f$ that depends only on $\tw$. Here, $n$ is the number of agents.

\subsection{max-Stable Marriage with Ties}\label{sec:w1maxSMT}

Next, we prove that {\sc max-SMT} is \WOH, and that unless \ETH\ fails, {\sc max-SMT} cannot be solved in time $f(\tw)\cdot n^{o(\tw)}$ for any function $f$ that depends only on $\tw$.

\subsubsection{Reduction}

Let $I=(G,(V^1,V^2,\ldots,V^k))$ be an instance of {\sc Multicolored Clique}. We now describe how to construct an instance $\red_{max}(I)=(M,W,\{\pos_m\}|_{m\in M},\{\pos_w\}|_{w\in W})$ of {\sc max-SMT}. First, we again use the set of basic agents defined in Section \ref{sec:w1SESM}. Recall that for this set of basic agents, the preference lists of the men in $M_{\bas}\cup \widehat{M}_{\bas}$ are of the form of a leader. In the current reduction, this form completes the precise definition of the preference lists of the men in $M_{\bas}\cup \widehat{M}_{\bas}$, since for all $i\in[k]$, we now set the intersection of $\domain(\pos_{m^i})$ with $M$, the set of {\em all} men, to be exactly $W^i_{\bas}\cup W^{i,j}_{\bas}$, and we also set the intersection of $\domain(\pos_{\widehat{m}^i})$ with $M$ to be exactly $\widehat{W}^i_{\bas}\cup W^{i,j}_{\bas}$.

In our current reduction, we would have one set of gadgets called Combined Vertex Selector gadgets, rather than the three sets of Original Vertex Selector gadgets, Mirror Vertex Selector gadgets and Consistency gadgets that were introduced in Section \ref{sec:w1SESM}. Namely, we would be able to integrate all of the properties guaranteed by the later three sets of gadgets using only one set of gadgets in a manner that does not compromise the readability of the reduction. Afterwards, we would again introduce a set of Edge Selector gadgets. Here, however, the definition of happy pairs is not required, as the source of difficulty of {\sc max-SMT} lies in the existence of ties in preference lists and not in the necessity to achieve a certain budget such as $\Delta=0$. 

\medskip
\myparagraph{Combined Vertex Selector.} For every color class $i\in[k]$, our first set of gadgets introduces the following sets of new men: $M^i_{\enr}=\{m^i_2,m^i_3,\ldots,m^i_p\}$ and $\widehat{M}=\{\widehat{m}^i_1,\widehat{m}^i_2,\ldots,\widehat{m}^i_p\}$. Note that or all $i\in[k]$, $m^i_1$ is not defined, but $\widehat{m}^i_1$ is defined. We also introduce one new woman, called $w^i$. For all $j\in\{2,3,\ldots,p\}$, the preference list of $m^i_j$ is defined as follows. We set $\domain(\pos_{m^i_j})=\{w^i_{j-1},w^i_j\}$, and $\pos_{m^i_j}(w^i_{j-1})=\pos_{m^i_j}(w^i_j)=1$. 
Moreover, for all $j\in[p]$, the preference list of $\widehat{m}^i_j$ is defined as follows. We set $\domain(\pos_{\widehat{m}^i_j})=\{w^i,w^i_j,\widehat{w}^i_j\}$, $\pos_{\widehat{m}^i_j}(w^i)=1$, $\pos_{\widehat{m}^i_j}(w^i_j)=2$ and $\pos_{\widehat{m}^i_j}(\widehat{w}^i_j)=3$. This completes the definition of the preference lists of {\em all} men introduced so far.

Let us now also define the preference lists of the women participating in the Combined Vertex Selector gadget that handles color class $i\in[k]$. First, the preference list of $w^i_1$ is defined as follows. We set $\domain(\pos_{w^i_1})=\{m^i_2,\widehat{m}^i_1,m^i\}$, $\pos_{w^i_1}(m^i_2)=1$, $\pos_{w^i_1}(\widehat{m}^i_1)=2$ and $\pos_{w^i_1}(m^i)=3$. For all $j\in\{2,3,\ldots,p-1\}$, the preference list of $w^i_j$ is defined as follows. We set $\domain(\pos_{w^i_j})=\{m^i_j,m^i_{j+1},\widehat{m}^i_j,m^i\}$, $\pos_{w^i_j}(m^i_j)=\pos_{w^i_j}(m^i_{j+1})=1$, $\pos_{w^i_j}(\widehat{m}^i_j)=2$ and $\pos_{w^i_j}(m^i)=3$. The preference list of $w^i_p$ is defined as follows. We set $\domain(\pos_{w^i_p})=\{m^i_p,\widehat{m}^i_1\}$, $\pos_{w^i_p}(m^i_p)=1$, $\pos_{w^i_p}(\widehat{m}^i_p)=2$ and $\pos_{w^i_p}(m^i)=3$. Second, for all $j\in[p]$, the preference list of $\widehat{w}^i_j$ is defined by setting $\domain(\pos_{\widehat{w}^i_j})=\{\widehat{m}^i_j, \widehat{m}^i\}$, $\pos_{\widehat{w}^i_j}(\widehat{m}^i_j)=1$ and $\pos_{\widehat{w}^i_j}(\widehat{m}^i)=2$. Finally, the preference list of $w^i$ is defined as follows. We set $\domain(\pos_{w^i})=M^i_{\enr}$, where for all $j\in\{2,3\ldots,p\}$, $\pos_{w^i}(m^i_j)=1$.

An illustration of a Combined Vertex Selector gadget is given in Fig.~\ref{fig:w1MaxSMT1}.

\begin{figure}[t!]\centering
\fbox{\includegraphics[scale=0.9]{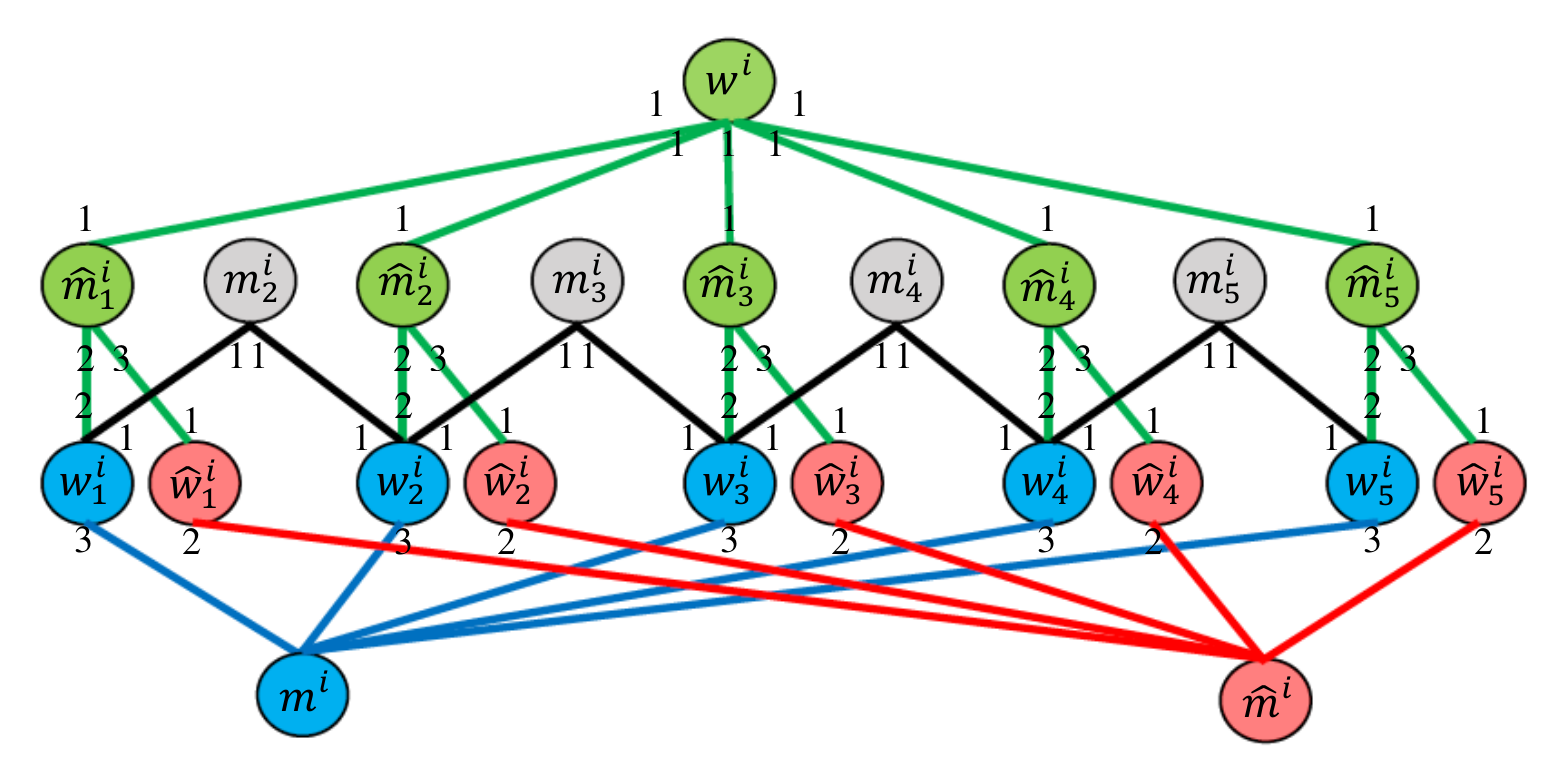}}
\caption{The Combined Vertex Selector gadget where $p=5$. Numbers indicate the positions of agents in preference lists.}\label{fig:w1MaxSMT1}
\end{figure}

\medskip
\myparagraph{Edge Selector.} For every two color classes $i,j\in[k]$ where $i<j$, we introduce the following three sets of new men: $M^{i,j}_{\enr}=\{m^{i,j}_1,m^{i,j}_2,\ldots,m^{i,j}_{q^{i,j}}\}$, $\overline{M}^{i,j}=\{\overline{m}^{i,j}_1,\overline{m}^{i,j}_2,\ldots,\overline{m}^{i,j}_{q^{i,j}}\}$ and $\widetilde{M}^{i,j}=\{\widetilde{m}^{i,j}_2,\widetilde{m}^{i,j}_3,\ldots,\widetilde{m}^{i,j}_{q^{i,j}}\}$. We also introduce two sets of new women: $\overline{W}^{i,j}=\{\overline{w}^{i,j}_1,\overline{w}^{i,j}_2,\ldots,\overline{w}^{i,j}_{q^{i,j}}\}$ and $\widetilde{M}^{i,j}=\{\widetilde{w}^{i,j}_2,\widetilde{w}^{i,j}_3,\ldots,\widetilde{w}^{i,j}_{q^{i,j}}\}$.

Let us now define the preference lists of the men participating in the Edge Selector gadget that handles color classes $i,j\in[k]$ where $i<j$. For all $t\in[q^{i,j}]$, the preference list of $m^{i,j}_t$ is defined by setting $\domain(\pos_{m^{i,j}_t})=\{w^{i,j}_t,\overline{w}^{i,j}_t\}$ and $\pos_{m^{i,j}_t}(w^{i,j}_t)=\pos_{\overline{m}^{i,j}_t}(w^{i,j}_t)=1$.
The preference list of $\overline{m}^{i,j}_1$ is set by defining $\domain(\pos_{\overline{m}^{i,j}_1})=\{\widetilde{w}^{i,j}_2,w^{i,j}_1\}$, $\pos_{\overline{m}^{i,j}_1}(\widetilde{w}^{i,j}_2)=1$ and $\pos_{\overline{m}^{i,j}_1}(w^{i,j}_1)=2$.
For all $t\in\{2,3,\ldots,q^{i,j}\}$, the preference list of $\overline{m}^{i,j}_t$ is set by defining $\domain(\pos_{\overline{m}^{i,j}_t})=\{\widetilde{w}^{i,j}_t,\widetilde{w}^{i,j}_{t+1},w^{i,j}_t\}$, $\pos_{\overline{m}^{i,j}_t}(\widetilde{w}^{i,j}_t)=\pos_{\overline{m}^{i,j}_t}(\widetilde{w}^{i,j}_{t+1})=1$ and $\pos_{\overline{m}^{i,j}_t}(w^{i,j}_t)=2$.
The preference list of $\overline{m}^{i,j}_p$ is set by defining $\domain(\pos_{\overline{m}^{i,j}_p})=\{\widetilde{w}^{i,j}_p,w^{i,j}_p\}$, $\pos_{\overline{m}^{i,j}_p}(\widetilde{w}^{i,j}_p)=1$ and $\pos_{\overline{m}^{i,j}_p}(w^{i,j}_p)=2$.
For all $t\in\{2,3,\ldots,q^{i,j}\}$, the preference list of $\widetilde{m}^{i,j}_t$ is defined by setting $\domain(\pos_{\widetilde{m}^{i,j}_t})=\{\overline{w}^{i,j}_{t-1},\overline{w}^{i,j}_t\}$ and $\pos_{\widetilde{m}^{i,j}_t}(\overline{w}^{i,j}_{t-1})=\pos_{\widetilde{m}^{i,j}_t}(\overline{w}^{i,j}_t)=1$.

We proceed by defining the preference lists of the women participating in the Edge Selector gadget that handles color classes $i,j\in[k]$ where $i<j$. For all $t\in[p]$, the preference list of $w^{i,j}_t$ is defined as follows. We set $\domain(\pos_{w^{i,j}_t})=\{m^{i,j}_t,m^i,\widehat{m}^i,m^j,\widehat{m}^j,\overline{m}^{i,j}_t\}$, $\pos_{w^{i,j}_t}(m^{i,j}_t)=1$, $\pos_{w^{i,j}_t}(m^i)=2$, $\pos_{w^{i,j}_t}(\widehat{m}^i)=2$, $\pos_{w^{i,j}_t}(m^j)=2$, $\pos_{w^{i,j}_t}(\widehat{m}^j)=2$ and $\pos_{w^{i,j}_t}(\overline{m}^{i,j}_t)=3$.
The preference list of $\overline{w}^{i,j}_1$ is set by defining $\domain(\pos_{\overline{w}^{i,j}_1})=\{\widetilde{m}^{i,j}_2,m^{i,j}_1\}$, $\pos_{\overline{w}^{i,j}_1}(\widetilde{m}^{i,j}_2)=1$ and $\pos_{\overline{w}^{i,j}_1}(m^{i,j}_1)=2$.
For all $t\in\{2,3,\ldots,q^{i,j}\}$, the preference list of $\overline{w}^{i,j}_t$ is set by defining $\domain(\pos_{\overline{w}^{i,j}_t})=\{\widetilde{m}^{i,j}_t,\widetilde{m}^{i,j}_{t+1},m^{i,j}_t\}$, $\pos_{\overline{w}^{i,j}_t}(\widetilde{m}^{i,j}_t)=\pos_{\overline{w}^{i,j}_t}(\widetilde{m}^{i,j}_{t+1})=1$ and $\pos_{\overline{w}^{i,j}_t}(m^{i,j}_t)=2$.
The preference list of $\overline{w}^{i,j}_p$ is set by defining $\domain(\pos_{\overline{w}^{i,j}_p})=\{\widetilde{m}^{i,j}_p,m^{i,j}_p\}$, $\pos_{\overline{w}^{i,j}_p}(\widetilde{m}^{i,j}_p)=1$ and $\pos_{\overline{w}^{i,j}_p}(m^{i,j}_p)=2$.
For all $t\in\{2,3,\ldots,q^{i,j}\}$, the preference list of $\widetilde{w}^{i,j}_t$ is defined by setting $\domain(\pos_{\widetilde{w}^{i,j}_t})=\{\overline{m}^{i,j}_{t-1},\overline{m}^{i,j}_t\}$ and $\pos_{\widetilde{w}^{i,j}_t}(\overline{m}^{i,j}_{t-1})=\pos_{\widetilde{w}^{i,j}_t}(\overline{m}^{i,j}_t)=1$.

An illustration of an Edge Selector gadget is given in Fig.~\ref{fig:w1MaxSMT2}.

\begin{figure}[t!]\centering
\fbox{\includegraphics[scale=0.9]{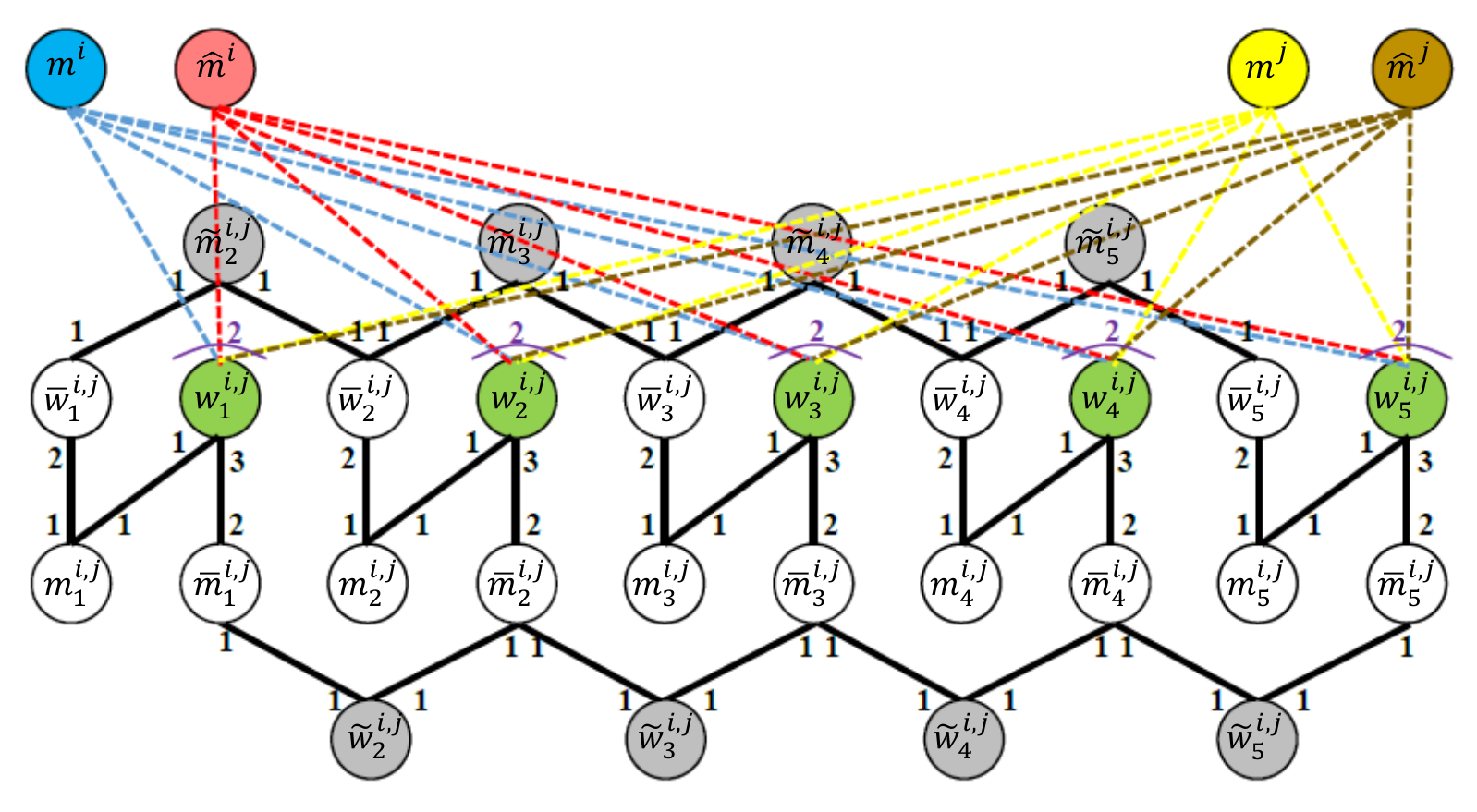}}
\caption{The Edge Selector gadget where $q^{i,j}=5$. Numbers indicate the positions of agents in preference lists.}\label{fig:w1MaxSMT2}
\end{figure}

\subsubsection{Treewidth}

We begin the analysis of the reduction by bounding the treewidth of the resulting primal graph.

\begin{lemma}\label{lem:twMAX}
Let $I$ be an instance of {\sc Multicolored Clique}. Then, the treewidth of $\red_{max}(I)$ is bounded by $2k+\OO(1)$.
\end{lemma}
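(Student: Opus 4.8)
The plan is to mirror the strategy of Lemma~\ref{lem:twSESM}: isolate the few agents whose preference lists create edges spanning distinct gadgets, delete them, bound the treewidth of each remaining connected component by $\OO(1)$, and then reinsert the deleted agents into every bag. Let $P$ be the primal graph of $\red_{max}(I)$ and let $\widehat{P}$ be obtained from $P$ by deleting the $2k$ vertices representing the basic men in $M_{\bas}\cup\widehat{M}_{\bas}$. The only inter-gadget adjacencies run through these men: each woman $w^{i,j}_t$ of an Edge Selector gadget ranks, outside its own gadget, only $m^i,\widehat{m}^i,m^j,\widehat{m}^j$, and no Combined Vertex Selector agent ranks any Edge Selector agent. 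Hence the connected components of $\widehat{P}$ are exactly the $k$ Combined Vertex Selector gadgets $C^1,\ldots,C^k$ and the ${k\choose 2}$ Edge Selector gadgets. It then suffices to exhibit an $\OO(1)$-width tree decomposition for each component: inserting all of $M_{\bas}\cup\widehat{M}_{\bas}$ into every bag and joining the component trees arbitrarily yields a decomposition of $P$ of width $2k+\OO(1)$, since the basic men then occupy all bags (so their bag-subtrees are connected and trivially cover every incident edge, whose other endpoint already lies in some bag). Unlike in Lemma~\ref{lem:twSESM}, this reduction has no happy agents, so no preliminary cleanup step is needed.

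For a Combined Vertex Selector gadget $C^i$ I would use a path decomposition $u_1-u_2-\cdots-u_p$ indexed by $j\in[p]$, where $\beta^i(u_j)$ collects the constantly many agents of index near $j$, say $w^i_{j-1},w^i_j,m^i_j,m^i_{j+1},\widehat{m}^i_j,\widehat{w}^i_j$; this is justified because $w^i_j,m^i_j,\widehat{m}^i_j,\widehat{w}^i_j$ form a caterpillar backbone along $j$, exactly as in the SESM case. The one new ingredient is the single woman $w^i$, who is adjacent to $\Theta(p)$ of the gadget's men; I would simply place $w^i$ in every bag $\beta^i(u_j)$, which raises the width by only $1$, keeps her bag-subtree connected, and covers all her incident edges, so the width of $C^i$ stays $\OO(1)$. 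For an Edge Selector gadget I would build a path decomposition indexed by $t\in[q^{i,j}]$, placing into bag $t$ the constantly many agents of index in $\{t-1,t,t+1\}$, namely $m^{i,j}_t,\overline{m}^{i,j}_t,\widetilde{m}^{i,j}_t,\widetilde{m}^{i,j}_{t+1},w^{i,j}_t,\overline{w}^{i,j}_{t-1},\overline{w}^{i,j}_t,\widetilde{w}^{i,j}_t,\widetilde{w}^{i,j}_{t+1}$; every edge of the gadget links agents of consecutive indices (through $\overline{m}^{i,j}_t,\widetilde{m}^{i,j}_t$ and their female counterparts), so each edge is realized in some bag and each agent occupies a connected set of bags, giving width $\OO(1)$.

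The remaining work is the routine verification of the three tree-decomposition axioms for each local decomposition and for the assembled global one, which follows from the explicit bag definitions just as in Lemma~\ref{lem:twSESM}. I expect the only genuine subtlety to be the hub woman $w^i$: a naive reading suggests the Combined Vertex Selector gadget could have treewidth $\Omega(p)$ because $w^i$ has high degree, but since there is exactly one such woman per color class, absorbing her into every bag of that gadget's decomposition costs a single unit of width and leaves the $\OO(1)$ bound intact. Everything else is a direct bag-by-bag adaptation of the SESM treewidth argument, and the final bound $2k+\OO(1)$ follows.
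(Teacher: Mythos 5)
Your proposal is correct and follows essentially the same route as the paper: delete the $2k$ men in $M_{\bas}\cup\widehat{M}_{\bas}$, observe that the remaining components are exactly the Combined Vertex Selector and Edge Selector gadgets, give each an $\OO(1)$-width (path) decomposition, and reinsert the deleted men into every bag. The only cosmetic difference is in the Combined Vertex Selector gadget, where the paper notes that deleting $w^i$ leaves a tree (so the component has treewidth at most $2$), whereas you place $w^i$ in every bag of an explicit path decomposition --- these are the same standard ``one vertex costs one unit of width'' argument, and your Edge Selector bags match the paper's almost verbatim.
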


\begin{proof}
Let $P$ be the primal graph of $\red_{max}(I)$, and let $P'$ denote the graph obtained from $P$ by the removal of all of (the vertices that represent) men in $M_{\bas}\cup\widehat{M}_{\bas}$. Note that $|M_{\bas}\cup\widehat{M}_{\bas}|=2k$. Hence, as in the proof of Lemma \ref{lem:twSESM}, to prove that the treewidth of $P$ is bounded by $2k+\OO(1)$, it is sufficient to prove that the treewidth of every connected component of $P'$ is bounded by $\OO(1)$. First, for all $i\in[k]$, we have that $M^i_{\enr}\cup\widehat{M}^i_{\enr}\cup W^i_{\bas}\cup \widehat{W}^i_{\bas}\cup \{w^i\}$ is the entire vertex set of a connected component of $P'$. Since after we remove $w^i$ from this connected component we obtain a tree (whose treewidth is 1), we deduce that the treewidth of this connected component is at most 2. 

Next, note that for all $i,j\in[k]$ where $i<j$ and $t\in[q^{i,j}]$, it holds that $X^{i,j}=M^{i,j}_{\enr}\cup\overline{M}^{i,j}\cup\widetilde{M}^{i,j}\cup W^{i,j}_{\bas}\cup\overline{W}^{i,j}\cup\widetilde{W}^{i,j}$ is the entire vertex set of a connected component of $P'$. For this connected component, which we denote by $C^{i,j}$, we explicitly define a tree decomposition $(T^{i,j},\beta^{i,j})$ as follows. Denote $q=q^{i,j}$. The tree $T^{i,j}$ is simply a path on $q$ vertices, denoted by $T=u_1-u_2-\cdots-u_q$. We define $\beta^{i,j}(u_1)=\{w^{i,j}_1,m^{i,j}_1,\overline{w}^{i,j}_1,\overline{m}^{i,j}_1,\widetilde{m}^{i,j}_2,\widetilde{w}^{i,j}_2\}$ and $\beta^{i,j}(u_q)=\{w^{i,j}_q,m^{i,j}_q,\overline{w}^{i,j}_q,\overline{m}^{i,j}_q,\widetilde{m}^{i,j}_q,\widetilde{w}^{i,j}_q\}$. For all $t\in\{2,\ldots,q-1\}$, we define $\beta^{i,j}(u_t)=\{w^{i,j}_t,m^{i,j}_t,\overline{w}^{i,j}_t,\overline{m}^{i,j}_t,\widetilde{m}^{i,j}_t,\widetilde{w}^{i,j}_t,\widetilde{m}^{i,j}_{t+1},\widetilde{w}^{i,j}_{t+1}\}$. Note that the size of each bag of is upper bounded by $10=\OO(1)$. Moreover, each agent in $X^{i,j}$ belongs to the bags of at most two nodes of $T^{i,j}$, and these two nodes are adjacent. Lastly, for all $j\in[p]$, all endpoints of edges incident to either $\overline{w}^{i,j}_t$ or $\overline{m}^{i,j}_t$ belong to the bag $\beta^{i,j}(u_t)$, and every edge of $C^{i,j}$ has an endpoint in $\overline{W}^{i,j}\cup\overline{M}^{i,j}$. Thus, $(T^i,\beta^i)$ is indeed a tree decomposition of $C^{i,j}$ of width $\OO(1)$.

We have thus considered every connected component of $P'$, and hence we conclude the proof of the lemma.
\end{proof}

\subsubsection{Correctness}

\myparagraph{Forward Direction.} We first show how given a solution of an instance $I$ of {\sc Multicolored Clique}, we can construct a stable matching $\mu$ of $\red_{max}(I)$ which matches all agents. For this purpose, we introduce the following definition.

\begin{definition}\label{def:cliquetoMAX}
Let $I=(G,(V^1,V^2,\ldots,V^k))$ be a \yesinstance\ of {\sc Multicolored Clique}, and let $U=\{v^1_{\ell_1},v^2_{\ell_2},\ldots,v^t_{\ell_k}\}$ and $W=\{e^{i,j}_{\ell_{i,j}}: i,j\in[k], i<j\}$ denote the vertex and edge sets, respectively, of a multicolored clique $C$ of $G$. Then, the matching $\mu^C_{max}$ of $\red_{max}(I)$ is defined as follows.
\begin{itemize}
\item For all $i\in[k]$: $\mu^C_{max}(m^i)=w^i_{\ell_i}$ and $\mu^C_{max}(\widehat{m}^i)=\widehat{w}^i_{\ell_i}$.
\item For all $i\in[k]$ and $j\in\{2,3,\ldots,\ell_i\}$: $\mu^C_{max}(m^i_j)=w^i_{j-1}$.
\item For all $i\in[k]$ and $j\in\{\ell_i+1,\ell_i+2,\ldots,p\}$ : $\mu^C_{max}(m^i_j)=w^i_j$.
\item For all $i\in[k]$: $\mu^C_{max}(\widehat{m}^i_{\ell_i})=w^i$.
\item For all $i\in[k]$ and $j\in[p]$ such that $j\neq\ell_i$: $\mu^C_{max}(\widehat{m}^i_j)=\widehat{w}^i_j$.
\item For all $i,j\in[k]$ where $i<j$: $\mu^C_{max}(m^{i,j}_{\ell_{i,j}})=\overline{w}^{i,j}_{\ell_{i,j}}$ and $\mu^C_{max}(\overline{m}^{i,j}_{\ell_{i,j}})=w^{i,j}_{\ell_{i,j}}$.
\item For all $i,j\in[k]$ where $i<j$ and $t\in[q^{i,j}]$ such that $t\neq\ell_{i,j}$: $\mu^C_{max}(m^{i,j}_t)=w^{i,j}_t$.
\item For all $i,j\in[k]$ where $i<j$ and $t\in\{2,3,\ldots,\ell_{i,j}\}$: $\mu^C_{max}(\widetilde{m}^{i,j}_t)=\overline{w}^{i,j}_{t-1}$ and $\mu^C_{max}(\overline{m}^{i,j}_{t-1})=\widetilde{w}^{i,j}_t$.
\item For all $i,j\in[k]$ where $i<j$ and $t\in\{\ell_{i,j}+1,\ell_{i,j}+2,\ldots,q^{i,j}\}$: $\mu^C_{max}(\widetilde{m}^{i,j}_t)=\overline{w}^{i,j}_t$ and $\mu^C_{max}(\overline{m}^{i,j}_t)=\widetilde{w}^{i,j}_t$.
\end{itemize}
\end{definition}

By Definition \ref{def:cliquetoMAX}, we directly derive the following observation.

\begin{observation}\label{obs:cliquetoMAX}
Let $I=(G,(V^1,V^2,\ldots,V^k))$ be a \yesinstance\ of {\sc Multicolored Clique}. Let $C$ be a multicolored clique of $G$. Then, $\mu^C_{max}$ matches all agents of $\red_{max}(I)$.
\end{observation}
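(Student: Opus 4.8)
The proof is a direct verification that the assignment in Definition~\ref{def:cliquetoMAX} is a perfect matching, so the plan is purely bookkeeping. First I would exploit the fact that $\red_{max}(I)$ decomposes into essentially agent-disjoint pieces: the agents of each Combined Vertex Selector gadget (the men $m^i,\widehat{m}^i$, $M^i_{\enr}$, $\widehat{M}^i_{\enr}$ and the women $W^i_{\bas}$, $\widehat{W}^i_{\bas}$, $w^i$) and the agents of each Edge Selector gadget (the men $M^{i,j}_{\enr}$, $\overline{M}^{i,j}$, $\widetilde{M}^{i,j}$ and the women $W^{i,j}_{\bas}$, $\overline{W}^{i,j}$, $\widetilde{W}^{i,j}$). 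The only agents shared across gadget types are the basic men $m^i,\widehat{m}^i$, which appear in the domains of the edge women but which Definition~\ref{def:cliquetoMAX} matches inside the vertex gadget; hence I may treat each gadget in isolation and charge the basic men to its vertex gadget. A quick count shows each vertex gadget has $2p+1$ men and $2p+1$ women, and each edge gadget has $3q^{i,j}-1$ men and $3q^{i,j}-1$ women, so the total number of men equals the total number of women. It therefore suffices to check, gadget by gadget, that $\mu^C_{max}$ assigns each man exactly one acceptable partner within its gadget and that the induced map is injective; a bijection then follows automatically.

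For a fixed color class $i$ I would verify that the index ranges in Definition~\ref{def:cliquetoMAX} partition the men of the Combined Vertex Selector gadget, using $\ell_i$ as a pivot: the block $\{m^i_j : 2\le j\le\ell_i\}$ is matched ``downward'' to $w^i_{j-1}$, the block $\{m^i_j : \ell_i<j\le p\}$ is matched to $w^i_j$, and $m^i$ takes the freed woman $w^i_{\ell_i}$, so that $w^i_1,\dots,w^i_p$ are each hit exactly once. Symmetrically $\widehat{m}^i$ takes $\widehat{w}^i_{\ell_i}$, the man $\widehat{m}^i_{\ell_i}$ absorbs the extra woman $w^i$, and every other $\widehat{m}^i_j$ takes $\widehat{w}^i_j$, so $\widehat{W}^i_{\bas}\cup\{w^i\}$ is covered exactly once. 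Acceptability is immediate from the gadget's preference lists, since $w^i_{\ell_i}\in W^i_{\bas}\subseteq\domain(\pos_{m^i})$, $w^i\in\domain(\pos_{\widehat{m}^i_{\ell_i}})$, and so on.

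For a fixed pair $i<j$ (writing $\ell=\ell_{i,j}$, $q=q^{i,j}$) I would perform the analogous bookkeeping for the Edge Selector gadget, where $\ell$ again pivots an alternating ``ladder'': for $2\le t\le\ell$ the pair $\widetilde{m}^{i,j}_t,\overline{m}^{i,j}_{t-1}$ is matched to $\overline{w}^{i,j}_{t-1},\widetilde{w}^{i,j}_t$; for $\ell<t\le q$ the pair $\widetilde{m}^{i,j}_t,\overline{m}^{i,j}_t$ is matched to $\overline{w}^{i,j}_t,\widetilde{w}^{i,j}_t$; the man $m^{i,j}_{\ell}$ takes $\overline{w}^{i,j}_{\ell}$ while $\overline{m}^{i,j}_{\ell}$ takes $w^{i,j}_{\ell}$; and every other $m^{i,j}_t$ takes $w^{i,j}_t$. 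Collecting indices shows that $\{w^{i,j}_t\}_{1\le t\le q}$, $\{\overline{w}^{i,j}_t\}_{1\le t\le q}$ and $\{\widetilde{w}^{i,j}_t\}_{2\le t\le q}$ are each covered exactly once, with every assigned pair lying in the appropriate domain.

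The whole argument is a case analysis, and the only point demanding care is the behaviour at the two ends of each pivot, i.e.\ when $\ell_i\in\{1,p\}$ or $\ell_{i,j}\in\{1,q^{i,j}\}$: there one of the two index blocks degenerates to the empty set, and one must confirm that the freed woman is still covered and that no agent is matched twice. Since all the relevant index ranges are half-open and meet precisely at the pivot, these degenerate cases produce no overlap and no omission, so the per-gadget maps are bijections. Combining them over all color classes and all pairs $i<j$ yields a single perfect matching of $\red_{max}(I)$, which is the claim.
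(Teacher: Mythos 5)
Your proof is correct and matches the paper's intent: the paper states this as an observation derived directly from Definition~\ref{def:cliquetoMAX}, leaving the verification implicit, and your gadget-by-gadget bookkeeping (counting $2p+1$ men/women per Combined Vertex Selector gadget and $3q^{i,j}-1$ per Edge Selector gadget, then checking each woman is covered exactly once with the pivots $\ell_i$ and $\ell_{i,j}$) is exactly that verification spelled out. No gaps; the degenerate pivot cases you flag indeed cause no overlap or omission.
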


\begin{lemma}\label{lem:w1MAXforward}
Let $I=(G,(V^1,V^2,\ldots,V^k))$ be a \yesinstance\ of {\sc Multicolored Clique}. Let $C$ be a multicolored clique of $G$. Then, $\mu^C_{MAX}$ is a stable matching of $\red_{max}(I)$.
\end{lemma}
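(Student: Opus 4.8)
The plan is to show directly that $\mu^C_{max}$ is a complete matching with no blocking pair. I would start from Observation~\ref{obs:cliquetoMAX}, which already guarantees that $\mu^C_{max}$ matches every agent; a short inspection of Definition~\ref{def:cliquetoMAX} then confirms that each man is assigned a woman in his preference list and that the assignment is injective, so $\mu^C_{max}$ is indeed a matching. The crucial simplification is that, since no agent is unmatched, cases (i)--(iii) of the blocking-pair definition cannot occur, and it suffices to exclude blocking pairs of type (iv): no pair $(m,w)$ satisfies both $\pos_m(w)<\pos_m(\mu^C_{max}(m))$ and $\pos_w(m)<\pos_w((\mu^C_{max})^{-1}(w))$. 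Consequently I only need to examine men matched to a woman that is \emph{not} at position~$1$ in their list, and for every woman such a man strictly prefers, verify that she does \emph{not} strictly prefer him in return.

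The first step is to single out the men matched to a position-$1$ woman, which a direct reading of the lists shows to be $m^i_j$, $\widehat{m}^i_{\ell_i}$ (matched to $w^i$), $m^{i,j}_t$, $\widetilde{m}^{i,j}_t$, and every $\overline{m}^{i,j}_t$ with $t\neq\ell_{i,j}$; none of these can be the man of a type-(iv) pair. This leaves four families to handle: the leaders $m^i$ and $\widehat{m}^i$, the mirror enriched men $\widehat{m}^i_j$ with $j\neq\ell_i$, and the edge men $\overline{m}^{i,j}_{\ell_{i,j}}$.

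For the two easy families I would argue that ties rescue stability. The man $\widehat{m}^i_j$ (for $j\neq\ell_i$), matched to $\widehat{w}^i_j$ at position~$3$, strictly prefers only $w^i$ and $w^i_j$; but $w^i$ is matched to $\widehat{m}^i_{\ell_i}$, whom she ranks at the same (top, tied) position as $\widehat{m}^i_j$, while $w^i_j$ is matched (to $m^i_j$ or $m^i_{j+1}$) at her position~$1$, ahead of $\widehat{m}^i_j$ at her position~$2$. Likewise $\overline{m}^{i,j}_{\ell_{i,j}}$, matched to $w^{i,j}_{\ell_{i,j}}$ at position~$2$, strictly prefers only $\widetilde{w}^{i,j}_{\ell_{i,j}}$ and $\widetilde{w}^{i,j}_{\ell_{i,j}+1}$, each of which is matched to a neighbouring $\overline{m}^{i,j}_{\cdot}$ that she ranks tied with $\overline{m}^{i,j}_{\ell_{i,j}}$. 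In all these cases the women weakly prefer their current partners, so no type-(iv) pair forms, and it is precisely the ties engineered into the women's lists that forbid it.

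I expect the leaders $m^i,\widehat{m}^i$ to be the main obstacle, because ruling them out forces a combination of the leader form with the clique property. By the leader form, the women $m^i$ strictly prefers to his partner $w^i_{\ell_i}$ are exactly the vertex-women $w^i_j$ with $j<\ell_i$ together with the edge-women for edges incident to $v^i_1,\dots,v^i_{\ell_i-1}$. Each such $w^i_j$ is matched to $m^i_{j+1}$ at her position~$1$, ahead of $m^i$ (position~$3$), so she does not strictly prefer $m^i$. The decisive point is that the unique selected edge in each $E^{i,j}$ is incident to $v^i_{\ell_i}$, and the leader form places all edge-women incident to $v^i_{\ell_i}$ \emph{after} $w^i_{\ell_i}$; hence every edge-woman that $m^i$ strictly prefers represents a \emph{non-selected} edge and is matched to her own $m^{i,j}_t$ at her position~$1$, so she again does not strictly prefer $m^i$. (The selected edge-woman $w^{i,j}_{\ell_{i,j}}$ does strictly prefer $m^i$ over her partner $\overline{m}^{i,j}_{\ell_{i,j}}$, but $m^i$ does not strictly prefer her, precisely because the leader form buries her below $w^i_{\ell_i}$; this is where the clique structure is essential.) A mirror-image argument using the reversed ordering of the mirror leader form handles $\widehat{m}^i$. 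Assembling the four families yields that $\mu^C_{max}$ has no blocking pair, completing the proof.
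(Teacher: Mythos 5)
Your proof is correct and takes essentially the same route as the paper's: you rule out blocking pairs man-by-man, observing that every man matched to a (possibly tied) position-1 woman cannot block, and you handle the remaining men---the leaders $m^i,\widehat{m}^i$ via the leader form combined with the fact that each selected edge $e^{i,j}_{\ell_{i,j}}$ is incident to $v^i_{\ell_i}$, and the men $\widehat{m}^i_j$ (for $j\neq\ell_i$) and $\overline{m}^{i,j}_{\ell_{i,j}}$ via the observation that every woman they strictly prefer is matched at (tied) position 1. If anything, your case analysis is slightly more careful than the paper's, which asserts that the only woman $\widehat{m}^i_j$ prefers to his partner is $w^i$ (overlooking $w^i_j$), whereas you treat both women explicitly.
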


\begin{proof}
First, for all $i\in[k]$, we claim that neither $m^i$ not $\widehat{m}^i$ can belong to a blocking pair. All the women in $W^i_{\bas}\setminus\{w^i_{\ell_i}\}$ are matched to men that they rank at position 1 in their preference lists, and therefore none of them can form a blocking pair with $m^i$. Similarly, all the women in $\widehat{W}^i_{\bas}\setminus\{\widehat{w}^i_{\ell_i}\}$ are matched to men that they rank at position 1 in their preference lists, and therefore none of them can form a blocking pair with $\widehat{m}^i$.  All of the other women in the preference lists of both $m^i$ and $\widehat{m}^i$ belong to $\bigcup_{j\in[k],j\neq i}W^{i,j}_{\bas}$. For all $j\in[k],j\neq i$, due to the fact that the preference lists of $m^i$ and $\widehat{m}^i$ are of the form of a leader and $e^{i,j}_{\ell_{i,j}}$ is incident to $v^i_{\ell_i}$ in $G$, we have that $m^i$ prefers $w^i_{\ell_i}$ over $w^{i,j}_{\ell_i}$ and that $\widehat{m}^i$ prefers $\widehat{w}^i_{\ell_i}$ over $w^{i,j}_{\ell_i}$. Moreover, for all $j\in[k],j\neq i$ and $t\in[q^{i,j}]$ such that $t\neq\ell_{i,j}$, we have that $w^{i,j}_t$ is matched to the man she prefers the most, and therefore she can form a blocking pair with neither $m^i$ nor $\widehat{m}^i$.

Second, notice that for all $i\in[k]$ and $j\in[p]$, $m^i_j$ is matched to a woman at position 1 in his preference list, and therefore he cannot belong to a blocking pair. Moreover, notice that for all $i\in[k]$, $w^i$ is matched to a man at position 1 in her preference list, and therefore she cannot belong to a blocking pair. Thus, since for all $i\in[k]$ and $j\in[p]$, the only woman that $\widehat{m}^i_j$ prefers over his matched partner is $w^i$, we have that $\widehat{m}^i_j$ cannot belong to a blocking pair.

Third, for all $i,j\in[k]$ where $i<j$ and $t\in[q^{i,j}]$, $m^{i,j}$ is matched to a woman at position 1 in his preference list, and therefore he cannot belong to a blocking pair. Moreover, for all $i,j\in[k]$ where $i<j$ and $t\in\{2,3,\ldots,q^{i,j}\}$, $\widetilde{w}^{i,j}_t$ is matched to a man at position 1 in her preference list, and therefore she cannot belong to a blocking pair. Hence, for $i,j\in[k]$ where $i<j$ and $t\in[q^{i,j}]$, all of the women that $\overline{m}^{i,j}_t$ prefers over his matched partner (if there are any such women) cannot belong to blocking pairs, and thus $\overline{m}^{i,j}_t$ cannot belong to a blocking pair. Finally, for all $i,j\in[k]$ where $i<j$ and $t\in\{2,3,\ldots,q^{i,j}\}$, $\widetilde{m}^{i,j}_t$ is matched to a woman at position 1 in his preference list, and therefore he cannot belong to a blocking pair.
\end{proof}

Combining Observation \ref{obs:cliquetoMAX} and Lemma \ref{lem:w1MAXforward}, we derive the following corollary.

\begin{corollary}\label{cor:w1MAXforward}
Let $I$ be a \yesinstance\ of {\sc Multicolored Clique}. Then, the instance $\red_{max}(I)$ of {\sc max-SMT} admits a stable matching that matches all agents.
\end{corollary}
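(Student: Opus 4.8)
The plan is to obtain the desired matching by combining the two results that immediately precede the statement. Since $I$ is a \yesinstance\ of {\sc Multicolored Clique}, by definition $G$ contains a multicolored clique $C$, i.e.\ a set of $k$ pairwise-adjacent vertices, one from each color class. Fixing such a clique $C$, I would apply Definition \ref{def:cliquetoMAX} to construct the explicit matching $\mu^C_{max}$ of $\red_{max}(I)$ determined by $C$; the entire argument then reduces to establishing two properties of this one candidate matching.

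The first property, that $\mu^C_{max}$ leaves no agent unmatched, is exactly Observation \ref{obs:cliquetoMAX}, which follows by inspecting the case list in Definition \ref{def:cliquetoMAX} and checking that every agent receives a partner. The second property, that $\mu^C_{max}$ is stable, is exactly Lemma \ref{lem:w1MAXforward}. Together these two facts yield a stable matching of $\red_{max}(I)$ that matches all agents, which is precisely the assertion of the corollary; so at this level the proof is a one-line citation of Observation \ref{obs:cliquetoMAX} and Lemma \ref{lem:w1MAXforward}.

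The genuine difficulty has therefore already been absorbed into Lemma \ref{lem:w1MAXforward}, where stability must be verified, and I expect that verification to be the real obstacle. The hard part there is the exhaustive case analysis ruling out blocking pairs for each type of man: the leader men $m^i$ and $\widehat{m}^i$, whose preference lists are arranged so that each woman representing an edge incident to the chosen vertex is outranked by the chosen vertex-woman; the enriched vertex-selector men $m^i_j$ and $\widehat{m}^i_j$; and the edge-selector men $m^{i,j}_t$, $\overline{m}^{i,j}_t$, $\widetilde{m}^{i,j}_t$. In each case one argues that every woman the man would prefer to his assigned partner is herself matched to a man she ranks at least as high, so no blocking pair can arise. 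The leader-form of the preference lists, together with the fact that $C$ is a clique---so the selected edge $e^{i,j}_{\ell_{i,j}}$ is incident to the selected vertices $v^i_{\ell_i}$ and $v^j_{\ell_j}$---is exactly what makes the coordination between vertex-selector and edge-selector gadgets go through and the candidate matching stable.
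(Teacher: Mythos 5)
Your proposal is correct and matches the paper's own proof, which derives the corollary by combining Observation \ref{obs:cliquetoMAX} (the matching $\mu^C_{max}$ leaves no agent unmatched) with Lemma \ref{lem:w1MAXforward} (the matching $\mu^C_{max}$ is stable). Your additional remarks about where the real work lies---the blocking-pair case analysis inside Lemma \ref{lem:w1MAXforward}, relying on the leader-form preference lists and on the selected edge being incident to the selected vertices---accurately reflect the structure of the paper's argument.
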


This concludes the proof of the forward direction.

\medskip
\myparagraph{Reverse Direction.} Second, we prove that given an instance $I$ of {\sc Multicolored Clique}, if the instance $\red_{max}(I)$ of {\sc max-SMT} admits a stable matching that matches all agents, then we can construct a solution for $I$. To this end, we analyze the structure of stable matchings of $\red_{max}(I)$ that match all agents.

\begin{lemma}\label{lem:revw1MAXColor}
Let $I$ be an instance of {\sc Multicolored Clique}. Let $\mu$ be a stable matching of $\red_{max}(I)$ that matches all agents. Then, for all $i\in[k]$, there exists $j\in[p]$ such that $\mu(m^i)=w^i_j$ and $\mu(\widehat{m}^i)=\widehat{w}^i_j$.
\end{lemma}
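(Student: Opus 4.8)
The plan is to fix a color class $i\in[k]$ and argue entirely inside its Combined Vertex Selector gadget. I would first use the hypothesis that $\mu$ matches all agents to locate the partners of the two leaders $m^i$ and $\widehat{m}^i$, obtaining indices $a,b\in[p]$ with $\mu(m^i)=w^i_a$ and $\mu(\widehat{m}^i)=\widehat{w}^i_b$, and then invoke stability to force $a=b$.

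For the first step I would isolate the ``local'' agents of gadget $i$: the women $W^i_{\bas}\cup\widehat{W}^i_{\bas}\cup\{w^i\}$, of which there are $2p+1$, and the men $M^i_{\enr}\cup\widehat{M}^i$, of which there are $2p-1$. Each local woman ranks only local men together with possibly $m^i$ or $\widehat{m}^i$, and each local man ranks only local women. Since $\mu$ matches all agents, all $2p+1$ local women are matched; as only $2p-1$ local men are available to them, a simple counting (defect-Hall) argument shows that \emph{both} $m^i$ and $\widehat{m}^i$ must be matched to local women. Because $m^i$ ranks, among the local women, only those of $W^i_{\bas}$ and $\widehat{m}^i$ only those of $\widehat{W}^i_{\bas}$, this yields the desired $a$ and $b$. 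It remains to prove $a=b$.

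Next I would assume for contradiction that $a\neq b$ and exhibit a blocking pair. Since $\widehat{w}^i_a$ is matched (perfectness) and its only possible partners are $\widehat{m}^i_a$ and $\widehat{m}^i$, while $\widehat{m}^i$ is already matched to $\widehat{w}^i_b\neq\widehat{w}^i_a$, we get $\mu(\widehat{w}^i_a)=\widehat{m}^i_a$. Now I read off the preference lists: $\widehat{m}^i_a$ ranks $w^i_a$ at position $2$ and his partner $\widehat{w}^i_a$ at position $3$, so he strictly prefers $w^i_a$; symmetrically, $w^i_a$ ranks $\widehat{m}^i_a$ at position $2$ and her partner $m^i$ at position $3$ (this holds uniformly, including the boundary cases $a=1$ and $a=p$), so she strictly prefers $\widehat{m}^i_a$. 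Hence $(\widehat{m}^i_a,w^i_a)$ satisfies blocking condition (iv), contradicting the stability of $\mu$. Therefore $a=b$, which is exactly the claim with $j=a$.

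The delicate points I would flag are twofold. First, the counting step pinning the leaders into $W^i_{\bas}$ and $\widehat{W}^i_{\bas}$ must account for the fact that $m^i$ and $\widehat{m}^i$ also rank edge-women outside the gadget; it is precisely the perfectness of $\mu$ that rules out matching them to edge-women and forces the extra two slots onto the leaders. Second, one must handle ties correctly: the shared woman $w^i$ is indifferent among all the men $\widehat{m}^i_j$ who top-rank her, so under weak stability \emph{no} blocking pair can ever be formed through $w^i$, and the decisive blocking pair must instead be built from the strict preferences sitting at positions $2$ and $3$ in the lists of $w^i_a$ and $\widehat{m}^i_a$. Getting this blocking pair to appear exactly when $a\neq b$ (and disappear when $a=b$, where $\widehat{m}^i_a$ is matched to its top choice $w^i$) is the crux of the argument.
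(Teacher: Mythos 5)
Your proof is correct and takes essentially the same route as the paper's: the same $2p+1$ women versus $2p-1$ men counting argument to force $\mu(m^i)\in W^i_{\bas}$ and $\mu(\widehat{m}^i)\in \widehat{W}^i_{\bas}$, followed by the same local analysis of the agents $\widehat{m}^i_j$, $\widehat{w}^i_j$ and $w^i_j$. The only difference is organizational: the paper argues directly (stability of $(\widehat{m}^i_j,w^i_j)$ forces $\mu(\widehat{m}^i_j)=w^i$, and then perfectness forces $\mu(\widehat{w}^i_j)=\widehat{m}^i$, i.e.\ $\widehat{j}=j$), whereas you run the identical facts in contrapositive form, using perfectness first to pin $\mu(\widehat{w}^i_a)=\widehat{m}^i_a$ when $a\neq b$ and then exhibiting $(\widehat{m}^i_a,w^i_a)$ as a blocking pair.
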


\begin{proof}
Let $i\in[k]$ be some color class. First, notice that excluding $m^i$ and $\widehat{m}^i$, the Combined Vertex Selector gadget for color class $i$ contains exactly $2p+1$ women and $2p-1$ men. Thus, since $\mu$ matches all agents, both $m^i$ and $\widehat{m}^i$ must be matched to women that belong to this gadget. Since among the women in this gadget, $m^i$ only ranks those women in $W^i_{\bas}$ and $\widehat{m}^i$ only ranks those women in $\widehat{W}^i_{\bas}$, we have that there exist $j,\widehat{j}\in[p]$ such that $\mu(m^i)=w^i_j$ and $\mu(\widehat{m}^i)=\widehat{w}^i_{\widehat{j}}$. Since $\mu$ is a stable matching although $w^i_j$ prefers $\widehat{m}^i_j$ over $m^i$, we have that $\mu$ matches $\widehat{m}^i_j$ to $w^i$ as this is the only woman that over which $\widehat{m}^i_j$ does not prefer $w^i_j$. However, note that excluding $\widehat{m}^i_j$, the only man in the preference list of $\widehat{w}^i_j$ is $\widehat{m}^i$. Since $\widehat{w}^i_j$ is matched by $\mu$, we deduce that $j=\widehat{j}$. As the choice $i$ was arbitrary, we conclude that the lemma is correct.
\end{proof}

\begin{lemma}\label{lem:revw1MAXEdge}
Let $I$ be an instance of {\sc Multicolored Clique}. Let $\mu$ be a stable matching of $\red_{max}(I)$ that matches all agents. Then, for all $i,j\in[k]$ where $j<i$, there exists $t\in[q^{i,j}]$ such that $\mu(w^{i,j}_t)=\overline{m}^{i,j}_t$.
\end{lemma}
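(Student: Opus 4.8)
The plan is to prove the lemma by a simple counting (pigeonhole) argument that exploits the cardinality mismatch between the men of $\overline{M}^{i,j}$ and the women of $\widetilde{W}^{i,j}$ inside a single Edge Selector gadget, together with the hypothesis that $\mu$ matches \emph{all} agents. I would fix a pair of color classes, say $i<j$ (matching the indexing under which $\overline{M}^{i,j}$, $\widetilde{W}^{i,j}$ and $W^{i,j}_{\bas}$ are defined), and abbreviate $q=q^{i,j}$.

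First I would record the relevant part of the domain structure of the gadget. By construction, for every $t\in[q]$ the man $\overline{m}^{i,j}_t$ ranks exactly one woman of $W^{i,j}_{\bas}$, namely $w^{i,j}_t$, while every other woman in $\domain(\pos_{\overline{m}^{i,j}_t})$ lies in $\widetilde{W}^{i,j}=\{\widetilde{w}^{i,j}_2,\ldots,\widetilde{w}^{i,j}_q\}$: concretely $\overline{m}^{i,j}_1$ ranks $\{\widetilde{w}^{i,j}_2,w^{i,j}_1\}$, an interior $\overline{m}^{i,j}_t$ ranks $\{\widetilde{w}^{i,j}_t,\widetilde{w}^{i,j}_{t+1},w^{i,j}_t\}$, and $\overline{m}^{i,j}_q$ ranks $\{\widetilde{w}^{i,j}_q,w^{i,j}_q\}$. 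Since in a stable matching each matched man is assigned a woman from his own preference list, every $\overline{m}^{i,j}_t$ is matched either to some woman of $\widetilde{W}^{i,j}$ or to $w^{i,j}_t$.

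The counting step then finishes the argument. Because $\mu$ matches all agents, all $q$ men of $\overline{M}^{i,j}$ are matched, and because $\mu$ is injective the men it sends into $\widetilde{W}^{i,j}$ are distinct. As $|\widetilde{W}^{i,j}|=q-1<q=|\overline{M}^{i,j}|$, the women of $\widetilde{W}^{i,j}$ cannot absorb all of $\overline{M}^{i,j}$, so at least one index $t\in[q]$ has $\overline{m}^{i,j}_t$ matched to a woman outside $\widetilde{W}^{i,j}$. By the domain description above, that woman can only be $w^{i,j}_t$, which gives $\mu(w^{i,j}_t)=\overline{m}^{i,j}_t$, as claimed.

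I do not anticipate a genuine obstacle: the argument uses only that $\mu$ is a legal matching of every agent, not the stability hypothesis itself. The one point that needs care is the bookkeeping—verifying that $\widetilde{W}^{i,j}$ has exactly one fewer element than $\overline{M}^{i,j}$ (its indices start at $2$), and that $w^{i,j}_t$ is the \emph{unique} basic edge-woman ranked by $\overline{m}^{i,j}_t$—so that the pigeonhole pins down the partner of a specific woman rather than merely guaranteeing that some edge-woman is matched to some $\overline{m}$ man. I would also flag the apparent typo ``$j<i$'' in the statement, which should read $i<j$ to agree with the indexing of the gadget.
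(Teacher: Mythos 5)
Your proof is correct and is essentially the paper's own argument: the same pigeonhole count ($|\overline{M}^{i,j}|=q^{i,j}$ versus $|\widetilde{W}^{i,j}|=q^{i,j}-1$) forces some $\overline{m}^{i,j}_t$ to be matched outside $\widetilde{W}^{i,j}$, and since the only such woman in his list is $w^{i,j}_t$ and $\mu$ matches all agents, $\mu(w^{i,j}_t)=\overline{m}^{i,j}_t$. Your flagging of the ``$j<i$'' typo in the statement (and, implicitly, the paper's own slip of writing $p$ where $q^{i,j}$ is meant in the cardinality count) is also well taken.
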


\begin{proof}
Let $i,j\in[k]$, $i<j$, be some two color classes. Note that $|\widetilde{W}^{i,j}|=p-1$ while $|\overline{M}^{i,j}|=p$. Thus, there exists $t\in[q^{i,j}]$ such that $\overline{m}^{i,j}_t$ is not matched to a woman in $\widetilde{W}^{i,j}$. Since the only other woman in the preference list of $\overline{m}^{i,j}_t$ is $w^{i,j}_t$ and $\mu$ matches all agents, we have that $\mu(w^{i,j}_t)=\overline{m}^{i,j}_t$. As the choices of $i$ and $j$ were arbitrary, we conclude that the lemma is correct.
\end{proof}

\begin{lemma}\label{lem:w1MAXReverse}
Let $I=(G,(V^1,V^2,\ldots,V^k))$ be an instance of {\sc Multicolored Clique}. If the instance $\red_{max}(I)$ of {\sc max-SMT} admits a stable matching that matches all agents, then $I$ is a \yesinstance\ of {\sc Multicolored Clique}.
\end{lemma}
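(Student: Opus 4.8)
The plan is to combine the two structural lemmas, Lemma \ref{lem:revw1MAXColor} and Lemma \ref{lem:revw1MAXEdge}, and then extract a clique from the selected vertices and edges, using stability together with the leader form of the preference lists of $m^i$ and $\widehat{m}^i$ to force consistency between vertex and edge selections. First I would fix a stable matching $\mu$ of $\red_{max}(I)$ that matches all agents. By Lemma \ref{lem:revw1MAXColor}, for every color class $i\in[k]$ there is a well-defined index $\ell_i\in[p]$ with $\mu(m^i)=w^i_{\ell_i}$ and $\mu(\widehat{m}^i)=\widehat{w}^i_{\ell_i}$; I set $X=\{v^i_{\ell_i}:i\in[k]\}$, a set containing exactly one vertex of each color class. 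By Lemma \ref{lem:revw1MAXEdge}, for every pair $i<j$ there is an index $\ell_{i,j}\in[q^{i,j}]$ with $\mu(w^{i,j}_{\ell_{i,j}})=\overline{m}^{i,j}_{\ell_{i,j}}$. The goal is to show that the edge $e^{i,j}_{\ell_{i,j}}$ joins $v^i_{\ell_i}$ and $v^j_{\ell_j}$, which immediately implies that $X$ is a multicolored clique and hence that $I$ is a \yesinstance.

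The core of the argument is a stability analysis of the four ``leader'' men that rank $w^{i,j}_{\ell_{i,j}}$. By the Edge Selector gadget, $w^{i,j}_{\ell_{i,j}}$ ranks $\overline{m}^{i,j}_{\ell_{i,j}}$ at position $3$ (her worst acceptable partner) while ranking each of $m^i,\widehat{m}^i,m^j,\widehat{m}^j$ at position $2$. Since $\mu(w^{i,j}_{\ell_{i,j}})=\overline{m}^{i,j}_{\ell_{i,j}}$, she strictly prefers each of these four men to her partner. Because $\mu$ has no blocking pair and each of these men ranks $w^{i,j}_{\ell_{i,j}}$ (as $e^{i,j}_{\ell_{i,j}}\in E^{i,j}$ is incident to a vertex of the corresponding color), none of them may strictly prefer $w^{i,j}_{\ell_{i,j}}$ to his own $\mu$-partner. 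I would then read off what this forces for $m^i$ and $\widehat{m}^i$ (the case of $m^j,\widehat{m}^j$ being completely symmetric). Let $v^i_s$ denote the endpoint of $e^{i,j}_{\ell_{i,j}}$ lying in $V^i$.

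Here is where the leader form does the work, and this is the step I expect to be the main obstacle, since it requires carefully extracting positions from Conditions 2--4 of the leader form and exploiting that the lists of $m^i$ and $\widehat{m}^i$ are distorted mirror images of one another. In $m^i$'s list the vertex-women $w^i_1,\ldots,w^i_p$ appear in increasing order of position, and the women representing edges incident to $v^i_s$ are embedded between $w^i_s$ and $w^i_{s+1}$; hence the requirement that $m^i$ not strictly prefer $w^{i,j}_{\ell_{i,j}}$ over his partner $w^i_{\ell_i}$ forces $s\ge\ell_i$. In $\widehat{m}^i$'s list the vertex-women appear in the reverse order and the edge-women incident to $v^i_s$ are embedded just before $\widehat{w}^i_{s-1}$ (i.e. after $\widehat{w}^i_s$); hence the symmetric requirement for $\widehat{m}^i$, whose partner is $\widehat{w}^i_{\ell_i}$, forces $s\le\ell_i$. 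Combining the two inequalities yields $s=\ell_i$, i.e.\ $e^{i,j}_{\ell_{i,j}}$ is incident to $v^i_{\ell_i}$. Applying the identical reasoning to $m^j$ and $\widehat{m}^j$ shows that the other endpoint of $e^{i,j}_{\ell_{i,j}}$ is $v^j_{\ell_j}$.

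Finally, I would assemble the conclusion: for every pair $i<j$ the vertices $v^i_{\ell_i}$ and $v^j_{\ell_j}$ are adjacent in $G$ via the edge $e^{i,j}_{\ell_{i,j}}$, so $X$ induces a clique with exactly one vertex from each of the $k$ color classes. Therefore $G$ contains a multicolored clique and $I$ is a \yesinstance\ of {\sc Multicolored Clique}, completing the proof. The one routine point to verify along the way is the mutual-ranking condition required of a blocking pair, namely that each of the four leader men indeed ranks $w^{i,j}_{\ell_{i,j}}$ and is ranked by her; this is guaranteed because $e^{i,j}_{\ell_{i,j}}$ has an endpoint in each of $V^i$ and $V^j$, so it is embedded in the leader lists of all four men and appears in $w^{i,j}_{\ell_{i,j}}$'s preference list by construction of the Edge Selector gadget.
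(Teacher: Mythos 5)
Your proof is correct and follows essentially the same route as the paper's: invoke Lemmas \ref{lem:revw1MAXColor} and \ref{lem:revw1MAXEdge}, then use weak stability together with the leader form of the preference lists of $m^i$ and $\widehat{m}^i$ to force each selected edge $e^{i,j}_{\ell_{i,j}}$ to be incident to the selected vertices $v^i_{\ell_i}$ and $v^j_{\ell_j}$. In fact, your explicit derivation of $s\ge\ell_i$ (from $m^i$) and $s\le\ell_i$ (from $\widehat{m}^i$) is stated more carefully than the paper's corresponding sentence, whose ``not located after'' phrasing has the comparison reversed.
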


\begin{proof}
Suppose that the instance $\red_{max}(I)$ of {\sc max-SMT} admits a stable matching $\mu$ that matches all agents. By Lemma~\ref{lem:revw1MAXColor}, for all $i\in[k]$, there exists $\ell_i\in[p]$ such that $\mu(m^i)=w^i_{\ell_i}$ and $\mu(\widehat{m}^i)=\widehat{w}^i_{\ell_i}$. Moreover, by Lemma~\ref{lem:revw1MAXEdge}, for all $i,j\in[k]$ where $j<i$, there exists $\ell_{i,j}\in[q^{i,j}]$ such that $\mu(w^{i,j}_{\ell_{i,j}})=\overline{m}^{i,j}_{\ell_{i,j}}$. Denote $U=\{v^1_{\ell_1},v^2_{\ell_2},\ldots,v^t_{\ell_k}\}$ and $W=\{e^{i,j}_{\ell_{i,j}}: i,j\in[k], i<j\}$. Note that we have proved that $|U|=k$ and $|W|\geq {k\choose 2}$. Since for all $i,j\in[k]$ where $j<i$, $w^{i,j}_{\ell_i}$ prefers both $m^i$ and $\widehat{m}^i$ over her matched partner, we have that $w^{i,j}_{\ell_i}$ is not located after $w^i_{\ell_i}$ in the preference list of $m^i$ as well as that it is not located after $\widehat{w}^i_{\ell_i}$ in the preference list of $\widehat{m}^i$. However, by the definition of the preference lists of $m^i$ and $\widehat{m}^i$, it must then hold that $e^{i,j}_{\ell_{i,j}}$ is an edge incident to $v^i_{\ell_i}$ in $G$. Hence, we derive that $U$ is the vertex set of a colorful clique of $G$. We thus conclude that $I$ is a \yesinstance\ of {\sc Multicolored Clique}.
\end{proof}

\medskip
\myparagraph{Summary.} Finally, we note that the reduction can be performed in time that is polynomial in the size of the output. That is, we have the following observation.

\begin{observation}\label{obs:w1MAXTime}
Let $I=(G,(V^1,V^2,\ldots,V^k))$ be an instance of {\sc Multicolored Clique}. Then, the instance $\red_{max}(I)$ of {\sc max-SMT} can be constructed in time polynomial in the size~of~$I$.
\end{observation}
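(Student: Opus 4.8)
The plan is to verify the two standard requirements of a polynomial-time reduction: that the output instance has size polynomial in $|I|$, and that every component of it can be computed in polynomial time. I would begin by bounding the number of agents in $\red_{max}(I)$. Summing over all gadgets, the basic agents contribute the $2k$ men of $M_{\bas}\cup\widehat{M}_{\bas}$, together with $\OO(kp)=\OO(n)$ women in the sets $W^i_{\bas}$ and $\widehat{W}^i_{\bas}$ and $\sum_{i<j}q^{i,j}=|E(G)|$ women in the sets $W^{i,j}_{\bas}$. Each Combined Vertex Selector gadget for color class $i$ adds the $\OO(p)$ enriched men of $M^i_{\enr}\cup\widehat{M}^i_{\enr}$ and a single woman $w^i$, for a total of $\OO(n)$ over all color classes. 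Each Edge Selector gadget for a pair $i<j$ adds $\OO(q^{i,j})$ men and women, so these sum to $\OO(|E(G)|)$. Using the assumption $|E(G)|\geq n$, the total number of agents is $\OO(|E(G)|)$, which is polynomial in $|I|$.

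The key point to emphasize---and the feature distinguishing this reduction from those for {\sc SESM} and {\sc BSM}---is that $\red_{max}$ introduces no happy pairs. In the earlier reductions, the construction time was $2^{\OO(k)}\cdot n^{\OO(1)}$ precisely because one had to insert numbers of dummy (``happy'') agents on the order of $|E(G)|^{40}$ and $4^k|E(G)|^{10}$ into preference lists in order to tune the satisfaction measure to a target value. Here the source of hardness is the presence of ties rather than the need to meet a budget such as $\Delta=0$, so no filler agents are required; this is exactly why the running time improves from $2^{\OO(k)}\cdot n^{\OO(1)}$ to fully polynomial. I would state this contrast explicitly to make clear that the observation is genuinely stronger than its analogues for {\sc SESM} and {\sc BSM}.

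Given the size bound, I would then note that each preference list has length at most the number of agents, hence $\OO(|E(G)|)$, so writing down all preference lists takes $\OO(|E(G)|^2)$ time. Every list is specified by an explicit, constant-size rule in the construction (the sets $\domain(\pos_{a})$ together with the position assignments in the Combined Vertex Selector and Edge Selector gadgets), and the only nontrivial information these rules consult is the incidence structure of $G$: for each edge $e^{i,j}_t$, which vertices it is incident to, as dictated by the leader form of the preference lists of $m^i$ and $\widehat{m}^i$. This incidence data is computable in polynomial time from $G$, so assembling all lists is polynomial.

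I do not anticipate a genuine obstacle, as the statement is essentially a size-and-time accounting. The only subtlety is carrying out the counting correctly across every gadget type; in particular, the sets $\widetilde{M}^{i,j}$ and $\widetilde{W}^{i,j}$ are indexed from $2$ to $q^{i,j}$ and therefore contribute $q^{i,j}-1$ elements each, not $q^{i,j}$. The conceptually important observation, which I would foreground, is simply that the absence of happy pairs removes the exponential factor present in the {\sc SESM} and {\sc BSM} constructions.
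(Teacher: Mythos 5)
Your proposal is correct and matches the paper's treatment: the paper states this observation without proof precisely because the construction is a direct polynomial-size assembly of gadgets, and your accounting (total agents $\OO(|E(G)|)$, lists of length $\OO(|E(G)|)$ written from explicit incidence-based rules) is exactly the verification that is implicitly intended. Your emphasis on the absence of happy pairs is also apt, since that is indeed what separates this fully polynomial bound from the $2^{\OO(k)}\cdot n^{\OO(1)}$ bounds in Observations \ref{obs:w1SESMTime} and \ref{obs:w1BSMTime}.
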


By Proposition \ref{prop:multiClique}, Lemma \ref{lem:twMAX}, Corollary \ref {cor:w1MAXforward}, Lemma \ref{lem:w1MAXReverse} and Observation \ref{obs:w1MAXTime}, we conclude that {\sc max-SMT} is \WOH. Moreover, unless \ETH\ fails, {\sc max-SMT} cannot be solved in time $f(\tw)\cdot n^{o(\tw)}$ for any function $f$ that depends only on $\tw$. Here, $n$ is the number of agents.

\subsection{min-Stable Marriage with Ties}\label{sec:w1minSMT}

Finally, we prove that {\sc min-SMT} is \WOH, and that unless \ETH\ fails, {\sc min-SMT} cannot be solved in time $f(\tw)\cdot n^{o(\tw)}$ for any function $f$ that depends only on $\tw$.

\subsubsection{Reduction}

Let $I=(G,(V^1,V^2,\ldots,V^k))$ be an instance of {\sc Multicolored Clique}. We now describe how to construct an instance $\red_{min}(I)=(M,W,\{\pos_m\}|_{m\in M},\{\pos_w\}|_{w\in W})$ of {\sc min-SMT}. First, we again use the set of basic agents defined in Section \ref{sec:w1SESM}. Recall that for this set of basic agents, the preference lists of the men in $M_{\bas}\cup \widehat{M}_{\bas}$ are of the form of a leader. Here, we also add two sets of new women: $W^i_{\enr}=\{w^1,w^2,\ldots,w^k\}$ and $\widehat{W}^i_{\enr}=\{\widehat{w}^1,\widehat{w}^2,\ldots,\widehat{w}^k\}$. For all $i\in[k]$, we set the intersection of $\domain(\pos_{m^i})$ with $M$, the set of {\em all} men, to be exactly $W^i_{\bas}\cup W^{i,j}_{\bas}\cup\{w^i\}$, and we also set the intersection of $\domain(\pos_{\widehat{m}^i})$ with $M$ to be exactly $\widehat{W}^i_{\bas}\cup W^{i,j}_{\bas}\cup\{\widehat{w}^i\}$. Moreover, we define $w^i$ and $\widehat{w}^i$ to be the least preferred women by $m^i$ and $\widehat{m}^i$, respectively.
We thus complete the precise definition of the preference lists of the men in $M_{\bas}\cup \widehat{M}_{\bas}$. The preference list of $w^i$ is defined to contains only $m^i$, and the preference list of $\widehat{w}^i$ is defined to contains only $\widehat{m}^i$. In what follows, as in the case of {\sc max-SMT}, we introduce two types of gadgets, namely, the Combined Vertex Selector gadget and the Edge Selector gadget.

\medskip
\myparagraph{Combined Vertex Selector.} For every color class $i\in[k]$, our first set of gadgets introduces the following sets of new men: $M^i_{\enr}=\{m^i_1,m^i_2,\ldots,m^i_p\}$ and $\widehat{M}=\{\widehat{m}^i_1,\widehat{m}^i_2,\ldots,\widehat{m}^i_p\}$. We also introduce one set of new women: $\overline{W}^i=\{\overline{w}^i_1,\overline{w}^i_2,\ldots,\overline{w}^i_p\}$. For all $j\in[p]$, the preference list of $m^i_j$ is defined by setting $\domain(\pos_{m^i_j})=\{w^i_j,\overline{w}^i_j\}$, $\pos_{m^i_j}(w^i_j)=1$ and $\pos_{m^i_j}(\overline{w}^i_j)=2$. Moreover, for all $j\in[p]$, the preference list of $\widehat{m}^i_j$ is defined by setting $\domain(\pos_{\widehat{m}^i_j})=\{\widehat{w}^i_j,\overline{w}^i_j\}$, $\pos_{\widehat{m}^i_j}(\widehat{w}^i_j)=1$ and $\pos_{\widehat{m}^i_j}(\overline{w}^i_j)=2$.

For all $j\in[p]$, the preference list of $w^i_j$ is defined by setting $\domain(\pos_{w^i_j})=\{m^i,m^i_j\}$, and $\pos_{w^i_j}(m^i)=\pos_{w^i_j}(m^i_j)=1$. Moreover, for all $j\in[p]$, the preference list of $\widehat{w}^i_j$ is defined by setting $\domain(\pos_{\widehat{w}^i_j})=\{\widehat{m}^i,m^i_j\}$, and $\pos_{\widehat{w}^i_j}(m^i)=\pos_{\widehat{w}^i_j}(\widehat{m}^i_j)=1$. Finally, the preference list of $\overline{w}^i_j$ is defined by setting $\domain(\pos_{\overline{w}^i_j})=\{m^i_j,\widehat{m}^i_j\}$, and $\pos_{\overline{w}^i_j}(m^i_j)=\pos_{\overline{w}^i_j}(\widehat{m}^i_j)=1$. An illustration of a Combined Vertex Selector gadget is given in Fig.~\ref{fig:w1MinSMT1}.

\begin{figure}[t!]\centering
\fbox{\includegraphics[scale=0.9]{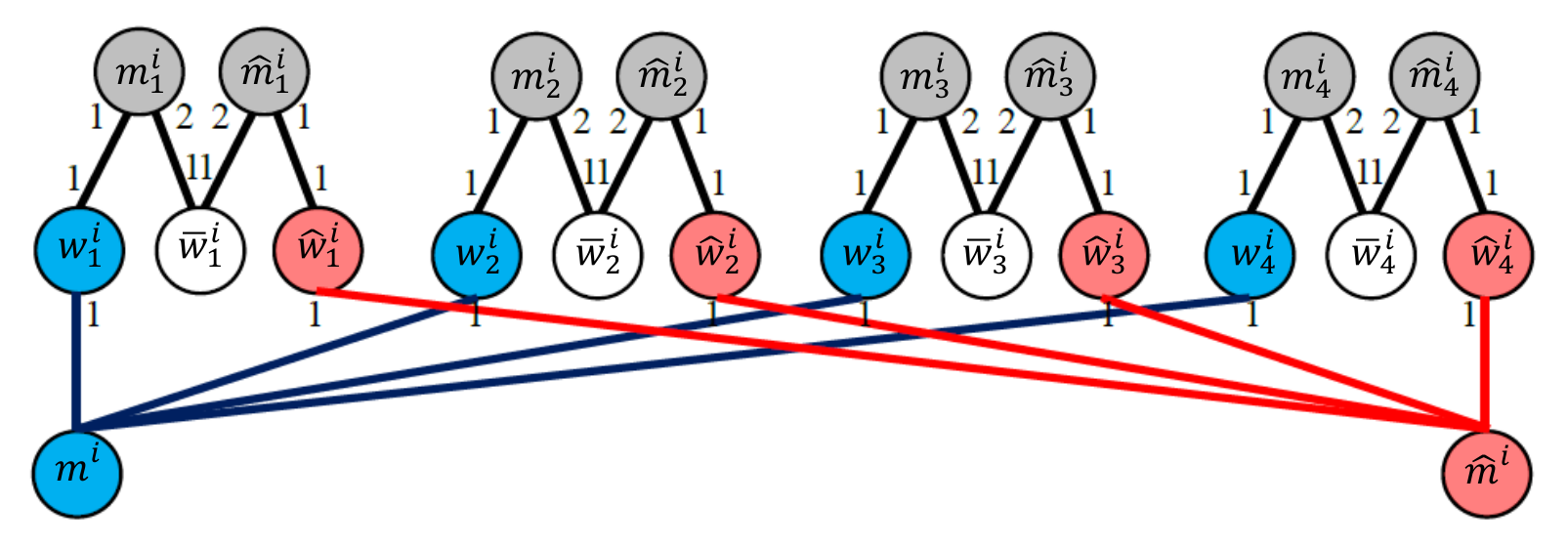}}
\caption{The Combined Vertex Selector gadget where $p=4$. Numbers indicate the positions of agents in preference lists.}\label{fig:w1MinSMT1}
\end{figure}

\medskip
\myparagraph{Edge Selector.} For every two color classes $i,j\in[k]$ where $i<j$, we introduce one set of new men: $M^{i,j}_{\enr}=\{m^{i,j}_1,m^{i,j}_2,\ldots,m^{i,j}_{q^{i,j}}\}$. We also introduce one new woman, called $w^{i,j}$. For all $t\in[q^{i,j}]$, the preference list of $m^{i,j}_t$ is defined by setting $\domain(\pos_{m^{i,j}_t})=\{w^{i,j},w^{i,j}_t\}$, $\pos_{m^{i,j}_t}(w^{i,j})=1$ and $\pos_{m^{i,j}_t}(w^{i,j}_t)=2$. For all $t\in[q^{i,j}]$, the preference list of $w^{i,j}_t$ is defined by setting $\domain(\pos_{w^{i,j}_t})=\{m^{i,j}_t,m^i,\widehat{m}^i,m^j,\widehat{m}^j\}$, $\pos_{w^{i,j}_t}(m^{i,j}_t)=1$ and $\pos_{w^{i,j}_t}(m^i)=\pos_{w^{i,j}_t}(\widehat{m}^i)=\pos_{w^{i,j}_t}(m^j)=\pos_{w^{i,j}_t}(\widehat{m}^j)=2$. The preference list of $w^{i,j}$ is defined by setting $\domain(\pos_{w^{i,j}})=M^{i,j}_{\enr}$ where for all $t\in[q^{i,j}]$, $\pos_{w^{i,j}}(m^{i,j}_t)=1$. An illustration of a Combined Vertex Selector gadget is given in Fig.~\ref{fig:w1MinSMT2}.

\begin{figure}[t!]\centering
\fbox{\includegraphics[scale=0.625]{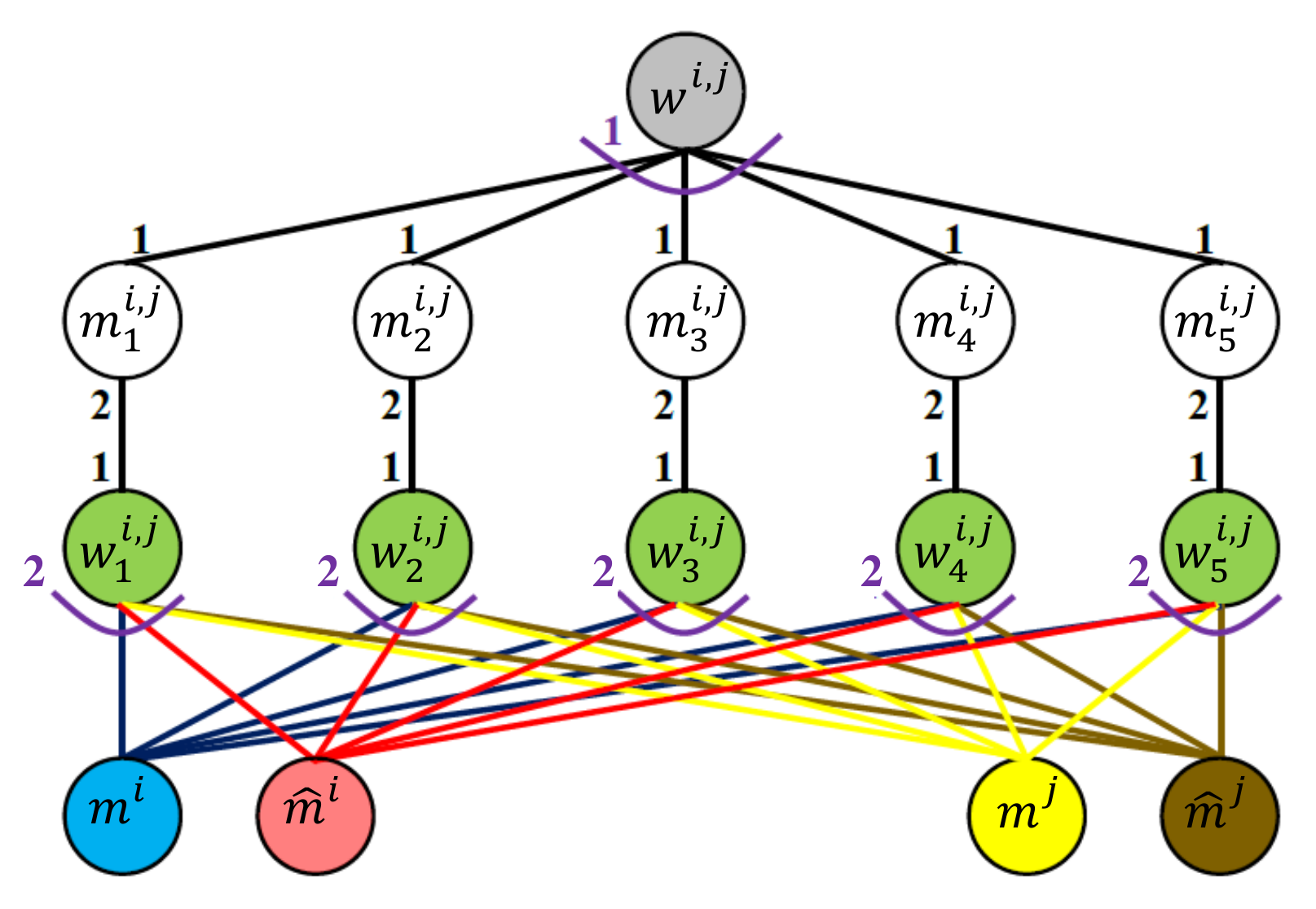}}
\caption{The Edge Selector gadget where $q^{i,j}=5$. Numbers indicate the positions of agents in preference lists.}\label{fig:w1MinSMT2}
\end{figure}

\subsubsection{Treewidth}

We begin the analysis of the reduction by bounding the treewidth of the resulting primal graph.

\begin{lemma}\label{lem:twMIN}
Let $I$ be an instance of {\sc Multicolored Clique}. Then, the treewidth of $\red_{min}(I)$ is bounded by $2k+\OO(1)$.
\end{lemma}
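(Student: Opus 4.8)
The plan is to reuse, essentially verbatim, the two-step scheme from the proof of Lemma~\ref{lem:twMAX} (the \maxsmt\ analogue, which like the present reduction contains no happy agents). Let $P$ be the primal graph of $\red_{min}(I)$, and let $P'$ be obtained from $P$ by deleting the $2k$ vertices representing the men of $M_{\bas}\cup\widehat{M}_{\bas}$. Since $|M_{\bas}\cup\widehat{M}_{\bas}|=2k$, it suffices to prove that every connected component of $P'$ has treewidth $\OO(1)$: from tree decompositions of width $\OO(1)$ of the components, one obtains a decomposition of $P$ of width $2k+\OO(1)$ by inserting all $2k$ deleted men into every bag and joining the resulting forest into a single tree. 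Correctness of this last step is routine — every edge of $P$ incident to a man of $M_{\bas}\cup\widehat{M}_{\bas}$ is covered because that man lies in all bags, and the subtree condition is preserved since the inserted men occupy every bag while all other vertices retain their (connected) original ranges.

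First I would identify the components of $P'$ gadget by gadget. Deleting $m^i,\widehat{m}^i$ removes precisely their incident edges to $W^i_{\bas}$, $\widehat{W}^i_{\bas}$, the edge-women in the various $W^{i,j}_{\bas}$, and the singletons $w^i,\widehat{w}^i$. Within the Combined Vertex Selector of a color class $i$, for each index $j\in[p]$ the surviving adjacencies form the path $w^i_j - m^i_j - \overline{w}^i_j - \widehat{m}^i_j - \widehat{w}^i_j$ on five vertices, and no edge links distinct indices $j$; hence each such path is its own component of treewidth $1$, while the enriched women $w^i,\widehat{w}^i$ (which rank only the deleted men) become isolated vertices. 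Within the Edge Selector of a pair $i<j$, the woman $w^{i,j}$ stays adjacent to every man $m^{i,j}_t$, and each $m^{i,j}_t$ retains a single pendant $w^{i,j}_t$; this component is a tree (a spider whose legs have length two), so it too has treewidth $1$.

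Since all agents of $\red_{min}(I)$ outside $M_{\bas}\cup\widehat{M}_{\bas}$ are accounted for by these pieces, every component of $P'$ has treewidth at most $1$, and the two-step argument delivers the bound $2k+\OO(1)$. The one step I expect to be most error-prone is the exhaustive component check: one must confirm that removing the central men genuinely decouples the construction, leaving no residual edge between different color classes or between a vertex gadget and an edge gadget. This is pure bookkeeping over the preference lists fixed in the reduction and carries no real combinatorial difficulty; its only risk is overlooking an adjacency (for instance between a basic edge-woman and a vertex gadget), which a careful pass over every agent's domain rules out.
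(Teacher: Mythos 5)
Your proof is correct and follows essentially the same route as the paper: delete the $2k$ men of $M_{\bas}\cup\widehat{M}_{\bas}$, reduce to bounding the treewidth of the components of $P'$, and observe that each component is a tree (the paper states directly that $P'$ is a forest, which your path/spider/isolated-vertex case analysis verifies in detail). No gaps.
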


\begin{proof}
Let $P$ be the primal graph of $\red_{min}(I)$, and let $P'$ denote the graph obtained from $P$ by the removal of all of (the vertices that represent) men in $M_{\bas}\cup\widehat{M}_{\bas}$. Note that $|M_{\bas}\cup\widehat{M}_{\bas}|=2k$. Hence, as in the proof of Lemma \ref{lem:twSESM}, to prove that the treewidth of $P$ is bounded by $2k+\OO(1)$, it is sufficient to prove that the treewidth of every connected component of $P'$ is bounded by $\OO(1)$. However, $P'$ is a forest, and therefore the treewidth of each of its connected components is simply 1.
\end{proof}

\subsubsection{Correctness}

\myparagraph{Forward Direction.} We first show how given a solution of an instance $I$ of {\sc Multicolored Clique}, we can construct a stable matching $\mu$ of $\red_{min}(I)$ which matches all agents. For this purpose, we introduce the following definition.

\begin{definition}\label{def:cliquetoMIN}
Let $I=(G,(V^1,V^2,\ldots,V^k))$ be a \yesinstance\ of {\sc Multicolored Clique}, and let $U=\{v^1_{\ell_1},v^2_{\ell_2},\ldots,v^t_{\ell_k}\}$ and $W=\{e^{i,j}_{\ell_{i,j}}: i,j\in[k], i<j\}$ denote the vertex and edge sets, respectively, of a multicolored clique $C$ of $G$. Then, the matching $\mu^C_{min}$ of $\red_{min}(I)$ is defined as follows.
\begin{itemize}
\item For all $i\in[k]$: $\mu^C_{min}(m^i)=w^i_{\ell_i}$ and $\mu^C_{min}(\widehat{m}^i)=\widehat{w}^i_{\ell_i}$.
\item For all $i\in[k]$: $\mu^C_{min}(m^i_{\ell_i})=\overline{w}^i_{\ell_i}$ and $\widehat{m}^i_{\ell_i}$ is not matched.
\item For all $i\in[k]$ and $j\in[p]$ such that $j\neq\ell_i$: $\mu^C_{min}(m^i_j)=w^i_j$ and $\mu^C_{min}(\widehat{m}^i_j)=\widehat{w}^i_j$.
\item For all $i,j\in[k]$ where $i<j$: $\mu^C_{min}(m^{i,j}_{\ell_{i,j}})=w^{i,j}$.
\item For all $i,j\in[k]$ where $i<j$ and $t\in[q^{i,j}]$ such that $t\neq\ell_{i,j}$: $\mu^C_{min}(m^{i,j}_t)=w^{i,j}_t$.
\end{itemize}
\end{definition}

By Definition \ref{def:cliquetoMIN}, we directly derive the following observation.

\begin{observation}\label{obs:cliquetoMIN}
Let $I=(G,(V^1,V^2,\ldots,V^k))$ be a \yesinstance\ of {\sc Multicolored Clique}. Let $C$ be a multicolored clique of $G$. Then, the size of $\mu^C_{min}$ is $k+2|V(G)|+|E(G)|$.
\end{observation}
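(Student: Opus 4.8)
The statement is a direct size computation, so my plan is to count the matched pairs of $\mu^C_{min}$ by going through the five cases of Definition~\ref{def:cliquetoMIN}. Since $\mu^C_{min}$ is a matching, the number of pairs equals the number of men in its domain, so I would count matched \emph{men}, grouped according to the families in which they were introduced: the basic men $M_{\bas}\cup\widehat{M}_{\bas}$, the enriched men $m^i_1,\dots,m^i_p$ and $\widehat{m}^i_1,\dots,\widehat{m}^i_p$ of the Combined Vertex Selector gadgets, and the enriched men $m^{i,j}_1,\dots,m^{i,j}_{q^{i,j}}$ of the Edge Selector gadgets. The key preliminary observation is that the five cases of Definition~\ref{def:cliquetoMIN} partition the set of all men, so no man is counted twice and a man contributes to the size if and only if he is not left unmatched.

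Carrying out the count, the first case matches all $2k$ men of $M_{\bas}\cup\widehat{M}_{\bas}$. For a fixed color class $i$, the men $m^i_1,\dots,m^i_p$ are all matched (the second case handles $m^i_{\ell_i}$ and the third handles every $m^i_j$ with $j\neq\ell_i$), contributing $p$ matched men, hence $kp$ in total. Among $\widehat{m}^i_1,\dots,\widehat{m}^i_p$, the only unmatched man is $\widehat{m}^i_{\ell_i}$ (flagged explicitly in the second case), while the third case matches all $\widehat{m}^i_j$ with $j\neq\ell_i$; this gives $p-1$ per class, hence $k(p-1)$ in total. Finally, for each pair $i<j$ the men $m^{i,j}_1,\dots,m^{i,j}_{q^{i,j}}$ are all matched (the fourth case handles the selected index $\ell_{i,j}$ via $w^{i,j}$, and the fifth handles every $t\neq\ell_{i,j}$ via $w^{i,j}_t$), contributing $\sum_{i<j}q^{i,j}$ matched men.

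Summing the contributions yields a size of $2k + kp + k(p-1) + \sum_{i<j}q^{i,j} = k + 2kp + \sum_{i<j}q^{i,j}$, and to conclude I would substitute the two identities $kp=|V(G)|$ (by definition $p=n/k$ with $n=|V(G)|$) and $\sum_{i<j}q^{i,j}=|E(G)|$, which gives exactly $k+2|V(G)|+|E(G)|$. There is essentially no obstacle here beyond careful bookkeeping; the only two points deserving a remark are (i) verifying that the five cases indeed partition the men so that $\widehat{m}^i_{\ell_i}$ (one per color class) is the sole unmatched agent, and (ii) the identity $\sum_{i<j}q^{i,j}=|E(G)|$, which uses the standard assumption that $G$ has no edge joining two vertices of the same color class (such intra-class edges can never belong to a multicolored clique and may be deleted without loss of generality). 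With these in place the equality follows immediately.
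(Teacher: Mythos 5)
Your count is correct and matches what the paper intends: the paper states this observation as following directly from Definition~\ref{def:cliquetoMIN} without writing out the bookkeeping, and your tally ($2k$ basic men, $kp$ men $m^i_j$ all matched, $k(p-1)$ men $\widehat{m}^i_j$ matched with exactly $\widehat{m}^i_{\ell_i}$ left out per class, and $\sum_{i<j}q^{i,j}=|E(G)|$ edge-gadget men) is exactly that bookkeeping, including the two side remarks (well-definedness of the case split and the no-intra-class-edges convention) that the paper leaves implicit.
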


\begin{lemma}\label{lem:w1MINforward}
Let $I=(G,(V^1,V^2,\ldots,V^k))$ be a \yesinstance\ of {\sc Multicolored Clique}. Let $C$ be a multicolored clique of $G$. Then, $\mu^C_{min}$ is a stable matching of $\red_{min}(I)$.
\end{lemma}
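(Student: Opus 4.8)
The plan is to prove that $\mu^C_{min}$ has no blocking pair in the sense of the blocking‑pair definition (which, with its strict inequalities, is exactly weak stability). The guiding observation is that a blocking pair $(m,w)$ requires \emph{both} parties to strictly prefer each other to their $\mu^C_{min}$‑partners, so it suffices to sweep over the men alone: for each man $m$ I will list every woman $w$ that $m$ strictly prefers to $\mu^C_{min}(m)$ (all acceptable women, if $m$ is unmatched) and show that $w$ does \emph{not} strictly prefer $m$ to her own partner. If this holds for every man, no blocking pair can exist. I would first note which men are matched to a most‑preferred (position‑$1$) woman—namely $m^i_j$ and $\widehat m^i_j$ for $j\neq\ell_i$, and $m^{i,j}_{\ell_{i,j}}$—since such a man strictly prefers no one and can never be the man‑side of a block.

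The routine cases are the remaining enriched men and the unmatched man $\widehat m^i_{\ell_i}$, all of which are disposed of using the position‑$1$ \emph{ties} built into the women's lists. Concretely: $m^i_{\ell_i}$ is matched to $\overline w^i_{\ell_i}$ (position $2$) and strictly prefers $w^i_{\ell_i}$, but $w^i_{\ell_i}$ is matched to $m^i$ and ranks $m^i$ and $m^i_{\ell_i}$ in a tie at position $1$, so she does not strictly prefer $m^i_{\ell_i}$; $m^{i,j}_t$ (for $t\neq\ell_{i,j}$) strictly prefers $w^{i,j}$, who is tie‑matched at position $1$ to $m^{i,j}_{\ell_{i,j}}$; and the unmatched $\widehat m^i_{\ell_i}$ finds $\widehat w^i_{\ell_i}$ and $\overline w^i_{\ell_i}$ acceptable, both of whom are tie‑matched at position $1$ to $\widehat m^i$ and $m^i_{\ell_i}$ respectively. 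In each instance the relevant woman is at a tie with her partner, so the strict inequality required by the blocking‑pair definition fails.

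The crux is the pair of \textbf{leader} men $m^i$ and $\widehat m^i$, matched to $w^i_{\ell_i}$ and $\widehat w^i_{\ell_i}$. The women $m^i$ strictly prefers to $w^i_{\ell_i}$ are, by the leader form, exactly the $w^i_s$ with $s<\ell_i$ together with the edge‑women $w^{i,j}_t$ placed before $w^i_{\ell_i}$, i.e.\ those incident to some $v^i_s$ with $s<\ell_i$. Each such $w^i_s$ is tie‑matched to $m^i_s$ at position $1$, and each such edge‑woman has $t\neq\ell_{i,j}$ and is therefore matched to $m^{i,j}_t$ at its position $1$ while ranking $m^i$ only at position $2$; neither strictly prefers $m^i$. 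The subtle point is that the only \emph{unmatched} edge‑women are precisely the selected‑edge women $w^{i,j}_{\ell_{i,j}}$, and here I would invoke the clique property: since $e^{i,j}_{\ell_{i,j}}$ is incident to $v^i_{\ell_i}$, the leader form places $w^{i,j}_{\ell_{i,j}}$ \emph{after} $w^i_{\ell_i}$, so $m^i$ does not strictly prefer her (and symmetrically for $\widehat m^i$, $m^j$, $\widehat m^j$, using that the edge is also incident to $v^j_{\ell_j}$). Finally, these unmatched women $w^{i,j}_{\ell_{i,j}}$ must be checked against all men they rank, namely $m^{i,j}_{\ell_{i,j}},m^i,\widehat m^i,m^j,\widehat m^j$; $m^{i,j}_{\ell_{i,j}}$ sits at his top choice $w^{i,j}$, and the four leaders each prefer their own selected vertex‑woman, so no block arises. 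The leftover unmatched women $w^i,\widehat w^i$ are acceptable only to $m^i,\widehat m^i$, who prefer their current (higher‑ranked) partners.

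I expect the main obstacle to be exactly this leader/clique interaction: verifying that the selected‑edge women, which are deliberately left unmatched, cannot block with any of the four incident leader men. This is the one step where stability genuinely depends on $C$ being a clique (through the leader‑form placement of edge‑women after their incident vertex‑women), whereas every other verification reduces to the position‑$1$ ties engineered into the gadget women's preference lists. Once the leader case is settled, the man‑by‑man sweep is complete and the absence of blocking pairs follows.
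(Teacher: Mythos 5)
Your proof is correct and follows essentially the same route as the paper's: a man-by-man verification that no blocking pair exists, disposing of the gadget men via the position-$1$ ties engineered into the women's lists ($w^i_j$, $\widehat{w}^i_j$, $\overline{w}^i_j$, $w^{i,j}$) and handling the leader men $m^i,\widehat{m}^i$ via the leader form together with the fact that each selected edge $e^{i,j}_{\ell_{i,j}}$ is incident to $v^i_{\ell_i}$, so the unmatched edge-women appear after $w^i_{\ell_i}$ (resp.\ $\widehat{w}^i_{\ell_i}$) in their lists. Your treatment of the unmatched agents ($\widehat{m}^i_{\ell_i}$, $w^{i,j}_{\ell_{i,j}}$, $w^i$, $\widehat{w}^i$) is slightly more explicit than the paper's but establishes exactly the same facts.
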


\begin{proof}
First, for all $i\in[k]$, we claim that neither $m^i$ not $\widehat{m}^i$ can belong to a blocking pair. All the women in $W^i_{\bas}\setminus\{w^i_{\ell_i}\}$ are matched to men that they rank at position 1 in their preference lists, and therefore none of them can form a blocking pair with $m^i$. Similarly, all the women in $\widehat{W}^i_{\bas}\setminus\{\widehat{w}^i_{\ell_i}\}$ are matched to men that they rank at position 1 in their preference lists, and therefore none of them can form a blocking pair with $\widehat{m}^i$.  All of the other women in the preference lists of both $m^i$ and $\widehat{m}^i$ belong to $\bigcup_{j\in[k],j\neq i}W^{i,j}_{\bas}$. For all $j\in[k],j\neq i$, due to the fact that the preference lists of $m^i$ and $\widehat{m}^i$ are of the form of a leader and $e^{i,j}_{\ell_{i,j}}$ is incident to $v^i_{\ell_i}$ in $G$, we have that $m^i$ prefers $w^i_{\ell_i}$ over $w^{i,j}_{\ell_i}$ and that $\widehat{m}^i$ prefers $\widehat{w}^i_{\ell_i}$ over $w^{i,j}_{\ell_i}$. Moreover, for all $j\in[k],j\neq i$ and $t\in[q^{i,j}]$ such that $t\neq\ell_{i,j}$, we have that $w^{i,j}_t$ is matched to the man she prefers the most, and therefore she can form a blocking pair with neither $m^i$ nor $\widehat{m}^i$.

Second, notice that for all $i\in[k]$ and $j\in[p]$ such that $j\neq\ell_i$, $m^i_j$ and $\widehat{m}^i_j$ are matched to women at position 1 in their preference lists, and therefore they cannot belong to any blocking pair. Moreover, notice that for all $i\in[k]$, $w^i_{\ell_i}$, $\widehat{w}^i_{\ell_i}$ and $\overline{w}^i_{\ell_i}$ are matched to men at position 1 in their preference lists, and therefore they cannot belong to any blocking pair. Since the only women in the preference lists of $m^i_j$ and $\widehat{m}^i_j$ are these three women, we have that there two men cannot belong to any blocking pair as well.

Third, for all $i,j\in[k]$ where $i<j$ and $t\in[q^{i,j}]$, note that $w^{i,j}$ is matched to a man at position 1 in her preference list, and therefore she cannot belong to any blocking pair. Hence, every man in $M^{i,j}_{\enr}$ is matched to a woman that he ranks ranks either first or second, and the (unique) woman that he ranks first is $w^{i,j}$. Therefore, no man in $M^{i,j}_{\enr}$ can belong to a blocking pair.
\end{proof}

Combining Observation \ref{obs:cliquetoMIN} and Lemma \ref{lem:w1MINforward}, we derive the following corollary.

\begin{corollary}\label{cor:w1MINforward}
Let $I$ be a \yesinstance\ of {\sc Multicolored Clique}. Then, the instance $\red_{min}(I)$ of {\sc min-SMT} admits a stable matching of size $k+2|V(G)|+|E(G)|$.
\end{corollary}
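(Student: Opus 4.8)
The plan is to combine the two results that immediately precede the statement, namely Observation~\ref{obs:cliquetoMIN} and Lemma~\ref{lem:w1MINforward}. First, since $I$ is a \yesinstance\ of {\sc Multicolored Clique}, by definition there exists a multicolored clique $C$ of $G$ using exactly one vertex from each color class; I would fix such a clique $C$. I would then form the explicit matching $\mu^C_{min}$ of $\red_{min}(I)$ prescribed by Definition~\ref{def:cliquetoMIN}, which is built directly from $C$.

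With $\mu^C_{min}$ in hand, the two ingredients plug in immediately. Lemma~\ref{lem:w1MINforward} guarantees that $\mu^C_{min}$ is a stable matching of $\red_{min}(I)$, so it is a legitimate feasible solution for {\sc min-SMT}. Observation~\ref{obs:cliquetoMIN} states that the size of $\mu^C_{min}$ is exactly $k+2|V(G)|+|E(G)|$. Concatenating these two facts shows that $\red_{min}(I)$ admits a stable matching of the claimed size, which is precisely the assertion of the corollary.

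I do not anticipate any genuine obstacle at the level of the corollary itself, since all of the real work has already been carried out in the preceding lemma (stability) and observation (cardinality). The only point worth a sanity check is the cardinality count underlying Observation~\ref{obs:cliquetoMIN}: summing the matched pairs over the families listed in Definition~\ref{def:cliquetoMIN} yields $2k$ pairs for the agents $m^i$ and $\widehat{m}^i$, a further $k$ pairs for the $m^i_{\ell_i}$, then $2k(p-1)$ pairs for the remaining $m^i_j$ and $\widehat{m}^i_j$, then ${k\choose 2}$ pairs for the $m^{i,j}_{\ell_{i,j}}$, and finally $|E(G)|-{k\choose 2}$ pairs for the remaining $m^{i,j}_t$. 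Using $|V(G)|=kp$, this total telescopes to $k+2|V(G)|+|E(G)|$, confirming the size claim and completing the argument.
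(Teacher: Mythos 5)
Your proposal is correct and matches the paper's own derivation, which likewise obtains the corollary by combining Lemma~\ref{lem:w1MINforward} (stability of $\mu^C_{min}$) with Observation~\ref{obs:cliquetoMIN} (its cardinality). Your supplementary count $2k + k + 2k(p-1) + {k\choose 2} + \left(|E(G)|-{k\choose 2}\right) = k+2|V(G)|+|E(G)|$ is a valid verification of the observation, though the paper treats it as immediate from Definition~\ref{def:cliquetoMIN}.
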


This concludes the proof of the forward direction.

\medskip
\myparagraph{Reverse Direction.} Second, we prove that given an instance $I$ of {\sc Multicolored Clique}, if the instance $\red_{min}(I)$ of {\sc min-SMT} admits a stable matching of size at most $k+2|V(G)|+|E(G)|$, then we can construct a solution for $I$. To this end, we analyze the structure of stable matchings of $\red_{min}(I)$.

\begin{lemma}\label{lem:w1MINReverse1}
Let $I$ be an instance of {\sc Multicolored Clique}. Let $\mu$ be a stable matching of $\red_{min}(I)$. Then, $\mu$ matches all of the men in $\left(\bigcup_{i=1}^kM^i_{\bas}\cup \widehat{M}^i_{\bas}\right)\cup \left(\bigcup_{i=1}^{k-1}\bigcup_{j=i+1}^kM^{i,j}_{\enr}\right)$.
\end{lemma}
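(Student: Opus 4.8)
The plan is to treat the three families of men named in the statement separately, and in each case to show that if a man from that family were left unmatched by $\mu$, then $\mu$ would admit a blocking pair, contradicting $\mu\in{\cal S}$. I would argue directly from the four cases of the blocking-pair definition, keeping in mind that we work with weak stability, so a pair blocks either when at least one of its members is unmatched and the other is acceptable, or when both strictly improve.

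First I would handle the basic color-class men $m^i$ (the argument for the mirror men $\widehat m^i$ being symmetric). Here the key is the dedicated enriched woman $w^i\in W^i_{\enr}$, whose preference list contains $m^i$ alone; recall also that $w^i$ is appended at the very bottom of $\domain(\pos_{m^i})$, so $w^i\in\domain(\pos_{m^i})$. Suppose $m^i\notin\domain(\mu)$. Since $m^i$ is the only man acceptable to $w^i$ and $m^i$ is unmatched, $w^i$ cannot be matched either, i.e.\ $w^i\notin\image(\mu)$. Then $(m^i,w^i)$ satisfies case (i) of the blocking-pair definition, a contradiction; hence $m^i$ is matched. Replacing $w^i$ by $\widehat w^i\in\widehat W^i_{\enr}$ gives that every $\widehat m^i$ is matched as well.

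Next I would handle the edge-selector men $m^{i,j}_t\in M^{i,j}_{\enr}$, using the basic edge-woman $w^{i,j}_t$, who ranks $m^{i,j}_t$ strictly first (position $1$) and the four leaders $m^i,\widehat m^i,m^j,\widehat m^j$ at position $2$. Suppose $m^{i,j}_t\notin\domain(\mu)$. If $w^{i,j}_t\notin\image(\mu)$, then $(m^{i,j}_t,w^{i,j}_t)$ blocks by case (i). Otherwise $w^{i,j}_t$ is matched, and as $m^{i,j}_t$ is unmatched her partner must be one of $m^i,\widehat m^i,m^j,\widehat m^j$, each of whom she ranks at position $2$, strictly below $m^{i,j}_t$; thus $\pos_{w^{i,j}_t}(m^{i,j}_t)<\pos_{w^{i,j}_t}(\mu^{-1}(w^{i,j}_t))$ and case (ii) applies, so $(m^{i,j}_t,w^{i,j}_t)$ blocks. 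In either case we contradict stability, so $m^{i,j}_t$ is matched.

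Since the three families $\{m^i\}_i$, $\{\widehat m^i\}_i$, and $\{m^{i,j}_t\}_{i<j,\,t\in[q^{i,j}]}$ together constitute exactly the set in the statement, this completes the argument. I do not expect a genuine obstacle: the only point that needs care is the presence of ties in the edge gadget, where I must verify that $w^{i,j}_t$ \emph{strictly} prefers $m^{i,j}_t$ to each of her other (mutually tied) acceptable partners, so that case (ii)—rather than indifference—is triggered. The enriched women $w^i$ and $\widehat w^i$ are engineered precisely to pin the basic men down, making that step immediate.
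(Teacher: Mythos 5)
Your proof is correct and takes essentially the same approach as the paper: for each man in the stated set you exhibit a woman who ranks him (uniquely) at position $1$ and is acceptable to him---$w^i$, $\widehat{w}^i$, or $w^{i,j}_t$ respectively---so that leaving him unmatched would create a blocking pair with that woman. The paper's proof is a condensed version of exactly this argument; your explicit split into cases (i) and (ii) of the blocking-pair definition, including the observation that $w^{i,j}_t$ ranks the four leaders strictly below $m^{i,j}_t$, just spells out the details the paper leaves implicit.
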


\begin{proof}
For each men in $\left(\bigcup_{i=1}^kM^i_{\bas}\cup \widehat{M}^i_{\bas}\right)\cup \left(\bigcup_{i=1}^{k-1}\bigcup_{j=i+1}^kM^{i,j}_{\enr}\right)$, there exists a woman such this man is the unique man at position 1 in her list. Indeed, for all $i\in[k]$ and or all men $m^i\in M^i$ and $\widehat{m}^i$, these women are $w^i$ and $\widehat{w}^i$, respectively. Moreover, for all $i,j\in[k]$ where $i<j$ and for every man $m^{i,j}_t\in M^{i,j}_{\enr}$, this woman is $w^{i,j}_t$. Thus, every man in $\left(\bigcup_{i=1}^kM^i_{\bas}\cup \widehat{M}^i_{\bas}\right)\cup \left(\bigcup_{i=1}^{k-1}\bigcup_{j=i+1}^kM^{i,j}_{\enr}\right)$ is matched by $\mu$, else he would have formed a blocking pair with the aforementioned woman.
\end{proof}

\begin{lemma}\label{lem:w1MINReverse2}
Let $I$ be an instance of {\sc Multicolored Clique}. Let $\mu$ be a stable matching of $\red_{min}(I)$. Then, for all $i\in[k]$, either $\mu$ matches all of the men in $M^i_{\enr}\cup \widehat{M}^i_{\enr}$ or the two following conditions hold: {\bf (i)} $\mu$ matches $p-1$ of the men in $M^i$, and {\bf (ii)} there exists $j\in[p]$ such that both $\mu(m^i)=w^i_j$ and $\mu(\widehat{m})=\widehat{w}^i_j$.
\end{lemma}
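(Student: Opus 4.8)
The plan is to analyze the stable matching $\mu$ separately within the Combined Vertex Selector gadget of each fixed color class $i$, exploiting that every woman of this gadget has a very short preference list and that several of these lists are tied at the top position. Since $\mu$ is only weakly stable, I will use throughout that a pair blocks only when \emph{both} agents \emph{strictly} prefer each other to their current status; in particular a man tied at the top of a woman's (length-two or length-three) list can never be blocked out of her by another top-tied man.

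First I would record the two local facts that drive everything. \textbf{(a)} If $m^i_j$ is left unmatched, then necessarily $\mu(w^i_j)=m^i$ and $\mu(\overline{w}^i_j)=\widehat{m}^i_j$. Indeed, $m^i_j$ strictly prefers either woman in his list to being single, so to avoid the blocking pairs $(m^i_j,w^i_j)$ and $(m^i_j,\overline{w}^i_j)$ the women $w^i_j$ and $\overline{w}^i_j$ must each be matched to the \emph{other} agent tied at the top of their length-two lists, namely $m^i$ and $\widehat{m}^i_j$ respectively (they cannot be matched to $m^i_j$, who is single, nor be single themselves). The mirror statement holds for $\widehat{m}^i_j$. \textbf{(b)} Because $m^i$ (respectively $\widehat{m}^i$) is matched to at most one woman, fact (a) shows that at most one index $j$ can have $m^i_j$ unmatched, and at most one index can have $\widehat{m}^i_j$ unmatched.

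With these in hand the dichotomy follows. If no enriched man is unmatched we are in the first case of the lemma, so suppose some enriched man, say $m^i_a$ (the argument for a man of $\widehat{M}^i_{\enr}$ is symmetric), is unmatched. By (a), $\mu(m^i)=w^i_a$ and $\mu(\overline{w}^i_a)=\widehat{m}^i_a$; in particular $\widehat{m}^i_a$ is matched to $\overline{w}^i_a$, which sits at the \emph{second} position of his list, so he strictly prefers $\widehat{w}^i_a$. Now $\widehat{w}^i_a$ ranks only $\widehat{m}^i$ and $\widehat{m}^i_a$ (both top-tied), and the latter is taken, so $\widehat{w}^i_a$ is matched to $\widehat{m}^i$ or is single; but if she were single the pair $(\widehat{m}^i_a,\widehat{w}^i_a)$ would block. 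Hence $\mu(\widehat{m}^i)=\widehat{w}^i_a$, which is exactly condition (ii) with $j=a$. To see that $m^i_a$ is the \emph{only} unmatched enriched man, suppose some $\widehat{m}^i_{a'}$ were also unmatched; by (a) this forces $\mu(\widehat{m}^i)=\widehat{w}^i_{a'}$, whence $a'=a$ by uniqueness, and then $\overline{w}^i_a$ would have to be matched simultaneously to $\widehat{m}^i_a$ and to $m^i_a$, a contradiction. Thus exactly one of the $2p$ enriched men is unmatched, which gives condition (i).

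The step I expect to be the crux is the linking argument of the previous paragraph: converting a single unmatched enriched man into the \emph{consistency} statement that $m^i$ and $\widehat{m}^i$ select women of the \emph{same} index. This is precisely where the gadget's design is used, since the shared women $\overline{w}^i_j$ are what couple the ``original'' side ($m^i,m^i_j,w^i_j$) to the ``mirror'' side ($\widehat{m}^i,\widehat{m}^i_j,\widehat{w}^i_j$); the remaining bookkeeping (facts (a), (b) and the count in (i)) is routine once ties are handled with the weak-stability convention above. I would also remark that ``$M^i$'' in condition (i) should be read as the enriched men of the gadget, the substantive content being that a single enriched man is left unmatched (equivalently, $2p-1$ of the $2p$ enriched men are matched).
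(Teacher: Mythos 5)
Your proof is correct and follows essentially the same route as the paper's: assuming an enriched man $m^i_a$ is unmatched, you chase the forced assignments $\mu(w^i_a)=m^i$, $\mu(\overline{w}^i_a)=\widehat{m}^i_a$, and then (since $\widehat{m}^i_a$ now strictly prefers $\widehat{w}^i_a$) $\mu(\widehat{w}^i_a)=\widehat{m}^i$, which is exactly the paper's chain through the Combined Vertex Selector gadget. The only difference is that you make explicit the counting for condition (i) (your facts (a), (b) and the uniqueness paragraph showing exactly one enriched man is unmatched), which the paper leaves implicit.
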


\begin{proof}
Consider some color class $i\in[k]$. If $\mu$ matches all of the men in $M^i_{\enr}\cup \widehat{M}^i_{\enr}$, then we are done. Thus, suppose that there exists a man in this set that $\mu$ does not match. Due to symmetry between $M^i_{\bas}$ and $\widehat{M}^i_{\bas}$, we can assume w.l.o.g.~that there exists $j\in[p]$ such that $\mu$ does not match $m^i_j$. Since $\mu$ is a stable matching and $m^i_j$ ranks $w^i_j$, it must hold that this woman is matched to the only other man that she ranks, who is $m^i$. That is, $\mu(w^i_j)=m^i$. Moreover, since $m^i_j$ ranks $\overline{w}^i_j$, it must hold that this woman is matched to the only other man that she ranks, who is $\widehat{m}^i_j$. That is, $\mu(\overline{w}^i_j)=\widehat{m}^i_j$. However, $\widehat{m}^i_j$ prefers $\widehat{w}^i_j$ over $\overline{w}^i_j$. Thus, $\widehat{w}^i_j$ is matched to the only other man that she ranks, who is $\widehat{m}^i$. That is, $\mu(\widehat{w}^i_j)=\widehat{m}^i$. Since the choice of $i$ was arbitrary, we conclude that the lemma is correct.
\end{proof}

\begin{lemma}\label{lem:w1MINReverse3}
Let $I$ be an instance of {\sc Multicolored Clique}. Let $\mu$ be a stable matching of $\red_{min}(I)$. Then, for all $i,j\in[k]$ where $i<j$, there exists $t\in[q^{i,j}]$ such that $w^{i,j}_t$ is either unmatched or matched to one man among $m^i$ and $\widehat{m}^i$. 
\end{lemma}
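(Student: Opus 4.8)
The plan is to work inside a single Edge Selector gadget, the one handling a fixed pair $i<j$ (if $q^{i,j}=0$ the statement is vacuous, so assume $q^{i,j}\geq 1$), and to exploit that the $q^{i,j}$ men of $M^{i,j}_{\enr}$ all rank the single woman $w^{i,j}$ as their unique top choice while she ranks each of them at position $1$. First I would show that $w^{i,j}$ is matched by $\mu$. Suppose not, and pick any $m^{i,j}_1\in M^{i,j}_{\enr}$; since $\domain(\pos_{m^{i,j}_1})=\{w^{i,j},w^{i,j}_1\}$ with $\pos_{m^{i,j}_1}(w^{i,j})=1<2=\pos_{m^{i,j}_1}(w^{i,j}_1)$, this man is either unmatched or matched to $w^{i,j}_1$. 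In either case $(m^{i,j}_1,w^{i,j})$ is a blocking pair (condition~(i) if $m^{i,j}_1$ is unmatched, condition~(iii) if he is matched to $w^{i,j}_1$, using $w^{i,j}\notin\image(\mu)$), contradicting stability. Hence $w^{i,j}=\mu(m^{i,j}_{t_0})$ for a (necessarily unique) index $t_0\in[q^{i,j}]$.

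Next I would show that every other man of the gadget is forced onto his own edge-woman. Fix $t\neq t_0$; the only members of $\domain(\pos_{m^{i,j}_t})$ are $w^{i,j}$, which is already taken by $m^{i,j}_{t_0}$, and $w^{i,j}_t$. If $m^{i,j}_t$ were unmatched, then since $w^{i,j}_t$ ranks $m^{i,j}_t$ strictly first the pair $(m^{i,j}_t,w^{i,j}_t)$ would block by condition~(i) or~(ii), unless $w^{i,j}_t$ were already matched to $m^{i,j}_t$ -- impossible while $m^{i,j}_t$ is unmatched. Thus $m^{i,j}_t$ is matched, and its only available partner is $w^{i,j}_t$, so $\mu(m^{i,j}_t)=w^{i,j}_t$. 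Consequently, for every $t\neq t_0$ the woman $w^{i,j}_t$ is matched exactly to her own top choice $m^{i,j}_t$, so $w^{i,j}_{t_0}$ is the unique edge-woman of this gadget that is \emph{not} matched to her own $m^{i,j}_t$.

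Finally I would take $t=t_0$. Since $\mu(m^{i,j}_{t_0})=w^{i,j}\neq w^{i,j}_{t_0}$ and $\domain(\pos_{w^{i,j}_{t_0}})=\{m^{i,j}_{t_0},m^i,\widehat{m}^i,m^j,\widehat{m}^j\}$, the woman $w^{i,j}_{t_0}$ is either unmatched or matched to one of the basic men ranking her, giving the desired index $t$. I expect the main obstacle to be precisely this last identification of her partner: the two steps above pin down the partners of every man of the gadget except $m^{i,j}_{t_0}$, but they leave open which of the four basic men $m^i,\widehat{m}^i,m^j,\widehat{m}^j$ (or none of them) claims $w^{i,j}_{t_0}$. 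Narrowing this partner to $\{m^i,\widehat{m}^i\}$ is the delicate point, and I would handle it using the leader form of the basic men's preference lists together with the matched status of their vertex-women (as established in Lemmata~\ref{lem:w1MINReverse1} and~\ref{lem:w1MINReverse2}); everything else is a clean pigeonhole argument pitting the single woman $w^{i,j}$ against the $q^{i,j}$ men who all rank her first.
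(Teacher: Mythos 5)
Your argument is correct and is essentially the paper's own argument, run at a slightly finer resolution. The paper's proof is a pure pigeonhole: $w^{i,j}$ must be matched (she is the common first choice of all men in $M^{i,j}_{\enr}$), so only $q^{i,j}-1$ of those men remain available for the $q^{i,j}$ women of $W^{i,j}_{\bas}$, whence some $w^{i,j}_t$ is not matched into $M^{i,j}_{\enr}$ and, by her preference list, is unmatched or matched to a basic man. You reach the same conclusion but additionally pin down that every $m^{i,j}_t$ with $t\neq t_0$ is matched to $w^{i,j}_t$, so the witnessing index is exactly the index $t_0$ of the man holding $w^{i,j}$; both routes are sound.

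The ``delicate point'' you flag at the end is not a gap in your reasoning but an inaccuracy in the lemma statement itself. The paper's proof performs no narrowing to $\{m^i,\widehat{m}^i\}$ either: its pigeonhole only yields ``unmatched or matched to one of $m^i,\widehat{m}^i,m^j,\widehat{m}^j$,'' since all four basic men appear in $\domain(\pos_{w^{i,j}_t})$. Moreover, the narrowing is genuinely unobtainable at this stage: one can build stable matchings of $\red_{min}(I)$ in which the unique ``free'' edge woman of a pair is matched to $m^j$ (match $m^{i,j}_{t}\mapsto w^{i,j}$, $m^j\mapsto w^{i,j}_{t}$, all vertex women to their enriched men, and $m^i,\widehat{m}^i,\widehat{m}^j$ to their private last-choice women; every potential blocking woman is then held by a partner she ranks no worse, so no pair blocks). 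Your proposed repair via Lemmata~\ref{lem:w1MINReverse1} and~\ref{lem:w1MINReverse2} would also not go through, because Lemma~\ref{lem:w1MINReverse2} forces $m^i,\widehat{m}^i$ onto vertex women only in its second branch, which is triggered by the size bound that is hypothesized only later, in Lemma~\ref{lem:w1MINReverse4}. Fortunately none of this matters downstream: Lemma~\ref{lem:w1MINReverse4} re-derives the counting argument under the size bound and needs only the four-men conclusion that both you and the paper actually establish. So your proof should simply stop after your third step, with the conclusion stated for all four basic men.
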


\begin{proof}
Consider some two color classes $i,j\in[k]$ where $i<j$. Since $w^{i,j}$ is the unique woman at position 1 in the preference list of all men in $M^{i,j}_{\enr}$, she is matched by $\mu$. Hence, the number of men in $M^{i,j}_{\enr}$ that are not matched by $\mu$ to $w^{i,j}$ is exactly $q^{i,j}-1$. However, the number of women in $W^{i,j}_{\bas}$ is exactly $q^{i,j}$. Thus, by the definition of the preference lists of the women in $W^{i,j}_{\bas}$, we deduce that there exists $t\in[q^{i,j}]$ such that $w^{i,j}_t$ is either unmatched or matched to one man among $m^i$ and $\widehat{m}^i$. Since the choices of $i$ and $j$ were arbitrary, we conclude that the lemma is correct.
\end{proof}

From Lemmata \ref{lem:w1MINReverse1}, \ref{lem:w1MINReverse2} and \ref{lem:w1MINReverse3}, we derive the following result.

\begin{lemma}\label{lem:w1MINReverse4}
Let $I$ be an instance of {\sc Multicolored Clique}. Let $\mu$ be a stable matching of $\red_{min}(I)$ of size at most $k+2|V(G)|+|E(G)|$. Then, for all $i\in[k]$, there exists $j\in[p]$ such that both $\mu(m^i)=w^i_j$ and $\mu(\widehat{m})=\widehat{w}^i_j$. Moreover, for all $i,j\in[k]$ where $i<j$, there exists $t\in[q^{i,j}]$ such that $w^{i,j}_t$ is unmatched. 
\end{lemma}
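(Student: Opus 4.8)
The plan is to prove this by a global counting argument on the number of matched men, feeding in Lemmas \ref{lem:w1MINReverse1}, \ref{lem:w1MINReverse2} and \ref{lem:w1MINReverse3} as the structural input. First I would observe that $\red_{min}(I)$ has exactly $2k+2|V(G)|+|E(G)|$ men, split into three groups: the basic men $\{m^i,\widehat{m}^i\}$ ($2k$ of them), the edge men $m^{i,j}_t$ ($|E(G)|$ of them), and the enriched men $\bigcup_{i}(M^i_{\enr}\cup\widehat{M}^i_{\enr})$ ($2pk=2|V(G)|$ of them). By Lemma \ref{lem:w1MINReverse1} every stable matching saturates the first two groups, so if I write $u_i$ for the number of \emph{unmatched} men of $M^i_{\enr}\cup\widehat{M}^i_{\enr}$, the size of $\mu$ equals $2k+2pk+|E(G)|-\sum_{i=1}^k u_i$. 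Since $|V(G)|=pk$, the hypothesis that $\mu$ has size at most $k+2|V(G)|+|E(G)|$ is exactly equivalent to the inequality $\sum_{i=1}^k u_i\geq k$.

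The heart of the argument is the complementary bound $\sum_i u_i\leq k$, i.e.\ that each Combined Vertex Selector leaves at most one enriched man unmatched. By Lemma \ref{lem:w1MINReverse2} each color class $i$ is in one of two situations: either all of $M^i_{\enr}\cup\widehat{M}^i_{\enr}$ is matched, giving $u_i=0$; or $\mu(m^i)=w^i_j$ and $\mu(\widehat{m}^i)=\widehat{w}^i_j$ for some $j\in[p]$. In the latter ``consistent'' situation I would argue directly from weak stability that $u_i=1$: for every $j'\neq j$ the woman $w^i_{j'}$ ranks only $m^i$ (already taken by $w^i_j$) and $m^i_{j'}$ at position $1$, so to avoid the blocking pair $(m^i_{j'},w^i_{j'})$ she must be matched to $m^i_{j'}$, and symmetrically $\widehat{w}^i_{j'}$ must be matched to $\widehat{m}^i_{j'}$. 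Hence the only enriched men that can fail to be matched are $m^i_j$ and $\widehat{m}^i_j$, whose sole remaining acceptable woman is $\overline{w}^i_j$, and stability forces $\overline{w}^i_j$ to be matched to exactly one of them. Thus $u_i\in\{0,1\}$ in all cases, so $\sum_i u_i\leq k$. Combining with the first paragraph yields $\sum_i u_i=k$ with $u_i=1$ for every $i$, so every gadget is consistent; this is precisely the first assertion of the lemma.

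For the second assertion, fix $i<j$ and invoke Lemma \ref{lem:w1MINReverse3} to obtain $t\in[q^{i,j}]$ such that $w^{i,j}_t$ is either unmatched or matched to one of $m^i,\widehat{m}^i,m^j,\widehat{m}^j$. By the first assertion each of these four men is matched to a vertex-woman in $W^i_{\bas}\cup\widehat{W}^i_{\bas}\cup W^j_{\bas}\cup\widehat{W}^j_{\bas}$, hence to none of the edge-women; consequently $w^{i,j}_t$ cannot be matched to any of them and must be unmatched, as required. I expect the main obstacle to be the middle step: pinning the consistent configuration down to \emph{exactly} one unmatched enriched man, since the whole count collapses if a gadget could ``waste'' two or more unmatched men while remaining stable. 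This is where the precise tie structure of the women $w^i_{j'},\widehat{w}^i_{j'}$ and $\overline{w}^i_j$ in the Combined Vertex Selector must be exploited, and it also explains why the size threshold is set exactly at $k+2|V(G)|+|E(G)|$.
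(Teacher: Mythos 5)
Your proposal is correct and follows essentially the same route as the paper's proof: both saturate the basic and edge men via Lemma \ref{lem:w1MINReverse1}, convert the size bound into ``at least $k$ unmatched enriched men,'' use the at-most-one-unmatched-per-gadget bound together with Lemma \ref{lem:w1MINReverse2} to force every Combined Vertex Selector into the consistent configuration, and then conclude that in each edge gadget some $w^{i,j}_t$ has no available partner. The only cosmetic differences are that you re-derive the per-gadget bound $u_i\le 1$ directly from weak stability (the paper leans on condition (i) of Lemma \ref{lem:w1MINReverse2}) and obtain the second assertion by combining Lemma \ref{lem:w1MINReverse3} with the first assertion (the paper re-runs that edge-gadget count inline); both variants are sound.
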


\begin{proof}
First, note that the size of a matching is equal to the number of men that it matches. Hence, $\mu$ matches at most $k+2|V(G)|+|E(G)|$ men. By Lemma \ref{lem:w1MINReverse1}, all of the men in $\left(\bigcup_{i=1}^kM^i_{\bas}\cup \widehat{M}^i_{\bas}\right)\cup \left(\bigcup_{i=1}^{k-1}\bigcup_{j=i+1}^kM^{i,j}_{\enr}\right)$ are matched by $\mu$. Since there are $2k+|E(G)|$ men in this set, and the set of all other men is $\bigcup_{i=1}^k{M^i\cup\widehat{M}^i}$ whose size is $2|V(G)|$, we have that $\mu$ not matches at least $k$ men from $\bigcup_{i=1}^k{M^i\cup\widehat{M}^i}$. Recall that $|V(G)|=pk$. Thus, by Lemma \ref{lem:w1MINReverse2}, this scenario is only possible if for all $i\in[k]$, there exists $j\in[p]$ such that both $\mu(m^i)=w^i_j$ and $\mu(\widehat{m})=\widehat{w}^i_j$. Now, consider some indices $i,j\in[k]$ where $i<j$. Note that since $w^{i,j}$ is the unique woman at position 1 in the preference list of all men in $M^{i,j}_{\enr}$, she is matched by $\mu$. Hence, the number of men in $M^{i,j}_{\enr}$ that are not matched by $\mu$ to $w^{i,j}$ is exactly $q^{i,j}-1$. However, the number of women in $W^{i,j}_{\bas}$ is exactly $q^{i,j}$, and since we have argued that the men in $M^i_{\bas}\cup \widehat{M}^i_{\bas}$ are matched to women in $W^i_{\bas}\cup \widehat{W}^i_{\bas}$, the only men that these women can be matched to are those that belong to $M^{i,j}_{\enr}$. We thus deduce that there exists $t\in[q^{i,j}]$ such that $w^{i,j}_t$ is unmatched.
\end{proof}

We are now ready to prove the correctness of the reverse direction.

\begin{lemma}\label{lem:w1MINReverse}
Let $I=(G,(V^1,V^2,\ldots,V^k))$ be an instance of {\sc Multicolored Clique}. If the instance $\red_{min}(I)$ of {\sc min-SMT} admits a stable matching of size at most $k+2|V(G)|+|E(G)|$, then $I$ is a \yesinstance\ of {\sc Multicolored Clique}.
\end{lemma}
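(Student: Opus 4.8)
The plan is to feed the hypothesized small stable matching into Lemma~\ref{lem:w1MINReverse4} and then read off a clique from the resulting indices. Concretely, suppose $\mu$ is a stable matching of $\red_{min}(I)$ of size at most $k+2|V(G)|+|E(G)|$. By Lemma~\ref{lem:w1MINReverse4}, for every color class $i\in[k]$ there is an index $\ell_i\in[p]$ with $\mu(m^i)=w^i_{\ell_i}$ and $\mu(\widehat{m}^i)=\widehat{w}^i_{\ell_i}$, and for every pair $i<j$ there is an index $\ell_{i,j}\in[q^{i,j}]$ such that $w^{i,j}_{\ell_{i,j}}$ is left unmatched by $\mu$. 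I would then set $U=\{v^i_{\ell_i}:i\in[k]\}$, which is a transversal of the color classes by construction, and aim to prove that $U$ induces a clique, i.e.\ that for every $i<j$ the edge $e^{i,j}_{\ell_{i,j}}$ joins $v^i_{\ell_i}$ and $v^j_{\ell_j}$.

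The heart of the argument is a stability analysis of the unmatched edge-woman $w^{i,j}_{\ell_{i,j}}$. Recall she ranks exactly $m^{i,j}_{\ell_{i,j}}$ (strictly first) and the four ``leader'' men $m^i,\widehat{m}^i,m^j,\widehat{m}^j$ (tied, at position $2$). Since she is unmatched, weak stability forces every man she ranks to weakly prefer his own $\mu$-partner to her; as the leaders have strict preference lists and are all matched (Lemmata~\ref{lem:w1MINReverse1} and~\ref{lem:w1MINReverse4}), each of $m^i,\widehat{m}^i,m^j,\widehat{m}^j$ must in fact \emph{strictly} prefer his partner over $w^{i,j}_{\ell_{i,j}}$. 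I would turn these four preference statements into index inequalities via the leader (``distorted mirror'') structure. Writing $v^i_s$ for the endpoint of $e^{i,j}_{\ell_{i,j}}$ in color class $i$, the edge-woman sits immediately after $w^i_s$ in $m^i$'s list, so $m^i$ preferring $w^i_{\ell_i}$ yields $\ell_i\le s$; in the reversed list of $\widehat{m}^i$ the same edge-woman sits immediately after $\widehat{w}^i_s$, so $\widehat{m}^i$ preferring $\widehat{w}^i_{\ell_i}$ yields $\ell_i\ge s$. Combining, $\ell_i=s$, i.e.\ $e^{i,j}_{\ell_{i,j}}$ is incident to $v^i_{\ell_i}$. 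Running the identical pair of inequalities for $m^j$ and $\widehat{m}^j$ shows $e^{i,j}_{\ell_{i,j}}$ is incident to $v^j_{\ell_j}$.

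Having shown that $e^{i,j}_{\ell_{i,j}}$ connects $v^i_{\ell_i}$ and $v^j_{\ell_j}$ for every $i<j$, I conclude that all ${k\choose 2}$ pairs of vertices in $U$ are adjacent, so $U$ is a multicolored clique and $I$ is a \yesinstance\ of {\sc Multicolored Clique}.

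The step I expect to be the main obstacle is the bookkeeping in the leader-structure argument: it requires pinning down exactly where the tied edge-woman falls relative to the vertex-women in both the forward and the mirrored preference lists, and then verifying that the two resulting inequalities ($\ell_i\le s$ from $m^i$ and $\ell_i\ge s$ from $\widehat{m}^i$) sandwich $\ell_i$ to equality. This is precisely where the deliberate use of \emph{two} men per color class with opposite orderings pays off, and care is needed because the edge-women are ranked at a tie (position $2$) by the leaders, so one must invoke weak stability together with the strictness of the leaders' own lists, rather than assuming strict two-sided preference.
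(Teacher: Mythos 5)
Your proof is correct and follows essentially the same route as the paper's: invoke Lemma~\ref{lem:w1MINReverse4}, set $U=\{v^i_{\ell_i}: i\in[k]\}$, and use weak stability of each unmatched edge-woman against the leaders' mirrored preference lists to sandwich the endpoints of $e^{i,j}_{\ell_{i,j}}$ to be exactly $v^i_{\ell_i}$ and $v^j_{\ell_j}$. If anything, your bookkeeping is more careful than the paper's: you state the stability inequality in the correct direction (the unmatched edge-woman must lie \emph{after} the matched vertex-woman in each leader's strict list, whereas the paper's text says ``not located after''), and you explicitly run the sandwich argument for both color classes $i$ and $j$ rather than only for $i$.
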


\begin{proof}
Suppose that the instance $\red_{min}(I)$ of {\sc min-SMT} admits a stable matching $\mu$ of size at most $k+2|V(G)|+|E(G)|$.
By Lemma~\ref{lem:w1MINReverse4}, for all $i\in[k]$, there exists $\ell_i\in[p]$ such that $\mu(m^i)=w^i_{\ell_i}$ and $\mu(\widehat{m}^i)=\widehat{w}^i_{\ell_i}$. Moreover, by Lemma~\ref{lem:w1MINReverse4}, for all $i,j\in[k]$ where $j<i$, there exists $\ell_{i,j}\in[q^{i,j}]$ such that $\mu(w^{i,j}_{\ell_{i,j}})$ is unmatched. Denote $U=\{v^1_{\ell_1},v^2_{\ell_2},\ldots,v^t_{\ell_k}\}$ and $W=\{e^{i,j}_{\ell_{i,j}}: i,j\in[k], i<j\}$. Note that we have proved that $|U|=k$ and $|W|\geq {k\choose 2}$. Since for all $i,j\in[k]$ where $j<i$, $w^{i,j}_{\ell_i}$ prefers both $m^i$ and $\widehat{m}^i$ over being unmatched, we have that $w^{i,j}_{\ell_i}$ is not located after $w^i_{\ell_i}$ in the preference list of $m^i$ as well as that it is not located after $\widehat{w}^i_{\ell_i}$ in the preference list of $\widehat{m}^i$. However, by the definition of the preference lists of $m^i$ and $\widehat{m}^i$, it must then hold that $e^{i,j}_{\ell_{i,j}}$ is an edge incident to $v^i_{\ell_i}$ in $G$. Hence, we derive that $U$ is the vertex set of a colorful clique of $G$. We thus conclude that $I$ is a \yesinstance\ of {\sc Multicolored Clique}.
\end{proof}

\medskip
\myparagraph{Summary.} Finally, we note that the reduction can be performed in time that is polynomial in the size of the output. That is, we have the following observation.

\begin{observation}\label{obs:w1MINTime}
Let $I=(G,(V^1,V^2,\ldots,V^k))$ be an instance of {\sc Multicolored Clique}. Then, the instance $\red_{min}(I)$ of {\sc min-SMT} can be constructed in time polynomial in the size~of~$I$.
\end{observation}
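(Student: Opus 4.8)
The plan is to bound the size of the output instance $\red_{min}(I)$ by a polynomial in $|I|$, and then to observe that, given such a bound, all preference lists can be written out in polynomial time. The crucial difference from the reductions to {\sc SESM} and {\sc BSM} (whose construction times were bounded by $2^{\OO(k)}\cdot n^{\OO(1)}$ in Observations \ref{obs:w1SESMTime} and \ref{obs:w1BSMTime}) is that the present reduction introduces \emph{no} happy pairs; consequently there is no source of the exponential factor $2^{\OO(k)}$ that appeared there (arising from the $\alpha'=\alpha 4^k|E(G)|^{10}$ happy pairs), and the construction becomes purely polynomial.

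First I would count the agents. Writing $n=|V(G)|$ and $p=n/k$, the basic agents contribute $\OO(k)$ men in $M_{\bas}\cup\widehat{M}_{\bas}$, $\OO(pk)=\OO(n)$ women in the sets $W^i_{\bas}$ and $\widehat{W}^i_{\bas}$, the $\OO(|E(G)|)$ women in the sets $W^{i,j}_{\bas}$ (since $\sum_{i<j}q^{i,j}\leq|E(G)|$), and $\OO(k)$ women of the forms $w^i,\widehat{w}^i$. Each Combined Vertex Selector gadget adds $\OO(p)$ men (the sets $M^i_{\enr}$ and $\widehat{M}$) and $\OO(p)$ women (the set $\overline{W}^i$), for $\OO(n)$ additional agents over all color classes. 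Each Edge Selector gadget adds $q^{i,j}$ men in $M^{i,j}_{\enr}$ together with one woman $w^{i,j}$, for $\OO(|E(G)|)$ additional agents over all pairs of color classes. Hence the total number of agents is $\OO(n+|E(G)|)$, which is polynomial in $|I|$.

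Next I would bound the lengths of the preference lists. Every enriched agent (of the forms $m^i_j,\widehat{m}^i_j,\overline{w}^i_j,m^{i,j}_t,w^{i,j}_t$) ranks only a constant number of other agents, so its list is written in $\OO(1)$ time. The only long lists belong to the leaders $m^i,\widehat{m}^i$, whose domains are $W^i_{\bas}\cup W^{i,j}_{\bas}\cup\{w^i\}$ and $\widehat{W}^i_{\bas}\cup W^{i,j}_{\bas}\cup\{\widehat{w}^i\}$ and thus have length $\OO(n+|E(G)|)$, and to the women $w^{i,j}$, whose lists have length $q^{i,j}=\OO(|E(G)|)$. All of these are polynomial, and the internal orderings dictated by the four leader conditions of Section~\ref{sec:w1SESM} are read off directly from the incidences of $G$ in polynomial time.

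Putting these together, the reduction writes down $\OO(n+|E(G)|)$ preference lists, each of polynomial length, and fixes every ranking in polynomial time; therefore $\red_{min}(I)$ is constructed in time polynomial in $|I|$. I do not anticipate any genuine obstacle here: the single point that requires care—and the one distinguishing this observation from its {\sc SESM}/{\sc BSM} counterparts—is the absence of happy pairs, which is precisely what removes the exponential dependence on $k$ and yields a genuinely polynomial-time construction.
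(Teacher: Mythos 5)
Your proposal is correct and follows exactly the straightforward counting argument that the paper leaves implicit (the observation is stated without proof): the reduction introduces only $\OO(n+|E(G)|)$ agents with polynomially bounded preference lists, and, as you rightly emphasize, the absence of happy pairs is precisely what eliminates the $2^{\OO(k)}$ factor present in Observations \ref{obs:w1SESMTime} and \ref{obs:w1BSMTime}.
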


By Proposition \ref{prop:multiClique}, Lemma \ref{lem:twMIN}, Corollary \ref {cor:w1MINforward}, Lemma \ref{lem:w1MINReverse} and Observation \ref{obs:w1MINTime}, we conclude that {\sc min-SMT} is \WOH. Moreover, unless \ETH\ fails, {\sc min-SMT} cannot be solved in time $f(\tw)\cdot n^{o(\tw)}$ for any function $f$ that depends only on $\tw$. Here, $n$ is the number of agents.

\section{Primal Graph: XP Algorithms}\label{sec:xp}

In this section, we sketch the proof of Theorem \ref{thm:xpIntro}. Each subsection below is devoted to one problem. Throughout this section, $\tw$ is used to denote the treewidth of the primal graph, and $n=|A|$ denotes the number of agents.

\subsection{Sex Equal Stable Marriage}\label{sec:xpSESM}

We first show that {\sc SESM} can be solved in time $n^{\OO(\tw)}$. To this end, let $I=(M,W,\{\pos_m\}|_{m\in M},$ $\{\pos_w\}|_{w\in W})$ be an instance of {\sc SESM}, and let $(T,\beta)$ denote a nice tree decomposition of width $\tw$ of the primal graph. First, in light of Proposition \ref{lem:matchSame}, by employing the algorithm given by Proposition \ref{lem:menOptimal}, we compute the subset $A^\star\subseteq A$ of agents who are matched by every stable matching, which is also the set of agents who are matched by at least one stable matching.

Based on the method of DP, we introduce a table {\sf N}. Each entry of the table {\sf N} is of the form {\sf N}$[v,f,t]$, where $v\in V(T)$, $f: \beta(v)\cap A^\star\rightarrow A^\star$ is an injective function such that for all $a\in\domain(f)$, $a$ and $f(a)$ are agents of opposite sex, and $t\in\{-n^2,-n^2+1,\ldots,n^2\}$. To formulate our objective, we rely on the following definition, which is also relevant to instances of {\sc SMT}, that is, in the presence of ties. We remark that this definition will be reused when we design our other \XP\ algorithms.
\begin{definition}\label{def:xpPartial}
Let $(M,W,\{\pos_m\}|_{m\in M},$ $\{\pos_w\}|_{w\in W})$ be an instance of {\sc SMT}, and let $(T,\beta)$ denote a tree decomposition of width $\tw$ of the primal graph. Given a node $v\in V(T)$, a set $U\subseteq\beta(v)$ and a function $f: U\rightarrow A$, we say an injective function $g$ is {\em $(v,U,f)$-stable} if $\domain(g)\subseteq\gamma(v)$, $\image(g)\subseteq A$, and the following conditions are satisfied.
\begin{itemize}
\item For all $a\in\beta(v)$, if $a\notin U$ then $a\notin\domain(g)$, and otherwise $g(a)=f(a)$.
\item There do not exist $m\in\gamma(v)\cap M$ and $w\in\gamma(v)\cap W$ whose matching is {\em inconsistent}, that is, at least one of the following conditions is satisfied: {\bf (i)} $g(m)=w$ and either $w\notin\domain(g)$ or $g(w)\neq m$; {\bf (ii)} $g(w)=m$ and either $m\notin\domain(g)$ or $g(m)\neq w$.
\item There do not exist $m\in\gamma(v)\cap M$ and $w\in\gamma(v)\cap W$ that strictly prefer being matched to one another over their ``status'' with respect to $g$, that is, both of the following conditions are satisfied: {\bf (i)} either $m\notin\domain(g)$ or $m$ prefers $w$ over $g(m)$; {\bf (ii)} either $w\notin\domain(g)$ or $w$ prefers $m$ over $g(w)$.
\end{itemize}
\end{definition}

We say that {\sf N} is {\em computed correctly} if for each entry {\sf N}$[v,f,t]$, it holds that {\sf N}$[v,f,t]\in\{0,1\}$, and {\sf N}$[v,f,t]=1$ if and only if there exists a $(v,\beta(v)\cap A^\star,f)$-stable function $g$ such that $\sum_{m\in\domain(g)\cap M}(\pos_m(g(m)) - \pos_{g(m)}(m))=t$. Note that for all $\mu\in{\cal S}$, it holds that $\sat_M(\mu),\sat_W(\mu)$ $\in[n^2]$ (see also Lemma \ref{lem:boundTarget} in Section \ref{sec:dpTable}). Thus, if {\sf N} is computed correctly, the output is simply the smallest absolute value $t'$ for which there exists an integer $t\in\{-n^2,-n^2+1,\ldots,n^2\}$ such that $t'=|t|$ and {\sf N}$[\rootT(T),f: \emptyset\rightarrow\emptyset,t]=1$.

For the sake of completeness, we proceed to present the computation of the entries of {\sf N}. Since the proof of correctness is straightforward, it is omitted. We process the entries of {\sf N} by traversing the tree $T$ in post-order. The order in which we process entries corresponding to the same node $v\in V(T)$ is arbitrary. In the basis, where $v$ is a leaf and thus $\beta(v)=\emptyset$, we simply set {\sf N}$[v,f,t]$ to be 1 if and only if $t=0$. Next, we compute entries where $v$ is a forget node, an introduce node or a join node.

\smallskip
\myparagraph{Forget Node.} Let $u$ denote the child of $v$ in $T$, and $a$ denote the agent in $\beta(u)\setminus\beta(v)$. Let ${\cal F}$ denote the set of all functions $f':\beta(u)\cap A^\star\rightarrow A^\star$ that are identical of $f$ when restricted to $\beta(v)\cap A^\star$. Then, {\sf N}$[v,f,t]$ is set to be 1 if and only if there exists $f'\in{\cal F}$ such that {\sf N}$[u,f',t]=1$.

\smallskip
\myparagraph{Introduce Node.} Let $u$ be the child of $v$ in $T$, and $a$ be the agent in $\beta(v)\setminus\beta(u)$. Let $f'$ be the restriction of $f$ to $\beta(u)\cap A^\star$. Consider the following cases, recalling that the terms ``inconsistency'' and ``status'' are defined as in Definition \ref{def:xpPartial}.
\begin{itemize}
\itemsep0em 
\item If there exists $a'\in\beta(v)$ such that the matching of $a$ and $a'$ is inconsistent: {\sf N}$[v,f,t]=0$.
\item Else if there exists $a'\in\beta(v)$ such that $a$ and $a'$ prefer being matched to one another over their status with respect to $f$: {\sf N}$[v,f,t]=0$.
\item Else if $a$ is a man: {\sf N}$[v,f,t]=${\sf N}$[u,f',t-(\pos_a(f(a))-\pos_{f(a)}(a))]$.
\item Otherwise: {\sf N}$[v,f,t]=${\sf N}$[u,f',t]$.
\end{itemize}

\myparagraph{Join Node.} Let $u$ and $s$ denote the children of $v$ in $T$. For the sake of efficiency, we compute all entries of the form {\sf N}$[v,f,\cdot]$ simultaneously. First, we initialize each such entry to 0. Now, we compute $T(u,f)=\{\widehat{t}\in\{-n^2,-n^2+1,\ldots,n^2\}: \mathrm{\sf N}[u,f,\widehat{t}]=1\}$ and $T(s,f)=\{\widehat{t}\in\{-n^2,-n^2+1,\ldots,n^2\}: \mathrm{\sf N}[s,f,\widehat{t}]=1\}$. Then, for all $\widehat{t}\in T(u,f)$ and $(t^*,f)\in T(s,f)$ such that $t=\widehat{t}+t^*-\sum_{m\in\domain(f)\cap M\cap\beta(v)}(\pos_m(f(m)) - \pos_{f(m)}(m))\in[n^2]$, we set $\mathrm{\sf N}[v,f,t]=1$. (An entry may be set to $1$ multiple times.)

\subsection{Balanced Stable Marriage}\label{sec:xpBSM}

We now show that {\sc BSM} can be solved in time $n^{\OO(\tw)}$. To this end, let $I=(M,W,\{\pos_m\}|_{m\in M},$ $\{\pos_w\}|_{w\in W})$ be an instance of {\sc BSM}, and let $(T,\beta)$ denote a nice tree decomposition of width $\tw$ of the primal graph. We compute the same set $A^\star$ as in the case of {\sc SESM}, and also introduce a table {\sf N} with the same set of entries {\sf N}$[v,f,t]$, except that now $t$ is restricted to belong to the set $[n^2]_0$. Here, we say that {\sf N} is {\em computed correctly} if for each entry {\sf N}$[v,f,t]$, it holds that {\sf N}$[v,f,t]\in[n^2]_0\cup\{\nil\}$, and for all $i\in[n^2]_0$, {\sf N}$[v,f,t]=i$ if and only if there exists a $(v,\beta(v)\cap A^\star,f)$-stable function $g$ such that $\sum_{m\in\domain(g)\cap M}\pos_m(g(m))=t$ and $\sum_{w\in\domain(g)\cap W}\pos_w(g(w))=i$, and there does not exist a $(v,\beta(v)\cap A^\star,f)$-stable function $g'$ such that $\sum_{m\in\domain(g')\cap M}\pos_m(g'(m))=t$ and $\sum_{w\in\domain(g')\cap W}\pos_w(g'(w))<i$.

We process the entries of {\sf N} by traversing the tree $T$ in post-order. The order in which we process entries corresponding to the same node $v\in V(T)$ is arbitrary. In the basis, where $v$ is a leaf, we simply set {\sf N}$[v,f,t]$ to be 0 if $t=0$ and to $\nil$ otherwise. Next, we compute entries where $v$ is a forget node, an introduce node or a join node.

\smallskip
\myparagraph{Forget Node.} Let $u$ denote the child of $v$ in $T$, and $a$ denote the agent in $\beta(u)\setminus\beta(v)$. Let ${\cal F}$ denote the set of all functions $f':\beta(u)\cap A^\star\rightarrow A^\star$ that are identical of $f$ when restricted to $\beta(v)\cap A^\star$. Then, {\sf N}$[v,f,t]$ is set to be $i\in[n^2]_0$ if and only if there exists $f'\in{\cal F}$ such that {\sf N}$[u,f',t]=i$, and there does not exist $\widehat{f}\in{\cal F}$ such that {\sf N}$[u,\widehat{f},t]<i$.

\smallskip
\myparagraph{Introduce Node.} Let $u$ be the child of $v$ in $T$, and $a$ be the agent in $\beta(v)\setminus\beta(u)$. Let $f'$ be the restriction of $f$ to $\beta(u)\cap A^\star$. Consider the following cases.
\begin{itemize}
\itemsep0em 
\item If there exists $a'\in\beta(v)$ such that the matching of $a$ and $a'$ is inconsistent: {\sf N}$[v,f,t]=\nil$.
\item Else if there exists $a'\in\beta(v)$ such that $a$ and $a'$ prefer being matched to one another over their status with respect to $f$: {\sf N}$[v,f,t]=\nil$.
\item Else if $a$ is a man: {\sf N}$[v,f,t]=${\sf N}$[u,f',t-\pos_a(f(a))]$.
\item Otherwise: {\sf N}$[v,f,t]=${\sf N}$[u,f',t]+\pos_{a}(f(a))$.
\end{itemize}

\myparagraph{Join Node.} Let $u$ and $s$ denote the children of $v$ in $T$. For the sake of efficiency, we compute all entries of the form {\sf N}$[v,f,\cdot]$ simultaneously. First, we initialize each such entry to $\nil$. Now, we compute $T(u,f)=\{\widehat{t}\in[n^2]_0: \mathrm{\sf N}[u,f,\widehat{t}]\neq\nil\}$ and $T(s,f)=\{\widehat{t}\in[n^2]_0: \mathrm{\sf N}[s,f,\widehat{t}]\neq\nil\}$. Then, for all $\widehat{t}\in T(u,f)$ and $(t^*,f)\in T(s,f)$ such that $t=\widehat{t}+t^*-\sum_{m\in\domain(f)\cap M\cap\beta(v)}\pos_m(f(m))\in[n^2]_0$, we set $\mathrm{\sf N}[v,f,t]$ to be the minimum between the previous value this entry stored and $\mathrm{\sf N}[u,f,\widehat{t}] + \mathrm{\sf N}[s,f,t^*] - \sum_{w\in\domain(f)\cap W\cap\beta(v)}\pos_w(f(w))$.

\subsection{max-SMT}\label{sec:xpmaxSMT}

We proceed to show that {\sc max-SMT} can be solved in time $n^{\OO(\tw)}$. To this end, let $I=(M,W,\{\pos_m\}|_{m\in M},$ $\{\pos_w\}|_{w\in W})$ be an instance of {\sc max-SMT}, and let $(T,\beta)$ denote a nice tree decomposition of width $\tw$ of the primal graph.  Again, we introduce a table {\sf N}. Here, each entry of {\sf N} is of the form {\sf N}$[v,U,f]$, where $v\in V(T)$, $U\subseteq\beta(v)$, and $f: U\rightarrow A$ is an injective function such that for all $a\in\domain(f)$, $a$ and $f(a)$ are agents of opposite sex. We say that {\sf N} is {\em computed correctly} if for each entry {\sf N}$[v,U,f]$, it holds that {\sf N}$[v,U,f]\in[n]_0\cup\{\nil\}$, and for all $i\in[n]_0$, {\sf N}$[v,U,f]=i$ if and only if there exists a $(v,U,f)$-stable function $g$ such that matches exactly $i$ men, and there does not exist a $(v,U,f)$-stable function $g'$ that matches more than $i$ men.

We process the entries of {\sf N} by traversing the tree $T$ in post-order. The order in which we process entries corresponding to the same node $v\in V(T)$ is arbitrary. In the basis, where $v$ is a leaf, we simply set {\sf N}$[v,U,f]$ to be 0. Next, we compute entries where $v$ is a forget node, an introduce node or a join node.

\smallskip
\myparagraph{Forget Node.} Let $u$ denote the child of $v$ in $T$, and $a$ denote the agent in $\beta(u)\setminus\beta(v)$. Let ${\cal F}$ denote the set containing $f$ as well as all functions $f': U\cup\{a\}\rightarrow A$ that are identical of $f$ when restricted to $U$. Then, {\sf N}$[v,U,f]$ is set to be $i\in[n]_0$ if and only if there exists $f'\in{\cal F}$ such that {\sf N}$[u,\domain(f'),f']=i$, and there does not exist $\widehat{f}\in{\cal F}$ such that {\sf N}$[u,\domain(\widehat{f}),\widehat{f}]>i$.

\smallskip
\myparagraph{Introduce Node.} Let $u$ be the child of $v$ in $T$, and $a$ be the agent in $\beta(v)\setminus\beta(u)$. Let $f'$ be the restriction of $f$ to $\beta(u)\cap U$. Consider the following cases.
\begin{itemize}
\itemsep0em 
\item If there exists $a'\in\beta(v)$ such that the matching of $a$ and $a'$ is inconsistent: {\sf N}$[v,U,f]=\nil$.
\item Else if there exists $a'\in\beta(v)$ such that $a$ and $a'$ prefer being matched to one another over their status with respect to $f$: {\sf N}$[v,U,f]=\nil$.
\item Else if $a$ is a man and $a\in\domain(f)$: {\sf N}$[v,U,f]=${\sf N}$[u,\domain(f'),f']+1$.
\item Otherwise: {\sf N}$[v,U,f]=${\sf N}$[u,\domain(f'),f']$.
\end{itemize}

\myparagraph{Join Node.} Let $u$ and $s$ denote the children of $v$ in $T$. Then, we set {\sf N}$[v,U,f]=\mathrm{\sf N}[u,U,f]+\mathrm{\sf N}[s,U,f]-|U\cap M|$.

\subsection{min-SMT}\label{sec:xpminSMT}

We proceed to show that {\sc min-SMT} can be solved in time $n^{\OO(\tw)}$. To this end, let $I=(M,W,\{\pos_m\}|_{m\in M},$ $\{\pos_w\}|_{w\in W})$ be an instance of {\sc min-SMT}, and let $(T,\beta)$ denote a nice tree decomposition of width $\tw$ of the primal graph. We introduce a table {\sf N} with the same set of entries as in the case of {\sc max-SMT}. We say that {\sf N} is {\em computed correctly} if for each entry {\sf N}$[v,U,f]$, it holds that {\sf N}$[v,U,f]\in[n]_0\cup\{\nil\}$, and for all $i\in[n]_0$, {\sf N}$[v,U,f]=i$ if and only if there exists a $(v,U,f)$-stable function $g$ such that matches exactly $i$ men, and there does not exist a $(v,U,f)$-stable function $g'$ that matches less than $i$ men.

We process the entries of {\sf N} by traversing the tree $T$ in post-order. The order in which we process entries corresponding to the same node $v\in V(T)$ is arbitrary. In the basis, where $v$ is a leaf, we simply set {\sf N}$[v,U,f]$ to be 0. Next, we compute entries where $v$ is a forget node, an introduce node or a join node.

\smallskip
\myparagraph{Forget Node.} Let $u$ denote the child of $v$ in $T$, and $a$ denote the agent in $\beta(u)\setminus\beta(v)$. Let ${\cal F}$ denote the set containing $f$ as well as all functions $f': U\cup\{a\}\rightarrow A$ that are identical of $f$ when restricted to $U$. Then, {\sf N}$[v,U,f]$ is set to be $i\in[n]_0$ if and only if there exists $f'\in{\cal F}$ such that {\sf N}$[u,\domain(f'),f']=i$, and there does not exist $\widehat{f}\in{\cal F}$ such that {\sf N}$[u,\domain(\widehat{f}),\widehat{f}]<i$.

\smallskip
\myparagraph{Introduce Node.} This computation is identical to the one presented for the case of {\sc max-SMT}.

\smallskip
\myparagraph{Join Node.} This computation is again identical to the one presented for the case of {\sc max-SMT}.


\section{Rotation Digraph: FPT Algorithms}\label{sec:fpt}

In this section, we prove Theorem \ref{thm:fptIntro}, based on the approach described in Section~\ref{sec:overview}. First, in Section \ref{sec:genSM}, we solve a problem that we call {\sc Generic SM (GSM)}, which is defined as follows. Given an instance of {\sc SM} as input, the objective of {\sc GSM} is to determine, for all $t_M,t_W\in\mathbb{N}$, whether there exists $\mu\in{\cal S}$ such that both $\sat_M(\mu)=t_M$ and $\sat_W(\mu)=t_W$. Here, the running time would be $\OO(2^{\tw}\cdot n^{10})$, where $\tw$ is the treewidth of $G_\Pi$. Given this algorithm as a black box, it is straightforward to solve both {\sc SESM} and {\sc BSM} in time $\OO(2^{\tw}\cdot n^{10})$. In Sections \ref{sec:fptSESM} and \ref{sec:fptBSM}, we further show that simple modifications to this algorithm imply that both {\sc SESM} and {\sc BSM} can also be solved in time $\OO(2^{\tw}\cdot n^6)$.

\subsection{Generic Stable Marriage}\label{sec:genSM}

The presentation of our algorithm for {\sc GSM}, which we call \alg, is structured as follows. First, we set-up terminology that allows us to handle the obstacles described in Section \ref{sec:overview}, namely the delicate interplay between ``past'' and ``future'' imposed by our partial solutions and their inability to reveal in a direct manner, for any given man, who is the woman with whom it is matched.
Then, we turn to present \alg, which is based on DP. Afterwards, we analyze the running time of this algorithm. Finally, we reach the most technical part of this section, which is the proof of correctness of \alg. We remark that \alg\ is very short and easily implementable---the proof of the correctness is the part that is technically involved.

\subsubsection{Terminology}\label{sec:dpTerminology}

We say that a pair $(v,R')$ of a node $v\in V(T)$ and a set $R'\subseteq \beta(v)$ is a {\em state}. We define notions with respect to states, and when the states are clear from context, we avoid mentioning them explicitly. Each state captures an equivalence class, formally defined as follows. 

\begin{definition}
A stable matching $\mu$ is {\em compatible with} a state $(v,R')$ if $R(\mu)\cap\beta(v)=R'$.
\end{definition}

Accordingly, we let ${\cal S}(v,R')\subseteq{\cal S}$ denote the set of all stable matchings compatible with $(v,R')$. Next, we proceed by identifying, for every man in $M^\star$, a directed path that captures the manner in which he may swap partners along the execution of our algorithm.

\begin{lemma}\label{lem:manPath}
Let $m\in M^\star$. Then, $D_\Pi$ contains a directed path whose vertex-set is a superset of $R(m)$, and such a path can be found in time $\OO(n^2)$.
\end{lemma}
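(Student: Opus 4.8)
The plan is to exploit that, by Proposition~\ref{lem:totalOrder}, $\prec$ restricts to a \emph{total} order on $R(m)$, so $R(m)$ is a chain of the poset $\Pi$. Write its elements as $\rho_1 \prec \rho_2 \prec \cdots \prec \rho_k$ (if $k\le 1$ the statement is trivial, a single vertex or the empty path being the required path). Since $D_\Pi$ is a DAG whose transitive closure is $\Pi$, for each $i$ the relation $\rho_i \prec \rho_{i+1}$ means that $D_\Pi$ contains a directed path $P_i$ from $\rho_i$ to $\rho_{i+1}$. First I would fix such paths $P_1,\ldots,P_{k-1}$ and form their concatenation $P = P_1 P_2 \cdots P_{k-1}$, a directed walk from $\rho_1$ to $\rho_k$ that passes through every $\rho_i$.

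Next I would argue that $P$ is in fact a simple directed path, using acyclicity: if some vertex $u$ occurred twice along $P$, the portion of $P$ between the two occurrences would be a closed directed walk of positive length, forcing a directed cycle in $D_\Pi$ and contradicting that $D_\Pi$ is a DAG. Hence $P$ has no repeated vertex, its vertex set equals $\bigcup_i V(P_i)$, and in particular it contains $\{\rho_1,\ldots,\rho_k\} = R(m)$, as required. This settles the existence half of the lemma immediately from Proposition~\ref{lem:totalOrder} and the definition of $D_\Pi$.

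The main obstacle is the running time: the target is $\OO(n^2)$, whereas running an independent search for each $P_i$ would cost $\OO(n^2)$ per gap and, since $|R(m)|$ can be $\Theta(n)$, give only $\OO(n^3)$. To avoid this I would produce all segments in one sweep. By Proposition~\ref{prop:rotDig} we have $D_\Pi$ at hand with $|V(D_\Pi)| = |R| \le n^2$ vertices and $\OO(n^2)$ edges, so a topological ordering $\sigma$ is computable in $\OO(n^2)$ time; the chain elements then satisfy $\sigma(\rho_1) < \cdots < \sigma(\rho_k)$. The set $R(m)$ together with this order can be read off in $\OO(n^2)$ time as well, either by scanning all rotations (whose total size is $\OO(n^2)$, since $\sum_{m}|R(m)|\le n^2$) or from the successive downgrades of $m$'s partner between $\mu_\emptyset$ and $\mu_W$. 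For each $i$, every vertex on a directed path from $\rho_i$ to $\rho_{i+1}$ has $\sigma$-position in the window $[\sigma(\rho_i),\sigma(\rho_{i+1})]$; hence a left-to-right sweep of this window that propagates reachability-from-$\rho_i$ together with parent pointers both reaches $\rho_{i+1}$ (because $\rho_i \prec \rho_{i+1}$) and recovers $P_i$ by following parents back from $\rho_{i+1}$. Since consecutive windows overlap only at the shared chain vertex, each vertex's out-edges are examined in $\OO(1)$ windows, so the total cost of producing $P_1,\ldots,P_{k-1}$ is $\OO(|V(D_\Pi)| + |E(D_\Pi)|) = \OO(n^2)$.

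The points I expect to require the bulk of the (routine) verification are exactly two: that any linear extension orders $R(m)$ precisely as $\prec$ does (immediate, as $R(m)$ is a chain, so its relative order in $\sigma$ is forced), and the window-partition bookkeeping that certifies the $\OO(n^2)$ bound rather than $\OO(n^3)$. The conceptual content is all contained in the chain structure of $R(m)$ and the DAG property of $D_\Pi$; the remaining effort is purely in the efficient single-pass construction.
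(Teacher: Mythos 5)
Your proof is correct, and its existence half coincides with the paper's: both derive the path from Proposition \ref{lem:totalOrder} (so that $R(m)$ is a chain of $\Pi$) together with the fact that the transitive closure of $D_\Pi$ is isomorphic to $\Pi$, though you spell out the concatenation-plus-acyclicity argument that the paper leaves implicit. Where you genuinely diverge is the algorithmic half. The paper computes a topological order and then asserts that a \emph{single} BFS from the first vertex of $R(m)$ identifies the desired path in $\OO(n^2)$ time; read literally this is too optimistic, since a BFS tree path from $\rho_1$ to $\rho_k$ may bypass the intermediate chain vertices $\rho_2,\ldots,\rho_{k-1}$ (shortest paths have no reason to pass through them), and repairing this by one independent search per consecutive pair costs $\OO(n^3)$ in the worst case, exactly the trap you identify. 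Your window-restricted sweep---confining the search for the segment from $\rho_i$ to $\rho_{i+1}$ to vertices whose topological position lies in $[\sigma(\rho_i),\sigma(\rho_{i+1})]$, which is sound because every vertex on such a segment is reachable from $\rho_i$ and reaches $\rho_{i+1}$, hence lies in that interval---is precisely the bookkeeping that makes the per-gap searches sum to $\OO(|V(D_\Pi)|+|E(D_\Pi)|)=\OO(n^2)$, since consecutive windows share only their endpoint. In short, your proposal buys a rigorous running-time analysis at the cost of some extra machinery, whereas the paper's version is shorter but its central algorithmic claim is under-justified; your construction is the one that withstands scrutiny.
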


\begin{proof}
By Proposition \ref{lem:totalOrder}, since the transitive closure of $D_\Pi$ is isomorphic to $\Pi$, we have that $D_\Pi$ contains a directed path whose vertex-set is a superset of $R(m)$. Moreover, by iterating over $R=V(D_\Pi)$ and identifying the rotations involving $m$, we compute $R(m)$. Since $D_\Pi$ is a DAG with $\OO(n^2)$ vertices and arcs (Proposition \ref{prop:rotDig}), a topological order of $D_\Pi$ can be computed in time $\OO(n^2)$. Then, since $D_\Pi$ is a DAG, by running breadth-first search (BFS) from the first vertex in $R(m)$ (according to the topological order), we identify a directed path in $D_\Pi$ whose vertex-set is a superset of $R(m)$. Overall, the running time is bounded by $\OO(n^2)$.
\end{proof}

For each man $m\in M^\star$, we use the computation in Lemma~\ref{lem:manPath} to obtain some directed path in $D_\Pi$ whose vertex-set is a superset of $R(m)$. We remark that in case $R(m)=\emptyset$, we can simply let $P(m)$ be the empty path. We say that a man $m\in M^\star$ is {\em relevant to $(v,R')$} if $\beta(v)\cap V(P(m))\neq\emptyset$. Given such $m$, we proceed by identifying vertices on $P(m)$ that determine how to handle a given state $(v,R')$. Roughly speaking, these vertices are ``exist and entry points'' located in $\beta(v)$ that help identifying the woman assigned to~$m$ (see Fig.~\ref{fig:FPT1}).

\begin{figure}[t!]\centering
\fbox{\includegraphics[scale=0.7]{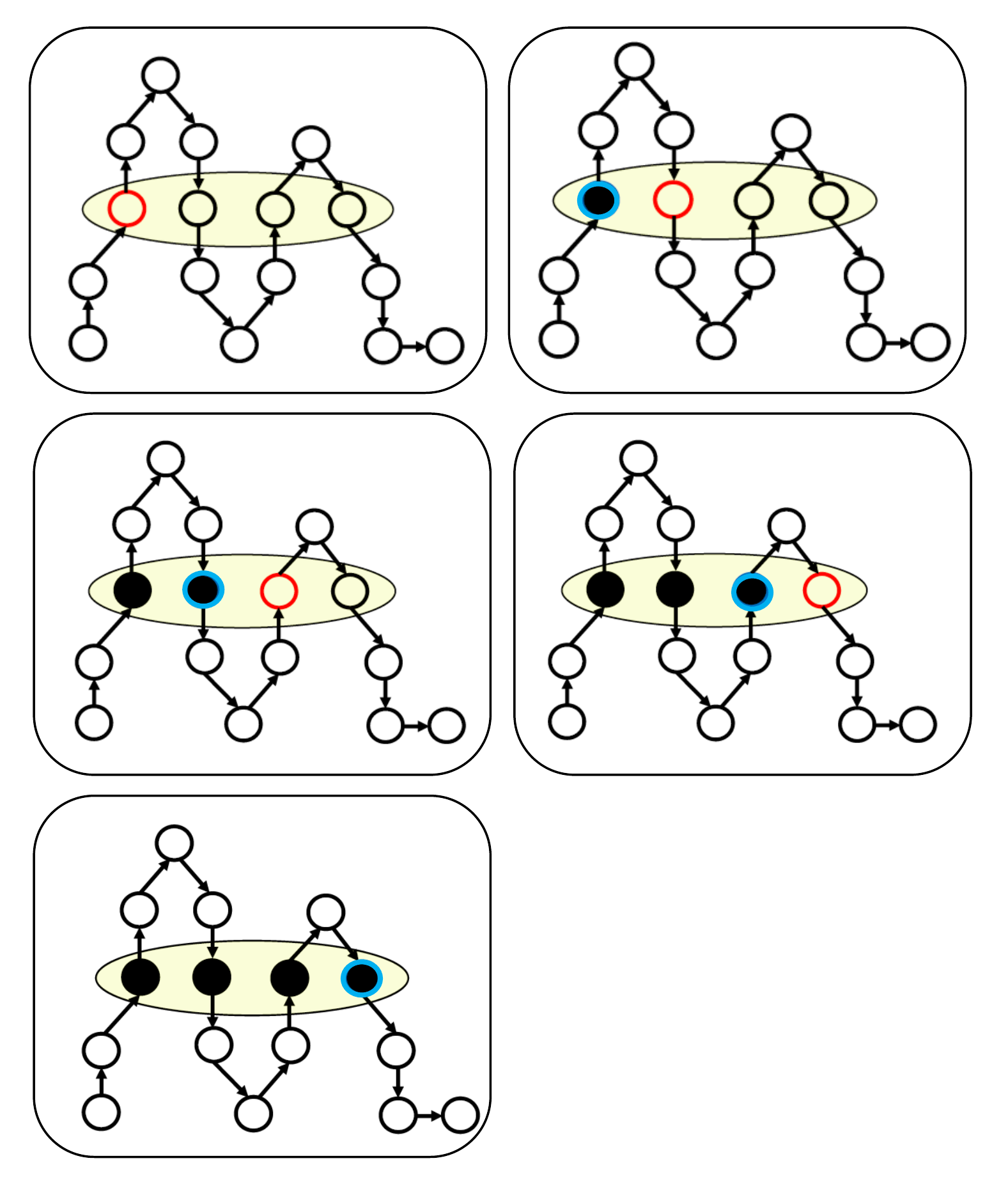}}
\caption{Illustrations of Definition \ref{def:lf}. The drawn vertices form a directed path of one man, $m$. The yellow shape captures the bag $\beta(v)$, and the vertices in $R'$ are drawn in black. The vertices $\ell_{v,R'}(m)$ and $f_{v,R'}(m)$ are drawn in blue and red circles, respectively. In the first illustration, $\ell_{v,R'}(m)=\nil$, an in the last illustration, $f_{v,R'}(m)=\nil$.}\label{fig:FPT1}
\end{figure}

\begin{definition}\label{def:lf}
Let $(v,R')$ be a state, and let $m\in M^\star$ be relevant to $(v,R')$. Then, $\ell_{v,R'}(m)$ denotes the last vertex on $P(m)$ that belongs to $R'$ (if such a vertex does not exist, $\ell_{v,R'}(m)=\nil$). Accordingly, the vertex $f_{v,R'}(m)$ is defined as follows. 
\begin{itemize}
\itemsep0em 
\item If $\ell_{v,R'}(m)\neq\nil$, then $f_{v,R'}(m)$ denotes the first vertex on $P(m)$ that succeeds $\ell_{v,R'}(m)$ and which belongs to $\beta(v)\setminus R'$ (if such a vertex does not exist, $f_{v,R'}(m)=\nil$).
\item Otherwise, $f_{v,R'}(m)$ denotes the first vertex on $P(m)$ that belongs to $\beta(v)$.
\end{itemize}
\end{definition}

Having $\ell_{v,R'}(m)$ and $f_{v,R'}(m)$ at hand, we proceed to identify the subpath of $P(m)$ whose internal vertices are rotations such that when we handle the state $(v,R')$, it is not known which of these rotations should be eliminated (see Fig.~\ref{fig:FPT2}).

\begin{definition}\label{def:lfPath}
Let $(v,R')$ be a state, and let $m\in M^\star$ be relevant to $(v,R')$. Then, the path $P_{v,R'}(m)$ is defined as follows.
\begin{itemize}
\itemsep0em 
\item If $\ell_{v,R'}(m),f_{v,R'}(m)\neq\nil$, then $P_{v,R'}(m)$ denotes the subpath of $P(m)$ that starts at $\ell_{v,R'}(m)$ and ends at $f_{v,R'}(m)$.
\item If $\ell_{v,R'}(m)\neq\nil$ and $f_{v,R'}(m)=\nil$, then $P_{v,R'}(m)$ denotes the subpath of $P(m)$ that starts at $\ell_{v,R'}(m)$ and ends at the last vertex of $P(m)$.
\item If $\ell_{v,R'}(m)=\nil$ and $f_{v,R'}(m)\neq\nil$, then $P_{v,R'}(m)$ denotes the subpath of $P(m)$ that starts at the first vertex of $P(m)$ and ends at $f_{v,R'}(m)$.
\end{itemize}
\end{definition}

\begin{figure}[t!]\centering
\fbox{\includegraphics[scale=0.7]{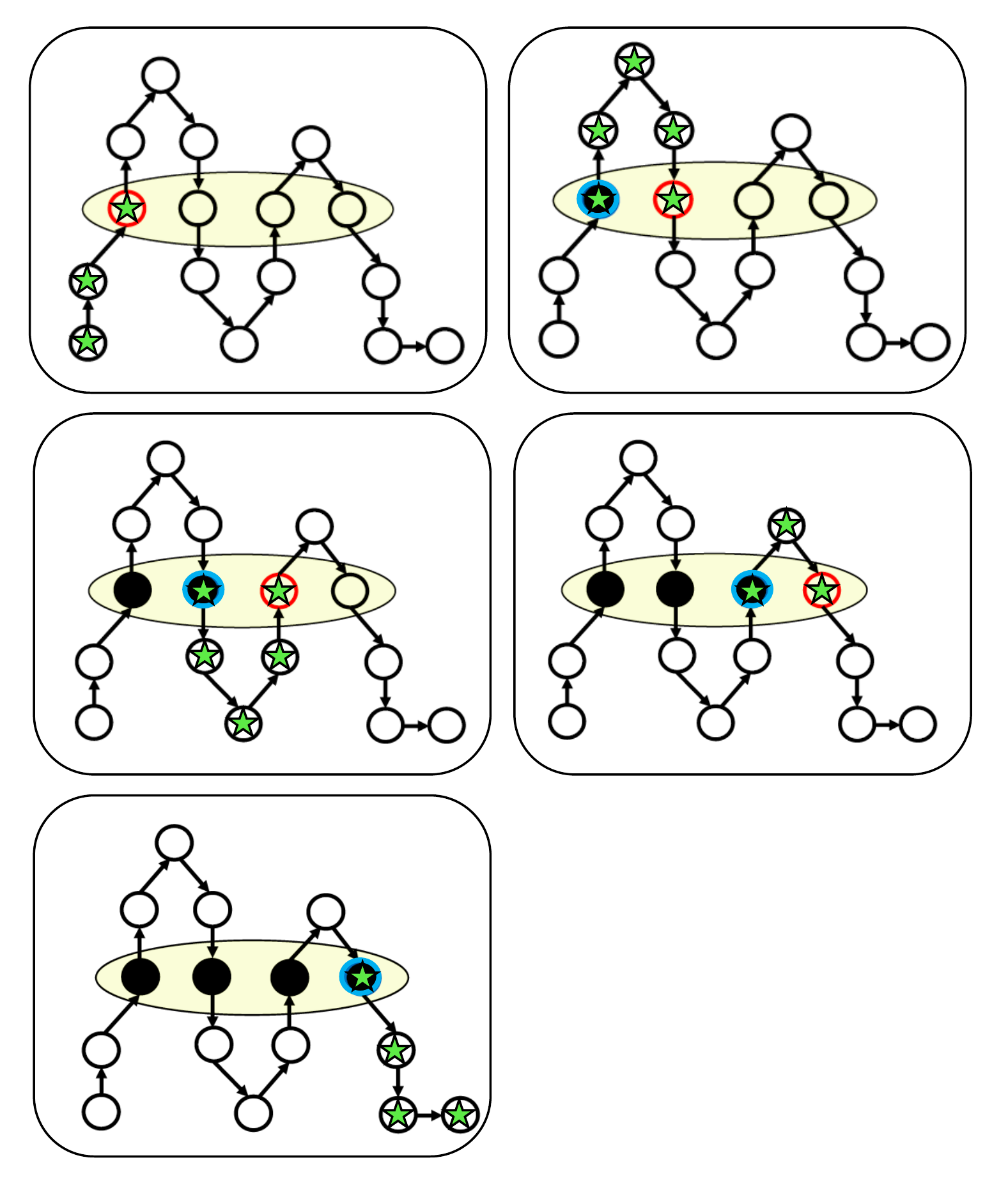}}
\caption{Illustrations of Definition \ref{def:lfPath}. The vertices of $P_{v,R'}(m)$ are marked by green stars.}\label{fig:FPT2}
\end{figure}

Notice that for a relevant man $m\in M^\star$, at least one among $\ell_{v,R'}(m)$ and $f_{v,R'}(m)$ is not $\nil$, and therefore $P_{v,R'}(m)$ is well defined. Next, we assign a type to each man $m\in M^\star$, which roughly indicates whether we have already chosen the ``final'' partner of $m$. To this end, we first recall a well-known proposition \, after which we prove a simple lemma.

\begin{proposition}[Folklore]\label{lem:tree}
Given a graph $G$ with tree decomposition $(T,\beta)$, and a connected subgraph $S$ of $G$, the subgraph of $T$ induced by $T_{V(S)}=\{v\in V(T): V(S)\cap\beta(v)\neq\emptyset\}$ is~a~tree.
\end{proposition}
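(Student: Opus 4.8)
The plan is to invoke the three defining axioms of a tree decomposition: \textbf{(i)} every vertex of $G$ lies in some bag; \textbf{(ii)} for every edge $\{x,y\}\in E(G)$ there is a node $v$ with $\{x,y\}\subseteq\beta(v)$; and \textbf{(iii)} for every vertex $x\in V(G)$, the set $T_x=\{v\in V(T): x\in\beta(v)\}$ induces a connected subtree of $T$. Since $T_{V(S)}=\bigcup_{x\in V(S)}T_x$ and $T$ itself is a tree, the subgraph of $T$ induced by $T_{V(S)}$ is automatically acyclic, because any subgraph of a tree is a forest. Hence it remains only to prove that this induced subgraph is \emph{connected}, which together with acyclicity will establish that it is a tree.

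To prove connectivity, I would take two arbitrary nodes $u,w\in T_{V(S)}$ and exhibit a path between them lying entirely inside the induced subgraph. By definition there are vertices $x,y\in V(S)$ with $x\in\beta(u)$ and $y\in\beta(w)$. Because $S$ is connected, there is a path $x=z_0,z_1,\ldots,z_k=y$ in $S$; in particular, each $\{z_i,z_{i+1}\}$ is an edge of $G$. By axiom \textbf{(ii)}, for each $i$ there is a node $v_i$ with $\{z_i,z_{i+1}\}\subseteq\beta(v_i)$, so $v_i\in T_{z_i}\cap T_{z_{i+1}}$; thus the consecutive subtrees $T_{z_i}$ and $T_{z_{i+1}}$ share at least one node.

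The key step is then a short induction showing that a chain of connected subtrees with consecutive overlaps has connected union: $T_{z_0}$ is connected by axiom \textbf{(iii)}, and if $\bigcup_{i\le j}T_{z_i}$ is connected, then adjoining $T_{z_{j+1}}$, which is connected and meets $T_{z_j}$ in the node $v_j$, keeps the union connected. Consequently $\bigcup_{i=0}^{k}T_{z_i}$ is a connected subgraph of $T$, and since every $z_i\in V(S)$ it is contained in $T_{V(S)}$. As $u\in T_{z_0}$ and $w\in T_{z_k}$ both lie in this connected union, there is a path connecting $u$ and $w$ within it, hence within the subgraph induced by $T_{V(S)}$. Since $u,w$ were arbitrary, the induced subgraph is connected.

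I expect no serious obstacle here, as this is a classical fact; the only point requiring mild care is to carry out the argument at the level of the \emph{induced} subgraph of $T$, ensuring the constructed path uses only nodes of $T_{V(S)}$. This is guaranteed because each $T_{z_i}$ is itself an induced connected subtree and consecutive ones overlap, so concatenating paths through the shared nodes $v_i$ never leaves $\bigcup_{i}T_{z_i}\subseteq T_{V(S)}$.
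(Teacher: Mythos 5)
Your proof is correct and complete: acyclicity is immediate since any induced subgraph of a tree is a forest, and your chain-of-overlapping-subtrees induction along a path in $S$ correctly establishes connectivity, with the final observation ensuring the path stays inside the induced subgraph on $T_{V(S)}$. The paper itself cites this proposition as folklore and gives no proof, and your argument is precisely the standard one that the citation implicitly refers to, so there is nothing to reconcile between the two.
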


\begin{lemma}\label{lem:fullClassify}
Let $(v,R')$ be a state, and let $m\in M^\star$ such that $V(P(m))\setminus\gamma(v),~ V(P(m))\cap \gamma(v)\neq\emptyset$, (recall that $\gamma(v)$ is the union of all the bags of $v$ and the descendants of $v$ in $T$). Then, $m$ is relevant.
\end{lemma}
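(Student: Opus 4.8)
The plan is to exploit the fact that the vertex set of $P(m)$ induces a connected subgraph of $G_\Pi$, together with the standard ``subtree'' property of tree decompositions recorded in Proposition \ref{lem:tree}. Since $P(m)$ is a directed path in $D_\Pi$ (and the hypotheses force it to be nonempty), its vertex set $V(P(m))$ induces a connected subgraph $S$ of $G_\Pi$. Hence, by Proposition \ref{lem:tree}, the set of nodes $T_{V(P(m))}=\{u\in V(T):\beta(u)\cap V(P(m))\neq\emptyset\}$ induces a connected subtree of $T$. The goal is to show that $v$ itself lies in this subtree, because $v\in T_{V(P(m))}$ is exactly the assertion $\beta(v)\cap V(P(m))\neq\emptyset$, i.e.\ that $m$ is \emph{relevant} to $(v,R')$.

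Let $T_v$ denote the subtree of $T$ consisting of $v$ together with all of its descendants, so that $\gamma(v)=\bigcup_{u\in V(T_v)}\beta(u)$. I would then locate one node of $T_{V(P(m))}$ inside $T_v$ and one outside it. On the one hand, since $V(P(m))\cap\gamma(v)\neq\emptyset$, some vertex of $P(m)$ lies in a bag $\beta(u)$ with $u\in V(T_v)$; hence $u\in T_{V(P(m))}\cap V(T_v)$. On the other hand, since $V(P(m))\setminus\gamma(v)\neq\emptyset$, some vertex $x$ of $P(m)$ lies in no bag of $T_v$; as $x$ must nevertheless appear in some bag of the decomposition, there is a node $u'\notin V(T_v)$ with $x\in\beta(u')$, whence $u'\in T_{V(P(m))}\setminus V(T_v)$.

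Finally, I would invoke the separating role of $v$: deleting $v$ from $T$ disconnects the nodes of $V(T_v)\setminus\{v\}$ from the remainder of $T$, so the unique path in $T$ between $u\in V(T_v)$ and $u'\notin V(T_v)$ must pass through $v$. Because $T_{V(P(m))}$ is connected and contains both $u$ and $u'$, this unique $u$--$u'$ path lies entirely within $T_{V(P(m))}$; in particular $v\in T_{V(P(m))}$, which gives $\beta(v)\cap V(P(m))\neq\emptyset$ and completes the argument.

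I do not anticipate a serious obstacle here: the only point requiring care is the bookkeeping distinction between ``$v$ together with its descendants'' (whose bags constitute $\gamma(v)$) and the role of $v$ as the single cut node separating that subtree from the rest of $T$. The essence of the matter is simply that $P(m)$ is connected and that connected subgraphs occupy subtrees of $T$, so a path meeting both sides of the cut induced by $v$ is forced to meet $\beta(v)$ as well.
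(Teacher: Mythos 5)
Your proof is correct and follows essentially the same route as the paper's: both invoke Proposition \ref{lem:tree} on the connected underlying graph of $P(m)$ to get that $T_{V(P(m))}$ induces a subtree of $T$, and then conclude that $v$ must belong to it. The paper leaves the final separator argument (that any path in $T$ from a node of $T_v$ to a node outside $T_v$ passes through $v$) implicit, whereas you spell it out; the content is the same.
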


\begin{proof}
Since the underlying graph of $P(m)$ is connected, Proposition \ref{lem:tree} implies that $T_{V(P(m))}=\{v\in V(T): V(P(m))\cap\beta(v)\neq\emptyset\}$ induces a tree. Thus, since $V(P(m))\setminus\gamma(v)$ and $V(P(m))\cap \gamma(v)\neq\emptyset$, it holds that $m$ is relevant.
\end{proof}

By Lemma \ref{lem:fullClassify}, the next definition indeed assigns a type to each man in $M^\star$.

\begin{definition}\label{def:manType}
Given a state $(v,R')$ and a man $m\in M^\star$, the {\em type of $m$ with respect to $(v,R')$} is defined as follows.
\begin{enumerate}
\itemsep0em 
\item\label{item:manType1} If either {\bf (i)} $V(P(m))\subseteq \gamma(v)$, or {\bf (ii)} $m$ is relevant and $V(P_{v,R'}(m))\subseteq\gamma(v)$, then $m$ is {\em settled with respect to $(v,R')$}.
\item\label{item:manType2} Otherwise, either {\bf (i)} $V(P(m))\cap \gamma(v)=\emptyset$, or {\bf (ii)} $m$ is relevant and $V(P_{v,R'}(m))\setminus\gamma(v)\neq\emptyset$. Then, $m$ is {\em unsettled with respect to $(v,R')$}.

\end{enumerate}
\end{definition}

For example, let us consider Fig.~\ref{fig:FPT2}. The depicted man is relevant, and supposing that all vertices drawn below the bag indeed belong to bags assigned to descendants of $v$, the man is settled only in the three illustrations on the left of the figure. We proceed to verify that Definition \ref{def:manType} is consistent in the sense that each man in $M^\star$ has a unique type.

\begin{lemma}\label{lem:consistClassify}
Given a state $(v,R')$ and a man $m\in M^\star$, the type of $m$ with respect to $(v, R')$ cannot be both settled and unsettled.
\end{lemma}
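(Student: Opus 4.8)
The plan is to treat the two clauses that define \emph{settled} (conditions 1(i) and 1(ii)) and the two clauses that define \emph{unsettled} (conditions 2(i) and 2(ii)) as four predicates, and to show that no settled-predicate can hold simultaneously with an unsettled-predicate. Before the case analysis I would record two elementary structural facts that do all the work. First, by Definition~\ref{def:lfPath} the path $P_{v,R'}(m)$ is a subpath of $P(m)$, so $V(P_{v,R'}(m))\subseteq V(P(m))$. Second, whenever $m$ is relevant to $(v,R')$, at least one of $\ell_{v,R'}(m)$ and $f_{v,R'}(m)$ differs from $\nil$, and every such non-$\nil$ endpoint lies in $\beta(v)$ (indeed $\ell_{v,R'}(m)\in R'\subseteq\beta(v)$, and $f_{v,R'}(m)\in\beta(v)$ by Definition~\ref{def:lf}); since this endpoint belongs to $P_{v,R'}(m)$ and $\beta(v)\subseteq\gamma(v)$, we obtain $V(P_{v,R'}(m))\cap\gamma(v)\neq\emptyset$.

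With these facts in hand I would carry out the four cases, each collapsing to a direct contradiction. If $m$ is settled by 1(i), i.e. $V(P(m))\subseteq\gamma(v)$, then $V(P_{v,R'}(m))\subseteq V(P(m))\subseteq\gamma(v)$, so $V(P_{v,R'}(m))\setminus\gamma(v)=\emptyset$, which is incompatible with clause 2(ii); it is also incompatible with 2(i) unless $V(P(m))=\emptyset$ (treated below). If $m$ is settled by 1(ii), i.e. $m$ is relevant and $V(P_{v,R'}(m))\subseteq\gamma(v)$, then again $V(P_{v,R'}(m))\setminus\gamma(v)=\emptyset$ rules out 2(ii); and the second structural fact yields a vertex of $V(P_{v,R'}(m))\subseteq V(P(m))$ lying in $\gamma(v)$, so $V(P(m))\cap\gamma(v)\neq\emptyset$, contradicting 2(i). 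Hence every combination involving clause 1(ii) or clause 2(ii) is impossible.

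The only remaining coincidence is 1(i) together with 2(i), namely $V(P(m))\subseteq\gamma(v)$ and $V(P(m))\cap\gamma(v)=\emptyset$, which force $V(P(m))=\emptyset$. I would dispose of this corner case by observing that when $V(P(m))=\emptyset$ the man $m$ is not relevant, so clauses 1(ii) and 2(ii) are vacuous while clause 1(i) holds; thus Definition~\ref{def:manType} assigns $m$ the type settled through item~1, and item~2 is never reached. Consequently $m$ receives a single type in this case as well, completing the verification.

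I expect the main (and essentially only) subtlety to be precisely this empty-path corner: it is the one situation in which a settled-condition and an unsettled-condition can both be literally true, so consistency there rests on item~1 being checked before item~2 rather than on the conditions being mutually exclusive. Everywhere else the incompatibility is forced by the containment $V(P_{v,R'}(m))\subseteq V(P(m))$ together with the $\beta(v)$-endpoint of $P_{v,R'}(m)$ guaranteed by relevance.
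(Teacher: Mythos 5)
Your proof is correct and follows essentially the same route as the paper's: a case analysis over the four condition pairs, using $V(P_{v,R'}(m))\subseteq V(P(m))$ to rule out 1(i) with 2(ii) and the fact that relevance forces $V(P(m))\cap\gamma(v)\supseteq V(P_{v,R'}(m))\cap\beta(v)\neq\emptyset$ to rule out 1(ii) with 2(i). Your explicit handling of the empty-path corner for 1(i) versus 2(i) is in fact slightly more careful than the paper, which simply asserts that pair is ``clearly'' impossible.
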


\begin{proof}
Clearly, Conditions \ref{item:manType1}(i) and \ref{item:manType2}(i) cannot hold simultaneously, and Conditions \ref{item:manType1}(ii) and \ref{item:manType2}(ii) cannot hold simultaneously. Since $V(P_{v,R'}(m))\subseteq V(P(m))$, Conditions \ref{item:manType1}(i) and \ref{item:manType2}(ii) cannot hold simultaneously. If Condition \ref{item:manType2}(i) holds, then $m$ is not relevant; thus, Conditions \ref{item:manType2}(i) and \ref{item:manType1}(ii) cannot hold simultaneously.
\end{proof}

To prove that our algorithm is correct, it will be useful to view Condition \ref{item:manType2} in Definition \ref{def:manType} through the prism of the following result.

\begin{lemma}\label{lem:helpClassify}
Let $(v,R')$ be a state, and let $m\in M^\star$ be relevant. If it does not hold that $V(P_{v,R'}(m))\subseteq\gamma(v)$, then it holds that $\emptyset\neq V(P_{v,R'}(m))\cap\gamma(v)\subseteq\{\ell_{v,R'}(m),f_{v,R'}(m)\}$.
\end{lemma}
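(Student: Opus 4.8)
The plan is to establish, under the hypothesis $V(P_{v,R'}(m))\not\subseteq\gamma(v)$, both the non-emptiness of $V(P_{v,R'}(m))\cap\gamma(v)$ and the containment $V(P_{v,R'}(m))\cap\gamma(v)\subseteq\{\ell_{v,R'}(m),f_{v,R'}(m)\}$. The backbone of the argument is a purely combinatorial \emph{boundary claim} that holds regardless of the hypothesis:
\[
V(P_{v,R'}(m))\cap\beta(v)\subseteq\{\ell_{v,R'}(m),f_{v,R'}(m)\}.
\]
First I would prove this by reading off the three cases of Definition~\ref{def:lfPath}. When $\ell_{v,R'}(m),f_{v,R'}(m)\neq\nil$, every vertex of $P_{v,R'}(m)$ other than its two endpoints lies strictly after $\ell_{v,R'}(m)$ on $P(m)$, hence is not in $R'$ by the maximality defining $\ell_{v,R'}(m)$, and strictly before $f_{v,R'}(m)$, hence is not in $\beta(v)\setminus R'$ by the minimality defining $f_{v,R'}(m)$; since $R'\subseteq\beta(v)$, such a vertex avoids $\beta(v)$ entirely. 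When $f_{v,R'}(m)=\nil$, no vertex succeeding $\ell_{v,R'}(m)$ lies in $\beta(v)$, and symmetrically when $\ell_{v,R'}(m)=\nil$, no vertex preceding $f_{v,R'}(m)$ lies in $\beta(v)$. In each case the only vertices of $P_{v,R'}(m)$ meeting $\beta(v)$ are the defined endpoints among $\ell_{v,R'}(m),f_{v,R'}(m)$.

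For non-emptiness I would invoke the remark following Definition~\ref{def:lfPath}, that for a relevant man at least one of $\ell_{v,R'}(m),f_{v,R'}(m)$ is not $\nil$; that vertex is an endpoint of $P_{v,R'}(m)$ lying in $\beta(v)\subseteq\gamma(v)$, giving $V(P_{v,R'}(m))\cap\gamma(v)\neq\emptyset$. For the containment, by the boundary claim it suffices to show that every vertex of $P_{v,R'}(m)$ lying in $\gamma(v)$ already lies in $\beta(v)$. Suppose not, so some $u\in V(P_{v,R'}(m))\cap(\gamma(v)\setminus\beta(v))$. By the hypothesis there is also some $w\in V(P_{v,R'}(m))\setminus\gamma(v)$. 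Viewing $P_{v,R'}(m)$ as a connected path in the undirected graph $G_\Pi$ for which $(T,\beta)$ is a tree decomposition, the subpath $Q$ of $P_{v,R'}(m)$ joining $u$ and $w$ runs from $\gamma(v)\setminus\beta(v)$ to $V(G_\Pi)\setminus\gamma(v)$. The standard separator property then forces $Q$ to contain a vertex $z\in\beta(v)$: every edge of $G_\Pi$ sits in some bag, and a bag containing a vertex $x\in\gamma(v)\setminus\beta(v)$ must lie in the subtree of $T$ rooted at $v$ (the bags containing $x$ form a connected subtree meeting this rooted subtree but avoiding $v$), so it cannot also contain a vertex outside $\gamma(v)$; hence there is no edge between $\gamma(v)\setminus\beta(v)$ and $V(G_\Pi)\setminus\gamma(v)$, and $Q$ must pass through $\beta(v)$. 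Since $u\notin\beta(v)$ and $w\notin\gamma(v)\supseteq\beta(v)$, this $z$ is an internal vertex of $Q$, hence lies strictly between two vertices of $P_{v,R'}(m)$. But the boundary claim gives $z\in\{\ell_{v,R'}(m),f_{v,R'}(m)\}$, and in each case of Definition~\ref{def:lfPath} these vertices are precisely the endpoints of $P_{v,R'}(m)$; an endpoint cannot lie strictly inside the path, a contradiction.

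I expect the genuinely load-bearing observation — and the one requiring the most care to phrase uniformly across the three cases of Definition~\ref{def:lfPath} — to be the coincidence exploited in the last step: the vertices $\ell_{v,R'}(m),f_{v,R'}(m)$ are simultaneously the \emph{only} meeting points of $P_{v,R'}(m)$ with $\beta(v)$ (the boundary claim) and the \emph{endpoints} of $P_{v,R'}(m)$ (by construction). It is exactly this double role that converts ``$Q$ crosses the separator $\beta(v)$ at an internal vertex'' into an impossibility. The remaining ingredients are routine: the separator property of tree decompositions, which I would justify in the single line above rather than merely cite so the proof is self-contained, and the trivial reduction from the desired containment to the implication ``membership in $\gamma(v)$ forces membership in $\beta(v)$.''
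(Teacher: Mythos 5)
Your proof is correct and follows essentially the same route as the paper's: both rest on the observation that the interior of $P_{v,R'}(m)$ (everything other than $\ell_{v,R'}(m)$ and $f_{v,R'}(m)$) avoids $\beta(v)$, combined with the fact that a connected subgraph avoiding $\beta(v)$ cannot meet both $\gamma(v)$ and its complement. The only cosmetic differences are that you argue by contradiction via a separator-crossing vertex and justify the separator property inline, whereas the paper applies the connected-subtree dichotomy (Proposition~\ref{lem:tree}) directly to the interior subpath.
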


\begin{proof}
Assume that it does not hold that $V(P_{v,R'}(m))\subseteq\gamma(v)$. Let $\widehat{P}$ denote the subpath of $P_{v,R'}(m)$ that excludes $\ell_{v,R'}(m)$ and $f_{v,R'}(m)$. By Definition~\ref{def:lf}, $V(\widehat{P})\cap\beta(v)=\emptyset$. Since the underlying graph of $\widehat{P}$ is connected, Proposition \ref{lem:tree} implies that $T_{V(\widehat{P})}=\{v\in V(T): V(\widehat{P})\cap\beta(v)\neq\emptyset\}$ induces a tree. Thus, either $V(\widehat{P})\subseteq \gamma(v)$ or $\emptyset\neq V(\widehat{P})\subseteq R\setminus\gamma(v)$. Since it does not hold that $V(P_{v,R'}(m))\subseteq\gamma(v)$, we conclude that $\emptyset\neq V(\widehat{P})\subseteq R\setminus\gamma(v)$, which implies that $\emptyset\neq V(P_{v,R'}(m))\cap\gamma(v)\subseteq\{\ell_{v,R'}(m),f_{v,R'}(m)\}$.
\end{proof}

%

Having identified the type of a man, we are able to extract its partner with respect to a stable matching, given the information available when handling some specific state.

\begin{definition}\label{def:manPartner}
Let $(v,R')$ be a state, $\mu\in{\cal S}(v,R')$, and $m\in M^\star$. Then, the {\em partner of $m$ with respect to $(v,R',\mu)$}, denoted by $\partner_{v,R',\mu}(m)$, is defined as follows.
\begin{enumerate}
\itemsep0em 
\item If $m$ is settled, then $\partner_{v,R',\mu}(m)=\mu(m)$. (By Proposition \ref{lem:matchSame}, $\mu(m)$ is well defined.)
\item Otherwise, $m$ is unsettled. Then, $\partner_{v,R',\mu}(m)=\mu_{R'}(m)$.
\end{enumerate}
\end{definition}

Note that the partner of a man $m$ is not determined solely by $\mu$, but it depends on $(v,R')$. Intuitively, as our partial solutions will not be ``completely known'', it makes sense that some information should be extracted from $(v,R')$. We remark that the manner in which we extract this information, and in particular, the precise definition of $\partner_{v,R',\mu}(m)$, is a crucial element in the proof of  correctness of our algorithm.

We are now ready to define how to ``measure the quality of our partial solutions'' (since our partial solutions are not completely known, we cannot use Definition \ref{def:sexEqMeasure}). Observe that the score $\beta$ does not necessarily reflect an assignment of at most {\em one} man to every woman.

\begin{definition}\label{def:deltaState}
Given a state $(v,R')$, $\mu\in{\cal S}(v,R')$ and $m\in M^\star$, define
\[\displaystyle{\alpha_{v,R'}(\mu,m) = \pos_m(\partner_{v,R',\mu}(m))},\ \mathrm{and}\ \displaystyle{\lambda_{v,R'}(\mu,m) = \pos_{\partner_{v,R',\mu}(m)}(m)}.\]
Accordingly, given a state $(v,R')$ and $\mu\in{\cal S}(v,R')$, define
\[\displaystyle{\alpha_{v,R'}(\mu) = \sum_{m\in M^\star}\alpha_{v,R'}(\mu,m)},\ \mathrm{and}\ \displaystyle{\lambda_{v,R'}(\mu) = \sum_{m\in M^\star}\lambda_{v,R'}(\mu,m)}.\]
\end{definition}

Definition \ref{def:deltaState} directly implies the following observation.

\begin{observation}\label{obs:focusPartner}
Let $(v,R')$ and $(u,\widehat{R})$ be states, $\mu\in{\cal S}(v,R')$ and $\widehat{\mu}\in{\cal S}(u,\widehat{R})$. If for all $m\in M^\star$, it holds that $\partner_{v,R',\mu}(m)=\partner_{u,\widehat{R},\widehat{\mu}}(m)$, then $\alpha_{v,R'}(\mu)=\alpha_{u,\widehat{R}}(\widehat{\mu})$ and $\lambda_{v,R'}(\mu)=\lambda_{u,\widehat{R}}(\widehat{\mu})$.
\end{observation}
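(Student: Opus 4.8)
The plan is to observe that this is a purely definitional consequence, so the proof amounts to unfolding Definition~\ref{def:deltaState} term by term and invoking the hypothesis. The key point to isolate first is that the position functions $\pos_m$ (for $m\in M$) and $\pos_w$ (for $w\in W$) are fixed data of the input instance: they do not depend on the state $(v,R')$ nor on the particular stable matching under consideration, but only on the agent whose preference list is being evaluated and on the argument supplied to it. Consequently, each summand $\alpha_{v,R'}(\mu,m)$ and $\lambda_{v,R'}(\mu,m)$ is a function of $m$ and of the single woman $\partner_{v,R',\mu}(m)$ alone.

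Next I would fix an arbitrary man $m\in M^\star$ and compare the two states summand-wise. By Definition~\ref{def:deltaState}, $\alpha_{v,R'}(\mu,m)=\pos_m(\partner_{v,R',\mu}(m))$ and $\alpha_{u,\widehat{R}}(\widehat{\mu},m)=\pos_m(\partner_{u,\widehat{R},\widehat{\mu}}(m))$. The hypothesis gives $\partner_{v,R',\mu}(m)=\partner_{u,\widehat{R},\widehat{\mu}}(m)$, so the two expressions apply the identical fixed function $\pos_m$ to the identical woman; hence $\alpha_{v,R'}(\mu,m)=\alpha_{u,\widehat{R}}(\widehat{\mu},m)$. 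The same argument handles $\lambda$, with the only extra bookkeeping being that here the woman appears in the \emph{subscript} of the position function: $\lambda_{v,R'}(\mu,m)=\pos_{\partner_{v,R',\mu}(m)}(m)$ and $\lambda_{u,\widehat{R}}(\widehat{\mu},m)=\pos_{\partner_{u,\widehat{R},\widehat{\mu}}(m)}(m)$, and since both the woman (the subscript) and its argument $m$ coincide under the hypothesis, these two values are equal as well.

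Finally I would sum the per-man equalities over all $m\in M^\star$, which by the definitions $\alpha_{v,R'}(\mu)=\sum_{m\in M^\star}\alpha_{v,R'}(\mu,m)$ and $\lambda_{v,R'}(\mu)=\sum_{m\in M^\star}\lambda_{v,R'}(\mu,m)$ yields $\alpha_{v,R'}(\mu)=\alpha_{u,\widehat{R}}(\widehat{\mu})$ and $\lambda_{v,R'}(\mu)=\lambda_{u,\widehat{R}}(\widehat{\mu})$, as claimed. There is no genuine obstacle here: the statement is flagged as a direct implication of Definition~\ref{def:deltaState}, and the only subtlety worth spelling out is the one noted above, namely that for $\lambda$ one must check that both the subscript woman and the argument agree across the two states (both do, immediately from the equality of partners). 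In particular, no property of the tree decomposition, of stability, or of the notions of settled/unsettled men is needed—the observation holds at the level of the raw definitions.
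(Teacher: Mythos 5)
Your proof is correct and matches the paper's intent exactly: the paper gives no explicit argument, simply noting that the observation "directly" follows from Definition~\ref{def:deltaState}, and your term-by-term unfolding (including the remark that for $\lambda$ both the subscript woman and the argument $m$ must agree) is precisely the spelled-out version of that one-line justification.
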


Moreover, by Observation \ref{obs:focusPartner} and Definitions \ref{def:deltaState} and \ref{def:manPartner}, we have the following observation.

\begin{observation}\label{obs:focusSettle}
Let $(v,R')$ and $(u,\widehat{R})$ be states, and $\mu\in{\cal S}(v,R')\cap{\cal S}(u,\widehat{R})$ such that each $m\in M^\star$ has the same type with respect to $(v,R')$ and $(u,\widehat{R})$, and if $m$ is unsettled then $\mu_{R'}(m)=\mu_{\widehat{R}}(m)$. Then, $\alpha_{v,R'}(\mu)=\alpha_{u,\widehat{R}}(\widehat{\mu})$ and $\lambda_{v,R'}(\mu)=\lambda_{u,\widehat{R}}(\widehat{\mu})$.
\end{observation}

Given a state $(v,R')$ and $t\in\{-n^2,-n^2+1,\ldots,n^2-1,n^2\}$, denote ${\cal S}(v,R',t_M,t_W)=\{\mu\in{\cal S}(v,R'): \alpha_{v,R'}(\mu)=t_M, \lambda_{v,R'}(\mu)=t_W\}$.

\subsubsection{The DP Table}\label{sec:dpTable}

We let {\sf N} denote our DP table. Each entry of the table {\sf N} is of the form {\sf N}$[v,R',t_M,t_W]$, where $v\in V(T)$, $R'\subseteq\beta(v)$ and $t_M,T_W\in [n^2]$. The following definition addresses the purpose of these entries.

\begin{definition}\label{def:computeCorrectly}
We say that {\sf N} is {\em computed correctly} if for each entry {\sf N}$[v,R',t_M,t_W]$, it holds that 
{\sf N}$[v,R',t_M,t_W]\in\{0,1\}$, and {\sf N}$[v,R',t_M,t_W]=1$ if and only if ${\cal S}(v,R',t_M,t_W)\neq\emptyset$.
\end{definition}

Next, we prove why it is sufficient to compute {\sf N} correctly.

\begin{lemma}\label{lem:boundTarget}
For any $\mu\in{\cal S}$, it holds that $\sat_M(\mu),\sat_W(\mu)\in[n^2]$.
\end{lemma}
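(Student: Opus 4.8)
The plan is to derive the bound purely from the fact that every summand appearing in $\sat_M(\mu)$ and $\sat_W(\mu)$ is the position of an agent in a preference list, and is therefore a positive integer whose magnitude is controlled by the length of that list. First I would record that, by the definition of a stable matching, $\mu$ is an injective map from $\domain(\mu)\subseteq M$ into $W$, so the number of matched pairs is exactly $|\domain(\mu)|\le\min\{|M|,|W|\}$.

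For the upper bound, I would use that for every man $m$ the function $\pos_m$ maps into $[\,|\domain(\pos_m)|\,]$ with $\domain(\pos_m)\subseteq W$, whence $1\le\pos_m(\mu(m))\le|W|$ for each matched man $m$; symmetrically $1\le\pos_w(\mu^{-1}(w))\le|M|$ for each matched woman $w$. Summing the first inequality over the at most $|M|$ matched pairs gives $\sat_M(\mu)\le|M|\cdot|W|$, and summing the second gives $\sat_W(\mu)\le|M|\cdot|W|$. Since $|M|+|W|=n$, the arithmetic--geometric mean inequality yields $|M|\cdot|W|\le n^2/4\le n^2$, so both quantities are at most $n^2$.

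For the lower bound, I would note that each summand is at least $1$, so both sums are positive as soon as $\mu$ matches at least one pair; this holds whenever the instance possesses a mutually acceptable pair (in the degenerate instance with no acceptable pair at all, the unique stable matching is empty and the statement is vacuous). Combining the two directions gives $\sat_M(\mu),\sat_W(\mu)\in\{1,\dots,n^2\}=[n^2]$, as claimed.

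I do not anticipate a genuine obstacle: this is a routine range estimate, and the only point requiring a moment's care is to invoke the correct codomains of the position functions. Because preference lists range over the \emph{opposite} sex, $\pos_m$ is bounded by $|W|$ and $\pos_w$ by $|M|$, which produces the product bound $|M|\cdot|W|$ rather than a cruder estimate; it is precisely this product form, together with $|M|+|W|=n$, that keeps the target satisfaction values inside $[n^2]$ and thereby justifies restricting the indices $t_M,t_W$ of the table $\mathsf{N}$ to this range.
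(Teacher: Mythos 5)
Your proof is correct and takes essentially the same route as the paper, which simply bounds each summand $\pos_m(\mu(m))$ and $\pos_w(\mu^{-1}(w))$ by $n$ and the number of summands by $n$ to get $\sat_M(\mu),\sat_W(\mu)\le n^2$; your sharper product bound $|M|\cdot|W|\le n^2/4$ is a harmless refinement of the same idea. Both your argument and the paper's silently assume the matching is non-empty to get the lower bound of $1$, so there is no gap relative to the paper's own standard of rigor.
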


\begin{proof}
For all $m\in M^\star$ and $w\in\domain(\pos_m)$, it holds that $1\leq\pos_m(w)\leq n$, and for all $w\in W^\star$ and $m\in\domain(\pos_w)$, it holds that $1\leq\pos_w(m)\leq n$. Thus, by the definitions of $\sat_M$ and $\sat_W$, we have that $\sat_M(\mu),\sat_W(\mu)\in[n^2]$.
\end{proof}

\begin{lemma}\label{lem:Delta}
Suppose that {\sf N} is computed correctly. For all $t_M,t_W\in[n^2]$, there exists $\mu\in{\cal S}$ such that $\sat_M(\mu)=t_M$ and $\sat_W(\mu)=t_W$ if and only if {\sf N}$[\rootT(T),\emptyset,t_M,t_W]=1$.
\end{lemma}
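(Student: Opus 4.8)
The plan is to reduce everything at the root node to the ordinary satisfaction values $\sat_M$ and $\sat_W$, so that the claimed equivalence becomes an immediate instance of Definition~\ref{def:computeCorrectly}. The single fact that drives the whole argument is that in a nice tree decomposition the root bag is empty, so $\beta(\rootT(T))=\emptyset$. Consequently the only admissible $R'$ at the root is $R'=\emptyset$, and since $R(\mu)\cap\beta(\rootT(T))=R(\mu)\cap\emptyset=\emptyset$ for every $\mu\in{\cal S}$, we have ${\cal S}(\rootT(T),\emptyset)={\cal S}$. First I would record this and note that, because $\rootT(T)$ is the root, $\gamma(\rootT(T))=R=V(D_\Pi)$; hence for every $m\in M^\star$ the entire path satisfies $V(P(m))\subseteq\gamma(\rootT(T))$, so by Condition~\ref{item:manType1}(i) of Definition~\ref{def:manType} every man in $M^\star$ is \emph{settled} with respect to $(\rootT(T),\emptyset)$.

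Next I would unfold the partner and score definitions. Since every $m\in M^\star$ is settled, Definition~\ref{def:manPartner} gives $\partner_{\rootT(T),\emptyset,\mu}(m)=\mu(m)$ for all $m\in M^\star$. Substituting into Definition~\ref{def:deltaState} yields
\[
\alpha_{\rootT(T),\emptyset}(\mu)=\sum_{m\in M^\star}\pos_m(\mu(m)),\qquad
\lambda_{\rootT(T),\emptyset}(\mu)=\sum_{m\in M^\star}\pos_{\mu(m)}(m).
\]
By Proposition~\ref{lem:matchSame} every stable matching matches exactly the men of $M^\star$, so $\domain(\mu)=M^\star$ and these two sums are precisely $\sat_M(\mu)$ and $\sat_W(\mu)$, respectively. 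Therefore
\[
{\cal S}(\rootT(T),\emptyset,t_M,t_W)=\{\mu\in{\cal S}:\sat_M(\mu)=t_M,\ \sat_W(\mu)=t_W\}.
\]

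With this identification in hand the equivalence is straightforward. For the forward direction, if $\mu\in{\cal S}$ has $\sat_M(\mu)=t_M$ and $\sat_W(\mu)=t_W$, then $\mu\in{\cal S}(\rootT(T),\emptyset,t_M,t_W)$, so this set is nonempty and Definition~\ref{def:computeCorrectly} forces ${\sf N}[\rootT(T),\emptyset,t_M,t_W]=1$. For the converse, a value of $1$ means the same set is nonempty, producing the desired $\mu$. I would also invoke Lemma~\ref{lem:boundTarget} to confirm that the parameters $t_M,t_W$ for which nonempty witnesses can exist lie in $[n^2]$, matching the index range of ${\sf N}$, so that the statement is vacuously consistent for out-of-range values and fully captured for in-range ones.

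I do not anticipate a genuine obstacle here; the lemma is a bookkeeping bridge between the abstract scores $\alpha,\lambda$ and the concrete measures $\sat_M,\sat_W$. The only point requiring care is the verification that \emph{every} man is settled at the root, which I would justify cleanly through $\gamma(\rootT(T))=R$ and Definition~\ref{def:manType} rather than by appealing to the per-node correctness proof; all the technical weight of the algorithm lives in establishing that ${\sf N}$ is computed correctly (the hypothesis of this lemma), not in translating the root entry.
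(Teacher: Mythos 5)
Your proof is correct and follows essentially the same route as the paper's own argument: the empty root bag gives ${\cal S}={\cal S}(\rootT(T),\emptyset)$, the fact that $\gamma(\rootT(T))=R$ makes every man in $M^\star$ settled so that $\alpha_{\rootT(T),\emptyset}(\mu)=\sat_M(\mu)$ and $\lambda_{\rootT(T),\emptyset}(\mu)=\sat_W(\mu)$, and the equivalence then drops out of Definition~\ref{def:computeCorrectly}. You merely spell out the intermediate bookkeeping (Definitions~\ref{def:manType}, \ref{def:manPartner}, \ref{def:deltaState} and Proposition~\ref{lem:matchSame}) that the paper leaves implicit, which is fine.
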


\begin{proof}
Since $\beta(\rootT(T))=\emptyset$, it holds that ${\cal S}={\cal S}(\rootT(T),$ $\emptyset)$. Furthermore, since $\gamma(\rootT(T))=R$, it holds that every $m\in M^\star$ is settled with respect to $(\rootT(T),\emptyset)$. Therefore, every $\mu\in{\cal S}$ satisfies $\alpha_{\rootT(T),\emptyset}(\mu)=\sat_M(\mu)$ and $\lambda_{\rootT(T),\emptyset}(\mu)=\sat_W(\mu)$. We thus have that for all $t_M,t_W\in[n^2]$, it holds that ${\cal S}(\rootT(T),\emptyset,t_M,t_W)\neq\emptyset$ if and only if there exists $\mu\in{\cal S}$ such that $\sat_M(\mu)=t_M$ and $\sat_W(\mu)=t_W$. Since {\sf N} is computed correctly, we conclude the proof.
\end{proof}

In light of Lemmata \ref{lem:boundTarget} and \ref{lem:Delta}, to prove that {\sc GSM} is solvable in time $\OO(2^{tw}\cdot n^{10})$, it is sufficient that we show that it is possible to compute {\sf N} correctly in time $\OO(2^{\mathrm{\it tw}}\cdot n^{10})$. The rest of Section \ref{sec:fptSESM} focuses on the proof of this claim.

\subsubsection{Computation}\label{sec:dpComputation}

We process the entries of {\sf N} by traversing the tree $T$ in post-order. The order in which we process entries corresponding to the same node $v\in V(T)$ is arbitrary. Thus, the basis corresponds to entries {\sf N}$[v,R',t_M,t_W]$ where $v$ is a leaf, and the steps correspond to entries where $v$ is a forget node, an introduce node or a join node.

\medskip
\myparagraph{Leaf Node.} In the basis, where $v$ is a leaf, we have that $\beta(v)=\emptyset$. We consider two cases.
\begin{enumerate}
\itemsep0em 
\item If $t_M=\alpha_{v,\emptyset}(\mu_{\emptyset})$ and $t_W=\lambda_{v,\emptyset}(\mu_{\emptyset})$: {\sf N}$[v,\emptyset,t_M,t_W]=1$.
\item Otherwise: {\sf N}$[v,\emptyset,t_M,t_W]=0$.
\end{enumerate}


\myparagraph{Forget Node.} Let $u$ denote the child of $v$ in $T$, and $\rho$ denote the vertex in $\beta(u)\setminus\beta(v)$. Then, {\sf N}$[v,R',t_M,t_W]=\max\{\mathrm{\sf N}[u,R',t_M,t_W],\mathrm{\sf N}[u,R'\cup\{\rho\},t_M,t_W]\}$.

\medskip
\myparagraph{Introduce Node.} Let $u$ be the child of $v$ in $T$, and $\rho$ be the vertex in $\beta(v)\setminus\beta(u)$. Consider the following cases.
\begin{enumerate}
\item If $(\cl(R')\cap\beta(v))\setminus R'\neq\emptyset$: {\sf N}$[v,R',t_M,t_W]=0$.
\item\label{case:introEasy} Else if $\rho\notin R'$: $M_\rho=\emptyset$; {\sf N}$[v,R',t_M,t_W]=\mathrm{\sf N}[u,R',t_M,t_W]$.
\item\label{case:introSep} Otherwise ($\rho\in R'$): Given $m\in M^\star$, let $\rho_m$ denote the last vertex on $P(m)$ that belongs to $\cl(\rho)$, where if such a vertex does not exist, $\rho_m=\nil$.
Denote $M_\rho=\{m\in M^\star: \rho_m\in R\setminus\gamma(u),\ell_{v,R'}(m)\in\cl(\rho_m)\cup\{\nil\}\}$ (see Fig.~\ref{fig:FPT3}). Roughly speaking, $M_\rho$ is the set of men for which we identified a partner that we need to replace at the computation of the current entry. For each $m\in M_\rho$, denote $\diff^\alpha_{v,R'}(m)=\alpha_{v,R'}(\mu_{R'},m)-\alpha_{u,R'\setminus\{\rho\}}(\mu_{R'\setminus\{\rho\}},m)$. Accordingly, denote $\diff^\alpha(M_\rho)=\sum_{m\in M_\rho}\diff^\alpha_{v,R'}(m)$. Symmetrically, for $m\in M_\rho$, denote $\diff^\lambda_{v,R'}(m)=\lambda_{v,R'}(\mu_{R'},m)-\lambda_{u,R'\setminus\{\rho\}}(\mu_{R'\setminus\{\rho\}},m)$. Accordingly, denote $\diff^\lambda(M_\rho)=\sum_{m\in M_\rho}\diff^\lambda_{v,R'}(m)$. Now, {\sf N}$[v,R',t_M,t_W]$ is computed as follows.
\vspace{-0.5em}
\[\mathrm{\sf N}[v,R',t_M,t_W] = \mathrm{\sf N}[u,R'\setminus\{\rho\},t_M-\diff^\alpha(M_\rho),t_W-\diff^\lambda(M_\rho)].\]
\end{enumerate}

\myparagraph{Join Node.} Let $u$ and $w$ denote the children of $v$ in $T$. For the sake of efficiency, we compute all entries of the form {\sf N}$[v,R',\cdot,\cdot]$ simultaneously. First, we initialize each such entry to 0. Now, we compute $T(u,R')=\{(\widehat{t}_M,\widehat{t}_W)\in[n^2]\times[n^2]: \mathrm{\sf N}[u,R',\widehat{t}_M,\widehat{t}_W]=1\}$ and $T(w,R')=\{(\widehat{t}_M,\widehat{t}_W)\in[n^2]\times[n^2]: \mathrm{\sf N}[w,R',\widehat{t}_M,\widehat{t}_W]=1\}$. Then, for all $(\widehat{t}_M,\widehat{t}_W)\in T(u,R')$ and $(t^*_M,t^*_W)\in T(w,R')$ such that $t_M=\widehat{t}_M+t^*_M-\alpha_{v,R'}(\mu_{R'})\in[n^2]$ and $t_W=\widehat{t}_W+t^*_W-\lambda_{v,R'}(\mu_{R'})\in[n^2]$, we set $\mathrm{\sf N}[v,R',t_M,t_W]=1$. (An entry may be set to $1$ multiple times.)

\begin{figure}[t!]\centering
\fbox{\includegraphics[scale=0.7]{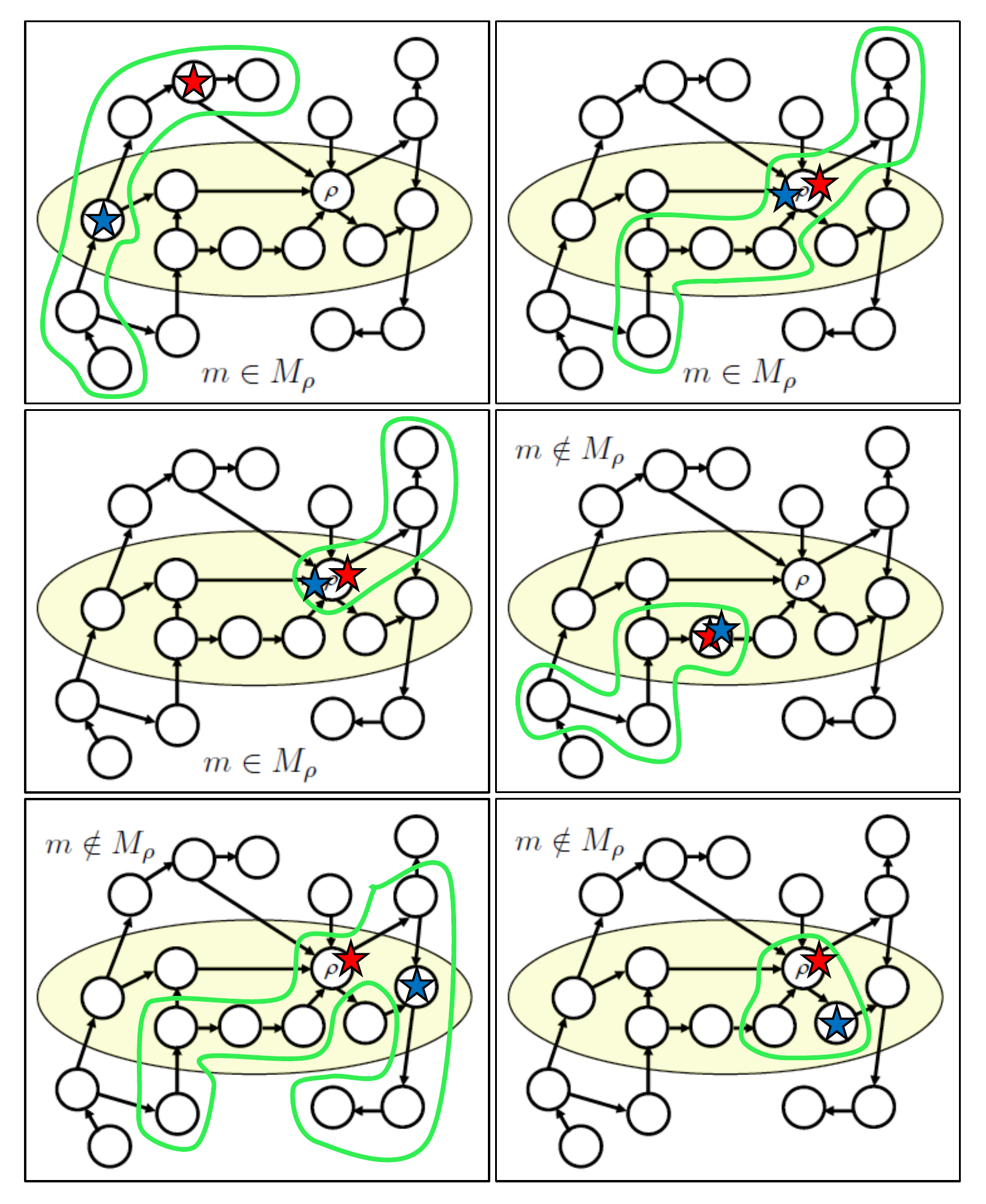}}
\caption{Illustrations of $M_\rho$. The yellow shape captures the bag $\beta(v)$, and it is assumed that $R'=\beta(v)$. In each illustration, a directed path $P(m)$ of a man $m$ is highlighted in green, the vertices $\rho_m$ and $\ell_{v,R'}(m)$ are markeed by red and blue stars, respectively, and it is specified whether $m\in M_\rho$.}\label{fig:FPT3}
\end{figure}

\subsubsection{Time Complexity}\label{sec:time}

The table {\sf N} contains at most $|V(T)|\cdot 2^{\mathrm{\it tw}}\cdot n^4 = \OO(2^{\mathrm{\it tw}}\cdot n^6)$ entries. The computation of an entry {\sf N}$[v,R',t_M,t_W]$ where $v$ is a leaf node, a forget node or an introduce node, is performed in time $\OO(1)$. Moreover, for every join node $v\in V(T)$ and $R'\subseteq R$, the total time to compute all $\OO(n^4)$ entries of the form {\sf N}$[v,R',\cdot,\cdot]$ is bounded by $\OO(n^8)$. Therefore, the overall running time of our algorithm is bounded by $\OO(2^{\mathrm{\it tw}}\cdot n^{10})$.

\subsubsection{Correctness}

In this subsection, we prove that {\sf N} has been computed correctly (see Definition~\ref{def:computeCorrectly}). By our discussions in Sections~\ref{sec:dpTable} and~\ref{sec:time}, this proof concludes that {\sc GSM} is solvable in time $\OO(2^{tw}\cdot n^{10})$. The proof is by induction on the order of the computation.

\medskip
\myparagraph{Basis.}
For the basis of the induction, consider an entry {\sf N}$[v,R',t_M,t_W]$ such that $v$ is a leaf in $T$. In this case, $\beta(v)=\emptyset$ and therefore $R'=\emptyset$. Then, the following lemma holds.

\begin{lemma}\label{lem:basis}
$\mu_{\emptyset}\in {\cal S}(v,\emptyset)$, and for all $\mu\in{\cal S}(v,\emptyset)$, $\alpha_{v,\emptyset}(\mu)=\alpha_{v,\emptyset}(\mu_{\emptyset})$ and $\lambda_{v,\emptyset}(\mu)=\lambda_{v,\emptyset}(\mu_{\emptyset})$.
\end{lemma}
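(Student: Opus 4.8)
The plan is to establish the two assertions of Lemma~\ref{lem:basis} separately, both following almost immediately from the structure of the leaf node, where $\beta(v)=\emptyset$ and hence the only admissible state is $(v,\emptyset)$.

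First I would prove that $\mu_\emptyset\in{\cal S}(v,\emptyset)$. Recall that ${\cal S}(v,R')$ is the set of stable matchings compatible with $(v,R')$, where compatibility means $R(\mu)\cap\beta(v)=R'$. Since $v$ is a leaf, $\beta(v)=\emptyset$, so for every $\mu\in{\cal S}$ we trivially have $R(\mu)\cap\beta(v)=\emptyset=R'$; thus in fact ${\cal S}(v,\emptyset)={\cal S}$. In particular, since $\mu_\emptyset=\mu_M$ is the man-optimal stable matching and therefore belongs to ${\cal S}$ (Proposition~\ref{lem:menOptimal}), it belongs to ${\cal S}(v,\emptyset)$. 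This part is essentially a definitional check.

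The second assertion is that $\alpha_{v,\emptyset}$ and $\lambda_{v,\emptyset}$ are constant on ${\cal S}(v,\emptyset)$, with common value attained at $\mu_\emptyset$. The key observation is that for a leaf $v$ we have $\gamma(v)=\beta(v)=\emptyset$, so \emph{no} man $m\in M^\star$ is relevant to $(v,\emptyset)$ (since $\beta(v)\cap V(P(m))=\emptyset$), and moreover every such $m$ satisfies $V(P(m))\cap\gamma(v)=\emptyset$. By Definition~\ref{def:manType}, Condition~\ref{item:manType2}(i) applies, so every man is unsettled with respect to $(v,\emptyset)$. Consequently, by Definition~\ref{def:manPartner}, $\partner_{v,\emptyset,\mu}(m)=\mu_{R'}(m)=\mu_\emptyset(m)$ for \emph{every} $\mu\in{\cal S}(v,\emptyset)$, since here $R'=\emptyset$ and $\mu_\emptyset=\mu_{\emptyset}$. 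Thus the partner assigned to each man is independent of $\mu$ and equals the partner under $\mu_\emptyset$. Applying Observation~\ref{obs:focusPartner} (or simply Definition~\ref{def:deltaState} directly, since the summands $\alpha_{v,\emptyset}(\mu,m)=\pos_m(\partner_{v,\emptyset,\mu}(m))$ and $\lambda_{v,\emptyset}(\mu,m)=\pos_{\partner_{v,\emptyset,\mu}(m)}(m)$ depend on $\mu$ only through the partner, which is fixed) yields $\alpha_{v,\emptyset}(\mu)=\alpha_{v,\emptyset}(\mu_\emptyset)$ and $\lambda_{v,\emptyset}(\mu)=\lambda_{v,\emptyset}(\mu_\emptyset)$.

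I do not anticipate a genuine obstacle here: the lemma is the base case of the induction and is meant to be routine. The only point requiring care is confirming that the classification of every man as \emph{unsettled} is correctly triggered at a leaf; this rests on the fact that $\gamma(v)=\emptyset$ for a leaf, so that $V(P(m))\cap\gamma(v)=\emptyset$ holds for all $m$, placing each man squarely in Condition~\ref{item:manType2}(i) of Definition~\ref{def:manType}. Once this is noted, the equality of partners across all compatible matchings (all of which are simply all of ${\cal S}$) makes the constancy of $\alpha_{v,\emptyset}$ and $\lambda_{v,\emptyset}$ immediate, and the value at $\mu_\emptyset$ follows because the fixed partner $\mu_\emptyset(m)$ is exactly what $\mu_\emptyset$ itself assigns.
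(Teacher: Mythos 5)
Your proposal is correct and follows essentially the same route as the paper: $\mu_\emptyset\in{\cal S}(v,\emptyset)$ is a definitional check, and since $\gamma(v)=\emptyset$ every man's partner with respect to $(v,\emptyset,\mu)$ collapses to $\mu_\emptyset(m)$, so Observation~\ref{obs:focusPartner} gives the constancy of $\alpha_{v,\emptyset}$ and $\lambda_{v,\emptyset}$. The only hair worth splitting is that a man with $V(P(m))=\emptyset$ also satisfies Condition~\ref{item:manType1}(i) of Definition~\ref{def:manType} (the paper hedges with ``either unsettled or satisfies $V(P(m))=\emptyset$''), but in that case $R(m)=\emptyset$ forces $\mu(m)=\mu_\emptyset(m)$ for every $\mu\in{\cal S}$, so the conclusion is unaffected.
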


\begin{proof}
Since $R(\mu_{\emptyset})=\beta(v)=\emptyset$, it holds that $\mu_{\emptyset}\in {\cal S}(v,\emptyset)$. Moreover, since $\gamma(v)=\emptyset$, every $m\in M^\star$ is either unsettled with respect to $(v,\emptyset)$ or satisfies $V(P(m))=\emptyset$. Thus, for all $\mu\in{\cal S}(v,\emptyset)$ and $m\in M^\star$, it holds that $\partner_{v,\emptyset,\mu}(m)=\mu_\emptyset(m)$. By Observation \ref{obs:focusPartner}, we have that for all $\mu\in{\cal S}(v,\emptyset)$, it holds that $\lambda_{v,\emptyset}(\mu)=\lambda_{v,\emptyset}(\mu_{\emptyset})$.
\end{proof}

By Lemma \ref{lem:basis}, we have that for all $t_M,t_W\in[n^2]$, it holds that ${\cal S}(v,R',t_M,t_W)\neq\emptyset$ if and only if $t_M=\alpha_{v,\emptyset}(\mu_{\emptyset})$ and $t_W=\lambda_{v,\emptyset}(\mu_{\emptyset})$. We thus conclude that the basis is correct.

\medskip
\myparagraph{Inductive Hypothesis.}
Next, we consider some entry {\sf N}$[v,R',t_M,t_W]$ such that $v$ is not a leaf in $T$, and we assume that for every child $u$ of $v$, all entries of the form {\sf N}$[u,\cdot,\cdot,\cdot]$ have been computed correctly (see Definition~\ref{def:computeCorrectly}). We analyze the cases where $v$ is a forget node, an introduce node and a join node separately~below.

\medskip
\myparagraph{Forget Node.}
By the inductive hypothesis, to prove that our computation is correct, it is sufficient to show that ${\cal S}(v,R',t_M,t_W)\neq\emptyset$ if and only if ${\cal S}(u,R',t_M,t_W)\cup{\cal S}(u,R'\cup\{\rho\},t_M,t_W)\neq\emptyset$. We start with the following claim.

\begin{lemma}\label{lem:forgetType}
Let $m\in M^\star$. Then, the type of $m$ is the same with respect to both $(v,R')$ and $(u,R')$. Moreover, either the type of $m$ is the same with respect to both $(v,R')$ and $(u,R'\cup\{\rho\})$ or $(\cl(\rho)\cap\beta(v))\setminus R'\neq\emptyset$.
\end{lemma}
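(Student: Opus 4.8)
The plan is to exploit three structural facts about a forget node. First, since $u$ is the unique child of $v$ and $v$ forgets only $\rho$, we have $\gamma(v)=\gamma(u)$ and $\rho\in\beta(u)\subseteq\gamma(v)$ while $\rho\notin\beta(v)$ and $\rho\notin R'$. Second, I will use the standard boundary property of tree decompositions: if $x\in\gamma(v)$, $y\notin\gamma(v)$, and $\{x,y\}\in E(G_\Pi)$, then $x\in\beta(v)$; this follows from Proposition \ref{lem:tree} applied to the connected subgraph on $\{x,y\}$, since the nodes whose bags contain $x$ form a subtree meeting both the inside and the outside of the subtree rooted at $v$, hence containing $v$ itself. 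Third, since $P(m)$ is a directed path of $D_\Pi$ whose transitive closure is $\Pi$, every vertex strictly preceding $\rho$ on $P(m)$ is a $\prec$-predecessor of $\rho$, and therefore lies in $\cl(\rho)=\{y:y\preceq\rho\}$.

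For the first assertion (comparing $(v,R')$ with $(u,R')$), the key point is that $\ell_{v,R'}(m)=\ell_{u,R'}(m)$ because $\rho\notin R'$, so the two exit points can differ only through $f$, and only when $\rho\in V(P(m))$ and $f_{u,R'}(m)=\rho$. In that event $P_{u,R'}(m)$ is a prefix of $P_{v,R'}(m)$, so it suffices to show that every vertex on the extension (from $\rho$ up to $f_{v,R'}(m)$) lies in $\gamma(v)$; then $V(P_{u,R'}(m))\subseteq\gamma(v)$ if and only if $V(P_{v,R'}(m))\subseteq\gamma(v)$, and Definition \ref{def:manType} yields the same type. Since $\ell_{v,R'}(m)$ is the last vertex of $P(m)$ in $R'$, all these extension vertices succeed $\ell$ and hence avoid $R'$, and by minimality of $f_{v,R'}(m)$ the internal ones avoid $\beta(v)$; a forward induction starting from $\rho\in\gamma(v)\setminus\beta(v)$ and using the boundary property then forces each successive extension vertex into $\gamma(v)$. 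The only remaining subtlety is a relevance mismatch, where $m$ is relevant to $(u,R')$ but not to $(v,R')$, which forces $\rho$ to be the sole witness; then $\rho\in V(P(m))\cap\gamma(v)\neq\emptyset$, and Lemma \ref{lem:fullClassify} gives $V(P(m))\subseteq\gamma(v)$, so $m$ is settled with respect to both states.

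For the second assertion (comparing $(v,R')$ with $(u,R'\cup\{\rho\})$) I will establish the equivalent implication: assuming $(\cl(\rho)\cap\beta(v))\setminus R'=\emptyset$, the types coincide. When $\rho\notin V(P(m))$, or when some $R'$-vertex of $P(m)$ succeeds $\rho$, the points $\ell$ and $f$ are unchanged and the claim is immediate. The genuine case is when $\rho$ succeeds every $R'$-vertex of $P(m)$, so that $\ell_{u,R'\cup\{\rho\}}(m)=\rho$ while $\ell_{v,R'}(m)$ is the last $R'$-vertex $\ell_0$ before $\rho$. Here the closure hypothesis does the work: any vertex strictly between $\ell_0$ and $\rho$ is a $\prec$-predecessor of $\rho$, hence in $\cl(\rho)$, and it is not in $R'$, so by hypothesis it cannot lie in $\beta(v)$; consequently $f_{v,R'}(m)$ is the first vertex of $\beta(v)\setminus R'$ after $\rho$, which equals $f_{u,R'\cup\{\rho\}}(m)$. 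Thus $P_{v,R'}(m)$ merely prepends the block $\ell_0,\dots,\rho$ to $P_{u,R'\cup\{\rho\}}(m)$, and a backward induction from $\rho\in\gamma(v)\setminus\beta(v)$ (again via the boundary property, using that the strictly-between vertices avoid $\beta(v)$ and that $\ell_0\in R'\subseteq\gamma(v)$) shows this whole block lies in $\gamma(v)$; hence the two paths determine the same type. The relevance-mismatch sub-case is handled exactly as before.

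The main obstacle is precisely the case in the second assertion where the exit point $\ell$ ``jumps'' onto $\rho$: one must verify that the freshly inserted block of $P(m)$ between $\ell_0$ and $\rho$ neither introduces a new vertex of $\beta(v)\setminus R'$ (which would move $f$ and alter $P_{v,R'}(m)$) nor escapes $\gamma(v)$. Both are controlled only by coupling the hypothesis $(\cl(\rho)\cap\beta(v))\setminus R'=\emptyset$ with the identification of on-path predecessors as elements of $\cl(\rho)$, and then feeding the resulting ``avoids $\beta(v)$'' conclusion into the boundary property. Getting this interaction right is the crux, whereas the forget-node bookkeeping and the relevance mismatches are routine.
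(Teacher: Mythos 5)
Your proposal is correct, and it shares the paper's skeleton---tracking how the entry/exit vertices $\ell_{\cdot,\cdot}(m)$ and $f_{\cdot,\cdot}(m)$ can change at a forget node, using ``on-path predecessors of $\rho$ lie in $\cl(\rho)$'' together with the closure hypothesis to force $f_{u,R'\cup\{\rho\}}(m)=f_{v,R'}(m)$---but your key mechanism for concluding type-equality is genuinely different. The paper splits on the type of $m$ with respect to $(v,R')$: in the settled case it only needs the one-line containment $V(P_{u,R'}(m))\subseteq V(P_{v,R'}(m))\subseteq\gamma(v)$ (and similarly for $R'\cup\{\rho\}$), and in the unsettled case it invokes Lemma \ref{lem:helpClassify} to conclude $\rho\notin V(P_{v,R'}(m))$, so the two subpaths are literally equal. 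You instead prove an unconditional structural fact: whenever the subpaths differ (i.e., $f$ or $\ell$ ``jumps'' onto $\rho$), their symmetric difference---your extension or prepended block---lies entirely inside $\gamma(v)$, established by a vertex-by-vertex induction along $P(m)$ using the boundary property of tree decompositions (itself derived from Proposition \ref{lem:tree}, the same tool underlying Lemma \ref{lem:helpClassify}). This buys you a uniform argument with no settled/unsettled case split, both directions of the equivalence at once, and in fact slightly stronger information (in the jump cases $m$ is forced to be settled with respect to both states, which is exactly why the paper's unsettled case never meets a jump); the cost is re-deriving connectivity facts the paper gets for free from its already-proven Lemma \ref{lem:helpClassify}, making your proof somewhat longer. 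Your explicit treatment of the relevance-mismatch subcase via Lemma \ref{lem:fullClassify} is also sound; the paper absorbs that situation into its cases $V(P(m))\subseteq\gamma(v)$ and $V(P(m))\cap\gamma(v)=\emptyset$ rather than naming it.
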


\begin{proof}
Since $\gamma(v)=\gamma(u)$, if $V(P(m))\subseteq \gamma(v)$, then $m$ is settled with respect to $(v,R'),(u,R')$ and $(u,R'\cup\{\rho\})$. Moreover, if $V(P(m)) \cap\gamma(v)=\emptyset$, then $m$ is unsettled with respect to  $(v,R'),(u,R')$ and $(u,R'\cup\{\rho\})$.


Next, suppose that $m$ is relevant and $V(P_{v,R'}(m))\subseteq\gamma(v)$. We first show that the type of $m$ is the same with respect to both $(v,R')$ and $(u,R')$. Since $\beta(u)=\beta(v)\cup\{\rho\}$, we have that $\ell_{u,R'}(m)=\ell_{v,R'}(m)$. Furthermore, if $f_{u,R'}(m)\neq f_{v,R'}(m)$, then either $f_{v,R'}(m)=\nil$ or $f_{u,R'}(m)=\rho$ is located before $f_{v,R'}(m)$ on $P(m)$. Therefore, we necessarily have that $V(P_{u,R'}(m))\subseteq V(P_{v,R'}(m))$. Thus, since $V(P_{v,R'}(m))\subseteq\gamma(v)$ and $\gamma(v)=\gamma(u)$, we have that $V(P_{u,R'}(m))\subseteq\gamma(u)$. This implies $m$ is settled with respect to $(v,R')$ and $(u,R')$. Now, suppose that $\cl(\rho)\cap\beta(v)\subseteq  R'$. We show that the type of $m$ is the same with respect to $(v,R')$ and $(u,R'\cup\{\rho\})$. Since $\beta(u)=\beta(v)\cup\{\rho\}$, if $\ell_{u,R'\cup\{\rho\}}(m)\neq \ell_{v,R'}(m)$, then either $\ell_{v,R'}(m)=\nil$ or $\ell_{u,R'\cup\{\rho\}}(m)=\rho$ is located after $\ell_{v,R'}(m)$ on $P(m)$. Since $\cl(\rho)\cap\beta(u)\subseteq  R'\cup\{\rho\}$, $f_{u,R'\cup\{\rho\}}(m)=f_{v,R'}(m)$. Therefore, we necessarily have that $V(P_{u,R'\cup\{\rho\}}(m))\subseteq V(P_{v,R'}(m))$. Since $V(P_{v,R'}(m))\subseteq\gamma(v)$ and $\gamma(v)=\gamma(u)$, this implies that $V(P_{u,R'\cup\{\rho\}}(m))\subseteq\gamma(u)$. Thus, $m$ is settled with respect to $(v,R')$ and $(u,R'\cup\{\rho\})$. 

Finally, suppose that $m$ is relevant and $V(P_{v,R'}(m))\setminus\gamma(v)\neq\emptyset$. By Lemma \ref{lem:helpClassify}, we have that $V(P_{v,R'}(m))\cap\gamma(v)\subseteq\{\ell_{v,R'}(m),f_{v,R'}(m)\}$, which implies that $\rho\notin V(P_{v,R'}(m))$.
Therefore, $\ell_{u,R'}(m)=\ell_{v,R'}(m)$ and $f_{u,R'}(m)=f_{v,R'}(m)$, which implies that $P_{u,R'}(m)=P_{v,R'}(m)$. Since $\gamma(v)=\gamma(u)$, we deduce that $V(P_{u,R'}(m))\setminus\gamma(u)\neq\emptyset$, and therefore $m$ is unsettled with respect to both $(v,R')$ and $(u,R')$. Now, suppose that $\cl(\rho)\cap\beta(v)\subseteq  R'$. Then, since $V(P_{v,R'}(m))\cap\gamma(v)\subseteq\{\ell_{v,R'}(m),f_{v,R'}(m)\}$, we have that $\ell_{u,R'\cup\{\rho\}}(m)=\ell_{v,R'}(m)$ and $f_{u,R'\cup\{\rho\}}(m)=f_{v,R'}(m)$. Again, this leads to the conclusion that $m$ is unsettled with respect to $(u,R'\cup\{\rho\})$.
\end{proof}

First, suppose that ${\cal S}(v,R',t_M,t_W)\neq\emptyset$. Then, there exists $\mu\in{\cal S}(v,R',t_M,t_W)$. The conclusion that ${\cal S}(u,R',t_M,t_W)\cup{\cal S}(u,R'\cup\{\rho\},t_M,t_W)\neq\emptyset$ is implied by the following lemma.

\begin{lemma}
Let $\mu\in{\cal S}(v,R',t_M,t_W)$. If $\rho\notin R(\mu)$, then $\mu\in{\cal S}(u,R',t_M,t_W)$, and otherwise $\mu\in{\cal S}(u,R'\cup\{\rho\},t_M,t_W)$.
\end{lemma}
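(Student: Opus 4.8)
The plan is to verify, in each of the two cases, (a) that $\mu$ is compatible with the claimed state of $u$, and (b) that the tentative scores carry over, i.e.\ $\alpha_{u,\cdot}(\mu)=\alpha_{v,R'}(\mu)=t_M$ and $\lambda_{u,\cdot}(\mu)=\lambda_{v,R'}(\mu)=t_W$; together these give $\mu\in{\cal S}(u,\cdot,t_M,t_W)$. The compatibility part is immediate from $\beta(u)=\beta(v)\cup\{\rho\}$: since $\mu\in{\cal S}(v,R')$ we have $R(\mu)\cap\beta(v)=R'$, so $R(\mu)\cap\beta(u)$ equals $R'$ when $\rho\notin R(\mu)$ and equals $R'\cup\{\rho\}$ when $\rho\in R(\mu)$. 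Thus the matching sits in the correct equivalence class in both cases, and all that remains is the score preservation, for which I would invoke Observation~\ref{obs:focusSettle} applied to the two relevant states and the single matching~$\mu$ (using $\gamma(u)=\gamma(v)$, which holds at a forget node).

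In the first case ($\rho\notin R(\mu)$) the two states are $(v,R')$ and $(u,R')$. The first assertion of Lemma~\ref{lem:forgetType} shows that every $m\in M^\star$ has the same type with respect to both, and the hypothesis of Observation~\ref{obs:focusSettle} concerning unsettled men holds trivially because here $\widehat R=R'$, so $\mu_{R'}(m)=\mu_{\widehat R}(m)$. Hence $\alpha_{u,R'}(\mu)=\alpha_{v,R'}(\mu)$ and $\lambda_{u,R'}(\mu)=\lambda_{v,R'}(\mu)$, which is exactly what is needed to conclude $\mu\in{\cal S}(u,R',t_M,t_W)$.

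The second case ($\rho\in R(\mu)$), with states $(v,R')$ and $(u,R'\cup\{\rho\})$, is the substantive one. To apply the second assertion of Lemma~\ref{lem:forgetType} I first establish $(\cl(\rho)\cap\beta(v))\setminus R'=\emptyset$: since $R(\mu)$ is a closed set (Proposition~\ref{lem:smCorrRots}) containing $\rho$, we have $\cl(\rho)\subseteq R(\mu)$, whence $\cl(\rho)\cap\beta(v)\subseteq R(\mu)\cap\beta(v)=R'$. With this in hand Lemma~\ref{lem:forgetType} yields that every $m$ has the same type with respect to $(v,R')$ and $(u,R'\cup\{\rho\})$. The remaining, and main, obstacle is the other hypothesis of Observation~\ref{obs:focusSettle}: for every \emph{unsettled} $m$ I must show $\partner$-equality, i.e.\ $\mu_{R'}(m)=\mu_{R'\cup\{\rho\}}(m)$, so that enlarging the eliminated set from $\cl(R')$ to $\cl(R'\cup\{\rho\})=\cl(R')\cup\cl(\rho)$ does not move the tentative partner of $m$ (cf.\ Definition~\ref{def:manPartner}).

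To settle this I would argue at the level of the path $P(m)$ and the total order that $\prec$ induces on $R(m)$ (Proposition~\ref{lem:totalOrder}). The partner $\mu_{R'}(m)$ is governed by the $\prec$-largest rotation of $R(m)$ lying in $\cl(R')$, and passing to $\cl(R'\cup\{\rho\})$ can only add rotations of $R(m)$ that are $\preceq\rho$; it therefore suffices to show that none of these newly added rotations exceeds the current largest one. Here the key structural input is that $m$ is unsettled: the proof of Lemma~\ref{lem:forgetType} records that in this subcase $\rho\notin V(P_{v,R'}(m))$, and the proof of Lemma~\ref{lem:helpClassify} shows that the internal vertices of $P_{v,R'}(m)$ lie in $R\setminus\gamma(v)$, whereas $\rho\in\beta(u)\subseteq\gamma(u)=\gamma(v)$. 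The delicate step I expect to be hardest is converting this separation into the poset statement that every rotation of $R(m)$ that $\rho$ could pull newly into the down-closure is already $\preceq\ell_{v,R'}(m)$, hence already in $\cl(R')$; granting it, the $\prec$-largest rotation of $R(m)\cap\cl(R')$ is unchanged, so $\mu_{R'}(m)=\mu_{R'\cup\{\rho\}}(m)$. Observation~\ref{obs:focusSettle} then applies, the scores are preserved, and $\mu\in{\cal S}(u,R'\cup\{\rho\},t_M,t_W)$, completing the case.
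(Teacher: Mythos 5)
Your proposal follows essentially the same route as the paper's proof: compatibility is read off from $\beta(u)=\beta(v)\cup\{\rho\}$, and score preservation is obtained via Observation~\ref{obs:focusSettle}, using Lemma~\ref{lem:forgetType} for type equality and the claim $\mu_{R'}(m)=\mu_{R'\cup\{\rho\}}(m)$ for unsettled men. The one step you flag as delicate, namely $\cl(\rho)\cap V(P(m))\subseteq \cl(R')\cap V(P(m))$ for unsettled $m$, is exactly the step the paper itself dispatches with ``it can be verified,'' and your separator-based sketch of it (any $\prec$-path from a rotation of $P(m)$ to $\rho$ must cross $\beta(v)$ inside $R'$) is the intended argument.
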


\begin{proof}
Assume that $\rho\notin R(\mu)$. Since $\mu\in{\cal S}(u,R')$, it holds that $R(\mu)\cap\beta(v)=R'$. Thus, since $\beta(u)=\beta(v)\cup\{\rho\}$ and $\rho\notin R(\mu)$, we have that $R(\mu)\cap\beta(u)=R'$. Therefore, $\mu\in{\cal S}(u,R')$. Next, assume that $\rho\in R(\mu)$. Since $\mu\in{\cal S}(u,R')$, it holds that $R(\mu)\cap\beta(v)=R'$ and $\cl(\rho)\cap\beta(v)\subseteq R'$. Thus, since $\beta(u)=\beta(v)\cup\{\rho\}$ and $\rho\in R(\mu)$, we have that $R(\mu)\cap\beta(u)=R'\cup\{\rho\}$.

It remains to show that if $\rho\notin R(\mu)$, then $\alpha_{u,R'}(\mu)=t_M$ and $\lambda_{u,R'}(\mu)=t_W$, and otherwise $\alpha_{u,R'\cup\{\rho\}}(\mu)=t_M$ and $\lambda_{u,R'\cup\{\rho\}}(\mu)=t_W$.
Note that $\alpha_{v,R'}(\mu)=t_M$ and $\lambda_{v,R'}(\mu)=t_W$. Moreover, if $\rho\in R(\mu)$ then for any $m\in M^\star$ unsettled with respect to $(v,R')$ and $(u,R'\cup\{\rho\})$, it can be verified that $\cl(\rho)\cap V(P(m))\subseteq \cl(R')\cap V(P(m))$, in which case $\mu_{R'}(m)=\mu_{R'\cup\{\rho\}}(m)$. Thus, by Observation \ref{obs:focusSettle}, to prove the desired claim it is sufficient to show that for all $m\in M^\star$, it holds that $m$ has the same type with respect to $(v,R')$ and $(u,R')$, and if $\rho\in R(\mu)$, the $m$ also has the same type with respect to $(v,R')$ and $(u,R'\cup\{\rho\})$. The correctness of this statement is given by Lemma \ref{lem:forgetType}.
\end{proof}

Now, for the other direction; suppose that ${\cal S}(u,R',t_M,t_W)\cup{\cal S}(u,R'\cup\{\rho\},t_M,t_W)\neq\emptyset$. Then, there exists $\mu\in{\cal S}(u,R',t_M,t_W)\cup{\cal S}(u,R'\cup\{\rho\},t_M,t_W)$. The conclusion that ${\cal S}(v,R',t_M,t_W)\neq\emptyset$ is implied by the following lemma.

\begin{lemma}
Let $\mu\in{\cal S}(u,R',t_M,t_W)\cup{\cal S}(u,R'\cup\{\rho\},t_M,t_W)$. Then, $\mu\in{\cal S}(v,R',t_M,t_W)$.
\end{lemma}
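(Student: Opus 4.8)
The goal is to prove that any $\mu$ belonging to either ${\cal S}(u,R',t_M,t_W)$ or ${\cal S}(u,R'\cup\{\rho\},t_M,t_W)$ in fact lies in ${\cal S}(v,R',t_M,t_W)$. This is essentially the converse of the preceding lemma, and it splits naturally into two parts: verifying membership in the equivalence class ${\cal S}(v,R')$, and verifying that the tentative scores $\alpha_{v,R'}(\mu)$ and $\lambda_{v,R'}(\mu)$ still equal $t_M$ and $t_W$.

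For the first part, I would argue as follows. Recall that $v$ is a forget node with child $u$, and $\rho$ is the vertex in $\beta(u)\setminus\beta(v)$, so $\beta(u)=\beta(v)\cup\{\rho\}$. If $\mu\in{\cal S}(u,R')$, then $R(\mu)\cap\beta(u)=R'$, and since $\beta(v)\subseteq\beta(u)$ we immediately get $R(\mu)\cap\beta(v)=R'\cap\beta(v)=R'$ (using $R'\subseteq\beta(v)$), so $\mu\in{\cal S}(v,R')$. If instead $\mu\in{\cal S}(u,R'\cup\{\rho\})$, then $R(\mu)\cap\beta(u)=R'\cup\{\rho\}$; intersecting with $\beta(v)$ and noting $\rho\notin\beta(v)$ gives $R(\mu)\cap\beta(v)=R'$, so again $\mu\in{\cal S}(v,R')$.

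For the second part, I would invoke Observation \ref{obs:focusSettle} as in the previous lemma. The key facts I need are that every $m\in M^\star$ has the same type with respect to $(v,R')$ as with respect to the relevant child state, and that when $m$ is unsettled the assignments $\mu_{R'}(m)$ agree across the two states. The type-consistency statements are exactly what Lemma \ref{lem:forgetType} provides (the case $\mu\in{\cal S}(u,R'\cup\{\rho\})$ requires $\rho\in R(\mu)$, which in turn guarantees $\cl(\rho)\cap\beta(v)\subseteq R'$, so the second conjunct of Lemma \ref{lem:forgetType} applies rather than its alternative). For the unsettled-assignment agreement in the $R'\cup\{\rho\}$ case, I would reuse the observation from the previous proof that $\cl(\rho)\cap V(P(m))\subseteq\cl(R')\cap V(P(m))$ for unsettled $m$, which forces $\mu_{R'}(m)=\mu_{R'\cup\{\rho\}}(m)$. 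Applying Observation \ref{obs:focusSettle} then yields $\alpha_{v,R'}(\mu)=\alpha_{u,R'}(\mu)=t_M$ (resp.\ $\alpha_{u,R'\cup\{\rho\}}(\mu)$) and likewise for $\lambda$, completing the proof.

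The main obstacle, as in the forward direction, is the bookkeeping around the unsettled men in the $\rho\in R(\mu)$ subcase: one must be careful that $\mu_{R'}(m)$ and $\mu_{R'\cup\{\rho\}}(m)$ genuinely coincide, which relies on $\rho$ not contributing anything new to the closed set restricted to $P(m)$ below the current bag. Since this direction is a near-mirror of the preceding lemma, I expect the proof to be short and to consist almost entirely of citing Lemma \ref{lem:forgetType} and Observation \ref{obs:focusSettle}, with the set-intersection computations for class membership being routine.

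\begin{proof}
We first show $\mu\in{\cal S}(v,R')$. Since $v$ is a forget node, $\beta(u)=\beta(v)\cup\{\rho\}$ with $\rho\notin\beta(v)$. If $\mu\in{\cal S}(u,R')$, then $R(\mu)\cap\beta(u)=R'$, so $R(\mu)\cap\beta(v)=R'$, giving $\mu\in{\cal S}(v,R')$. If $\mu\in{\cal S}(u,R'\cup\{\rho\})$, then $R(\mu)\cap\beta(u)=R'\cup\{\rho\}$, and since $\rho\notin\beta(v)$ we again obtain $R(\mu)\cap\beta(v)=R'$, so $\mu\in{\cal S}(v,R')$.

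It remains to show $\alpha_{v,R'}(\mu)=t_M$ and $\lambda_{v,R'}(\mu)=t_W$. By Lemma \ref{lem:forgetType}, every $m\in M^\star$ has the same type with respect to $(v,R')$ and $(u,R')$, and if $\rho\in R(\mu)$ (so that $\cl(\rho)\cap\beta(v)\subseteq R'$), then $m$ also has the same type with respect to $(v,R')$ and $(u,R'\cup\{\rho\})$. Moreover, if $\rho\in R(\mu)$ and $m\in M^\star$ is unsettled, then $\cl(\rho)\cap V(P(m))\subseteq\cl(R')\cap V(P(m))$, so $\mu_{R'}(m)=\mu_{R'\cup\{\rho\}}(m)$. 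Hence, by Observation \ref{obs:focusSettle}, $\alpha_{v,R'}(\mu)=\alpha_{u,R'}(\mu)=t_M$ and $\lambda_{v,R'}(\mu)=\lambda_{u,R'}(\mu)=t_W$ when $\mu\in{\cal S}(u,R',t_M,t_W)$, and $\alpha_{v,R'}(\mu)=\alpha_{u,R'\cup\{\rho\}}(\mu)=t_M$ and $\lambda_{v,R'}(\mu)=\lambda_{u,R'\cup\{\rho\}}(\mu)=t_W$ when $\mu\in{\cal S}(u,R'\cup\{\rho\},t_M,t_W)$. Therefore $\mu\in{\cal S}(v,R',t_M,t_W)$.
\end{proof}
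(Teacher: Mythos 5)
Your proof is correct and follows essentially the same route as the paper's: the set-intersection computation for membership in ${\cal S}(v,R')$, followed by Lemma \ref{lem:forgetType} and Observation \ref{obs:focusSettle} for the score equalities. Your explicit verification that $\cl(\rho)\cap\beta(v)\subseteq R'$ holds (so the favorable alternative of Lemma \ref{lem:forgetType} applies) and that $\mu_{R'}(m)=\mu_{R'\cup\{\rho\}}(m)$ for unsettled $m$ is slightly more careful than the paper's own write-up, which leaves the latter condition implicit in this direction.
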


\begin{proof}
First, assume that $\mu\in{\cal S}(u,R',t_M,t_W)$. Then, $R(\mu)\cap\beta(u)=R'$. Thus, since $\beta(v)=\beta(u)\setminus\{\rho\}$, $R(\mu)\cap\beta(v)=R'$. Therefore, $\mu\in{\cal S}(v,R')$. Next, assume that $\mu\in{\cal S}(u,R'\cup\{\rho\},t)$. Then, $R(\mu)\cap\beta(u)=R'\cup\{\rho\}$ and $\cl(\rho)\cap\beta(u)\subseteq R'\cup\{\rho\}$. Thus, since $\beta(v)=\beta(u)\setminus\{\rho\}$, $R(\mu)\cap\beta(v)=R'$ and $\cl(\rho)\cap\beta(v)\subseteq R'$. Therefore, $\mu\in{\cal S}(v,R')$.

It remains to show that $\alpha_{v,R'}(\mu)=t_M$ and $\lambda_{v,R'}(\mu)=t_W$. Note that if $\mu\in{\cal S}(u,R',t_M,t_W)$ then $\alpha_{u,R'}(\mu)=t_M$ and $\lambda_{u,R'}(\mu)=t_W$, and otherwise $\alpha_{u,R'\cup\{\rho\}}(\mu)=t_M$ and $\lambda_{u,R'\cup\{\rho\}}(\mu)=t_W$. Again, by Observation \ref{obs:focusSettle}, to prove the desired claim it is sufficient to show that for all $m\in M^\star$, it holds that $m$ has the same type with respect to $(v,R')$ and $(u,R')$, and if $\mu\in{\cal S}(u,R'\cup\{\rho\},t_M,t_W)$, then $m$ also has the same type with respect to $(v,R')$ and $(u,R'\cup\{\rho\})$. The correctness of this statement is given by Lemma \ref{lem:forgetType}.
\end{proof}

\medskip
\myparagraph{Introduce Node.}
First, note that if $(\cl(R')\cap\beta(v))\setminus R'\neq\emptyset$, then by Proposition \ref{lem:smCorrRots}, it holds that ${\cal S}(v,R')=\emptyset$. Thus, in this case, the computation is correct. We next assume that it holds that $\cl(R')\cap\beta(v)\subseteq R'$. In particular, this implies that if $\rho\notin R'$ then $\rho\notin\cl(R')$, and if $\rho\in R'$ then $\cl(\rho)\cap\beta(v)\subseteq R'$.
 
We proceed to analyze Cases \ref{case:introEasy} and \ref{case:introSep}.

\begin{lemma}\label{lem:introSubset}
${\cal S}(v,R')\subseteq {\cal S}(u,R'\setminus\{\rho\})$.
\end{lemma}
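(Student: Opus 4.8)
The goal is to show the set inclusion $\mathcal{S}(v,R') \subseteq \mathcal{S}(u, R' \setminus \{\rho\})$ at an introduce node $v$ with child $u$, where $\rho$ is the vertex in $\beta(v) \setminus \beta(u)$, under the standing assumption of this case that $\cl(R') \cap \beta(v) \subseteq R'$. The plan is to take an arbitrary $\mu \in \mathcal{S}(v,R')$ and verify the two defining properties of membership in $\mathcal{S}(u, R' \setminus \{\rho\})$: first that $R(\mu) \cap \beta(u) = R' \setminus \{\rho\}$, and second — if we are proving the stronger colored version $\mathcal{S}(v,R',t_M,t_W) \subseteq \mathcal{S}(u, R'\setminus\{\rho\}, t_M, t_W)$ — that the scores $\alpha$ and $\lambda$ are preserved across the two states.

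First I would handle the compatibility condition on rotation sets. Since $\mu \in \mathcal{S}(v,R')$, by definition $R(\mu) \cap \beta(v) = R'$. Because $v$ is an introduce node, we have $\beta(u) = \beta(v) \setminus \{\rho\}$, so intersecting with $\beta(u)$ simply removes $\rho$: thus $R(\mu) \cap \beta(u) = R(\mu) \cap (\beta(v) \setminus \{\rho\}) = R' \setminus \{\rho\}$. This is immediate and should be stated in one or two lines. I would also note that the closure condition needed for membership, namely $\cl(R' \setminus \{\rho\}) \cap \beta(u) \subseteq R' \setminus \{\rho\}$, follows from the standing assumption $\cl(R') \cap \beta(v) \subseteq R'$ together with the structure of the introduce node; here one uses that $\rho$ is introduced at $v$ and hence does not appear in any bag strictly below $u$, so no element of $\cl(R'\setminus\{\rho\})$ lying in $\beta(u)$ can be forced out.

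The substantive part is the score preservation, and this is where I expect the main obstacle to lie. To invoke Observation \ref{obs:focusSettle}, I must show that every man $m \in M^\star$ has the same type with respect to $(v,R')$ and $(u, R'\setminus\{\rho\})$, and that unsettled men receive the same tentative partner, i.e. $\mu_{R'}(m) = \mu_{R'\setminus\{\rho\}}(m)$. The type analysis requires comparing $\ell_{v,R'}(m)$, $f_{v,R'}(m)$ and the path $P_{v,R'}(m)$ against their counterparts at $u$; since $\gamma(u) = \gamma(v) \setminus (\text{nothing newly added below})$ — more precisely $\gamma(v) = \gamma(u) \cup \{\rho\}$ at an introduce node — the delicate point is how the newly introduced $\rho$ can shift the entry/exit points on $P(m)$. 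I anticipate needing a case distinction according to whether $\rho \in V(P(m))$ and whether $\rho$ lies inside the relevant subpath, mirroring the argument in Lemma \ref{lem:forgetType}. The hard part will be ruling out the scenario where introducing $\rho$ changes a man's type from unsettled to settled (or alters his tentative partner): I would argue that under $\cl(R')\cap\beta(v)\subseteq R'$ the closure behaves monotonically, and that $\mu_{R'}$ and $\mu_{R'\setminus\{\rho\}}$ agree on any man whose relevant subpath is unaffected by $\rho$, using Proposition \ref{lem:smCorrRots} to control which rotations are eliminated. Once the type-and-partner agreement is established for all $m \in M^\star$, Observation \ref{obs:focusSettle} yields $\alpha_{v,R'}(\mu) = \alpha_{u,R'\setminus\{\rho\}}(\mu)$ and $\lambda_{v,R'}(\mu) = \lambda_{u,R'\setminus\{\rho\}}(\mu)$, completing the inclusion.
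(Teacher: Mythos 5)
Your first paragraph is exactly the paper's proof and is all that the stated lemma requires: membership in ${\cal S}(u,R'\setminus\{\rho\})$ is defined solely by the compatibility condition $R(\mu)\cap\beta(u)=R'\setminus\{\rho\}$, which follows in one line from $R(\mu)\cap\beta(v)=R'$ and $\beta(u)=\beta(v)\setminus\{\rho\}$. The extra closure condition you mention is not part of the definition of the set (and in any case follows for free, since $R'\setminus\{\rho\}\subseteq R(\mu)$ and $R(\mu)$ is closed). Up to that point you are done.

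The ``substantive part'' you then announce is where the proposal goes wrong. You propose to show that every $m\in M^\star$ has the same type with respect to $(v,R')$ and $(u,R'\setminus\{\rho\})$ and that unsettled men get the same tentative partner, so that Observation \ref{obs:focusSettle} yields $\alpha_{v,R'}(\mu)=\alpha_{u,R'\setminus\{\rho\}}(\mu)$ and $\lambda_{v,R'}(\mu)=\lambda_{u,R'\setminus\{\rho\}}(\mu)$. This equality is false in general: for the men in $M_\rho$ the partner genuinely changes between the two states, which is precisely why the paper's introduce-node recurrence reads {\sf N}$[v,R',t_M,t_W]=\mathrm{\sf N}[u,R'\setminus\{\rho\},t_M-\diff^\alpha(M_\rho),t_W-\diff^\lambda(M_\rho)]$ rather than a plain copy. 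The correct relationship is $\alpha_{v,R'}(\mu)=\alpha_{u,R'\setminus\{\rho\}}(\mu)+\diff^\alpha(M_\rho)$ (and analogously for $\lambda$), which the paper establishes separately in Lemma \ref{lem:introSameScore} by splitting $M^\star$ into $M^\star\setminus M_\rho$ (where partners agree, Lemma \ref{lem:samePartnerNoRho}) and $M_\rho$ (where the partners on both sides are pinned to $\mu_{R'}$ and $\mu_{R'\setminus\{\rho\}}$ respectively, Lemmata \ref{lem:introPartnerV} and \ref{lem:introPartnerU}). So if you had pursued your plan for the ``colored'' inclusion ${\cal S}(v,R',t_M,t_W)\subseteq{\cal S}(u,R'\setminus\{\rho\},t_M,t_W)$, the type-and-partner agreement you hope to establish would fail, and the target statement itself would be the wrong one. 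None of this affects the lemma actually asked about, but you should recognize that the score bookkeeping is deliberately deferred to later lemmata and does not take the form of an equality.
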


\begin{proof}
Let $\mu\in{\cal S}(v,R')$. Then, $R(\mu)\cap\beta(v)=R'$. Since $\beta(u)=\beta(v)\setminus\{\rho\}$, we have that $R(\mu)\cap\beta(u)=R'\setminus\{\rho\}$. Thus, ${\cal S}(v,R')\subseteq {\cal S}(u,R'\setminus\{\rho\})$.
\end{proof}

We next show that we can add or remove $\rho$ from the closed set $R(\mu)$ of rotations of some $\mu\in{\cal S}(u,R'\setminus\{\rho\},t_M,t_W)$ and yet obtain another stable matching in ${\cal S}(u,R'\setminus\{\rho\},t_M,t_W)$.

\begin{lemma}\label{lem:introAlmostSubset}
If ${\cal S}(u,R'\setminus\{\rho\},t_M,t_W)\neq\emptyset$, we have that
\begin{itemize}
\itemsep0em 
\item if $\rho\notin R'$, then there exists $\mu\in{\cal S}(u,R',t_M,t_W)$ such that $\rho\notin R(\mu)$, and
\item if $\rho\in R'$, then there exists $\mu\in{\cal S}(u,R'\setminus\{\rho\},t_M,t_W)$ such that $\rho\in R(\mu)$.
\end{itemize}
\end{lemma}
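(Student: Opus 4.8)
The goal is to prove Lemma~\ref{lem:introAlmostSubset}, which asserts that from a nonempty class ${\cal S}(u,R'\setminus\{\rho\},t_M,t_W)$ one can extract a stable matching whose rotation set contains $\rho$ (if $\rho\in R'$) or avoids $\rho$ (if $\rho\notin R'$), while preserving the tentative scores $t_M,t_W$. The plan is to start from an arbitrary $\mu\in{\cal S}(u,R'\setminus\{\rho\},t_M,t_W)$ and explicitly construct the desired matching by toggling $\rho$ in its closed rotation set, then verify both that the construction lands in the correct state and that it preserves $\alpha$ and $\lambda$.

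First I would fix $\mu\in{\cal S}(u,R'\setminus\{\rho\},t_M,t_W)$, so that $R(\mu)\cap\beta(u)=R'\setminus\{\rho\}$ and $R(\mu)=\cl(R(\mu))$ is closed. Consider the two cases. If $\rho\notin R'$, I claim $\mu$ itself (or a suitable modification) already works: since we are in the branch where $\cl(R')\cap\beta(v)\subseteq R'$ and $\rho\notin R'$, we have $\rho\notin\cl(R')$, and I would argue $\rho\notin R(\mu)$ directly because $R(\mu)\cap\beta(u)=R'\setminus\{\rho\}=R'$ does not contain $\rho$, and the only obstruction to $\rho\in R(\mu)$ would be a rotation $\rho'\succ\rho$ forced into $R(\mu)$ — but I would show closedness plus the bag-separation property of $(T,\beta)$ prevents this. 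If $\rho\in R'$, I would instead define $\widehat{\mu}=\mu_{\cl(R(\mu)\cup\{\rho\})}$, i.e.\ the stable matching obtained by adding $\rho$ to the rotation set and taking the closure; by Proposition~\ref{lem:smCorrRots} this is a well-defined stable matching with $R(\widehat\mu)=\cl(R(\mu)\cup\{\rho\})$, and I must check that $R(\widehat\mu)\cap\beta(u)=R'\setminus\{\rho\}$ still holds, which is where the assumption $\cl(\rho)\cap\beta(v)\subseteq R'$ is used: any newly added rotation in $\cl(R(\mu)\cup\{\rho\})\setminus R(\mu)$ lies in $\cl(\rho)$, so within $\beta(v)$ it is already in $R'$, and the separator structure keeps $\beta(u)$'s intersection unchanged.

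The score-preservation part is the technical heart. I would invoke Observation~\ref{obs:focusSettle}: to conclude $\alpha_{u,R'\setminus\{\rho\}}(\widehat\mu)=t_M$ and $\lambda_{u,R'\setminus\{\rho\}}(\widehat\mu)=t_W$, it suffices that every $m\in M^\star$ has the same type with respect to the two states involved and that, for each unsettled $m$, $\mu_{R(\mu)}(m)=\mu_{R(\widehat\mu)}(m)$ — equivalently $\mu_{R'\setminus\{\rho\}}$ assigns $m$ the same partner under both rotation sets. The key point is that adding $\rho$ and its closure only toggles rotations in $\cl(\rho)$, and for an unsettled man $m$ whose relevant subpath $P_{u,R'\setminus\{\rho\}}(m)$ escapes $\gamma(u)$, these toggled rotations either all lie in $\cl(R'\setminus\{\rho\})\cap V(P(m))$ already or lie entirely outside the portion of $P(m)$ that governs $\partner$, so $m$'s tentative partner is unaffected. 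I expect this verification — showing the toggle of $\rho$ does not disturb the tentative partner of any unsettled man, and does not change the type of any man — to be the main obstacle, since it requires carefully combining Lemma~\ref{lem:helpClassify} (the escaping subpath meets $\gamma(u)$ only at $\ell$ or $f$), Proposition~\ref{lem:totalOrder} (total order on $R(m)$), and the separator property from Proposition~\ref{lem:tree} to control exactly which rotations along each man's path get flipped.

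Finally I would assemble these observations into the two bulleted conclusions of the lemma, stating in each case the explicit witness ($\mu$ or $\widehat\mu=\mu_{\cl(R(\mu)\cup\{\rho\})}$) together with the verified membership and score equalities, thereby completing the proof.
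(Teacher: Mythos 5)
Your construction in the case $\rho\in R'$ (add $\cl(\rho)$ to $R(\mu)$ and invoke Proposition \ref{lem:smCorrRots}) matches the paper's, but there are two genuine gaps. First, in the case $\rho\notin R'$ your claim that $\rho\notin R(\mu)$ holds ``directly'' is false: the introduced rotation $\rho$ lies in $\beta(v)\setminus\beta(u)$, so the condition $R(\mu)\cap\beta(u)=R'$ says nothing about whether $\rho\in R(\mu)$ --- a matching in ${\cal S}(u,R',t_M,t_W)$ may well satisfy $\cl(\rho)\subseteq R(\mu)$ while still meeting $\beta(u)$ in exactly $R'$, which is precisely why this bullet of the lemma is not vacuous. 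The paper instead builds $\mu^*$ with $R(\mu^*)=R(\mu)\setminus\widehat{R}$, where $\widehat{R}=\{\widehat{\rho}\in R:\rho\in\cl(\widehat{\rho})\}$ is the up-set of $\rho$, checks that this set is closed and still meets $\beta(u)$ in $R'$, and then does the real work of showing the removal does not disturb any settled man's partner. Your ``obstruction'' remark gestures at something else entirely and does not supply this construction.

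Second, in both cases your score-preservation argument is aimed at the wrong men. The two matchings $\mu$ and $\mu^*$ are compared at the \emph{same} state, so a man's type, and for unsettled men his tentative partner $\mu_{R'\setminus\{\rho\}}(m)$, depend only on the state and are unchanged for free; Observation \ref{obs:focusSettle}, which compares one matching across two states, is not the relevant tool (Observation \ref{obs:focusPartner} is). The nontrivial obligation is for \emph{settled} men, whose partner is $\mu(m)$ itself: one must show that toggling $\cl(\rho)$ (or removing $\widehat{R}$) does not change the last rotation of the rotation set on $P(m)$. This is where Proposition \ref{lem:tree} and the standing hypothesis $\cl(R')\cap\beta(v)\subseteq R'$ are actually used --- a directed path in $D_\Pi$ between $\rho$ and that last rotation would have to cross $\beta(u)$ in a vertex of $R'$, yielding a contradiction. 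Your sketch verifies invariance only where it is automatic and is silent on the step that carries the proof.
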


\begin{proof}
Assume that ${\cal S}(u,R'\setminus\{\rho\},t_M,t_W)\neq\emptyset$. Then, there exists $\mu\in {\cal S}(u,R'\setminus\{\rho\},t_M,t_W)$. First, suppose that $\rho\notin R'$. Denote $\widehat{R}=\{\widehat{\rho}\in R: \rho\in\cl(\widehat{\rho})\}$. Then, $R^*=R(\mu)\setminus\widehat{R}$ is a closed set. Thus, by Proposition \ref{lem:smCorrRots}, there exists $\mu^*\in{\cal S}$ such that $R(\mu^*)=R^*$. Since $R(\mu)\cap\beta(u)=R'$, $\beta(u)=\beta(v)\setminus\{\rho\}$, $\cl(R')\cap\beta(v)\subseteq R'$ and $\rho\notin R'$, we have that $R^*\cap\beta(u)=R'$, and therefore $\mu^*\in{\cal S}(u,R')$. It remains to show that $\alpha_{u,R'}(\mu^*)=\alpha_{u,R'}(\mu)$ and $\lambda_{u,R'}(\mu^*)=\lambda_{u,R'}(\mu)$. Note that the type of a man is defined with respect to a state. Thus, by Definition \ref{def:manPartner} and Observation \ref{obs:focusPartner}, it remains to show that for every settled man $m$, it holds that $\mu^*(m)=\mu(m)$. Let $m$ be a settled man. Then, the last vertex on $P(m)$ that belongs to $R(\mu)$, denoted by $\rho'$, also belongs to $\gamma(u)$. To show that $\mu^*(m)=\mu(m)$, it is sufficient to show that $\rho'\in R^*$. To this end, it is sufficient to show that $\rho\notin\cl(\rho')$. Suppose, by way of contradiction, that this claim is false. Then, there exists a directed path in $D_\Pi$ from $\rho$ to $\rho'$. Since $\rho\notin\gamma(u)$ and $\rho'\in\gamma(u)$, Proposition \ref{lem:tree} implies that this path contains a vertex $\widehat{\rho}\in\beta(u)$. Since $\mu\in{\cal S}(u,R')$, we have that $\widehat{\rho}\in R'$. This contradicts the facts that $\cl(R')\cap\beta(v)\subseteq R'$ and $\rho\notin R'$.

Next, suppose that $\rho\in R'$. Moreover, suppose that $\rho\notin R(\mu)$, else we are done. Note that $R^*=R(\mu)\cup\cl(\rho)$ is a closed set. Thus, by Proposition \ref{lem:smCorrRots}, there exists $\mu^*\in{\cal S}$ such that $R(\mu^*)=R^*$. Since $R(\mu)\cap\beta(u)=R'$, $\beta(u)=\beta(v)\setminus\{\rho\}$ and $\cl(\rho)\cap\beta(v)\subseteq R'$, we have that $R^*\cap\beta(u)=R'\setminus\{\rho\}$, and therefore $\mu^*\in{\cal S}(u,R'\setminus\{\rho\})$. It remains to show that $\alpha_{u,R'\setminus\{\rho\}}(\mu^*)=\alpha_{u,R'\setminus\{\rho\}}(\mu)$ and $\lambda_{u,R'\setminus\{\rho\}}(\mu^*)=\lambda_{u,R'\setminus\{\rho\}}(\mu)$. Again, for this purpose, it is sufficient to let $m$ be some settled man, and prove that $\mu^*(m)=\mu(m)$. Let $\rho'$ denote the last vertex on $P(m)$ that belongs to $R^*$. Then, $\rho'\in\gamma(u)$. We need to show that $\rho'\in R(\mu)$. To this end, it is sufficient to show that $\rho'\notin\cl(\rho)\setminus R(\mu)$. Suppose, by way of contradiction, that this claim is false. Then, there exists a directed path in $D_\Pi$ from $\rho'$ to $\rho$. Since $\rho\notin\gamma(u)$ and $\rho'\in\gamma(u)$, Proposition \ref{lem:tree} implies that this path contains a vertex $\widehat{\rho}\in\beta(u)$. Since $\mu^*\in{\cal S}(u,R')$, we have that $\widehat{\rho}\in R'$. Thus, since $\mu\in{\cal S}(u,R')$, we have that $\widehat{\rho}\in R(\mu)$. However, since $\rho'\in\cl(\widehat{\rho})$ and $R(\mu)$ is a closed set, this is impossible.
\end{proof}

We proceed to argue about equivalence between partners of men in $M^\star\setminus M_\rho$.

\begin{lemma}\label{lem:samePartnerNoRho}
Let $m\in M^\star\setminus M_\rho$ and $\mu\in {\cal S}(v,R')\cap{\cal S}(u,R'\setminus\{\rho\})$. Then, the partners of $m$ with respect to $(v,R',\mu)$ and $(u,R'\setminus\{\rho\},\mu)$ are identical.
\end{lemma}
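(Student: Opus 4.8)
The goal is to show that for a man $m \in M^\star \setminus M_\rho$ and a stable matching $\mu$ lying in both ${\cal S}(v,R')$ and ${\cal S}(u,R'\setminus\{\rho\})$, the partner of $m$ is the same whether computed with respect to the state $(v,R')$ or the state $(u,R'\setminus\{\rho\})$. By Definition~\ref{def:manPartner}, the partner depends only on (a) the \emph{type} of $m$ (settled versus unsettled) with respect to the state, and (b) in the unsettled case, on $\mu_{R'}(m)$ versus $\mu_{R'\setminus\{\rho\}}(m)$. So the plan is to split into the two type-cases and handle each.

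\medskip
\noindent\textbf{Type consistency.} First I would argue that $m$ has the same type with respect to $(v,R')$ and $(u,R'\setminus\{\rho\})$. The key observation is that $\gamma(v)=\gamma(u)$ for an introduce node (introducing $\rho$ into the bag does not change the union of all bags at or below the node, since $\rho$ already appears in $\gamma(u)$ via descendants or is newly relevant only at $v$). Thus the gross conditions $V(P(m))\subseteq\gamma(v)$ and $V(P(m))\cap\gamma(v)=\emptyset$ transfer verbatim. The delicate conditions involve $P_{v,R'}(m)$ versus $P_{u,R'\setminus\{\rho\}}(m)$, which in turn depend on $\ell$ and $f$. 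Here I would invoke the \emph{defining property of $M_\rho$}: since $m\notin M_\rho$, either $\rho_m\notin R\setminus\gamma(u)$ or $\ell_{v,R'}(m)\notin\cl(\rho_m)\cup\{\nil\}$. I expect this membership condition is precisely engineered so that adding $\rho$ to the bag (with $\rho\in R'$) does \emph{not} shift $\ell$ or $f$ along $P(m)$ in a way that changes which vertices of $P_{v,R'}(m)$ fall inside $\gamma(v)=\gamma(u)$. I would verify that $\rho\notin V(P_{v,R'}(m))$, or that even if $\rho$ lies on $P(m)$, it is not the vertex that becomes the new $\ell$ or $f$ endpoint. This gives $P_{v,R'}(m)=P_{u,R'\setminus\{\rho\}}(m)$ (or at least the same intersection with $\gamma$), forcing the types to agree.

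\medskip
\noindent\textbf{Partner equality in each case.} If $m$ is settled in both states, then by Definition~\ref{def:manPartner} its partner is $\mu(m)$ in both, which is well defined by Proposition~\ref{lem:matchSame} and independent of the state; so the partners coincide trivially. If $m$ is unsettled in both states, then the partners are $\mu_{R'}(m)$ and $\mu_{R'\setminus\{\rho\}}(m)$ respectively, and I must show these are equal. I would argue that for an unsettled man $m\notin M_\rho$, the closed set used to define its partner does not ``see'' the rotation $\rho$: concretely, $\cl(\rho)\cap V(P(m))\subseteq\cl(R'\setminus\{\rho\})\cap V(P(m))$, or more directly that the last vertex of $P(m)$ lying in $\cl(R')$ equals the last vertex lying in $\cl(R'\setminus\{\rho\})$. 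Since the partner under $\mu_{R'}$ is determined by eliminating exactly the rotations of $\cl(R')$ that lie on $P(m)$ up to this last vertex, and since $m\notin M_\rho$ certifies that $\rho$ is irrelevant to $m$'s position on its path, the eliminations affecting $m$ are identical in both closed sets, yielding $\mu_{R'}(m)=\mu_{R'\setminus\{\rho\}}(m)$.

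\medskip
\noindent\textbf{Anticipated obstacle.} The main difficulty will be the type-consistency step, specifically pinning down exactly why the condition $m\notin M_\rho$ guarantees that $\ell$ and $f$ (and hence $P_{v,R'}(m)$) are unaffected by the presence of $\rho$ in the bag. This requires carefully unwinding the definition $M_\rho=\{m: \rho_m\in R\setminus\gamma(u),\ \ell_{v,R'}(m)\in\cl(\rho_m)\cup\{\nil\}\}$ and using Proposition~\ref{lem:tree} to control where $P(m)$ can re-enter the bag relative to $\rho$. I would structure the argument by cases on whether $\rho$ lies on $P(m)$ at all, and if so, whether it falls before $\ell_{v,R'}(m)$, strictly between the $\ell$ and $f$ markers, or after $f_{v,R'}(m)$; the $M_\rho$-exclusion should rule out the only problematic configuration. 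Everything else (the settled case, the well-definedness) is immediate from the cited propositions.
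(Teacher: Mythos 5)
There is a genuine gap in your plan, and it sits exactly where you anticipated trouble: the ``type consistency'' step. It is \emph{not} true that $m\notin M_\rho$ forces $m$ to have the same type with respect to $(v,R')$ and $(u,R'\setminus\{\rho\})$, and the exclusion from $M_\rho$ is not the mechanism that rules out the problematic configuration. Consider the case $\rho\notin R'$ (where $M_\rho=\emptyset$, so the lemma must hold for \emph{every} $m\in M^\star$, and $R'\setminus\{\rho\}=R'$). Here $\ell_{v,R'}(m)=\ell_{u,R'}(m)$, but since $\rho\in\beta(v)\setminus R'$ it is a legitimate candidate for $f_{v,R'}(m)$: if $\rho$ lies on $P(m)$ as the immediate successor of $\ell_{v,R'}(m)$ (or as the first vertex of $P(m)$), then $f_{v,R'}(m)=\rho$ while $f_{u,R'}(m)$ is some later vertex. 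In that situation $P_{v,R'}(m)$ is just the arc from $\ell_{v,R'}(m)$ to $\rho$, entirely inside $\gamma(v)$, so $m$ is \emph{settled} at $v$; but $P_{u,R'}(m)$ runs past $\rho\notin\gamma(u)$, so $m$ is \emph{unsettled} at $u$. The types genuinely flip, and no appeal to $m\notin M_\rho$ prevents this.

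The paper's proof takes the opposite route: it assumes the types differ and shows the partners coincide anyway. The only way the types can differ is the configuration above ($f_{v,R'}(m)=\rho$ preceding $f_{u,R'\setminus\{\rho\}}(m)$), and Proposition \ref{lem:tree} together with $\rho\notin\gamma(u)$ forces $\rho$ to be either the first vertex of $P(m)$ or the out-neighbor of $\ell_{v,R'}(m)$; in that geometry one gets $\mu(m)=\mu_{R'}(m)$, so the settled-partner formula and the unsettled-partner formula return the same woman. So the correct use of the path-re-entry argument is to prove \emph{partner equality despite a type mismatch}, not to prove type equality. Your treatment of the two matched-type cases (settled/settled trivially, unsettled/unsettled via showing the last vertex of $P(m)$ in $\cl(R')$ and in $\cl(R'\setminus\{\rho\})$ coincide) is fine and is, if anything, spelled out more carefully than in the paper; but as written your argument never reaches the mixed case, which is the crux of the lemma.
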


\begin{proof}
Suppose that $m$ has different types with respect to $(v,R',\mu)$ and $(u,R'\setminus\{\rho\},\mu)$, else we are done.
Since $m\notin M_\rho$ and $\beta(v)=\beta(u)\cup\{\rho\}$, we have that $\ell_{v,R'}(m)=\ell_{u,R'\setminus\{\rho\}}(m)$. Then, $f_{v,R'}(m)\neq f_{u,R'\setminus\{\rho\}}(m)$, else $m$ has the same type with respect to $(v,R',\mu)$ and $(u,R'\setminus\{\rho\},\mu)$. In this case, it must hold that $f_{v,R'}(m)=\rho$ precedes $f_{u,R'\setminus\{\rho\}}(m)$ on $P(m)$. By Proposition \ref{lem:tree} and since $\rho\notin\gamma(u)$, we deduce that either $\rho$ is the first vertex on $P(m)$ or $\rho$ is the outgoing neighbor of $\ell_{v,R'}(m)$ on $P(m)$. Overall, this implies that $\mu(m)=\mu_{R'}(m)$. Therefore, regardless of the types of $m$ with respect to $(v,R',\mu)$ and $(u,R'\setminus\{\rho\},\mu)$, it holds that the partners of $m$ with respect to $(v,R',\mu)$ and $(u,R'\setminus\{\rho\},\mu)$ are identical.
\end{proof}

In case $\rho\notin R'$, it holds that $M_\rho=\emptyset$ and $(u,R'\setminus\{\rho\})=(u,R')$. Thus, we obtain the following corollary to Lemma~\ref{lem:samePartnerNoRho}.

\begin{corollary}\label{cor:samePartnerNoRho}
Suppose that $\rho\notin R'$, and let $m\in M^\star$ and $\mu\in {\cal S}(v,R')\cap{\cal S}(u,R')$. Then, the partners of $m$ with respect to $(v,R',\mu)$ and $(u,R',\mu)$ are identical.
\end{corollary}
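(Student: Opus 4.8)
The plan is to deduce the corollary directly from Lemma~\ref{lem:samePartnerNoRho}, exploiting the two simplifications that the hypothesis $\rho\notin R'$ produces. First I would observe that when $\rho\notin R'$ we have $R'\setminus\{\rho\}=R'$, so the child state $(u,R'\setminus\{\rho\})$ occurring in the statement of Lemma~\ref{lem:samePartnerNoRho} is literally the state $(u,R')$ named in the corollary; in particular ${\cal S}(u,R'\setminus\{\rho\})={\cal S}(u,R')$, so the set ${\cal S}(v,R')\cap{\cal S}(u,R')$ from which $\mu$ is drawn coincides with the set ${\cal S}(v,R')\cap{\cal S}(u,R'\setminus\{\rho\})$ required by the lemma, and there is nothing to translate between the two formulations of the child state.

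Second, I would recall that in the branch $\rho\notin R'$ of the introduce-node computation (Case~\ref{case:introEasy}) we set $M_\rho=\emptyset$. Consequently $M^\star\setminus M_\rho=M^\star$, so the restriction $m\in M^\star\setminus M_\rho$ in the hypothesis of Lemma~\ref{lem:samePartnerNoRho} is vacuous in this case, and the lemma therefore applies to \emph{every} man $m\in M^\star$, which is exactly the quantifier range demanded by the corollary.

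Combining these two observations, I would invoke Lemma~\ref{lem:samePartnerNoRho} for each $m\in M^\star$ and each $\mu\in{\cal S}(v,R')\cap{\cal S}(u,R')$, concluding that the partner of $m$ with respect to $(v,R',\mu)$ equals its partner with respect to $(u,R',\mu)$, as claimed. I expect essentially no obstacle here: all of the genuine work lives inside Lemma~\ref{lem:samePartnerNoRho}, whose proof already handles the only delicate point, namely that when $\ell_{v,R'}(m)=\ell_{u,R'\setminus\{\rho\}}(m)$ but the forward vertex $f$ changes because $\rho$ is introduced as $f_{v,R'}(m)$, one still has $\mu(m)=\mu_{R'}(m)$, so the two partners agree regardless of whether $m$ is settled. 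The corollary merely records the bookkeeping facts that $\rho\notin R'$ forces $M_\rho=\emptyset$ and collapses $R'\setminus\{\rho\}$ to $R'$; once these are stated, the result is immediate.
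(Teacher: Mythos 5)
Your proposal is correct and matches the paper exactly: the paper derives the corollary from Lemma~\ref{lem:samePartnerNoRho} by the same two observations, namely that $\rho\notin R'$ forces $M_\rho=\emptyset$ (so the quantifier over $M^\star\setminus M_\rho$ becomes all of $M^\star$) and that $(u,R'\setminus\{\rho\})=(u,R')$.
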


Next, we prove that the computation for Case \ref{case:introEasy} (i.e when $\rho \notin R'$) is correct.

\begin{lemma}\label{lem:introEasy}
If $\rho\notin R'$, then ${\cal S}(v,R',t_M,t_W)\neq\emptyset$ if and only ${\cal S}(u,R',t_M,t_W)\neq\emptyset$.
\end{lemma}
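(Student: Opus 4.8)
The goal is to prove Lemma \ref{lem:introEasy}: when $\rho \notin R'$, we have ${\cal S}(v,R',t_M,t_W) \neq \emptyset$ if and only if ${\cal S}(u,R',t_M,t_W) \neq \emptyset$. This equivalence justifies the computation in Case \ref{case:introEasy}, where the entry at $v$ is simply copied from the entry at $u$. The plan is to establish the two directions separately, using the structural lemmas already proved in this subsection—specifically Lemmas \ref{lem:introSubset} and \ref{lem:introAlmostSubset}, and Corollary \ref{cor:samePartnerNoRho}—which together isolate exactly what needs to be checked.

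**First direction.**
First I would assume ${\cal S}(v,R',t_M,t_W)\neq\emptyset$ and take some $\mu$ in this set. By Lemma \ref{lem:introSubset}, ${\cal S}(v,R')\subseteq{\cal S}(u,R'\setminus\{\rho\})$, and since $\rho\notin R'$ we have $R'\setminus\{\rho\}=R'$, so $\mu\in{\cal S}(u,R')$. It then remains to verify that the scores are preserved, i.e.\ that $\alpha_{u,R'}(\mu)=t_M$ and $\lambda_{u,R'}(\mu)=t_W$. Since $\mu\in{\cal S}(v,R')\cap{\cal S}(u,R')$ and $\rho\notin R'$, Corollary \ref{cor:samePartnerNoRho} gives that for every $m\in M^\star$ the partner of $m$ with respect to $(v,R',\mu)$ equals the partner with respect to $(u,R',\mu)$. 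By Observation \ref{obs:focusPartner}, this immediately yields $\alpha_{v,R'}(\mu)=\alpha_{u,R'}(\mu)$ and $\lambda_{v,R'}(\mu)=\lambda_{u,R'}(\mu)$. Since the left-hand sides equal $t_M$ and $t_W$, we conclude $\mu\in{\cal S}(u,R',t_M,t_W)$, so this set is nonempty.

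**Second direction.**
Conversely, I would assume ${\cal S}(u,R',t_M,t_W)\neq\emptyset$. Because $\rho\notin R'$, Lemma \ref{lem:introAlmostSubset} applies to produce a matching $\mu\in{\cal S}(u,R',t_M,t_W)$ with the additional guarantee that $\rho\notin R(\mu)$. The point of invoking this lemma rather than the mere nonemptiness is that we need a representative whose rotation set avoids $\rho$, so that it genuinely corresponds to the state $(v,R')$ at the introduce node. Since $R(\mu)\cap\beta(u)=R'$, $\beta(v)=\beta(u)\cup\{\rho\}$, and $\rho\notin R(\mu)$, we obtain $R(\mu)\cap\beta(v)=R'$, hence $\mu\in{\cal S}(v,R')$. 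Finally, as before, $\mu\in{\cal S}(v,R')\cap{\cal S}(u,R')$ with $\rho\notin R'$, so Corollary \ref{cor:samePartnerNoRho} and Observation \ref{obs:focusPartner} give $\alpha_{v,R'}(\mu)=\alpha_{u,R'}(\mu)=t_M$ and $\lambda_{v,R'}(\mu)=\lambda_{u,R'}(\mu)=t_W$, so $\mu\in{\cal S}(v,R',t_M,t_W)$.

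**Where the difficulty lies.**
The two directions are short once the supporting lemmas are in hand; the genuinely delicate point—and the reason this case is the \emph{easy} one—is that the scores are preserved automatically. The main obstacle is really confined to Corollary \ref{cor:samePartnerNoRho}, which asserts partner-equivalence for \emph{every} man (not just those outside $M_\rho$) precisely because $\rho\notin R'$ forces $M_\rho=\emptyset$. I would make sure the proof cleanly reduces to that corollary and to the fact that $\rho\notin R(\mu)$ (supplied by Lemma \ref{lem:introAlmostSubset}) rather than attempting any direct manipulation of the $\alpha$ and $\lambda$ sums. The harder analysis—tracking how the $\diff^\alpha$ and $\diff^\lambda$ corrections enter when partners genuinely change—is deferred to the companion lemma handling Case \ref{case:introSep} (where $\rho\in R'$), which is where the real technical content of the introduce node resides.
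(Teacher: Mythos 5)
Your proof is correct and follows essentially the same route as the paper: Lemma \ref{lem:introSubset} plus Corollary \ref{cor:samePartnerNoRho} and Observation \ref{obs:focusPartner} for the forward direction, and Lemma \ref{lem:introAlmostSubset} (to obtain a representative with $\rho\notin R(\mu)$) plus the same partner-equivalence argument for the converse. The additional remarks on why the scores transfer automatically when $M_\rho=\emptyset$ are accurate but not needed beyond what the cited corollary already provides.
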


\begin{proof}
Assume that $\rho\notin R'$. First, suppose ${\cal S}(v,R',t_M,t_W)\neq\emptyset$. Then, there exists $\mu\in{\cal S}(v,R',t_M,t_W)$. By Lemma~\ref{lem:introSubset}, we have that $\mu\in{\cal S}(u,R')$. By Corollary \ref{cor:samePartnerNoRho} and Observation \ref{obs:focusPartner}, we obtain that $\alpha_{u,R'}(\mu)=\alpha_{v,R'}(\mu)$ and $\lambda_{u,R'}(\mu)=\lambda_{v,R'}(\mu)$, and therefore $\mu\in{\cal S}(u,R',t_M,t_W)$.

Now, suppose ${\cal S}(u,R',t_M,t_W)\neq\emptyset$. Then, by Lemma \ref{lem:introAlmostSubset}, there exists $\mu\in{\cal S}(u,R',t_M,t_W)$ such that $\rho\notin R(\mu)$. Since $R(\mu)\cap\beta(u)=R'$, $\rho\notin R(\mu)\cup R'$ and $\beta(v)=\beta(u)\cup\{\rho\}$, we have that $\mu\in{\cal S}(v,R')$. By Corollary \ref{cor:samePartnerNoRho} and Observation \ref{obs:focusPartner}, we obtain that $\alpha_{v,R'}(\mu)=\alpha_{u,R'}(\mu)$ and $\lambda_{v,R'}(\mu)=\lambda_{u,R'}(\mu)$, and therefore $\mu\in{\cal S}(v,R',t_M,t_W)$.
\end{proof}


Next, we suppose that $\rho\in R'$, and our objective is to prove that the computation in Case \ref{case:introSep} is correct. For this purpose, by the inductive hypothesis, it is sufficient to show that ${\cal S}(v,R',t_M,t_W)\neq\emptyset$ if and only ${\cal S}(u,R'\setminus\{\rho\},t_M-\diff^\alpha(M_\rho),t_W-\diff^\lambda(M_\rho))\neq\emptyset$. To proceed, we need two arguments concerning equivalence between partners of men in $M_\rho$..

\begin{lemma}\label{lem:introPartnerV}
Let $m\in M_\rho$ and $\mu\in {\cal S}(v,R')$. The partners of $m$ with respect to $(v,R',\mu)$ and $(v,R',\mu_{R'})$ are identical.
\end{lemma}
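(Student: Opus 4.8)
The plan is to exploit that both partners are evaluated at the \emph{same} state $(v,R')$, so the type (settled/unsettled) of $m$ is identical in the two comparisons and the only real content is to relate $\mu$ and $\mu_{R'}$. I would first dispose of the easy case. If $m$ is unsettled, then by Definition~\ref{def:manPartner} we have $\partner_{v,R',\mu}(m)=\mu_{R'}(m)$, and the value $\mu_{R'}(m)$ prescribed by case~2 is the same whether the subscript matching is $\mu$ or $\mu_{R'}$; hence $\partner_{v,R',\mu}(m)=\partner_{v,R',\mu_{R'}}(m)$ trivially. (Recall $\mu_{R'}\in{\cal S}(v,R')$ because in this branch $\cl(R')\cap\beta(v)\subseteq R'$, so $\cl(R')\cap\beta(v)=R'$.) The lemma therefore reduces to the settled case, where $\partner_{v,R',\mu}(m)=\mu(m)$ and $\partner_{v,R',\mu_{R'}}(m)=\mu_{R'}(m)$, so it suffices to prove $\mu(m)=\mu_{R'}(m)$.

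Next I would pin down the location of $\rho$ on $P(m)$ using the defining conditions of $M_\rho$. Since $\rho$ is introduced at $v$, $\gamma(v)\setminus\gamma(u)=\{\rho\}$. For settled $m$, either $V(P(m))\subseteq\gamma(v)$ or $m$ is relevant and $V(P_{v,R'}(m))\subseteq\gamma(v)$; in both cases I would argue $\rho_m\in\gamma(v)$, and since membership in $M_\rho$ forces $\rho_m\in R\setminus\gamma(u)$, this gives $\rho_m=\rho$. For the second subcase this requires showing $\rho_m$ actually lies on the subpath $P_{v,R'}(m)$: from $M_\rho$ we have $\ell_{v,R'}(m)\preceq\rho_m$, while $f_{v,R'}(m)\preceq\rho_m\preceq\rho$ would put $f_{v,R'}(m)\in\cl(\rho)\subseteq\cl(R')$, so $f_{v,R'}(m)\in\cl(R')\cap\beta(v)=R'$, contradicting $f_{v,R'}(m)\in\beta(v)\setminus R'$. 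Combining $\rho=\rho_m\in R'\cap V(P(m))$ with the $M_\rho$ condition $\ell_{v,R'}(m)\in\cl(\rho_m)\cup\{\nil\}$ and the maximality defining $\ell_{v,R'}(m)$, I would then conclude $\ell_{v,R'}(m)=\rho_m=\rho$.

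The heart of the argument is to show $\mu$ and $\mu_{R'}$ assign the same partner to $m$. Since $\prec$ is a total order on $R(m)$ (Proposition~\ref{lem:totalOrder}) and $P(m)$ lists its vertices in $\prec$-increasing order, the partner of $m$ under a closed set $C$ of rotations is governed by the $\prec$-largest rotation of $R(m)$ lying in $C$. As $\cl(R')\subseteq R(\mu)$, it suffices to rule out a rotation $\tau\in R(m)\cap R(\mu)$ with $\tau\notin\cl(R')$. Assuming such a $\tau$, downward-closure of $\cl(R')$ together with $\ell_{v,R'}(m)=\rho\in R'$ forces $\rho\prec\tau$; meanwhile $R(\mu)\cap\beta(v)=R'$, combined with the fact (immediate from Definition~\ref{def:lf}) that no vertex of $\beta(v)$ lies strictly between $\ell_{v,R'}(m)$ and $f_{v,R'}(m)$ on $P(m)$, forces $\tau\notin\beta(v)$ and $\tau\prec f_{v,R'}(m)$; settledness then places $\tau\in\gamma(v)\setminus\beta(v)\subseteq\gamma(u)$.

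I expect the crux to be deriving a contradiction from the coexistence of $\rho\prec\tau$ with $\rho\in\beta(v)\setminus\gamma(u)$ and $\tau\in\gamma(u)$. Here I would apply Proposition~\ref{lem:tree} to the connected directed subpath of $P(m)$ from $\rho$ to $\tau$: the associated set of nodes of $T$ is a subtree containing both $v$ (since $\rho\in\beta(v)$) and a descendant of $u$ (since $\tau\in\gamma(u)$), hence it contains $u$, so some vertex $z$ of this subpath lies in $\beta(u)\subseteq\beta(v)$. But then $z\in\beta(v)$ satisfies $\rho\prec z\preceq\tau\prec f_{v,R'}(m)$, i.e.\ $z$ lies strictly between $\ell_{v,R'}(m)=\rho$ and $f_{v,R'}(m)$, contradicting the absence of such $\beta(v)$-vertices. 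This rules out $\tau$, giving $\mu(m)=\mu_{R'}(m)$. The only bookkeeping subtlety I anticipate is the case $f_{v,R'}(m)=\nil$, where ``$\prec f_{v,R'}(m)$'' is replaced by ``after $\rho$ on $P(m)$'' and the same contradiction is drawn from $\ell_{v,R'}(m)=\rho$ being the last $R'$-vertex on $P(m)$.
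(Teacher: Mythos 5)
Your proof is correct and follows essentially the same route as the paper's: dispose of the unsettled and equal-partner cases, use the $M_\rho$ conditions to pin down $\ell_{v,R'}(m)=\rho_m=\rho$, and then derive a tree-decomposition contradiction from a rotation of $R(\mu)\setminus\cl(R')$ on $P(m)$ that settledness forces into $\gamma(v)\setminus\beta(v)\subseteq\gamma(u)$. The only difference is the finishing move---the paper observes that the edge of $G_\Pi$ between $\rho$ and its successor $\rho'$ on $P(m)$ can then be covered by no bag, violating Definition~\ref{def:treewidth}, whereas you invoke Proposition~\ref{lem:tree} on the subpath from $\rho$ to the witness $\tau$ to force a $\beta(u)$-vertex strictly between $\ell_{v,R'}(m)$ and $f_{v,R'}(m)$, violating Definition~\ref{def:lf}---but these are two renderings of the same separation argument.
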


\begin{proof}
If $\mu(m)=\mu_{R'}(m)$, we necessarily have that $\partner_{v,R',\mu}(m)=\partner_{v,R',\mu_{R'}}(m)$. Moreover, if $m$ is unsettled with respect to $(v,R')$, then again $\partner_{v,R',\mu}(m)=\partner_{v,R',\mu_{R'}}(m)$. Now, we suppose that $\mu(m)\neq\mu_{R'}(m)$ and $m$ is settled (with respect to $(v,R')$), and show that this supposition leads to a contradiction.
The first assumption implies that $(R(\mu)\cap V(P(m)))\setminus\cl(R')\neq\emptyset$. However, since $m$ is settled and $\mu\in{\cal S}(v,R')$, from Proposition \ref{lem:tree} we deduce that $(R(\mu)\cap V(P(m)))\setminus\cl(R')\subseteq\gamma(v)$. Since $m\in M_\rho$, we have that $\ell_{v,R'}(m)\in\cl(\rho_m)\cup\{\nil\}$, and therefore $\rho_m\in\gamma(u)\cup\{\ell_{v,R'}(m)\}$. However, since $m\in M_\rho$, it holds that $\rho_m\in R\setminus\gamma(u)$, and therefore $\rho_m=\ell_{v,R'}(m)$. Since $(R(\mu)\cap V(P(m)))\setminus\cl(R')\neq\emptyset$, $R(\mu)$ contains the outgoing neighbor $\rho'$ of $\rho$ on $P(m)$. Since $\ell_{v,R'}(m)=\rho$, we have that $\rho'\notin\beta(v)$. Since $m$ is settled, we deduce that $\rho'\in\gamma(v)\setminus\beta(v)$. However, since $v$ is an introduce node, from Property \ref{item:twconnected} in Definition \ref{def:treewidth} we conclude that there does not exist $u\in V(T)$ such that $\rho,\rho'\in\beta(u)$. This conclusion contradicts Property \ref{item:twedge} in Definition \ref{def:treewidth} (since $\rho,\rho$ are neighbors in $G_\Pi$).
\end{proof}

\begin{lemma}\label{lem:introPartnerU}
Let $m\in M_\rho$ and $\mu\in {\cal S}(u,R'\setminus\{\rho\})$. The partners of $m$ with respect to $(u,R'\setminus\{\rho\},\mu)$ and $(u,R'\setminus\{\rho\},\mu_{R'\setminus\{\rho\}})$ are identical.
\end{lemma}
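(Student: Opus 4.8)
The statement to prove, Lemma \ref{lem:introPartnerU}, is the mirror of Lemma \ref{lem:introPartnerV}: we must show that for $m\in M_\rho$ and $\mu\in{\cal S}(u,R'\setminus\{\rho\})$, the partner of $m$ with respect to $(u,R'\setminus\{\rho\},\mu)$ equals its partner with respect to $(u,R'\setminus\{\rho\},\mu_{R'\setminus\{\rho\}})$. The plan is to follow the skeleton of the proof of Lemma \ref{lem:introPartnerV}, but now working in the child node $u$ (with bag $\beta(u)=\beta(v)\setminus\{\rho\}$ and set $R'\setminus\{\rho\}$) instead of in $v$. First I would dispose of the two trivial cases: if $\mu(m)=\mu_{R'\setminus\{\rho\}}(m)$ then the partners agree by Definition \ref{def:manPartner} regardless of type, and if $m$ is \emph{unsettled} with respect to $(u,R'\setminus\{\rho\})$ then both partners equal $\mu_{R'\setminus\{\rho\}}(m)$ by the second clause of Definition \ref{def:manPartner}. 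So the whole content is in deriving a contradiction from the assumption that $\mu(m)\neq\mu_{R'\setminus\{\rho\}}(m)$ while $m$ is settled with respect to $(u,R'\setminus\{\rho\})$.

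Under that assumption, the inequality $\mu(m)\neq\mu_{R'\setminus\{\rho\}}(m)$ gives $(R(\mu)\cap V(P(m)))\setminus\cl(R'\setminus\{\rho\})\neq\emptyset$, and since $m$ is settled and $\mu\in{\cal S}(u,R'\setminus\{\rho\})$, Proposition \ref{lem:tree} forces this ``extra'' portion of $R(\mu)$ on $P(m)$ to lie inside $\gamma(u)$. Then I would exploit the defining membership conditions of $M_\rho$. Recall that $m\in M_\rho$ means $\rho_m\in R\setminus\gamma(u)$ and $\ell_{v,R'}(m)\in\cl(\rho_m)\cup\{\nil\}$, where $\rho_m$ is the last vertex of $P(m)$ in $\cl(\rho)$; exactly as in Lemma \ref{lem:introPartnerV} this forces $\rho_m=\ell_{v,R'}(m)=\rho$ (the combination $\rho_m\in\gamma(u)\cup\{\ell_{v,R'}(m)\}$ from the closure condition together with $\rho_m\in R\setminus\gamma(u)$ pins $\rho_m$ down to $\ell_{v,R'}(m)$, and since $\ell_{v,R'}(m)\in\cl(\rho_m)$ and $\rho_m\in\cl(\rho)$ one gets $\rho_m=\rho$). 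Hence the outgoing neighbor $\rho'$ of $\rho$ on $P(m)$ belongs to $R(\mu)$ but not to $\beta(u)$ (it cannot be in $\beta(u)$ because $\ell$ would have advanced past $\rho$), and being on the settled part of the path it lies in $\gamma(u)\setminus\beta(u)$.

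The contradiction is then the same topological one used in Lemma \ref{lem:introPartnerV}: $\rho$ and its successor $\rho'$ are adjacent in $G_\Pi$, so by the edge property of tree decompositions (Property \ref{item:twedge} in Definition \ref{def:treewidth}) some bag must contain both; but $\rho\notin\gamma(u)$ while $\rho'\in\gamma(u)\setminus\beta(u)$, so $\rho$ lies strictly above $u$ and $\rho'$ strictly below the bags of $u$, and the connectivity property (Property \ref{item:twconnected}) precludes any single bag containing both $\rho$ and $\rho'$. This contradiction establishes that the assumption is untenable, so the partners coincide.

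The main obstacle I anticipate is getting the node-$u$ bookkeeping exactly right, since the lemma is stated with two objects living in different nodes: the set $M_\rho$ and the vertices $\rho_m,\ell_{v,R'}(m)$ are all defined relative to the \emph{parent} $v$, whereas the partner functions in the statement are taken relative to the child $u$ with the set $R'\setminus\{\rho\}$. The delicate point is to check that $m$ being settled with respect to $(u,R'\setminus\{\rho\})$, combined with $m\in M_\rho$, really does force $\ell_{u,R'\setminus\{\rho\}}(m)$ to be the predecessor of $\rho$ (or $\nil$) and hence locate $\rho'$ below $u$; one must confirm that dropping $\rho$ from $R'$ and from the bag does not let $\ell$ or $f$ shift in a way that invalidates the argument. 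I expect this translation between the $v$-indexed and $u$-indexed quantities to require the careful use of $\beta(u)=\beta(v)\setminus\{\rho\}$ and $\gamma(u)\cup\beta(v)\supseteq\gamma(v)$, exactly the kind of relation already deployed in Lemmas \ref{lem:forgetType} and \ref{lem:introAlmostSubset}, but it is purely structural and should not require new ideas beyond those two proofs.
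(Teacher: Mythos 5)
Your reduction to the two trivial cases is fine, but the heart of your argument---transplanting the $\rho'$-argument of Lemma \ref{lem:introPartnerV}---has a genuine gap at the step ``hence the outgoing neighbor $\rho'$ of $\rho$ on $P(m)$ belongs to $R(\mu)$.'' In Lemma \ref{lem:introPartnerV} that inference is sound because the reference closed set there is $\cl(R')$ and $\rho=\ell_{v,R'}(m)\in R'$: every vertex of $P(m)$ up to and including $\rho$ lies in $\cl(R')$, so the ``extra'' rotation witnessing $\mu(m)\neq\mu_{R'}(m)$ must lie strictly after $\rho$, and since $R(\mu)\cap V(P(m))$ is a prefix of $P(m)$ (a closed set meets the directed path $P(m)$ in a prefix), that prefix contains $\rho'$. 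In the present lemma the reference set is $\cl(R'\setminus\{\rho\})$, which contains neither $\rho$ nor, in general, the vertices of $P(m)$ lying between $\ell_{u,R'\setminus\{\rho\}}(m)$ and $\rho$ outside $\beta(u)$; so the extra rotation may lie strictly before $\rho$, and nothing you have established forces $R(\mu)$ to reach $\rho'$. In fact your own case hypotheses force the opposite: since $\rho=\ell_{v,R'}(m)$ is the last $R'$-vertex on $P(m)$, every $(R'\setminus\{\rho\})$-vertex on $P(m)$ precedes $\rho$, so settledness (which requires $\rho\notin V(P_{u,R'\setminus\{\rho\}}(m))$ because $\rho\notin\gamma(u)$) can hold only if $f_{u,R'\setminus\{\rho\}}(m)$ exists and precedes $\rho$; then $f_{u,R'\setminus\{\rho\}}(m)\in\beta(u)\setminus(R'\setminus\{\rho\})$ together with $R(\mu)\cap\beta(u)=R'\setminus\{\rho\}$ shows the prefix $R(\mu)\cap V(P(m))$ stops strictly before $\rho$, so $\rho'\notin R(\mu)$ and your topological contradiction never gets off the ground.

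The missing ingredient---and the point where the paper takes a different, stronger route---is the standing assumption of the introduce-node analysis, $\cl(R')\cap\beta(v)\subseteq R'$, which you never invoke. With it, settledness of $m\in M_\rho$ with respect to $(u,R'\setminus\{\rho\})$ is impossible outright: any vertex of $\beta(u)$ preceding $\rho$ on $P(m)$ lies in $\cl(\rho)\cap\beta(v)\subseteq R'$, hence in $R'\setminus\{\rho\}$, so $f_{u,R'\setminus\{\rho\}}(m)$ cannot precede $\rho$; consequently $\rho\in V(P_{u,R'\setminus\{\rho\}}(m))$ while $\rho\notin\gamma(u)$, i.e., $m$ is unsettled, and both partners equal $\mu_{R'\setminus\{\rho\}}(m)$ by Definition \ref{def:manPartner}. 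This is exactly what the paper's proof does (it supposes settledness and derives a contradiction, never reasoning about how far $R(\mu)$ reaches along $P(m)$ and never using the edge/connectivity properties of the tree decomposition), and it is exactly the ``delicate point'' your closing paragraph flags: dropping $\rho$ from $R'$ and from the bag \emph{does} let $f$ shift, necessarily to before $\rho$ in your settled case, which is why your predicted ``purely structural'' confirmation fails and why the correct resolution is to rule out settledness rather than to imitate Lemma \ref{lem:introPartnerV}.
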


\begin{proof}
If $m$ is unsettled with respect to $(u,R'\setminus\{\rho\})$, then we have that $\partner_{u,R'\setminus\{\rho\},\mu}(m)=\partner_{u,R'\setminus\{\rho\},\mu_{R'\setminus\{\rho\}}}(m)$. Now, we suppose that $m$ is settled (with respect to $(u,R'\setminus\{\rho\})$), and show that this supposition leads to a contradiction.
Since $m$ is settled and $\mu\in{\cal S}(v,R')$, from Proposition \ref{lem:tree} we deduce that $(R(\mu)\cap V(P(m)))\setminus\cl(R'\setminus\{\rho\})\subseteq \gamma(u)$.
Moreover, since $m\in M_\rho$, we have that $\rho_m\in R(\mu)$. Thus, $\rho_m\in\gamma(u)$. Since $m\in M_\rho$, we have reached a contradiction.
\end{proof}

As a corollary to Lemma \ref{lem:introPartnerU}, we have

\begin{corollary}\label{cor:introPartnerU}
For any $\mu\in {\cal S}(v,R')\cap {\cal S}(u,R'\setminus\{\rho\})$, it holds that $\displaystyle{\sum_{m\in M_\rho}\alpha_{u,R'\setminus\{\rho\}}(\mu,m)} = \displaystyle{\sum_{m\in M_\rho}\pos_m(\partner_{u,R'\setminus\{\rho\},\mu_{R'\setminus\{\rho\}}}(m))},\ \mathrm{and}\ \displaystyle{\sum_{m\in M_\rho}\lambda_{u,R'\setminus\{\rho\}}(\mu,m)}= \displaystyle{\pos_{\partner_{u,R'\setminus\{\rho\},\mu_{R'\setminus\{\rho\}}}(m)}(m)}$.
\end{corollary}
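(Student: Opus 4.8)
The plan is to observe that this statement is an immediate consequence of Lemma~\ref{lem:introPartnerU} together with the definition of $\alpha$ and $\lambda$ in Definition~\ref{def:deltaState}; there is essentially nothing to do beyond unfolding definitions and summing, so I would keep the argument to a few lines.

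First I would note that by hypothesis $\mu \in {\cal S}(v,R') \cap {\cal S}(u,R'\setminus\{\rho\})$, so in particular $\mu \in {\cal S}(u,R'\setminus\{\rho\})$. Hence for every $m \in M_\rho$ the hypotheses of Lemma~\ref{lem:introPartnerU} are satisfied, and we may conclude that the partner of $m$ with respect to $(u,R'\setminus\{\rho\},\mu)$ coincides with its partner with respect to $(u,R'\setminus\{\rho\},\mu_{R'\setminus\{\rho\}})$, i.e.
\[
\partner_{u,R'\setminus\{\rho\},\mu}(m) = \partner_{u,R'\setminus\{\rho\},\mu_{R'\setminus\{\rho\}}}(m).
\]

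Next I would substitute this equality into the definition of the per-man scores. Recalling from Definition~\ref{def:deltaState} that $\alpha_{u,R'\setminus\{\rho\}}(\mu,m) = \pos_m(\partner_{u,R'\setminus\{\rho\},\mu}(m))$, the displayed partner equality immediately gives $\alpha_{u,R'\setminus\{\rho\}}(\mu,m) = \pos_m(\partner_{u,R'\setminus\{\rho\},\mu_{R'\setminus\{\rho\}}}(m))$ for each $m \in M_\rho$; summing over $m \in M_\rho$ yields the first claimed identity. The second identity follows by the identical argument applied to $\lambda_{u,R'\setminus\{\rho\}}(\mu,m) = \pos_{\partner_{u,R'\setminus\{\rho\},\mu}(m)}(m)$, again substituting the partner equality and summing over $M_\rho$.

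Since the whole content of the corollary is packaged in Lemma~\ref{lem:introPartnerU}, I do not anticipate any genuine obstacle; the only point requiring a moment's care is verifying that the membership hypothesis of Lemma~\ref{lem:introPartnerU} (namely $\mu \in {\cal S}(u,R'\setminus\{\rho\})$) is indeed supplied by the corollary's hypothesis, which it is.
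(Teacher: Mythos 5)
Your proposal is correct and matches the paper's treatment exactly: the paper states this result as an immediate corollary of Lemma~\ref{lem:introPartnerU} with no written proof, and your unfolding (apply the lemma to each $m\in M_\rho$ using the hypothesis $\mu\in{\cal S}(u,R'\setminus\{\rho\})$, substitute the resulting partner equality into Definition~\ref{def:deltaState}, and sum) is precisely the intended argument. You also correctly read the right-hand side of the second identity as a sum over $M_\rho$, which the paper's statement omits by an apparent typo.
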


We are now ready to analyze $\alpha_{v,R'}(\mu)$ and $\lambda_{v,R'}(\mu)$ where $\mu\in {\cal S}(v,R')\cap {\cal S}(u,R'\setminus\{\rho\})$.

\begin{lemma}\label{lem:introSameScore}
Let $\mu\in {\cal S}(v,R')\cap {\cal S}(u,R'\setminus\{\rho\})$. Then, $\alpha_{v,R'}(\mu)=\alpha_{u,R'\setminus\{\rho\}}(\mu)+\diff^\alpha(M_\rho)$ and $\lambda_{v,R'}(\mu)=\lambda_{u,R'\setminus\{\rho\}}(\mu)+\diff^\lambda(M_\rho)$.
\end{lemma}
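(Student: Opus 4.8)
The plan is to establish the single identity $\alpha_{v,R'}(\mu)-\alpha_{u,R'\setminus\{\rho\}}(\mu)=\diff^\alpha(M_\rho)$ (and its $\lambda$-analogue) by comparing the two scores man-by-man, exploiting that each is a sum over $M^\star$ of the per-man quantities $\alpha_{v,R'}(\mu,m)$ and $\alpha_{u,R'\setminus\{\rho\}}(\mu,m)$. Since $\mu$ lies in ${\cal S}(v,R')\cap{\cal S}(u,R'\setminus\{\rho\})$, all hypotheses of the preceding lemmata are met, so the proof reduces to partitioning $M^\star$ into $M_\rho$ and $M^\star\setminus M_\rho$ and evaluating each block.

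First I would handle the men outside $M_\rho$. For each $m\in M^\star\setminus M_\rho$, Lemma \ref{lem:samePartnerNoRho} gives that the partners of $m$ with respect to $(v,R',\mu)$ and $(u,R'\setminus\{\rho\},\mu)$ coincide; hence $\alpha_{v,R'}(\mu,m)=\alpha_{u,R'\setminus\{\rho\}}(\mu,m)$ (and likewise for $\lambda$), so these terms contribute nothing to the difference of the two scores. This isolates the entire discrepancy onto the set $M_\rho$.

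Next I would evaluate the $M_\rho$-contribution. For $m\in M_\rho$, Lemma \ref{lem:introPartnerV} replaces the partner with respect to $(v,R',\mu)$ by the partner with respect to $(v,R',\mu_{R'})$, so $\alpha_{v,R'}(\mu,m)=\alpha_{v,R'}(\mu_{R'},m)$; symmetrically, Corollary \ref{cor:introPartnerU} (resting on the per-man Lemma \ref{lem:introPartnerU}) lets me replace the partner with respect to $(u,R'\setminus\{\rho\},\mu)$ by the one with respect to $(u,R'\setminus\{\rho\},\mu_{R'\setminus\{\rho\}})$, so $\alpha_{u,R'\setminus\{\rho\}}(\mu,m)=\pos_m(\partner_{u,R'\setminus\{\rho\},\mu_{R'\setminus\{\rho\}}}(m))$. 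Subtracting, the per-man difference becomes exactly $\diff^\alpha_{v,R'}(m)=\alpha_{v,R'}(\mu_{R'},m)-\alpha_{u,R'\setminus\{\rho\}}(\mu_{R'\setminus\{\rho\}},m)$ by definition, and summing over $m\in M_\rho$ yields $\diff^\alpha(M_\rho)$. Combining the two blocks gives $\alpha_{v,R'}(\mu)=\alpha_{u,R'\setminus\{\rho\}}(\mu)+\diff^\alpha(M_\rho)$; rerunning the identical argument with $\lambda$ in place of $\alpha$ (using the $\lambda$-halves of Lemma \ref{lem:samePartnerNoRho}, Lemma \ref{lem:introPartnerV} and Corollary \ref{cor:introPartnerU}) gives the companion identity.

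The argument is essentially bookkeeping once the three partner-equivalence results are in hand, so I do not anticipate a genuine obstacle; the only points requiring care are to confirm that the membership $\mu\in{\cal S}(v,R')\cap{\cal S}(u,R'\setminus\{\rho\})$ simultaneously satisfies the hypotheses of all three cited statements — Lemma \ref{lem:samePartnerNoRho} and Corollary \ref{cor:introPartnerU} need $\mu$ in the intersection, while Lemma \ref{lem:introPartnerV} needs only $\mu\in{\cal S}(v,R')$ — and to verify that the canonical matchings $\mu_{R'}$ and $\mu_{R'\setminus\{\rho\}}$ surfaced by the partner-replacement steps are precisely the ones appearing inside the definition of $\diff^\alpha_{v,R'}(m)$, so that the telescoping is exact rather than merely approximate.
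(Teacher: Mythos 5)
Your proposal is correct and follows essentially the same route as the paper's proof: split $M^\star$ into $M_\rho$ and its complement, cancel the complement via Lemma \ref{lem:samePartnerNoRho}, and convert the $M_\rho$ terms to the canonical partners via Lemma \ref{lem:introPartnerV} and Corollary \ref{cor:introPartnerU} so that the per-man differences telescope to $\diff^\alpha(M_\rho)$ and $\diff^\lambda(M_\rho)$. The only cosmetic difference is that you subtract man-by-man whereas the paper adds and subtracts the $\mu_{R'\setminus\{\rho\}}$ sum at the level of the whole $M_\rho$ block; the content is identical.
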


\begin{proof}
First, note that
\[\begin{array}{lll}
\alpha_{v,R'}(\mu) & = & \displaystyle{\sum_{m\in M^\star\setminus M_\rho}\alpha_{v,R'}(\mu,m) + \sum_{m\in M_\rho}\alpha_{v,R'}(\mu,m)}\\

&= & \displaystyle{\sum_{m\in M^\star\setminus M_\rho}\pos_m(\partner_{v,R',\mu}(m))}+ \displaystyle{\sum_{m\in M_\rho}\pos_m(\partner_{v,R',\mu}(m))},
\end{array}\]

\noindent which, by Lemmata \ref{lem:samePartnerNoRho} and \ref{lem:introPartnerV}, equals
\[\begin{array}{ll}
& \displaystyle{\sum_{m\in M^\star\setminus M_\rho}\pos_m(\partner_{u,R'\setminus\{\rho\},\mu}(m))} + \displaystyle{\sum_{m\in M_\rho}\pos_m(\partner_{v,R',\mu_{R'}}(m))}\\

=& \displaystyle{\sum_{m\in M^\star\setminus M_\rho}\alpha_{u,R'\setminus\{\rho\}}(\mu,m)} + \displaystyle{\sum_{m\in M_\rho}\pos_m(\partner_{v,R',\mu_{R'}}(m))},
\end{array}\]

\noindent which, by Corollary \ref{cor:introPartnerU}, equals
\[\begin{array}{l}
\displaystyle{\sum_{m\in M^\star\setminus M_\rho}\alpha_{u,R'\setminus\{\rho\}}(\mu,m) + \sum_{m\in M_\rho}\alpha_{u,R'\setminus\{\rho\}}(\mu,m)}\\

\hspace{2em} + \displaystyle{\sum_{m\in M_\rho}\pos_m(\partner_{v,R',\mu_{R'}}(m))} - \displaystyle{\sum_{m\in M_\rho}\pos_m(\partner_{u,R'\setminus\{\rho\},\mu_{R'\setminus\{\rho\}}}(m))},
\end{array}\]
which is exactly $\alpha_{u,R'\setminus\{\rho\}}(\mu) + \diff^\alpha(M_\rho)$.

Similarly, note that
\[\begin{array}{lll}
\lambda_{v,R'}(\mu) & = & \displaystyle{\sum_{m\in M^\star\setminus M_\rho}\lambda_{v,R'}(\mu,m) + \sum_{m\in M_\rho}\lambda_{v,R'}(\mu,m)}\\

&= & \displaystyle{\sum_{m\in M^\star\setminus M_\rho}\pos_{\partner_{v,R',\mu}(m)}(m)}+ \displaystyle{\sum_{m\in M_\rho}\pos_{\partner_{v,R',\mu}(m)}(m)},
\end{array}\]

\noindent which, by Lemmata \ref{lem:samePartnerNoRho} and \ref{lem:introPartnerV}, equals
\[\begin{array}{ll}
& \displaystyle{\sum_{m\in M^\star\setminus M_\rho}\pos_{\partner_{u,R'\setminus\{\rho\},\mu}(m)}(m)} + \displaystyle{\sum_{m\in M_\rho}\pos_{\partner_{v,R',\mu_{R'}}(m)}(m)}\\

=& \displaystyle{\sum_{m\in M^\star\setminus M_\rho}\lambda_{u,R'\setminus\{\rho\}}(\mu,m)} + \displaystyle{\sum_{m\in M_\rho}\pos_{\partner_{v,R',\mu_{R'}}(m)}(m)},
\end{array}\]

\noindent which, by Corollary \ref{cor:introPartnerU}, equals
\[\begin{array}{l}
\displaystyle{\sum_{m\in M^\star\setminus M_\rho}\lambda_{u,R'\setminus\{\rho\}}(\mu,m) + \sum_{m\in M_\rho}\lambda_{u,R'\setminus\{\rho\}}(\mu,m)}\\

\hspace{2em} + \displaystyle{\sum_{m\in M_\rho}\pos_{\partner_{v,R',\mu_{R'}}(m)}(m)} - \displaystyle{\sum_{m\in M_\rho}\pos_{\partner_{u,R'\setminus\{\rho\},\mu_{R'\setminus\{\rho\}}}(m)}(m)},
\end{array}\]
which is exactly $\lambda_{u,R'\setminus\{\rho\}}(\mu) + \diff^\lambda(M_\rho)$.
\end{proof}

Armed with Lemmata \ref{lem:introSubset}, \ref{lem:introAlmostSubset} and \ref{lem:introSameScore}, we are ready to prove that the computation for Case \ref{case:introSep} (i.e when $\rho\in R'$) is correct.

\begin{lemma}
${\cal S}(v,R',t_M,t_W)\neq\emptyset$ if and only if ${\cal S}(u,R'\setminus\{\rho\},t_M\!-\!\diff^\alpha(M_\rho),t_W\!-\!\diff^\lambda(M_\rho))\neq\emptyset$.
\end{lemma}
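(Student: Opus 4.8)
The plan is to prove the iff statement by combining the three technical lemmas just established (Lemmata~\ref{lem:introSubset}, \ref{lem:introAlmostSubset} and \ref{lem:introSameScore}) with the inductive hypothesis. The overall strategy mirrors the structure of the proof already carried out for Case~\ref{case:introEasy} in Lemma~\ref{lem:introEasy}, but now with the nonzero correction terms $\diff^\alpha(M_\rho)$ and $\diff^\lambda(M_\rho)$ that account for the men in $M_\rho$ whose partners must be re-evaluated when $\rho$ is introduced.

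For the forward direction, I would first suppose ${\cal S}(v,R',t_M,t_W)\neq\emptyset$ and pick some $\mu$ in this set. By Lemma~\ref{lem:introSubset}, $\mu\in{\cal S}(u,R'\setminus\{\rho\})$, so that $\mu$ lies in the intersection ${\cal S}(v,R')\cap{\cal S}(u,R'\setminus\{\rho\})$ to which Lemma~\ref{lem:introSameScore} applies. Since $\alpha_{v,R'}(\mu)=t_M$ and $\lambda_{v,R'}(\mu)=t_W$, that lemma gives $\alpha_{u,R'\setminus\{\rho\}}(\mu)=t_M-\diff^\alpha(M_\rho)$ and $\lambda_{u,R'\setminus\{\rho\}}(\mu)=t_W-\diff^\lambda(M_\rho)$, which witnesses that ${\cal S}(u,R'\setminus\{\rho\},t_M-\diff^\alpha(M_\rho),t_W-\diff^\lambda(M_\rho))\neq\emptyset$.

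For the reverse direction, I would suppose the child set ${\cal S}(u,R'\setminus\{\rho\},t_M-\diff^\alpha(M_\rho),t_W-\diff^\lambda(M_\rho))$ is nonempty and invoke Lemma~\ref{lem:introAlmostSubset} in its $\rho\in R'$ branch to obtain a stable matching $\mu$ in this set with $\rho\in R(\mu)$; the point of that lemma is precisely that we may assume $\rho$ itself has been eliminated without changing membership in the relevant score class. From $R(\mu)\cap\beta(u)=R'\setminus\{\rho\}$, $\rho\in R(\mu)$, and $\beta(v)=\beta(u)\cup\{\rho\}$, together with $\cl(\rho)\cap\beta(v)\subseteq R'$ (which holds because we are in the case $\cl(R')\cap\beta(v)\subseteq R'$), I would conclude $R(\mu)\cap\beta(v)=R'$ and hence $\mu\in{\cal S}(v,R')$. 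Now $\mu$ again lies in ${\cal S}(v,R')\cap{\cal S}(u,R'\setminus\{\rho\})$, so Lemma~\ref{lem:introSameScore} converts the child scores back, yielding $\alpha_{v,R'}(\mu)=t_M$ and $\lambda_{v,R'}(\mu)=t_W$, i.e.\ $\mu\in{\cal S}(v,R',t_M,t_W)$.

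The step I expect to require the most care is ensuring that the matching supplied by Lemma~\ref{lem:introAlmostSubset} in the reverse direction actually witnesses membership in ${\cal S}(v,R')$ with the \emph{same} value of $R'$ rather than a closed superset: one must check the closure condition $\cl(R')\cap\beta(v)\subseteq R'$ is in force (it is, by the assumption made at the start of the introduce-node analysis) so that adding $\rho$ to $R(\mu)\cap\beta(u)$ lands exactly on $R'$ and not on something larger. A subtle companion point is that Lemma~\ref{lem:introSameScore} must be applicable in \emph{both} directions to the \emph{same} $\mu$; this is exactly why the lemma is phrased for $\mu$ in the intersection, and the whole argument is essentially the bookkeeping that the correction $\diff^\alpha(M_\rho),\diff^\lambda(M_\rho)$ is the precise discrepancy between the two score functions, so no residual error accumulates. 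Given the three lemmas as black boxes, the proof is short and purely a matter of assembling them with the inductive hypothesis.
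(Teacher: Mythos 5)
Your proposal is correct and follows essentially the same route as the paper's proof: the forward direction combines Lemma~\ref{lem:introSubset} with Lemma~\ref{lem:introSameScore}, and the reverse direction uses Lemma~\ref{lem:introAlmostSubset} to obtain a witness with $\rho\in R(\mu)$, places it in ${\cal S}(v,R')$ via $\beta(v)=\beta(u)\cup\{\rho\}$, and converts the scores back with Lemma~\ref{lem:introSameScore}. Your added remark about verifying $\cl(R')\cap\beta(v)\subseteq R'$ is exactly the standing assumption made at the start of the introduce-node analysis, so nothing is missing.
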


\begin{proof}
First, suppose that ${\cal S}(v,R',t_M,t_W)\neq\emptyset$. Then, there exists $\mu\in{\cal S}(v,R',t_M,t_W)$. By Lemma~\ref{lem:introSubset}, we have that $\mu\in{\cal S}(u,R'\setminus\{\rho\})$. Since $\alpha_{v,R'}(\mu)=t_M$ and $\lambda_{v,R'}(\mu)=t_W$, to conclude that $\mu\in{\cal S}(u,R'\setminus\{\rho\},t_M-\diff^\alpha(M_\rho),t_W-\diff^\lambda(M_\rho))$, it remains to show that $\alpha_{u,R'\setminus\{\rho\}}(\mu)=\alpha_{v,R'}(\mu)-\diff^\alpha(M_\rho)$ and $\lambda_{u,R'\setminus\{\rho\}}(\mu)=\lambda_{v,R'}(\mu)-\diff^\lambda(M_\rho)$. By Lemma \ref{lem:introSameScore}, this is~true.

Next, suppose that ${\cal S}(u,R'\setminus\{\rho\},t_M-\diff^\alpha(M_\rho),t_W-\diff^\lambda(M_\rho))\neq\emptyset$. Then, by Lemma \ref{lem:introAlmostSubset}, there exists $\mu\in{\cal S}(u,R'\setminus\{\rho\},t_M-\diff^\alpha(M_\rho),t_W-\diff^\lambda(M_\rho))$ such that $\rho\in R(\mu)$. Then, since $\beta(v)=\beta(u)\cup\{\rho\}$, $\rho\in R'$ and $\rho\in R(\mu)$, we have that $\mu\in{\cal S}(v,R')$. Since $\alpha_{u,R'\setminus\{\rho\}}(\mu)=t_M-\diff^\alpha(M_\rho)$ and $\lambda_{u,R'\setminus\{\rho\}}(\mu)=t_W-\diff^\lambda(M_\rho)$, to conclude that $\mu\in{\cal S}(v,R',t_M,t_W)$, it remains to show that $\alpha_{v,R'}(\mu)=t_M$ and $\lambda_{v,R'}(\mu)=t_W$. By Lemma \ref{lem:introSameScore}, this is~true.
\end{proof}

\medskip
\myparagraph{Join Node.}
By the inductive hypothesis, to prove that our computation is correct, it is sufficient to show that ${\cal S}(v,R',t_M,t_W)\neq\emptyset$ if and only if there exist $\widehat{t}_M,\widehat{t}_W,t^*_M,t^*_W\in[n^2]$ such that $t_M=\widehat{t}_M+t^*_M-\alpha_{v,R'}(\mu_{R'})$, $t_W=\widehat{t}_W+t^*_W-\lambda_{v,R'}(\mu_{R'})$ and ${\cal S}(u,R',\widehat{t}_M,\widehat{t}_W),{\cal S}(w,R',t^*_M,t^*_W)\neq\emptyset$.

\begin{lemma}\label{lem:joinEqualS}
${\cal S}(v,R')={\cal S}(u,R')={\cal S}(w,R')$.
\end{lemma}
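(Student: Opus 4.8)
The plan is to observe that the set $\mathcal{S}(v,R')$ depends on the node $v$ only through its bag $\beta(v)$, and then to invoke the defining property of a join node in a nice tree decomposition. Concretely, by the definition of compatibility we have
\[
\mathcal{S}(v,R') = \{\mu\in{\cal S} : R(\mu)\cap\beta(v)=R'\},
\]
and the analogous identities hold for $\mathcal{S}(u,R')$ and $\mathcal{S}(w,R')$ with $\beta(u)$ and $\beta(w)$ in place of $\beta(v)$. The only extra ingredient needed is that $v$ is a join node, so that (by the standard definition of a nice tree decomposition) $v$ has exactly two children $u$ and $w$ satisfying $\beta(v)=\beta(u)=\beta(w)$.

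Thus the first step I would take is to recall this bag-equality from the preliminaries on nice tree decompositions (Appendix~\ref{sec:tw}). The second and final step is to substitute $\beta(u)=\beta(v)$ and $\beta(w)=\beta(v)$ into the three displayed characterizations: the membership condition $R(\mu)\cap\beta(\cdot)=R'$ is then literally the same predicate in all three cases, so the three sets contain exactly the same stable matchings, giving $\mathcal{S}(v,R')=\mathcal{S}(u,R')=\mathcal{S}(w,R')$. I do not anticipate any real obstacle here: unlike the forget- and introduce-node analyses, where the bags of a node and its child differ by one rotation and one must track how $\ell_{v,R'}$, $f_{v,R'}$, and the man types change, the join-node bags coincide exactly, so the equality of the three equivalence classes is immediate from the definitions. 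The genuine work for join nodes lies not in this lemma but in the subsequent score-decomposition argument that relates $\alpha_{v,R'}$ and $\lambda_{v,R'}$ to the contributions coming from $u$ and $w$ while subtracting the doubly-counted term $\alpha_{v,R'}(\mu_{R'})$ (resp.\ $\lambda_{v,R'}(\mu_{R'})$); this lemma simply clears the way for that argument by fixing the common domain of stable matchings over which those scores are compared.
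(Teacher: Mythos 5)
Your proposal is correct and follows exactly the paper's argument: both rest on the observation that membership in ${\cal S}(\cdot,R')$ depends only on the bag, combined with the join-node property $\beta(v)=\beta(u)=\beta(w)$. Nothing further is needed.
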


\begin{proof}
Note that $\beta(v)=\beta(u)=\beta(w)$. Therefore, for any $x,y\in\{v,u,w\}$, a stable matching $\mu$ satisfies $R(\mu)\cap\beta(x)=R'$ if and only if it satisfies $R(\mu)\cap\beta(y)=R'$. This implies that ${\cal S}(v,R')={\cal S}(u,R')={\cal S}(w,R')$.
\end{proof}

In what follows, we implicitly rely on Lemma~\ref{lem:joinEqualS}. Towards proving the correctness of our computation, we need a lemma specifying some of the partners of the men.

\begin{lemma}\label{lem:joinKnowOne}
Let $\mu\in{\cal S}(v,R')$, $\widehat{\mu}\in {\cal S}(u,R')$ and $\mu^*\in{\cal S}(w,R')$ such that $R(\mu)=R(\widehat{\mu})\cup R(\mu^*)$. Then, at least one of the following conditions is satisfied.
\begin{enumerate}
\item The partners of $m$ with respect to $(v,R',\mu)$ and $(u,R',\widehat{\mu})$ are identical, and the partners of $m$ with respect to $(v,R',\mu)$ and $(w,R',\mu_{R'})$ are identical.
\item The partners of $m$ with respect to $(v,R',\mu)$ and $(w,R',\mu^*)$ are identical, and the partners of $m$ with respect to $(v,R',\mu)$ and $(u,R',\mu_{R'})$ are identical.
\end{enumerate}
\end{lemma}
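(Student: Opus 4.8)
The plan is to reduce everything to a single, easily localized object: for each man $m\in M^\star$, the $\prec$-largest rotation of $R(m)$ contained in a given closed set. By Proposition~\ref{lem:totalOrder}, $R(m)$ is a $\prec$-chain, and since $P(m)$ is a directed path of $D_\Pi$ whose vertex set contains $R(m)$, the rotations of $m$ occur along $P(m)$ in $\prec$-order. Because $R(\mu)$ is closed, $R(\mu)\cap R(m)$ is exactly the set of $m$-rotations $\preceq\rho^\star$, where $\rho^\star$ is the last vertex of $R(m)$ on $P(m)$ lying in $R(\mu)$; thus $\mu(m)$ is ``governed'' by $\rho^\star$. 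The same statement applied to the closed sets $R(\widehat\mu)$, $R(\mu^*)$ and $\cl(R')$ shows that $\widehat\mu(m)$, $\mu^*(m)$ and $\mu_{R'}(m)$ are governed by the top $m$-rotation of each of these sets, and since all three are contained in $R(\mu)$, each governing rotation is $\preceq\rho^\star$. Note also that at both children the plugged-in matching $\mu_{R'}$ yields $\partner_{u,R',\mu_{R'}}(m)=\partner_{w,R',\mu_{R'}}(m)=\mu_{R'}(m)$ regardless of type, so the two ``baseline'' clauses are just bookkeeping around $\mu_{R'}(m)$. First I would therefore recast each of the two conditions purely as statements that certain governing rotations coincide.

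The structural engine of the proof is a separation lemma for $P(m)$ at the join node. Since $\beta(v)=\beta(u)=\beta(w)$, deleting $v$ from $T$ splits it into the subtree rooted at $u$, the subtree rooted at $w$, and an ``upward'' part, which correspond respectively to the regions $\gamma(u)\setminus\beta(v)$, $\gamma(w)\setminus\beta(v)$ and $R\setminus\gamma(v)$. I would prove that every maximal subpath $Q$ of $P(m)$ disjoint from $\beta(v)$ lies entirely in one of these three regions: by Proposition~\ref{lem:tree} the nodes of $T$ whose bag meets $Q$ form a subtree, and since $Q$ avoids $\beta(v)$ this subtree avoids $v$, hence lies in a single component of $T-v$. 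Translating back gives the trichotomy. In particular, no two consecutive vertices of $P(m)$ can lie one in $\gamma(u)\setminus\beta(v)$ and the other in $\gamma(w)\setminus\beta(v)$, so any passage between the left and right regions along $P(m)$ must cross a vertex of $\beta(v)$ (which, being in $R(\mu)$ iff in $R'$, is a known quantity).

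With these tools in place I would run a case analysis on the location of $\rho^\star$. If $\rho^\star$ does not exist, then $R(m)$ meets none of $R(\mu),R(\widehat\mu),R(\mu^*),\cl(R')$, all four partners equal $\mu_\emptyset(m)$, and both conditions hold. If $\rho^\star\in\beta(v)$, then $\rho^\star\in R'$, so $\rho^\star$ also governs $R(\widehat\mu)$, $R(\mu^*)$ and $\cl(R')$, again making all partners coincide. If $\rho^\star\in\gamma(u)\setminus\beta(v)$ I aim for condition~1, and if $\rho^\star\in\gamma(w)\setminus\beta(v)$ for condition~2 by the symmetric argument swapping $u\leftrightarrow w$ and $\widehat\mu\leftrightarrow\mu^*$. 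For the $\gamma(u)\setminus\beta(v)$ case, the separation lemma forces $\rho^\star\in R(\widehat\mu)$ (the down-set $R(\mu)\cap R(m)$ restricted to the left region feeds into $R(\widehat\mu)$, whose top $m$-rotation is then exactly $\rho^\star$), giving $\widehat\mu(m)=\mu(m)$; simultaneously the top $m$-rotation of $R(\mu^*)$ and of $\cl(R')$ both equal the top $m$-rotation of $R(\mu)$ that is $\preceq$ the highest $\beta(v)$-crossing, i.e.\ the baseline value $\mu_{R'}(m)$. The case $\rho^\star\in R\setminus\gamma(v)$ must be shown to reduce to one of the previous ones: the highest $\beta(v)$-vertex of $P(m)$ lying in $R(\mu)$ caps the governing rotation at the baseline, so $\mu(m)=\mu_{R'}(m)$ and all four partners collapse.

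The hard part will be the bookkeeping of \emph{types}, not the combinatorics of $\rho^\star$. The notion $\partner_{v,R',\mu}(m)$ switches between the true partner $\mu(m)$ (settled) and the baseline $\mu_{R'}(m)$ (unsettled), and the type of $m$ can genuinely differ among the states $(v,R')$, $(u,R')$ and $(w,R')$ because $\gamma(u),\gamma(w)\subsetneq\gamma(v)$. To make Definition~\ref{def:manPartner} select the governing rotation I have identified in each state, I would use the entry/exit vertices $\ell_{v,R'}(m),f_{v,R'}(m)$ together with Lemma~\ref{lem:helpClassify}, which confines the uncertain subpath $P_{v,R'}(m)$ to within $\{\ell_{v,R'}(m),f_{v,R'}(m)\}$ when it is not fully in $\gamma(v)$; the separation lemma then pins this subpath into a single region, ensuring that the type of $m$ is settled/unsettled consistently with the side from which its partner is drawn. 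Establishing that these type assignments line up — so that the governing-rotation equalities translate into the partner equalities the two conditions demand — is the delicate core of the argument, and it is exactly what the subsequent lemmata for the join node are designed to support.
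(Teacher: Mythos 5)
Your plan follows essentially the same route as the paper's proof: Proposition~\ref{lem:tree} is the structural engine that separates $P(m)$ across the join node, the argument is a case analysis on the position of the last rotation of $m$ eliminated by $\mu$, and in each case one child inherits $\mu$'s partner while the other collapses to the baseline $\mu_{R'}(m)$. The paper organizes this as two cases after a w.l.o.g.\ step (it takes $\rho$ to be the last vertex of $P(m)$ in $R(\mu)\cap\gamma(v)$, places it in $\gamma(u)$, and then splits on whether $\ell_{v,R'}(m)=\nil$), whereas you work with the $\prec$-maximum of $R(\mu)\cap R(m)$ and a five-way split over the regions $\beta(v)$, $\gamma(u)\setminus\beta(v)$, $\gamma(w)\setminus\beta(v)$ and $R\setminus\gamma(v)$; this is a reorganization, not a different idea. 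Your observation that both conditions are really statements of the form ``one child's partner equals $\mu$'s, the other's equals $\mu_{R'}(m)$'' matches how the paper's proof actually concludes each case ($\partner_{v,R'}(\mu,m)=\partner_{u,R'}(\widehat{\mu},m)$ and $\partner_{w,R'}(\mu,m)=\mu_{R'}(m)$).

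One step in your plan is asserted rather than proved, and it is the crux: in the case $\rho^\star\in\gamma(u)\setminus\beta(v)$ you write that ``the separation lemma forces $\rho^\star\in R(\widehat\mu)$.'' The separation lemma is a purely topological statement about $P(m)$ and the tree decomposition; it says nothing about how $R(\mu)$ decomposes as $R(\widehat\mu)\cup R(\mu^*)$. Nothing in the definition of ${\cal S}(w,R')$ confines $R(\mu^*)$ to $\gamma(w)\cup\cl(R')$, so a priori a rotation of $R(\mu)$ lying in $\gamma(u)\setminus\beta(v)$ could be contributed by $R(\mu^*)$ alone, in which case neither the equality $\widehat\mu(m)=\mu(m)$ nor the fallback $\partner_{w,R',\mu^*}(m)=\mu_{R'}(m)$ follows from your stated reasons. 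You need a separate argument here about the interaction of the two closed sets with the bag (the paper's own proof makes the analogous claim, that $R(\mu^*)\cap V(P(m))=\emptyset$ in its first case, with comparable terseness, so you are tracking the paper rather than diverging from it --- but a complete write-up must close this). Finally, the type bookkeeping you defer to ``subsequent lemmata'' cannot in fact be deferred: this lemma \emph{is} the join-node lemma, and the paper resolves the settled/unsettled alignment inside this very proof (in its two cases, using Proposition~\ref{lem:tree} to pin $V(P(m))$ or $V(P_{v,R'}(m))$ to one side); you have identified the right tools for it, but the work still has to appear here.
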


\begin{proof}
Definitions \ref{def:manPartner} and \ref{def:deltaState} directly imply that $m$'s partners with respect to $(v,R',\mu_{R'})$, $(u,R',\mu_{R'})$ and $(w,R',\mu_{R'})$ are identical. In this proof, we implicitly rely on this observation.
If $R(\mu)\cap P(m)=\emptyset$, then $R(\widehat{\mu})\cap P(m)=R(\mu^*)\cap P(m)=\emptyset$. In this case, or if $m$ is unsettled with respect to $(v,R',\mu)$, $(u,R',\widehat{\mu})$ and $(w,R',\mu^*)$, then $m$'s partners with respect to $(v,R',\mu)$, $(u,R',\widehat{\mu})$ and $(w,R',\mu^*)$ are identical to $\mu_{R'}(m)$. Thus, we next assume that $V(P(m))\cap R(\mu)\cap \gamma(v)\neq\emptyset$, else we are done. Let $\rho$ denote the last vertex on $P(m)$ that belongs to $R(\mu)\cap\gamma(v)$.
Since $\gamma(v)=\gamma(u)\cup\gamma(w)$, we can suppose w.l.o.g.~that $\rho\in \gamma(u)$.

First, suppose that $\ell_{v,R'}(m)=\nil$. Then, by Proposition \ref{lem:tree}, $V(P(m))\subseteq\gamma(u)$ and $V(P(m))\cap\gamma(w)=\emptyset$. Therefore, since $R(\mu)=R(\widehat{\mu})\cup R(\mu^*)$, we have that $R(\mu)\cap V(P(m))=R(\widehat{\mu})\cap V(P(m))$ and $R(\mu^*)\cap V(P(m))=\emptyset$. In this case, $m$ is settled with respect to both $(v,R')$ and $(u,R')$, but unsettled with respect to $(w,R')$. Overall, $\partner_{v,R'}(\mu,m)=\partner_{u,R'}(\widehat{\mu},m)$ and $\partner_{w,R'}(\mu,m)=\mu_{R'}(m)$, which concludes the proof of this case. 

Now, suppose that $\ell_{v,R'}(m)\neq\nil$. Then, by Proposition \ref{lem:tree}, we have that the last vertex on $P(m)$ that belongs to $\gamma(w)$ is $\ell_{v,R'}(m)$. Thus, regardless of the type of $m$ with respect to the three states, the fact that $R(\mu)=R(\widehat{\mu})\cup R(\mu^*)$ implies that $\partner_{v,R'}(\mu,m)=\partner_{u,R'}(\widehat{\mu},m)$ and $\partner_{w,R'}(\mu,m)=\mu_{R'}(m)$, which concludes the proof of this case.
\end{proof}

Finally, we are ready to prove that our computation is correct. Thus, we also conclude that {\sc GSM} is solvable in time $\OO(2^{tw}\cdot n^{10})$.

\begin{lemma}\label{lem:joinEqualFirst}
${\cal S}(v,R',t_M,t_W)\neq\emptyset$ if and only if there exist $\widehat{t}_M,\widehat{t}_W,t^*_M,t^*_W\in[n^2]$ such that $t_M=\widehat{t}_M+t^*_M-\alpha_{v,R'}(\mu_{R'})$, $t_W=\widehat{t}_W+t^*_W-\lambda_{v,R'}(\mu_{R'})$ and ${\cal S}(u,R',\widehat{t}_M,\widehat{t}_W),{\cal S}(w,R',t^*_M,t^*_W)\neq\emptyset$.
\end{lemma}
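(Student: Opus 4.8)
The plan is to invoke the inductive hypothesis so that, exactly as the statement records, it suffices to prove the equivalence between ${\cal S}(v,R',t_M,t_W)\neq\emptyset$ and the existence of a ``split'' into ${\cal S}(u,R',\widehat t_M,\widehat t_W)$ and ${\cal S}(w,R',t^*_M,t^*_W)$ obeying the two displayed equations. Throughout I rely on Lemma \ref{lem:joinEqualS} (giving ${\cal S}(v,R')={\cal S}(u,R')={\cal S}(w,R')$, since $\beta(v)=\beta(u)=\beta(w)$) and on the separator structure of tree decompositions captured by Proposition \ref{lem:tree}, using that $\gamma(u)\cap\gamma(w)=\beta(v)$. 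If $(\cl(R')\cap\beta(v))\setminus R'\neq\emptyset$ then all three sets are empty and nothing is to be shown, so I assume $\cl(R')\cap\beta(v)\subseteq R'$.

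For the backward direction, given $\widehat\mu\in{\cal S}(u,R',\widehat t_M,\widehat t_W)$ and $\mu^*\in{\cal S}(w,R',t^*_M,t^*_W)$ satisfying the equations, I set $\mu:=\mu_{R(\widehat\mu)\cup R(\mu^*)}$. A union of closed sets is closed, so Proposition \ref{lem:smCorrRots} yields $\mu\in{\cal S}$ with $R(\mu)=R(\widehat\mu)\cup R(\mu^*)$, and intersecting with $\beta(v)$ gives $R(\mu)\cap\beta(v)=R'$, i.e. $\mu\in{\cal S}(v,R')$. Now Lemma \ref{lem:joinKnowOne} applies to the triple $(\mu,\widehat\mu,\mu^*)$: partitioning $M^\star$ by its dichotomy, each man $m$ contributes his ``real'' partner on exactly one child while his partner on the other child equals $\mu_{R'}(m)$, so the per-man identity $\alpha_{u,R'}(\widehat\mu,m)+\alpha_{w,R'}(\mu^*,m)-\alpha_{v,R'}(\mu_{R'},m)=\alpha_{v,R'}(\mu,m)$ holds; summing gives $\alpha_{v,R'}(\mu)=\widehat t_M+t^*_M-\alpha_{v,R'}(\mu_{R'})=t_M$, and the identical argument for $\lambda$ gives $\lambda_{v,R'}(\mu)=t_W$. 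Hence ${\cal S}(v,R',t_M,t_W)\neq\emptyset$.

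For the forward direction, given $\mu\in{\cal S}(v,R',t_M,t_W)$, I split $R(\mu)$ by setting $\widehat R:=\cl(R(\mu)\cap\gamma(u))$ and $R^*:=\cl(R(\mu)\cap\gamma(w))$. These are closed, are contained in $R(\mu)$ (as $R(\mu)$ is closed and contains the generators), and satisfy $\widehat R\cap\beta(v)=R^*\cap\beta(v)=R'$, since any $\sigma\in\beta(v)$ lying below a generator is in $R(\mu)\cap\beta(v)=R'$. Thus $\widehat\mu:=\mu_{\widehat R}\in{\cal S}(u,R')$ and $\mu^*:=\mu_{R^*}\in{\cal S}(w,R')$. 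The one gap is that $\widehat R\cup R^*$ need not equal $R(\mu)$, because $R(\mu)$ may contain rotations lying above $\gamma(v)$ that are below nothing in $\gamma(v)$; to repair this I replace $\mu$ by $\mu':=\mu_{\widehat R\cup R^*}$. Then $\mu'\in{\cal S}(v,R')$, $R(\mu')=R(\widehat\mu)\cup R(\mu^*)$, and crucially $\alpha_{v,R'}(\mu')=\alpha_{v,R'}(\mu)$ and $\lambda_{v,R'}(\mu')=\lambda_{v,R'}(\mu)$: the discarded rotations lie outside $\gamma(v)$, so every settled man keeps the same decisive (last) rotation on his path and every unsettled man keeps $\mu_{R'}$ as partner, whence Observations \ref{obs:focusPartner} and \ref{obs:focusSettle} convert this partner-invariance into score-invariance. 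Applying Lemma \ref{lem:joinKnowOne} to $(\mu',\widehat\mu,\mu^*)$ as before and taking $(\widehat t_M,\widehat t_W)=(\alpha_{u,R'}(\widehat\mu),\lambda_{u,R'}(\widehat\mu))$ and $(t^*_M,t^*_W)=(\alpha_{w,R'}(\mu^*),\lambda_{w,R'}(\mu^*))$ produces values meeting the two required equations with both child-sets nonempty.

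The main obstacle is precisely this forward decomposition. The difficulty is twofold: first, $R(\mu)\cap\gamma(u)$ is itself not closed (a rotation strictly inside $\gamma(w)$ may sit below one strictly inside $\gamma(u)$, with the connecting chain forced through $\beta(v)$ by Proposition \ref{lem:tree}), which is why I pass to closures; second, rotations of $R(\mu)$ lying above $\gamma(v)$ are recovered by neither closure, which is why I must replace $\mu$ by $\mu'$ and verify that such rotations are invisible to the tentative scores $\alpha_{v,R'},\lambda_{v,R'}$. Once this invariance is in hand, the per-man collapse driven by Lemma \ref{lem:joinKnowOne} is routine in both directions.
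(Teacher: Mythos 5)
Your proof is correct, and your backward direction coincides with the paper's: form $\mu$ with $R(\mu)=R(\widehat\mu)\cup R(\mu^*)$ via Proposition \ref{lem:smCorrRots}, apply Lemma \ref{lem:joinKnowOne}, and sum the per-man identities. Where you genuinely diverge is the forward direction. The paper exploits Lemma \ref{lem:joinEqualS} much more aggressively than you do: since ${\cal S}(v,R')={\cal S}(u,R')={\cal S}(w,R')$, the very same $\mu\in{\cal S}(v,R',t_M,t_W)$ already lies in ${\cal S}(u,R')$ and ${\cal S}(w,R')$, and $R(\mu)=R(\mu)\cup R(\mu)$ lets one apply Lemma \ref{lem:joinKnowOne} with $\widehat\mu=\mu^*=\mu$; the witnesses are then simply $\widehat{t}_M=\alpha_{u,R'}(\mu)$, $t^*_M=\alpha_{w,R'}(\mu)$, etc., and no new matchings are constructed. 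You instead perform the ``classical'' tree-decomposition split, passing to $\widehat R=\cl(R(\mu)\cap\gamma(u))$ and $R^*=\cl(R(\mu)\cap\gamma(w))$ and repairing the loss of rotations of $R(\mu)$ outside $\gamma(v)$ by replacing $\mu$ with $\mu'=\mu_{\widehat R\cup R^*}$. This works --- your closures do intersect $\beta(v)$ in exactly $R'$, and your score-invariance argument for $\mu'$ versus $\mu$ goes through because for a settled man the entire portion of $P(m)$ that can carry rotations of $R(\mu)$ lies in $\gamma(v)$ (everything at or beyond $f_{v,R'}(m)$ is excluded from $R(\mu)$ by closedness, everything below $\ell_{v,R'}(m)$ survives in both sets, and $P_{v,R'}(m)\subseteq\gamma(v)$), while unsettled men keep $\mu_{R'}(m)$ in any case --- but it forces you to re-derive facts that the paper's choice of witnesses makes unnecessary. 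What your route buys is a decomposition in which the two child witnesses are genuinely ``local'' to their subtrees, which would be the right move in a setting where the recurrence demanded it; here the definition of compatibility makes the paper's shortcut available, and it is worth recognizing when a state-based DP over full solutions permits the same object to witness all three states at a join node.
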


\begin{proof}
In the first direction, assume that ${\cal S}(v,R',t_M,t_W)\neq\emptyset$, and let $\mu\in{\cal S}(v,R',t_M,t_W)$. Then, $\mu\in{\cal S}(u,R')\cap{\cal S}(w,R')$. By Lemma \ref{lem:joinKnowOne}, for all $m\in M^\star$, at least one of the two following conditions hold.
\begin{itemize}
\item $\alpha_{v,R'}(\mu,m) = \alpha_{u,R'}(\mu,m) + \alpha_{w,R'}(\mu,m) - \alpha_{u,R'}(\mu_{R'},m)$, and $\lambda_{v,R'}(\mu,m) = \lambda_{u,R'}(\mu,m) + \lambda_{w,R'}(\mu,m) - \lambda_{u,R'}(\mu_{R'},m)$.
\item $\alpha_{v,R'}(\mu,m) = \alpha_{u,R'}(\mu,m) + \alpha_{w,R'}(\mu,m) - \alpha_{w,R'}(\mu_{R'},m)$, and $\lambda_{v,R'}(\mu,m) = \lambda_{u,R'}(\mu,m) + \lambda_{w,R'}(\mu,m) - \lambda_{w,R'}(\mu_{R'},m)$.
\end{itemize}
By Definitions \ref{def:manPartner} and \ref{def:deltaState}, it holds that $\alpha_{v,R'}(\mu_{R'},m)=\alpha_{u,R'}(\mu_{R'},m)=\alpha_{w,R'}(\mu_{R'},m)$ and $\lambda_{v,R'}(\mu_{R'},m)=\lambda_{u,R'}(\mu_{R'},m)=\lambda_{w,R'}(\mu_{R'},m)$. Therefore, $\alpha_{v,R'}(\mu) = \alpha_{u,R'}(\mu) + \alpha_{w,R'}(\mu) - \alpha_{v,R'}(\mu_{R'})$ and $\lambda_{v,R'}(\mu) = \lambda_{u,R'}(\mu) + \lambda_{w,R'}(\mu) - \lambda_{v,R'}(\mu_{R'})$. By setting $\widehat{t}_M=\alpha_{u,R'}(\mu)$, $\widehat{t}_W=\lambda_{u,R'}(\mu)$, $t^*_M=\alpha_{w,R'}(\mu)$ and $t^*_W=\lambda_{w,R'}(\mu)$, we get that $\mu\in{\cal S}(u,R',\widehat{t}_M,\widehat{t}_W)\cap{\cal S}(w,R',t^*_M,t^*_W)$.

In the second direction, assume that there exist $\widehat{t}_M,\widehat{t}_W,t^*_M,t^*_W\in[n^2]$ such that $t_M=\widehat{t}_M+t^*_M-\alpha_{v,R'}(\mu_{R'})$, $t_W=\widehat{t}_W+t^*_W-\lambda_{v,R'}(\mu_{R'})$ and ${\cal S}(u,R',\widehat{t}_M,\widehat{t}_W),{\cal S}(w,R',t^*_M,t^*_W)\neq\emptyset$. Let $\widehat{m}\in {\cal S}(u,R',\widehat{t}_M,\widehat{t}_W)$ and $\mu^*\in{\cal S}(w,R',t^*_M,t^*_W)$. By Proposition \ref{lem:smCorrRots} and since $\widehat{m}\in {\cal S}(u,R')$ and $\mu^*\in{\cal S}(w,R')$, there exists $\mu\in{\cal S}(v,R')$ such that $R(\mu)=R(\widehat{m})\cup R(\mu^*)$. Now, by Lemma \ref{lem:joinKnowOne}, for all $m\in M^\star$, at least one of the following conditions hold.
\begin{itemize}
\item $\alpha_{v,R'}(\mu,m) = \alpha_{u,R'}(\widehat{\mu},m) + \alpha_{w,R'}(\mu^*,m) - \alpha_{u,R'}(\mu_{R'},m)$, and $\lambda_{v,R'}(\mu,m) = \lambda_{u,R'}(\widehat{\mu},m) + \lambda_{w,R'}(\mu^*,m) - \lambda_{u,R'}(\mu_{R'},m)$.
\item $\alpha_{v,R'}(\mu,m) = \alpha_{u,R'}(\widehat{\mu},m) + \alpha_{w,R'}(\mu^*,m) - \alpha_{w,R'}(\mu_{R'},m)$, and $\lambda_{v,R'}(\mu,m) = \lambda_{u,R'}(\widehat{\mu},m) + \lambda_{w,R'}(\mu^*,m) - \lambda_{w,R'}(\mu_{R'},m)$.
\end{itemize}
By Definitions \ref{def:manPartner} and \ref{def:deltaState}, $\alpha_{v,R'}(\mu_{R'},m)=\alpha_{u,R'}(\mu_{R'},m)=\alpha_{w,R'}(\mu_{R'},m)$ and $\lambda_{v,R'}(\mu_{R'},m)=\lambda_{u,R'}(\mu_{R'},m)=\lambda_{w,R'}(\mu_{R'},m)$. Therefore, $\alpha_{v,R'}(\mu) = \widehat{t}_M+t^*_M-\alpha_{v,R'}(\mu_{R'})$ and $\lambda_{v,R'}(\mu) = \widehat{t}_W+t^*_W-\lambda_{v,R'}(\mu_{R'})$, which implies that $\mu\in{\cal S}(v,R',t_M,t_W)$.
\end{proof}

\subsection{Sex Equal Stable Marriage}\label{sec:fptSESM}

Let $I=(M,W,\{\pos_m\}_{m\in M},\{\pos_w\}_{w\in W})$ be an instance of {\sc SESM}. First, call \alg\ with $I$ as input. We thus obtain the set of all pairs $(t_M,t_W)$ for which there exists $\mu\in{\cal S}$ such that both $\sat_M(\mu)=t_M$ and $\sat_W(\mu)=t_W$. Then, we return a pair $(t_M,t_W)$ that minimizes $|t_M-t_W|$. Clearly, by the correctness of \alg, we thus solve {\sc SESM} in time $\OO(2^{\tw}\cdot n^{10})$.

To solve {\sc SESM} faster, we define a table {\sf S} as follows. Each entry of the table {\sf S} is of the form {\sf S}$[v,R',d]$, where $v\in V(T)$, $R'\subseteq\beta(v)$ and $d\in \{-n^2,-n^2+1,\ldots,n^2\}$. The following definition addresses the purpose of these entries.

\begin{definition}\label{def:ScomputeCorrectly}
We say that {\sf S} is {\em computed correctly} if for each entry {\sf S}$[v,R',d]$, it holds that 
{\sf S}$[v,R',d]\in\{0,1\}$, and {\sf S}$[v,R',t_M,t_W]=1$ if and only if there exist $t_M,t_W\in[n^2]$ such that $t_M-t_W=d$ and ${\cal S}(v,R',t_M,t_W)\neq\emptyset$.
\end{definition}

However, Definition \ref{def:ScomputeCorrectly} directly implies the correctness of the following observation.

\begin{observation}\label{obs:genToSESM}
Let {\sf N} be a table that is computed correctly by Definition \ref{def:computeCorrectly}. Then, the table {\sf S} is computed correctly by Definition \ref{def:ScomputeCorrectly} if and only if for each entry {\sf S}$[v,R',d]$, it holds that 
{\sf S}$[v,R',d]\in\{0,1\}$, and {\sf S}$[v,R',t_M,t_W]=1$ if and only if there exist $t_M,t_W\in[n^2]$ such that $t_M-t_W=d$ and {\sf N}$[v,R',t_M,t_W]=1$.
\end{observation}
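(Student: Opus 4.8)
The plan is to treat this as a pure substitution argument: the observation is nothing more than rewriting the defining condition of \emph{computed correctly} for {\sf S} (Definition~\ref{def:ScomputeCorrectly}) by replacing the predicate ``${\cal S}(v,R',t_M,t_W)\neq\emptyset$'' with the provably equivalent predicate ``{\sf N}$[v,R',t_M,t_W]=1$''. The only ingredient I would use is the standing hypothesis that {\sf N} has been computed correctly in the sense of Definition~\ref{def:computeCorrectly}, which by definition supplies exactly the biconditional {\sf N}$[v,R',t_M,t_W]=1 \Longleftrightarrow {\cal S}(v,R',t_M,t_W)\neq\emptyset$ for every $v\in V(T)$, $R'\subseteq\beta(v)$, and $t_M,t_W\in[n^2]$.

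Concretely, I would first fix an arbitrary entry {\sf S}$[v,R',d]$ and recall that Definition~\ref{def:ScomputeCorrectly} characterizes correctness of {\sf S} by two requirements: that {\sf S}$[v,R',d]\in\{0,1\}$, and that {\sf S}$[v,R',d]=1$ holds precisely when there exist $t_M,t_W\in[n^2]$ with $t_M-t_W=d$ and ${\cal S}(v,R',t_M,t_W)\neq\emptyset$. Next I would invoke the hypothesis on {\sf N} to note that, ranging over the same $t_M,t_W\in[n^2]$, the existential statement ``there exist $t_M,t_W$ with $t_M-t_W=d$ and ${\cal S}(v,R',t_M,t_W)\neq\emptyset$'' is logically equivalent, clause by clause, to ``there exist $t_M,t_W$ with $t_M-t_W=d$ and {\sf N}$[v,R',t_M,t_W]=1$,'' since each occurrence of the nonemptiness predicate can be swapped for the table value under the fixed $(v,R')$. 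Substituting this equivalence into the second requirement of Definition~\ref{def:ScomputeCorrectly} yields verbatim the condition stated on the right-hand side of the observation, while the first requirement ({\sf S}$[v,R',d]\in\{0,1\}$) is identical on both sides; hence the two characterizations of correctness of {\sf S} coincide, which is the claimed biconditional.

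There is essentially no mathematical obstacle here: the content is entirely a transport of an equivalence through the definition, and the quantifier over $t_M,t_W$ is preserved because the substitution is performed pointwise for each admissible pair. The only point demanding a little care—and the one I would flag explicitly—is the cosmetic mismatch in the displayed conditions, where the index of the table entry is written as {\sf S}$[v,R',t_M,t_W]$ rather than {\sf S}$[v,R',d]$; I would read this charitably as referring to the entry {\sf S}$[v,R',d]$ currently under consideration, so that the quantified variables $t_M,t_W$ are genuinely the bound witnesses of the existential and not indices of the {\sf S}-entry. With that reading, the proof is a single line of substitution justified by Definition~\ref{def:computeCorrectly}.
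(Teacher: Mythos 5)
Your proof is correct and matches the paper's intent exactly: the paper offers no explicit proof, stating only that Definition~\ref{def:ScomputeCorrectly} ``directly implies'' the observation, which is precisely the pointwise substitution of the biconditional from Definition~\ref{def:computeCorrectly} that you carry out. Your reading of the typographical slip ({\sf S}$[v,R',t_M,t_W]$ for {\sf S}$[v,R',d]$) is also the intended one.
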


In light of Observation \ref{obs:genToSESM}, it is straightforward to modify \alg\ to solve {\sc SESM} in time $\OO(2^{\tw}\cdot n^{6})$. For the sake of completeness, we briefly present the computation here. We process the entries of {\sf S} by traversing the tree $T$ in post-order. The order in which we process entries corresponding to the same node $v\in V(T)$ is arbitrary. Thus, the basis corresponds to entries {\sf S}$[v,R',d]$ where $v$ is a leaf, and the steps correspond to entries where $v$ is a forget node, an introduce node or a join node.

\medskip
\myparagraph{Leaf Node.} In the basis, where $v$ is a leaf, we have that $\beta(v)=\emptyset$. We consider two cases.
\begin{enumerate}
\itemsep0em 
\item If $d=\alpha_{v,\emptyset}(\mu_{\emptyset})-\lambda_{v,\emptyset}(\mu_{\emptyset})$: {\sf S}$[v,\emptyset,d]=1$.
\item Otherwise: {\sf S}$[v,\emptyset,d]=0$.
\end{enumerate}

\myparagraph{Forget Node.} Let $u$ denote the child of $v$ in $T$, and $\rho$ denote the vertex in $\beta(u)\setminus\beta(v)$. Then, {\sf S}$[v,R',d]=\max\{\mathrm{\sf N}[u,R',d],\mathrm{\sf N}[u,R'\cup\{\rho\},d]\}$.

\medskip
\myparagraph{Introduce Node.} Let $u$ be the child of $v$ in $T$, and $\rho$ be the vertex in $\beta(v)\setminus\beta(u)$. Consider the following cases.
\begin{enumerate}
\item If $(\cl(R')\cap\beta(v))\setminus R'\neq\emptyset$: {\sf S}$[v,R',d]=0$.
\item Else if $\rho\notin R'$: $M_\rho=\emptyset$; {\sf S}$[v,R',d]=\mathrm{\sf S}[u,R',d]$.
\item Otherwise ($\rho\in R'$): For each $m\in M_\rho$, denote $\diff^{SESM}_{v,R'}(M_\rho)=\diff^\alpha(M_\rho)-\diff^{\lambda}(M_\rho)$. Now, {\sf S}$[v,R',d]$ is computed as follows.
\vspace{-0.5em}
\[\mathrm{\sf S}[v,R',d] = \mathrm{\sf S}[u,R'\setminus\{\rho\},d-\diff^{SESM}(M_\rho)].\]
\end{enumerate}

\myparagraph{Join Node.} Let $u$ and $w$ denote the children of $v$ in $T$. For the sake of efficiency, we compute all entries of the form {\sf S}$[v,R',\cdot]$ simultaneously. First, we initialize each such entry to 0. Now, we compute $D(u,R')=\{\widehat{d}\in\{-n^2,-n^2+1,\ldots,n^2-1,n^2\}: \mathrm{\sf S}[u,R',\widehat{d}]=1\}$ and $D(w,R')=\{\widehat{d}\in\{-n^2,-n^2+1,\ldots,n^2-1,n^2\}: \mathrm{\sf S}[w,R',\widehat{d}]=1\}$. Then, for all $\widehat{d}\in D(u,R')$ and $d^*\in D(w,R')$ such that $d=\widehat{d}+d^*-(\alpha_{v,R'}(\mu_{R'})-\lambda_{v,R'}(\mu_{R'}))\in\{-n^2,-n^2+1,\ldots,n^2\}$, we set $\mathrm{\sf S}[v,R',d]=1$.

\subsection{Balanced Stable Marriage}\label{sec:fptBSM}

Let $I=(M,W,\{\pos_m\}_{m\in M},\{\pos_w\}_{w\in W})$ be an instance of {\sc BSM}. First, call \alg\ with $I$ as input. We thus obtain the set of all pairs $(t_M,t_W)$ for which there exists $\mu\in{\cal S}$ such that both $\sat_M(\mu)=t_M$ and $\sat_W(\mu)=t_W$. Then, we return a pair $(t_M,t_W)$ that minimizes $\max\{t_M,t_W\}$. Clearly, by the correctness of \alg, we thus solve {\sc BSM} in time $\OO(2^{\tw}\cdot n^{10})$.

To solve {\sc BSM} faster, we define a table {\sf B} as follows. Each entry of the table {\sf B} is of the form {\sf B}$[v,R',b]$, where $v\in V(T)$, $R'\subseteq\beta(v)$ and $b\in [n^2]\cup\{\nil\}$. The following definition addresses the purpose of these entries.

\begin{definition}\label{def:BcomputeCorrectly}
We say that {\sf B} is {\em computed correctly} if for each entry {\sf B}$[v,R',b]$,
{\sf B}$[v,R',b]\in[n^2]\cup\{\nil\}$, and for all $i\in[n^2]$, {\sf B}$[v,R',b]=i$ if and only if ${\cal S}(v,R',b,i)\neq\emptyset$ and for all $j\in[n^2]$ where $j<i$, ${\cal S}(v,R',b,j)=\emptyset$.
\end{definition}

However, Definition \ref{def:BcomputeCorrectly} directly implies the correctness of the following observation.

\begin{observation}\label{obs:genToBSM}
Let {\sf N} be a table that is computed correctly by Definition \ref{def:computeCorrectly}. Then, the table {\sf B} is computed correctly by Definition \ref{def:BcomputeCorrectly} if and only if for each entry {\sf B}$[v,R',d]$,
{\sf B}$[v,R',d]\in[n^2]\cup\{\nil\}$, and for all $i\in[n^2]$, {\sf B}$[v,R',b]=i$ if and only if ${\sf N}[v,R',b,i]=1$ and for all $j\in[n^2]$ where $j<i$, ${\sf N}[v,R',b,j]=0$.
\end{observation}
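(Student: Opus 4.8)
The plan is to treat this observation as a purely logical restatement and to establish it by directly substituting the characterization guaranteed by the hypothesis on {\sf N}. The hypothesis is that {\sf N} is computed correctly in the sense of Definition \ref{def:computeCorrectly}, which means that for every node $v$, every $R'\subseteq\beta(v)$, and every pair $t_M,t_W\in[n^2]$, the value {\sf N}$[v,R',t_M,t_W]$ lies in $\{0,1\}$ and equals $1$ if and only if ${\cal S}(v,R',t_M,t_W)\neq\emptyset$. Reading this equivalence in both directions yields the two atomic equivalences that drive the whole argument: writing $b$ for the men's-satisfaction coordinate, we have ${\cal S}(v,R',b,i)\neq\emptyset \iff {\sf N}[v,R',b,i]=1$, and dually ${\cal S}(v,R',b,j)=\emptyset \iff {\sf N}[v,R',b,j]=0$.

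First I would fix an arbitrary entry {\sf B}$[v,R',b]$ and an arbitrary value $i\in[n^2]$. Definition \ref{def:BcomputeCorrectly} asserts that, under correct computation, {\sf B}$[v,R',b]=i$ holds exactly when ${\cal S}(v,R',b,i)\neq\emptyset$ and ${\cal S}(v,R',b,j)=\emptyset$ for all $j\in[n^2]$ with $j<i$. I would then replace the first conjunct using the equivalence ${\cal S}(v,R',b,i)\neq\emptyset \iff {\sf N}[v,R',b,i]=1$, and each conjunct of the universally quantified condition using ${\cal S}(v,R',b,j)=\emptyset \iff {\sf N}[v,R',b,j]=0$. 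Because these substitutions are equivalences rather than one-directional implications, the resulting predicate, namely that {\sf B}$[v,R',b]=i$ iff ${\sf N}[v,R',b,i]=1$ and ${\sf N}[v,R',b,j]=0$ for all $j<i$, is logically equivalent to the predicate of Definition \ref{def:BcomputeCorrectly}. The membership clause {\sf B}$[v,R',b]\in[n^2]\cup\{\nil\}$ is identical on both sides and requires no translation; in particular, the $\nil$ case (no admissible $i$ exists) corresponds precisely to ${\sf N}[v,R',b,j]=0$ for all $j\in[n^2]$, so the correspondence is complete.

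There is essentially no obstacle here: the statement is a definitional bridge whose only role is to let us phrase the subsequent {\sc BSM} computation directly in terms of the table {\sf N}, and thereby justify the faster specialized DP for {\sf B}. The one point worth stating carefully, rather than a genuine difficulty, is that the equivalence must be argued at the level of a fixed $i$ \emph{together with} the minimality condition over $j<i$, since {\sf B} stores the \emph{minimum} admissible second coordinate; but checking that this minimization is preserved is immediate, as the substitution acts entrywise on a condition quantified over $j$ and hence commutes with the minimization. All the genuine work, namely that ${\cal S}(v,R',t_M,t_W)\neq\emptyset$ is faithfully captured by {\sf N}, is already contained in the correctness proof of \alg, which I would invoke as a black box exactly as the hypothesis of the observation permits.
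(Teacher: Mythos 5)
Your proposal is correct and matches the paper's treatment: the paper gives no explicit proof, stating only that Definition \ref{def:BcomputeCorrectly} "directly implies" the observation, and your argument is precisely the entrywise substitution of the equivalence ${\cal S}(v,R',b,i)\neq\emptyset \iff {\sf N}[v,R',b,i]=1$ guaranteed by Definition \ref{def:computeCorrectly} into the defining condition for {\sf B}. Your additional care with the $\nil$ case and the minimality over $j<i$ is sound and only makes explicit what the paper leaves implicit.
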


In light of Observation \ref{obs:genToBSM}, it is straightforward to modify \alg\ to solve {\sc BSM} in time $\OO(2^{\tw}\cdot n^{6})$. For the sake of completeness, we briefly present the computation here. We process the entries of {\sf B} by traversing the tree $T$ in post-order. The order in which we process entries corresponding to the same node $v\in V(T)$ is arbitrary. Thus, the basis corresponds to entries {\sf B}$[v,R',b]$ where $v$ is a leaf, and the steps correspond to entries where $v$ is a forget node, an introduce node or a join node.

\medskip
\myparagraph{Leaf Node.} In the basis, where $v$ is a leaf, we have that $\beta(v)=\emptyset$. We consider two cases.
\begin{enumerate}
\itemsep0em 
\item If $b=\alpha_{v,\emptyset}(\mu_{\emptyset})$: {\sf B}$[v,\emptyset,b]=\lambda_{v,\emptyset}(\mu_{\emptyset})$.
\item Otherwise: {\sf B}$[v,\emptyset,b]=\nil$.
\end{enumerate}

\myparagraph{Forget Node.} Let $u$ denote the child of $v$ in $T$, and $\rho$ denote the vertex in $\beta(u)\setminus\beta(v)$. Then, {\sf B}$[v,R',b]=\min\{\mathrm{\sf B}[u,R',b],\mathrm{\sf B}[u,R'\cup\{\rho\},b]\}$. Here, we define integers to be smaller than $\nil$.

\medskip
\myparagraph{Introduce Node.} Let $u$ be the child of $v$ in $T$, and $\rho$ be the vertex in $\beta(v)\setminus\beta(u)$. Consider the following cases.
\begin{enumerate}
\item If $(\cl(R')\cap\beta(v))\setminus R'\neq\emptyset$: {\sf B}$[v,R',b]=\nil$.
\item Else if $\rho\notin R'$: $M_\rho=\emptyset$; {\sf B}$[v,R',b]=\mathrm{\sf B}[u,R',b]$.
\item Otherwise ($\rho\in R'$): Compute {\sf B}$[v,R',b]$ as follows. Here, we define the subtraction of an integer from $\nil$ to be $\nil$.
\vspace{-0.5em}
\[\mathrm{\sf B}[v,R',b] = \mathrm{\sf B}[u,R'\setminus\{\rho\},b-\diff^\alpha(M_\rho)]-\diff^\lambda(M_\rho).\]
\end{enumerate}

\myparagraph{Join Node.} Let $u$ and $w$ denote the children of $v$ in $T$. For the sake of efficiency, we compute all entries of the form {\sf B}$[v,R',\cdot]$ simultaneously. First, we initialize each such entry to $\nil$. Now, we compute $B(u,R')=\{\widehat{b}\in[n^2]: \mathrm{\sf B}[u,R',\widehat{b}]\neq \nil\}$ and $B(w,R')=\{\widehat{b}\in[n^2]: \mathrm{\sf B}[w,R',\widehat{b}]\neq\nil\}$. Then, for all $\widehat{b}\in D(u,R')$ and $b^*\in D(w,R')$ such that $b=\widehat{b}+b^*-\alpha_{v,R'}(\mu_{R'})\in[n^2]$, we set $\mathrm{\sf B}[v,R',b]$ to be the minimum between the previous value this entry stored and $\mathrm{\sf B}[u,R',\widehat{b}]+\mathrm{\sf B}[w,R',b^*]-\lambda_{v,R'}(\mu_{R'})$.

\section{Rotation Digraph: Lower Bounds}\label{sec:seth}

In this section, we prove Theorem \ref{thm:sethIntro}, based on the approach described in Section~\ref{sec:overview}. Throughout this section, the notation $\tw$ would refer to the treewidth of the underlying undirected graph of the rotation digraph. The source of our reductions is the {\sc $s$-Sparse $p$-CNF-SAT} problem, which is the special case of the {\sc $p$-CNF-SAT} problem where the number of clauses is upper bounded by $s\cdot n$. Here, $n$ is the number of variables.

\begin{proposition}[\cite{DBLP:journals/jcss/ImpagliazzoPZ01,DBLP:conf/coco/CalabroIP06,DBLP:journals/talg/CyganDLMNOPSW16}]\label{prop:SparseSAT}
Unless \SETH\ fails, for every fixed $\epsilon<1$, there exist integers $p=p(\epsilon)$ and $s=s(\epsilon)$ such that {\sc $s$-Sparse $p$-CNF-SAT} cannot be solved in time $\OO((2-\epsilon)^{n})$, where $n$ is the number of variables.
\end{proposition}

Given a clause $C$ of a formula $\varphi$ in CNF, we would also treat $C$ as a set whose elements are the literals of the clause $C$.

\subsection{Sex Equal Stable Marriage}\label{sec:sethSESM}

First, we prove that unless \SETH\ fails, {\sc SESM} cannot be solved in time $(2-\epsilon)^{\tw}\cdot n^{\OO(1)}$ for any fixed $\epsilon>0$, where $n$ is the number of agents. Note that this claim is equivalent to the one stating that unless \SETH\ fails, {\sc SESM} cannot be solved in time $2^{\epsilon\tw}\cdot n^{\OO(1)}$ for any fixed $\epsilon<1$. To prove this claim, we suppose, by way of contradiction, that there exist fixed $\epsilon>0$ and $c\geq 1$ as well as an algorithm \algSESM\ such that \algSESM\ solves {\sc SESM} in time $2^{\epsilon\tw}\cdot n^c$.

\subsubsection{Reduction}

Denote $\delta=\epsilon+(1-\epsilon)/2<1$. By Proposition~\ref{prop:SparseSAT}, supposing that \SETH\ is true, there exist integers $p=p(\delta)$ and $s=s(\delta)$ such that {\sc $s$-Sparse $p$-CNF-SAT} cannot be solved in time $\OO(2^{\delta n})$. Let $\varphi=C_1\wedge C_2\wedge\cdots\wedge C_r$ be an instance of {\sc $p$-CNF-SAT}. Denote ${\cal C}=\{C_1,C_2,\ldots,C_r\}$. Note that by the definition of {\sc $s$-Sparse $p$-CNF-SAT}, $r\leq s\cdot n$. In this context, let $X=\{x_1,x_2,\ldots,x_n\}$ denote the set of positive literals (to which we also refer as variables), and let $\overline{X}=\{\overline{x}_1,\overline{x}_2,\ldots,\overline{x}_n\}$ denote the set of negative literals, where for all $t\in[n]$, $\overline{x}_t$ is the negation of $x_t$. We now describe how to construct an instance $\red_{SESM}(\varphi)=(M,W,\{\pos_m\}|_{m\in M},\{\pos_w\}|_{w\in W})$ of {\sc SESM}.

\medskip
\myparagraph{Partial Truth Assignments.} Let $d$ denote the smallest possible integer such that $\delta \geq  \epsilon + \frac{3cs}{d}$, and set $q=\frac{r}{d}$. We assume w.l.o.g.~that $r$ is divisible by $d$. We partition $\cal C$ into $q$ sets as follows: For all $i\in[q]$, we define ${\cal C}^i=\{C_{1+(i-1)d},C_{2+(i-1)d},\ldots,C_{id}\}$. Moreover, for all $i\in[q]$, let $X^i$ be the set of every variable $x_j\in X$ such that at least one among the literals $x_j$ and $\overline{x}_j$ belongs to at least one clause in ${\cal C}^i$. Now, for all $i\in[q]$, let $F^i=\{f^i_1,f^i_2,\ldots,f^i_{|F^i|}\}$ denote the set of every truth assignment to the variables in $X^i$ that satisfies {\em all} of the clauses in ${\cal C}^i$. For all $i\in[q]$ and $j\in[|F^i|]$, we let $P^i_j$ denote the subset of variables of $X^i$ to which $f^i_j$ assigns true, and we let $N^i_j$ denote the subset of variables of $X^i$ to which $f^i_j$ assigns false. Clearly, for all $i\in[q]$ and $j\in[|F^i|]$, $P^i_j\cup N^i_j=X^i$ and $P^i_j\cap N^i_j=\emptyset$. For all $i\in[q]$, denote $a^i=|F^i|$, ${\cal P}^i=\{P^i_1,P^i_2,\ldots,P^i_{a^i}\}$ and ${\cal N}^i=\{N^i_1,N^i_2,\ldots,N^i_{a^i}\}$. Finally, denote $\sum_{i\in[q]}a^i=\widetilde{a}$.

For all $t\in[n]$, denote ${\cal P}(x_t)=\{P^i_j: i\in[q], j\in[a^i], x_t\in P^i_j\}$ and ${\cal N}(x_t)=\{N^i_j: i\in[q], j\in[a^i], x_t\in N^i_j\}$.

\medskip
\myparagraph{Agents.} First, to represent variables, we introduce four sets of new agents:
\begin{itemize}
\item $M_{\var}=\{m_1,m_2,\ldots,m_n\}$.
\item $\widehat{M}_{\var}=\{\widehat{m}_1,\widehat{m}_2,\ldots,\widehat{m}_n\}$.
\item $W_{\var}=\{w_1,w_2,\ldots,w_n\}$.
\item $\widehat{W}_{\var}=\{\widehat{w}_1,\widehat{w}_2,\ldots,\widehat{w}_n\}$.
\end{itemize}

Next, for all $i\in[q]$, to represent the sets in ${\cal P}^i$, we introduce four sets of new agents:
\begin{itemize}
\item $M^i=\{m^i_1,m^i_2,\ldots,m^i_{a^i}\}$.
\item $\widehat{M}^i=\{\widehat{m}^i_1,\widehat{m}^i_2,\ldots,\widehat{m}^i_{a^i}\}$.
\item $W^i=\{w_1,w_2,\ldots,w_{a^i}\}$.
\item $\widehat{W}^i=\{\widehat{w}^i_1,\widehat{w}^i_2,\ldots,\widehat{w}_{a^i}\}$.
\end{itemize}

Similarly, for all $i\in[q]$, to represent the sets in ${\cal P}^i$, we introduce four sets of new agents:
\begin{itemize}
\item $\overline{M}^i=\{\overline{m}^i_1,\overline{m}^i_2,\ldots,\overline{m}^i_{a^i}\}$.
\item $\widehat{\overline{M}}^i=\{\widehat{\overline{m}}^i_1,\widehat{\overline{m}}^i_2,\ldots,\widehat{\overline{m}}^i_{a^i}\}$.
\item $\overline{W}^i=\{\overline{w}^i_1,\overline{w}^i_2,\ldots,\overline{w}^i_{a^i}\}$.
\item $\widehat{\overline{W}}^i=\{\widehat{\overline{w}}^i_1,\widehat{\overline{w}}^i_2,\ldots,\widehat{\overline{w}}^i_{a^i}\}$.
\end{itemize}

Moreover, we introduce $\alpha$ new happy pairs, denoted by $(m^1_{\hap},w^1_{\hap}),(m^2_{\hap},w^2_{\hap}),\ldots,$ $(m^\alpha_{\hap},w^\alpha_{\hap})$, where the definition of $\alpha$ relies on the arguments $\lambda(i)$ ,$\gamma(i)$ and $\tau$, which are as defined later.
\[\begin{array}{ll}
\alpha = \displaystyle{\sum_{i\in[q]}((a^i-1)\lambda(i)-\gamma(i))} + (2q-\widetilde{a})\tau.
\end{array}\]

Finally, we introduce one new man, $m^\star$, and one new woman $w^\star$. We define the preference list of $m^\star$ by setting its domain to contain all happy women and $w^\star$, and defining $\pos_{m^\star}(w^\star)=\alpha+1$ and $\pos_{m^\star}(w^i_{\hap})=i$ for all $i\in[\alpha]$. The preference list of $w^\star$ is simply defined to contain only~$m^\star$.

\medskip
\myparagraph{Variable Selector.} We first describe the preference lists of the agents in the sets $M_{\var}$, $\widehat{M}_{\var}$, $W_{\var}$ and $\widehat{W}_{\var}$. For all $t\in[n]$, we define the preference list of $m_t$ as follows. We set $\domain(\pos_{m_t})=\{w_t\}\cup \{\overline{w}^i_j: N^i_j\in {\cal N}(x_t)\}\cup\{\widehat{w}_t\}$. Then, we let $w_t$ be the woman most preferred by $m_t$, $\widehat{w}_t$ be the woman least preferred by $m_t$, and strictly order all of the other women in the domain arbitrarily. For all $t\in[n]$, the preference list of $\widehat{m}_t$ is simply defined by setting $\domain(\pos_{\widehat{m}_t})=\{\widehat{w}_t,w_t\}$, $\pos_{\widehat{m}_t}(\widehat{w}_t)=1$ and $\pos_{\widehat{m}_t}(w_t)=2$.

Now, for all $t\in[n]$, we define the preference list of $w_t$ as follows. We set $\domain(\pos_{w_t})=\{\widehat{m}_t\}\cup \{m^i_j: P^i_j\in {\cal P}(x_t)\}\cup\{m_t\}$. Then, we let $\widehat{m}_t$ be the man most preferred by $w_t$, $m_t$ be the man least preferred by $w_t$, and strictly order all of the other men in the domain arbitrarily. For all $t\in[n]$, the preference list of $\widehat{w}_t$ is simply defined by setting $\domain(\pos_{\widehat{w}_t})=\{m_t,\widehat{m}_t\}$, $\pos_{\widehat{w}_t}(m_t)=1$ and $\pos_{\widehat{w}_t}(\widehat{m}_t)=2$.

\medskip
\myparagraph{Truth Selector.} We next describe the preference lists of the agents in the sets $M^i$, $\widehat{M}^i$, $W^i$ and $\widehat{W}^i$. For all $i\in[q]$ and $j\in[a^i]$, we define the preference list of $m^i_j$ as follows. We set $\domain(\pos_{m^i_j})=\{w^i_j\}\cup\{w_t: x_t\in P^i_j\}\cup\{\overline{w}^i_k: k\in[a^i], k\neq j\}\cup W'\cup\{\widehat{w}^i_j\}$, where $W'$ is some subset of $\gamma(i)=n^{20}\cdot 2^{i-1}$ arbitrarily chosen happy women. Then, we let $w^i_j$ be the woman most preferred by $m^i_j$, $\widehat{w}^i_j$ be the woman least preferred by $m^i_j$, and strictly order all of the other women in the domain arbitrarily. For all $i\in[q]$ and $j\in[a^i]$, the preference list of $\widehat{m}^i_j$ is simply defined by setting $\domain(\pos_{\widehat{m}^i_j})=\{\widehat{w}^i_j,w^i_j\}$, $\pos_{\widehat{m}^i_j}(\widehat{w}^i_j)=1$ and $\pos_{\widehat{m}^i_j}(w^i_j)=2$.

Now, for all $i\in[q]$ and $j\in[a^i]$, the preference list of $w^i_j$ is defined as follows. We set $\domain(\pos_{w^i_j})=\{\widehat{m}^i_j\}\cup M'\cup\{m^i_j\}$, where $M'$ is some subset of $\lambda(i)=n^{20}\cdot2^{2q-i}$ arbitrarily chosen happy men. Then, we let $\widehat{m}^i_j$ be the man most preferred by $w^i_j$, $m^i_j$ be the man least preferred by $w^i_j$, and strictly order all of the other men in the domain arbitrarily. For all $i\in[q]$ and $j\in[a^i]$, the preference list of $\widehat{w}^i_j$ is simply defined by setting $\domain(\pos_{\widehat{w}^i_j})=\{m^i_j,\widehat{m}^i_j\}$, $\pos_{\widehat{w}^i_j}(m^i_j)=1$ and $\pos_{\widehat{w}^i_j}(\widehat{m}^i_j)=2$.

\medskip
\myparagraph{False Selector.} Finally, we describe the preference lists of the agents in the sets $\overline{M}^i$, $\widehat{\overline{M}}^i$, $\overline{W}^i$ and $\widehat{\overline{W}}^i$. For all $i\in[q]$ and $j\in[a^i]$, the preference list of $\overline{m}^i_j$ is defined as follows. We set $\domain(\pos_{\overline{m}^i_j})=\{\overline{w}^i_j\}\cup W'\cup\{\widehat{\overline{w}}^i_j\}$, where $W'$ is some subset of $\tau=n^{10}$ arbitrarily chosen happy women. Then, we let $\overline{w}^i_j$ be the woman most preferred by $\overline{m}^i_j$, $\widehat{\overline{w}}^i_j$ be the woman least preferred by $\overline{m}^i_j$, and strictly order all of the other women in the domain arbitrarily. For all $i\in[q]$ and $j\in[a^i]$, the preference list of $\widehat{\overline{m}}^i_j$ is simply defined by setting $\domain(\pos_{\widehat{\overline{m}}^i_j})=\{\widehat{\overline{w}}^i_j,\overline{w}^i_j\}$, $\pos_{\widehat{\overline{m}}^i_j}(\widehat{\overline{w}}^i_j)=1$ and $\pos_{\widehat{\overline{m}}^i_j}(\overline{w}^i_j)=2$.

Now, for all $i\in[q]$ and $j\in[a^i]$, the preference list of $\overline{w}^i_j$ is defined as follows. We set $\domain(\pos_{\overline{w}^i_j})=\{\widehat{\overline{m}}^i_j\}\cup \{m_t: x_t\in N^i_j\}\cup\{m^i_k: k\in[a^i], k\neq j\}\cup M'\cup\{\overline{m}^i_j\}$, where $M'$ is some subset of $\tau$ arbitrarily chosen happy men. Then, we let $\widehat{\overline{m}}^i_j$ be the man most preferred by $\overline{w}^i_j$, $\overline{m}^i_j$ be the man least preferred by $\overline{w}^i_j$, and strictly order all of the other men in the domain arbitrarily. For all $i\in[q]$ and $j\in[a^i]$, the preference list of $\widehat{\overline{w}}^i_j$ is simply defined by setting $\domain(\pos_{\widehat{\overline{w}}^i_j})=\{\overline{m}^i_j,\widehat{\overline{m}}^i_j\}$, $\pos_{\widehat{\overline{w}}^i_j}(\overline{m}^i_j)=1$ and $\pos_{\widehat{\overline{w}}^i_j}(\widehat{\overline{m}}^i_j)=2$.

\subsubsection{All Stable Matchings}\label{sec:AllSMs}

We begin our analysis by identifying exactly which matchings are stable matchings. Let us start with the following definition.

\begin{definition}
Let $\varphi$ be an instance of {\sc $s$-Sparse $p$-CNF-SAT}. Then, in the context of $\red_{SESM}(\varphi)$, the matching $\mu_{\emptyset}$ is defined as follows.
\begin{itemize}
\item For all $t\in[n]$: $\mu(m_t)=w_t$ and $\mu(\widehat{m}_t)=\widehat{w}_t$.
\item For all $i\in[q]$ and $j\in[a^i]$: $\mu(m^i_j)=w^i_j$ and $\mu(\widehat{m}^i_j)=\widehat{w}^i_j$.
\item For all $i\in[q]$ and $j\in[a^i]$: $\mu(\overline{m}^i_j)=\overline{w}^i_j$ and $\mu(\widehat{\overline{m}}^i_j)=\widehat{\overline{w}}^i_j$.
\item For all $i\in[\alpha]$: $\mu(m^i_{\hap})=w^i_{\hap}$.
\item $\mu(m^\star)=w^\star$.
\end{itemize}
\end{definition}

Observe that in the matching $\mu_{\emptyset}$, every man except $m^\star$ is matched to the woman he prefers the most. Moreover, the man $m^\star$ cannot belong to a blocking pair since all of the women who he prefers over $w^\star$ are matched to their most preferred men. Thus, we have the following observation.

\begin{observation}\label{obs:manOpt1}
Let $\varphi$ be an instance of {\sc $s$-Sparse $p$-CNF-SAT}. Then, in the context of $\red_{SESM}(\varphi)$, the matching $\mu_{\emptyset}$ is a stable matching.
\end{observation}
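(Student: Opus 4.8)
The statement to prove is Observation~\ref{obs:manOpt1}: that $\mu_\emptyset$ is a stable matching of $\red_{SESM}(\varphi)$. The plan is to verify directly that no blocking pair exists, arguing man-by-man as in the analogous proofs for the \WO-hardness reductions (e.g.\ the proof of Lemma~\ref{lem:w1SESMAll}). The key observation, already noted in the text preceding the statement, is that in $\mu_\emptyset$ every man other than $m^\star$ is matched to the woman he ranks first. This immediately kills most potential blocking pairs, so the real content is organizing the case analysis cleanly.

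First I would dispose of all men except $m^\star$ by a single uniform argument. For each $m\in M_{\var}\cup\bigcup_i(M^i\cup\overline M^i)$ and each ``hat'' man in $\widehat M_{\var}\cup\bigcup_i(\widehat M^i\cup\widehat{\overline M}^i)$, as well as each happy man $m^i_{\hap}$, Definition of $\mu_\emptyset$ assigns him the top of his preference list: $m_t\mapsto w_t$, $\widehat m_t\mapsto\widehat w_t$, $m^i_j\mapsto w^i_j$, $\widehat m^i_j\mapsto\widehat w^i_j$, $\overline m^i_j\mapsto\overline w^i_j$, $\widehat{\overline m}^i_j\mapsto\widehat{\overline w}^i_j$, and $m^i_{\hap}\mapsto w^i_{\hap}$, and in each case the assigned woman is the one at position $1$ in his list by the construction in the Variable/Truth/False Selector paragraphs. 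A man matched to his first choice can never strictly prefer another woman, hence by blocking-pair condition~(iii)/(iv) of the blocking-pair definition he cannot participate in any blocking pair. I would state this as one sentence covering all these men.

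The only remaining man is $m^\star$. Here I would invoke the already-stated reasoning: every woman $w^i_{\hap}$ that $m^\star$ prefers to $w^\star$ (i.e.\ those with $\pos_{m^\star}(w^i_{\hap})=i\le\alpha<\alpha+1=\pos_{m^\star}(w^\star)$) is, under $\mu_\emptyset$, matched to her happy partner $m^i_{\hap}$, whom she ranks first. Thus for each such $w^i_{\hap}$, condition~(iv) of the blocking-pair definition fails on the woman's side, since $w^i_{\hap}$ does not prefer $m^\star$ over $\mu_\emptyset^{-1}(w^i_{\hap})=m^i_{\hap}$. Since $w^\star$ itself is matched to $m^\star$, there is no woman with whom $m^\star$ forms a blocking pair. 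As every man's preference list together with $\mu_\emptyset$ has been checked, and a blocking pair always involves a man, $\mu_\emptyset$ has no blocking pair; since it clearly matches every agent to an acceptable partner (each pair listed is within the respective domains), it is a stable matching.

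I do not expect a genuine obstacle here: the statement is essentially the observation that the ``everyone gets their top choice, garbage-collected by $m^\star$'' matching is stable, which is the standard man-optimal starting point $\mu_\emptyset$ referenced throughout the paper. The only mild care needed is to confirm that the assignments in the Variable, Truth, and False Selector constructions really do place the matched woman at position $1$ of each man's list (true by construction, since $w_t,w^i_j,\overline w^i_j$ and the hat-women are explicitly declared most preferred), and to track that $m^\star$'s preference uses $\pos_{m^\star}(w^i_{\hap})=i$ and $\pos_{m^\star}(w^\star)=\alpha+1$, so that exactly the happy women precede $w^\star$. Both are immediate from the Agents paragraph, so the proof is short and routine.
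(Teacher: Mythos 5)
Your proof is correct and takes essentially the same route as the paper: the paper's justification (given in the sentence immediately preceding the observation) is exactly that every man other than $m^\star$ is matched to his most preferred woman, and that $m^\star$ cannot block because all women he prefers over $w^\star$ are matched to their most preferred men. Your additional bookkeeping about positions in the preference lists is accurate but adds nothing beyond what the paper asserts.
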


Next, we proceed to note that also in the current reduction, all agents are matched.

\begin{lemma}\label{lem:sethSESMAll}
Let $\varphi$ be an instance of {\sc $s$-Sparse $p$-CNF-SAT}. Every stable matching of $\red_{SESM}(\varphi)$ matches all agents.
\end{lemma}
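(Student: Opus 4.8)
The plan is to reuse exactly the strategy of Lemma \ref{lem:w1SESMAll} from the primal-graph reduction, since the structure here is identical in the relevant respect. By Observation \ref{obs:manOpt1}, the matching $\mu_{\emptyset}$ is a stable matching of $\red_{SESM}(\varphi)$ in which every agent is matched. Then Proposition \ref{lem:matchSame} (the Gale--Sotomayor result that all stable matchings agree on which agents are matched) immediately forces every stable matching to match the same set of agents, namely all of them.

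Concretely, I would argue as follows. First invoke Observation \ref{obs:manOpt1} to fix $\mu_{\emptyset}$ as a stable matching. Inspecting the definition of $\mu_{\emptyset}$, one sees that it matches every man of $\red_{SESM}(\varphi)$ --- the basic variable men $m_t,\widehat{m}_t$, the truth-selector men $m^i_j,\widehat{m}^i_j$, the false-selector men $\overline{m}^i_j,\widehat{\overline{m}}^i_j$, the happy men $m^i_{\hap}$, and $m^\star$ --- and symmetrically every woman. Hence $\domain(\mu_{\emptyset})=M$ and $\image(\mu_{\emptyset})=W$. By Proposition \ref{lem:matchSame}, for every stable matching $\mu$ we have $\domain(\mu)=\domain(\mu_{\emptyset})=M$, so $\mu$ matches all men. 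Since the construction is designed so that $|M|=|W|$ (every gadget contributes equal numbers of men and women, and the happy pairs and the pair $(m^\star,w^\star)$ are balanced), matching all men forces matching all women as well, and thus $\mu$ matches all agents.

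I do not expect any genuine obstacle here: the whole content is that $\mu_{\emptyset}$ is a perfect (everyone-matched) stable matching, after which Proposition \ref{lem:matchSame} does all the work. The only point requiring a line of care is confirming $|M|=|W|$, which is why the argument phrases the conclusion as ``$\mu$ matches all men, and since the number of men equals the number of women, all agents are matched,'' exactly mirroring the final sentence of the proof of Lemma \ref{lem:w1SESMAll}. This is the same template that will be reused verbatim for the analogous {\sc BSM} statement.

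\begin{proof}
By Observation \ref{obs:manOpt1}, the matching $\mu_{\emptyset}$ is a stable matching of $\red_{SESM}(\varphi)$. By the definition of $\mu_{\emptyset}$, every man of $\red_{SESM}(\varphi)$ is matched by $\mu_{\emptyset}$, and therefore $\domain(\mu_{\emptyset})=M$. Hence, by Proposition \ref{lem:matchSame}, we deduce that every stable matching of $\red_{SESM}(\varphi)$ matches all men. Since the number of men is equal to the number of women, we conclude that every stable matching matches all agents.
\end{proof}
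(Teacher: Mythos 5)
Your proof is correct and is essentially identical to the paper's own argument: the paper likewise notes that $\mu_{\emptyset}$ matches all men, invokes Observation \ref{obs:manOpt1} and Proposition \ref{lem:matchSame} to conclude every stable matching matches all men, and finishes with the equality of the numbers of men and women. No gaps.
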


\begin{proof}
Note that $\mu_{\emptyset}$ matches all men. Thus, by Observation \ref{obs:manOpt1} and Proposition \ref{lem:matchSame}, we deduce that every stable matching of $\red_{SESM}(\varphi)$ matches all men. Since the number of men is equal to the number of women, we conclude the correctness of the lemma.
\end{proof}

To formalize the conditions that a stable matching should satisfy, we introduce the following definition.

\begin{definition}\label{def:sethSESMGood}
Let $\varphi$ be an instance of {\sc $s$-Sparse $p$-CNF-SAT}. Let $\mu$ be a matching of $\red_{SESM}(\varphi)$. Then, we say that $\mu$ is {\em good} if it satisfies the following conditions.
\begin{enumerate}
\item\label{good1} For all $t\in[n]$: Either both $\mu(m_t)=w_t$ and $\mu(\widehat{m}_t)=\widehat{w}_t$ or both $\mu(m_t)=\widehat{w}_t$ and $\mu(\widehat{m}_t)=w_t$.
\item\label{good2} For all $i\in[q]$ and $j\in[a^i]$: Either both $\mu(m^i_j)=w^i_j$ and $\mu(\widehat{m}^i_j)=\widehat{w}^i_j$ or both $\mu(m^i_j)=\widehat{w}^i_j$ and $\mu(\widehat{m}^i_j)=w^i_j$.
\item\label{good3} For all $i\in[q]$ and $j\in[a^i]$: Either both $\mu(\overline{m}^i_j)=\overline{w}^i_j$ and $\mu(\widehat{\overline{m}}^i_j)=\widehat{\overline{w}}^i_j$ or both $\mu(\overline{m}^i_j)=\widehat{\overline{w}}^i_j$ and $\mu(\widehat{\overline{m}}^i_j)=\overline{w}^i_j$.
\item\label{good4} For all $i\in[\alpha]$: $\mu(m^i_{\hap})=w^i_{\hap}$.
\item\label{good5} $\mu(m^\star)=w^\star$.
\end{enumerate}
\end{definition}

We start with the following lemma.

\begin{lemma}\label{lem:sethSESMCaptureGood}
Let $\varphi$ be an instance of {\sc $s$-Sparse $p$-CNF-SAT}. Let $\mu$ be a stable matching of $\red_{SESM}(\varphi)$. Then, $\mu$ is a good matching.
\end{lemma}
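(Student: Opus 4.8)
The goal of Lemma~\ref{lem:sethSESMCaptureGood} is to show that every stable matching $\mu$ of $\red_{SESM}(\varphi)$ satisfies the five conditions of Definition~\ref{def:sethSESMGood}. The plan is to dispatch the conditions in order of increasing difficulty, handling the ``easy'' structural conditions (Conditions~\ref{good4} and~\ref{good5}) first, and then the three ``local consistency'' conditions that govern the Variable, Truth, and False Selector gadgets.

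\textbf{Easy conditions.} First I would handle Conditions~\ref{good4} and~\ref{good5}. Each happy pair $(m^i_{\hap},w^i_{\hap})$ mutually prefers one another over every other agent (by Definition~\ref{def:happy}), so if they were not matched to each other they would form a blocking pair; hence $\mu(m^i_{\hap})=w^i_{\hap}$ for all $i\in[\alpha]$. With all happy agents matched among themselves, the only woman in $\domain(\pos_{m^\star})$ other than the happy women is $w^\star$, and the only man in $\domain(\pos_{w^\star})$ is $m^\star$; since $\mu$ matches all agents (Lemma~\ref{lem:sethSESMAll}), we get $\mu(m^\star)=w^\star$, giving Condition~\ref{good5}.

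\textbf{Local pairing conditions.} For Conditions~\ref{good1}--\ref{good3} the key idea is that each gadget contains an ``enriched'' agent whose preference list has length only two and references exactly the two basic agents of the gadget, which forces a binary configuration. Consider Condition~\ref{good1} for a fixed $t\in[n]$. Since all agents are matched and the happy agents are matched internally, $\widehat{m}_t$ must be matched to one of $\{\widehat{w}_t,w_t\}$ and $\widehat{w}_t$ to one of $\{m_t,\widehat{m}_t\}$. I would argue that $m_t$ cannot be matched to a woman strictly worse than both $w_t$ and $\widehat{w}_t$: if it were, then considering the man most preferred by the unmatched-to-$m_t$ partner, a blocking pair arises. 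More carefully, one shows that the four agents $m_t,\widehat{m}_t,w_t,\widehat{w}_t$ together with the stability constraints admit only two consistent configurations—$\{(m_t,w_t),(\widehat{m}_t,\widehat{w}_t)\}$ or $\{(m_t,\widehat{w}_t),(\widehat{m}_t,w_t)\}$—because $\widehat{m}_t$ ranks $\widehat{w}_t$ first and $w_t$ second while $\widehat{w}_t$ ranks $m_t$ first and $\widehat{m}_t$ second; any other assignment (e.g.\ $m_t$ matched outside $\{w_t,\widehat{w}_t\}$ while $w_t$ or $\widehat{w}_t$ is matched to a less-preferred man) produces a blocking pair among these four agents. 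The arguments for Conditions~\ref{good2} and~\ref{good3} are structurally identical, using $\widehat{m}^i_j,\widehat{w}^i_j$ and $\widehat{\overline{m}}^i_j,\widehat{\overline{w}}^i_j$ respectively, whose length-two preference lists again constrain the configurations to exactly the two allowed in each condition.

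\textbf{Main obstacle.} The step I expect to be most delicate is verifying for Conditions~\ref{good1}--\ref{good3} that the ``long-list'' basic men ($m_t$, $m^i_j$, $\overline{m}^i_j$) cannot escape their gadget by being matched to a woman outside the pair $\{w,\widehat{w}\}$ of that gadget (for instance $m_t$ matched to some $\overline{w}^i_k$, or $m^i_j$ matched to some $w_t$). The resolution is to show that whenever such an ``external'' match occurs, either the displaced intended partner inside the gadget becomes available and forms a blocking pair with a more-preferred agent, or the external woman strictly prefers her own gadget partner (by the ordering conventions set in the Variable/Truth/False Selector definitions, where the basic man is always the \emph{least} preferred in the relevant woman's list). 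Hence no external match can survive in a stable matching, and the only surviving possibilities are the two internal configurations. I would present each of the three cases as a short blocking-pair case analysis, carefully invoking that each woman ranks her own ``$\widehat{\cdot}$'' enriched partner first and the basic man last, so that the enriched agents act as forced ``swap partners'' whenever the basic man leaves. This completes the proof that $\mu$ is good.
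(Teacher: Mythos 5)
Your proof is correct in substance, but for Conditions~\ref{good1}--\ref{good3} it takes a noticeably different (and heavier) route than the paper. The paper dispatches these three conditions with a pure counting argument: after Lemma~\ref{lem:sethSESMAll} (all agents matched) and Condition~\ref{good4} (happy agents matched internally), the women $w^i_j,\widehat{w}^i_j$ rank only the two men $m^i_j,\widehat{m}^i_j$ among non-happy agents, so those two women must absorb exactly those two men, which leaves precisely the two pairings of Condition~\ref{good2}; the same degree argument on $\overline{m}^i_j,\widehat{\overline{m}}^i_j$ gives Condition~\ref{good3}, and \emph{only then} is Condition~\ref{good1} derived, since Condition~\ref{good3} already pins every $\overline{w}^i_j$ inside her own gadget and hence removes her from $m_t$'s reachable partners. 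No blocking-pair analysis is needed beyond the happy-pair step. Your blocking-pair case analysis can be made to work, but two points deserve care. First, your claim that any deviant assignment ``produces a blocking pair among these four agents'' is overstated: if $m_t$ is matched to some $\overline{w}^i_j$ while $\widehat{m}_t$ takes $\widehat{w}_t$ and $w_t$ takes some $m^{i'}_{j'}$, no blocking pair exists among $m_t,\widehat{m}_t,w_t,\widehat{w}_t$ alone (recall $w_t$ ranks $m_t$ last and $m_t$ ranks $\widehat{w}_t$ last); you must look into the external woman's gadget, as you do later. Second, in your ``main obstacle'' paragraph the disjunct ``the external woman strictly prefers her own gadget partner'' is backwards for the False Selector women: $\overline{w}^i_j$ ranks her basic partner $\overline{m}^i_j$ \emph{last}, so she happily accepts $m_t$; what actually kills the escape is your first disjunct (the displaced enriched agents $\overline{m}^i_j,\widehat{\overline{m}}^i_j$ cannot both be matched to $\widehat{\overline{w}}^i_j$, so one is unmatched, contradicting Lemma~\ref{lem:sethSESMAll}, or equivalently $(\widehat{\overline{m}}^i_j,\overline{w}^i_j)$ becomes a blocking pair). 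With that correction your argument closes; the paper's counting formulation simply avoids the case analysis altogether.
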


\begin{proof}
Let $\mu$ be a stable matching of $\red_{SESM}(\varphi)$. The proof of the last two conditions is the same as the one given for Lemma \ref{lem:w1SESMReverseHap}. For completeness, we present them here as well. Since for all $i\in[\alpha]$, $m^i_{\hap}$ and $w^i_{\hap}$ prefer each other over all other agents, they must be matched to one another (by $\mu$), else they form a blocking pair. Then, since apart from happy women, the preference lists of $m^\star$ and $w^\star$ only contain each other, they also must be matched to one another, else they form a blocking pair. Hence, Conditions \ref{good4} and \ref{good5} (in Definition \ref{def:sethSESMGood}) are satisfied.

For all $i\in[q]$ and $j\in[a^i]$, observe that excluding happy agents, the only two men in the preference lists of $w^i_j$ and $\widehat{w}^i_j$ are $m^i_j$ and $\widehat{m}^i_j$, and the only two women in the preference lists of $\overline{m}^i_j$ and $\widehat{\overline{m}}^i_j$ are $\overline{w}^i_j$ and $\widehat{\overline{w}}^i_j$. Thus, by Lemma \ref{lem:sethSESMAll}, we deduce that Conditions \ref{good2} and \ref{good3} are satisfied. Finally, observe that excluding women who have already been determined not to be matched to agents in $M_{\var}\cup\widehat{M}_{\var}$, for all $t\in[n]$, the only two women in the preference lists of $m_t$ and $\widehat{m}_t$ are $w_t$ and $\widehat{w}_t$. Thus, by Lemma \ref{lem:sethSESMAll}, we deduce that Condition \ref{good1} is satisfied as well.
\end{proof}

However, the converse of Lemma \ref{lem:sethSESMCaptureGood} is not true, namely, not every good matching is a stable matching. To capture only stable matchings, we further need to strengthen Definition \ref{def:sethSESMGood}.

\begin{definition}\label{def:sethSESMExcellent}
Let $\varphi$ be an instance of {\sc $s$-Sparse $p$-CNF-SAT}. Let $\mu$ be a matching of $\red_{SESM}(\varphi)$. Then, we say that $\mu$ is {\em excellent} if it good as well as satisfies the two following conditions.
\begin{enumerate}
\item\label{excel1} For all $t\in[n]$ such that $\mu(m_t)=\widehat{w}_t$: For all $i\in[q]$ and $j\in[a^i]$ such that $x_t\in N^i_j$, $\mu(\overline{m}^i_j)=\widehat{\overline{w}}^i_j$. 
\item For all $i\in[q]$ and $j\in[a^i]$ such that $\mu(m^i_j)=\widehat{w}^i_j$, the two following conditions are satisfied.
	\begin{enumerate}
	\item\label{excel2} For all $t\in[n]$ such that $x_t\in P^i_j$, $\mu(m_t)=\widehat{w}_t$.
	\item\label{excel3} For all $k\in[a^i]$ such that $k\neq j$, $\mu(\overline{m}^i_k)=\widehat{\overline{w}}^i_k$. 
	\end{enumerate}
\end{enumerate}
In the context of $\red_{SESM}(\varphi)$, define $\Lambda$ to be the set of all excellent matchings.
\end{definition}

The two following lemmata show that Definition \ref{def:sethSESMExcellent} provides both sufficient and necessary conditions for a matching to be a stable matching.

\begin{lemma}\label{lem:sethSESMCaptureEx1}
Let $\varphi$ be an instance of {\sc $s$-Sparse $p$-CNF-SAT}. Let $\mu$ be a stable matching of $\red_{SESM}(\varphi)$. Then, $\mu$ is an excellent matching.
\end{lemma}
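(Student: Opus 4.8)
The statement to prove is Lemma \ref{lem:sethSESMCaptureEx1}: every stable matching $\mu$ of $\red_{SESM}(\varphi)$ is an excellent matching. By Lemma \ref{lem:sethSESMCaptureGood}, $\mu$ is already good, so the plan is to verify the three additional conditions (\ref{excel1}), (\ref{excel2}) and (\ref{excel3}) of Definition \ref{def:sethSESMExcellent}. Each of these is a conditional statement of the form ``if a certain gadget is in its non-default configuration, then certain neighbouring gadgets are forced into a prescribed configuration.'' Since $\mu$ is good, every variable/truth/false gadget is in exactly one of two internal configurations (the ``default'' one matching $m$ to $w$, or the ``flipped'' one matching $m$ to $\widehat{w}$). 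The strategy throughout is the standard blocking-pair argument: assume the hypothesis of a condition holds but its conclusion fails, exhibit a man and a woman who each strictly prefer each other to their assigned partners, and thereby contradict stability.

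First I would prove Condition (\ref{excel1}). Suppose $\mu(m_t)=\widehat{w}_t$ for some $t\in[n]$, so the Variable Selector gadget for $x_t$ is flipped. Fix $i\in[q]$ and $j\in[a^i]$ with $x_t\in N^i_j$, and suppose for contradiction that $\mu(\overline{m}^i_j)=\overline{w}^i_j$ (the default configuration). The key observation is that, by construction, $m_t$ ranks $\overline{w}^i_j$ (since $N^i_j\in{\cal N}(x_t)$) ahead of his least-preferred woman $\widehat{w}_t$, so $m_t$ strictly prefers $\overline{w}^i_j$ over his current partner $\widehat{w}_t$. On the other side, $\overline{w}^i_j$ ranks $m_t$ (since $x_t\in N^i_j$) ahead of her least-preferred man $\overline{m}^i_j$, to whom she is currently matched. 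Hence $(m_t,\overline{w}^i_j)$ is a blocking pair, contradicting stability. This forces $\mu(\overline{m}^i_j)=\widehat{\overline{w}}^i_j$, which is exactly the flipped configuration demanded by (\ref{excel1}).

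The arguments for Conditions (\ref{excel2}) and (\ref{excel3}) are analogous and I would carry them out by the same template. Suppose $\mu(m^i_j)=\widehat{w}^i_j$, i.e.\ the Truth Selector gadget $(i,j)$ is flipped. For (\ref{excel2}), take $t$ with $x_t\in P^i_j$ and suppose $\mu(m_t)=w_t$; then $m^i_j$ strictly prefers $w_t$ (which he ranks, since $x_t\in P^i_j$) over his least-preferred partner $\widehat{w}^i_j$, while $w_t$ strictly prefers $m^i_j$ (which she ranks, since $P^i_j\in{\cal P}(x_t)$) over her least-preferred man $m_t$; so $(m^i_j,w_t)$ blocks, forcing $\mu(m_t)=\widehat{w}_t$. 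For (\ref{excel3}), take $k\neq j$ and suppose $\mu(\overline{m}^i_k)=\overline{w}^i_k$; then $m^i_j$ strictly prefers $\overline{w}^i_k$ (which he ranks, as $k\in[a^i]\setminus\{j\}$) over $\widehat{w}^i_j$, while $\overline{w}^i_k$ strictly prefers $m^i_j$ (which she ranks, as $j\in[a^i]\setminus\{k\}$) over her least-preferred man $\overline{m}^i_k$; so $(m^i_j,\overline{w}^i_k)$ blocks, forcing $\mu(\overline{m}^i_k)=\widehat{\overline{w}}^i_k$. In every case I rely on the facts, established when defining the preference lists, that each ``basic'' man ranks $w$ as most preferred and $\widehat{w}$ as least preferred (and symmetrically for women), so that being matched to $\widehat{w}$ means a man prefers essentially every other ranked woman.

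The only subtle point — and the place I would be most careful — is to confirm that each ``cross-gadget'' pair invoked above actually appears in both relevant preference lists with the claimed relative order, rather than being merely a happy-agent filler. This is a bookkeeping matter rather than a genuine obstacle: it amounts to re-reading the domains $\domain(\pos_{m_t})$, $\domain(\pos_{w_t})$, $\domain(\pos_{m^i_j})$ and $\domain(\pos_{\overline{w}^i_j})$ set in the Variable, Truth and False Selector paragraphs and checking the membership conditions ($N^i_j\in{\cal N}(x_t)$ iff $x_t\in N^i_j$, and $P^i_j\in{\cal P}(x_t)$ iff $x_t\in P^i_j$) together with the fixed ``most/least preferred'' placements. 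Once these incidences are confirmed, the three blocking-pair arguments complete the proof that $\mu$ is excellent.
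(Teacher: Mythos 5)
Your proposal is correct and follows essentially the same route as the paper's own proof: first invoke Lemma \ref{lem:sethSESMCaptureGood} to get goodness, then verify Conditions \ref{excel1}--\ref{excel3} by exhibiting the blocking pairs $(m_t,\overline{w}^i_j)$, $(m^i_j,w_t)$ and $(m^i_j,\overline{w}^i_k)$, respectively, using the fact that each basic agent's ``hatted'' partner sits at the bottom of that agent's preference list. The cross-gadget rankings you flag as needing verification are indeed present in the constructed domains exactly as you describe, so the argument goes through.
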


\begin{proof}
By Lemma \ref{lem:sethSESMCaptureGood}, it holds that $\mu$ is a good matching. We first claim that Condition \ref{excel1} is Definition \ref{def:sethSESMExcellent} is satisfied. Suppose, by way of contradiction, that this claim is false. Then, there exist $t\in[n]$, $i\in[q]$ and $j\in[a^i]$ such that $\mu(m_t)=\widehat{w}_t$, $x_t\in N^i_j$, $\mu(\overline{m}^i_j)\neq\widehat{\overline{w}}^i_j$. Since $\mu$ is a good matching, we then have that $\mu(\overline{m}^i_j)\neq\overline{w}^i_j$. However, $m_t$ prefers $\overline{w}^i_j$ over $\widehat{w}_t$, and $\overline{w}^i_j$ prefers $m_t$ over $\overline{m}^i_j$, which contradicts the fact that $\mu$ is stable as $(m_t,\overline{w}^i_j)$ is a blocking pair. Thus, Condition \ref{excel1} is Definition \ref{def:sethSESMExcellent} is satisfied.

Second, we claim that Condition \ref{excel2} is satisfied. Suppose, by way of contradiction, that this claim is false. Then, there exist $t\in[n]$, $i\in[q]$ and $j\in[a^i]$ such that $\mu(m^i_j)=\widehat{w}^i_j$, $x_t\in P^i_j$ and $\mu(m_t)\neq \widehat{w}_t$. Since $\mu$ is a good matching, we then have that $\mu(m_t)=w_t$. However, $w_t$ prefers $m^i_j$ over $m_t$, and $m^i_j$ prefers $w_t$ over $\widehat{w}^i_j$, which contradicts the fact that $\mu$ is stable. Thus, Condition \ref{excel2} is satisfied as well.

Third, we claim that Condition \ref{excel3} is satisfied, which would conclude the proof of the lemma. Suppose, by way of contradiction, that this claim is false. Then, there exist $i\in[q]$ and $j,k\in[a^i]$ such that $j\neq k$, $\mu(m^i_j)=\widehat{w}^i_j$ and $\mu(\overline{m}^i_k)\neq\widehat{\overline{w}}^i_k$. Since $\mu$ is a good matching, we then have that $\mu(\overline{m}^i_k)=\overline{w}^i_k$. However, $\overline{w}^i_k$ prefers $m^i_j$ over $\overline{m}^i_k$, and $m^i_j$ prefers $\overline{w}^i_k$ over $\widehat{w}^i_j$, which contradicts the fact that $\mu$ is stable. Thus, Condition \ref{excel3} is satisfied.
\end{proof}

\begin{lemma}\label{lem:sethSESMCaptureEx2}
Let $\varphi$ be an instance of {\sc $s$-Sparse $p$-CNF-SAT}. Let $\mu$ be an excellent matching of $\red_{SESM}(\varphi)$. Then, $\mu$ is a stable matching.
\end{lemma}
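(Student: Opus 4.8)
The plan is to prove the statement directly: assuming $\mu$ is excellent, I would exhibit that $\mu$ has no blocking pair, which yields stability. First I would record that an excellent matching is in particular good, so by Conditions \ref{good1}--\ref{good5} of Definition \ref{def:sethSESMGood} every agent of $\red_{SESM}(\varphi)$ is matched. Consequently only blocking pairs of type (iv) (both endpoints matched, mutual strict preference) are possible, so it suffices to show that for every man $m$ and every woman $w\in\domain(\pos_m)$ with $\pos_m(w)<\pos_m(\mu(m))$, the woman $w$ does not prefer $m$ to $\mu(w)$.

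Next I would dispose of the easy agents. The happy men $m^i_{\hap}$ are matched to their top-ranked partners, and the man $m^\star$ prefers only happy women to $w^\star$, each of whom is matched to her happy partner (ranked first); hence none of these men lies in a blocking pair. For every remaining man, goodness forces his partner to be either the woman he ranks first or the unique woman he ranks last (the ``$\widehat{\,\cdot\,}$'' woman of his gadget). In the former case he is matched to his top choice and can block nothing, so the whole analysis reduces to the men matched to their least-preferred woman, for whom I must inspect the finitely many strictly-better women in the domain.

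The heart of the argument is this case analysis, carried out gadget by gadget, where each strictly-better woman is shown to be matched to a man she ranks above the candidate. Concretely: for $m_t$ with $\mu(m_t)=\widehat{w}_t$, the better women are $w_t$ and the women $\overline{w}^i_j$ with $x_t\in N^i_j$; Condition \ref{good1} gives $\mu(w_t)=\widehat{m}_t$ (her first choice), while Condition \ref{excel1} together with \ref{good3} gives $\mu(\overline{w}^i_j)=\widehat{\overline{m}}^i_j$ (her first choice). For $m^i_j$ with $\mu(m^i_j)=\widehat{w}^i_j$, the better women are $w^i_j$, the $w_t$ with $x_t\in P^i_j$, the $\overline{w}^i_k$ with $k\neq j$, and the $\gamma(i)$ happy women of $W'$; these are handled respectively by \ref{good2}, by \ref{excel2} with \ref{good1}, by \ref{excel3} with \ref{good3}, and by the happy pairs being mutually first-ranked. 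The false-selector man $\overline{m}^i_j$ matched to $\widehat{\overline{w}}^i_j$ is handled by \ref{good3} and the happy pairs, and each ``$\widehat{\,\cdot\,}$'' man ($\widehat{m}_t,\widehat{m}^i_j,\widehat{\overline{m}}^i_j$) has only a single better woman, which goodness places at the first position of her own list. In every instance the better woman ranks her current partner first, so no pair blocks.

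The step I expect to be the main obstacle is precisely the cross-gadget bookkeeping of the previous paragraph: Conditions \ref{excel1}--\ref{excel3} were tailored exactly so that whenever a man sits at the bottom of his list, every ``dangerous'' woman reachable through the inter-gadget edges (variable women feeding truth selectors, and false-selector women feeding variable selectors and sibling truth selectors) is simultaneously pushed into the configuration in which she is matched to her own top choice. The delicate part is to confirm that these conditions cover every better woman in every domain and that none is overlooked, in particular that the arbitrarily-ordered middle portions of the preference lists contain only women already accounted for. Once the correspondence between each excellent condition and each family of potential blocking pairs is fixed, the remaining verifications are immediate; this is the dual of the blocking-pair analysis used in the proof of Lemma \ref{lem:sethSESMCaptureEx1}.
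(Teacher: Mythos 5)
Your proposal is correct and follows essentially the same route as the paper's proof: reduce to showing no man belongs to a blocking pair, dispose of men matched to their top choice (happy men, $m^\star$, and the top-choice branch of each goodness condition), and for each man matched to his least-preferred woman use Conditions \ref{excel1}--\ref{excel3} together with the goodness conditions to verify that every strictly better woman in his list is matched to her own first-ranked man. Your gadget-by-gadget enumeration of the ``better women'' is in fact slightly more explicit than the paper's version, but the argument is the same.
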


\begin{proof}
Since all happy agents are matched to agents at position 1 in their preference lists, they cannot belong to any blocking pair. Hence, we also have that $m^\star$ and $w^\star$ cannot belong to any blocking pair. For all $t\in[n]$, since $\mu$ is in particular a good matching, either both $m_t$ and $\widehat{m}_t$ are matched to the women at position 1 in their preference lists, in which case they cannot belong to any blocking pair, or $\mu(m_t)=\widehat{w}_t$ and $\mu(\widehat{m}_t)=w_t$. In the latter case, by Condition \ref{excel1} in Definition \ref{def:sethSESMExcellent}, we have that all women that $m_t$ prefers over $\widehat{w}_t$ are matched to men that they prefer over $m_t$, and therefore $m_t$ cannot belong to any blocking pair. Since $\widehat{w}_t$ is matched to the man at position 1 in her preference list and this is the only woman that $\widehat{m}_t$ prefers over $w_t$, we have that $\widehat{m}_t$ cannot belong to any blocking pair. 

Now, consider some $i\in[q]$ and $j\in[a^i]$. Since $\mu$ is in particular a good matching, either both $m^i_j$ and $\widehat{m}^i_j$ are matched to the women at position 1 in their preference lists, in which case they cannot belong to any blocking pair, or $\mu(m^i_j)=\widehat{w}^i_j$ and $\mu(\widehat{m}^i_j)=w^i_j$. In the latter case, by Conditions \ref{excel2} and \ref{excel3} in Definition \ref{def:sethSESMExcellent}, we have that all women that $m^i_j$ prefers over $\widehat{w}_t$ are matched to men that they prefer over $m^i_j$, and therefore $m^i_j$ cannot belong to any blocking pair. Since $\widehat{w}^i_j$ is matched to the man at position 1 in her preference list and this is the only woman that $\widehat{m}^i_j$ prefers over $w^i_j$, we have that $\widehat{m}^i_j$ cannot belong to any blocking pair. Finally, since $\mu$ is in particular a good matching, either both $\overline{m}^i_j$ and $\widehat{\overline{m}}^i_j$ are matched to the women at position 1 in their preference lists, in which case they cannot belong to any blocking pair, or $\mu(\overline{m}^i_j)=\widehat{\overline{w}}^i_j$ and $\mu(\widehat{\overline{m}}^i_j)=w^i_j$. However, the only two non-happy women in the preference lists of $\overline{m}^i_j$ and $\widehat{\overline{m}}^i_j$ are $w^i_j$ and $\widehat{\overline{w}}^i_j$, who are matched to men at position 1 in their preference lists. Thus, neither $\overline{m}^i_j$ nor $\widehat{\overline{m}}^i_j$ can belong to any blocking pair. Since the choices of $i$ and $j$ were arbitrary, we overall conclude that no man belongs to any blocking pair, and therefore $\mu$ is a stable matching.
\end{proof}

From Lemmata \ref{lem:sethSESMCaptureEx1} and \ref{lem:sethSESMCaptureEx2}, we derive the following corollary. Here, recall that $\cal S$ denotes the set of all stable matchings.

\begin{corollary}\label{cor:sethSESMCaptureEx}
Let $\varphi$ be an instance of {\sc $s$-Sparse $p$-CNF-SAT}. Then, in the context of $\red_{SESM}(\varphi)$, ${\cal S}=\Lambda$.
\end{corollary}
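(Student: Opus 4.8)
The plan is to obtain the set equality ${\cal S}=\Lambda$ by proving the two inclusions separately, each of which has already been isolated into one of the two immediately preceding lemmata. Concretely, I would first argue that ${\cal S}\subseteq\Lambda$: take an arbitrary $\mu\in{\cal S}$, i.e. a stable matching of $\red_{SESM}(\varphi)$, and invoke Lemma \ref{lem:sethSESMCaptureEx1}, which asserts precisely that such a $\mu$ is an excellent matching, hence $\mu\in\Lambda$. Since $\mu$ was arbitrary, this yields ${\cal S}\subseteq\Lambda$.

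Next I would argue the reverse inclusion $\Lambda\subseteq{\cal S}$: take an arbitrary $\mu\in\Lambda$, i.e. an excellent matching, and invoke Lemma \ref{lem:sethSESMCaptureEx2}, which asserts that every excellent matching is a stable matching, hence $\mu\in{\cal S}$. Again, since $\mu$ was arbitrary, this gives $\Lambda\subseteq{\cal S}$. Combining the two inclusions yields ${\cal S}=\Lambda$, which is exactly the statement of the corollary.

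I do not anticipate any genuine obstacle in this step, since the corollary is a purely formal double-inclusion argument once the two lemmata are in hand. The substantive work — the careful blocking-pair analysis showing that the combinatorial conditions of Definition \ref{def:sethSESMExcellent} (Conditions \ref{excel1}--\ref{excel3}, on top of goodness) are simultaneously \emph{necessary} and \emph{sufficient} for stability — is precisely what is established in Lemmata \ref{lem:sethSESMCaptureEx1} and \ref{lem:sethSESMCaptureEx2}. The only point worth stating explicitly in the write-up is the reminder (already flagged in the excerpt) that $\cal S$ denotes the set of \emph{all} stable matchings and $\Lambda$ the set of \emph{all} excellent matchings, so that the two lemmata, which are phrased for an arbitrary member of each class, indeed deliver the respective set inclusions rather than merely a statement about a single matching.

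\begin{proof}
By Lemma \ref{lem:sethSESMCaptureEx1}, every stable matching of $\red_{SESM}(\varphi)$ is an excellent matching, so ${\cal S}\subseteq\Lambda$. Conversely, by Lemma \ref{lem:sethSESMCaptureEx2}, every excellent matching of $\red_{SESM}(\varphi)$ is a stable matching, so $\Lambda\subseteq{\cal S}$. Combining these two inclusions, we conclude that ${\cal S}=\Lambda$.
\end{proof}
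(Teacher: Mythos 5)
Your proposal is correct and matches the paper exactly: the paper derives this corollary directly from Lemmata \ref{lem:sethSESMCaptureEx1} and \ref{lem:sethSESMCaptureEx2} via the same double-inclusion argument. No issues.
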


To argue about the sex-equality measure of stable matchings, we rely on the following definition.
\begin{definition}\label{def:sethSESMMeasure}
Let $\varphi$ be an instance of {\sc $s$-Sparse $p$-CNF-SAT}, and let $\mu$ be a stable matching of $\red_{SESM}(\varphi)$. For all $i\in[q]$, denote $a(\mu,i)=|\{j\in [a^i]: \mu(m^i_j)=\widehat{w}^i_j\}|$. Moreover, denote $b(\mu)=|(i,j): i\in[q], j\in[a^i], \mu(\overline{m}^i_j)=\widehat{\overline{w}}^i_j\}|$.
\end{definition}

\begin{lemma}\label{lem:sethGenMeasure}
Let $\varphi$ be an instance of {\sc $s$-Sparse $p$-CNF-SAT}, and let $\mu$ be a stable matching of $\red_{SESM}(\varphi)$. Then, there exist $0\leq x,y\leq \displaystyle{100s4^{pd}\cdot n^2}=\OO(n^2)$ such that the two following conditions hold.
\begin{itemize}
\item $\sat_M(\mu)=2\alpha+\displaystyle{\sum_{i\in[q]}a(\mu,i)\gamma(i)} + b(\mu)\tau + x$.
\item $\sat_W(\mu)=\alpha+\displaystyle{\sum_{i\in[q]}(a^i-a(\mu,i))\lambda(i)} + (\widetilde{a}-b(\mu))\tau + y$.
\end{itemize}
\end{lemma}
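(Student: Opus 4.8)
The plan is to prove both equalities by directly expanding $\sat_M(\mu)=\sum_{(m,w)\in\mu}\pos_m(w)$ and $\sat_W(\mu)=\sum_{(m,w)\in\mu}\pos_w(m)$ as sums over the agent types introduced in the construction, using the structural information already available about $\mu$. By Lemma~\ref{lem:sethSESMAll}, $\mu$ matches every agent, and by Lemma~\ref{lem:sethSESMCaptureGood}, $\mu$ is good; hence for each agent its partner is pinned down up to the binary choice offered by the relevant condition of Definition~\ref{def:sethSESMGood}. This reduces the computation of each of $\sat_M(\mu)$ and $\sat_W(\mu)$ to summing, for each agent, the position of one of exactly two candidate partners, and the counters $a(\mu,i)$ and $b(\mu)$ of Definition~\ref{def:sethSESMMeasure} record precisely how often each choice is made.

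First I would isolate the dominant contributions, which are exactly the ones the two target expressions record. For a Truth Selector man $m^i_j$, the partner is either $w^i_j$ (position $1$) or $\widehat{w}^i_j$; since $\widehat{w}^i_j$ is the least preferred woman of $m^i_j$ and $\domain(\pos_{m^i_j})$ consists of the block of $\gamma(i)=n^{20}2^{i-1}$ happy women together with $1+|P^i_j|+(a^i-1)+1$ further women, the position of $\widehat{w}^i_j$ is $\gamma(i)+\OO(1)$. Summing over the $a(\mu,i)$ indices $j$ with $\mu(m^i_j)=\widehat{w}^i_j$ (and over $i$) yields the term $\sum_{i\in[q]}a(\mu,i)\gamma(i)$ in $\sat_M(\mu)$; symmetrically, a Truth Selector woman $w^i_j$ matched to $m^i_j$ sits at position $\lambda(i)+\OO(1)$, producing $\sum_{i\in[q]}(a^i-a(\mu,i))\lambda(i)$ in $\sat_W(\mu)$. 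In the same way the $\tau=n^{10}$ block in the False Selector gadget contributes $\tau+\OO(1)$ each time $\overline{m}^i_j$ is matched to $\widehat{\overline{w}}^i_j$ and each time $\overline{w}^i_j$ is matched to $\overline{m}^i_j$, giving $b(\mu)\tau$ in $\sat_M(\mu)$ and $(\widetilde{a}-b(\mu))\tau$ in $\sat_W(\mu)$. Finally, every happy pair contributes $1$ to each of $\sat_M$ and $\sat_W$, for a total of $\alpha$ on each side, while $m^\star$ contributes $\pos_{m^\star}(w^\star)=\alpha+1$ and $w^\star$ contributes $\pos_{w^\star}(m^\star)=1$; together these account for the leading $2\alpha$ in $\sat_M(\mu)$ and $\alpha$ in $\sat_W(\mu)$.

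It then remains to verify that all other contributions, together with the $\OO(1)$ overheads collected above, are nonnegative and bounded by $100s4^{pd}n^2$; these collected remainders are exactly $x$ and $y$. Nonnegativity is immediate because every position is at least $1$. For the upper bound, the only agents whose partner can sit at a position as large as $\Theta(n)$ are the Variable Selector agents $m_t$ and $w_t$, whose domains have size at most $2+|{\cal N}(x_t)|$ and $2+|{\cal P}(x_t)|$, respectively; all remaining non-dominant positions are $1$, $2$, or the $\OO(1)$ overheads already noted. Here I would invoke the combinatorial bounds following from the construction: $|X^i|\le pd$ and hence $a^i\le 2^{pd}$, $q=r/d\le sn/d$, and therefore $\widetilde{a}=\sum_{i\in[q]}a^i\le (sn/d)2^{pd}=\OO(n)$, which also bounds $|{\cal N}(x_t)|,|{\cal P}(x_t)|\le\widetilde{a}$. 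Since there are $\OO(n)$ agents of each type and each contributes an overhead of at most $\OO(n)$ (and $\OO(1)$ for every type except the Variable Selector), the cumulative remainder on either side is $\OO(n^2)$, and a crude count gives the stated bound $100s4^{pd}n^2$.

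The main obstacle I anticipate is the bookkeeping in the second step: one must verify that each large block $\gamma(i),\lambda(i),\tau$ lands in precisely the intended term with only an additive $\OO(1)$ slack, so that the slack never accrues a multiplicative factor, and that the two cases of the good-matching dichotomy are paired consistently across the men's and women's sides (for instance, $\mu(m^i_j)=\widehat{w}^i_j$ forces $w^i_j$ to be matched to $\widehat{m}^i_j$, so that $w^i_j$ does \emph{not} pay the $\lambda(i)$ penalty exactly when $m^i_j$ pays $\gamma(i)$). Once this correspondence is tracked carefully, the remaining arithmetic is routine.
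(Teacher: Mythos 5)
Your proposal is correct and follows essentially the same route as the paper's proof: expand $\sat_M(\mu)$ and $\sat_W(\mu)$ as sums over the agent types, use Lemmata \ref{lem:sethSESMAll} and \ref{lem:sethSESMCaptureGood} to reduce each agent's contribution to a two-case position count governed by $a(\mu,i)$ and $b(\mu)$, isolate the $\gamma(i)$, $\lambda(i)$, $\tau$ and $\alpha$ blocks, and absorb the rest into $x,y$ bounded via $|P^i_j|,|N^i_j|\le pd$ and $|{\cal P}(x_t)|,|{\cal N}(x_t)|\le\widetilde a\le q2^{pd}$. Your bookkeeping of the men/women pairing (e.g.\ that $w^i_j$ avoids the $\lambda(i)$ penalty exactly when $m^i_j$ pays $\gamma(i)$) matches the paper's accounting, and the crude $100s4^{pd}n^2$ bound on the remainder is obtained the same way.
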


\begin{proof}
Note that for all $i\in[q]$ and $j\in[a^i]$, $|P^i_j|,|N^i_j|\leq |X^i|\leq \displaystyle{\frac{pr}{q}} =pd$, and that for all $t\in[n]$, $|{\cal N}(x_t)|,|{\cal P}(x_t)|\leq \widetilde{a}\leq\displaystyle{q2^{pd}}$. Thus, on the one hand, by Corollary \ref{cor:sethSESMCaptureEx} and the definition of the preference lists of the men of $\red_{SESM}(\varphi)$, we have that
\[\begin{array}{lll}
\sat_M(\mu) & = & \displaystyle{\sum_{t\in [n], \mu(m_t)=w_t}2 + \sum_{t\in [n], \mu(m_t)=\widehat{w}_t}(4+|{\cal N}(x_t)|)}\\
&& + \displaystyle{\sum_{i\in[q],j\in[a^i],\mu(m^i_j)=w^i_j}2 + \sum_{i\in[q],j\in[a^i],\mu(m^i_j)=\widehat{w}^i_j}(3+|P^i_j|+a^i+\gamma(i))}\\
&& + \displaystyle{\sum_{i\in[q],j\in[a^i],\mu(\overline{m}^i_j)=\overline{w}^i_j}2 + \sum_{i\in[q],j\in[a^i],\mu(\overline{m}^i_j)=\widehat{\overline{w}}^i_j}(4+\tau)} + 2\alpha+1\\

& = & 2\alpha+\displaystyle{\sum_{i\in[q]}a(\mu,i)\gamma(i) + b(\mu)\tau + x,}
\end{array}\]
for some $0\leq x\leq 1 + \displaystyle{(4 + q2^{pd})n + (7+pd+2^{pd})\widetilde{a}}\leq \displaystyle{5q2^{pd}n + 9\cdot2^{pd}\cdot q2^{pd}}\leq \displaystyle{100s4^{pd}\cdot n^2}$.

On the other hand, by Corollary \ref{cor:sethSESMCaptureEx} and the definition of the preference lists of the women of $\red_{SESM}(\varphi)$, we have that
\[\begin{array}{lll}
\sat_W(\mu) & = & \displaystyle{\sum_{t\in [n], \mu(w_t)=m_t}(4+|{\cal P}(x_t)|) + \sum_{t\in [n], \mu(w_t)=\widehat{m}_t}2}\\
&& + \displaystyle{\sum_{i\in[q],j\in[a^i],\mu(w^i_j)=m^i_j}(4+\lambda(i)) + \sum_{i\in[q],j\in[a^i],\mu(w^i_j)=\widehat{m}^i_j}2}\\
&& + \displaystyle{\sum_{i\in[q],j\in[a^i],\mu(\overline{w}^i_j)=\overline{m}^i_j}(3+|N^i_j|+a^i+\tau) + \sum_{i\in[q],j\in[a^i],\mu(\overline{w}^i_j)=\widehat{\overline{m}}^i_j}2} + \alpha+1\\

& = & \alpha+\displaystyle{\sum_{i\in[q]}(a^i-a(\mu,i))\lambda(i) + (\widetilde{a}-b(\mu))\tau + y,}
\end{array}\]
for some $0\leq y\leq \displaystyle{100s4^{pd}\cdot n^2}$.
\end{proof}

\subsubsection{Correctness}\label{sec:sethSESMCor}

\myparagraph{Forward Direction.} We first show how given a truth assignment for an instance $\varphi$ of {\sc $s$-Sparse $p$-CNF-SAT} that satisfies $\varphi$, we can construct a stable matching $\mu$ of $\red_{SESM}(\varphi)$ whose sex-equality measure is at most $100s4^{pd}\cdot n^2$. For this purpose, we introduce the following definition. Here, given a truth assignment $f$, we let $X(f)$ denote the variables to which $f$ assigns true.

\begin{definition}\label{def:sethSESMmu}
Let $\varphi$ be a \yesinstance\ of {\sc $s$-Sparse $p$-CNF-SAT}, and let $f$ a truth assignment that satisfies $\varphi$. Then, the matching $\mu^f_{SESM}$ of $\red_{SESM}(\varphi)$ is defined as follows.
\begin{itemize}
\item For all $t\in[n]$ such that $x_t\in X(f)$: $\mu^f_{SESM}(m_t)=\widehat{w}_t$ and $\mu^f_{SESM}(\widehat{m}_t)=w_t$.
\item For all $t\in[n]$ such that $x_t\notin X(f)$: $\mu^f_{SESM}(m_t)=w_t$ and $\mu^f_{SESM}(\widehat{m}_t)=\widehat{w}_t$.
\item For all $i\in[q]$ and $j\in[a^i]$ such that $P^i_j=X(f)\cap X^i$: $\mu^f_{SESM}(m^i_j)=\widehat{w}^i_j$ and $\mu^f_{SESM}(\widehat{m}^i_j)=w^i_j$.
\item For all $i\in[q]$ and $j\in[a^i]$ such that $P^i_j\neq X(f)\cap X^i$: $\mu^f_{SESM}(m^i_j)=w^i_j$ and $\mu^f_{SESM}(\widehat{m}^i_j)=\widehat{w}^i_j$.
\item For all $i\in[q]$ and $j\in[a^i]$ such that $N^i_j\neq X^i\setminus X(f)$: $\mu^f_{SESM}(\overline{m}^i_j)=\widehat{\overline{w}}^i_j$ and $\mu^f_{SESM}(\widehat{\overline{m}}^i_j)=\overline{w}^i_j$.
\item For all $i\in[q]$ and $j\in[a^i]$ such that $N^i_j=X^i\setminus X(f)$: $\mu^f_{SESM}(\overline{m}^i_j)=\overline{w}^i_j$ and $\mu^f_{SESM}(\widehat{\overline{m}}^i_j)=\widehat{\overline{w}}^i_j$.
\item For all $i\in[\alpha]$: $\mu^f_{SESM}(m^i_{\hap})=w^i_{\hap}$.
\item $\mu^f_{SESM}(m^\star)=w^\star$.
\end{itemize}
\end{definition}

Let us first argue that $\mu^f_{SESM}$ is a stable matching.

\begin{lemma}\label{lem:sethSESMforwardSM}
Let $\varphi$ be a \yesinstance\ of {\sc $s$-Sparse $p$-CNF-SAT}. Let $f$ be a truth assignment that satisfies $\varphi$. Then, $\mu^f_{SESM}$ is an excellent matching of $\red_{SESM}(\varphi)$.
\end{lemma}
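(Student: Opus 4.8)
The plan is to verify directly that $\mu^f_{SESM}$ satisfies every condition in Definition \ref{def:sethSESMExcellent}, i.e., that it is good (Conditions \ref{good1}--\ref{good5} of Definition \ref{def:sethSESMGood}) and additionally satisfies Conditions \ref{excel1}, \ref{excel2} and \ref{excel3}. The single structural fact that drives the whole argument, and the only place where satisfiability of $\varphi$ is used, is the following: for every $i\in[q]$, the restriction of $f$ to $X^i$ satisfies all the clauses in ${\cal C}^i$, and hence it equals $f^i_j$ for exactly one index $j\in[a^i]$. I would first record this as an observation, writing $j^*(i)$ for this index, and note that $j^*(i)$ is the unique $j$ with $P^i_j=X(f)\cap X^i$, equivalently the unique $j$ with $N^i_j=X^i\setminus X(f)$; the equivalence follows from $P^i_j\cup N^i_j=X^i$ and $P^i_j\cap N^i_j=\emptyset$, and uniqueness holds because distinct satisfying assignments of ${\cal C}^i$ have distinct true sets.

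Goodness is then immediate from the case analysis in Definition \ref{def:sethSESMmu}: each of the three pairs of cases in that definition (for the variable agents $m_t,\widehat{m}_t$; for the truth agents $m^i_j,\widehat{m}^i_j$; and for the false agents $\overline{m}^i_j,\widehat{\overline{m}}^i_j$) places the relevant gadget in exactly one of the two configurations permitted by Conditions \ref{good1}, \ref{good2} and \ref{good3}, respectively, and the last two cases give Conditions \ref{good4} and \ref{good5} verbatim. Here I would only pause to confirm that $\mu^f_{SESM}$ is a well-defined matching of all agents: the observation guarantees that for each $i$ exactly one truth gadget is in the ``$\widehat{w}^i_j$'' configuration and exactly one false gadget is in the ``$\overline{w}^i_j$'' configuration, so each woman receives exactly one partner.

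For the three remaining conditions I would argue by tracing the definitions. For Condition \ref{excel1}, fix $t$ with $\mu^f_{SESM}(m_t)=\widehat{w}_t$, so $x_t\in X(f)$; if $x_t\in N^i_j$ then $x_t\notin X^i\setminus X(f)$, hence $N^i_j\neq X^i\setminus X(f)$, and the matching falls into the case giving $\mu^f_{SESM}(\overline{m}^i_j)=\widehat{\overline{w}}^i_j$. For Condition \ref{excel2}, if $\mu^f_{SESM}(m^i_j)=\widehat{w}^i_j$ then $P^i_j=X(f)\cap X^i$, so any $x_t\in P^i_j$ lies in $X(f)$, whence $\mu^f_{SESM}(m_t)=\widehat{w}_t$. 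For Condition \ref{excel3}, the same hypothesis forces $j=j^*(i)$; by uniqueness, every $k\neq j$ satisfies $N^i_k\neq X^i\setminus X(f)$, so $\mu^f_{SESM}(\overline{m}^i_k)=\widehat{\overline{w}}^i_k$.

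I expect the verification to be routine once the observation is in place; the only genuinely nontrivial point---the main obstacle---is the existence and uniqueness of $j^*(i)$, that is, correctly matching the statement ``$f$ satisfies all clauses of color $i$'' with ``exactly one truth gadget and one false gadget of color $i$ is selected.'' Everything else is careful bookkeeping against Definitions \ref{def:sethSESMmu} and \ref{def:sethSESMExcellent}. Since an excellent matching is stable by Lemma \ref{lem:sethSESMCaptureEx2}, this also yields that $\mu^f_{SESM}\in{\cal S}$, which will be convenient for the subsequent analysis of the sex-equality measure.
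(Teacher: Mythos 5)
Your proposal is correct and follows essentially the same route as the paper's proof: goodness is read off directly from Definition \ref{def:sethSESMmu}, and Conditions \ref{excel1}--\ref{excel3} are verified by the same case-tracing, with Condition \ref{excel3} resting on the fact that distinct satisfying assignments in $F^i$ have distinct false sets, so $N^i_k\neq X^i\setminus X(f)$ for $k\neq j$. Your explicit observation about the unique index $j^*(i)$ is a harmless (and for the subsequent measure analysis, useful) packaging of the same fact; note only that the \emph{existence} of $j^*(i)$ is not actually needed for excellence itself, since Conditions \ref{excel1}--\ref{excel3} would hold vacuously without it.
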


\begin{proof}
Definition \ref{def:sethSESMmu} directly implies that $\mu$ is a good matching. Next, we verify that Conditions \ref{excel1}--{excel3} in Definition \ref{def:sethSESMExcellent} are satisfied as well. First, consider some $t\in[n]$ such that $\mu(m_t)=\widehat{w}_t$, $i\in[q]$ and $j\in[a^i]$ such that $x_t\in N^i_j$. Since $\mu(m_t)=\widehat{w}_t$, we have that $x_t\in X(f)$, which implies that $x_t\notin X^i\setminus X(f)$. Thus, $N^i_j\neq X^i\setminus X(f)$, which implies that $\mu(\overline{m}^i_j)=\widehat{\overline{w}}^i_j$. Hence, Condition \ref{excel1} is satisfied. Second, consider some $i\in[q]$, $j\in[a^i]$ such that $\mu(m^i_j)=\widehat{w}^i_j$ and $t\in[n]$ such that $x_t\in P^i_j$. Since $\mu(m^i_j)=\widehat{w}^i_j$, we have that $P^i_j=X(f)\cap X^i$. Thus, $x_t\in X(f)$, which implies that $\mu(m_t)=\widehat{w}_t$. Hence, Condition \ref{excel2} is satisfied. Third, consider some consider some $i\in[q]$, $j\in[a^i]$ such that $\mu(m^i_j)=\widehat{w}^i_j$ and $k\in[a^i]$ such that $k\neq j$. Since $\mu(m^i_j)=\widehat{w}^i_j$, we again have that $P^i_j=X(f)\cap X^i$. Therefore, $N^i_j=X^i\setminus X(f)$. Since $N^i_k\neq N^i_j$, we have that $N^i_k\neq X^i\setminus X(f)$, and thus $\mu(\overline{m}^i_k)=\widehat{\overline{w}}^i_k$. Hence, Condition \ref{excel3} is satisfied. We thus conclude that $\mu^f_{SESM}$ is an excellent matching.
\end{proof}

By Lemma \ref{lem:sethSESMCaptureEx2}, we have the following corollary to Lemma \ref{lem:sethSESMforwardSM}.

\begin{corollary}\label{cor:sethSESMforwardSM}
Let $\varphi$ be a \yesinstance\ of {\sc $s$-Sparse $p$-CNF-SAT}. Let $f$ be a truth assignment that satisfies $\varphi$. Then, $\mu^f_{SESM}$ is a stable matching of $\red_{SESM}(\varphi)$.
\end{corollary}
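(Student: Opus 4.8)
The plan is to derive this corollary as a one-step composition of two facts that are already in hand, so that no fresh structural analysis of $\red_{SESM}(\varphi)$ is required. Concretely, Lemma~\ref{lem:sethSESMforwardSM} tells us that the matching $\mu^f_{SESM}$ built from a satisfying assignment $f$ is an \emph{excellent} matching in the sense of Definition~\ref{def:sethSESMExcellent}, while Lemma~\ref{lem:sethSESMCaptureEx2} supplies the ``excellent $\Rightarrow$ stable'' implication. The entire argument is thus the chaining $\mu^f_{SESM}\text{ is excellent}\ \Rightarrow\ \mu^f_{SESM}\in{\cal S}$.

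First I would invoke Lemma~\ref{lem:sethSESMforwardSM} to record that $\mu^f_{SESM}$ is excellent; this is the step where the truth assignment $f$ is genuinely used, since excellence encodes the consistency requirements (Conditions~\ref{excel1}--\ref{excel3}) that link the Variable, Truth and False Selector gadgets, and these hold precisely because $f$ satisfies all clauses and because $P^i_j$ and $N^i_j$ partition $X^i$. Then I would apply Lemma~\ref{lem:sethSESMCaptureEx2} directly to conclude that $\mu^f_{SESM}$ admits no blocking pair and matches every man to an admissible partner, i.e.\ that $\mu^f_{SESM}$ is a stable matching, which is exactly the claim.

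I do not expect any real obstacle at the level of the corollary: all of the delicate case analysis---verifying that each gadget configuration forced by $f$ avoids blocking pairs, and in particular that the enriched agents $\widehat{m}$ and $\widehat{w}$ absorb the potential instabilities---has already been pushed into Lemmata~\ref{lem:sethSESMforwardSM} and~\ref{lem:sethSESMCaptureEx2}. The only point to be careful about is that the two lemmas refer to the same object: Lemma~\ref{lem:sethSESMCaptureEx2} is stated for an arbitrary excellent $\mu$, so one simply instantiates it at $\mu=\mu^f_{SESM}$, which is legitimate by the previous step. Hence the corollary needs nothing beyond citing the two lemmas in sequence.
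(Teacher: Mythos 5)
Your proposal is correct and matches the paper exactly: the paper derives this corollary by combining Lemma~\ref{lem:sethSESMforwardSM} (which shows $\mu^f_{SESM}$ is excellent) with Lemma~\ref{lem:sethSESMCaptureEx2} (excellent implies stable). No further argument is needed, and your observation that the only thing to check is the instantiation of the general lemma at $\mu=\mu^f_{SESM}$ is exactly right.
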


In light of Corollary \ref{cor:sethSESMforwardSM}, the measure $\delta(\mu^f_{SESM})$ is well defined. We proceed to analyze this measure with the following lemma.

\begin{lemma}\label{lem:sethSESMforwardMeasure}
Let $\varphi$ be a \yesinstance\ of {\sc $s$-Sparse $p$-CNF-SAT}. Let $f$ be a truth assignment that satisfies $\varphi$. Then, $\delta(\mu^f_{SESM})\leq 100s4^{pd}\cdot n^2$.
\end{lemma}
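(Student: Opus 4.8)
The plan is to invoke the exact formulas for $\sat_M$ and $\sat_W$ supplied by Lemma \ref{lem:sethGenMeasure}, which apply here because $\mu^f_{SESM}$ is a stable matching by Corollary \ref{cor:sethSESMforwardSM}. Everything then reduces to computing the two counters $a(\mu^f_{SESM},i)$ and $b(\mu^f_{SESM})$ of Definition \ref{def:sethSESMMeasure} for this specific matching, and then observing that the definition of $\alpha$ has been engineered precisely so that the large terms cancel upon subtraction, leaving only the bounded error terms.

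First I would determine $a(\mu^f_{SESM},i)$ for each $i\in[q]$. By Definition \ref{def:sethSESMmu}, $\mu^f_{SESM}(m^i_j)=\widehat{w}^i_j$ holds exactly when $P^i_j=X(f)\cap X^i$. Since $f$ satisfies $\varphi$, its restriction to $X^i$ satisfies all clauses of $\mathcal{C}^i$ and hence coincides with exactly one enumerated assignment $f^i_j\in F^i$; thus there is precisely one index $j\in[a^i]$ with $P^i_j=X(f)\cap X^i$, giving $a(\mu^f_{SESM},i)=1$ for every $i$. Next I would compute $b(\mu^f_{SESM})$. By Definition \ref{def:sethSESMmu}, $\mu^f_{SESM}(\overline{m}^i_j)=\widehat{\overline{w}}^i_j$ holds exactly when $N^i_j\neq X^i\setminus X(f)$. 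For the single index $j$ identified above we have $N^i_j=X^i\setminus X(f)$, and this is the only such index in color class $i$; hence $a^i-1$ indices contribute per class, so $b(\mu^f_{SESM})=\sum_{i\in[q]}(a^i-1)=\widetilde{a}-q$.

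Finally I would substitute $a(\mu^f_{SESM},i)=1$ and $b(\mu^f_{SESM})=\widetilde{a}-q$ into the two displayed expressions of Lemma \ref{lem:sethGenMeasure} and subtract. Using $\alpha=\sum_{i\in[q]}((a^i-1)\lambda(i)-\gamma(i))+(2q-\widetilde{a})\tau$, the $\gamma(i)$ terms, the $(a^i-1)\lambda(i)$ terms, and the $\tau$ terms all cancel pairwise, leaving $\delta(\mu^f_{SESM})=\sat_M(\mu^f_{SESM})-\sat_W(\mu^f_{SESM})=x-y$. Since $0\leq x,y\leq 100s4^{pd}\cdot n^2$, we obtain $\delta(\mu^f_{SESM})\leq x\leq 100s4^{pd}\cdot n^2$, as required.

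The only genuinely delicate step is the counting argument: one must correctly justify that $f$ selects exactly one enumerated assignment per color class, which is what pins $a(\mu^f_{SESM},i)$ to $1$ and forces $b(\mu^f_{SESM})=\widetilde{a}-q$. Once these two values are in hand, the remainder is a mechanical cancellation that the choice of $\alpha$ was designed to produce, and the bound follows immediately from the ranges of $x$ and $y$ in Lemma \ref{lem:sethGenMeasure}.
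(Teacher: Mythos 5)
Your proposal is correct and follows essentially the same route as the paper: both invoke Lemma \ref{lem:sethGenMeasure}, pin down $a(\mu^f_{SESM},i)=1$ and $b(\mu^f_{SESM})=\widetilde{a}-q$, and let the definition of $\alpha$ cancel the large terms so that only $x-y$ survives, which the ranges of $x$ and $y$ bound by $100s4^{pd}\cdot n^2$. Your justification of why exactly one index $j$ per color class satisfies $P^i_j=X(f)\cap X^i$ is in fact slightly more explicit than the paper's, which simply asserts these counts.
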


\begin{proof}
By Lemma \ref{lem:sethGenMeasure}, there exist $0\leq x,y\leq \displaystyle{100s4^{pd}\cdot n^2}$ such that the two following conditions hold.
\begin{itemize}
\item $\sat_M(\mu^f_{SESM})=2\alpha+\displaystyle{\sum_{i\in[q]}a(\mu^f_{SESM},i)\gamma(i)} + b(\mu^f_{SESM})\tau + x$.
\item $\sat_W(\mu^f_{SESM})=\alpha+\displaystyle{\sum_{i\in[q]}(a^i-a(\mu^f_{SESM},i))\lambda(i)} + (\widetilde{a}-b(\mu^f_{SESM}))\tau + y$.
\end{itemize}

In the case of $\mu^f_{SESM}$, we have that for all $i\in[q]$, $a(\mu^f_{SESM})=1$, and also $b(\mu^f_{SESM})=\widetilde{a}-q$. Thus, we have that the two following conditions hold.
\begin{itemize}
\item $\sat_M(\mu^f_{SESM})=2\alpha+\displaystyle{\sum_{i\in[q]}\gamma(i)} + (\widetilde{a}-q)\tau + x$.
\item $\sat_W(\mu^f_{SESM})=\alpha+\displaystyle{\sum_{i\in[q]}(a^i-1)\lambda(i)} + q\tau + y$.
\end{itemize}

Now, note that $\alpha = \displaystyle{\sum_{i\in[q]}((a^i-1)\lambda(i)-\gamma(i))} + (2q-\widetilde{a})\tau = \displaystyle{\left(\sum_{i\in[q]}(a^i-1)\lambda(i) + q\tau\right)} - \left(\displaystyle{\sum_{i\in[q]}\gamma(i)} + (\widetilde{a}-q)\tau\right) = (\sat_W(\mu^f_{SESM})-y) - (\sat_M(\mu^f_{SESM})-\alpha-x)$. Thus, we have that
\[\begin{array}{lll}
\delta(\mu^f_{SESM}) & = & |\sat_W(\mu^f_{SESM}) - \sat_M(\mu^f_{SESM})|\\

& = & |y-x-\alpha + \left((\sat_W(\mu^f_{SESM})-y) - (\sat_M(\mu^f_{SESM})-\alpha-x)\right)|\\

& = & |y-x|\leq 100s4^{pd}\cdot n^2.
\end{array}\]

This concludes the proof of the lemma.
\end{proof}

Combining Corollary \ref{cor:sethSESMforwardSM} and Lemma \ref{lem:sethSESMforwardMeasure}, we derive the following corollary.

\begin{corollary}\label{cor:sethSESMforward}
Let $\varphi$ be a \yesinstance\ of {\sc $s$-Sparse $p$-CNF-SAT}. Then, for the instance $\red_{SESM}(\varphi)$ of {\sc SESM}, $\Delta\leq 100s4^{pd}\cdot n^2$. Here, $n$ is the number of variables.
\end{corollary}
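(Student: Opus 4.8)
The plan is to establish Corollary~\ref{cor:sethSESMforward} as an immediate consequence of the two preceding results, so the work is essentially that of combining Corollary~\ref{cor:sethSESMforwardSM} and Lemma~\ref{lem:sethSESMforwardMeasure}. First I would recall that $\Delta = \min_{\mu\in{\cal S}}|\delta(\mu)|$ by Definition~\ref{def:sexEqMeasure} and the definition of the {\sc SESM} objective, so that to prove $\Delta\leq 100s4^{pd}\cdot n^2$ it suffices to exhibit a single stable matching $\mu$ with $|\delta(\mu)|\leq 100s4^{pd}\cdot n^2$.

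The key steps are as follows. Since $\varphi$ is a \yesinstance\ of {\sc $s$-Sparse $p$-CNF-SAT}, there exists a truth assignment $f$ that satisfies $\varphi$; fix such an $f$. I would then invoke Corollary~\ref{cor:sethSESMforwardSM} to conclude that $\mu^f_{SESM}$ is a stable matching of $\red_{SESM}(\varphi)$, i.e.\ $\mu^f_{SESM}\in{\cal S}$. Next, I would invoke Lemma~\ref{lem:sethSESMforwardMeasure} to obtain $\delta(\mu^f_{SESM})\leq 100s4^{pd}\cdot n^2$ (noting that $\delta$ here is understood as the absolute sex-equality value, consistent with the statement of that lemma). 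Combining these, $\mu^f_{SESM}$ is a stable matching witnessing $|\delta(\mu^f_{SESM})|\leq 100s4^{pd}\cdot n^2$, and therefore $\Delta\leq |\delta(\mu^f_{SESM})|\leq 100s4^{pd}\cdot n^2$, as required.

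There is essentially no main obstacle here: all the genuine content has already been absorbed into the two cited results, in particular the intricate counting in Lemma~\ref{lem:sethGenMeasure} (which underlies Lemma~\ref{lem:sethSESMforwardMeasure}) and the stability verification in Lemma~\ref{lem:sethSESMforwardSM}. The only point warranting a word of care is the passage from ``there exists a stable matching of small sex-equality value'' to a bound on $\Delta$, which is immediate from the definition of $\Delta$ as a minimum over all stable matchings. Thus the proof is a two-line deduction, and I would write it simply as: ``Let $f$ be a satisfying assignment of $\varphi$. By Corollary~\ref{cor:sethSESMforwardSM}, $\mu^f_{SESM}\in{\cal S}$, and by Lemma~\ref{lem:sethSESMforwardMeasure}, $\delta(\mu^f_{SESM})\leq 100s4^{pd}\cdot n^2$. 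Since $\Delta=\min_{\mu\in{\cal S}}|\delta(\mu)|$, we conclude $\Delta\leq 100s4^{pd}\cdot n^2$.''
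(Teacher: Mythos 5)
Your proposal is correct and matches the paper's argument exactly: the paper likewise derives the corollary by combining Corollary~\ref{cor:sethSESMforwardSM} (stability of $\mu^f_{SESM}$) with Lemma~\ref{lem:sethSESMforwardMeasure} (the bound on $\delta(\mu^f_{SESM})$) and appealing to the definition of $\Delta$ as a minimum over $\cal S$. Your remark that $\delta$ in Lemma~\ref{lem:sethSESMforwardMeasure} is already the absolute value (its proof bounds $|\sat_W-\sat_M|=|y-x|$) is the right reading, so nothing is missing.
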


This concludes the proof of the forward direction.

\medskip
\myparagraph{Reverse Direction.} Second, we prove that given an instance $\varphi$ of {\sc $s$-Sparse $p$-CNF-SAT}, if for the instance $\red_{SESM}(\varphi)$ of {\sc SESM}, $\Delta$ is ``low'' (namely, at most $100s4^{pd}\cdot n^2$), then we can construct a truth assignment that satisfies $\varphi$. To this end, we first need to analyze the structure of stable matchings of $\red_{SESM}(\varphi)$ whose sex-equality measure is low. Let us begin by proving the following lemma.

\begin{lemma}\label{lem:sethSESMEasy}
Let $\varphi$ be a instance of {\sc $s$-Sparse $p$-CNF-SAT}. Let $\mu$ be a stable matching of $\red_{SESM}(\varphi)$ such that $\delta(\mu)\leq 100s4^{pd}\cdot n^2$. Then, the two following equalities are satisfied.
\begin{itemize}
\item $\displaystyle{\sum_{i\in[q]}\left((1-a(\mu,i))\lambda(i) - (1-a(\mu,i))\gamma(i)\right)=0}$.
\item $\displaystyle{(\widetilde{a}-b(\mu)-q)\tau = 0}$.
\end{itemize}
\end{lemma}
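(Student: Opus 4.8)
The plan is to start from the expressions for $\sat_M(\mu)$ and $\sat_W(\mu)$ provided by Lemma~\ref{lem:sethGenMeasure}, and to exploit the fact that the three ``magnitude parameters'' $\lambda(i)=n^{20}\cdot 2^{2q-i}$, $\gamma(i)=n^{20}\cdot 2^{i-1}$ and $\tau=n^{10}$ live on very different scales from one another and, crucially, from the error terms $x,y$ which are only $\OO(n^2)$. First I would write
\[
\delta(\mu) = |\sat_M(\mu)-\sat_W(\mu)|,
\]
substitute the two formulas, and cancel the common $\alpha$ using the very definition
\[
\alpha = \sum_{i\in[q]}\big((a^i-1)\lambda(i)-\gamma(i)\big) + (2q-\widetilde{a})\tau,
\]
exactly as was done in the proof of Lemma~\ref{lem:sethSESMforwardMeasure}. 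After this substitution the difference $\sat_M(\mu)-\sat_W(\mu)$ becomes a sum of three ``deviation'' groups: one group with coefficients of order $\lambda(i),\gamma(i)$ (both multiples of $n^{20}$), one group with coefficient $\tau=n^{10}$, and the remaining $y-x$ term of order $n^2$. The hypothesis $\delta(\mu)\le 100 s 4^{pd}\cdot n^2=\OO(n^2)$ then forces each of the two larger-magnitude groups to vanish \emph{exactly}, because they are integer combinations sitting at scales $n^{20}$ and $n^{10}$ that cannot be compensated by the $\OO(n^2)$ slack.

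\textbf{Separating the scales.}

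The key step is a separation-of-scales argument. After cancelling $\alpha$, I expect the net deviation to take the form
\[
\sat_M(\mu)-\sat_W(\mu) = A\cdot n^{20} + B\cdot n^{10} + (x-y),
\]
where $A=\sum_{i\in[q]}\big((1-a(\mu,i))\gamma(i)/n^{20}-(1-a(\mu,i))\lambda(i)/n^{20}\big)$ collects the $\lambda,\gamma$ contributions (so $A\cdot n^{20}=\sum_{i\in[q]}\big((1-a(\mu,i))\lambda(i)-(1-a(\mu,i))\gamma(i)\big)$ up to sign) and $B\cdot n^{10}=(\widetilde a - b(\mu)-q)\tau$ collects the $\tau$ contribution. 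I would bound $A$ and $B$ crudely: since $a(\mu,i)\in[0,a^i]$ and $a^i,\widetilde a,b(\mu)$ are all at most $q2^{pd}=\OO(n)$, both $A$ (once the $n^{20}$ factor is stripped) and $B$ (once the $n^{10}$ factor is stripped) are integers of magnitude $\OO(n)$ times small constants $2^{2q-i},2^{i-1}$ — but the point is only that $B$ is bounded by something like $\widetilde a=\OO(n)$ and the $\lambda,\gamma$ combination carries a genuine factor $n^{20}$. Because $|x-y|\le 100 s 4^{pd} n^2 = \OO(n^2)$, and the $n^{20}$-scaled and $n^{10}$-scaled terms are integers that are either zero or at least $n^{10}$ in absolute value (being nonzero integer multiples of $n^{10}$, respectively $n^{20}$), the only way to keep $|\sat_M(\mu)-\sat_W(\mu)|=\OO(n^2)$ is to have \emph{both} the $n^{20}$-group and the $n^{10}$-group equal to zero. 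That yields precisely the two claimed equalities.

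\textbf{The main obstacle.}

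The delicate part is making the separation rigorous rather than heuristic: I must argue that the two higher-scale groups cannot \emph{cancel against each other}. This is where the choice of exponents $20$ and $10$ (versus the $\OO(n^2)$ error) matters. The cleanest way is to observe that the $\tau$-group $(\widetilde a - b(\mu)-q)\tau$ is an integer multiple of $n^{10}$ whose coefficient $|\widetilde a - b(\mu)-q|\le \widetilde a+q = \OO(n)$, so this group is either $0$ or has absolute value between $n^{10}$ and $\OO(n^{11})$; meanwhile the $\lambda,\gamma$-group is an integer multiple of $n^{20}$ with coefficient $\OO(n)$ as well, so it is either $0$ or has absolute value at least $n^{20}$. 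A two-step ``largest scale first'' argument then finishes it: if the $n^{20}$-group were nonzero it would dominate everything else (the $n^{10}$-group is $\OO(n^{11})$ and $|x-y|=\OO(n^2)$), contradicting $\delta(\mu)=\OO(n^2)$; hence it is zero, giving the first equality. With that group gone, the same reasoning applied to the $n^{10}$-group (now only competing against $|x-y|=\OO(n^2)<n^{10}$) forces it to vanish too, giving the second equality. I would take care to state explicitly the numeric inequalities $\OO(n^{11})<n^{20}$ and $\OO(n^2)<n^{10}$ (valid for all sufficiently large $n$, which we may assume) so that the domination steps are airtight.
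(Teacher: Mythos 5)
Your proposal is correct and matches the paper's own argument: both start from the expressions in Lemma~\ref{lem:sethGenMeasure}, cancel $\alpha$, and then use a scale-separation argument — the error terms $x,y$ are $\OO(n^2)$ while the $\tau$-group is a multiple of $n^{10}$ of magnitude $\OO(n^{11})$ and the $\lambda,\gamma$-group is a multiple of $n^{20}$ — to force both groups to vanish. The only cosmetic difference is the order (you eliminate the $n^{20}$-group first, the paper first derives the combined equality and then separates via divisibility by $n^{20}$), and you drop an immaterial factor of $2$ on the $\tau$-group; neither affects the argument.
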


\begin{proof}
Since $\delta(\mu)\leq 100s4^{pd}\cdot n^2$, it holds that $|\sat_W(\mu)-\sat_M(\mu)|\leq 100s4^{pd}\cdot n^2$. Recall that $s,p,d=\OO(1)$. Now, notice that $\tau = n^{10} > 100s4^{pd}\cdot n^2$, else the problem is solvable in polynomial time. Moreover, for all $i\in[q]$, $\lambda(i)$ and $\gamma(i)$ are divisible by $n^{10}$. Thus, Lemma \ref{lem:sethGenMeasure} implies that the following equality is satisfied.
\[\begin{array}{ll}
0 = &\displaystyle{\sum_{i\in[q]}(a^i-a(\mu,i))\lambda(i) + (\widetilde{a}-b(\mu))\tau}\\
 &\displaystyle{- \left(\sum_{i\in[q]}a(\mu,i)\gamma(i) + \sum_{i\in[q]}\left((a^i-1)\lambda(i)-\gamma(i)\right) + b(\mu)\tau + (2q-\widetilde{a})\tau \right)}.
\end{array}\]

The equality above is equivalent to the following equality.
\[\begin{array}{ll}
0 = &\displaystyle{\sum_{i\in[q]}\left((1-a(\mu,i))\lambda(i) - (1-a(\mu,i))\gamma(i)\right)  + 2(\widetilde{a}-b(\mu)-q)\tau.}
\end{array}\]

Next, notice that $0\leq b(\mu)\leq \widetilde{a}$, and hence $-2qn^{10}\leq 2(\widetilde{a}-b(\mu)-q)\tau$ as well as $2(\widetilde{a}-b(\mu)-q)\tau\leq 2(\widetilde{a}-q)n^{10}\leq 2q(2^{pd}-1)n^{10}\leq n^{12}$, where the last inequality is assumed to hold else the problem is solvable in polynomial time. Moreover, for all $i\in[q]$, $\lambda(i),\gamma(i)$ are divisible by $n^{20}$. Thus, we derive that the two equalities given in the statement of the lemma must be satisfied.
\end{proof}

\begin{lemma}\label{lem:sethSESM1PerTruth}
Let $\varphi$ be an instance of {\sc $s$-Sparse $p$-CNF-SAT}. Let $\mu$ be a stable matching of $\red_{SESM}(\varphi)$ such that $\delta(\mu)\leq 100s4^{pd}\cdot n^2$. Then, for all $i\in[q]$, there exists $j\in[a^i]$ such that $\mu(m^i_j)=\widehat{w}^i_j$ and for all $k\neq j$, $\mu(m^i_k)=w^i_k$.
\end{lemma}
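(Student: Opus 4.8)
The statement is equivalent to the claim that $a(\mu,i)=1$ for every $i\in[q]$: by Lemma~\ref{lem:sethSESMCaptureGood} each man $m^i_k$ is matched either to $w^i_k$ or to $\widehat{w}^i_k$, so once we know that exactly one index $j$ satisfies $\mu(m^i_j)=\widehat{w}^i_j$, every remaining index $k\neq j$ automatically satisfies $\mu(m^i_k)=w^i_k$. Thus the plan is to extract from the hypothesis $\delta(\mu)\le 100s4^{pd}\cdot n^2$ a single linear Diophantine constraint on the integers $a(\mu,i)$ and to show that this constraint, together with the trivial bound $a(\mu,i)\ge 0$, forces $a(\mu,i)=1$ for all $i$. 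The constraint is exactly the first equality supplied by Lemma~\ref{lem:sethSESMEasy}, namely $\sum_{i\in[q]}(1-a(\mu,i))(\lambda(i)-\gamma(i))=0$.

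Writing $e_i:=a(\mu,i)-1$ and recalling $\lambda(i)=n^{20}2^{2q-i}$ and $\gamma(i)=n^{20}2^{i-1}$, after dividing by $n^{20}$ the constraint becomes $\sum_{i\in[q]}e_i\,\omega_i=0$, where $\omega_i:=2^{2q-i}-2^{i-1}$. Since $a(\mu,i)$ counts a set of indices it is non-negative, so $e_i\ge -1$ for every $i$. I would then record two elementary arithmetic facts about the weights $\omega_i$, which drive the whole argument. First, because $2q-i>i-1$ for $i\le q$, each $\omega_i$ is positive and $\omega_i=2^{i-1}(2^{2q-2i+1}-1)$ with $2^{2q-2i+1}-1$ odd, so the $2$-adic valuations $v_2(\omega_i)=i-1$ are pairwise distinct. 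Second, a direct summation gives $\sum_{j>i}\omega_j=2^{2q-i}-2^{q+1}+2^{i}$, whence $\omega_i-\sum_{j>i}\omega_j=2^{q+1}-3\cdot 2^{i-1}>0$ for every $i\le q$; that is, each weight strictly exceeds the sum of all later weights.

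With these facts in hand I would prove $e_i=0$ for all $i$ by induction on $i=1,\dots,q$. Suppose $e_1=\cdots=e_{i-1}=0$, so that $\sum_{j\ge i}e_j\omega_j=0$. Reducing this identity modulo $2^{i}$ annihilates every term with $j>i$ (as $v_2(\omega_j)=j-1\ge i$) and leaves $e_i\omega_i\equiv 0\pmod{2^{i}}$; since $v_2(\omega_i)=i-1$ and $\omega_i/2^{i-1}$ is odd, this forces $e_i$ to be even. On the other hand, using $e_j\ge -1$ (hence $-e_j\omega_j\le\omega_j$) we get $-\sum_{j>i}e_j\omega_j\le\sum_{j>i}\omega_j<\omega_i$, so if $e_i\ge 1$ then $\sum_{j\ge i}e_j\omega_j\ge\omega_i+\sum_{j>i}e_j\omega_j>0$, contradicting $\sum_{j\ge i}e_j\omega_j=0$; hence $e_i\le 0$. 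An even integer that is both $\ge -1$ and $\le 0$ must equal $0$, completing the induction and yielding $a(\mu,i)=1$ for every $i$, as required.

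The main obstacle I expect is recognizing that stability imposes no upper bound on $a(\mu,i)$ whatsoever: the man-optimal matching $\mu_\emptyset$ realises $a(\mu,i)=0$ for all $i$, and selecting several truth assignments inside a single Truth Selector remains perfectly compatible with the excellent-matching conditions of Definition~\ref{def:sethSESMExcellent}. Consequently the value $a(\mu,i)=1$ must be enforced entirely by the sex-equality budget, and because the coefficients $1-a(\mu,i)$ can be arbitrarily negative, a naive uniqueness-of-binary-representation argument does not apply. The delicate point is therefore to marry the $2$-adic valuation structure of the weights (which forces each successive $e_i$ to be even, hence, via $e_i\ge -1$, non-negative) with their geometric dominance (which forces $e_i\le 0$); neither property alone suffices, and it is their interleaving along the induction that pins each $e_i$ to $0$.
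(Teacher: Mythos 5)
Your proof is correct and follows essentially the same route as the paper's: both reduce the claim to the single linear equation supplied by Lemma~\ref{lem:sethSESMEasy} and then pin down $a(\mu,i)=1$ index by index, using a parity (2-adic) argument in one direction and the fact that each weight strictly exceeds the sum of all later weights in the other (the paper writes the equation as $\sum_{i}a(\mu,i)(2^{i-1}+2^{2q-i})=2^{2q}-1$ and peels off one index at a time by an induction on $q$, while you write it homogeneously in $e_i=a(\mu,i)-1$ and induct on $i$, but these are interchangeable). One small remark: you take the weights $\lambda(i)-\gamma(i)$ from the literal statement of Lemma~\ref{lem:sethSESMEasy}, whereas the constraint that actually follows from Lemma~\ref{lem:sethGenMeasure} (and that the paper uses here) is $\sum_{i}(1-a(\mu,i))(\lambda(i)+\gamma(i))=0$ --- the minus sign in that lemma's statement appears to be a typo --- but since both weight systems $2^{2q-i}\pm 2^{i-1}$ have 2-adic valuation $i-1$ and satisfy your dominance property, your argument goes through unchanged with either choice.
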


\begin{proof}
By Corollary \ref{cor:sethSESMCaptureEx}, the statement of the lemma is equivalent to the statement that for all $i\in[k]$, $a(\mu,i)=1$. By Lemma \ref{lem:sethSESMEasy}, the following equality is satisfied.
\[\displaystyle{\sum_{i\in[q]}\left((1-a(\mu,i))\lambda(i) - (1-a(\mu,i))\gamma(i)\right)=0},\]
which is equivalent to the following one.
\[\displaystyle{\sum_{i\in[q]}a(\mu,i)(\gamma(i)+\lambda(i)) = \sum_{i\in[q]}(\gamma(i)+\lambda(i))}.\]

Substituting $\gamma(i)$ and $\lambda(i)$ for all $i\in[q]$ and dividing both sides by $n^{20}$, we derive that the following equality is satisfied.
 \[\displaystyle{\sum_{i\in[q]}a(\mu,i)(2^{i-1}+2^{2q-i}) = \sum_{i\in[q]}(2^{i-1}+2^{2q-i})}.\]

Since $\displaystyle{\sum_{i\in[q]}(2^{i-1}+2^{2q-i}) = (2^q-1) + 2^q(2^q-1)}$, we overall get that the following equality is satisfied.
\[\displaystyle{\sum_{i\in[q]}a(\mu,i)(2^{i-1}+2^{2q-i}) = 2^{2q}-1}.\]

Let $\psi$ be an assignment to the variables $a(\mu,i)$ that satisfies this condition as well as the two equalities above. We claim that $\psi$ necessarily assigns 1 to all of these variables. This claim can be easily proven by induction on $q$. In the base case, where $q=1$, the first equality directly implies that $a(\mu,1)=1$. Now, suppose that $q\geq 2$ and that the claim holds for $q-1$. Then, first note if $a(\mu,1)=0$, then the left side of the equality would have been an even number while the right side is an odd number, and therefore $a(\mu,1)\geq 1$. Now, observe that if $a(\mu,1)\geq 2$, then $\displaystyle{\sum_{i\in[q]}a(\mu,i)(2^{i-1}+2^{2q-i})}\geq 2+2\cdot 2^{2q-1} > 2^{2q}-1$, which is a contradiction. Thus, we have that $a(\mu,1)=1$. We get that the following equality is then satisfied, where for all $i\in [q-1]$, we denote $\widehat{a}(\mu,i) = a(\mu,i+1)$.
\[\displaystyle{(1+2^{2q-1}) + 2\sum_{i\in[q-1]}\widehat{a}(\mu,i)(2^{i-1}+2^{2q-i}) = 2^{2q}-1},\]
which is equivalent to the following one.
\[\displaystyle{\sum_{i\in[q-1]}\widehat{a}(\mu,i)(2^{i-1}+2^{2q-i}) = 2^{2(q-1)}-1}.\]

By the inductive hypothesis, we have that for all $i\in[q-1]$, it holds that $\widehat{a}(\mu,i)=1$. Therefore, for all $i\in\{2,3,\ldots,q\}$, it holds that $a(\mu,i)=1$. This concludes the proof of the lemma.
\end{proof}

\begin{lemma}\label{lem:sethSESM1PerFalse}
Let $\varphi$ be an instance of {\sc $s$-Sparse $p$-CNF-SAT}. Let $\mu$ be a stable matching of $\red_{SESM}(\varphi)$ such that $\delta(\mu)\leq 100s4^{pd}\cdot n^2$. Then, for all $i\in[q]$, there exists $j\in[a^i]$ such that $\mu(\overline{m}^i_j)=\overline{w}^i_j$ and for all $k\neq j$, $\mu(\overline{m}^i_k)=\widehat{\overline{w}}^i_k$.
\end{lemma}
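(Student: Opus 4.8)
The plan is to mirror the structure of Lemma~\ref{lem:sethSESM1PerTruth}, now exploiting the \emph{second} equality from Lemma~\ref{lem:sethSESMEasy} together with the just-established fact that $a(\mu,i)=1$ for all $i\in[q]$. By Corollary~\ref{cor:sethSESMCaptureEx}, the statement of the lemma is equivalent to the claim that for every $i\in[q]$ there is \emph{exactly one} index $j\in[a^i]$ with $\mu(\overline{m}^i_j)=\overline{w}^i_j$ (recalling that by goodness, $\mu(\overline{m}^i_k)\neq\overline{w}^i_k$ forces $\mu(\overline{m}^i_k)=\widehat{\overline{w}}^i_k$). First I would translate this into a counting statement about $b(\mu)$. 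By definition, $b(\mu)=|\{(i,j):\mu(\overline{m}^i_j)=\widehat{\overline{w}}^i_j\}|$, so the number of \emph{remaining} false-selector gadgets in the ``$\overline{w}$-state'' is $\widetilde{a}-b(\mu)$. The second equality of Lemma~\ref{lem:sethSESMEasy}, namely $(\widetilde{a}-b(\mu)-q)\tau=0$, gives immediately that $\widetilde{a}-b(\mu)=q$, i.e. across all $q$ color classes the total number of indices $j$ with $\mu(\overline{m}^i_j)=\overline{w}^i_j$ is exactly $q$.

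Next I would upgrade this \emph{global} count to a \emph{per-class} count, which is where the real content lies. I would show that for each fixed $i\in[q]$ there is at least one $j\in[a^i]$ with $\mu(\overline{m}^i_j)=\overline{w}^i_j$; summing these lower bounds over all $q$ classes already yields $q$, and combined with the total being exactly $q$ this forces exactly one per class. To produce the per-class lower bound I would invoke the excellent-matching conditions of Definition~\ref{def:sethSESMExcellent}. Since $a(\mu,i)=1$ (Lemma~\ref{lem:sethSESM1PerTruth}), there is a unique $j^\star=j^\star(i)$ with $\mu(m^i_{j^\star})=\widehat{w}^i_{j^\star}$. Condition~\ref{excel3} then says that for \emph{all} $k\neq j^\star$, $\mu(\overline{m}^i_k)=\widehat{\overline{w}}^i_k$; hence the only index $k$ for which $\mu(\overline{m}^i_k)=\overline{w}^i_k$ can possibly hold is $k=j^\star$ itself. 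This shows there is \emph{at most} one such index per class. The matching lower bound --- that there is \emph{at least} one --- should follow from the same two equalities: if some class $i$ had \emph{zero} indices with $\mu(\overline{m}^i_j)=\overline{w}^i_j$, then the total $\widetilde{a}-b(\mu)$ summed over the other classes would have to be $q$ while each contributes at most $1$, forcing every other class to contribute exactly $1$ and this class $0$, contradicting $\widetilde{a}-b(\mu)=q$ only if every class is forced to contribute exactly $1$; so I would phrase it directly as: each class contributes at most $1$, there are $q$ classes, and the sum is $q$, whence each contributes exactly $1$.

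The main obstacle I anticipate is cleanly establishing the ``at most one per class'' bound from the excellent-matching structure rather than from arithmetic, since the pigeonhole argument on $\widetilde{a}-b(\mu)=q$ only gives ``exactly one on average.'' The crucial observation is that Condition~\ref{excel3} of Definition~\ref{def:sethSESMExcellent}, applied with the unique selected truth assignment index $j^\star(i)$ of class $i$, pins down the state of \emph{every other} false-selector $\overline{m}^i_k$ with $k\neq j^\star$ to be $\widehat{\overline{w}}^i_k$, thereby capping each class's contribution at one. Once the upper bound of $1$ per class is in hand, the equality $\sum_{i\in[q]}(\text{contribution of class }i)=\widetilde{a}-b(\mu)=q$ forces equality $1$ in every class. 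I would present this as a short two-step argument (first the global count $\widetilde{a}-b(\mu)=q$, then the per-class cap via Condition~\ref{excel3}), and conclude that the unique index with $\mu(\overline{m}^i_j)=\overline{w}^i_j$ is exactly $j^\star(i)$, with all others satisfying $\mu(\overline{m}^i_k)=\widehat{\overline{w}}^i_k$, which is precisely the statement of the lemma.
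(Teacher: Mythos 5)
Your proof is correct and takes essentially the same route as the paper: both derive $\widetilde{a}-b(\mu)=q$ from Lemma~\ref{lem:sethSESMEasy}, cap each class at one index in the $\overline{w}$-state via Condition~\ref{excel3} of Definition~\ref{def:sethSESMExcellent} together with Lemma~\ref{lem:sethSESM1PerTruth}, and conclude by pigeonhole. The only difference is presentational: the paper phrases the per-class cap as a contradiction (two such indices would rule out any selected truth index), whereas you argue it directly from the unique $j^\star(i)$, which is if anything slightly cleaner.
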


\begin{proof}
First, notice that by Lemma \ref{lem:sethSESMEasy}, it holds that $\displaystyle{(\widetilde{a}-b(\mu)-q)\tau = 0}$. Therefore, $b(\mu)=\widetilde{a}-q$. Hence, by Corollary \ref{cor:sethSESMCaptureEx}, it is sufficient to show that for all $i\in[q]$, there do not exist distinct $j,k\in[a^i]$ such that both $\mu(\overline{m}^i_j)=\overline{w}^i_j$ and $\mu(\overline{m}^i_k)=\overline{w}^i_k$. Suppose, by way of contradiction, that this claim is false. That is, there exist $i\in[q]$ and distinct $j,k\in[a^i]$ such that both $\mu(\overline{m}^i_j)=\overline{w}^i_j$ and $\mu(\overline{m}^i_k)=\overline{w}^i_k$. Then, by Corollary \ref{cor:sethSESMCaptureEx} and Condition \ref{excel3} in Definition \ref{def:sethSESMExcellent}, there do not exist $\ell\in[a^i]$ such that $\mu(m^i_j)=\widehat{w}^i_j$. However, this contradicts Lemma \ref{lem:sethSESM1PerTruth}, and thus we conclude that the lemma is correct.
\end{proof}

We are now ready to prove the correctness of the reverse direction.

\begin{lemma}\label{lem:sethSESMReverse}
Let $\varphi$ be an instance of {\sc $s$-Sparse $p$-CNF-SAT}. If for the instance $\red_{SESM}(\varphi)$ of {\sc SESM}, $\Delta\leq 100s4^{pd}\cdot n^2$, then $\varphi$ is a \yesinstance\ of {\sc $s$-Sparse $p$-CNF-SAT}.
\end{lemma}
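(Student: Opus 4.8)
The plan is to extract a satisfying assignment directly from a low-$\delta$ stable matching, using the already-established characterization of stable matchings as excellent matchings. Since $\Delta\le 100s4^{pd}\cdot n^2$, there exists a stable matching $\mu$ of $\red_{SESM}(\varphi)$ with $\delta(\mu)\le 100s4^{pd}\cdot n^2$; this $\mu$ is the object from which I will read off the assignment. By Corollary~\ref{cor:sethSESMCaptureEx}, $\mu$ is excellent, so it is good and satisfies Conditions~\ref{excel1}--\ref{excel3} of Definition~\ref{def:sethSESMExcellent}. Using goodness (Condition~\ref{good1} of Definition~\ref{def:sethSESMGood}), each man $m_t$ is matched to exactly one of $w_t,\widehat{w}_t$, so I would define a truth assignment $f$ by setting $X(f)=\{x_t:\mu(m_t)=\widehat{w}_t\}$ as its set of true variables.

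First I would invoke Lemma~\ref{lem:sethSESM1PerTruth}: for each color class $i\in[q]$ there is a \emph{unique} index $j_i\in[a^i]$ with $\mu(m^i_{j_i})=\widehat{w}^i_{j_i}$ (and $\mu(m^i_k)=w^i_k$ for all $k\neq j_i$). This singles out the partial assignment $f^i_{j_i}$, with true set $P^i_{j_i}$ and false set $N^i_{j_i}$. The core of the proof is to show that $f$ restricted to $X^i$ coincides with $f^i_{j_i}$, i.e.\ $X(f)\cap X^i=P^i_{j_i}$; since $X^i=P^i_{j_i}\cup N^i_{j_i}$ with $P^i_{j_i}\cap N^i_{j_i}=\emptyset$, this reduces to proving $P^i_{j_i}\subseteq X(f)$ and $N^i_{j_i}\cap X(f)=\emptyset$. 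The inclusion $P^i_{j_i}\subseteq X(f)$ is immediate from Condition~\ref{excel2}: since $\mu(m^i_{j_i})=\widehat{w}^i_{j_i}$, every variable $x_t\in P^i_{j_i}$ satisfies $\mu(m_t)=\widehat{w}_t$, hence $x_t\in X(f)$.

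The main obstacle is the disjointness claim $N^i_{j_i}\cap X(f)=\emptyset$, which requires coupling the Truth and False Selector gadgets. I would argue as follows. By Lemma~\ref{lem:sethSESM1PerFalse}, for each $i$ there is a unique index $j$ with $\mu(\overline{m}^i_j)=\overline{w}^i_j$, all other indices being in the $\widehat{\overline{w}}$-configuration. On the other hand, Condition~\ref{excel3} applied to $j_i$ forces $\mu(\overline{m}^i_k)=\widehat{\overline{w}}^i_k$ for every $k\neq j_i$; hence the unique $\overline{w}$-index must be $j_i$ itself, so $\mu(\overline{m}^i_{j_i})=\overline{w}^i_{j_i}$. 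Now suppose, for contradiction, that some $x_t\in N^i_{j_i}$ lies in $X(f)$, so $\mu(m_t)=\widehat{w}_t$. Then Condition~\ref{excel1}, applied to this $t$ together with the pair $(i,j_i)$ (legitimate precisely because $x_t\in N^i_{j_i}$), yields $\mu(\overline{m}^i_{j_i})=\widehat{\overline{w}}^i_{j_i}$, contradicting the previous line. Therefore $N^i_{j_i}\cap X(f)=\emptyset$, which completes the identity $X(f)\cap X^i=P^i_{j_i}$.

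Finally, since $f^i_{j_i}\in F^i$ satisfies all clauses in ${\cal C}^i$ by the definition of $F^i$, and $f$ agrees with $f^i_{j_i}$ on $X^i$ (the variables appearing in ${\cal C}^i$), the assignment $f$ satisfies every clause of ${\cal C}^i$. As $i$ ranges over $[q]$ and the sets $\{{\cal C}^i\}_{i\in[q]}$ partition $\cal C$, the assignment $f$ satisfies all clauses of $\varphi$, so $\varphi$ is a \yesinstance\ of {\sc $s$-Sparse $p$-CNF-SAT}. I expect the delicate coupling step---deducing $\mu(\overline{m}^i_{j_i})=\overline{w}^i_{j_i}$ and then deriving the contradiction through Condition~\ref{excel1}---to be the only part requiring genuine care; everything else is bookkeeping built on top of the characterization ${\cal S}=\Lambda$ and the two counting lemmas.
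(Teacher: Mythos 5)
Your proposal is correct and follows essentially the same route as the paper's own proof: both extract $f$ from the matching of the $m_t$'s, use Lemma~\ref{lem:sethSESM1PerTruth} to fix the unique index $\ell_i$ per color class, combine Lemma~\ref{lem:sethSESM1PerFalse} with Condition~\ref{excel3} to deduce $\mu(\overline{m}^i_{\ell_i})=\overline{w}^i_{\ell_i}$, and then apply Conditions~\ref{excel2} and~\ref{excel1} to show $X(f)\cap X^i=P^i_{\ell_i}$. The only cosmetic difference is that you phrase the $N^i_{\ell_i}$ step as a contradiction where the paper states the contrapositive directly.
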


\begin{proof}
Suppose that for the instance $\red_{SESM}(\varphi)$ of {\sc SESM}, $\Delta\leq 100s4^{pd}\cdot n^2$. Then, there exists a stable matching $\mu$ such that $\delta(\mu)\leq 100s4^{pd}\cdot n^2$. Note that by Corollary \ref{cor:sethSESMCaptureEx}, $\mu$ is an excellent matching. Let $f$ denote the truth assignment such that for all $t\in[n]$, $f$ assigns true to $x_t$ if and only if $\mu(m_t)=\widehat{w}_t$. Recall that $X(f)$ denotes the set of variables to which $f$ assigns true. We claim that $f$ satisfies $\varphi$. By the definition of the partial truth assignments, to show that $f$ satisfies $\varphi$, it is sufficient to show that for all $i\in[q]$, there exists $j\in[a^i]$ such that $P^i_j\subseteq X(f)\cap X^i$ and $N^i_j\subseteq X^i\setminus X(f)$ (in which case $P^i_j=X(f)\cap X^i$). Indeed, it then holds that for all $i\in[q]$, $f$ satisfies all of the clauses in ${\cal C}^i$ as when $f$ is restricted to $X^i$, it would be identical to an assignment in $F^i$. Notice that the proof of our claim would also conclude the proof of the lemma, as it implies that $\varphi$ is a \yesinstance\ of {\sc $s$-Sparse $p$-CNF-SAT}.

To prove our claim, we first note that by Lemma \ref{lem:sethSESM1PerTruth}, for all $i\in[q]$, there exists $\ell_i\in[a^i]$ such that $\mu(m^i_{\ell_i})=\widehat{w}^i_j$ and for all $j\neq \ell_i$, $\mu(m^i_j)=w^i_j$. Moreover, by Lemma \ref{lem:sethSESM1PerFalse} and Condition \ref{excel3} in Definition \ref{def:sethSESMExcellent}, for all $i\in[q]$, we have that $\mu(\overline{m}^i_{\ell_i})=\overline{w}^i_{\ell_i}$ and for all $j\neq \ell_i$, $\mu(\overline{m}^i_j)=\widehat{\overline{w}}^i_j$. Now, choose some arbitrary $t\in [n]$ and $i\in[q]$ such that $x_t\in X^i$. We show that if $x_t\in P^i_{\ell_i}$ then $\mu(m_t)=\widehat{w}_t$, and otherwise ($x_t\in N^i_{\ell_i}$) it holds that $\mu(m_t)\neq\widehat{w}_t$. First, suppose that $x_t\in P^i_j$. Then, by  Condition \ref{excel2} in Definition \ref{def:sethSESMExcellent}, it indeed holds that $\mu(m_t)=\widehat{w}_t$. Second, suppose that $x_t\in N^i_{\ell_i}$. Then, by Condition \ref{excel1} in Definition \ref{def:sethSESMExcellent}, it indeed holds that $\mu(m_t)\neq\widehat{w}_t$. As we have argued earlier, this concludes the proof of the lemma.
\end{proof}

\subsubsection{Rotation Digraph}

By Observation \ref{obs:manOpt1} and Corollary \ref{cor:sethSESMCaptureEx}, we directly derive the following observation.

\begin{observation}\label{obs:manOpt}
Let $\varphi$ be an instance of {\sc $s$-Sparse $p$-CNF-SAT}. Then, the man-optimal stable matching of $\red_{SESM}(\varphi)$ is $\mu_{\emptyset}$.
\end{observation}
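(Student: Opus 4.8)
The final statement to prove is Observation~\ref{obs:manOpt}, which asserts that the man-optimal stable matching of $\red_{SESM}(\varphi)$ is $\mu_{\emptyset}$.

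The plan is to exploit the two results that are explicitly cited in the observation's preamble, namely Observation~\ref{obs:manOpt1} and Corollary~\ref{cor:sethSESMCaptureEx}, together with the characterization of $\mu_M$ given in Proposition~\ref{lem:menOptimal}. First I would recall that by Proposition~\ref{lem:menOptimal}, the man-optimal stable matching $\mu_M$ is the \emph{unique} stable matching minimizing $\sum_{(m,w)\in\mu}\pos_m(w)$, and equivalently it is the stable matching in which every man is matched to his most preferred attainable partner among all stable matchings. So the goal reduces to showing that $\mu_\emptyset$ is precisely this matching.

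The key steps are as follows. By Observation~\ref{obs:manOpt1}, $\mu_\emptyset$ is a stable matching, so it belongs to $\cal S$ and the notion of man-optimality applies to it. By inspecting Definition~\ref{def:sethSESMmu}'s companion (the definition of $\mu_\emptyset$), I observe that in $\mu_\emptyset$ every man except $m^\star$ is matched to the woman he ranks \emph{first} in his preference list, and $m^\star$ is matched to $w^\star$. A man matched to his top choice clearly cannot do better in any stable matching, so for all men but $m^\star$ the partner assigned by $\mu_\emptyset$ is optimal. For the single remaining man $m^\star$, I would invoke Corollary~\ref{cor:sethSESMCaptureEx} (${\cal S}=\Lambda$) together with Condition~\ref{good5} of Definition~\ref{def:sethSESMGood}: every stable (hence excellent, hence good) matching satisfies $\mu(m^\star)=w^\star$, so $m^\star$ is matched to $w^\star$ in \emph{every} stable matching and in particular cannot obtain a better partner. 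Since in every stable matching each man is matched to the same partner he receives in $\mu_\emptyset$ (the unique top-ranked attainable woman), $\mu_\emptyset$ simultaneously gives each man his best attainable partner, which by Proposition~\ref{lem:menOptimal} identifies it as $\mu_M$.

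I do not expect a serious obstacle here, since this is an ``observation'' whose proof is meant to be immediate from the already-established structural results; the phrase ``we directly derive'' in the excerpt signals that the authors intend a one-line argument. The only point requiring a modicum of care is the treatment of $m^\star$: unlike all other men, $m^\star$ is \emph{not} matched to his top choice in $\mu_\emptyset$ (he prefers the happy women $w^i_{\hap}$ to $w^\star$), so the naive ``everyone is at their top choice'' argument fails for him, and one genuinely needs Corollary~\ref{cor:sethSESMCaptureEx} to conclude that no stable matching can pair $m^\star$ with anyone better than $w^\star$. Once that subtlety is handled, the optimality of $\mu_\emptyset$ for every man follows, and the uniqueness clause of Proposition~\ref{lem:menOptimal} pins it down as the man-optimal stable matching.
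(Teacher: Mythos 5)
Your proof is correct and matches the paper's intended argument exactly: the paper gives no explicit proof beyond citing Observation~\ref{obs:manOpt1} and Corollary~\ref{cor:sethSESMCaptureEx}, and your expansion (every man except $m^\star$ sits at position $1$ in $\mu_\emptyset$, while $m^\star$ is matched to $w^\star$ in every good, hence every stable, matching, so $\mu_\emptyset$ minimizes $\sat_M$ and Proposition~\ref{lem:menOptimal} forces $\mu_\emptyset=\mu_M$) is precisely the right way to fill it in. One sentence is loosely worded --- it is not true that every stable matching gives each man the same partner as $\mu_\emptyset$, only that no stable matching gives any man a strictly preferred partner --- but the preceding steps already establish what is actually needed.
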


Now, we define three sequences of pairs, which we would later prove to capture rotations.

\begin{definition}\label{def:threeRots}
Let $\varphi$ be an instance of {\sc $s$-Sparse $p$-CNF-SAT}. Then, in the context of $\red_{SESM}(\varphi)$, the three sets $R_1,R_2$ and $R_3$ are defined as follows.
\begin{itemize}
\item $R_1=\{\overline{\rho}^i_j=((\overline{m}^i_j,\overline{w}^i_j),(\widehat{\overline{m}}^i_j,\widehat{\overline{w}}^i_j)): i\in[q], j\in[a^i]\}$.
\item $R_2=\{\rho_t=((m_t,w_t),(\widehat{m}_t,\widehat{w}_t)): t\in[n]\}$.
\item $R_3=\{\rho^i_j=((m^i_j,w^i_j),(\widehat{m}^i_j,\widehat{w}^i_j)): i\in[q], j\in[a^i]\}$.
\end{itemize}
\end{definition}

We further define which combinations of subsets of $R_1,R_2$ and $R_3$ would be relevant to us.

\begin{definition}\label{def:legalRots}
Let $\varphi$ be an instance of {\sc $s$-Sparse $p$-CNF-SAT}. Then, in the context of $\red_{SESM}(\varphi)$, a set $R\subseteq R_1\cup R_2\cup R_3$ is {\em legal} if it satisfies the following conditions.
\begin{enumerate}
\item \label{legal1} For all $t\in[n]$ such that $\rho_t\in R$: For all $i\in[q]$ and $j\in[a^i]$ such that $x_t\in N^i_j$, $\overline{\rho}^i_j\in R$. 
\item For all $i\in[q]$ and $j\in[a^i]$ such that $\rho^i_j\in R$, the two following conditions are satisfied.
	\begin{enumerate}
	\item\label{legal2} For all $t\in[n]$ such that $x_t\in P^i_j$, $\rho_t\in R$.
	\item\label{legal3} For all $k\in[a^i]$ such that $k\neq j$, $\overline{\rho}^i_k\in R$. 
	\end{enumerate}
\end{enumerate}
Define ${\cal L}=\{R\subseteq R_1\cup R_2\cup R_2 : R$ is legal$\}$.
\end{definition}

With the lemma, we begin to analyze the relations between ${\cal L}$ and ${\cal S}$.

\begin{lemma}\label{lem:rot1}
Let $\varphi$ be an instance of {\sc $s$-Sparse $p$-CNF-SAT}. Let $R\in {\cal L}$ be a legal set of $\red_{SESM}(\varphi)$. Then, every sequence in $R_1\cap R$ is a $\mu_{\emptyset}$-rotation, every sequence in $R_2\cap R$ is a $\mu_{R_1\cap R}$-rotation, and every sequence in $R_3\cap R$ is a $\mu_{(R_1\cup R_2)\cap R}$-rotation.
\end{lemma}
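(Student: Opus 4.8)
The lemma asserts a precise layering of the rotations in a legal set $R$: the $R_1$-rotations (the "false selector" swaps $\overline{\rho}^i_j$) can be eliminated starting from the man-optimal $\mu_\emptyset$; then the $R_2$-rotations (the "variable" swaps $\rho_t$); and finally the $R_3$-rotations (the "truth selector" swaps $\rho^i_j$). I want to verify that each sequence listed in Definition~\ref{def:threeRots} really is a rotation of the stated stable matching, i.e. that eliminating the earlier layers produces a stable matching for which the next swap satisfies Definition~\ref{def:rotation}.

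\textbf{Overall approach.} The plan is to reason directly from the definition of a rotation (Definition~\ref{def:rotation}) together with the structure of the preference lists, rather than invoking rotation theory abstractly. Each sequence in $R_1\cup R_2\cup R_3$ is a $2$-pair sequence of the form $((m,w),(\widehat m,\widehat w))$, so to show it is a $\nu$-rotation for a given stable matching $\nu$ I must check that $(m,w),(\widehat m,\widehat w)\in\nu$ and that $s_\nu(m)=\widehat w$ and $s_\nu(\widehat m)=w$ — that is, $\widehat w$ is the first woman after $w$ in $m$'s list who either is unmatched or prefers $m$ to her current partner, and symmetrically for $\widehat m$. First I would record, via Observation~\ref{obs:manOpt}, that the base matching is $\mu_\emptyset$, in which every man $m,\widehat m,\overline m^i_j,\widehat{\overline m}{}^i_j,m^i_j,\widehat m^i_j$ is matched to the top woman on his list. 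Then I would compute, for each matching $\mu_{R_1\cap R}$, $\mu_{(R_1\cup R_2)\cap R}$, exactly which of these "first-partner" assignments have been toggled to the "second-partner" assignments by the eliminated rotations.

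\textbf{Key steps in order.}
(1) For $R_1\cap R$: fix $\overline\rho^i_j\in R$. In $\mu_\emptyset$ we have $(\overline m^i_j,\overline w^i_j)\in\mu_\emptyset$ and $(\widehat{\overline m}{}^i_j,\widehat{\overline w}{}^i_j)\in\mu_\emptyset$. The list of $\overline m^i_j$ is $\overline w^i_j$, then $\tau$ happy women, then $\widehat{\overline w}{}^i_j$; each intervening happy woman is in a happy pair matched to her top man, hence will not run off with $\overline m^i_j$, so $s_{\mu_\emptyset}(\overline m^i_j)=\widehat{\overline w}{}^i_j$, and symmetrically for $\widehat{\overline m}{}^i_j$ whose list is just $\widehat{\overline w}{}^i_j,\overline w^i_j$. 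Since distinct $\overline\rho^i_j$ involve disjoint agents, all of $R_1\cap R$ are simultaneously $\mu_\emptyset$-rotations.
(2) For $R_2\cap R$: after eliminating $R_1\cap R$, each $\overline m^i_j$ with $\overline\rho^i_j\in R$ now holds $\widehat{\overline w}{}^i_j$, which frees $\overline w^i_j$ to be held by $\widehat{\overline m}{}^i_j$. Fix $\rho_t\in R$. Then $(m_t,w_t),(\widehat m_t,\widehat w_t)\in\mu_{R_1\cap R}$. The list of $m_t$ is $w_t$, then the women $\overline w^i_j$ with $N^i_j\in\mathcal N(x_t)$, then $\widehat w_t$. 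I must check that none of these $\overline w^i_j$ blocks, i.e. that $\overline w^i_j$ does not prefer $m_t$ to her current partner — and her current partner is $\widehat{\overline m}{}^i_j$ (whom she ranks first) precisely when $\overline\rho^i_j\in R$, which is guaranteed by legality Condition~\ref{legal1} (since $x_t\in N^i_j$ and $\rho_t\in R$ force $\overline\rho^i_j\in R$). Hence $s_{\mu_{R_1\cap R}}(m_t)=\widehat w_t$.
(3) For $R_3\cap R$: similarly, fix $\rho^i_j\in R$. The list of $m^i_j$ is $w^i_j$, then women $\{w_t:x_t\in P^i_j\}$, then $\{\overline w^i_k:k\ne j\}$, then happy women, then $\widehat w^i_j$. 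Legality Conditions~\ref{legal2} and \ref{legal3} guarantee that every such $w_t$ is already holding $\widehat m_t$ (so ranks $m^i_j$ below her partner) and every such $\overline w^i_k$ is already holding $\widehat{\overline m}{}^i_k$; the happy women are inert; so again $s_{\mu_{(R_1\cup R_2)\cap R}}(m^i_j)=\widehat w^i_j$.

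\textbf{Main obstacle.} The delicate part is not that each swap is locally a rotation, but the bookkeeping showing that at the moment we eliminate a given $\rho_t$ or $\rho^i_j$, the "obstructing" women on the man's list are in exactly the right state — matched to their $\widehat{(\cdot)}$-partner — so that $s_\nu$ skips over them to land on $\widehat w$. This is where the three legality conditions of Definition~\ref{def:legalRots} are used, and I must argue it using the precise partner each woman has in the partially-eliminated matching $\mu_{R_1\cap R}$ or $\mu_{(R_1\cup R_2)\cap R}$, which requires knowing that eliminating rotations in one layer does not disturb the partners relevant to a later layer's check. The cleanest way to handle this is to give, for each of the three intermediate matchings, an explicit description of the partner of every woman appearing in the relevant men's lists (analogous to the good/excellent matchings of Definition~\ref{def:sethSESMExcellent}), and then verify the rotation condition by direct inspection; I expect the verification that $\widehat w$ itself qualifies (is unmatched or prefers $m$) to be immediate, since $\widehat w$ ranks $\widehat m$ first and $\widehat m$ will have just moved to $w$, whereas ruling out the intervening women is the step that consumes the legality hypotheses.
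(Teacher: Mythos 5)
Your proposal is correct and follows essentially the same route as the paper's proof: verify the condition $s_\nu(m)=\widehat{w}$ and $s_\nu(\widehat{m})=w$ from Definition~\ref{def:rotation} layer by layer, invoking Condition~\ref{legal1} of Definition~\ref{def:legalRots} for the $R_2$-rotations and Conditions~\ref{legal2} and~\ref{legal3} for the $R_3$-rotations. In fact you supply more detail than the paper does, explicitly tracking the partners of the intervening women (the happy women and the $\overline{w}^i_j$, $w_t$, $\overline{w}^i_k$ on each man's list) in the intermediate matchings, which the paper leaves implicit.
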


\begin{proof}
First, note that for all $\overline{\rho}^i_j=((\overline{m}^i_j,\overline{w}^i_j),(\widehat{\overline{m}}^i_j,\widehat{\overline{w}}^i_j))\in R_1$, it holds that $\widehat{\overline{w}}^i_j = s_{\mu_\emptyset}(\overline{m}^i_j)$ and $\overline{w}^i_j = s_{\mu_\emptyset}(\widehat{\overline{m}}^i_j)$. Therefore, by Definition \ref{def:rotation}, every sequence in $R_1\cap R$ is a $\mu_{\emptyset}$-rotation. We thus have that $\mu_{R_1\cap R}$ is well defined as a stable matching. Now, consider some $\rho_t=((m_t,w_t),(\widehat{m}_t,\widehat{w}_t))\in R_2\cap R$. By Condition \ref{legal1} in Definition \ref{def:legalRots}, we have that $\widehat{w}_t= s_{\mu_{R_1\cap R}}(m_t)$ and $w_t= s_{\mu_{R_1\cap R}}(\widehat{m}_t)$. Therefore, by Definition \ref{def:rotation}, every sequence in $R_2\cap R$ is a $\mu_{R_1\cap R}$-rotation. We thus have that $\mu_{(R_1\cup R_2)\cap R}$ is well defined as a stable matching. Finally, consider some $\rho^i_j=((m^i_j,w^i_j),(\widehat{m}^i_j,\widehat{w}^i_j))\in R_3\cap R$. By Conditions \ref{legal2} and \ref{legal3} in in Definition \ref{def:legalRots}, we have that $\widehat{w}^i_j= s_{\mu_{(R_1\cup R_2)\cap R}}(m^i_j)$ and $w^i_j= s_{\mu_{(R_1\cup R_2)\cap R}}(\widehat{m}^i_j)$. Therefore, by Definition \ref{def:rotation}, every sequence in $R_3\cap R$ is a $\mu_{(R_1\cup R_2)\cap R}$-rotation.
\end{proof}

For the complementary direction of Lemma \ref{lem:rot1}, we also require the following definition.
\begin{definition}\label{def:matchRots}
Let $\varphi$ be an instance of {\sc $s$-Sparse $p$-CNF-SAT}, and let $\mu$ be a stable matching of $\red_{SESM}(\varphi)$. Then, the sets $R_1(\mu), R_2(\mu)$ and $R_3(\mu)$ are defined as follows.
\begin{itemize}
\item $R_1(\mu)=\{\overline{\rho}^i_j\in R_1: \mu(\overline{m}^i_j)=\widehat{\overline{w}}^i_j\}$.
\item $R_2(\mu)=\{\rho_t\in R_2: \mu(m_t)=\widehat{w}_t\}$.
\item $R_3(\mu)=\{\rho^i_j\in R_3: \mu(m^i_j)=\widehat{w}^i_j\}$.
\end{itemize}
\end{definition}

In the context of the following lemma, recall that $R(\mu)$ is a notation defined in Section \ref{sec:prelimsRotDig}.

\begin{lemma}\label{lem:rot2}
Let $\varphi$ be an instance of {\sc $s$-Sparse $p$-CNF-SAT}. For any stable matching $\mu$ of $\red_{SESM}(\varphi)$, $R(\mu)=R_1(\mu)\cup R_2(\mu)\cup R_3(\mu)$.
\end{lemma}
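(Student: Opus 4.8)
The plan is to set $R^\circ := R_1(\mu)\cup R_2(\mu)\cup R_3(\mu)$ and prove that $R^\circ$ coincides with the closed set $R(\mu)$ associated to $\mu$ in Section~\ref{sec:prelimsRotDig}. The cleanest route avoids verifying closedness directly: I would instead exhibit $R^\circ$ as a subset of the global rotation set that can be eliminated, starting from the man-optimal matching $\mu_\emptyset$ (Observation~\ref{obs:manOpt}), in some order so as to obtain $\mu$. The uniqueness of such a subset, guaranteed by Irving and Leather~\cite{DBLP:journals/siamcomp/IrvingL86}, then forces $R^\circ=R(\mu)$, which is exactly the statement of the lemma.

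First I would verify that $R^\circ$ is legal in the sense of Definition~\ref{def:legalRots}. By Corollary~\ref{cor:sethSESMCaptureEx}, $\mu$ is an excellent matching, so it satisfies Conditions~\ref{excel1}--\ref{excel3} of Definition~\ref{def:sethSESMExcellent}. Unwinding Definition~\ref{def:matchRots} gives a dictionary: $\rho_t\in R_2(\mu)$ iff $\mu(m_t)=\widehat{w}_t$, $\rho^i_j\in R_3(\mu)$ iff $\mu(m^i_j)=\widehat{w}^i_j$, and $\overline{\rho}^i_j\in R_1(\mu)$ iff $\mu(\overline{m}^i_j)=\widehat{\overline{w}}^i_j$. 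Under this translation, Excellence Condition~\ref{excel1} is literally Legality Condition~\ref{legal1}, while Conditions~\ref{excel2} and~\ref{excel3} are literally Conditions~\ref{legal2} and~\ref{legal3}; hence $R^\circ\in{\cal L}$.

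Since $R^\circ$ is legal, Lemma~\ref{lem:rot1} applies with $R=R^\circ$ (noting $R_1\cap R^\circ=R_1(\mu)$, $R_2\cap R^\circ=R_2(\mu)$, $R_3\cap R^\circ=R_3(\mu)$, as $R_1,R_2,R_3$ are pairwise disjoint). It guarantees that the sequences in $R^\circ$ can be eliminated from $\mu_\emptyset$ in the order ``all of $R_1(\mu)$, then all of $R_2(\mu)$, then all of $R_3(\mu)$'', each step being a genuine rotation of the running stable matching; within each batch the rotations involve pairwise-disjoint agents, so the internal order is immaterial. I would then trace this elimination: eliminating $\overline{\rho}^i_j$ moves $\overline{m}^i_j$ from $\overline{w}^i_j$ to $\widehat{\overline{w}}^i_j$, eliminating $\rho_t$ moves $m_t$ from $w_t$ to $\widehat{w}_t$, and eliminating $\rho^i_j$ moves $m^i_j$ from $w^i_j$ to $\widehat{w}^i_j$, while no other agent's partner is touched. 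Thus the resulting matching sends each of $\overline{m}^i_j$, $m_t$, $m^i_j$ to its ``hat'' partner exactly when the corresponding sequence lies in $R^\circ$, i.e. exactly as $\mu$ does, and it agrees with $\mu_\emptyset$ (hence with $\mu$, by goodness Conditions~\ref{good4}--\ref{good5} of Definition~\ref{def:sethSESMGood}) on the happy pairs and on $m^\star,w^\star$. Since every agent is matched (Lemma~\ref{lem:sethSESMAll}) and these cases exhaust all agents, the matching obtained is precisely $\mu$.

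Finally, $R^\circ$ is a subset of the global rotation set—each of its sequences is a rotation by Lemma~\ref{lem:rot1}—that can be eliminated from $\mu_\emptyset$ to yield $\mu$; by the uniqueness of $R(\mu)$ from~\cite{DBLP:journals/siamcomp/IrvingL86}, we conclude $R^\circ=R(\mu)$. The only genuinely delicate step is the equivalence between excellence and legality in the second paragraph, where one must keep the three ``bar/hat'' families straight and match each excellence clause to its legality counterpart; the elimination trace and the appeal to uniqueness are routine once the dictionary of Definition~\ref{def:matchRots} is in place.
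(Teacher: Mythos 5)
Your proof is correct and follows the route the paper intends: the paper's own proof is a one-line appeal to Corollary~\ref{cor:sethSESMCaptureEx} and Definition~\ref{def:matchRots}, and your argument simply makes explicit the details it leaves implicit (the excellence-to-legality dictionary, the elimination via Lemma~\ref{lem:rot1} starting from $\mu_\emptyset$, and the appeal to the uniqueness of $R(\mu)$ from Proposition of Irving and Leather). No gaps.
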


\begin{proof}
The correctness of this lemma directly follows from Corollary \ref{cor:sethSESMCaptureEx} and Definition \ref{def:matchRots}.
\end{proof}

From Lemmata \ref{lem:rot1} and \ref{lem:rot2}, we directly derive the following corollary.

\begin{corollary}\label{cor:rot12}
Let $\varphi$ be an instance of {\sc $s$-Sparse $p$-CNF-SAT}. Then, the set of all rotations of $\red_{SESM}(\varphi)$ is $R_1\cup R_2\cup R_2$. 
\end{corollary}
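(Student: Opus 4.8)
The plan is to prove the asserted set equality by establishing the two inclusions separately, relying on Lemmata~\ref{lem:rot1} and~\ref{lem:rot2} together with one standard structural fact about rotation posets. Throughout, let $\mathcal{R}_{\mathrm{all}}$ denote the set of all rotations of $\red_{SESM}(\varphi)$, i.e.\ the set $R$ of Section~\ref{sec:prelimsRotDig}, and note that the target set named in the statement is $R_1\cup R_2\cup R_3$.

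For the inclusion $R_1\cup R_2\cup R_3\subseteq\mathcal{R}_{\mathrm{all}}$, I would first observe that the full union $\mathcal{R}^\star=R_1\cup R_2\cup R_3$ is itself legal, that is $\mathcal{R}^\star\in{\cal L}$: each of Conditions~\ref{legal1}, \ref{legal2} and~\ref{legal3} in Definition~\ref{def:legalRots} only demands the membership of certain sequences of $R_1$, $R_2$ or $R_3$, and since $\mathcal{R}^\star$ contains all of $R_1\cup R_2\cup R_3$, every such demand is vacuously met. Applying Lemma~\ref{lem:rot1} to $\mathcal{R}^\star$ then yields that every sequence in $R_1=R_1\cap\mathcal{R}^\star$ is a $\mu_{\emptyset}$-rotation, every sequence in $R_2=R_2\cap\mathcal{R}^\star$ is a $\mu_{R_1}$-rotation, and every sequence in $R_3=R_3\cap\mathcal{R}^\star$ is a $\mu_{R_1\cup R_2}$-rotation. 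In particular, each sequence in $R_1\cup R_2\cup R_3$ is a $\mu$-rotation for some stable matching $\mu$, and hence belongs to $\mathcal{R}_{\mathrm{all}}$ by definition of the latter.

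For the reverse inclusion $\mathcal{R}_{\mathrm{all}}\subseteq R_1\cup R_2\cup R_3$, I would use the fact that the set of all rotations coincides with $\bigcup_{\mu\in{\cal S}}R(\mu)$. Indeed, any $\rho\in\mathcal{R}_{\mathrm{all}}$ is a $\mu$-rotation for some $\mu\in{\cal S}$, and eliminating $\rho$ from $\mu$ produces a stable matching $\mu'$ (as noted after Definition~\ref{def:rotation}); since an exposed rotation joins the closed set of the resulting matching in the theory of Irving and Leather underlying Proposition~\ref{lem:smCorrRots}, we obtain $\rho\in R(\mu')$. By Lemma~\ref{lem:rot2}, $R(\mu')=R_1(\mu')\cup R_2(\mu')\cup R_3(\mu')$, and since $R_i(\mu')\subseteq R_i$ for each $i$ by Definition~\ref{def:matchRots}, this gives $\rho\in R_1\cup R_2\cup R_3$. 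Combining both inclusions establishes the corollary.

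I expect the only nontrivial point to be the second inclusion's appeal to $\mathcal{R}_{\mathrm{all}}=\bigcup_{\mu}R(\mu)$; although classical, I would make it explicit by recalling that a $\mu$-rotation's elimination produces a stable matching and that the eliminated rotation then lies in that matching's generating closed set. The first inclusion is essentially bookkeeping once one notices that the ``everything'' set is vacuously legal, with the only minor obligation being to confirm that the intermediate matchings $\mu_{R_1}$ and $\mu_{R_1\cup R_2}$ invoked through Lemma~\ref{lem:rot1} are well defined---which is already guaranteed within the proof of that lemma.
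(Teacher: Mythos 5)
Your proof is correct and follows essentially the same route as the paper, which derives the corollary directly from Lemmata~\ref{lem:rot1} and~\ref{lem:rot2}: your forward inclusion (the full union is vacuously legal, then apply Lemma~\ref{lem:rot1}) and reverse inclusion (every rotation lies in $R(\mu')$ for some $\mu'\in{\cal S}$, then apply Lemma~\ref{lem:rot2}) are exactly the intended, unstated details behind the paper's one-line derivation. The only difference is that you make explicit the standard Irving--Leather fact that an eliminated rotation joins the closed set of the resulting stable matching, which the paper leaves implicit.
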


We have thus identified the vertex set of the rotation digraph $D_\Pi$. Let us now identify a superset of the edge set of $D_\Pi$ as well.

\begin{definition}
Let $\varphi$ be an instance of {\sc $s$-Sparse $p$-CNF-SAT}. Then, in the context of $\red_{SESM}(\varphi)$, the sets of ordered pairs of rotations, $E_{12}$, $E_{13}$ and $E_{23}$, are define as follows.
\begin{itemize}
\item $E_{12}=\{(\overline{\rho}^i_j,\rho_t): \overline{\rho}^i_j\in R_1, \rho_t\in R_2\}$.
\item $E_{13}=\{(\overline{\rho}^i_j,\rho^i_j): i\in[q], j\in[a^i]\}$.
\item $E_{23}=\{(\rho_t,\rho^i_j): \rho_t\in R_2, \rho^i_j\in R_3\}$.
\end{itemize}
\end{definition}

We can now easily conclude with the following result.

\begin{lemma}\label{lem:rotFinal}
Let $\varphi$ be an instance of {\sc $s$-Sparse $p$-CNF-SAT}. Then, the rotation digraph $D_\Pi$ of $\red_{SESM}(\varphi)$ is a subgraph of $H_\Pi=(R_1\cup R_2\cup R_2, E_{12}\cup E_{13}\cup E_{23})$.
\end{lemma}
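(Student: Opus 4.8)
I want to show that the rotation digraph $D_\Pi$ of $\red_{SESM}(\varphi)$ is a subgraph of $H_\Pi=(R_1\cup R_2\cup R_3,\ E_{12}\cup E_{13}\cup E_{23})$. By Corollary \ref{cor:rot12} the two digraphs have the same vertex set, so the whole content of the lemma is about the arcs: every arc of $D_\Pi$ must lie in $E_{12}\cup E_{13}\cup E_{23}$. Recall that $D_\Pi$ is the minimum-size digraph whose transitive closure is the rotation poset $\Pi=(R,\prec)$. Hence an arc $(\rho,\rho')$ in $D_\Pi$ forces $\rho\prec\rho'$, and in particular it forces a precedence relation that I can detect via the definition of $\prec$ (for every $\mu\in{\cal S}$ with $\rho'\in R(\mu)$, also $\rho\in R(\mu)$). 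So it suffices to characterize \emph{all} precedence-forced pairs and show each of them is an $E_{12}\cup E_{13}\cup E_{23}$ pair, or is a transitive consequence of such pairs.

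\textbf{Plan.} First I would translate the combinatorial structure already established into poset language. By Corollary \ref{cor:sethSESMCaptureEx}, ${\cal S}=\Lambda$, and by Lemma \ref{lem:rot2}, $R(\mu)=R_1(\mu)\cup R_2(\mu)\cup R_3(\mu)$ for every stable $\mu$. This means the closed sets of $\Pi$ are exactly the sets $R(\mu)$, and by matching these to excellent matchings via Definitions \ref{def:sethSESMExcellent}, \ref{def:legalRots} and \ref{def:matchRots}, the family of closed sets is precisely ${\cal L}$, the legal sets. The defining implications of a legal set (Conditions \ref{legal1}, \ref{legal2}, \ref{legal3} in Definition \ref{def:legalRots}) are exactly the ``downward closure'' constraints of the poset: if $\rho'$ is in a closed set then certain $\rho$ must also be in it. Reading off these implications gives: $\rho_t\in R\Rightarrow\overline{\rho}^i_j\in R$ whenever $x_t\in N^i_j$ (this is $E_{12}$); $\rho^i_j\in R\Rightarrow\rho_t\in R$ whenever $x_t\in P^i_j$ (this is $E_{23}$); and $\rho^i_j\in R\Rightarrow\overline{\rho}^i_k\in R$ for $k\neq j$ (this is $E_{13}$). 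So the precedence relation $\prec$ is contained in the transitive closure of $E_{12}\cup E_{13}\cup E_{23}$.

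The next step is to argue in the reverse-containment direction at the level of \emph{covering} arcs. Since $D_\Pi$ is the minimum digraph whose transitive closure equals $\Pi$, every arc of $D_\Pi$ is a pair $\rho\prec\rho'$ with no intermediate element strictly between them; i.e. $D_\Pi$ is the Hasse/transitive-reduction digraph of $\Pi$. I would verify that $\prec$, as computed above, is \emph{generated} (under transitivity) solely by the pairs in $E_{12}\cup E_{13}\cup E_{23}$: concretely, I check that each defining implication of Definition \ref{def:legalRots} is literally one of the three arc types, and that no additional ``native'' precedence appears between two rotations. Because the transitive reduction of a DAG is a subgraph of \emph{any} generating arc set whose transitive closure is the relation, it follows that $D_\Pi\subseteq H_\Pi$. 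I should take care to confirm $H_\Pi$ is acyclic (so that its transitive closure is genuinely a partial order and the reduction argument applies): $E_{12},E_{13}$ go from $R_1$-rotations to $R_2$- or $R_3$-rotations, $E_{23}$ goes from $R_2$ to $R_3$, so arcs only move ``forward'' through the layering $R_1\to R_2\to R_3$, and there are no arcs within a layer, giving a three-layer DAG with no cycles.

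\textbf{The main obstacle.} The delicate point is not the forward implications themselves but justifying that $H_\Pi$ contains \emph{no spurious} precedences that would contradict minimality, and dually that there are no precedence relations of $\Pi$ \emph{missing} from $H_\Pi$. The lemma only claims $D_\Pi\subseteq H_\Pi$ (a superset of the true arc set suffices for the subsequent treewidth bound), so I do not need $H_\Pi$ to equal the transitive reduction; I only need that the true covering arcs all appear in $E_{12}\cup E_{13}\cup E_{23}$. Thus the crux is a clean argument that every precedence $\rho\prec\rho'$ of $\Pi$ is a transitive consequence of the three arc families — equivalently that the legality conditions of Definition \ref{def:legalRots} generate all closure constraints — which I would prove by showing: for any pair $\rho\prec\rho'$ not already listed as an arc, a legal set containing $\rho'$ but not $\rho$ can be exhibited unless the relation factors through $E_{12}\cup E_{13}\cup E_{23}$. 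This uses the explicit structure of legal sets (one can independently toggle rotations subject only to the stated implications), so the hardest bookkeeping is checking that toggling is genuinely free outside the listed implications; I expect this to reduce to a short case analysis over the rotation types $R_1,R_2,R_3$ and their variable/clause indices.
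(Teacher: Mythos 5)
Your overall strategy is the same as the paper's: the paper's entire proof is the two-sentence observation that Definition~\ref{def:legalRots} and Corollary~\ref{cor:rot12} make the transitive closure of $H_\Pi$ a supergraph of $\Pi$, whence minimality of $D_\Pi$ gives the claim. You flesh this out correctly, and in particular you correctly isolate the principle that actually does the work: $D_\Pi$ is the Hasse diagram of $\Pi$, and every covering pair of a poset must appear as a direct arc in \emph{any} DAG whose transitive closure generates that poset, so it suffices that every covering pair of $\prec$ lies in $E_{12}\cup E_{13}\cup E_{23}$.

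However, the one step you declare to be a ``literal'' check is exactly the step that fails as written. Condition~\ref{legal3} of Definition~\ref{def:legalRots} forces $\overline{\rho}^i_k\prec\rho^i_j$ for all $k\neq j$, but $E_{13}$ as defined contains only the \emph{same-index} pairs $(\overline{\rho}^i_j,\rho^i_j)$ (which, incidentally, are not precedences of $\Pi$ at all, since Condition~\ref{excel3} excludes $k=j$). The cross-index precedences $(\overline{\rho}^i_k,\rho^i_j)$ with $k\neq j$ are obtained transitively through $E_{12}\cup E_{23}$ only when some intermediate $\rho_t$ with $x_t\in N^i_k\cap P^i_j$ exists. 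When $N^i_k\cap P^i_j=\emptyset$, i.e.\ $P^i_j\subsetneq P^i_k$ (which can happen, e.g.\ when two satisfying assignments of ${\cal C}^i$ have nested true-sets), the pair $(\overline{\rho}^i_k,\rho^i_j)$ is a covering pair of $\Pi$: the $R_1$-rotations are minimal and the $R_3$-rotations maximal under $\prec$, so only an $R_2$-rotation could sit strictly between them, and none does. By your own (correct) principle that pair must then be an arc of $D_\Pi$, yet it is not an arc of $H_\Pi$ as defined. So either $E_{13}$ must be read as the set of all within-class pairs $(\overline{\rho}^i_k,\rho^i_j)$ with $j,k\in[a^i]$ --- which costs nothing in Lemma~\ref{lem:twRotSESM}, since these arcs stay inside the same connected component after deleting $R_2$ --- or your case analysis needs an explicit argument for these pairs. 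Your proposal, like the paper's own proof, does not notice this; the remainder of your argument (acyclicity of the three-layer graph, and the verification via explicitly constructed legal sets that $\prec$ contains no precedences beyond the transitive closure of the stated implications) is sound.
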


\begin{proof}
Definition \ref{def:legalRots} and Corollary \ref{cor:rot12} directly imply that the transitive closure of $H_\Pi$ is a supergraph of $\Pi$. Thus, by the definition of $D_\Pi$, we conclude the correctness of the lemma.
\end{proof}

\subsubsection{Treewidth}

We are now ready to bound the treewidth of $G_\Pi$, the undirected underlying graph of the rotation digraph $D_\Pi$.

\begin{lemma}\label{lem:twRotSESM}
Let $\varphi$ be an instance of {\sc $s$-Sparse $p$-CNF-SAT}. Then, in the context of the instance $\red_{SESM}(\varphi)$ of {\sc SESM}, the treewidth of $G_\Pi$ is bounded by $n+2\cdot 2^{pd}$.
\end{lemma}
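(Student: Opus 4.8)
The plan is to exploit the three-layer structure of the rotation digraph established in the preceding lemmata, using the middle layer $R_2$ as a small separator. By Lemma~\ref{lem:rotFinal}, $D_\Pi$ is a subgraph of $H_\Pi=(R_1\cup R_2\cup R_3, E_{12}\cup E_{13}\cup E_{23})$, so $G_\Pi$ is a subgraph of the underlying undirected graph of $H_\Pi$. Since treewidth is monotone under taking subgraphs, it suffices to exhibit a tree decomposition of (the underlying undirected graph of) $H_\Pi$ of width at most $n+2\cdot 2^{pd}$. First I would record the two quantitative facts that drive the bound: the middle layer $R_2=\{\rho_t: t\in[n]\}$ has exactly $n$ vertices (one per variable), and for every color class $i\in[q]$ the set $X^i$ consists only of variables occurring in the $d$ clauses of ${\cal C}^i$, each of size at most $p$, so $|X^i|\leq pd$ and hence $a^i=|F^i|\leq 2^{|X^i|}\leq 2^{pd}$.

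The key structural observation is that deleting the middle layer disconnects the top and bottom layers into one tiny piece per color class. I would argue that every edge of $E_{12}$ and of $E_{23}$ is incident to a vertex of $R_2$, so that in $H_\Pi-R_2$ the only surviving edges come from $E_{13}$, and every such edge joins two rotations carrying the same superscript $i$. Consequently each connected component of $H_\Pi-R_2$ is contained in $R^i=\{\overline{\rho}^i_j: j\in[a^i]\}\cup\{\rho^i_j: j\in[a^i]\}$ for some single $i\in[q]$, and therefore has at most $|R^i|=2a^i\leq 2\cdot 2^{pd}$ vertices.

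With this in hand the construction of the decomposition is routine. For each $i\in[q]$ I would create one bag equal to the whole vertex set $R^i$; this is trivially a valid tree decomposition of the component(s) living inside $R^i$. Joining these bags into a single tree (in any order) yields a tree decomposition of $H_\Pi-R_2$ of width at most $2\cdot 2^{pd}-1$. Finally, I would insert all $n$ vertices of $R_2$ into every bag. The resulting object is a tree decomposition of $H_\Pi$: vertices of $R_2$ occupy every bag (a contiguous subtree), each vertex of $R^i$ occupies the single bag for $i$ (contiguous), every $E_{13}$-edge is covered as before, and every edge of $E_{12}$ or $E_{23}$ has one endpoint in $R_2$ (now in all bags) and the other in some $R^i$, hence both lie together in the bag for $i$. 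Each bag now has size at most $(2\cdot 2^{pd})+n$, so the width is at most $n+2\cdot 2^{pd}-1\leq n+2\cdot 2^{pd}$.

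The only genuinely substantive step is the separator argument of the second paragraph, namely verifying that after removing $R_2$ the remaining ($E_{13}$) edges never cross between distinct color classes; everything else is bookkeeping. This follows directly from the definitions of $E_{12}$, $E_{13}$ and $E_{23}$, since $E_{13}$ relates only rotations sharing the same index $i$, while $E_{12}$ and $E_{23}$ each have an endpoint in $R_2$. I expect no real difficulty beyond being careful that the single-bag-per-class decomposition genuinely satisfies the connectivity (contiguity) property of tree decompositions for the vertices in $R_2$, which is immediate because they are placed in all bags.
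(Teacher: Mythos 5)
Your proposal is correct and follows essentially the same route as the paper: both reduce to the supergraph $H_\Pi$ via Lemma~\ref{lem:rotFinal}, observe that all edges of $E_{12}$ and $E_{23}$ are incident to the $n$-vertex middle layer $R_2$ so that after its removal each component lies in $\{\overline{\rho}^i_j\}_{j\in[a^i]}\cup\{\rho^i_j\}_{j\in[a^i]}$ for a single $i$ with $a^i\leq 2^{pd}$, and then add $R_2$ to every bag. You are merely a bit more explicit than the paper in spelling out the resulting tree decomposition and checking the contiguity condition, which is fine.
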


\begin{proof}
By Lemma \ref{lem:rotFinal}, $G_\Pi$ is a subgraph of the underlying undirected graph of $H_\Pi$, which we denote by $\widetilde{H}$. Thus, to prove that the treewidth of $G_\Pi$ is bounded by $n+2^{pd}$, it is sufficient to prove that the treewidth of $\widetilde{H}$ is bounded by $n+2^{pd}$. Let $H'$ denote the graph obtained from $\widetilde{H}$ by the removal of all of vertices in $R_2$. Note that $|R_2|=n$. Hence, to prove that the treewidth of $\widetilde{H}$ is bounded by $n+2^{pd}$, it is sufficient to prove that the treewidth of every connected component of $H'$ is bounded by $2\cdot 2^{pd}$. However, every connected component of $H'$ consists only of the vertices in $\{\overline{\rho}^i_j: j\in[a^i]\}\cup\{\rho^i_j: j\in[a^i]\}$ for some $i\in[q]$. Note that for all $i\in[q]$, $a^i\leq 2^{|X^i|}=2^{pd}$. Thus, the size of every connected component of $H'$ is bounded by $2\cdot 2^{pd}$, and therefore it is clear that the treewidth of every connected component of $H'$ is bounded by $2\cdot 2^{pd}$ as well.
\end{proof}

\subsubsection{Running Time}\label{sec:sethRunningTime}

Let us first bound the number of agents constructed by our reduction, assuming that the size of the input is not bounded by some fixed constant (since $d$ is a fixed constant).

\begin{observation}\label{obs:sethSESMNumAgents}
Let $\varphi$ be an instance of {\sc $s$-Sparse $p$-CNF-SAT}. Then, in the context of the instance $\red_{SESM}(\varphi)$ of {\sc SESM}, the number of agents, $|A|$, is exactly $4(n + 2q2^{pd})+\alpha+1$, which is upper bounded by $8^q$.
\end{observation}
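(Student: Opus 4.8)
The number of agents equals $4(n + 2q2^{pd}) + \alpha + 1$, and this is upper bounded by $8^q$.

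The plan is to count the agents directly from the construction, grouping them by the four roles defined in the ``Agents'' paragraph. First I would count the variable-representing agents: the four sets $M_{\var}, \widehat{M}_{\var}, W_{\var}, \widehat{W}_{\var}$ each have exactly $n$ elements, contributing $4n$. Next, for each color class $i \in [q]$, the four ``truth'' sets $M^i, \widehat{M}^i, W^i, \widehat{W}^i$ each have $a^i$ elements, and the four ``false'' sets $\overline{M}^i, \widehat{\overline{M}}^i, \overline{W}^i, \widehat{\overline{W}}^i$ each also have $a^i$ elements. Summing over $i$ and recalling $\sum_{i\in[q]} a^i = \widetilde{a}$, the truth and false agents contribute $4\widetilde{a} + 4\widetilde{a} = 8\widetilde{a}$. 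Then I would add the $2\alpha$ agents from the $\alpha$ happy pairs, and the $2$ garbage-collector agents $m^\star, w^\star$. This gives a total of $4n + 8\widetilde{a} + 2\alpha + 2$. To match the stated expression, I would rewrite $8\widetilde{a} = 8q2^{pd} \cdot \tfrac{\widetilde{a}}{q2^{pd}}$ — but the cleaner route is to note that the stated count $4(n+2q2^{pd})+\alpha+1$ cannot be literally equal to my exact count unless we reconcile the constants; the intended reading is that the number of agents is $\Theta$ of this and bounded above by it, so I would instead bound $\widetilde{a} \le q2^{pd}$ (since each $a^i = |F^i| \le 2^{|X^i|} \le 2^{pd}$) and $\alpha$ from above, showing $|A|$ is dominated by $4(n+2q2^{pd})+\alpha+1$.

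The key arithmetic inputs are the bounds $a^i \le 2^{|X^i|} \le 2^{pd}$ (because a truth assignment on $X^i$ is determined by which of its at most $pd$ variables are true) and hence $\widetilde a \le q 2^{pd}$. For the final inequality $|A| \le 8^q$ I would substitute the definition of $\alpha$ and use the explicit values $\lambda(i) = n^{20}2^{2q-i}$, $\gamma(i) = n^{20}2^{i-1}$, $\tau = n^{10}$. Bounding $\alpha \le \sum_{i\in[q]} a^i \lambda(i) + 2q\tau \le 2^{pd} n^{20} \sum_{i} 2^{2q-i} + 2qn^{10} \le 2^{pd} n^{20} 2^{2q}$, and then adding the polynomially-in-$n$ terms $4n + 8q2^{pd}$, I would argue that for $n$ large enough (the input size not bounded by a fixed constant, as the observation assumes) the whole sum is dominated by $2^{2q} \cdot \mathrm{poly}(n)$, and since $q = r/d \ge n/(\text{const})$ grows linearly with $n$, the factor $2^{2q}$ swamps the polynomial factors so that $|A| \le 2^{3q} = 8^q$.

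The main obstacle I anticipate is the final bound $|A| \le 8^q$, not the exact count. The exact count is a routine enumeration, but verifying $8^q$ dominates requires relating $q$ to $n$: since $r \le sn$ and $r$ is divisible by $d$ with $q = r/d$, one has $q \le sn/d$, so naively $2^{2q}$ is only $2^{O(n)}$ and the polynomial $n^{20}$ factor must be absorbed into the gap between $2^{2q}$ and $2^{3q}=8^q$. The delicate point is therefore to confirm that $2^{q} \ge \mathrm{poly}(n)$, i.e.\ that $q$ itself is $\Omega(n)$ (which holds because $q = r/d$ and the number of clauses $r$ is at least linear in the number of relevant variables, or else the formula is trivial), so that the ``extra'' factor $2^q$ in $8^q = 2^{2q}\cdot 2^q$ absorbs the $\mathrm{poly}(n)$ overhead. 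I would make this precise by invoking the standing assumption that the input size is not bounded by a fixed constant, which forces $q$ large, and then carry out a short monotonicity estimate to close the inequality.
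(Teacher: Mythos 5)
The paper states this as an observation with no proof, so the only thing to measure your argument against is the statement itself, and your direct enumeration is the right (indeed the only) approach. Your exact count $4n+8\widetilde{a}+2\alpha+2$ is correct, and you are right to flag that it does not literally equal the stated expression $4(n+2q2^{pd})+\alpha+1$: the paper's formula substitutes the upper bound $q2^{pd}$ for $\widetilde{a}$ (valid, since $a^i=|F^i|\le 2^{|X^i|}\le 2^{pd}$) but then adds only $\alpha+1$ where the construction contributes $2\alpha+2$ agents from the $\alpha$ happy pairs together with $m^\star$ and $w^\star$; the word ``exactly'' should be read as ``at most'', which is all that the downstream running-time bound $\OO(2^{\epsilon\tw}|A|^c)$ requires. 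Your treatment of the $8^q$ bound is also the intended one: $\alpha\le\sum_i a^i\lambda(i)+2q\tau\le 2^{pd}n^{20}2^{2q}$, and the spare factor $2^q$ in $8^q=2^{2q}\cdot 2^q$ must absorb $2^{pd}n^{20}$ and the remaining polynomial terms.

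The one point you leave as a gesture rather than a proof is the claim $2^q\ge\mathrm{poly}(n)\cdot 2^{pd}$, i.e.\ $q=\Omega(n)$. This does need a sentence, but it closes easily: one may assume w.l.o.g.\ that every variable of $\varphi$ occurs in at least one clause (otherwise delete it, which changes neither satisfiability nor the relevant parameters), whence $r\ge n/p$ and $q=r/d\ge n/(pd)$; combined with the paper's standing assumption that the input size is not bounded by any fixed constant (so $n$ may be taken large relative to the constants $p$, $d$, $s$), this gives $2^q\ge 2^{pd+2}n^{20}$ and hence $|A|\le 2^{3q}=8^q$. With that line made explicit, your argument is complete and, on the exact-count question, more careful than the paper's own statement.
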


Next, we define an algorithm \algSAT\ for {\sc $s$-Sparse $p$-CNF-SAT} as follows. Given an instance $\varphi$ of {\sc $s$-Sparse $p$-CNF-SAT}, \algSAT\ construct the instance $\red_{SESM}(\varphi)$ of {\sc SESM}. Then, it calls \algSESM\ with $\red_{SESM}(\varphi)$ as input. If \algSESM\ determines that $\Delta\leq 100s4^{pd}\cdot n^2$, then \algSAT\ determines that $\varphi$ is satisfiable, and otherwise it determines that $\varphi$ is not satisfiable.

\begin{lemma}
If \algSESM\ exists, then \algSAT\ solves {\sc $s$-Sparse $p$-CNF-SAT} in time $2^{\delta n}$.
\end{lemma}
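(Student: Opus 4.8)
The plan is to prove the statement in two parts: correctness of \algSAT, and its running time. For correctness, I would combine the two directions of the reduction that are already established. By Corollary~\ref{cor:sethSESMforward}, if $\varphi$ is satisfiable then the optimum of $\red_{SESM}(\varphi)$ satisfies $\Delta \leq 100s4^{pd}\cdot n^2$; conversely, by Lemma~\ref{lem:sethSESMReverse}, if $\Delta \leq 100s4^{pd}\cdot n^2$ then $\varphi$ is satisfiable. Hence $\varphi$ is a \yesinstance\ of {\sc $s$-Sparse $p$-CNF-SAT} if and only if $\Delta \leq 100s4^{pd}\cdot n^2$. Since \algSESM\ solves {\sc SESM} and therefore computes $\Delta$, the decision rule of \algSAT\ (declaring $\varphi$ satisfiable exactly when \algSESM\ reports $\Delta \leq 100s4^{pd}\cdot n^2$) is correct.

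For the running time, I would first note that \algSAT\ spends time only on (i) constructing $\red_{SESM}(\varphi)$ and (ii) the single call to \algSESM, the final threshold comparison being trivial. Step (i) is polynomial in the size of the output instance, hence $|A|^{\OO(1)}$, and is dominated by the polynomial factor of step (ii) (taking $c$ large enough if necessary, which only forces a larger choice of $d$ below). The call in step (ii) costs $2^{\epsilon\tw}\cdot |A|^c$, where $\tw$ is the treewidth of $G_\Pi$ and $|A|$ is the number of agents of $\red_{SESM}(\varphi)$.

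The crux is to bound both $\tw$ and $|A|$ in terms of $n$, the number of variables of $\varphi$. By Lemma~\ref{lem:twRotSESM}, $\tw \leq n + 2\cdot 2^{pd}$; since $p$ and $d$ are fixed constants, this yields $2^{\epsilon\tw} \leq 2^{\epsilon n}\cdot 2^{2\epsilon\cdot 2^{pd}} = \OO(2^{\epsilon n})$. By Observation~\ref{obs:sethSESMNumAgents}, $|A| \leq 8^{q}$, and since $q = r/d \leq sn/d$ (as $\varphi$ has $r \leq sn$ clauses) we get $|A|^c \leq 8^{cq} = 2^{3cq} \leq 2^{3csn/d}$. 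Recalling that $d$ was chosen as the least integer with $\delta \geq \epsilon + \tfrac{3cs}{d}$, we have $\tfrac{3cs}{d} \leq \delta - \epsilon$, so $|A|^c \leq 2^{(\delta-\epsilon)n}$. Multiplying the two bounds gives a total running time of $\OO(2^{\epsilon n})\cdot 2^{(\delta-\epsilon)n} = \OO(2^{\delta n})$.

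I expect the main obstacle to be the careful accounting of the $n$-independent constants $\epsilon, c, p, s, d$ and the quantity $2^{pd}$ throughout the estimates: one must verify that every contribution outside the two exponentials $2^{\epsilon n}$ and $2^{(\delta-\epsilon)n}$ — namely the additive $\OO(1)$ in the treewidth bound, the construction polynomial, and the slack between $q$ and $sn/d$ — is genuinely independent of $n$ and hence absorbed into the $\OO$-notation. Provided this bookkeeping goes through, the resulting bound $\OO(2^{\delta n})$ together with $\delta < 1$ contradicts Proposition~\ref{prop:SparseSAT}; this is precisely how the present lemma will be used to rule out the hypothetical algorithm \algSESM\ and thereby complete the \SETH-based lower bound for {\sc SESM}.
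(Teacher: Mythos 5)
Your proposal is correct and follows essentially the same route as the paper: correctness via Corollary~\ref{cor:sethSESMforward} and Lemma~\ref{lem:sethSESMReverse}, and the running-time bound via Lemma~\ref{lem:twRotSESM} and Observation~\ref{obs:sethSESMNumAgents} combined with the choice of $d$ so that $\delta \geq \epsilon + \frac{3cs}{d}$. The only difference is that you spell out the constant-factor bookkeeping more explicitly than the paper does, which is harmless.
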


\begin{proof}
By Corollary \ref{cor:sethSESMforward} and Lemma \ref{lem:sethSESMReverse}, \algSAT\ solves {\sc $s$-Sparse $p$-CNF-SAT} correctly. By the running time of \algSESM, we note that \algSAT\ runs in time $\OO(2^{\epsilon\tw}|A|^c)$. By Lemma \ref{lem:twRotSESM} and Observation \ref{obs:sethSESMNumAgents}, we have that \algSAT\ runs in time $\displaystyle{\OO(2^{\epsilon(n+2\cdot 2^{pd})}8^{\frac{cs}{d}n})}$. Note that since $d=\OO(1)$, we further have that \algSAT\ runs in time $\displaystyle{\OO(2^{(\epsilon + \frac{3cs}{d})n})}$. Since $\delta \geq  \epsilon + \frac{3cs}{d}$, it holds that \algSAT\ runs in time $\displaystyle{\OO(2^{\delta n})}$.
\end{proof}

Note that given the existence of \algSAT\ that solves {\sc $s$-Sparse $p$-CNF-SAT} in time $2^{\delta n}$, we have that \SETH\ fails. Hence, we conclude that unless \SETH\ fails, {\sc SESM} cannot be solved in time $(2-\epsilon)^{\tw}\cdot n^{\OO(1)}$ for any fixed $\epsilon>0$.

\subsection{Balanced Stable Marriage}\label{sec:sethBSM}

First, we prove that unless \SETH\ fails, {\sc BSM} cannot be solved in time $(2-\epsilon)^{\tw}\cdot n^{\OO(1)}$ for any fixed $\epsilon>0$, where $n$ is the number of agents. Again, this claim is equivalent to the one stating that unless \SETH\ fails, {\sc BSM} cannot be solved in time $2^{\epsilon\tw}\cdot n^{\OO(1)}$ for any fixed $\epsilon<1$. To prove this claim, we suppose, by way of contradiction, that there exist fixed $\epsilon>0$ and $c\geq 1$ as well as an algorithm \algBSM\ such that \algBSM\ solves {\sc BSM} in time $2^{\epsilon\tw}\cdot n^c$.

\subsubsection{Reduction}

Denote $\delta=\epsilon+(1-\epsilon)/2<1$. By Proposition~\ref{prop:SparseSAT}, supposing that \SETH\ is true, there exist integers $p=p(\delta)$ and $s=s(\delta)$ such that {\sc $s$-Sparse $p$-CNF-SAT} cannot be solved in time $\OO(2^{\delta n})$. Let $\varphi$ be an instance of {\sc $p$-CNF-SAT}. We construct an instance $\red_{BSM}(\varphi)=(M,W,\{\pos_m\}|_{m\in M},\{\pos_w\}|_{w\in W})$ of {\sc BSM} in a manner that is identical to the one in which we construct $\red_{SESM}(\varphi)$ except for the following three modifications.
\begin{itemize}
\item For all $i\in[q]$, we replace $\lambda(i)=n^{20}\cdot 2^{2q-i}$ by $\widehat{\lambda}(i)=n^{20}\cdot 4^{i-1}$.

\item We replace $\alpha=\displaystyle{\sum_{i\in[q]}((a^i-1)\lambda(i)-\gamma(i))} + (2q-\widetilde{a})\tau$ by the the following value.
\[\begin{array}{ll}
\widehat{\alpha} & = \displaystyle{\alpha - \sum_{i\in[q]}(a^i-1)\lambda(i) + \sum_{i\in[q]}(a^i-1)\widehat{\lambda}(i)} = \displaystyle{\sum_{i\in[q]}((a^i-1)\widehat{\lambda}(i)-\gamma(i))} + (2q-\widetilde{a})\tau.
\end{array}\]
\end{itemize}

Finally, let us define
\[\begin{array}{ll}
\eta &=  100s4^{pd}\cdot n^2 + \widehat{\alpha} + \displaystyle{\sum_{i\in[q]}(a^i-1)\widehat{\lambda}(i)} + q\tau \\
&= 100s4^{pd}\cdot n^2 + 2\widehat{\alpha}+\displaystyle{\sum_{i\in[q]}\gamma(i)} + (\widetilde{a}-q)\tau.
\end{array}\]

\subsubsection{Proof Modification}

Let us observe that since our modification to the reduction to {\sc SESM} only concern the number of happy pairs in total and is preference lists of some other agents. Hence, the rotation digraphs of $\red_{BSM}(\varphi)$ is identical to the rotation digraph of $\red_{SESM}(\varphi)$. Hence, we immediately derive the following version of Lemma \ref{lem:twRotSESM}.

\begin{lemma}\label{lem:twRotBSM}
Let $\varphi$ be an instance of {\sc $s$-Sparse $p$-CNF-SAT}. Then, in the context of the instance $\red_{BSM}(\varphi)$ of {\sc BSM}, the treewidth of $G_\Pi$ is bounded by $n+2\cdot 2^{pd}$.
\end{lemma}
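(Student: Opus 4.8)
The plan is to show that the rotation digraph $D_\Pi$ of $\red_{BSM}(\varphi)$ coincides, as a combinatorial object, with the rotation digraph of $\red_{SESM}(\varphi)$; the bound then follows verbatim from Lemma \ref{lem:twRotSESM}. The starting observation is that the three modifications turning $\red_{SESM}(\varphi)$ into $\red_{BSM}(\varphi)$ only affect (i) the total number of happy pairs (replacing $\alpha$ by $\widehat{\alpha}$), and (ii) the number of happy men inserted as fillers into the lists of the women $w^i_j$ of the Truth Selector gadgets (replacing $\lambda(i)$ by $\widehat{\lambda}(i)$). In particular, for every non-happy agent the \emph{relative} order in which it ranks the other non-happy agents is untouched: $w^i_j$ still ranks $\widehat{m}^i_j$ first and $m^i_j$ last, only the count of happy men sitting strictly between them changes. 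Likewise every happy agent still occupies position $1$ in the list of its happy partner. These are precisely the only features of the preference lists that the entire structural analysis of Section~\ref{sec:AllSMs} invokes.

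First I would verify that the set of stable matchings is literally unchanged. Inspecting the proofs of Lemma \ref{lem:sethSESMAll}, Lemma \ref{lem:sethSESMCaptureGood}, Lemma \ref{lem:sethSESMCaptureEx1} and Lemma \ref{lem:sethSESMCaptureEx2}, one sees that each argument rests solely on two facts: that happy agents are matched to their position-$1$ partners (hence never lie in a blocking pair), and on the relative orderings of the non-happy agents $m_t,\widehat{m}_t,w_t,\widehat{w}_t,m^i_j,\widehat{m}^i_j,\overline{m}^i_j,\widehat{\overline{m}}^i_j$ and the women they rank. Since neither of these is altered by passing from $\lambda(i),\alpha$ to $\widehat{\lambda}(i),\widehat{\alpha}$, every one of these lemmata holds for $\red_{BSM}(\varphi)$ with identical proofs. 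Consequently the notions of \emph{good} and \emph{excellent} matching (Definitions \ref{def:sethSESMGood} and \ref{def:sethSESMExcellent}) define exactly the same family $\Lambda$, and the analogue of Corollary \ref{cor:sethSESMCaptureEx} gives ${\cal S}=\Lambda$ for the BSM instance as well.

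Next I would conclude that the rotations and the rotation digraph agree. The sets $R_1,R_2,R_3$ of Definition \ref{def:threeRots} are written purely in terms of the non-happy basic agents, and the legality conditions of Definition \ref{def:legalRots} depend only on the clause structure (the memberships $x_t\in N^i_j$ and $x_t\in P^i_j$), none of which involves the fillers. Hence the analogues of Lemma \ref{lem:rot1}, Lemma \ref{lem:rot2}, Corollary \ref{cor:rot12} and Lemma \ref{lem:rotFinal} go through unchanged, yielding the same vertex set $R_1\cup R_2\cup R_3$ and the same supergraph $H_\Pi=(R_1\cup R_2\cup R_3,\,E_{12}\cup E_{13}\cup E_{23})$ of $D_\Pi$. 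Therefore $G_\Pi$ for $\red_{BSM}(\varphi)$ is a subgraph of the same $\widetilde{H}$ as before, and the counting argument of Lemma \ref{lem:twRotSESM}—remove the $n$ vertices of $R_2$, observe each surviving component lives in $\{\overline{\rho}^i_j,\rho^i_j:j\in[a^i]\}$ for a single $i$ with $a^i\le 2^{pd}$—applies identically, giving treewidth at most $n+2\cdot 2^{pd}$.

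The only point demanding genuine care, which I would highlight as the main obstacle, is confirming that the filler happy agents contribute \emph{no} new rotations and \emph{no} new comparabilities that could enlarge $D_\Pi$. This is exactly the content of the observation that happy agents are matched to their position-$1$ partner in every stable matching: since $s_\mu$ never routes through a happy agent, no happy agent is ever \emph{involved} in a rotation, so the extra fillers cannot appear as vertices of $D_\Pi$ nor introduce edges among the genuine rotations. Granting this, the identification of the two rotation digraphs is complete and the lemma is immediate.
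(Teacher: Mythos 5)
Your proposal is correct and follows essentially the same route as the paper, which likewise observes that the modifications (swapping $\lambda(i)$ for $\widehat{\lambda}(i)$ and $\alpha$ for $\widehat{\alpha}$) only alter the number of happy fillers and therefore leave the rotation digraph identical to that of $\red_{SESM}(\varphi)$, so the bound of Lemma \ref{lem:twRotSESM} carries over. You simply spell out in more detail the verification that the paper compresses into a one-sentence observation, including the key point that happy agents never participate in rotations.
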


Furthermore, it still holds that the number of agents in the reduction is upper bounded by $8^q$. Hence, the arguments given in Section \ref{sec:sethRunningTime} imply the if we show that for every instance $\varphi$ of {\sc $s$-Sparse $p$-CNF-SAT}, it holds that $\varphi$ is a \yesinstance\ if and only our reduction is correct in the sense that if the instance $\red_{BSM}(\varphi)$ of {\sc BSM} satisfies $\Bal\leq\eta$, then we would be able to conclude that unless \SETH\ fails, {\sc BSM} cannot be solved in time $(2-\epsilon)^{\tw}\cdot n^{\OO(1)}$ for any fixed $\epsilon>0$. In light of this observation, we next focus only on the proof of correctness of our reduction.

\subsubsection{Correctness}

First, notice that Definitions \ref{def:sethSESMGood}, \ref{def:sethSESMExcellent} and \ref{def:sethSESMMeasure} are also applicable to our current setting, that is, where $\red_{SESM}(\varphi)$ is replaced by $\red_{BSM}(\varphi)$. Hence, exactly as in the case of {\sc SESM} in Section~\ref{sec:AllSMs}, we derive the following results. 

\begin{corollary}\label{cor:sethBSMCaptureEx}
Let $\varphi$ be an instance of {\sc $s$-Sparse $p$-CNF-SAT}. Then, in the context of $\red_{BSM}(\varphi)$, ${\cal S}=\Lambda$.
\end{corollary}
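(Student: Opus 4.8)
The plan is to establish ${\cal S}=\Lambda$ by proving the two inclusions separately, mirroring exactly the two lemmas used to derive Corollary~\ref{cor:sethSESMCaptureEx} in the {\sc SESM} case. That is, I would prove a version of Lemma~\ref{lem:sethSESMCaptureEx1} (every stable matching of $\red_{BSM}(\varphi)$ is excellent, giving ${\cal S}\subseteq\Lambda$) together with a version of Lemma~\ref{lem:sethSESMCaptureEx2} (every excellent matching of $\red_{BSM}(\varphi)$ is stable, giving $\Lambda\subseteq{\cal S}$). Since Definitions~\ref{def:sethSESMGood} and~\ref{def:sethSESMExcellent} refer only to the identities of matched partners and to the sets $P^i_j,N^i_j$ derived from $\varphi$, and never to the reduction's numerical parameters, both notions apply verbatim to $\red_{BSM}(\varphi)$, so the same objects ($\Lambda$ and the ``good''/``excellent'' conditions) are well defined in the present setting.

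The observation that makes this immediate is that $\red_{BSM}(\varphi)$ and $\red_{SESM}(\varphi)$ differ \emph{only} in the number of happy agents appended to certain preference lists: the total number of happy pairs is $\widehat{\alpha}$ rather than $\alpha$ (which alters only the list of $m^\star$, where all happy women still precede $w^\star$), and the list of each $w^i_j$ in the Truth Selector contains $\widehat{\lambda}(i)$ happy men rather than $\lambda(i)$. The parameters $\gamma(i)$ and $\tau$ are untouched. Consequently, for every agent the following three features are preserved: the set of \emph{non-happy} agents in its domain; the identity of its most-preferred and least-preferred non-happy agents; and every pairwise comparison ``$a$ prefers $b$ over $c$'' among named (non-happy) agents. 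In particular the lists of $m_t,\widehat{m}_t,w_t,\widehat{w}_t$, of $m^i_j,\widehat{m}^i_j,\widehat{w}^i_j$, and of all the False Selector agents are structurally identical to their {\sc SESM} counterparts, and the only list whose happy padding changed, that of $w^i_j$, retains $\widehat{m}^i_j$ as its top and $m^i_j$ as its bottom.

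Because stability is determined purely by these relative orderings, and because happy agents, being mutually top-ranked within their own pair, are forced to be matched to their partners irrespective of their count, every argument in the proofs of Observation~\ref{obs:manOpt1} and Lemmata~\ref{lem:sethSESMAll}, \ref{lem:sethSESMCaptureGood}, \ref{lem:sethSESMCaptureEx1} and~\ref{lem:sethSESMCaptureEx2} carries over unchanged. Concretely, each blocking-pair comparison invoked there (for instance ``$m_t$ prefers $\overline{w}^i_j$ over $\widehat{w}_t$ and $\overline{w}^i_j$ prefers $m_t$ over $\overline{m}^i_j$'' for Condition~\ref{excel1}, and the analogous comparisons behind Conditions~\ref{excel2} and~\ref{excel3}) involves only agents whose relative positions are unaffected by the substitutions $\alpha\mapsto\widehat{\alpha}$ and $\lambda(i)\mapsto\widehat{\lambda}(i)$. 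The main (and essentially only) obstacle is therefore a bookkeeping one: I must check, list by list, that no step in the four inherited proofs secretly relies on the \emph{magnitude} of $\alpha$ or $\lambda(i)$ rather than on an order relation. The changed counts affect the satisfaction measures (analyzed separately, exactly as in Lemma~\ref{lem:sethGenMeasure} for the {\sc SESM} case) but never the blocking-pair structure. Having confirmed this, the two inclusions follow, and combining them yields ${\cal S}=\Lambda$, as claimed.
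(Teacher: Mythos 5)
Your proposal is correct and follows essentially the same route as the paper, which derives Corollary~\ref{cor:sethBSMCaptureEx} by observing that Definitions~\ref{def:sethSESMGood} and~\ref{def:sethSESMExcellent} apply verbatim to $\red_{BSM}(\varphi)$ and that the arguments of Section~\ref{sec:AllSMs} (in particular Lemmata~\ref{lem:sethSESMCaptureEx1} and~\ref{lem:sethSESMCaptureEx2}) carry over unchanged. Your explicit justification---that the substitutions $\alpha\mapsto\widehat{\alpha}$ and $\lambda(i)\mapsto\widehat{\lambda}(i)$ only alter happy-agent padding and hence preserve every relative ordering used in the blocking-pair analysis---is exactly the (implicit) content of the paper's one-line derivation.
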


\begin{lemma}\label{lem:sethGenMeasureBSM}
Let $\varphi$ be an instance of {\sc $s$-Sparse $p$-CNF-SAT}, and let $\mu$ be a stable matching of $\red_{BSM}(\varphi)$. Then, there exist $0\leq x,y\leq \displaystyle{100s4^{pd}\cdot n^2}=\OO(n^2)$ such that the two following conditions hold.
\begin{itemize}
\item $\sat_M(\mu)=2\widehat{\alpha}+\displaystyle{\sum_{i\in[q]}a(\mu,i)\gamma(i)} + b(\mu)\tau + x$.
\item $\sat_W(\mu)=\widehat{\alpha}+\displaystyle{\sum_{i\in[q]}(a^i-a(\mu,i))\widehat{\lambda}(i)} + (\widetilde{a}-b(\mu))\tau + y$.
\end{itemize}
\end{lemma}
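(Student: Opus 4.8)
The plan is to prove Lemma \ref{lem:sethGenMeasureBSM} by mimicking the proof of Lemma \ref{lem:sethGenMeasure}, since the two reductions differ only in the number of happy women attached to $m^\star$ (controlled by $\widehat{\alpha}$ instead of $\alpha$) and in the preference lists of the women of the form $w^i_j$ (where the coefficient $\lambda(i)$ of happy men is replaced by $\widehat{\lambda}(i)$). Concretely, I would first invoke Corollary \ref{cor:sethBSMCaptureEx} to ensure $\mu$ is an excellent matching, so that Conditions \ref{good1}--\ref{good5} of Definition \ref{def:sethSESMGood} and the refinements of Definition \ref{def:sethSESMExcellent} constrain precisely how each agent is matched. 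This fixes, for every agent, exactly two possible partners (determined by which of $w^i_j/\widehat{w}^i_j$-type configurations is chosen), so the position $\pos_m(\mu(m))$ of each man's partner and $\pos_w(\mu(w))$ of each woman's partner is determined up to the finite combinatorial bookkeeping captured by $a(\mu,i)$ and $b(\mu)$.

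For the $\sat_M(\mu)$ computation, I would observe that the preference lists of \emph{all men} in $\red_{BSM}(\varphi)$ are identical to those in $\red_{SESM}(\varphi)$ with the single exception of $m^\star$, whose preference list now has $\widehat{\alpha}$ happy women before $w^\star$ rather than $\alpha$. Hence the entire summation from the proof of Lemma \ref{lem:sethGenMeasure} carries over verbatim, except that each occurrence of $\alpha$ (both in the $+2\alpha$ coming from the happy pairs and the garbage collector, and in the error term's upper bound) is replaced by $\widehat{\alpha}$; the combinatorial coefficients $\gamma(i)$ on $a(\mu,i)$ and $\tau$ on $b(\mu)$ are unchanged because the Truth Selector men (contributing $\gamma(i)$) and the False Selector men (contributing $\tau$) have untouched preference lists. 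This yields $\sat_M(\mu)=2\widehat{\alpha}+\sum_{i\in[q]}a(\mu,i)\gamma(i)+b(\mu)\tau+x$ with the same bound $0\leq x\leq 100s4^{pd}\cdot n^2$ verified exactly as before.

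For $\sat_W(\mu)$, the only women whose preference lists changed are those of the form $w^i_j$, where the number of interspersed happy men is now $\widehat{\lambda}(i)=n^{20}\cdot 4^{i-1}$ rather than $\lambda(i)=n^{20}\cdot 2^{2q-i}$; consequently, the position of $m^i_j$ in the list of $w^i_j$ becomes $4+\widehat{\lambda}(i)$ in place of $4+\lambda(i)$. Tracking this through the women's summation from Lemma \ref{lem:sethGenMeasure}, the term $\sum_{i\in[q]}(a^i-a(\mu,i))\lambda(i)$ is replaced by $\sum_{i\in[q]}(a^i-a(\mu,i))\widehat{\lambda}(i)$, while the $\widetilde{a}-b(\mu)$ count contributing $\tau$ and the constant/error contributions are unaffected, and the leading $\alpha$ becomes $\widehat{\alpha}$. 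This gives the second displayed equality with the same bound on $y$.

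The main obstacle — though it is a bookkeeping rather than a conceptual one — is verifying that the replacement $\lambda(i)\mapsto\widehat{\lambda}(i)$ does not disturb the error-term bounds: I must confirm that all contributions \emph{not} indexed by the $\widehat{\lambda}$ coefficient (the positions of $\widehat{m}^i_j$ in $\widehat{w}^i_j$, the happy agents, the $m^\star$/$w^\star$ pair, and the counts $|P^i_j|,|N^i_j|,|{\cal N}(x_t)|,|{\cal P}(x_t)|$) still collapse into a remainder bounded by $100s4^{pd}\cdot n^2$, exactly as in the original proof. Since $\widehat{\lambda}(i)$ is still divisible by $n^{20}$ and enters only through the explicitly extracted main term, the absorbed lower-order contributions are structurally identical to those in Lemma \ref{lem:sethGenMeasure}, so the same bound applies; I would simply note this parallel rather than recomputing the estimates in full.
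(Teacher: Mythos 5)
Your proposal is correct and follows essentially the same route as the paper: the paper itself derives Lemma \ref{lem:sethGenMeasureBSM} by observing that the SESM analysis (Lemma \ref{lem:sethGenMeasure}) carries over verbatim under the substitutions $\alpha\mapsto\widehat{\alpha}$ and $\lambda(i)\mapsto\widehat{\lambda}(i)$, which is precisely the bookkeeping you spell out. (One trivial slip: the error bound $100s4^{pd}\cdot n^2$ never involved $\alpha$, so nothing needs replacing there, but this does not affect the argument.)
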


We next turn to modify the proofs of the forward and reverse directions in Section \ref{sec:sethSESMCor} to handle our current reduction.

\medskip
\myparagraph{Forward Direction.} We first show how given a truth assignment for an instance $\varphi$ of {\sc $s$-Sparse $p$-CNF-SAT} that satisfies $\varphi$, we can construct a stable matching $\mu$ of $\red_{BSM}(\varphi)$ whose balance is at most $100s4^{pd}\cdot n^2$.  Given a satisfying truth assignment $f$ of an instance $\varphi$ of {\sc $s$-Sparse $p$-CNF-SAT}, we define $\mu^f_{BSM}$ exactly as $\mu^f_{SESM}$ (with the modification that we now match a different number of happy agents to one another). Hence, exactly as in the case of {\sc SESM}, we obtain the following corollary.

\begin{corollary}\label{cor:sethBSMforwardSM}
Let $\varphi$ be a \yesinstance\ of {\sc $s$-Sparse $p$-CNF-SAT}. Let $f$ be a truth assignment that satisfies $\varphi$. Then, $\mu^f_{BSM}$ is a stable matching of $\red_{BSM}(\varphi)$.
\end{corollary}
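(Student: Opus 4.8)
The plan is to prove that $\mu^f_{BSM}$ is a stable matching by first showing it is an \emph{excellent} matching in the sense of Definition~\ref{def:sethSESMExcellent} (now read in the context of $\red_{BSM}(\varphi)$), and then invoking Corollary~\ref{cor:sethBSMCaptureEx}, which asserts that ${\cal S}=\Lambda$ for $\red_{BSM}(\varphi)$. The guiding observation is that the three modifications distinguishing $\red_{BSM}(\varphi)$ from $\red_{SESM}(\varphi)$ — replacing $\lambda(i)$ by $\widehat{\lambda}(i)$, replacing $\alpha$ by $\widehat{\alpha}$, and consequently matching a different number of happy pairs — affect only \emph{counts} of happy agents inserted into the preference lists of agents of the form $w^i_j$ and of $m^\star$. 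They leave intact the structural skeleton of every gadget, namely which agent is the most preferred and which is the least preferred \emph{non-happy} partner of each gadget agent. This is precisely why the text can declare Definitions~\ref{def:sethSESMGood}, \ref{def:sethSESMExcellent} and \ref{def:sethSESMMeasure} applicable without change to the current setting.

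First I would record that the combinatorial description of $\mu^f_{BSM}$ coincides with that of $\mu^f_{SESM}$ on every non-happy agent: for each $t\in[n]$, each $i\in[q]$ and each $j\in[a^i]$, the partner assigned to $m_t$, $\widehat{m}_t$, $m^i_j$, $\widehat{m}^i_j$, $\overline{m}^i_j$ and $\widehat{\overline{m}}^i_j$ is dictated by exactly the same rule based on membership in $X(f)$, $P^i_j$ and $N^i_j$; the remaining agents $m^i_{\hap}$ and $m^\star$ are matched to $w^i_{\hap}$ and $w^\star$ respectively. In particular, $\mu^f_{BSM}$ is a \emph{good} matching, since Conditions~\ref{good1}--\ref{good5} of Definition~\ref{def:sethSESMGood} are read off directly from this assignment.

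Next I would verify Conditions~\ref{excel1}--\ref{excel3} of Definition~\ref{def:sethSESMExcellent}. Crucially, each of these conditions is a purely combinatorial statement about which man is matched to which woman together with membership in $P^i_j$, $N^i_j$ and $X(f)$; none of them mentions any of the numerical parameters $\lambda(i)$, $\widehat{\lambda}(i)$, $\alpha$ or $\widehat{\alpha}$. Since $\mu^f_{BSM}$ assigns partners to all non-happy agents identically to $\mu^f_{SESM}$, the verification of Conditions~\ref{excel1}--\ref{excel3} is word-for-word the argument already given in the proof of Lemma~\ref{lem:sethSESMforwardSM}. This yields that $\mu^f_{BSM}$ is an excellent matching of $\red_{BSM}(\varphi)$, i.e.\ $\mu^f_{BSM}\in\Lambda$.

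Finally I would apply Corollary~\ref{cor:sethBSMCaptureEx}, giving ${\cal S}=\Lambda$, to conclude that $\mu^f_{BSM}$ is a stable matching. The only point requiring care — and the step I expect to be the main obstacle — is justifying that the excellent-implies-stable direction genuinely transfers from $\red_{SESM}$ to $\red_{BSM}$. This rests on the blocking-pair analysis of Lemma~\ref{lem:sethSESMCaptureEx2}, which hinges solely on the facts that happy agents sit at position $1$ of their partners' lists and that each gadget agent's top and bottom non-happy partners are unchanged. Since the modifications only alter interior counts of happy fillers, these facts persist, so the analysis carries over verbatim; this is exactly the content that Corollary~\ref{cor:sethBSMCaptureEx} packages for us, making the final conclusion immediate once excellence is established.
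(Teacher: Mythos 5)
Your proposal is correct and follows essentially the same route as the paper: the paper defines $\mu^f_{BSM}$ exactly as $\mu^f_{SESM}$ (modulo the number of happy pairs), notes that Definitions~\ref{def:sethSESMGood} and~\ref{def:sethSESMExcellent} apply verbatim to $\red_{BSM}(\varphi)$ since the modifications only change counts of happy fillers, and then invokes the BSM analogue ${\cal S}=\Lambda$ (Corollary~\ref{cor:sethBSMCaptureEx}) to conclude stability. Your write-up simply makes explicit the "exactly as in the case of {\sc SESM}" step that the paper leaves terse, including the observation that the blocking-pair analysis of Lemma~\ref{lem:sethSESMCaptureEx2} depends only on structural features preserved by the modifications.
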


In light of Corollary \ref{cor:sethBSMforwardSM}, the measure $\bal(\mu^f_{BSM})$ is well defined. We proceed to analyze this measure with the following lemma.

\begin{lemma}\label{lem:sethBSMforwardMeasure}
Let $\varphi$ be a \yesinstance\ of {\sc $s$-Sparse $p$-CNF-SAT}. Let $f$ be a truth assignment that satisfies $\varphi$. Then, $\bal(\mu^f_{BSM})\leq \eta$.
\end{lemma}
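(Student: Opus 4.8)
The plan is to mirror the structure of the proof of Lemma \ref{lem:sethSESMforwardMeasure}, since the forward direction for {\sc BSM} requires bounding $\bal(\mu^f_{BSM})=\max\{\sat_M(\mu^f_{BSM}),\sat_W(\mu^f_{BSM})\}$ rather than the absolute difference. First I would invoke Lemma \ref{lem:sethGenMeasureBSM} to obtain the existence of $0\leq x,y\leq 100s4^{pd}\cdot n^2$ such that $\sat_M(\mu^f_{BSM})=2\widehat{\alpha}+\sum_{i\in[q]}a(\mu^f_{BSM},i)\gamma(i)+b(\mu^f_{BSM})\tau+x$ and $\sat_W(\mu^f_{BSM})=\widehat{\alpha}+\sum_{i\in[q]}(a^i-a(\mu^f_{BSM},i))\widehat{\lambda}(i)+(\widetilde{a}-b(\mu^f_{BSM}))\tau+y$.

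Next I would substitute the values of the combinatorial quantities determined by Definition \ref{def:sethSESMmu}: exactly as in the {\sc SESM} case, the matching $\mu^f_{BSM}$ selects one truth assignment per color class, so $a(\mu^f_{BSM},i)=1$ for all $i\in[q]$ and $b(\mu^f_{BSM})=\widetilde{a}-q$. Plugging these in yields $\sat_M(\mu^f_{BSM})=2\widehat{\alpha}+\sum_{i\in[q]}\gamma(i)+(\widetilde{a}-q)\tau+x$ and $\sat_W(\mu^f_{BSM})=\widehat{\alpha}+\sum_{i\in[q]}(a^i-1)\widehat{\lambda}(i)+q\tau+y$. Recalling the definition $\widehat{\alpha}=\sum_{i\in[q]}((a^i-1)\widehat{\lambda}(i)-\gamma(i))+(2q-\widetilde{a})\tau$, a direct computation shows that $\sat_M(\mu^f_{BSM})-x=\sat_W(\mu^f_{BSM})-y$; that is, the two satisfaction values agree up to the error terms $x,y$. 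This is precisely where the definition of $\widehat{\alpha}$ has been engineered to counterweight the imbalance, analogously to how $\alpha$ worked for {\sc SESM}.

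Finally I would compare each of $\sat_M(\mu^f_{BSM})$ and $\sat_W(\mu^f_{BSM})$ against $\eta$. Since $\eta=100s4^{pd}\cdot n^2+2\widehat{\alpha}+\sum_{i\in[q]}\gamma(i)+(\widetilde{a}-q)\tau$, we have $\sat_M(\mu^f_{BSM})=\eta-100s4^{pd}\cdot n^2+x\leq\eta$ because $x\leq 100s4^{pd}\cdot n^2$; and using the alternative form $\eta=100s4^{pd}\cdot n^2+\widehat{\alpha}+\sum_{i\in[q]}(a^i-1)\widehat{\lambda}(i)+q\tau$ we similarly get $\sat_W(\mu^f_{BSM})=\eta-100s4^{pd}\cdot n^2+y\leq\eta$ since $y\leq 100s4^{pd}\cdot n^2$. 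Hence $\bal(\mu^f_{BSM})=\max\{\sat_M(\mu^f_{BSM}),\sat_W(\mu^f_{BSM})\}\leq\eta$, which completes the proof.

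The only genuinely substantive step is verifying the two equivalent closed forms of $\eta$ and confirming that the error terms $x,y$ are both absorbed by the slack term $100s4^{pd}\cdot n^2$ rather than only their difference; the rest is routine substitution. I do not anticipate a real obstacle here, since the design of $\widehat{\alpha}$ and $\eta$ is exactly tailored to make both satisfaction values land in the interval $[\eta-100s4^{pd}\cdot n^2,\eta]$. The main thing to be careful about is that for {\sc BSM} we must bound \emph{both} sums individually (not merely their difference), so I would ensure the nonnegativity of $x$ and $y$ is used to guarantee neither value overshoots $\eta$ and that the two alternative expressions for $\eta$ are algebraically identical before concluding.
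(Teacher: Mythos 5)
Your proposal is correct and follows essentially the same route as the paper's proof: both compute $\sat_M(\mu^f_{BSM})$ and $\sat_W(\mu^f_{BSM})$ via Lemma \ref{lem:sethGenMeasureBSM} with $a(\mu^f_{BSM},i)=1$ and $b(\mu^f_{BSM})=\widetilde{a}-q$, and then compare each value against the two equivalent closed forms of $\eta$. The paper's version is merely terser, saying the bound follows ``directly'' from the definition of $\eta$, whereas you spell out the verification that $x,y\leq 100s4^{pd}\cdot n^2$ absorbs each value individually.
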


\begin{proof}
As in the proof of Lemma \ref{lem:sethSESMforwardMeasure}, we obtain that there exist $0\leq x,y\leq \displaystyle{100s4^{pd}\cdot n^2}$ such that the two following conditions are satisfied.
\begin{itemize}
\item $\sat_M(\mu^f_{BSM})=2\widehat{\alpha}+\displaystyle{\sum_{i\in[q]}\gamma(i)} + (\widetilde{a}-q)\tau + x$.
\item $\sat_W(\mu^f_{BSM})=\widehat{\alpha}+\displaystyle{\sum_{i\in[q]}(a^i-1)\widehat{\lambda}(i)} + q\tau + y$.
\end{itemize}

Thus, by the definition of $\eta$, we directly conclude that $\sat_M(\mu^f_{BSM}),\sat_W(\mu^f_{BSM})\leq\eta$. Hence, $\bal(\mu^f_{BSM})\leq \eta$.
\end{proof}

Combining Corollary \ref{cor:sethBSMforwardSM} and Lemma \ref{lem:sethBSMforwardMeasure}, we derive the following corollary.

\begin{corollary}\label{cor:sethBSMforward}
Let $\varphi$ be a \yesinstance\ of {\sc $s$-Sparse $p$-CNF-SAT}. Then, for the instance $\red_{BSM}(\varphi)$ of {\sc BSM}, $\Bal\leq 1\eta$.
\end{corollary}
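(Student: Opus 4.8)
The plan is to assemble the two immediately preceding results into the desired bound, so the argument is short. Since $\varphi$ is a \yesinstance\ of {\sc $s$-Sparse $p$-CNF-SAT}, there exists a truth assignment $f$ that satisfies $\varphi$. The matching $\mu^f_{BSM}$ of $\red_{BSM}(\varphi)$ is then well defined (it is obtained exactly as $\mu^f_{SESM}$ in Definition \ref{def:sethSESMmu}, with the appropriate number of happy pairs matched to one another). The first step is to invoke Corollary \ref{cor:sethBSMforwardSM} to conclude that $\mu^f_{BSM}$ is a stable matching, that is, $\mu^f_{BSM}\in{\cal S}$.

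Next I would apply Lemma \ref{lem:sethBSMforwardMeasure} to obtain $\bal(\mu^f_{BSM})\leq\eta$, and then connect this to $\Bal$. Recall that $\Bal=\min_{\mu\in{\cal S}}|\bal(\mu)|$ and that for every stable matching $\mu$ the quantity $\bal(\mu)=\max\{\sat_M(\mu),\sat_W(\mu)\}$ is a non-negative integer, so $|\bal(\mu)|=\bal(\mu)$. Since $\mu^f_{BSM}$ is an element of ${\cal S}$, it is a candidate in the minimization defining $\Bal$, and hence
\[
\Bal \;\leq\; |\bal(\mu^f_{BSM})| \;=\; \bal(\mu^f_{BSM}) \;\leq\; \eta,
\]
which is exactly the claimed inequality (reading the ``$1\eta$'' in the statement as $\eta$).

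There is no genuine obstacle at the level of this corollary itself; all of the effort lives in its two ingredients. The substantive part is Lemma \ref{lem:sethBSMforwardMeasure}, whose proof parallels that of Lemma \ref{lem:sethSESMforwardMeasure} but exploits the modified coefficients $\widehat{\lambda}(i)=n^{20}\cdot 4^{i-1}$ and $\widehat{\alpha}$ together with the very definition of $\eta$ to certify $\sat_M(\mu^f_{BSM}),\sat_W(\mu^f_{BSM})\leq\eta$ \emph{simultaneously} (here the asymmetric replacement $\lambda\mapsto\widehat{\lambda}$ is what makes both sums land below $\eta$ rather than merely their difference being small, as in the sex-equal case). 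The stability ingredient, Corollary \ref{cor:sethBSMforwardSM}, in turn rests on the characterization ${\cal S}=\Lambda$ of stable matchings as excellent matchings (Corollary \ref{cor:sethBSMCaptureEx}), which transfers verbatim from the {\sc SESM} construction because the modifications affect only happy-pair counts and a few preference lists. Thus, once these are in hand, the corollary follows immediately by exhibiting the single witness $\mu^f_{BSM}$.
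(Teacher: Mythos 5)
Your proof is correct and matches the paper exactly: the paper derives this corollary in one line by combining Corollary \ref{cor:sethBSMforwardSM} (stability of $\mu^f_{BSM}$) with Lemma \ref{lem:sethBSMforwardMeasure} ($\bal(\mu^f_{BSM})\leq\eta$), and you simply spell out the same witness argument, correctly reading the ``$1\eta$'' in the statement as a typo for $\eta$.
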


This concludes the proof of the forward direction.

\medskip
\myparagraph{Reverse Direction.} Second, we prove that given an instance $\varphi$ of {\sc $s$-Sparse $p$-CNF-SAT}, if for the instance $\red_{BSM}(\varphi)$ of {\sc BSM}, $\Bal\leq\eta$, then we can construct a truth assignment that satisfies $\varphi$. We start our analysis with the following lemma.

\begin{lemma}\label{lem:sethBSMEasy}
Let $\varphi$ be a instance of {\sc $s$-Sparse $p$-CNF-SAT}. Let $\mu$ be a stable matching of $\red_{BSM}(\varphi)$ such that $\bal(\mu)\leq \eta$. Then, the following inequalities are satisfied.
\begin{itemize}
\item $\displaystyle{\sum_{i\in[q]}(a(\mu,i)-1)\gamma(i)} \leq 0$.
\item $\displaystyle{\sum_{i\in[q]}(1-a(\mu,i))\widehat{\lambda}(i)} \leq 0$.
\item $\displaystyle{\sum_{i\in[q]}(a(\mu,i)-1)\gamma(i)} + (b(\mu)+q-\widetilde{a})\tau \leq 0$.
\item $\displaystyle{\sum_{i\in[q]}(1-a(\mu,i))\widehat{\lambda}(i)} + (\widetilde{a}-b(\mu)-q)\tau \leq 0$.
\end{itemize}
\end{lemma}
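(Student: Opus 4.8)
The plan is to unpack the hypothesis $\bal(\mu)\le\eta$ into the two simultaneous bounds $\sat_M(\mu)\le\eta$ and $\sat_W(\mu)\le\eta$, and then substitute the closed forms for $\sat_M(\mu)$ and $\sat_W(\mu)$ supplied by Lemma~\ref{lem:sethGenMeasureBSM} together with the two equivalent expressions for $\eta$ (their equality is exactly the definition of $\widehat{\alpha}$). Concretely, comparing $\sat_M(\mu)=2\widehat{\alpha}+\sum_{i\in[q]}a(\mu,i)\gamma(i)+b(\mu)\tau+x$ against $\eta=100s4^{pd}\cdot n^2+2\widehat{\alpha}+\sum_{i\in[q]}\gamma(i)+(\widetilde{a}-q)\tau$, the term $2\widehat{\alpha}$ cancels and, using $x\ge 0$, I would obtain $\sum_{i\in[q]}(a(\mu,i)-1)\gamma(i)+(b(\mu)+q-\widetilde{a})\tau\le 100s4^{pd}\cdot n^2$. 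Symmetrically, comparing $\sat_W(\mu)=\widehat{\alpha}+\sum_{i\in[q]}(a^i-a(\mu,i))\widehat{\lambda}(i)+(\widetilde{a}-b(\mu))\tau+y$ against the other form $\eta=100s4^{pd}\cdot n^2+\widehat{\alpha}+\sum_{i\in[q]}(a^i-1)\widehat{\lambda}(i)+q\tau$, the $\widehat{\alpha}$ and the $\sum_i a^i\widehat{\lambda}(i)$ contributions cancel (since $(a^i-a(\mu,i))-(a^i-1)=1-a(\mu,i)$) and, using $y\ge 0$, I would obtain $\sum_{i\in[q]}(1-a(\mu,i))\widehat{\lambda}(i)+(\widetilde{a}-b(\mu)-q)\tau\le 100s4^{pd}\cdot n^2$.

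These two inequalities are already the third and fourth conclusions of the lemma, except with right-hand side $100s4^{pd}\cdot n^2$ rather than $0$; the remaining work is to sharpen the bound to $0$ by a divisibility argument mirroring the proof of Lemma~\ref{lem:sethSESMEasy}. Since $\gamma(i)=n^{20}2^{i-1}$ and $\widehat{\lambda}(i)=n^{20}4^{i-1}$ are multiples of $n^{10}$ and $\tau=n^{10}$, each left-hand side above is an integer multiple of $\tau=n^{10}$. As in Lemma~\ref{lem:sethSESMEasy} one may assume $\tau=n^{10}>100s4^{pd}\cdot n^2$ (else the source problem is polynomial-time solvable, as $s,p,d=\OO(1)$); an integer multiple of $n^{10}$ that is at most $100s4^{pd}\cdot n^2<n^{10}$ must be nonpositive. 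This gives the third and fourth inequalities.

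To obtain the first and second inequalities I would peel off the $\tau$ terms from the third and fourth using divisibility by the larger scale $n^{20}$. Rewriting the third inequality as $\sum_{i\in[q]}(a(\mu,i)-1)\gamma(i)\le(\widetilde{a}-b(\mu)-q)\tau$, I first bound the right-hand side: since $0\le b(\mu)\le\widetilde{a}$ and $\widetilde{a}=\sum_{i\in[q]}a^i\le q2^{pd}$, we have $|\widetilde{a}-b(\mu)-q|\le\widetilde{a}\le q2^{pd}$, so $|(\widetilde{a}-b(\mu)-q)\tau|\le q2^{pd}n^{10}\le n^{12}<n^{20}$ (the middle step again assuming $q2^{pd}\le n^2$, else the problem is polynomial-time solvable). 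The left-hand side $\sum_{i\in[q]}(a(\mu,i)-1)\gamma(i)$ is a multiple of $n^{20}$; being strictly below $n^{20}$, it is nonpositive, which is the first inequality. The second inequality follows identically from the fourth, since $\sum_{i\in[q]}(1-a(\mu,i))\widehat{\lambda}(i)$ is likewise a multiple of $n^{20}$ bounded above by $(b(\mu)+q-\widetilde{a})\tau$, whose absolute value is below $n^{20}$.

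I do not expect a deep obstacle here: the argument is bookkeeping around the identity of the two forms of $\eta$, followed by two rounds of the standard ``a bounded multiple of a large power of $n$ is forced to be nonpositive'' reasoning. The only point requiring care is keeping the two magnitude scales separate---$n^{10}$ for isolating the $\tau$ terms and $n^{20}$ for isolating the $\gamma$ and $\widehat{\lambda}$ terms---and invoking the same ``else the source problem is polynomial-time solvable'' conventions as in Lemma~\ref{lem:sethSESMEasy} so that $100s4^{pd}\cdot n^2<n^{10}$ and $q2^{pd}n^{10}\le n^{12}$ hold. The only genuine difference from the {\sc SESM} analysis is that $\widehat{\lambda}(i)=n^{20}4^{i-1}$ replaces $\lambda(i)=n^{20}2^{2q-i}$, which affects none of the divisibility reasoning; the change of scale from a single pair of tight equalities (in {\sc SESM}) to the pair of one-sided inequalities (here) is precisely what the $\max$ in $\bal$ forces, and is handled by splitting $\bal(\mu)\le\eta$ at the very first step.
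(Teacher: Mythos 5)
Your proposal is correct and follows essentially the same route as the paper's proof: split $\bal(\mu)\le\eta$ into $\sat_M(\mu)\le\eta$ and $\sat_W(\mu)\le\eta$, substitute the closed forms from Lemma~\ref{lem:sethGenMeasureBSM} against the two forms of $\eta$ to get the third and fourth inequalities via divisibility by $\tau=n^{10}$, and then isolate the $\gamma$ and $\widehat{\lambda}$ sums using divisibility by $n^{20}$ to get the first and second. Your write-up is somewhat more explicit about the cancellation of the $\widehat{\alpha}$ terms and the intermediate bound $100s4^{pd}\cdot n^2$, but the argument is the same.
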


\begin{proof}
Since $\bal(\mu)\leq \eta$, it holds that $\sat_W(\mu),\sat_M(\mu)\leq \eta$. Recall that $s,p,d=\OO(1)$. Now, notice that $\tau = n^{10} > 100s4^{pd}\cdot n^2$, else the problem is solvable in polynomial time. Moreover, for all $i\in[q]$, $\widehat{\lambda}(i)$ and $\gamma(i)$ are divisible by $n^{10}$. Thus, Lemma \ref{lem:sethGenMeasure} implies that the following inequalities are satisfied.
\begin{itemize}
\item $\displaystyle{\sum_{i\in[q]}a(\mu,i)\gamma(i)} + b(\mu)\tau \leq \displaystyle{\sum_{i\in[q]}\gamma(i)} + (\widetilde{a}-q)\tau$.
\item $\displaystyle{\sum_{i\in[q]}(a^i-a(\mu,i))\widehat{\lambda}(i)} + (\widetilde{a}-b(\mu))\tau \leq \displaystyle{\sum_{i\in[q]}(a^i-1)\widehat{\lambda}(i)} + q\tau$.
\end{itemize}

The inequalities above are equivalent to the following equalities.
\begin{itemize}
\item $\displaystyle{\sum_{i\in[q]}(a(\mu,i)-1)\gamma(i)} + (b(\mu)+q-\widetilde{a})\tau \leq 0$.
\item $\displaystyle{\sum_{i\in[q]}(1-a(\mu,i))\widehat{\lambda}(i)} + (\widetilde{a}-b(\mu)-q)\tau \leq 0$.
\end{itemize}

Next, notice that as in the proof of Lemma \ref{lem:sethSESMEasy}, $-2qn^{10}\leq 2(\widetilde{a}-b(\mu)-q)\tau$ as well as $2(\widetilde{a}-b(\mu)-q)\tau\leq n^{12}$. Moreover, for all $i\in[q]$, $\lambda(i),\gamma(i)$ are divisible by $n^{20}$. Thus, we derive that the two first inequalities given in the statement of the lemma must also be satisfied.
\end{proof}

\begin{lemma}\label{lem:sethBSM1PerTruth}
Let $\varphi$ be an instance of {\sc $s$-Sparse $p$-CNF-SAT}. Let $\mu$ be a stable matching of $\red_{BSM}(\varphi)$ such that $\bal(\mu)\leq \eta$. Then, for all $i\in[q]$, there exists $j\in[a^i]$ such that $\mu(m^i_j)=\widehat{w}^i_j$ and for all $k\neq j$, $\mu(m^i_k)=w^i_k$.
\end{lemma}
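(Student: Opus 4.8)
The plan is to reduce the statement, via Corollary \ref{cor:sethBSMCaptureEx} and the goodness of stable matchings (Condition \ref{good2} in Definition \ref{def:sethSESMGood}), to the purely arithmetic claim that $a(\mu,i)=1$ for every $i\in[q]$. Indeed, since every stable matching is excellent, each man $m^i_k$ is matched either to $w^i_k$ or to $\widehat{w}^i_k$; hence the conclusion ``there is a unique $j$ with $\mu(m^i_j)=\widehat{w}^i_j$ and $\mu(m^i_k)=w^i_k$ for all $k\neq j$'' says exactly that $a(\mu,i)=|\{k:\mu(m^i_k)=\widehat{w}^i_k\}|$ equals $1$. So the whole lemma becomes a statement about the integers $a(\mu,1),\ldots,a(\mu,q)$, each of which is a nonnegative integer.

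Next I would feed in Lemma \ref{lem:sethBSMEasy}, which already supplies the two inequalities
$\sum_{i\in[q]}(a(\mu,i)-1)\gamma(i)\leq 0$ and $\sum_{i\in[q]}(1-a(\mu,i))\widehat{\lambda}(i)\leq 0$. Substituting $\gamma(i)=n^{20}\cdot 2^{i-1}$ and $\widehat{\lambda}(i)=n^{20}\cdot 4^{i-1}$ and dividing both by $n^{20}$, these transform into
\[\sum_{i\in[q]}a(\mu,i)2^{i-1}\leq 2^q-1\qquad\text{and}\qquad \tfrac{1}{3}(4^q-1)\leq\sum_{i\in[q]}a(\mu,i)4^{i-1},\]
using $\sum_{i\in[q]}2^{i-1}=2^q-1$ and $\sum_{i\in[q]}4^{i-1}=\tfrac{1}{3}(4^q-1)$. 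These are precisely the two inequalities (Equations \ref{item:eq1} and \ref{item:eq2}) handled in the proof of Lemma \ref{lem:w1BSMReverseTilde}, with $a(\mu,i)$ playing the role of $b(\mu,i)$ and $q$ the role of $k$; the only hypothesis used there is that the unknowns are nonnegative integers, which holds here as well.

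I would then carry out the induction on $q$ exactly as in Lemma \ref{lem:w1BSMReverseTilde}. The base case $q=1$ follows since the first inequality forces $a(\mu,1)\leq 1$ and the second forces $1=\tfrac13(4-1)\leq a(\mu,1)$. For the inductive step, the first inequality gives $a(\mu,q)\leq 1$; one rules out $a(\mu,q)=0$ by the four-case analysis on the value of $a(\mu,q-1)\in\{0,1,2,3\}$, in each case bounding $\sum_{i\in[q]}a(\mu,i)4^{i-1}\leq \tfrac14 4^q-1<\tfrac13(4^q-1)$ and contradicting the second inequality. Hence $a(\mu,q)=1$, after which substituting $a(\mu,q)=1$ reduces the two inequalities to their $(q-1)$-versions $\sum_{i\in[q-1]}a(\mu,i)2^{i-1}\leq 2^{q-1}-1$ and $\tfrac13(4^{q-1}-1)\leq\sum_{i\in[q-1]}a(\mu,i)4^{i-1}$, and the inductive hypothesis finishes the argument. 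The one point deserving attention — the genuine content of the argument — is the case analysis ruling out $a(\mu,q)=0$; since it is literally the computation already performed in Lemma \ref{lem:w1BSMReverseTilde}, I expect to simply invoke that lemma's argument rather than repeat it, noting that the carefully chosen coefficients $4^{i-1}$ (rather than $2^{2q-i}$) are exactly what make the equation admit the unique all-ones solution.
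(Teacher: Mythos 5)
Your proposal is correct and follows essentially the same route as the paper: reduce via Corollary \ref{cor:sethBSMCaptureEx} to the claim $a(\mu,i)=1$, invoke Lemma \ref{lem:sethBSMEasy}, substitute $\gamma(i)$ and $\widehat{\lambda}(i)$ and divide by $n^{20}$ to obtain the two inequalities $\sum_i a(\mu,i)2^{i-1}\leq 2^q-1$ and $\frac{1}{3}(4^q-1)\leq\sum_i a(\mu,i)4^{i-1}$, and then observe that these are exactly the inequalities resolved in the proof of Lemma \ref{lem:w1BSMReverseTilde}. Your remark that the argument there only needs the unknowns to be nonnegative integers (which is the relevant hypothesis here, since $a(\mu,i)$ may be $0$) is a correct and worthwhile clarification, but it does not change the substance of the proof.
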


\begin{proof}
By Corollary \ref{cor:sethBSMCaptureEx}, the statement of the lemma is equivalent to the statement that for all $i\in[k]$, $a(\mu,i)=1$. By Lemma \ref{lem:sethBSMEasy}, the following inequalities are satisfied.
\begin{enumerate}
\item $\displaystyle{\sum_{i\in[q]}(a(\mu,i)-1)\gamma(i)} \leq 0$.
\item $\displaystyle{\sum_{i\in[q]}(1-a(\mu,i))\widehat{\lambda}(i)} \leq 0$.
\end{enumerate}

Substituting $\gamma(i)$ and $\widehat{\lambda}(i)$ for all $i\in[q]$ and dividing both sides by $n^{20}$, we derive that the following equalities are satisfied.
\begin{itemize}
\item $\displaystyle{\sum_{i\in[q]}a(\mu,i)2^{i-1} \leq 2^q-1}$.
\item $\displaystyle{\frac{1}{3}(4^q-1) \leq \sum_{i\in[q]}a(\mu,i)4^{i-1}}$.
\end{itemize}
However, these inequalities are precisely of the form of those obtained in the proof of Lemma \ref{lem:w1BSMReverseTilde}, and hence we again derive that both of them can be satisfied simultaneously only when for all $i\in[k]$, $a(\mu,i)=1$.
\end{proof}

\begin{lemma}\label{lem:sethBSM1PerFalse}
Let $\varphi$ be an instance of {\sc $s$-Sparse $p$-CNF-SAT}. Let $\mu$ be a stable matching of $\red_{BM}(\varphi)$ such that $\bal(\mu)\leq 100s4^{pd}\cdot n^2$. Then, for all $i\in[q]$, there exists $j\in[a^i]$ such that $\mu(\overline{m}^i_j)=\overline{w}^i_j$ and for all $k\neq j$, $\mu(\overline{m}^i_k)=\widehat{\overline{w}}^i_k$.
\end{lemma}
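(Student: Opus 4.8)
The plan is to mirror the proof of the {\sc SESM} analogue (Lemma~\ref{lem:sethSESM1PerFalse}), the single new wrinkle being that the exact identity $b(\mu)=\widetilde{a}-q$ is no longer handed to us directly and must instead be extracted from the one-sided inequalities of Lemma~\ref{lem:sethBSMEasy}. By Corollary~\ref{cor:sethBSMCaptureEx}, $\mu$ is an excellent matching, so the quantities $a(\mu,i)$ and $b(\mu)$ of Definition~\ref{def:sethSESMMeasure} are well defined, and the statement to be proved is equivalent to the assertion that for every $i\in[q]$ there is exactly one index $j$ with $\mu(\overline{m}^i_j)=\overline{w}^i_j$ (all other indices $k$ then satisfying $\mu(\overline{m}^i_k)=\widehat{\overline{w}}^i_k$ by Condition~\ref{good3} of Definition~\ref{def:sethSESMGood}).

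First I would invoke Lemma~\ref{lem:sethBSM1PerTruth}, which already guarantees $a(\mu,i)=1$ for every $i\in[q]$. Substituting this into the third and fourth inequalities of Lemma~\ref{lem:sethBSMEasy} makes the $\gamma(i)$- and $\widehat{\lambda}(i)$-terms vanish, leaving $(b(\mu)+q-\widetilde{a})\tau\leq 0$ and $(\widetilde{a}-b(\mu)-q)\tau\leq 0$; since $\tau=n^{10}>0$, these two bounds force $b(\mu)=\widetilde{a}-q$. Hence exactly $q$ of the pairs $(\overline{m}^i_j,\overline{w}^i_j)$ (summed over all color classes) are matched to one another. As there are exactly $q$ color classes, it now suffices to prove an ``at most one per class'' statement: for no $i\in[q]$ do there exist distinct $j,k\in[a^i]$ with both $\mu(\overline{m}^i_j)=\overline{w}^i_j$ and $\mu(\overline{m}^i_k)=\overline{w}^i_k$.

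I would establish this remaining statement by contradiction, exactly as in the {\sc SESM} case. Suppose such distinct $j,k$ existed for some $i$. Because $\mu$ is excellent, Condition~\ref{excel3} of Definition~\ref{def:sethSESMExcellent} says that whenever $\mu(m^i_\ell)=\widehat{w}^i_\ell$ for some $\ell$, then $\mu(\overline{m}^i_{k'})=\widehat{\overline{w}}^i_{k'}$ for all $k'\neq\ell$; since at least one of $j,k$ differs from $\ell$, this would contradict our supposition. Thus no such $\ell$ can exist, i.e.\ $\mu(m^i_\ell)\neq\widehat{w}^i_\ell$ for all $\ell\in[a^i]$, which directly contradicts Lemma~\ref{lem:sethBSM1PerTruth} (the existence of exactly one $\ell$ with $\mu(m^i_\ell)=\widehat{w}^i_\ell$). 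This closes the argument.

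The only genuinely new step, and hence the place I would be most careful, is the passage from the two inequalities of Lemma~\ref{lem:sethBSMEasy} to the equality $b(\mu)=\widetilde{a}-q$ using $a(\mu,i)=1$; everything else transcribes the {\sc SESM} proof with Corollary~\ref{cor:sethSESMCaptureEx} and Lemma~\ref{lem:sethSESM1PerTruth} replaced by their {\sc BSM} counterparts. I would also flag that the hypothesis ought to read $\bal(\mu)\leq\eta$, consistent with Lemmata~\ref{lem:sethBSMEasy} and~\ref{lem:sethBSM1PerTruth}, rather than the literally written bound $100s4^{pd}\cdot n^2$.
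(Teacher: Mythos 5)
Your proposal matches the paper's proof essentially line for line: you use Lemma~\ref{lem:sethBSM1PerTruth} to set $a(\mu,i)=1$, combine the two opposite inequalities of Lemma~\ref{lem:sethBSMEasy} to force $(b(\mu)+q-\widetilde{a})\tau=0$, and then repeat the counting-plus-contradiction argument of Lemma~\ref{lem:sethSESM1PerFalse} via Condition~\ref{excel3} of Definition~\ref{def:sethSESMExcellent}. Your remark that the hypothesis should read $\bal(\mu)\leq\eta$ is also consistent with how the lemma is actually invoked in Lemma~\ref{lem:sethBSMReverse}.
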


\begin{proof}
First, notice that by Lemma \ref{lem:sethSESMEasy}, it holds that the two following inequalities are satisfied.
\begin{itemize}
\item $\displaystyle{\sum_{i\in[q]}(a(\mu,i)-1)\gamma(i)} + (b(\mu)+q-\widetilde{a})\tau \leq 0$.
\item $\displaystyle{\sum_{i\in[q]}(1-a(\mu,i))\widehat{\lambda}(i)} + (\widetilde{a}-b(\mu)-q)\tau \leq 0$.
\end{itemize}
However, by Lemma \ref{lem:sethBSM1PerTruth}, we have that for all $i\in[q]$, $a(\mu,i)=1$. Thus, the two equalities above imply that $(b(\mu)+q-\widetilde{a})\tau = 0$. Having this equality at hand, the proof proceeds exactly as the proof of Lemma \ref{lem:sethSESM1PerFalse}.
\end{proof}

Having Lemmata \ref{lem:sethBSM1PerTruth} and \ref{lem:sethBSM1PerFalse}, the proof of following lemma is identical to the proof of Lemma \ref{lem:sethSESMReverse}.

\begin{lemma}\label{lem:sethBSMReverse}
Let $\varphi$ be an instance of {\sc $s$-Sparse $p$-CNF-SAT}. If for the instance $\red_{BSM}(\varphi)$ of {\sc BSM}, $\Bal\leq \eta$, then $\varphi$ is a \yesinstance\ of {\sc $s$-Sparse $p$-CNF-SAT}.
\end{lemma}

\section{Conclusion}\label{sec:conclusion}
In this paper we studied  {\sc Sex-Equal Stable Marriage}, {\sc Balanced Stable Marriage}, {\sc max-Stable Marriage with Ties} and {\sc min-Stable Marriage with Ties}, four of the most central \NPH\ optimization versions of {\sc Stable Marriage}, in the realm of Parameterized Complexity.  We analyzed these problems with respect to the parameter treewidth and 
presented a comprehensive, complete picture of the behavior of central optimization versions of {\sc Stable Marriage} with respect to treewidth. Towards this, we established that all four problems are \WOH. In particular, while all four problems admit algorithms that run in time $n^{\OO(\tw)}$, we proved that all of these algorithms are likely to be essentially optimal. Next, we studied the treewidth $\tw$ of the rotation digraph. For both {\sc SESM} and {\sc BSM}, we designed algorithms that run in time $2^{\tw}n^{\OO(1)}$. Then, for both {\sc SESM} and {\sc BSM}, we also proved that unless \SETH\ is false, algorithms that run in time $(2-\epsilon)^{\tw}n^{\OO(1)}$ do not exist for any fixed $\epsilon>0$. We believe that  our parameterized algorithms, \WO-hardness reductions and \SETH-based reductions will act as  a template to show similar results  for other computational problems arising in Economics and resource allocation.
As a direction for further research, we suggest to conduct a comprehensive study that measures various parameters, with emphasis on treewidth, of instances of {\sc Sex-Equal Stable Marriage}, {\sc Balanced Stable Marriage}, {\sc max-Stable Marriage with Ties} and {\sc min-Stable Marriage with Ties} that arise in real-world applications.

\bibliographystyle{siam}
\bibliography{references-RW}

\appendix

\section{Parameterized Complexity}\label{sec:PC}

Let $\Pi$ be an \NPH\ problem. In the framework of Parameterized Complexity, each instance of $\Pi$ is associated with a {\em parameter} $k$. Here, the goal is to confine the combinatorial explosion in the running time of an algorithm for $\Pi$ to depend only on $k$. Formally, we say that $\Pi$ is {\em fixed-parameter tractable (\FPT)} if any instance $(I, k)$ of $\Pi$ is solvable in time $f(k)\cdot |I|^{\OO(1)}$, where $f$ is an arbitrary function of $k$. A weaker request is that for every fixed $k$, the problem $\Pi$ would be solvable in polynomial time. Formally, we say that $\Pi$ is {\em slice-wise polynomial (\XP)} if any instance $(I, k)$ of $\Pi$ is solvable in time $f(k)\cdot |I|^{g(k)}$, where $f$ and $g$ are arbitrary functions of $k$. Nowadays, Parameterized Complexity supplies a rich toolkit to design \FPT\ and \XP\ algorithms.

Parameterized Complexity also provides methods to show that a problem is unlikely to be \FPT. The main technique is the one of parameterized reductions analogous to those employed in classical complexity. Here, the concept of \WO-hardness replaces the one of \NP-hardness, and for reductions we need not only construct an equivalent instance in \FPT\ time, but also ensure that the size of the parameter in the new instance depends only on the size of the parameter in the original one. If there exists such a reduction transforming a problem known to be \WOH\ to another problem $\Pi$, then the problem $\Pi$ is \WO-hard as well. Central \WOH-problems include, for example, deciding whether a nondeterministic single-tape Turing machine accepts within $k$ steps, {\sc Clique} parameterized be solution size, and {\sc Independent Set} parameterized by solution size. To show that a problem $\Pi$ is not \XP\ unless \textsf{P}=\NP, it is sufficient to show that there exists a fixed $k$ such $\Pi$ is \NPH. Then, the problem is said to be \paraH.

To obtain (essentially) tight conditional lower bounds for the running times of algorithms, we rely on the well-known {\em Exponential-Time Hypothesis (\ETH)} and {\em Strong Exponential-Time Hypothesis (\SETH)} \cite{DBLP:journals/jcss/ImpagliazzoP01,DBLP:journals/jcss/ImpagliazzoPZ01,DBLP:conf/iwpec/CalabroIP09}. To formalize the statements of \ETH\ and \SETH, first recall that  given a formula $\varphi$ in conjuctive normal form (CNF) with $n$ variables and $m$ clauses, the task of {\sc CNF-SAT} is to decide whether there is a truth assignment to the variables that satisfies $\varphi$. In the {\sc $p$-CNF-SAT} problem, each clause is restricted to have at most $p$ literals. First, \ETH\ asserts that {\sc 3-CNF-SAT} cannot be solved in time $\OO(2^{o(n)})$. Second, \SETH\ asserts that for every fixed $\epsilon<1$, there exists a (large) integer $p=p(\epsilon)$ such that {\sc $p$-CNF-SAT} cannot be solved in time $\OO((2-\epsilon)^{n})$. We remark that \ETH\ implies \FPT$\neq$\WO, and that \SETH\ implies \ETH. More information on Parameterized Complexity, \ETH\ and \SETH\ can be found in \cite{DBLP:series/txcs/DowneyF13,DBLP:books/sp/CyganFKLMPPS15}.

\section{Treewidth}\label{sec:tw}

Treewidth is a structural parameter indicating how much a graph resembles a tree. Formally,

\begin{definition}\label{def:treewidth}
A \emph{tree decomposition} of a graph $G$ is a pair $(T,\beta)$ of a tree $T$
and $\beta:V(T) \rightarrow 2^{V(G)}$, such that
\vspace{-0.5em}
\begin{enumerate}
\itemsep0em 
\item\label{item:twedge} for any edge $\{x,y\} \in E(G)$ there exists a node $v \in V(T)$ such that $x,y \in \beta(v)$, and
\item\label{item:twconnected} for any vertex $x \in V(G)$, the subgraph of $T$ induced by the set $T_x = \{v\in V(T): x\in\beta(v)\}$ is a non-empty tree.
\end{enumerate}
The {\em width} of $(T,\beta)$ is $\max_{v\in V(T)}\{|\beta(v)|\}-1$. The {\em treewidth} of $G$ is the minimum width over all tree decompositions of $G$.
\end{definition}

We use a standard form of a tree decomposition that simplifies the design of DP algorithms.

\begin{definition}
A tree decomposition $(T,\beta)$ of a graph $G$ is {\em nice} if for the root $r$ of $T$, $\beta(r)=\emptyset$, and each node $v\in V(T)$ is of one of the following types.
\vspace{-0.5em}
\begin{itemize}
\itemsep0em 
\item {\bf Leaf}: $v$ is a leaf in $T$ and $\beta(v)=\emptyset$.
\item {\bf Forget}: $v$ has one child, $u$, and there is a vertex $x\in\beta(u)$ such that $\beta(v)=\beta(u)\setminus\{x\}$.
\item {\bf Introduce}: $v$ has one child, $u$, and there is a vertex $x\in\beta(v)$ such that $\beta(v)\setminus\{x\}=\beta(u)$.
\item {\bf Join}: $v$ has two children, $u$ and $w$, and $\beta(v)=\beta(u)=\beta(w)$.
\end{itemize}
\end{definition}

For $v \in V(T)$, we say that $\beta(v)$ is the \emph{bag} of $v$, and $\gamma(v)$ denotes the union of the bags of $v$ and the descendants of $v$ in $T$. According to standard practice in Parameterized Complexity with respect to problems parameterized by $\tw$, we assume that every input instance is given to us along with a tree decomposition (of the appropriate graph, primal or $G_\Pi$) of width $\tw$. (Otherwise, such a decomposition can be computed using the means described in \cite{DBLP:books/sp/CyganFKLMPPS15,DBLP:series/txcs/DowneyF13}.) Given a tree decomposition $(T,\beta)$, Bodlaender \cite{DBLP:journals/siamcomp/Bodlaender96} showed how to construct a {\em nice} tree decomposition of the same width as $(T,\beta)$. Thus, from now onwards, when we design our algorithms, we assume that we have a nice tree decomposition of the appropriate graph of width $\tw$.



\end{document}